\definecolor{anti-flashwhite}{rgb}{0.95, 0.95, 0.96}
\newtheorem{theorem}{Theorem}
\newtheorem{lemma}{Lemma}
\newtheorem{proposition}{Proposition}
\newtheorem{claim}{Claim}[section]
\newtheorem{corollary}{Corollary}
\theoremstyle{definition}
\newtheorem{definition}{Definition}
\newtheorem{observation}{Observation}
\newtheorem{remark}{Remark}
\theoremstyle{definition}
\newtheorem{reduction}{Reduction Rule}[section]
\newcommand{\blue}[1]{{\color{blue} #1}}
\newcommand{\lr}[1]{\left( #1\right)}
\newcommand{\LR}[1]{\left\{ #1\right\}}
\DeclareMathOperator*{\argmin}{arg\,min}
\newcommand{\runtimefortwo}{{$ 2^{\cO(\sqrt{k}\log{k})} + n\cdot k^{\cO(1)} $}\xspace}
\newcommand{\runtimeforthree}{{$ 2^{\cO({k}^{0.67})}+ n\cdot k^{\cO(1)}$}\xspace}
\newcommand{\kernelforthree}{{$   \cO(k^{8})$}\xspace}
\DeclareMathOperator{\operatorClassNP}{NP}
\newcommand{\classNP}{\ensuremath{\operatorClassNP}\xspace}
\DeclareMathOperator{\operatorClassCoNP}{coNP}
\newcommand{\classCoNP}{\ensuremath{\operatorClassCoNP}}
\DeclareMathOperator{\operatorClassFPT}{FPT\xspace}
\newcommand{\classFPT}{\ensuremath{\operatorClassFPT}\xspace}
\renewcommand{\diamondsuit}{\blacklozenge}
\newcommand{\ein}{extended instance\xspace}
\newcommand{\yin}{{\sc Yes}-instance\xspace}
\newcommand{\cC}{\mathcal{C}}
\newcommand{\cE}{\mathcal{E}}
\newcommand{\so}{\varsigma}
\newcommand{\wrt}{w.r.t.~}
\newcommand{\lf}{{\cal L}}
\newcommand{\mm}{{\cal M}}
\newcommand{\rt}{{\cal R}}
\newcommand{\cT}{\mathcal{T}}
\newcommand{\heavy}{{\sf Heavy}}
\newcommand{\light}{{\sf Light}}
\newcommand{\extreme}{{\sf Extreme}}
\newcommand{\surr}{{\sf Surround}}
\newcommand{\exceptional}{{\sf Exceptional}}
\newcommand{\nice}{{\sf nice}}
\newcommand{\slot}{{\sf slot}}
\newcommand{\types}{\mathsf{Types}}
\newcommand{\subslot}{\mathsf{subslot}}
\newcommand{\cG}{{\cal G}}
\newcommand{\low}{{\sf lower}}
\newcommand{\upr}{{\sf upper}}
\newcommand{\boundary}{{\sf BoundaryEdges}}
\newcommand{\regulare}{{\sf RegularEdges}}
\newcommand{\weightede}{{\sf WeightedEdges}}
\newcommand{\leftb}{{\sf leftbound}}
\newcommand{\rightb}{{\sf rightbound}}
\newcommand{\leftbone}{{\sf leftbound}_{12}}
\newcommand{\rightbone}{{\sf rightbound}_{12}}
\newcommand{\leftbtwo}{{\sf leftbound}_{23}}
\newcommand{\rightbtwo}{{\sf rightbound}_{23}}
\newcommand{\ept}{{\sf endpt}}
\newcommand{\epts}{{\sf Endpoints}}
\newcommand{\Blue}{{\sf Blue}}
\newcommand{\Red}{{\sf Red}}
\newcommand{\crs}{{\sf cross}}
\newcommand{\sep}{{\sf Sep}}
\newcommand{\lsep}{{\sf lsep}}
\newcommand{\rsep}{{\sf rsep}}
\newcommand{\lsepone}{{\sf lsep}_{12}}
\newcommand{\rsepone}{{\sf rsep}_{12}}
\newcommand{\lseptwo}{{\sf lsep}_{23}}
\newcommand{\rseptwo}{{\sf rsep}_{23}}
\newcommand{\md}{{\sf Mid}}
\newcommand{\mde}{{\sf MiddleEdges}}
\newcommand{\crosse}{{\sf CrossingEdges}}
\newcommand{\comps}{{\sf comps}}
\newcommand{\spl}{{\sf Special}}
\newcommand{\pendant}{{\sf Pendants}}
\newcommand{\updown}{{\sf UpDown}}
\newcommand{\pendantsl}{{\sf LeftPNum}}
\newcommand{\pendantsr}{{\sf RightPNum}}
\newcommand{\pendantsm}{{\sf MidPNum}}
\newcommand{\pendantst}{{\sf TotalPNum}}
\newcommand{\total}{{\sf TotalNum}}
\newcommand{\totall}{{\sf LeftNum}}
\newcommand{\totalr}{{\sf RightNum}}
\newcommand{\nicefamily}{{\sf NiceFamily}}
\newcommand{\pendantl}{{\sf LeftPendants}}
\newcommand{\pendantm}{{\sf MidPendants}}
\newcommand{\pendantr}{{\sf RightPendants}}
\newcommand{\vtwo}{V^{conn}_2}
\newcommand{\gapleft}{{\sf GapLeft}}
\newcommand{\gapright}{{\sf GapRight}}
\newcommand{\pecu}{{\sf Peculiar}}
\newcommand{\pecunp}{{\sf PeculiarNP}}
\newcommand{\DB}{{\sf DB}\xspace}
\newcommand{\DT}{{\sf DT}\xspace}
\newcommand{\SB}{{\sf SB}\xspace}
\newcommand{\ST}{{\sf ST}\xspace}
\newcommand{\llc}{{\sf LeftmostLC}}
\newcommand{\rlc}{{\sf RightmostLC}}
\newcommand{\leftstar}{{\sf LeftStar}}
\newcommand{\rightstar}{{\sf RightStar}}
\newcommand{\str}{{\sf st}}
\newcommand{\Star}{{\sf Star}}
\newif\ifstarr
\newcommand{\loe}{{\sf m}_{12}}
\newcommand{\upe}{{\sf m}_{23}}
\newcommand{\loeo}{{\sf m_{12}^{orig}}}
\newcommand{\upeo}{{\sf m_{23}^{orig}}}
\newcommand{\leftpart}{{\sf LeftComps}}
\newcommand{\rightpart}{{\sf RightComps}}
\newcommand{\predef}{\pi\text{\sf-Defined}}
\newcommand{\opt}{{\sf opt}}
\newcommand{\optl}{{\sf optL}\xspace}
\newcommand{\optr}{{\sf optR}\xspace}
\newcommand{\nat}{\mathbb{N}}
\newcommand{\pbp}{{\sf PendantBlueprint}}
\newcommand{\pguess}{{\sf PendantGuess}}
\newcommand{\cbp}{{\sf ComponentBlueprint}}
\newcommand{\ecbp}{{\sf ExtComponentBlueprint}}
\newcommand{\ecguess}{{\sf ExtComponentGuess}}
\newcommand{\cguess}{{\sf ComponentGuess}}
\newcommand{\compedges}{{\sf CompEdges}}
\newcommand{\cedges}{{\sf ConnEdges}}
\newcommand{\nicepfamily}{{\sf NicePFamily}}
\newcommand{\sloted}{{\sf SlotEdges}}
\newcommand{\nicecfamily}{{\sf NiceCFamily}}
\newcommand{\bnd}{{\sf ExtBnd}}
\newcommand{\lbnd}{{\sf LeftExtBnd}}
\newcommand{\rbnd}{{\sf RightExtBnd}}
\newlength{\RoundedBoxWidth}
\newsavebox{\GrayRoundedBox}
\newenvironment{GrayBox}[1]%
{\setlength{\RoundedBoxWidth}{.93\textwidth}
	\def\boxheading{#1}
	\begin{lrbox}{\GrayRoundedBox}
		\begin{minipage}{\RoundedBoxWidth}}%
		{   \end{minipage}
	\end{lrbox}
	\begin{center}
		\begin{tikzpicture}%
			\node(Text)[draw=black!20,fill=white,rounded corners,%
			inner sep=2ex,text width=\RoundedBoxWidth]%
			{\usebox{\GrayRoundedBox}};
			\coordinate(x) at (current bounding box.north west);
			\node [draw=white,rectangle,inner sep=3pt,anchor=north west,fill=white] 
			at ($(x)+(6pt,.75em)$) {\boxheading};
		\end{tikzpicture}
\end{center}}
\newenvironment{defproblemx}[2][]{\noindent\ignorespaces%
	\FrameSep=6pt%
	\parindent=0pt%
	\vspace*{-1.5em}
	\ifthenelse{\isempty{#1}}{%
		\begin{GrayBox}{\textsc{#2}}%
		}{%
			\begin{GrayBox}{\textsc{#2} parameterized by~{#1}}%
			}
			\begin{tabular*}{\textwidth}{@{\hspace{.1em}} >{\itshape} p{1.8cm} p{0.8\textwidth} @{}}%
			}{
			\end{tabular*}%
		\end{GrayBox}%
		\ignorespacesafterend
	}
	\newcommand{\defproblema}[3]{
		\begin{defproblemx}{#1}
			Input:  & #2 \\
			Task: & #3
		\end{defproblemx}
	}%
	\newcommand{\cO}{\mathcal{O}}
	\newcommand{\Oh}{\cO}
	\newcommand{\OO}{\cO}
	\newcommand{\cI}{\mathcal{I}}
	\newcommand{\pname}{\textsc}
	\newcommand{\ProblemFormat}[1]{\pname{#1}}
	\newcommand{\ProblemIndex}[1]{\index{problem!\ProblemFormat{#1}}}
	\newcommand{\ProblemName}[1]{\ProblemFormat{#1}\ProblemIndex{#1}{}\xspace}
	\newcommand{\probHCr}{$h$-\ProblemName{Layer Crossing Minimization}}
	\newcommand{\probTwoCr}{$2$-\ProblemName{Layer Crossing Minimization}}
	\newcommand{\probThreeCr}{$3$-\ProblemName{Layer Crossing Minimization}}
	\newcommand{\probFourCr}{$4$-\ProblemName{Layer Crossing Minimization}}
	\newcommand{\probFiveCr}{$5$-\ProblemName{Layer Crossing Minimization}}
	\newcommand{\probDF}{\ProblemName{Disjoint Factors}}
\begin{document}	
		
		\title{Tight Parameterized (In)tractability of Layered Crossing Minimization: Subexponential Algorithms and Kernelization\thanks{This research was initiated during Dagstuhl Seminar 23162 ``New Frontiers of Parameterized Complexity in Graph Drawing''. 
				\\Fedor V.~Fomin acknowledges support from the Research Council of Norway via the  BWCA (grant no. 314528) project.
				\\Petr A.~Golovach acknowledges support from the Research Council of Norway via the BWCA (grant no. 314528) and Extreme-Algorithms (grant no. 355137) projects. 
				\\Tanmay Inamdar acknowledges support from Anusandhan National Research Foundation (ANRF), grant number ANRF/ECRG/2024/004144/PMS, and IIT Jodhpur Research Initiation Grant, grant number I/RIG/TNI/20240072. 
				\\Saket Saurabh acknowledges support from European Research Council (ERC) under the European Union’s Horizon 2020 research and innovation programme (grant agreement No. 819416). 
				\\Meirav Zehavi acknowledges support from the Israel Science Foundation, grant number 1470/24, and from the European Research Council, grant number 101039913 (PARAPATH).}}
	%
	\author{
		Fedor V. Fomin\thanks{
			Department of Informatics, University of Bergen, Norway.}
		\and
		Petr A. Golovach\addtocounter{footnote}{-1}\footnotemark{}
		\and
		Tanmay Inamdar\thanks{Department of Computer Science and Engineering, IIT Jodhpur, India}
		\and
		Saket Saurabh\addtocounter{footnote}{-2}\footnotemark{}\addtocounter{footnote}{1} \thanks{The Institute of Mathematical Sciences, HBNI, Chennai, India}
		\and 
		Meirav Zehavi\thanks{Ben-Gurion University of the Negev, Israel}
		}
	\date{}
	\maketitle
	\begin{abstract}
		
		The starting point of our work is the decade-old open question concerning the subexponential parameterized complexity of the \probTwoCr problem.
		In this problem, the input is an \(n\)-vertex graph \(G\) whose vertices are divided into two independent sets \(V_1, V_2\), and a non-negative integer \(k\). The question is whether \(G\) supports a \(2\)-layered drawing with at most \(k\) crossings. Here, a $2$-layered drawing refers to a drawing of \(G\) where each set \(V_i\) for \(i \in \{1,2\}\) is placed on a distinct straight line parallel to the \(x\)-axis, and all edges are drawn as straight lines connecting vertices.
		Our first theorem resolves the aforementioned question in the affirmative by providing a fixed-parameter tractable (FPT) subexponential algorithm with running time \runtimefortwo. 
		
		The existence of a subexponential fixed-parameter algorithm for two layers immediately raises the question of whether this phenomenon is specific to two layers or can be extended to more layers. (In this setting, vertices are divided into \(h\) independent sets \(V_1, \ldots, V_h\), and the question is whether \(G\) admits an \(h\)-layered drawing with at most \(k\) crossings.) Here, we delve into highly technical depths of the topic of layered drawings to answer this question almost completely, by providing a subexponential fixed-parameter algorithm for three layers with running time \runtimeforthree, and proving that there does not exist a $2^{\OO(k^{1-\epsilon})}\cdot n^{\OO(1)}$-time algorithm (for any fixed $\epsilon>0$) for five or more layers, under the Exponential-Time Hypothesis.
		
		
		Next to the question of subexponential-time algorithms, lies the question of the existence of polynomial kernels for $h$-layered crossing minimization. We completely resolve this question as well -- while a polynomial kernel was already known for \(h=2\), we derive a new polynomial kernel for \(h=3\). Complementarily, we rule out the existence of a polynomial kernel for any \(h \geq 4\), assuming that the polynomial hierarchy does not collapse. Thus, we establish a complete dichotomy regarding polynomial kernelization based on the number of layers \(h\).
		
		
		\medskip
		\noindent
		{\bf Keywords:} $h$-layer drawing, crossing minimization, subexponential algorithms, kernelization
		
	\end{abstract}
	\newpage
	\tableofcontents
	\newpage
	
\section{Introduction}\label{sec:intro}
A popular paradigm in Graph Drawing is that of layered graph drawings \cite{carpano1980automatic, sugiyama1981methods, catarci1995assignment, eades1994edge, junger20022}. In a layered graph drawing, the graph's vertices are allocated to parallel lines (layers), and the edges are drawn as lines between layers, aiming to minimize the number of crossings.
Drawing graphs in layers is a fundamental technique for representing hierarchical graphs. It is integrated into standard graph layout software such as yFiles, Graphviz, OGDF, or TiKz \cite{junger2012graph, tantau2013graph}. We recommend   \cite[Chapter 13]{tamassia2013handbook} for a comprehensive overview. We remark that in general, during the recent years, there is a significant interest in crossing minimization for various types of drawings \cite{HammH22,DBGanianMOPR24,LokshtanovP0S0Z25}.

 A popular model of a layered graph drawing is known as the  \probHCr problem. In this problem, for 
 $h\geq 2$,  the input is an $h$-partite $n$-vertex graph $G$ with $h$-partition $(V_1,\ldots,V_h)$ such that every edge has its endpoints in sets with consecutive indices, and a non-negative integer $k\in\mathbb{N}_0$.  The task is to decide whether $G$ admits an {\em $h$-layered drawing} with at most $k$  crossings. Here, an $h$-layered drawing is a plane drawing such that for every $i\in \{1,2,\ldots,h\}$, all vertices in $V_i$ belong to a single straight line parallel to the $x$-axis, and all edges are drawn as straight line segments between consecutive layers (see \Cref{fig:example}). The \probHCr  problem remains NP-hard even when 
  $k=0$~\cite{heath1992laying}. Furthermore, the question of whether a graph $G$ can be drawn in two layers with at most $k$  crossings, where $k$ is part of the input, is NP-complete~\cite{garey1983crossing}.  
  Thus, \probHCr is para-NP-hard when parameterized only by $k$ or $h$ (that is, we do not expect an algorithm with running time $f(h)\cdot n^{\OO(1)}$ or  $g(k)\cdot n^{\OO(1)}$, for any function $f$ or $g$, unless P=NP). Dujmovi{\'c} et al.~\cite{dujmovic2008parameterized} proved that the problem is 
FPT parameterized by $k+h$. While the running time of the algorithm of Dujmovi{\'c} et al. is linear in the number of vertices $n$, its dependence in the parameter is $(k+h)^{{\OO((k+h)^3)}}$.  

  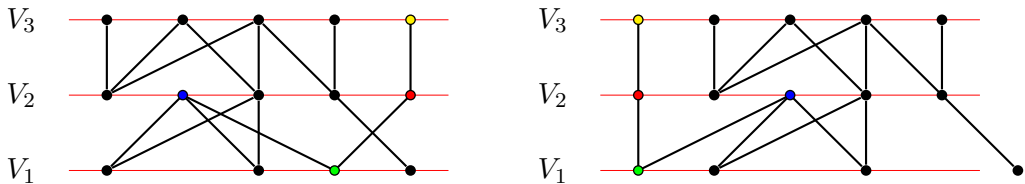
\begin{figure}[H]
	\begin{center}
		\begin{tikzpicture}
			\tikzset{vertex/.style={circle,fill=black,minimum size=4pt,inner sep=0pt},
				edge/.style={thick},
				line/.style={red, ultra thin},
				label/.style={black,left, xshift=-0.3cm}, 
				bluevertex/.style={circle,fill=blue,minimum size=3pt,inner sep=0pt},
				redvertex/.style={circle,fill=red,minimum size=3pt,inner sep=0pt},
				greenvertex/.style={circle,fill=green,minimum size=3pt,inner sep=0pt},
				yellowvertex/.style={circle,fill=yellow,minimum size=3pt,inner sep=0pt}}
			\draw[line] (0.5,0) -- (5.5,0) node [label, at start] {$V_1$};
			\draw[line] (0.5,1) -- (5.5,1) node [label, at start] {$V_2$};
			\draw[line] (0.5,2) -- (5.5,2) node [label, at start] {$V_3$};
			
			\foreach \pos/\name in {{(1,0)/a1},  {(3,0)/a3}, {(4,0)/a4}, {(5,0)/a5},
				{(1,1)/b1}, {(2,1)/b2}, 
				{(3,1)/b3}, 
				{(4,1)/b4}, {(5,1)/b5},
				{(1,2)/c1}, {(2,2)/c2}, {(3,2)/c3}, {(4,2)/c4}, {(5,2)/c5}}
			\node[vertex] (\name) at \pos {};
			\node[bluevertex] (b2) at (2,1) {};
			\node[redvertex] (b5) at (5,1) {};
			\node[greenvertex] (a4) at (4,0) {};
			\node[yellowvertex] (c5) at (5,2) {};
			
			\foreach \source/ \dest in {b1/c3, c3/b3,  a1/b2, a3/b3, a4/b5, b1/c1, b2/a3, b3/a1, c2/b3, c3/a5, c2/b1, a4/b2, c4/b4, c5/b5}
			\draw[edge] (\source) -- (\dest);

			\draw[line] (7.5,0) -- (12.5,0) node [label, at start] {$V_1$};
			\draw[line] (7.5,1) -- (12.5,1) node [label, at start] {$V_2$};
			\draw[line] (7.5,2) -- (12.5,2) node [label, at start] {$V_3$};
			
			\foreach \pos/\name in {{(9,0)/a11},  {(11,0)/a31},  {(8,0)/a41},
				{(13,0)/a51},
				{(9,1)/b11}, {(10,1)/b21},    {(11,1)/b31},  {(12,1)/b41},  {(8,1)/b51},
				{(9,2)/c11}, {(10,2)/c21}, {(11,2)/c31}, {(12,2)/c41}, {(8,2)/c51}}
			\node[vertex] (\name) at \pos {};
			\node[bluevertex] (b21) at (10,1) {};
			\node[redvertex] (b51) at (8,1) {};
			\node[greenvertex] (a41) at (8,0) {};
			\node[yellowvertex] (c51) at (8,2) {};
			\foreach \source/ \dest in {b11/c31, c31/b31,  a11/b21, a31/b31, a41/b51, b11/c11, b21/a31, b31/a11, c21/b31, c31/a51, c21/b11, a41/b21, c41/b41, c51/b51}
			\draw[edge] (\source) -- (\dest);
			
			%
			
		\end{tikzpicture}

		\caption{Example of $3$-layered drawings of the same graph with five and two crossings. } \label{fig:example}
	\end{center}
\end{figure}

In this work, we aim to fully delineate the tractability boundaries of \probHCr for all values of $h$, focusing on two fundamental questions in parameterized complexity: subexponential solvability and polynomial kernelization. We present nearly tight positive and negative results regarding subexponential running times in terms of the number of layers, with the case of four layers being the only remaining open case. For polynomial kernelization, we achieve a complete dichotomy: we provide the first polynomial kernel for three layers and, for all $h\ge4$, we rule out the existence of polynomial kernels unless the polynomial hierarchy collapses. A summary of our results is provided in \Cref{table:ourresults}. Interestingly, in the context of kernelization complexity, the tractability boundary appears to be a bit surprising -- typically the transition from tractability to intractability of problems happens from ``2 to 3'' (as seen in problems like colorability, matching, SAT); in this case, the transition may be better explained by considering the number of edge layers rather than vertex layers.


The simplest version of \probHCr is called {\sc One-Sided Crossing Minimization}. In this problem, the ordering (embedding) of vertices of $V_1$ on the line is fixed, and the task is to draw the vertices of $V_2$ on the line, minimizing the number of crossings. 
The last year's PACE challenge (PACE 2024) was entirely devoted to this problem.\footnote{\url{https://pacechallenge.org/2024/}}
The {\sc One-Sided Crossing Minimization} problem is known to admit a polynomial kernel with $\OO(k^2)$ vertices~\cite{dujmovic2008fixed} and is FPT parameterized by $k$  \cite{dujmovic2008fixed,dujmovic2004efficient,dujmovic2008parameterized}.  Moreover, 
several subexponential-time parameterized algorithms were introduced for this problem \cite{fernau2014social,DBLP:journals/algorithmica/KobayashiT15}. 
 The best running time for this problem is 
  $\OO(k2^{\sqrt{2k}}+ n)$ \cite{DBLP:journals/algorithmica/KobayashiT15} and 
 up to the ETH,  the problem cannot be solved in time $2^{o(\sqrt{k})}\cdot n^{\OO(1)}$.
Solving {\sc One-Sided Crossing Minimization} could be reduced to solving parameterized 
\textsc{Feedback Arc Set in Tournament} (FAST). For FAST, several subexponential algorithms are known 
 \cite{alon2009fast,Feige09,FominP13,KarpinskiS10}. However, we do not see any reduction from  \probHCr to FAST or any other known subexponential-FPT problem.

 The case $h=2$ has its own story. 
 \probTwoCr has a long history, dating back to the 1970s. The problem was introduced by  
 Harary  and Schwenk \cite{harary1971trees} as the bipartite crossing number. Finding a drawing on two lines minimizing the number of crossings is also the key procedure in the Sugiyama algorithm, the well-known layout framework for drawing graphs on layers \cite{sugiyama1981methods}. The problem is long known to be NP-complete~\cite{garey1983crossing}. 
 Kobayashi et al.~\cite{KobayashiMNT14} gave an algorithm of running time $2^{\OO(k\log{k})} +n^{\OO(1)}$. A faster algorithm of running time  $2^{\OO(k)}+n^{\OO(1)}$ was obtained by Kobayashi and Tamaki in \cite{kobayashi2016faster}. Kobayashi et al.~\cite{KobayashiMNT14} also gave
a polynomial kernel with $\OO(k)$ edges.  The starting point of our work on \probHCr is the question initially posed by Kobayashi, Maruta, Nakae, and Tamaki \cite{KobayashiMNT14}. This question was later reiterated by Kobayashi and Tamaki \cite{kobayashi2016faster} and Zehavi \cite{Zehavi22}. They ask whether \probTwoCr admits a subexponential parameterized algorithm, specifically, an algorithm with a running time of $2^{o(k)}\cdot n^{\OO(1)}$. Our first result answers this long-standing open question affirmatively. 
\begin{restatable}{theorem}{thmsubexp}\label{thm:subexp}
On an $n$-vertex graph, \probTwoCr is solvable in time  
 \runtimefortwo.
 \end{restatable}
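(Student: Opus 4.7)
The plan is to combine the known $\Oh(k)$-edge polynomial kernel of Kobayashi, Maruta, Nakae, and Tamaki with a new $2^{\Oh(\sqrt{k}\log k)}$-time exact algorithm on the kernelized instance. Applying the polynomial kernel in time $n\cdot k^{\Oh(1)}$ gives an equivalent instance on $N=\Oh(k)$ vertices with $\Oh(k)$ edges, which accounts for the additive term in \runtimefortwo; the remaining goal is to solve the kernel in subexponential parameterized time.

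The core structural insight is a planarity-like decomposition of any optimal drawing. In any $2$-layer drawing with at most $k$ crossings, deleting one edge per crossing pair produces a planar $2$-layer drawing, and such planarly $2$-layer-drawable graphs are forests of caterpillars. Combined with the kernel size, the planarization is planar on $\Oh(k)$ vertices, hence admits a balanced separator of size $\Oh(\sqrt{k})$ by the planar separator theorem; crucially, this separator can be realized as a vertical sweepline in the drawing, crossed by only $\Oh(\sqrt{k})$ planarization edges. A counting argument over the sweepline's possible positions then shows that one can choose the cut so that only $\Oh(\sqrt{k})$ of the deleted crossing edges also traverse it, giving a balanced vertical separator of the entire drawing of total size $\Oh(\sqrt{k})$.

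The algorithm is divide-and-conquer driven by this separator, equivalently a dynamic program over sweepline positions. It enumerates every candidate separator, namely a choice of up to $\Oh(\sqrt{k})$ edges together with a total order on their endpoints in each of the two layers; this yields at most $\binom{N}{\Oh(\sqrt{k})}^{2}\cdot (\Oh(\sqrt{k}))!^{2} = 2^{\Oh(\sqrt{k}\log k)}$ candidates. For each candidate, the drawing splits into two smaller subproblems whose crossing budgets sum to at most $k$ minus the crossings charged to the separator; these subproblems are then solved recursively with their own separators. A standard recursion-tree analysis yields the claimed $2^{\Oh(\sqrt{k}\log k)}$ bound on the kernel, and adding the kernelization time produces the stated running time.

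The principal obstacle, and where most of the technical effort will go, is handling both layer orderings being variable simultaneously. One must show that a single sweepline delivers a balanced split on both layers at once (or design an interface that tracks cuts on the two sides independently without blowing the candidate count beyond $2^{\Oh(\sqrt{k}\log k)}$), and the DP's boundary state must succinctly summarize partial orderings on each side together with the ``debt'' of crossings already attributed to the boundary. Making the separator argument robust against these issues, and ensuring that crossing counts are consistently attributed across recursion levels without double counting, is expected to constitute the bulk of the proof.
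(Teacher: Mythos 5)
Your proposal reaches the right high-level shape---divide-and-conquer with $\OO(\sqrt{k})$-size interfaces and $2^{\OO(\sqrt{k}\log k)}$ branching, preceded by the linear-time kernel---but the route through the planar separator theorem is precisely the one the paper considers and rejects, and your plan has a gap exactly where the paper identifies the central difficulty. The planar separator theorem on the planarized (caterpillar-forest) graph gives a \emph{vertex} separator that balances \emph{vertex counts}; you need a separator that (i) is geometrically realizable as a left/right cut in the unknown drawing, (ii) is crossed by only $\OO(\sqrt{k})$ edges, and (iii) balances the \emph{crossing budget} on its two sides. None of these three properties follows automatically from the planar separator theorem, and your averaging/counting argument is asserted rather than established. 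The paper instead \emph{constructs} its separator directly from the drawing: it takes the leftmost ``blue'' edge (an edge crossed at most $\sqrt{k}$ times) with more than $k/2$ crossings to its left as $\rsep$, and the rightmost blue edge to the left of $\rsep$ not crossing it as $\lsep$; balance, sparsity of the middle, and sparsity of crossing edges all fall out of this choice (Lemma~\ref{lem:2separatorlemma}). This drawing-based separator is the paper's main conceptual novelty.

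The more serious gap is what you gloss over as ``boundary state'' issues. Even after you guess the separator's $\OO(\sqrt{k})$ edges and the total order of their endpoints, you do \emph{not} know how the remaining vertices split between the two sides. The paper states this explicitly as the reason previous approaches failed: one can guess a bag/separator, but deducing the separation from it is unclear. Many vertices---in particular components hanging only off a separator endpoint, and especially single pendant vertices---are simply not determined by the interface. The paper's solution involves: a label-propagation/BFS step from the separator's ``origins''; a structural bound (Lemma~\ref{lem:2type3components}) showing only $\OO(\sqrt{k})$ unclassified components can contain vertices on both layers, so these can be guessed; and, for the remaining pendant components (which can be numerous and cannot be guessed individually), an exchange argument (Lemma~\ref{lem:2permutation}) showing the optimal drawing can be rearranged so that only the \emph{counts/weights} going left and right matter, recovered via a knapsack subroutine. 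Your proposal contains no analogue of this separation-deduction machinery, and without it the DP state you describe does not correspond to a well-defined subproblem. This is the bulk of the work, not a routine technicality as your last paragraph suggests.
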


\begin{table}[t]
	\begin{center}
		\begingroup
		\setlength{\tabcolsep}{7pt} 
		\renewcommand{\arraystretch}{1.5} 
			\begin{tabular}{ |c | c | c |}
				\hline
				\rowcolor{anti-flashwhite}
				\textbf{Number of layers $h$} &  Subexponential-time  & Kernelization   \\\hline\hline
				$h = 2$ & \runtimefortwo (Theorem~\ref{thm:subexp})&  $\Oh(k^2)$  \cite{KobayashiMNT14} \\\hline
				$h = 3$ & \runtimeforthree (Theorem~\ref{thm:subexpTHREE})& \kernelforthree (Theorem~\ref{thm:kernel})
				\\\hline $h=4$  & Open &  No poly kernel (Theorem~\ref{thm:nokern})
				\\\hline $h\geq 5$ & No $2^{o(k/\log k)}\cdot n^{\Oh(1)}$ (Theorem~\ref{thm:ETHlb}) &  No poly kernel  (Theorem~\ref{thm:nokern})
				\\\hline
			\end{tabular}
			\endgroup
		\end{center}
		\caption{Our results}
		\label{table:ourresults}
	\end{table}


In \Cref{sec:twolayer}, we give a short, elegant, and self-contained proof of this result, which already resolves the open question in the literature as mentioned earlier. However, \Cref{thm:subexp}  
naturally raises the question of whether a subexponential $2^{\OO(k^{1-\epsilon})}\cdot n^{\OO(1)}$-time algorithm can be extended to drawings with more layers. Our next two theorems address this question almost completely. They show that for $h=3$, the answer is yes, while for any $h \geq 5$, the answer is no (assuming ETH, the Exponential Time Hypothesis of Impagliazzo, Paturi, and Zane~\cite{ImpagliazzoP99,ImpagliazzoPZ01}). The status of the problem for $h=4$ remains open. More precisely, we prove the following theorems.

\begin{restatable}{theorem}{thmsubexpTHREE}\label{thm:subexpTHREE}
On an $n$-vertex graph, \probThreeCr is solvable in time  
   \runtimeforthree.
\end{restatable}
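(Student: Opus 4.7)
The plan has two stages, matching the two terms in the claimed running time. First, apply the polynomial kernelization of Theorem~\ref{thm:kernel} to replace the input $(G,k)$, in time $n\cdot k^{\Oh(1)}$, by an equivalent instance of size $\Oh(k^8)$; this absorbs the additive $n\cdot k^{\Oh(1)}$ term and reduces the problem to a kernel whose size is polynomially bounded in $k$. In the second stage, I would design an algorithm running in $2^{\Oh(k^{0.67})}$ time on the kernel.

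For the second stage, the natural strategy is to build on Theorem~\ref{thm:subexp}, but the two-layer algorithm cannot be applied as a black box: in the three-layer setting, the ordering of the middle layer $V_2$ is simultaneously constrained by both pairs $(V_1,V_2)$ and $(V_2,V_3)$, so the two sides do not decouple. On the other hand, once the ordering of $V_2$ is fixed, the problem splits into two independent one-sided crossing minimization problems, each solvable in subexponential time. Hence the algorithmic task reduces to enumerating, in a principled way, a suitably small family of candidate orderings of $V_2$ that is guaranteed to contain an optimum.

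To make this enumeration efficient, the key ingredient would be a structural/separator lemma asserting that any $3$-layered drawing with at most $k$ crossings admits a balanced partition $V_2 = L \uplus S \uplus R$ of the middle layer (with $L$ to the left of $S$ and $S$ to the left of $R$) for which $|S|=\Oh(k^{1/3})$ and each of $L$, $R$ carries at most half the crossing budget. Guessing $S$, its internal order, and its interface to the outer layers costs $2^{\Oh(k^{1/3}\log k)}$. Each side then reduces to a smaller subinstance to which the procedure is applied recursively; within each recursive call, the resulting local problems can be resolved by invoking Theorem~\ref{thm:subexp} on appropriately constructed two-layer subinstances. Solving the resulting recurrence of the rough form $T(k)=2\cdot T(k/2)+2^{\Oh(k^{1/3}\log k)}$ yields the desired $2^{\Oh(k^{0.67})}$ bound, the exponent $0.67\approx 2/3$ emerging from balancing the separator-guessing cost against the recursion depth.

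The main obstacle will be establishing this structural lemma, namely that every yes-instance admits such a small balanced separator in $V_2$. The intended argument is a sweepline/pigeonhole combination: as a vertical cut sweeps through $V_2$, the number of crossings assigned to each side varies in a controlled way, and an averaging argument over the at most $k$ crossings forces some cut position to be simultaneously ``cheap'' in separator size and balanced in recursive crossing budget. A secondary technical difficulty is the careful bookkeeping of boundary information between subinstances: each recursive call must faithfully track the interface with both $V_1$ and $V_3$ induced by $S$, and this interface description must itself fit within the $\Oh(k^{1/3})$-sized state for the recursion to remain subexponential.
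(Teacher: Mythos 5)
Your high-level outline (kernelize via Theorem~\ref{thm:kernel}, then divide-and-conquer around a geometric ``separator'' tied to an optimal drawing, re-using Theorem~\ref{thm:subexp} as a subroutine) matches the paper's approach. But there are two concrete problems with the proposal.

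First, the arithmetic of the recurrence does not match the claimed bound. If the separator $S$ has size $\Oh(k^{1/3})$ and guessing $S$ plus its interface costs $2^{\Oh(k^{1/3}\log k)}$, then the genuine recurrence is $T(k) \le 2^{\Oh(k^{1/3}\log k)}\cdot T(k/2) + \text{poly}(k)$: each guess spawns two recursive calls, so the guessing cost multiplies (it does not merely add). This solves to $2^{\Oh(k^{1/3}\log k)}$, since the per-level cost forms a decaying geometric series. That would be a \emph{better} bound than $2^{\Oh(k^{2/3})}$ --- which should make you suspicious of the assumed separator size. The paper's actual separator-related guessing cost at each level is $n^{\Oh(k^{2/3})}$ with $n = k^{\Oh(1)}$ (see \Cref{lem:3guessbound}), i.e.\ $2^{\Oh(k^{2/3}\log k)}$, with the parameter dropping to $3k/4$ (\Cref{eqn:3recurrence}), and that top-level branching is what dominates and yields the $k^{2/3}$ exponent. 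A separator of size $\Oh(k^{1/3})$ in $V_2$ is not something the paper establishes or needs, and it is not where the $2/3$ comes from.

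Second, and more fundamentally, the proposal locates the difficulty in the wrong place. The separator the paper actually uses is \emph{edge-based}: four edges $\lsepone,\rsepone,\lseptwo,\rseptwo$, each crossed at most $\Oh(\sqrt{k})$ times, together with $\Oh(\sqrt{k})$-sized sets ($\mde$, $\crosse$, $\gapleft$, $\gapright$); all of these are $\Oh(\sqrt{k})$-sized, as in the two-layer case (see \Cref{def:3separator}, \Cref{lem:3separatorlemma}). The hard part is not finding the separator but \emph{deducing the separation from it}. After guessing the separator and propagating labels (Step F of \Cref{subsec:3guessing}), many connected components hanging off a shared separator endpoint $v$ remain unclassified, in particular in the $(\SB,\DT)$ case where $\ept(\lsepone,1)=\ept(\rsepone,1)$. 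Handling these peculiar components is where the $k^{2/3}$ enters: the paper introduces star vertices, bounds the components surrounding them by $\Oh(k^{2/3})$ (\Cref{lem:3exceptional}, via Cauchy--Schwarz and an accumulation argument), classifies the remaining components into $\Oh(k^{2/3})$ ``types'', and uses a knapsack argument to reorder and split them across the cut while keeping the separator balanced (\Cref{lem:3componentordering}). None of this appears in your sketch, and without it the sweepline/pigeonhole argument for a $\Oh(k^{1/3})$-sized balanced vertex separator in $V_2$ has no evident proof --- the bottleneck is not the separator's size but the combinatorial explosion of ways to assign the dangling components to the two sides.
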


 \begin{restatable}{theorem}{thmeth}\label{thm:ETHlb}
For any $h\geq 5$, \probHCr cannot be solved by an algorithm running in $2^{o(k/\log k)}\cdot n^{\Oh(1)}$ time unless ETH fails.
\end{restatable}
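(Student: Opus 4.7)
The plan is a polynomial-time many-one reduction from $3$-SAT. By the Sparsification Lemma of Impagliazzo--Paturi--Zane, under ETH one cannot decide satisfiability of a $3$-CNF formula on $N$ variables and $M=\Oh(N)$ clauses in $2^{o(N)}$ time. I would transform any such formula $\varphi$ into a $5$-partite graph $G_{\varphi}$ (with partition $V_1,\ldots,V_5$) together with an integer budget $k := C\cdot N\log N$, for an absolute constant $C$, so that $G_\varphi$ admits a $5$-layered drawing with at most $k$ crossings if and only if $\varphi$ is satisfiable. For $h>5$, the same lower bound is obtained by padding with $h-5$ extra, essentially empty layers attached via inexpensive paths. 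A hypothetical $2^{o(k/\log k)}\cdot n^{\Oh(1)}$ algorithm would then decide $\varphi$ in time $2^{o(N\log N/\log(N\log N))}\cdot\mathrm{poly}(N)=2^{o(N)}$, contradicting ETH.

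The skeleton of $G_\varphi$ that I have in mind runs a long horizontal \emph{spine} across layers $2,3,4$, with variable-encoding vertices on $V_1$ and clause-checking vertices on $V_5$. For each variable $x_i$ I plan a \emph{bit-cascade} gadget that outputs an $\Oh(\log N)$-bit identifier of the chosen truth value at a cost of $\Oh(\log N)$ unavoidable crossings: each bit is a forced binary choice between two near-parallel edges that cross a rigid reference edge in exactly one of two configurations. For each clause $C_j$ I plan a \emph{tester} gadget whose $\Oh(\log N)$ attachment points must align with one of three orderings corresponding to the three satisfying literals; any misalignment forces extra crossings that exceed the per-gadget budget. The four available edge strips between consecutive layers are used, respectively, to house the variable bit-cascades (strip $1$--$2$), to implement a routing and crossover of wires along the spine (strips $2$--$3$ and $3$--$4$), and to realise the clause tests (strip $4$--$5$). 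Summing over $N$ variable gadgets and $\Oh(N)$ clause gadgets at $\Oh(\log N)$ crossings apiece yields $\Oh(N\log N)$ total crossings, matching $k$.

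The main obstacle will be the faithful engineering of these gadgets in only five layers, so that: (i)~every satisfying assignment yields a drawing with at most $k$ crossings, (ii)~every drawing with at most $k$ crossings can be \emph{normalized} by a local exchange argument on the $V_i$-orderings into one from which a satisfying assignment is readable, and (iii)~gadgets concatenated along the spine do not create unintended crossings at their boundaries. The most delicate piece is the crossover construction that lets wires belonging to different gadgets pass each other without paying extra crossings; this needs two full edge strips, which is exactly what makes $h\ge 5$ (equivalently, at least four edge strips) sufficient and matches the open status of $h=4$ highlighted in \Cref{table:ourresults}. The per-gadget accounting must also be tight up to constants, since any slack of $\omega(\log N)$ per gadget would push $k$ above $\Oh(N\log N)$ and spoil the arithmetic that feeds the ETH contradiction. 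Once the $h=5$ case is settled, the extension to $h\ge 6$ is routine padding.
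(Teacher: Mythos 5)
Your arithmetic is the right one — a parameter of $\Theta(N\log N)$ on a sparsified $3$-SAT instance with $N$ variables does indeed turn a hypothetical $2^{o(k/\log k)}\cdot n^{\Oh(1)}$ algorithm into a $2^{o(N)}$ algorithm for $3$-SAT, contradicting ETH — and the paper reaches the same magnitude. But the route you take is genuinely different and, as sketched, has a gap at precisely the spot you flag as the main obstacle.

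The paper does not reduce directly from $3$-SAT. It gives a polynomial parameter transformation from \probDF, using the fact (from Bodlaender, Thomass\'e, and Yeo) that \probDF over an alphabet of size $k$ cannot be solved in $2^{o(k)}|s|^{\Oh(1)}$ time under ETH. Symbols of $\Sigma$ are encoded by $\Theta(\log k)$-bit binary strings, realized by gadgets $S_i$ (embedded along a long three-layer spine $R$ of length $n = |s|$) and complementary gadgets $\widehat{S}_i$ that must be ``docked'' into the spine. The central technical fact (Lemma~\ref{lem:restr}) is that docking $\widehat{S}_i$ into a spine position carrying $S_j$ costs exactly $14\ell - 2$ crossings iff $i = j$ and strictly more otherwise. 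Only $2k$ of the $n$ spine positions ever receive a docked $\widehat{S}$; the remaining $n - 2k$ spine pieces are drawn crossing-free. Disjointness and distinctness of the $k$ factors is enforced by a single apex vertex $z$ in the fifth layer connected to all $v_i$ hubs, which is exactly what kills the $\binom{k}{2}$-crossing blowup present in the four-layer version. No crossover gadget appears anywhere.

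This is the issue with your plan. You place variable gadgets on $V_1$, clause testers on $V_5$, and say the clause tester's $\Oh(\log N)$ attachment points ``must align with one of three orderings,'' with two full edge strips devoted to routing and crossover. But a clause $C_j$ involves three variables with arbitrary indices; if they are far apart along the spine, the paths carrying their truth values from $V_1$ up to $V_5$ span a large horizontal distance, and in a layered drawing a path spanning a distance $d$ along the spine must cross $\Theta(d)$ spine edges. With arbitrary clause-variable incidence some clause reaches distance $\Theta(N)$, so the naive routing cost is $\omega(N \log N)$. You acknowledge that the crossover construction is ``the most delicate piece,'' but it is also the whole ballgame, and the plan gives no mechanism for it. The reason the paper's reduction avoids the problem is structural: \probDF's disjoint-factor condition lets all the long-range coordination be funnelled through one fixed apex vertex $z$, and the expensive part (symbol matching) is strictly local to $\Oh(k)$ positions regardless of $n$. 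Minor but symptomatic of the same confusion: ``an $\Oh(\log N)$-bit identifier of the chosen truth value'' — a truth value is one bit; the paper's $\Theta(\log k)$-bit encodings are of alphabet \emph{symbols}, which is the quantity that actually needs logarithmically many bits. The right fix is to route through \probDF (or a problem with comparably ``interval-structured'' witnesses) rather than to try to simulate wires and crossovers in a layered drawing.
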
  

Essential components of the proofs of \Cref{thm:subexp,thm:subexpTHREE} are linear time algorithms constructing kernels of polynomial sizes. Informally, these are  algorithms constructing for an $n$-vertex $h$-partite graph $G$ and integer $k$, an equivalent instance on $k^{\cO(1)}$ vertices within $\cO(n)$ time. For \probTwoCr, such kernelization algorithm was given by  Kobayashi et al.~\cite{KobayashiMNT14}. 
However, for \probThreeCr, no polynomial kernel was previously known, making the following theorem of independent interest.

 \begin{restatable}{theorem}{thmkern}\label{thm:kernel}
\probThreeCr admits a kernel with   \kernelforthree vertices and edges. Furthermore, the kernelization algorithm can be implemented to run in $k^{\Oh(1)}\cdot n$ time.
\end{restatable}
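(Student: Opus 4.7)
The plan is to design a sequence of polynomial-time reduction rules that progressively shrink any instance $(G,k)$ to an equivalent one with $\cO(k^8)$ vertices and edges, then argue that the entire procedure can be implemented in $k^{\cO(1)}\cdot n$ time. My strategy adapts the philosophy behind the $\cO(k)$-edge kernel of Kobayashi, Maruta, Nakae, and Tamaki for \probTwoCr, but must treat $G$ as the union of two bipartite subgraphs $G_{12}$ on $V_1\cup V_2$ and $G_{23}$ on $V_2\cup V_3$, coupled through a shared ordering of the middle layer $V_2$; hence every reduction rule must be simultaneously safe with respect to both sides, which is the fundamental feature that distinguishes the 3-layer kernelization from its 2-layer counterpart.

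I would proceed in three stages. First, a \emph{cleanup} stage deletes isolated vertices, contracts components with no crossings forced upon them, and handles pendants whose ordering can be placed at a boundary without ever contributing to a crossing. Second, a \emph{twin reduction} stage: two vertices in $V_1$ with the same neighborhood in $V_2$ can be placed consecutively in any optimal drawing and thus condensed to a single representative carrying a multiplicity counter; analogously for $V_3$; and for $V_2$, the relevant equivalence is the pair $(N(v)\cap V_1,\ N(v)\cap V_3)$, so the condensation is genuinely two-sided. Third, a \emph{degree-bounding} stage: if some vertex has degree substantially larger than $k$, then either the instance is a no-instance (because any ordering of the rest of the graph forces more than $k$ crossings amongst its incident edges) or a large ``parallel'' bundle of edges can be replaced by a small equivalent gadget. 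After all three stages, one bounds the number of distinct twin types in each layer by $k^{\cO(1)}$, the number of neighbors of each representative by $k^{\cO(1)}$, and multiplies through to obtain the advertised $\cO(k^8)$ size.

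The main obstacle I anticipate is the interaction through $V_2$: an argument safe in a stand-alone 2-layer problem may legitimately fix or destroy structure on one side of $V_2$ that is still needed for the other side. In particular, the two-sided twin rule for $V_2$ must be stated so that condensing a pair of vertices does not eliminate choices that an optimum drawing would exploit on either side, and the crossing-counting used to kill high-degree vertices on the $V_1$-side must remain valid even after the $V_3$-side has been separately reduced. The technical heart of the proof will therefore be a charging scheme that simultaneously accounts for crossings of $G_{12}$ and $G_{23}$ against the single budget $k$. For the $k^{\cO(1)}\cdot n$ running time, twin classes are identified by radix-sorting the neighborhood lists in linear time; after the first pass the working graph already has $k^{\cO(1)}$ size, so all subsequent reductions and degree arguments are implemented with hash tables and incremental updates that never re-scan the original $n$-vertex graph.
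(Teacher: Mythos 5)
There is a genuine gap. Your twin reduction together with degree bounding cannot by themselves bound the kernel size by $k^{\cO(1)}$, because the hard residual structure in a three-layered graph has nothing to do with twins or high degree: it comes from long, thin, caterpillar-like pieces. Concretely, take a path $v_0 u_1 v_1 u_2 \cdots u_m v_m$ alternating between $V_1$ and $V_2$, and attach one extra edge somewhere so that a crossing is forced and the whole-component removal rule (your ``cleanup'' stage) no longer applies. No vertex has degree more than two, no two vertices have the same neighborhood, and consequently none of your three stages shrinks the graph, yet it can be a yes-instance with $k=1$ and $m$ arbitrarily large. Your final claim that ``one bounds the number of distinct twin types in each layer by $k^{\cO(1)}$'' is simply false: in bounded-degree bipartite graphs the number of distinct neighborhood types can be $\Theta(n)$.

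The paper handles exactly this phenomenon. Its crucial rules are not twin or degree rules but \emph{separation} rules: it locates a connected subgraph $H$ that (a) is attached to the rest of $G$ through only $4$, $5$, or $6$ vertices, (b) admits a two- or three-layer drawing with no crossings (hence is caterpillar-like and rigid by \Cref{obs:caterpillar}), and (c) contains a matching or a family of vertex-disjoint length-two paths of size $4k+3$. Because any drawing with $\le k$ crossings must leave at least two of these matching edges (or paths) uncrossed, the ``middle'' of $H$ is forced into its rigid crossing-free order, and the middle third can be cut out and glued back with $\cO(1)$ new edges (\Cref{red:matching-A}, \Cref{red:paths-A}, \Cref{red:paths-C} and their symmetric variants). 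Only after exhausting these rules does the paper argue, in \Cref{cl:size}, that a yes-instance has $\cO(k^8)$ vertices. This rigidity-through-separation argument is the indispensable ingredient absent from your proposal, and it is also what makes the linear-time implementation non-trivial (one must search for such separable rigid pieces by BFS in bounded radius, using \Cref{lem:struct-paths-C} and \Cref{lem:dist}). You also slightly misattribute the two-layer kernel of Kobayashi et al.: it proceeds by contracting \emph{order-inducing bridges}, not by twin condensation, and the three-layer kernel you need generalizes that bridge-contraction idea, not a twin-reduction idea.
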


Again, \Cref{thm:kernel} 
prompts the question of whether a polynomial kernel can also exist for drawings with more than three layers. 
The following theorem provides a negative answer to this question.

 \begin{restatable}{theorem}{thmnokern}\label{thm:nokern}
For any $h\geq 4$, \probHCr does not admit a polynomial kernel unless $\classNP\subseteq\classCoNP/{\rm poly}$.
\end{restatable}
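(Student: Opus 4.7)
The plan is to rule out polynomial kernels via AND-cross-composition, in the sense of Drucker's extension of the Bodlaender--Jansen--Kratsch framework: recall that an AND-cross-composition from an NP-hard language into a parameterized problem $Q$ implies that $Q$ has no polynomial kernel unless $\classNP \subseteq \classCoNP/{\rm poly}$. As the source I will take $h$-\probHCr restricted to $k=0$, i.e.\ $h$-Layer Planarity, which is NP-hard for every $h\geq 4$ by the result of \cite{heath1992laying} cited in the introduction. The polynomial equivalence relation declares two source instances to be equivalent iff the layer sizes coincide, $|V_j| = |V_j'|$ for every $j \in [h]$; this yields $n^{O(1)}$ classes of instances of bit-length at most $n$.

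Given $t$ equivalent instances $G_1, \dots, G_t$ with layer partitions $(V_1^i, \dots, V_h^i)$, I simply take the vertex-disjoint union $G^* := \bigsqcup_{i=1}^{t} G_i$, equip it with the $h$-partition $V_j^* := \bigsqcup_{i=1}^{t} V_j^i$ for each $j \in [h]$, and output the instance $(G^*, k^*)$ with $k^* := 0$. The parameter bound $k^* = 0$ is trivially $O((\max_i|G_i| + \log t)^{O(1)})$, as required.

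Correctness amounts to the claim that $G^*$ admits a planar $h$-layered drawing iff every $G_i$ does. The ``$\Rightarrow$'' direction is immediate: restricting a planar drawing of $G^*$ to the vertices and edges of a single $G_i$ yields a planar $h$-layered drawing of $G_i$. For ``$\Leftarrow$'', given planar $h$-layered drawings $D_1, \dots, D_t$, I place them column-wise: confine all vertices used by $D_i$ to a vertical strip $S_i = [c_i, d_i]\times \mathbb{R}$, with $d_{i-1} < c_i$ for every $i$, while preserving inside each $S_i$ the intra-layer orderings prescribed by $D_i$. Every edge of $G_i$ is a straight segment between two layers whose endpoints both lie in $S_i$, so the edge lies entirely in $S_i$; edges from different instances occupy disjoint strips and therefore cannot cross. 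This gives a planar $h$-layered drawing of $G^*$, and Drucker's theorem then delivers the kernelization lower bound.

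What is striking about this plan is how minimal the construction is: essentially just ``stack the instances side by side''. The real substance lies in the conceptual choice of the AND-variant of cross-composition, which sidesteps what appears to be the genuinely hard design step in an OR-cross-composition -- namely, a ``selector'' gadget that would have to force exactly one input instance's drawing to determine the output's number of crossings. I therefore do not anticipate any serious obstacle in carrying out the plan; the only points requiring care are the confirmation that the Heath--Rosenberg reduction gives NP-hardness already at $h=4$ in a form compatible with the definition used here, and the routine geometric verification that the column-wise placement introduces no inter-instance crossings.
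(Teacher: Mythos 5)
The plan collapses at its very first premise: the source language you propose, $h$-\probHCr{} with $k=0$ (i.e., $h$-level planarity with the vertex-to-layer assignment given as part of the input), is not NP-hard for any fixed $h$ --- it is decidable in linear time by the algorithm of J{\"u}nger, Leipert, and Mutzel~\cite{JungerLM98}, which the paper records as \Cref{prop:linear-time} and uses as a subroutine throughout Sections~\ref{sec:prelim} and~\ref{sec:kernel}. The Heath--Rosenberg citation in the introduction refers to a different problem variant in which the layer assignment is not supplied and must be discovered; that hardness does not transfer to the setting of this paper, where the $h$-partition $(V_1,\ldots,V_h)$ is input. You even flagged this exact point as the item "requiring care," and it is precisely where the proposal fails: once the source is in $\classP$, there is no hard language to cross-compose, so the (AND-)cross-composition says nothing. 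There is also a quick sanity check that should have raised alarm: if $k=0$ were NP-hard for a fixed $h$, the problem would be $\classParaNP$-hard parameterized by $k$, contradicting the FPT algorithm of Dujmovi{\'c} et al.~\cite{dujmovic2008parameterized} that the paper cites and builds on. Your geometric observation that a disjoint union admits a crossing-free drawing iff each component does is correct (it is essentially \Cref{red:comp}), but a composition that outputs $k^*=0$ cannot possibly witness kernelization hardness for a problem whose $k=0$ slice is trivial.

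The paper's actual proof of \Cref{thm:nokern} is necessarily of a different character: it is a polynomial parameter transformation from \probDF{} (\textsc{Disjoint Factors}), which Bodlaender, Thomass{\'e}, and Yeo showed does not admit a polynomial kernel unless $\classNP\subseteq\classCoNP/{\rm poly}$~\cite{BodlaenderTY11}. The construction in \Cref{sec:lb} encodes alphabet symbols by binary strings and builds, for each symbol, a pair of complementary gadgets ($Z/\widehat{Z}$ and $U/\widehat{U}$, assembled into $S_i$ and $\widehat{S}_i$) with the crucial property (\Cref{lem:restr}) that when $\widehat{S}_j$ is forced to interleave with $S_i$, the minimum number of crossings in a restrictive drawing is exactly $14\ell-2$ if and only if $i=j$; the fourth layer is what creates this forcing, which is exactly why the bound starts at $h=4$. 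The target parameter ends up as $k'=\Oh(k^2\log k)$, polynomial in $k$, and crossings play an essential quantitative role. Any correct lower-bound argument here must let the crossing budget do real work, which a disjoint-union composition with $k^*=0$ structurally cannot.
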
 

\Cref{thm:kernel,thm:nokern}   establish a  dichotomy for polynomial kernelization of  \probHCr  based on the number of layers \(h\). 
We summarize the results of this paper in Table \ref{table:ourresults}. We give a brief summary of our methods in the following subsection, and a detailed technical overview in \Cref{sec:overview}.

 \subsection{Our Methods}
 
At the core of our subexponential-time fixed-parameter algorithms is the divide-and-conquer method. Here, a naive attempt is to base the approach of divide-and-conquer on standard path (or tree) decompositions---indeed,  a graph admitting a  drawing with $k$ crossings can be proved to admit a path decomposition of width $\OO(\sqrt{k})$. However, while we can ``guess'' some {\em separator}  that is a bag in such a decomposition, it is completely unclear how the {\em separation} itself can be concluded from this separator in subexponential time. Indeed, we believe that this is the main reason why the problem of the existence of the subexponential-time fixed-parameter algorithms has remained opened for so long.

All the previous approaches for crossing minimization (for general drawings) bound the treewidth of the graph by using the irrelevant vertex technique, and then use dynamic programming or Courcelle's theorem on the tree decomposition~\cite{Kawarabayashi09}. We completely deviate from this approach. While we still use separators, they satisfy additional properties not required by tree decomposition. Indeed, we enrich the standard definition of a separator by including novel additional properties based on the drawing. Intuitively, this separator consists of four edges (or two in the case of two layers) that are crossed few times, and the edges that cross them. We also demand the separator to be ``balanced'' in some sense (so that the recursion will be efficient), but  this, on its own, is far from sufficient. We must enforce the separator, in order for it to be useful, to satisfy additional drawing-based properties so that, from the separator, we can deduce the separation. In the literature, there are a few examples where the existence of tree decompositions or just separators with certain properties is derived from the drawing, e.g.,~\cite{KleinM14,AshokFKSZ18}. However, typically, such approaches then completely work on the tree decomposition itself, completely ignoring the extra properties. However, as mentioned above, here the extra properties of the separator are critical for us. In the proof, for the sake of simplicity, we do not explicitly describe the tree structure. 

The deduction of the separation itself, particularly in the three-layered case, is a complicated process that requires: {\em (i)} multiple additional guesses, {\em (ii)} a propagation process, where, having guessed some vertices in one part of the separation, we propagate the information to derive additional  vertices in that part, and {\em (iii)} handling various components ``dangling'' from the separator vertices that are not captured by the guesses and the propagation process. Step {\em (iii)} is particularly challenging for the three-layered case, where we sometimes need to redraw the (unknown) optimal solution our separator is based on, based on novel arguments classifying the aforementioned components into types, and making use of knapsack-based arguments so as ensure that the separator remains a (balanced) separator.

 The main idea of the kernelization algorithm (which is a critical component in our subexponential algorithms) is that if $(G,V_1,V_2,V_3,k)$ is a yes-instance of \probThreeCr and $G$ is sufficiently big with respect to $k$, then $G$ should have large pieces that have to be drawn without crossings. Moreover, due to the rigidity of drawings without crossing, the drawing of these pieces is essentially unique (up to the reversals of orders and reordering of twins). Thus, our algorithm applies reduction rules that {\em (i)}~find the parts that are (almost) uniquely embeddable without crossings, and {\em (ii)}~reduce their size.
 
 The starting point of our lower bounds is, in fact, the arguments used for our subexponential and kernelization algorithms. Specifically, the cases where these arguments could not be applied revealed the ``hard instances'' of the problem. These hard instances are encoded using gadgets, and their combination (which yields  our hardness reductions) is quite technical on its own. 
 
We believe that the idea of {\em drawing based separators} is our main conceptual contribution. 

\subsection{Other Related Work}
%
\probHCr belongs to a broader category of graph-level drawing problems, which involve assigning vertices of a (di)graph to lines on a plane according to specific criteria. Historically, level drawings were some of the first styles explored systematically, following the introduction of the Sugiyama framework \cite{sugiyama1981methods}. These drawings are commonly used to visualize hierarchical data, and extensive research covers every aspect of the process, see e.g.,  \cite[Chapter 13]{tamassia2013handbook} and \cite{fulek2013hanani}. 

Besides heuristic and parameterized algorithms, the approximation of \probTwoCr  has been studied for several classes of bipartite graphs, as seen in, for example, \cite{Shahrokhi01}. However, obtaining non-trivial approximation algorithms with guaranteed performance on (general) bipartite graphs remains challenging.

Closely related to \probHCr, is the concept of \textsc{Level Planarity}. 
In level planarity, each vertex of the graph is assigned to a specific level (one of the parallel lines in the plane) based on predefined criteria, and edges (represented by curves) should not cross internally. A significant amount of work is devoted to efficient algorithms recognizing level planarity and some of its generalizations \cite{di1988hierarchies,junger1998level,bruckner2017partial,klemz2019ordered}.

Another close relative of \probHCr is the {\sc Crossing Number}  problem, which involves drawing a graph in a plane while minimizing the number of crossings. {\sc Crossing Number}, parameterized by $k$, was shown to be in  FPT by Grohe~\cite{grohe2004computing}. Intensive research has focused on approximation algorithms for this problem, leading to the work of   Chuzhoy and Tan~\cite{chuzhoy2022subpolynomial} who developed a randomized algorithm with approximation factor $2^{\OO(\log^{\frac{7}{8}}n\log\log n)}\Delta^{\OO(1)}$. For further information on {\sc Crossing Number}  and its variants, we recommend exploring the extensive literature starting with~\cite{schaefer2018crossing,schaefer2012graph,pach2000thirteen}.

One more problem related to  \probHCr  is called {\sc $h$-Layer Planarization}. Here, for a graph $G$ and a non-negative integer $k\in\mathbb{N}_0$, the task is to decide whether there exists a subset of at most $k$ edges that can be removed from $G$ so that it will admit an $h$-layered drawing with no crossings.
In particular, this problem and variants of it have received attention from the perspective of parameterized complexity~\cite{dujmovic2008parameterized,10.1007/s00453-005-1181-y,DBLP:journals/jgaa/Fernau05}.

\section{Technical Overview}\label{sec:overview}

%

\subsection{Overview of the Subexponential Algorithm for $h=2$}

We begin with the description of our subexponential fixed-parameter algorithm for two layers, because the one for three layers is based on all of its arguments as well as additional and much more technical ones. In fact, our algorithm for three layers, for some cases, uses the one for two layers directly as a subroutine. We suppose that the input graph is connected, else each component can be solved independently.

The basic approach that guides the design of our algorithm is that of divide-and-conquer. Specifically, we have a recursive algorithm. Then, at each recursive call, we would like to split the current instance of our problem into two (or more) smaller instances, so that the ``combination'' of their solutions  yields a solution to the current instance. Each of the smaller instances is solved by making one more recursive call. In particular, to obtain the desired subexponential dependency on the parameter in the time complexity, it is crucial to ensure that the parameter of each of the smaller instances is at most a linear fraction of the current parameter---that is, if the current parameter is $k$ and the new parameters are $k_1$ and $k_2$, then we must have that $\max(k_1,k_2)\leq \alpha k$ for some fixed constant $\alpha<1$. It is important to note that we will not produce just a single pair of smaller instances, but a large collection of such pairs, so that if we should find a solution to our current instance, then at least one of them will yield it. Since the number of these pairs will be upper-bounded by $2^{\OO(\sqrt{k}\log k)}$, the recurrence for the time complexity will resolve to the one that we claim.

\medskip
\noindent{\bf The Separator.} Having the above approach in mind, the key player in the proof is that of a {\em separator} for some (unknown)  solution  (that is, a $2$-layered drawing with at most $k$ crossings). For us, the definition of the separator will be {\em geometric}, relying on the drawing (or, equivalently, ordering of vertices). Ideally, we would have wanted the separator to be an edge of $G$ (where $G$ is the graph of the current instance) that satisfies two properties: {\em (i)} the number of edges that cross it at most  $c\sqrt{k}=\OO(\sqrt{k})$ for some constant $c$; {\em (ii)} the numbers of crossings to ``its left'' and to ``its right'' are at most $\alpha k$ for a fixed $\alpha<1$. Clearly, if all edges of $G$ are crossed more than $c\sqrt{k}$ times, then such  an ``elegant'' separator does not exist. However, in this case, we have only $\OO(\sqrt{k})$ many vertices in total, and hence can just solve the instance directly by using brute-force.
Unfortunately, even if this is not the case and, moreover, we have arbitrarily many edges that are not crossed at all, such an elegant separator might provably not exist---to see a simple situation where this is so, we consider precisely the case where all edges of $G$ are crossed more than $c\sqrt{k}$ times, and then we extend $G$ and the drawing by inserting an arbitrarily large drawing of a graph without any crossings to both the left and the right of the existing drawing (while we can keep the graph connected if so desired).

Thus, we work with a somewhat more complicated separator, consisting of two elements that are each either an edge or one of the two ``boundaries of the drawing'', \lsep\ and \rsep, so that the four following properties are satisfied (see Fig.~\ref{fig:2layerseparatorOverview}):\footnote{Figures that are relevant to the technical sections later will be repeated in the appropriate locations for convenience.} {\em (i)} \lsep\ lies to the left of \rsep\ (though they may share an endpoint);  {\em (ii)} the number of edges that cross any of them is upper-bounded by  $\OO(\sqrt{k})$; {\em (iii)} the numbers of crossings to the left of \lsep\ and to the right of \rsep\ are at most $\alpha k$ for some fixed $\alpha<1$ (here, we pick $\alpha=\frac{1}{2}$); (iv) the number of edges that are drawn between $\lsep$ and $\rsep$ is upper-bounded by  $\OO(\sqrt{k})$. 

\begin{figure}
	\centering	\includegraphics[page=3,scale=1.4]{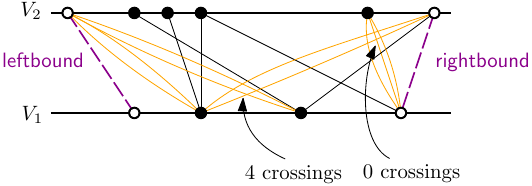}
	\caption{Illustration of a separator. The edges $\lsep$ and $\rsep$ are shown in \blue{blue}, and their endpoints are shown in \textcolor{BlueViolet}{purple}. It is possible that the endpoints on $V_1$ (or $V_2$) are the same. The edges between $\lsep$ and $\rsep$ are shown in \textcolor{Brown}{brown}, and the edges that cross $\lsep$ or $\rsep$ are shown in {\color{Olive}olive}. Vertices between the endpoints of $\lsep$ and $\rsep$ are shown in \textcolor{DarkGreen}{dark green} (denoted $\md_1$ for $V_1$) or \textcolor{LightCoral}{light pink} (denoted $\md_2$  for $V_2$).}
	\label{fig:2layerseparatorOverview}
\end{figure}

\medskip
\noindent{\bf Existence of the Separator.} First, having some (unknown) solution in mind, it is important to observe that among any set of $\sqrt{k}$ many edges, there will exist at least one that is crossed at most $\sqrt{k}$ many times. Hence, the proof that a separator as defined above {\em exists} follows from the following logic. First, we consider the leftmost edge (or the right boundary), denoted $\rsep$, that is crossed by at most $\OO(\sqrt{k})$ many edges and such that there are at least $\alpha k$ many crossings to its left (for some choice of $0<\alpha<1$). Then, neglecting edges that cross $\rsep$, we ``keep going'' to the left of $\rsep$, until we encounter---for the first time---another edge (or the left boundary), denoted $\lsep$, that is crossed at most $\OO(\sqrt{k})$ times.   As will not ``pass over'' more than $\OO(\sqrt{k})$ edges while seeking $\lsep$, we can show that this yields a separator as required.

\medskip
\noindent{\bf Computation of the Separation.} As we do not know a solution (but assume, for now, that one exists), the proof of the existence of a separator is non-algorithmic. Still, we can ``guess'' all of the required information. For this to be efficient, before the recursive process begins, we call a known kernelization algorithm by Kobayashi et al.~\cite{KobayashiMNT14}, so that the given instance can be supposed to have only polynomially in $k$ many vertices and edges. Clearly, we can then guess the separator (consisting of two edges) using just $\binom{m}{2}=k^{\OO(1)}$ many guesses, where $m$ is the number of edges in the (already kernelized) graph. The challenge is in guessing the separation (partition of the vertex set of the graph) that the separator induces into left, middle, and right parts. While in the two-layered case, that is not overly technical, the same will not hold for the three-layered case (discussed in the next overview). There, several new ideas and much more technical work are required on top of those for the two-layered case.

First, notice that the middle part of of the separation can be guessed easily, using only $k^{\OO(\sqrt{k})}$ many guesses, as it contains (by the definition of a separator) only $\OO(\sqrt{k})$ many edges and vertices. Additionally, we can similarly easily guess the edges that cross the two separator edges, and which of their endpoints belongs to which part of the partition. Here, we also guess the {\em ordering} of these endpoints in the (unknown) solution under consideration. This is necessary in order to count, already here, the crossings involving only the edges that cross the separator edges---they cannot be considered when inspecting each part of the partition independently. In turn, this means that the problem that will correspond to each of the parts will be more general than \probTwoCr\ as we will need the sought solutions to respect the guessed ordering. Another issue somewhat of this nature and concerning the crossing edges is that some of them still have to be encoded in the problem that will correspond to each of the parts since other edges of that part may or may not cross them, and we cannot know all information about this at this stage. We briefly remark that, to this end, we replace the crossings edges with some specific weighted (or, equivalently multi-)edges. 
For the simplicity of this overview, we will not continue to describe the way that these technical issues are handled, but proceed to the rest of the computation of the separation, which we view as more central for intuition.

Having guessed to where the endpoints of the crossing edges belong, we proceed to ``propagate'' this information using breadth-first search (BFS). Say, to be specific, that we have a vertex determined to belong to the left part. Then, for every edge incident to it that is not a crossing edge (w.r.t.~the separator edges), we know that the other endpoint must also belong to the left part (assuming that our guess is ``correct''). Afterwards, we can apply the same logic from these neighboring vertices to their own neighbours, and so on. Unfortunately, not all vertices of the graph (even though we suppose it to be connected) are marked left, middle, or right, by all the above guesses as well as the labelling propagation.

\begin{figure}
	\centering	\includegraphics[scale=1]{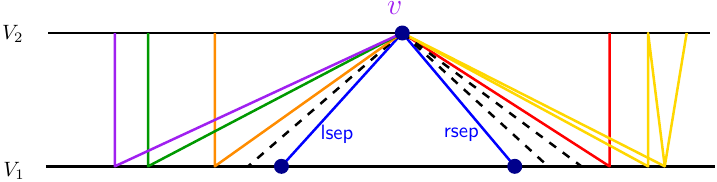}
	\caption{Illustration of the problematic components described in the overview. Here, the endpoint on $V_2$ of the two separator edges (colored blue) is shared. The edges of the three ``leaf components'' are denoted by dashed black lines. Each of the other five problematic components is denoted using a different color: four of them (colored purple, green, orange and red) have a single edge each, and the fifth (colored golden) has three edges---the edges connecting the component to the separator vertex on top are not part of the component, but are drawn using the same color for clarity.}	\label{fig:ProblematicCompsOverview}
\end{figure}

Specifically, consider the graph from which the endpoints of the separator edges are removed. Then, the unmarked vertices are precisely those that belong to the connected components whose only neighbours are the endpoints of the separator edges themselves, which do not cross the separator edges, and which do not belong to the middle part. In case none of the four endpoints of the two separator edges coincides, these unmarked vertices can be directly marked as left or right, as we already know which are all of the middle vertices: so, e.g., if an unmarked vertex belongs to a component having a neighbor that is an endpoint of the left edge of the separator, then we know that it should be marked left. However, if, say, the two endpoints of the two separator edges on $V_1$ coincide, then for a component whose only neighbor is that vertex in $V_1$, we do not know yet whether it should belong to the left part or to the right part (see Fig.~\ref{fig:ProblematicCompsOverview}).

Fortunately, we can show that the number of problematic components as described above, and which have at least one edge, is upper-bounded by $\OO(\sqrt{k})$. So, for these components, we can directly guess whether they go left or right. This leaves us only with the problematic components that consist just of a single vertex (which are, in the entire graph, simply pendants attached to one endpoint of a separator edge). Ideally, we could have just guessed how many these leaves (for each of at most four possible types) go left or right. However, due to some technicalities, we briefly remark that, here, our argument is slightly more technical as these leaves might be attached using weighted edges. For simplicity, we skip the description of this technicality. Overall, this completes the computation of a separation, which, in turn, completes the intuition guiding the design of our algorithm for two-layered drawings.

\subsection{Overview of the Subexponential Algorithm for $h=3$}

Similarly to the algorithm for $h=2$, we employ divide-and-conquer. Clearly, here, we will need to modify the definition of a separator to a more technical one. However, the main difficulty compared to the case of $h=2$ will be in the computation of the separation itself, where after the process of label propagation, much more complicated components will remain unmarked, and handling them will require several new ideas.

\medskip
\noindent{\bf The Separator.} With respect to an (unknown) solution, the separator consists of four elements: two elements between $V_1$ and $V_2$ that are each either a ``non-boundary'' edge or one of the two ``bottom boundaries of the drawing'' (e.g., $\leftbone$ or $\rightbtwo$), denoted by $\lsep_{12}$ and $\rsep_{12}$, and two elements between $V_2$ and $V_3$ that are each either an edge or one of the two ``top boundaries of the drawing'', denoted by $\lsep_{23}$ and $\rsep_{23}$. We will describe the case where we ``split the number of crossings'' between $V_1$ and $V_2$, but, symmetrically, we can use the definition for the case where we ``split the number of crossings'' between $V_2$ and $V_3$.  For simplicity, let us denote the number of crossings that lie between $V_1$ and $V_2$ by $k$, and suppose that the total number of crossings is at most $2k$. The four elements that compose a separator must satisfy the following properties  (see Fig.~\ref{fig:3layerseparatorOverview}):
\begin{enumerate}
	\item  $\lsep_{12}$ ($\lsep_{23}$) lies to the left of $\rsep_{12}$ ($\rsep_{23}$), though they may share an endpoint.
	\item The number of edges that cross any of the separator edges is upper-bounded by  $\OO(\sqrt{k})$.
	\item  The numbers of crossings to the left of $\lsep_{12}$ and to the right of $\rsep_{12}$ are at most $\alpha k$ for some fixed $\alpha<1$ (here, we pick $\alpha=\frac{1}{2}$).
	\item The number of edges  drawn between $\lsep_{12}$ and $\rsep_{12}$ (and which are, therefore, between $V_1$ and $V_2$) is upper-bounded by  $\OO(\sqrt{k})$.
	\item The number of vertices between the endpoints of $\lsep_{12}$ and $\lsep_{23}$ on $V_2$ and which have neighbors on $V_3$ is upper-bounded by $\OO(\sqrt{k})$. Symmetrically, the number of vertices between the endpoints of $\rsep_{12}$ and $\rsep_{23}$ on $V_2$ and which have neighbors on $V_3$ is upper-bounded by $\OO(\sqrt{k})$.
\end{enumerate}

We remark that the vertices whose number is upper-bounded in the last condition do not include all vertices between the respective endpoints, as some such vertices might have neighbors only on $V_1$.

\begin{figure}
	\centering	\includegraphics[page=3,scale=1.25]{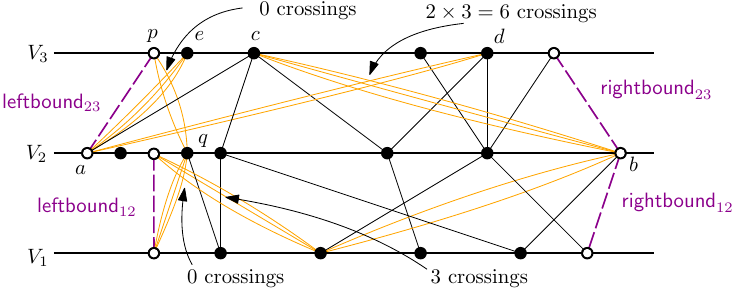}
	\caption{Illustration of a separator. The four edges of the separator are shown in \blue{blue}, and their endpoints are shown in \textcolor{BlueViolet}{purple}. It is possible that the some of the endpoints of the separator edges are the same. The edges between the left are right parts of the separator whose number is guaranteed to be upper-bounded (specifically, here, those between $V_1$ and $V_2$)  are shown in \textcolor{Brown}{brown}, and the edges that cross the separator edges are shown in {\color{Olive}olive}. Vertices on $V_2$ between the endpoints of the two right separator edges and which are incident to edges above them (these edges are not illustrated) are shown in \textcolor{DarkGreen}{dark green} (denoted \gapright), and vertices  on $V_2$ between the endpoints of the two left separator edges and which are incident to edges above them are shown in \textcolor{Orange}{orange} (denoted \gapleft).}
	\label{fig:3layerseparatorOverview}
\end{figure}

\medskip\noindent{\bf Existence of the Separator.} Recall (from the two-layered case) that, having some (unknown) solution in mind, among any set of $\sqrt{k}$ many edges, there will exist at least one that is crossed at most $\sqrt{k}$ many times. So, similarly to the two-layered case, to prove the existence of a separator, we start by considering the leftmost edge (or the right boundary) between $V_1$ and $V_2$, denoted $\rsep_{12}$, that is crossed by at most $\OO(\sqrt{k})$ many edges and such that there are at least $\alpha k$ many crossings to its left (for some choice of $0<\alpha<1$). Then, neglecting edges that cross $\rsep_{12}$, we ``keep going'' to the left of $\rsep_{12}$, until we encounter---for the first time---another edge (or the left boundary), denoted $\lsep_{12}$, that is crossed at most $\OO(\sqrt{k})$ times.  

To obtain the other two elements as required for the three-layered case, we proceed as follows. First, we consider the endpoint of $\lsep_{12}$ that belongs to $V_2$. Then, we start ``going left'' from that endpoint until we encounter a vertex incident to an edge between $V_2$ and $V_3$ that is crossed at most $\OO(\sqrt{k})$ times. That edge is precisely $\lsep_{23}$. In particular, while doing so, we will not ``pass over'' more than $\OO(\sqrt{k})$ many vertices that are neighbors to vertices on $V_3$. Second, symmetrically, we consider the endpoint of $\rsep_{12}$ that belongs to $V_2$. Then, we start ``going right'' from that endpoint until we encounter a vertex incident to an edge between $V_2$ and $V_3$ that is crossed at most $\OO(\sqrt{k})$ times. That edge is precisely $\rsep_{23}$.

\medskip\noindent{\bf Computation of the Separation: The Separator and the Propagated Vertices.} As in the two-layered case, the proof of the existence of a separator is non-algorithmic. Still, we can ``guess'' all of the required information. Again, for efficiency, as a preprocessing step preceding the divide-and-conquer approach, we need to apply a kernelization algorithm. Prior to our work, such an algorithm (for the three-layered case) was not known, and so we had to design it ourselves. This is a critical component of our algorithm.  Since it is a technical algorithm on its own right that is of independent interest, we describe it later in its own overview \Cref{sec:kern-overview}.

Now, having kernelized the instance, we can then directly guess the separator (consisting of four edges) using just $\binom{m}{4}=k^{\OO(1)}$ many guesses, where $m$ is the number of edges in the (already kernelized) graph. Similarly to the two-layered case, we guess all of the edges drawn between $\lsep_{12}$ and $\rsep_{12}$.  Additionally, similarly  to the two-layered case, we guess all the edges that cross the four separator edges, and which of their endpoints belongs to which part of the partition. Here, for the same purposes discussed for the two-layered case, we also guess the {\em ordering} of these endpoints in the (unknown) solution under consideration. This completes the description of our ``obvious'' set of guesses.

Here, unlike the two-layered case, the obvious set of guesses does not immediately yield the entire middle part, as the middle part contains edges between $\lsep_{23}$ and $\rsep_{23}$ as well. Thus, we further guess the vertices in \gapleft\ and \gapright, and the ordering between them (so to enforce that our solutions to the left and middle instances, as well as to our right and middle instances, will be consistent). From here, it is somewhat easy to see that by employing a propagation process similar to the one described for the two-layered case---particularly since we already know all edges in the bottom of the middle part---we can derive the entire middle part. To be more precise, we will not derive those vertices on $V_2$ between the endpoints of the two left separator edges (as well as those between the endpoints of the two right separator edges) that only have neighbors on $V_1$, but these are ``irrelevant'' for the middle part anyway, and can be ``inserted'' when recursing into the left and right parts. In particular, the internal ordering between the aforementioned ``derived'' vertices and the ``underived'' vertices of the middle part in either side is immaterial to the middle part. Since these considerations are  easy when compared to the substantially more complicated upcoming issues, we do not delve into further details here.

Still, before we proceed to analyze the left and right parts, one last important comment concerning the middle part is in place. The middle part might host a huge number of crossings on both its top and bottom, unlike the left and right parts, which can host only some $\alpha k$ crossings each on the bottom (by the definition of a separator). Thus, we cannot simply recursively solve the middle instance, as the parameter might not drop sufficiently to attain our desired time complexity overall. However, since the middle part has just $\OO(\sqrt{k})$ many vertices on $V_1$ as well as only $\OO(\sqrt{k})$ many edges between $V_1$ and $V_2$, we can guess some ordering concerning these $\OO(\sqrt{k})$ many elements so as to reduce the middle instance to the two-layered case. Then, the middle instance is solved using our algorithm for the two-layered case instead of direct recursion.

Next, let us proceed with the analysis of the left and right parts. By employing a propagation process similar to the two-layered case, we can conclude which vertices belong to the left or right parts with the exception of the following. Consider the graph from which we remove the endpoints of the separator edges (and the edges incident to them). Now, further consider the connected components that do not have edges crossing the separator edges as well as do not have vertices belonging the guessed vertices on \gapleft\ and \gapright\ or to the middle part. Then, the vertices of these components are precisely those that we have not yet classified. Still, by using arguments similar to the two-layered case (in particular, to handle components having vertices on $V_1$ or not having vertices on $V_3$), we can further classify some of these vertices, so that eventually we are left with a particular situation (or a situation symmetric to it), described in the next paragraph as ``the problematic situation''.

\medskip\noindent{\bf Computation of the Separation: Handling the Non-Propagated Vertices.} The ``problematic situation'' occurs when $\lsep_{12}$ and $\rsep_{12}$ share the endpoint on $V_1$, denoted by $v$. Then, the components whose vertices are not classified as left or right by any of the arguments considered above (or other arguments along the same lines that we omit from this overview) are the following.  Consider the graph from which we remove the endpoints of the separator edges. Now, further consider the connected components within it, such that the only neighbor from the endpoints of the separator edges that they have is $v$---they can be thought of as ``pendant components hanging from $v$'' (see, e.g., the red and blue components illustrated in Fig.~\ref{fig:3componentcrossingOverview}). Among these, the unclassified components further satisfy the following properties:
\begin{itemize}[noitemsep]
	\item They do not contain any edge crossing the separator edges.
	\item They neither belong to the middle part, nor have vertices on \gapleft\ or \gapright\ (that are adjacent to vertices on $V_3$).
	\item They do not contain vertices on $V_1$ (recall that $v$ cannot belong to the components themselves).
	\item They contain at least one vertex on $V_3$.
\end{itemize}

Now, we further guess the leftmost and rightmost  components among the problematic ones (see Fig.~\ref{fig:3starsOverview}), denoted respectively by $\llc$ and $\rlc$, that do not contain an edge crossed more than $\sqrt{k}$ many times. Having these components at hand, we further guess (using just $k^{\OO(\sqrt{k})}$ many guesses) which of the problematic components: {\em (i)} crosses at least one among $\llc$ or $\rlc$; {\em (ii)} lies to the left of $\llc$, or {\em (iii)} lies to the right of $\rlc$. We classify these components as left or right, and henceforth, exclude them from the set of problematic components under consideration. Furthermore, from now on, when we consider any element (e.g., vertex or edge), we suppose it to lie between $\llc$ or $\rlc$; intuitively, think about the area between $\llc$ or $\rlc$ as our ``workspace'' until the end of this overview.

\begin{figure}
	\centering
	\includegraphics[scale=1,page=6]{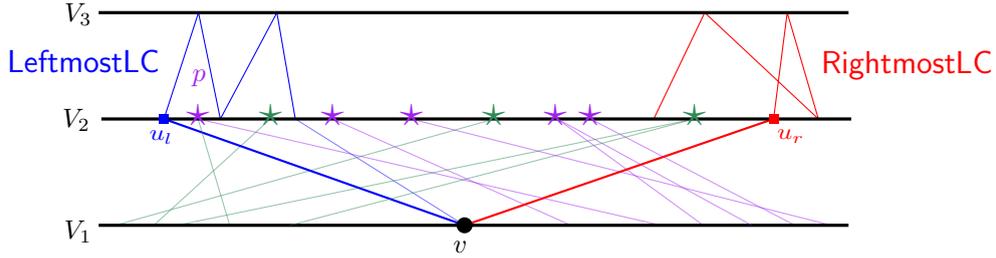}
	\caption{Illustration of components $\llc, \rlc$ and star vertices. Left-star vertices (shown in green) have at least one incident edge crossing $vu_l$, and right-star (purple) have at least one incident edge crossing $vu_r$. Note that a vertex can be both left- and right-star vertex, e.g., $p$.} \label{fig:3starsOverview}
\end{figure}

To handle the (remaining) problematic components, we define a notion of so-called {\em star vertices} (see Fig.~\ref{fig:3starsOverview}). Roughly speaking, a star-vertex is a vertex on $V_2$ that is incident to an edge crossing a bottom edge of $\llc$ (then it is a left-star), or a bottom edge of $\rlc$ (then it is a right-star), or both. A crucial observation in this context is that all bottom edges (of the problematic components) whose endpoints on $V_2$ lie between the same star vertices are crossed the same number of times, and, in fact, by exactly the same edges. Further, we note that there can exist at most $\OO(\sqrt{k})$ star-vertices.

We first handle the problematic components  that ``surround'' at least one star-vertex. We are able to bound their number by $\OO(k^{\frac{2}{3}})$ based on two observations. First, all problematic components surrounding the same star-vertex cross each other (see Fig.~\ref{fig:3componentcrossingOverview}). Second, as we ``go'' further left starting from the rightmost left-star and towards the leftmost left-star (and, symmetrically, w.r.t.~right-stars), the crossings of the components surrounding them (or even just lying in-between them) with the edges incident with these star-vertices that cross the bottom left boundary ``accumulate''. Having bounded the number of these components from above, we simply guess whether they should be marked left or right.

\begin{figure}
	\centering
	\includegraphics[scale=1,page=7]{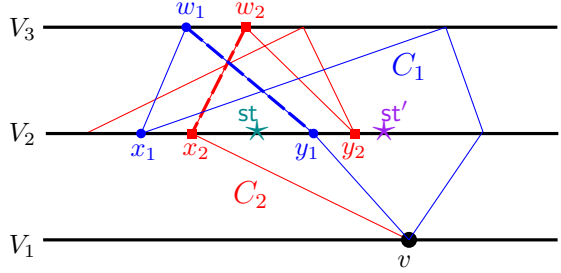}
	\caption{Problematic components $C_1$ (blue) and $C_2$ (red) surround the same star vertex $\str$ ($C_1$ happens to surround $\str'$ as well). In particular, their edges $y_1w_1$ and $x_2w_2$ cross.} \label{fig:3componentcrossingOverview}
\end{figure}

Finally, we are left with the most difficult problematic components: each of them lies entirely strictly in-between some two consecutive star-vertices (where different such components can lie between different consecutive pairs of star-vertices). We classify these components into $\Oh(k^{2/3})$ ``types'', where each type, essentially, encodes {\em (i)} the number of edges between $v$ and the component, and {\em (ii)} the number of edges in the component. We briefly remark that we do not consider all possible such types---for example, the components with ``large'' number of either of edges in {\em (i)} or {\em (ii)} above are bundled together into a single type (the number of such components can be directly upper-bounded). For the types that we do consider, we are able to prove that the (unknown) solution itself can be altered so that components of the same type can be redrawn in any order one next to another between the same two star-vertices. 

Unfortunately, when redrawing the problematic components as described above, a number of technicalities arise. Most importantly, we need to make sure that our separator remains a \emph{balanced} separator for the new drawing---in particular, in this context, we need to ensure that both the left and right parts still each have only some $\alpha k$ many crossings for a fixed $\alpha<1$. To handle this, while performing the ``redrawing'' mentioned in the previous paragraph, for each type, we need to draw a certain subset of components on the left side of the separator, and the remaining number of components on the right side. Note that, it is infeasible to guess all possible left-right partition for each type. Instead, we use knapsack-based arguments---based on the two numbers encoding the type, along with the number of ``optimal number of crossings'' to draw an individual component on its own---in order to limit the number of guesses to be subexponential. In order to keep the overview readable, we will not delve into further details in this context. Overall, this completes the entire classification process and thereby the description of our algorithm for three layers.



Lastly, we remind that for a five (or more)  layers, we cannot obtain an algorithm with running time $2^{\OO(k^{1-\epsilon})}\cdot n^{\OO(1)}$ for any fixed $\epsilon>0$ under the ETH. For intuition on why the approach described in this overview does not extend to yield this (presumably impossible) result, let us now address the two main components of this overview that are specific to three layers, and do not extend to four (or more) layers. First and foremost is the necessity of having a kernel to make all our guesses efficient---recall that even for $h=4$, we prove that a polynomial kernel is unlikely to exist. Second is our arguments for deducing the separation itself---in the case of four layers, much more complicated unclassified (as left or right) components arise, and it is unclear how to extend our arguments (while maintaining efficiency) to encompass them.

\subsection{Kernelization Overview}\label{sec:kern-overview}
The main idea of the kernelization algorithm in~\Cref{{thm:kernel}} is that if $(G,V_1,V_2,V_3,k)$ is a yes-instance of \probThreeCr and $G$ is sufficiently big with respect to $k$ then $G$ should have large pieces that have to be drawn without crossings. Moreover, due to the rigidity of drawings without crossing, the drawing of these pieces is essentially unique (up to the reversals of orders and reordering of twins). Thus, our algorithm applies reduction rules that (i)~find the parts that are (almost) uniquely embeddable without crossings and (ii)~reduce their size.

After the initial preprocessing allowing us to simplify the input instance, we apply crucial reduction rules to deal with pieces of the graph that admit a  drawing without crossings that are connected to the remaining graph via (not necessarily minimal) separators of size 4, 5, and 6. To explain these rules, we consider in more detail the case when we can cut off a subgraph by a separator of size 4.

 \begin{figure}[ht]
\centering
\scalebox{0.7}{
\input{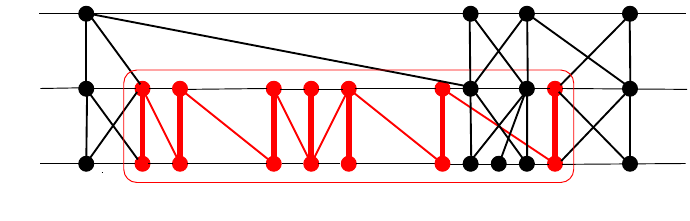_t}}
\caption{Reducing $H$ attached by 4 vertices. The graph $H$ is shown in red and the part outside $H$ is shown in black; the edges of $M$ are highlighted by thick lines.}
\label{fig:overview-four}
\end{figure} 

Suppose that there is a connected induced subgraph $H$ of $G$ (see~\Cref{fig:overview-four}) such that (i)~$V(H)\subseteq V_1\cup V_2$, (ii)~$H$ has a matching $M$ of size $4k+3$, and 
(iii)~$H$ admits a $2$-layer drawing without crossings respecting $(V_1\cap V(H),V_2\cap V(H))$ such that the endpoints of two distinct edges of $M$ separate $H$ from the remaining graph. Then 
because of the big matching, the ordering of the vertices in the middle part of the $2$-layer drawing of $H$ is rigid in the sense  that for any $3$-layer drawing of $G$ with at most $k$ crossings such that the number of crossing is minimum, the vertices of the middle part of $H$ are ordered in the same way as in the $2$-layer drawing without crossings. Still, we may have crossings between the edges of the middle part of $H$ and some other edges of the part of $G$ outside of $H$. However, because $H$ admits a $2$-layer drawing without crossings, $H$ is a caterpillar. Then it could be assumed that all such crossings are confined to a specific single edge $uv$ of $H$ as shown in~\Cref{fig:overview-four}. This allows us to modify the part of $H$ between the three middle edges of $M$ and reduce its size. Notice that by symmetry, we can do the same for $H$ with $V(H)\subseteq V_2\cup V_3$.

 \begin{figure}[ht]
\centering
\scalebox{0.7}{
\input{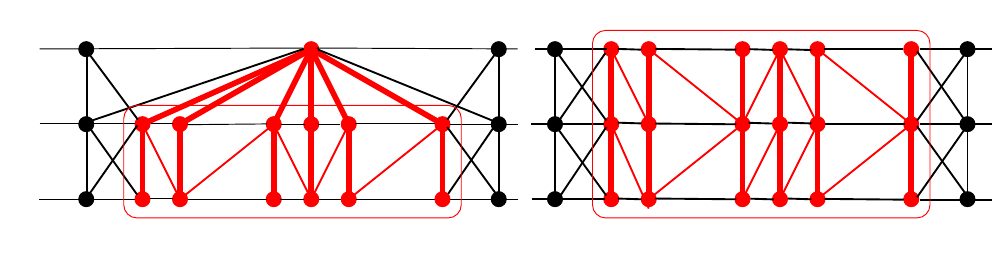_t}}
\caption{Reducing $H$ attached by 5 (a) and 6 (b) vertices. The graph $H$ is shown in red and the part outside $H$ is shown in black; the paths of length $2$ in $H$ are highlighted by thick lines.}
\label{fig:overview-five}
\end{figure} 

For separators of sizes 5 and 6, the idea is the same. Suppose that there is a vertex $u\in V_3$ and a connected induced subgraph $H$ of $G$ (see~\Cref{fig:overview-five}(a)) such that  (i) $V(H)\subseteq V_1\cup V_2$, (ii)~$H$ has a matching $M$ of size $4k+3$ such that endpoints of the edges of $M$ in $V_2$ are adjacent to $u$, and 
(iii)~$H$ admits a $2$-layer drawing without crossings respecting $(V_1\cap V(H),V_2\cap V(H))$ such that the endpoints of two distinct edges of $M$ together with $u$ separate $H$ from the remaining graph. Then the paths formed by $u$ and $M$ together with connectivity of $H$ ensure that the middle part of $H$ should be drawn without crossings in any $3$-layer drawing of $G$. This makes it possible to reduce the middle part. The same arguments work for the symmetric case when $u\in V_1$. For the case of separators of size 6, we are looking for connected induced subgraphs $H$ of $G$ (see~\Cref{fig:overview-five}(b)) such that  (i)~$H$ has a family $\mathcal{P}$ of  $4k+3$ vertex disjoint paths of length two between $V_1$ and $V_3$ and 
(ii)~$H$ admits a $3$-layer drawing  such that the vertices of  two distinct paths  of $\mathcal{P}$  separate $H$ from the remaining graph, and repeat the trick with the middle part.

If after the exhaustive application of the reduction rules the graph remains sufficiently big, we can correctly report that $(G,V_1,V_2,V_3,k)$ is a no-instance of \probThreeCr. Otherwise, we have that the number of vertices and edges of the obtained graph is $\Oh(k^8)$ and the kernelization algorithm returns the graph.

With some extra work, we prove that the kernelization algorithm could be implemented to run $k^{\Oh(1)}\cdot n$ time.  This way, we ensure that the running time of the subexponential algorithm in  
\Cref{thm:subexpTHREE} is linear in the number of vertices. Finally, we remark that our reduction rules are robust in the sense that, given a solution for the instance obtained by the kernelization algorithm, we can lift it to the original instance.

\subsection{Lower Bounds}\label{sec:lb-overview}
We show that the complexity \probHCr is drastically different for $h\leq 3$ and $h>3$. To see the reason, return to the reduction rule of the kernelization algorithm where we consider $H$ attached via 4 vertices. We observed that  edges of the middle part of $H$ may cross some edges that are not edges of $H$ as shown in~\Cref{fig:overview-four}. Still, we can confine all such crossings to a unique edge of $H$. However, our arguments are particular to the case when $H$ admits a 2-layer drawing without crossings, and could not be scaled for three or more layers.   We use this in our reductions showing lower bounds. In particular, we construct two pairs of complementary gadgets $Z$ and $\widehat{Z}$, and $U$ and $\widehat{U}$ depicted in \Cref{fig:Z} and \Cref{fig:U}, respectively. The key property of these gadgets is that if they are forced to cross (and for this, we need additional layers) then they should cross in a very specific way shown in~\Cref{fig:UZ} to minimize the number of crossings.  Using these gadgets, we are able to encode the symbols of an alphabet and reduce the \probDF problem.

Given a string $s$ over an alphabet $\Sigma$, a \emph{factor} is a nontrivial substring of $s$ whose first and last symbols are the same. The task of  \probDF is to decide whether a string $s$ over an alphabet $\Sigma$ with  $k$ symbols has  $k$ disjoint factors with distinct first (and last) symbols. Bodlaender, Thomass{\'{e}}, and Yeo~\cite{BodlaenderTY11} proved that \probDF parameterized by $k$ does not admit a polynomial kernel unless  $\classNP\subseteq\classCoNP/{\rm poly}$. We show our kernelization lower bound in~\Cref{thm:nokern} by providing a polynomial parameter transformation from \probDF.  \Cref{thm:ETHlb} is proved in similar way. The difference is that in our reduction for four layers, the number of crossings is 
$\Oh(k^2\log k)$ and this is not good enough to exclude subexponential algorithms. Thus, we introduce an additional fifth layer and modify the reduction to decrease the number of crossings to $\Oh(k\log k)$. Because the  \classNP-hardness proof for \probDF~\cite{BodlaenderTY11} implies that the existence of an algorithm running in $2^{o(k)}|s|^{\Oh(1)}$ time would violate ETH, we obtain that for any $h\geq 5$, \probHCr cannot be solved by an algorithm running in $2^{o(k/\log k)}\cdot n^{\Oh(1)}$ time unless ETH fails.
	
	\section{Preliminaries and Problem Statement}\label{sec:prelim} 
	This section introduces the basic notation used throughout the paper and provides some auxiliary results.
	
	\paragraph{Graphs.} We use the standard graph-theoretic terminology compatible with the textbook of Diestel~\cite{Diestel12}. 
	We consider only finite undirected graphs. For a graph $G$,  $V(G)$ and $E(G)$ denote its vertex and edge sets. Throughout the paper, we use $n$ to denote the number of vertices if it does not create confusion.  
	For a graph $G$ and a subset $X\subseteq V(G)$ of vertices, we write $G[X]$ to denote the subgraph of $G$ induced by $X$. 
	We denote by  $G-X$ the graph obtained from $G$ by deleting every vertex of $X$ (together with incident edges); we write $G-v$ instead of $G-\{v\}$ for a single vertex set.  
	For $v\in V(G)$, we use $N_G(v)$ to denote the \emph{(open) neighborhood} of $v$, that is, the set of vertices of $G$ that are adjacent to  $v$; $N_G[v]=N_G(v)\cup\{v\}$ is the \emph{closed neighborhood} of $v$. We use $d_G(v)=|N_G(v)|$ to denote the \emph{degree} of $v$. We say that $v$ is a \emph{pendent} vertex if $d_G(v)=1$. 
	For $X\subseteq V(G)$, $N_G(X)=\big(\bigcup_{v\in X}N_G(v)\big)\setminus X$.
	A \emph{separation} of $G$ is a pair of vertex subsets $(A,B)$ such that $G$ has no edges with endpoints in both $A\setminus B$ and $B\setminus A$; $A\cap B$ is a \emph{separator}.
	A vertex $v$ is a \emph{cut vertex} if $G-v$ has more connected components than $G$. We use $P=v_0v_1\cdots  v_{\ell}$ to denote the path with the vertices $\{v_0,\dots,v_\ell\}$ and the edges $v_{i-1}v_i$ for $i\in\{1,\ldots,\ell\}$; $v_0$ and $v_\ell$ are \emph{end-vertices} and we say that $P$ is an $(v_0,v_\ell)$-path.
	
	\paragraph{\probHCr.}
	Let $G$ be an $h$-partite graph with an $h$-partition $(V_1,\ldots,V_h)$ of the vertex set for $h\geq 2$ with the property that for any edge $e\in E(G)$, there is $i\in \{2,\ldots,h\}$ such that $e$ has one endpoint in $V_{i-1}$ and the other in $V_i$. Throughout the paper, whenever we consider an $h$-partite graph with an $h$-partition $(V_1,\ldots,V_h)$, we assume the property that the edges have their endpoints in consecutive sets.
	An \emph{$h$-layer drawing of $G$ respecting $(V_1,\ldots,V_h)$}  is an embedding of $G$ in the plane such that the vertices of $G$ are drawn on $h$ consecutive parallel lines $L_1,\ldots,L_h$,  with the vertices of $V_i$ drawn on $L_i$ for each $i\in\{1,\ldots,h\}$,  and the edges are drawn as straight line segments.   It can be assumed that $L_i$ is defined by the equation $y=i$ in the Euclidean plane for each $i\in\{1,\ldots,h\}$.  Formally, an $h$-layer drawing is an $h$-tuple $\sigma=(\sigma_1,\ldots,\sigma_h)$ of linear orders of $V_1,\ldots,V_h$, respectively. Given such an $h$-tuple, the vertices of each $V_i$ are drawn on $L_i$ according to $\sigma_i$ with strictly increasing $x$-coordinates. 
	Two edges $u_1v_1$ and $u_2v_2$ with $u_1,u_2\in V_{i-1}$ and $v_1,v_2\in V_i$ for some $i\in\{2,\ldots,h\}$ are \emph{crossing} if the segments representing these edges on the plane are intersecting or, equivalently if either $\sigma_{i-1}(u_1)<\sigma_{i-1}(u_2)$ and $\sigma_{i}(v_1)>\sigma_{i}(v_2)$ or, symmetrically,  $\sigma_{i-1}(u_1)>\sigma_{i-1}(u_2)$ and $\sigma_{i}(v_1)<\sigma_{i}(v_2)$. Then, the number of crossings is the number of crossing pairs of edges. We consider the following problem.
	
	\defproblema{\probHCr}{An $h$-partite graph $G$ with an $h$-partition $(V_1,\ldots,V_h)$ for $h\geq 2$ and a positive integer $k$.}{Decide whether there is an $h$-layer drawing 
		$\sigma=(\sigma_1,\ldots,\sigma_h)$ of $G$ respecting $(V_1,\ldots,V_h)$ with at most $k$ crossings.}

	\paragraph{Parameterized Complexity.} We refer to the books of Cygan et al.~\cite{CyganFKLMPPS15} and Fomin et al.~\cite{fomin2019kernelization}
	for an introduction to the area. We remind that a \emph{parameterized problem} is a language $L\subseteq\Sigma^*\times\mathbb{N}$  where $\Sigma^*$ is a set of strings over a finite alphabet $\Sigma$. An input of a parameterized problem is a pair $(x,k)$ where $x$ is a string over $\Sigma$ and $k\in \mathbb{N}$ is a \emph{parameter}. 
	A parameterized problem is \emph{fixed-parameter tractable} (or \classFPT) if it can be solved in time $f(k)\cdot |x|^{\mathcal{O}(1)}$ for some computable function~$f$.  
	The complexity class \classFPT contains all fixed-parameter tractable parameterized problems.
	A \emph{kernelization algorithm} or \emph{kernel} for a parameterized problem $L$ is a polynomial-time algorithm that takes as its input an instance $(x,k)$ of $L$ and returns an instance $(x',k')$ of the same problem such that (i) $(x,k)\in L$ if and only if $(x',k')\in L$ and (ii) $|x'|+k'\leq f(k)$ for some computable function $f\colon \mathbb{N}\rightarrow \mathbb{N}$. The function $f$ is the \emph{size} of the kernel; a kernel is \emph{polynomial} if $f$ is a polynomial. It is common to wright down a kernelization algorithm as a series of \emph{reduction rules}, that is, polynomial-time algorithms that take as the input an instance of a parameterized problem and output a ``smaller'' instance of the problem. A rule is \emph{safe} if it outputs an equivalent instance.

	While every decidable parameterized problem is \classFPT if and only if the problem admits a kernel, all \classFPT problems are unlikely to have polynomial kernels.  In particular, to rule out a polynomial kernel, one can use a \emph{polynomial parameter transformation}. Given two parameterized problems $L$ and $L'$, a polynomial parameter transformation is a polynomial-time algorithm that transforms an instance $(x,k)$ of $L$ into an instance $(x',k')$ of $L'$ such that  (i) $(x,k)\in L$ if and only if $(x',k')\in L'$ and (ii) $k'\leq p(k)$ for a polynomial $p$. Then if $L$ does not admit a polynomial kernel (up to some complexity assumptions), the same holds for $L'$.
	
	The \emph{Exponential Time Hypothesis} (ETH) of Impagliazzo, Paturi, and Zane~\cite{ImpagliazzoP99,ImpagliazzoPZ01} is an important tool for obtaining fine-grained complexity lower bounds for parameterized problems. We remind that ETH is the assumption that $\delta_k>0$ for $k\geq 3$, where $\delta_k$ is the infimum of the values  $\delta$ such that
	$k$\textsc{-SAT} can be solved in $\Oh(2^{\delta n})$ time on formulas with $n$ variables.  In particular, this means that $3$-\textsc{SAT} cannot be solved in $2^{o(n)}$-time. Moreover, by the Sparsification Lemma~\cite{ImpagliazzoPZ01},  $3$-\textsc{SAT} cannot be solved in $2^{o(m)}\cdot (n+m)^{\Oh(1)}$ time on formulas with $n$ variables and 
	$m$ clauses.  
	
	\medskip
	
	%
	%
	
	For $h=2$, testing whether a graph could be drawn without crossing is accomplished by using the following observation, instead of using the heavy machinery of the linear-time FPT algorithm of \cite{dujmovic2008parameterized}. Recall that a \emph{caterpillar} is a tree such that every vertex is adjacent to a vertex of a \emph{central} path (or a \emph{backbone}).
	
	\begin{observation}[\cite{EadesMCW85}]\label{obs:caterpillar}
		A bipartite graph $G$ with a bipartition $(V_1,V_2)$ of the vertex set admits a $2$-layer drawing respecting $(V_1,V_2)$ without any crossings if and only if each connected component of $G$ is a caterpillar.  Moreover, a $2$-layer drawing of a caterpillar is unique in that it reversals the orders and permutations of pendent vertices in the orders, and the vertices of a backbone are ordered in the path order. 
	\end{observation}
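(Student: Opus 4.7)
The plan is to prove both directions of the characterization and then establish uniqueness, all resting on a single crossing lemma.

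For sufficiency, I would assume each component is a caterpillar and draw each component independently. Given a caterpillar with backbone $w_1 w_2 \cdots w_m$, I place the backbone vertices on their respective layers in the natural path order (so $w_1, w_3, w_5, \ldots$ appear left-to-right on their layer and $w_2, w_4, \ldots$ appear left-to-right on the opposite layer), and for each $w_i$ I insert the pendants attached to $w_i$ on the opposite layer in the open interval strictly between the positions of $w_i$'s two backbone neighbors (with endpoint cases placed outside the backbone range). A direct inspection confirms no crossings arise, since any two non-adjacent edges of the caterpillar either lie in disjoint ranges determined by the backbone order or consist of a backbone edge $w_i w_{i+1}$ and a pendant edge at $w_j$ with $j \notin \{i, i+1\}$, which by construction do not cross.

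For necessity, I would work from the following crossing lemma, which is the technical core of the whole argument: if $v \in V_1$ has neighbors $u_i, u_k \in V_2$ at positions $u_i < u_k$ in the $V_2$-order, then any vertex $u_j$ with $u_i < u_j < u_k$ has no neighbor other than $v$, since an edge $v' u_j$ with $v' \neq v$ would cross $v u_i$ (when $v' < v$) or $v u_k$ (when $v' > v$). I use the lemma in two places. To rule out cycles, I would take a cycle $v_1 u_1 v_2 u_2 \cdots v_s u_s v_1$ with $s \geq 2$ and apply the lemma to each $u_i$: since no other cycle $V_1$-vertex (being non-pendant) can lie strictly between $v_i$ and $v_{i+1}$ in the $V_1$-order, the pairs $v_i, v_{i+1}$ must be consecutive among the cycle $V_1$-vertices in this linear order. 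For $s \geq 3$, having $s$ distinct elements all cyclically consecutive in a linear order is impossible, and for $s = 2$ a direct check of $C_4$ shows a crossing in every ordering, so each component is a tree. To establish the caterpillar structure, if $v$ had non-pendant neighbors $u_{j_1} < u_{j_2} < u_{j_3}$, the middle vertex $u_{j_2}$ would possess a second neighbor $v'$, and the lemma applied to $(u_{j_1}, u_{j_2})$ would force $v' > v$ while applied to $(u_{j_2}, u_{j_3})$ it would force $v' < v$, a contradiction; hence every vertex has at most two non-pendant neighbors. Removing all pendants from the component therefore yields a subtree of maximum degree two, i.e., a path, so the component is a caterpillar.

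Uniqueness follows from the same structural lemma: in any crossing-free drawing of a caterpillar the backbone vertices on each layer must appear in the path order, with the only freedom being simultaneous reversal of both orderings (mirror reflection of the drawing); and pendants attached to a common backbone vertex $w_i$ must lie in a single contiguous interval on the opposite layer determined by $w_i$'s backbone neighbors and are freely permutable within that interval. The main subtlety I expect is the bookkeeping around cycle-vertex pendants in the cycle-ruling-out step: such pendants can indeed lie between cycle $V_1$-vertices in the drawing, but they are not themselves cycle vertices, so the consecutiveness conclusion of the lemma remains strong enough for the cyclic-in-linear contradiction to go through.
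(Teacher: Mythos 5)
The paper does not prove this observation at all; it is cited verbatim from \cite{EadesMCW85} as a known classical fact, so there is no in-paper proof to compare against. Your blind proof is, as far as I can check, a correct and essentially complete proof of the cited result, built around the standard ``crossing lemma'': for a crossing-free $2$-layer drawing, the neighbours of any vertex occupy a contiguous interval on the opposite layer, and any interior vertex of that interval has no neighbour besides the given one. That is indeed the engine behind the usual proofs of Eades--McKay--Wormald, so your route agrees with the textbook argument rather than with anything in this paper.

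Two small remarks on your write-up. First, you state the lemma only for $v \in V_1$, but you invoke it for $u_i \in V_2$ in the cycle-elimination step; you should say explicitly that the symmetric statement holds by exchanging the roles of $V_1$ and $V_2$. Second, your degree argument is slightly roundabout: having assumed $u_{j_1} < u_{j_2} < u_{j_3}$ are three non-pendant neighbours of $v$, you re-derive from the raw crossing condition that a second neighbour $v'$ of $u_{j_2}$ is forced to be simultaneously to the right and to the left of $v$. It is cleaner to simply apply your lemma once with $u_i = u_{j_1}$, $u_k = u_{j_3}$, $u_j = u_{j_2}$: the lemma then says $u_{j_2}$ has no neighbour other than $v$, directly contradicting that $u_{j_2}$ is non-pendant. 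Neither point is a gap in the argument, only in its presentation; the sufficiency construction, the cycle-exclusion via cyclic-versus-linear consecutiveness (with the $C_4$ case handled separately), the max-degree-two bound on the pruned tree, and the uniqueness-up-to-reversal-and-pendant-permutation claim are all sound.
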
 
	
	Also, to verify the existence of an $h$-layered drawing without crossings, we can use the algorithm of J{\"{u}}nger,  Leipert, and Mutzel~\cite{JungerLM98}. We are using this result in the following special form.
	
	\begin{proposition}[\cite{JungerLM98}]\label{prop:linear-time} 
		It can be decided in $\Oh(n)$ time for a connected $h$-partite graph $G$ with an $h $-partition $(V_1,\ldots,V_h)$ and $h$ pairs of distinct vertices $s_i,t_i\in V_i$ for $i\in\{1,\ldots,h\}$, whether $G$ admits an $h$-layered drawing $\sigma=(\sigma_1,\ldots,\sigma_h)$ without crossings such that for each $i\in\{1,\ldots,h\}$, $\sigma_i(s_i)\leq \sigma_i(v)\leq \sigma_i(t_i)$ for all $v\in V_i$. Furthermore, if such a drawing exists, it can be found in linear time. 
	\end{proposition}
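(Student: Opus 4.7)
The plan is to extend the PQ-tree based level planarity testing algorithm of J{\"{u}}nger, Leipert, and Mutzel so that it respects the imposed boundary conditions. As an initial preprocessing step, since any $h$-partite graph admitting an $h$-layered drawing without crossings is planar, we reject any input with $|E(G)| > 3|V(G)| - 6$; henceforth assume $|E(G)| = \Oh(n)$, so the entire input has linear size and even reading it takes $\Oh(n)$ time.

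The main algorithm sweeps through the layers $V_1, V_2, \ldots, V_h$ in order, maintaining at step $i$ a PQ-tree $T_i$ whose admissible frontiers are exactly the linear orders $\sigma_i$ of $V_i$ that extend to a planar drawing of the subgraph induced by $V_1 \cup \cdots \cup V_i$ respecting the boundary conditions on the layers processed so far. The transition from $T_i$ to $T_{i+1}$ exploits the classical fact that planarity between consecutive layers $V_i$ and $V_{i+1}$ is equivalent to requiring, for each $v \in V_{i+1}$, that its lower neighborhood $N_G(v) \cap V_i$ forms a contiguous block in $\sigma_i$, which is a consecutive-ones constraint that PQ-tree reductions handle in amortized constant time per involved leaf. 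To impose $\sigma_i(s_i) \le \sigma_i(v) \le \sigma_i(t_i)$ for every $v \in V_i$, I would initialize each $T_i$ with a Q-node root whose extreme children are $s_i$ and $t_i$, forcing every admissible frontier to begin with $s_i$ and end with $t_i$; the standard PQ-tree template transformations preserve this anchoring during subsequent reductions. Summed across all $h$ layers, the total work is $\Oh(|V(G)| + |E(G)|) = \Oh(n)$ by the standard amortized analysis.

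The main obstacle I anticipate is the careful bookkeeping around these anchored extremes: one must verify that no consecutive-ones reduction silently displaces $s_i$ or $t_i$ from the boundary of $T_i$, and that when $s_i$ or $t_i$ themselves lie in the lower neighborhood of some $v \in V_{i+1}$, the forced-consecutive constraint is compatible with the anchoring---otherwise the instance is correctly a \emph{no}-instance and the algorithm should detect this and halt rather than silently reorder. Once these anchor-interaction cases are handled via a minor augmentation of the standard PQ-tree templates (or, alternatively, by attaching a small boundary gadget to $G$ prior to invoking the unmodified algorithm), correctness and the linear running time follow from the original analysis of J{\"{u}}nger, Leipert, and Mutzel; when a valid drawing exists it is recovered by reading off any admissible frontier of $T_h$ and back-propagating the corresponding choices to the earlier PQ-trees.
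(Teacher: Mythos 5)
Your primary plan---reimplementing the Jünger--Leipert--Mutzel level-planarity algorithm with modified PQ-tree templates to anchor $s_i$ and $t_i$---is a genuinely different route from the paper, and it has a real gap. The description ``maintaining at step $i$ a PQ-tree $T_i$ whose admissible frontiers are exactly the linear orders $\sigma_i$ of $V_i$'' misrepresents what the algorithm actually does: its PQ-tree at step $i$ encodes admissible orderings of the \emph{active} vertices of the processed layers (those with neighbors in later layers), after merges between components and contractions of finished vertices, not orderings of $V_i$ itself. Because $s_i$ or $t_i$ may become inactive, be absorbed into a larger subtree during a component merge, or be removed as a leaf once all their forward edges are processed, ``initializing $T_i$ with a Q-node whose extreme children are $s_i$ and $t_i$'' is not an operation the algorithm ever performs, and there is no argument here that the anchoring survives the templates and the vertex-addition/merge steps of the actual [JungerLM98] procedure. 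Your closing hand-wave---``a minor augmentation of the standard PQ-tree templates''---is exactly the unverified step on which the whole argument rests; the algorithm in question is intricate enough that this cannot be waved past, and the reversal-closure of PQ-trees would additionally need to be reconciled with the (correctly symmetric, but unstated) fact that $s_i$ and $t_i$ can be swapped.

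Your briefly mentioned alternative---``attaching a small boundary gadget to $G$ prior to invoking the unmodified algorithm''---is precisely what the paper does, and it is the right choice because it treats [JungerLM98] as a black box. Concretely, the paper first rejects inputs where for some $i$ both $s_{i-1}$ has a neighbor in $V_i$ other than $s_i$ and $s_i$ has a neighbor in $V_{i-1}$ other than $s_{i-1}$ (and symmetrically for $t$), since such a graph clearly has no admissible drawing with both vertices pinned to the left end. It then builds $G'$ by (a) adding the edges $s_{i-1}s_i$ and $t_{i-1}t_i$, (b) adding two boundary paths $s_1'\cdots s_h'$ and $t_1'\cdots t_h'$, and (c) attaching each $s_i'$, $t_i'$ to $s_i$ or $s_{i-1}$ (resp.\ $t_i$ or $t_{i-1}$) in a way that depends on whether $s_{i-1}$ has extra $V_i$-neighbors. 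The paper then proves that $G$ has the constrained crossing-free drawing if and only if $G'$ has an unconstrained one, and reads the drawing of $G$ off from that of $G'$. If you want to turn your alternative into a proof, this construction and its two-directional correctness argument are the content you are missing; the first preprocessing rejection step in particular is easy to overlook and is needed to make the gadget encoding sound.

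The preprocessing bound $|E(G)|>3|V(G)|-6$ you propose is fine (in fact $2|V(G)|-4$ suffices since $G$ is bipartite by layer parity), but it is a minor point.
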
 
	
	\begin{proof}
		By the results of~\cite{JungerLM98}, it can be decided in $\Oh(n)$ time whether $G$ admits an $h$-layered drawing $\sigma=(\sigma_1,\ldots,\sigma_h)$ respecting $(V_1,\ldots,V_h)$ that has no crossings.
		It remains to show that we can check the existence of a drawing with the additional constraints on the first and last vertices in each layer.
		
		Consider a connected $h$-partite graph $G$ with an $h$-partition $(V_1,\ldots,V_h)$ given together with $h$ pairs of distinct vertices $s_i,t_i\in V_i$ for $i\in\{1,\ldots,h\}$. Notice that if there is $i\in\{2,\ldots,h\}$ 
		such that $s_{i-1}$ has a neighbor in $V_i$ distinct from $s_i$ and $s_i$ has a neighbor in $V_{i-1}$ distinct from $s_{i-1}$ then $G$ has no  $h$-layered drawing without crossings such that $s_{i-1}$ and $s_{i}$ are the first vertices in the corresponding orders. Symmetrically, $G$ has no required $h$-layered drawing if $t_{i-1}$ has a neighbor distinct from $t_i$ in $V_{i}$ and $t_i$ is adjacent to a vertex of $V_{i-1}$ distinct from $t_{i-1}$. We can check whether $G$ has such a structure in linear time. Then we stop and return a no-answer. From now on, we assume that this is not the case.
		
		We construct the graph $G'$ from $G$ as follows.
		\begin{itemize}
			\item For each $i\in\{2,\ldots,h\}$, make $s_{i-1}$ adjacent to $s_i$ and make $t_{i-1}$ adjacent to $t_i$ (unless these edges are already in $G$).
			\item Construct $h$ vertices $s_1',\ldots,s_h'$ and create the path $s_1'\cdots s_h'$. Symmetrically, construct $h$ vertices $t_1',\cdots,t_h'$ and construct the path $t_1'\cdots t_h'$. Put $s_i'$ and $t_i'$ in $V_i$ for all $i\in\{1,\ldots,h\}$. 
			\item For every $i\in\{2,\ldots,h\}$, make $s_{i-1}'$ adjacent to $s_{i}$ if $s_{i-1}$ has a neighbor in $V_i$ distinct from $s_i$ and make $s_i'$ adjacent to $s_{i-1}$, otherwise.
			\item For every $i\in\{2,\ldots,h\}$, make $t_{i-1}'$ adjacent to $t_{i}$ if $t_{i-1}$ has a neighbor in $V_i$ distinct from $t_i$ and make $t_i'$ adjacent to $t_{i-1}$, otherwise.
		\end{itemize}
		
		Observe that if $G$ has an $h$-layered drawing $\sigma=(\sigma_1,\ldots,\sigma_h)$ without crossings such that for each $i\in\{1,\ldots,h\}$, $\sigma_i(s_i)\leq \sigma_i(v)\leq \sigma_i(t_i)$ for all $v\in V_i$ then 
		$G'$ admits an $h$-layered drawing that has no crossings. Such a drawing $\sigma'=(\sigma_1',\ldots,\sigma_h')$ can be obtained by constructing $\sigma_i'$ from $\sigma_i$ by adding $s_i'$ in the beginning of the order and making $t_i'$ the last vertex for all $i\in \{1,\ldots,h\}$. We claim that if $G'$ has  an $h$-layered drawing  $\sigma'=(\sigma_1',\ldots,\sigma_h')$ without crossings then $G$ has an $h$-layered drawing $\sigma=(\sigma_1,\ldots,\sigma_h)$ without crossings such that for each $i\in\{1,\ldots,h\}$, $\sigma_i(s_i)\leq \sigma_i(v)\leq \sigma_i(t_i)$ for all $v\in V_i$. Moreover, such a drawing is induced by $\sigma'$.
		
		To see this, notice that $G'$ has paths $P=s_1\cdots s_h$, $P'=s_1'\cdots s_h'$, $Q=t_1\cdots t_h$, and $Q'=t_1'\cdots t_h'$. Because $\sigma'$ has no crossings, we have that $P$ and $P'$ do not cross and either $\sigma_i'(s_i)<\sigma_i'(t_i)$ for every $i\in\{1,\ldots,h\}$ or  $\sigma_i'(t_i)<\sigma_i'(s_i)$ for every $i\in\{1,\ldots,h\}$. By symmetry, we assume without loss of generality that $\sigma_i'(s_i)<\sigma_i'(t_i)$ for every $i\in\{1,\ldots,h\}$. Then because $H$ is connected, we have that for every $i\in\{1,\ldots,h\}$, $\sigma_i'(s_i')<\sigma_i'(s_i)<\sigma_i'(t_i)<\sigma_i'(t_i')$ as $P$, $P'$, $Q$, and $Q'$ have no crossings and $P'$ and $Q'$ do not cross the edges of $G$. Consider $i\in\{2,\ldots,h\}$ and let $v\in V_{i}$ be a neighbor of $s_{i-1}$ in $G$ distinct from $s_i$. Then $\sigma_i(v)>\sigma_{i}(s_i)$ because, otherwise, $s_{i-1}v$ would cross the edge $s_{i-1}'s_i$ of $G'$. Similarly, if $v\in V_{i-1}$ is a neighbor of $s_i$ in $G$ distinct from $s_{i-1}$ then $\sigma_{i-1}(v)>\sigma_{i-1}(s_{i-1})$. We use the same argument for the vertices $t_1.\ldots,t_h$ and obtain that 
		for each $i\in\{1,\ldots,h\}$, $\sigma_i'(s_i')<\sigma_i'(s_i)\leq \sigma_i'(v)\leq \sigma_i'(t_i)<\sigma_i'(t_i')$ for all $v\in V_i$. We define $\sigma_i$ to be the order obtained from $\sigma_i'$ by the deletion of $s_i'$ and $t_i'$ for all $i\in\{1,\ldots,h\}$ and conclude that $\sigma=(\sigma_1,\ldots,\sigma_h)$ is an $h$-layered drawing of $G$ without crossings such that for each $i\in\{1,\ldots,h\}$, $\sigma_i(s_i)\leq \sigma_i(v)\leq \sigma_i(t_i)$ for all $v\in V_i$.
		
		Observe that $G'$ can be constructed from $G$ in linear time. Then in linear time, we can verify whether $G'$ admits an $h$-layered drawing without crossings and find such a drawing
		$\sigma'=(\sigma_1',\ldots,\sigma_h')$ if exists using the results of~\cite{JungerLM98}. Finally, we construct the drawing of $G$ without crossings in linear time. This concludes the proof. 
	\end{proof}
	
	We will also require the following proposition due to Bla\v{z}ej et al. \cite{bla2024constrained}.
	\begin{proposition}[\cite{bla2024constrained}] \label{prop:partialorderdrawing}
		For $h \le 3$, there exists a polynomial-time algorithm, that takes as an input an $h$-layered graph $G = (V, E)$, where $V = \bigcup_{i \in [h]} V_i$ and partial orders $\pi_i$ for $i \in [h]$, and outputs an $h$-layered planar drawing of $G$ that also respects the partial orders $\pi_i$ for each $i \in [h]$; or correctly outputs that there exists no such drawing.  
	\end{proposition}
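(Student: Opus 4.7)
I would frame this as a constrained level planarity problem and solve it by combining a PQ-tree-based representation of all planar level orderings (classical for $h\le 3$) with the partial-order constraints $\pi_i$.

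For $h=1$ the task is just to find a linear extension of $\pi_1$, so I assume $h\in\{2,3\}$. First, I would test unconstrained level planarity via \Cref{prop:linear-time} (or a J\"unger--Leipert--Mutzel-style PQ-tree sweep); if $G$ does not even admit a planar $h$-layered drawing, we output ``no''. Otherwise, I would exploit the following key structural fact available for $h\le 3$: on each layer $V_i$, the set of linear orders realizable by some planar $h$-layered drawing of $G$ is representable by a polynomial-size PQ-tree $T_i$, and the trees $T_1,\ldots,T_h$ are linked by a polynomial number of inter-layer synchronization constraints that enforce global planarity. For $h=2$, \Cref{obs:caterpillar} makes this transparent: every component is a caterpillar, and all its planar drawings are obtained by simultaneously reversing both backbones and permuting pendants at each backbone vertex, which is exactly the family encoded by a natural PQ-tree whose P-nodes are pendant groups and whose Q-nodes are the backbone. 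For $h=3$, the analogous PQ-tree is available through the PQ-tree formulation of the standard level-planarity algorithm.

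With $T_1,\ldots,T_h$ and their synchronization constraints in hand, I would encode each partial order $\pi_i$ as a family of chain constraints on $V_i$ and apply the classical PQ-tree reduction algorithm incrementally: each comparability $u\prec_{\pi_i} v$ reduces $T_i$ to the sub-family of frontiers consistent with that ordering. If any reduction empties a tree, report infeasibility; otherwise, read linear orders $\sigma_1,\ldots,\sigma_h$ off the final frontiers and invoke \Cref{prop:linear-time} one last time (using $\sigma_i$ to pin down the first and last vertex on each layer, and auxiliary chains to force the remaining relative order) to materialize an explicit planar drawing. An additional subtlety I would address is that different connected components of $G$ can be freely interleaved along each layer (subject to planarity and to $\pi_i$): this is naturally modeled by making the roots of per-component PQ-trees children of a global P-node on each layer, on which the $\pi_i$-chains then act.

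The hard part will be correctly combining the inter-layer synchronization of the PQ-trees with the $\pi_i$-chain reductions without losing valid frontiers. Concretely, for $h=3$ the layer $V_2$ is constrained simultaneously by the two PQ-trees induced by the $V_1V_2$ and $V_2V_3$ planarity conditions and by $\pi_2$; I would handle this by working with a merged PQ-tree for $V_2$ and applying the reductions jointly, leveraging the fact that for $h\le 3$ these constraint classes compose in a PQ-tree-representable way---a property that breaks down already for $h\ge 4$, which is consistent with the restriction $h\le 3$ in the statement.
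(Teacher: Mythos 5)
The paper does not prove this proposition; it is imported verbatim as a citation to Bla\v{z}ej et al.\ \cite{bla2024constrained}, so there is no in-paper argument to compare your sketch against. Evaluating your proposal on its own terms: the high-level framing (constrained level planarity via PQ-trees, with tractability for $h\le 3$ and intractability for $h\ge 4$) is consistent with the literature, but there is a concrete gap at the central step. The PQ-tree \textsc{Reduce} operation encodes \emph{consecutivity} constraints only---it restricts to frontiers where a given set appears contiguously---whereas a comparability $u\prec_{\pi_i}v$ is an \emph{ordering} constraint that imposes nothing about contiguity, so ``reducing $T_i$ by the comparability'' is not a well-defined operation. Encoding left-of relations requires an extension of the data structure (orientations attached to Q-nodes, PC-trees with orientation labels, or an auxiliary constraint system over Q-node flips) and getting those extensions to compose with the inter-layer planarity synchronization at $h=3$ is precisely the technical substance of the cited result; you flag this as ``the hard part'' but offer no mechanism for it. Your final step is also off: \Cref{prop:linear-time} only pins the first and last vertex on each layer, which does not suffice to materialize an arbitrary prescribed order on interior vertices, and ``auxiliary chains to force the remaining relative order'' reintroduces exactly the ordering-constraint problem you have not yet solved. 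In short, the plan is a plausible outline of what the cited paper's strategy probably looks like, but as written it is not a proof.
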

	
	We conclude this section by discussing the kernelization result of Kobayashi et al.~\cite{KobayashiMNT14} for \probTwoCr showing that the problem admits a kernel with $\Oh(k^2)$ vertices and edges for connected graphs. We note that the kernelization can be done in linear time.
	
	\begin{proposition}[\cite{KobayashiMNT14}]\label{prop:kern-two}
		\probTwoCr admits a kernel with $\Oh(k^2)$ vertices and edges for instances restricted to connected graphs. Moreover, the kernelization algorithm can be implemented to run in $\Oh(n)$ time.
	\end{proposition}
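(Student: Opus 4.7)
The plan is to design a small set of safe reduction rules that shrink any connected yes-instance to $\Oh(k^2)$ size, and to implement them in linear total time using presorting and bucket-based data structures. The guiding principle, extracted from Observation \ref{obs:caterpillar}, is that crossing-free $2$-layered drawings are rigid up to three kinds of symmetry: reversal of layer orders, permutation within each block of pendants sharing a parent, and permutation within each block of twins (non-pendant vertices with identical neighborhoods). Hence, beyond a bounded threshold depending on $k$, additional pendants or twins carry no new information for the crossing count.

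\textbf{Reduction rules.} I would introduce two main rules. First, pendant bundling: if a vertex $u$ has more than $k+2$ pendant neighbors, delete one excess pendant. Second, twin reduction: if two non-pendant vertices $u, u' \in V_i$ have identical neighborhoods in $V_{3-i}$ and their twin class already contains more than $k+2$ members, delete $u'$. Each rule is safe via a local exchange argument: in any optimum drawing, all pendants (respectively twins) of a given class can be relocated into a single contiguous block without increasing the number of crossings, so a single redundant copy can be absorbed here and reinserted after the kernel is solved. An auxiliary third rule contracts long induced alternating degree-$2$ paths, whose drawings are likewise rigid.

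\textbf{Size bound.} After exhaustive application, I would argue that the number of non-pendant vertices on each side of the bipartition is $\Oh(k)$. This uses a standard interleaving lower bound: two non-pendant vertices $u, v \in V_i$ whose neighborhoods in $V_{3-i}$ are neither identical nor nested force at least one crossing in any drawing. Since the twin reduction bounds the number of vertices with identical neighborhoods, and nested chains on the same side are controlled via the connectivity of $G$ together with Observation \ref{obs:caterpillar} (long enough nested chains force further crossing-inducing interactions through the connecting edges), the count of non-pendant vertices per side is $\Oh(k)$ in any yes-instance. Combined with the pendant bundling bound of $k+2$ pendants per non-pendant vertex, the total vertex and edge count is $\Oh(k^2)$.

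\textbf{Main obstacle.} The principal difficulty is the $\Oh(n)$-time implementation. A naive repeated recomputation of twin classes after each deletion would be superlinear. The plan is to compute a canonical signature for every vertex's sorted neighborhood via a single global radix sort in $\Oh(n+m)$ time, bucket vertices by signature to identify all twin classes in one pass, and then apply the pendant and twin rules using degree counters. Since the rules never increase degrees or create new twin classes among non-pendants, a single sweep suffices. The most delicate conceptual component is verifying that the reductions compose safely---that an optimal drawing of the reduced instance lifts back to one of the original by reinserting absorbed twins and pendants into their contiguous blocks with the crossing count preserved; this is where the local exchange arguments must be carefully stitched together to match the clean statement of the proposition.
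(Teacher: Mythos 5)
Your proposal takes a genuinely different route from the paper, but it has a real gap in the size-bound argument.

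The paper does not reprove the kernel at all: it cites Kobayashi et al.\ for correctness and the $\Oh(k^2)$ bound, and only sketches the $\Oh(n)$ running time. The Kobayashi et al.\ kernel works by (a) computing the blocks of $G$ and declaring a no-instance if the quantity $\ell = \frac{1}{3}\sum_{B}(|E(B)|-1)$ over nontrivial blocks exceeds $k$, and then (b) identifying \emph{order-inducing bridges} (bridges whose two sides each have more than $k-\ell$ edges) and contracting those that are \emph{contractible} (both endpoints incident to a second order-inducing bridge and to nothing else non-leaf). The paper's contribution is only to observe that the block-cut tree, the order-inducing test, and the simultaneous contraction can all be implemented in linear time. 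By contrast, you propose a fresh kernel built from pendant bundling, twin reduction, and degree-$2$ path contraction, with linear-time implementation via radix-sorted neighborhood signatures. Nothing in your proposal resembles the block-based potential or the bridge-contraction mechanism.

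The concrete gap is in your ``Size bound'' paragraph. You assert that ``two non-pendant vertices $u,v\in V_i$ whose neighborhoods in $V_{3-i}$ are neither identical nor nested force at least one crossing in any drawing,'' and build the $\Oh(k)$ bound on non-pendant vertices per side from this. The assertion is false. Take $u,v\in V_1$ with $N(u)=\{a,b\}$ and $N(v)=\{b,c\}$; these neighborhoods are neither identical nor nested, yet the ordering $u<v$ with $a<b<c$ is crossing-free. This is exactly the local structure of consecutive backbone vertices of a caterpillar, which by \Cref{obs:caterpillar} admits a crossing-free drawing with arbitrarily many non-pendant vertices on each side. Your twin reduction and pendant bundling rules do nothing to such a caterpillar, so the reduced instance can remain arbitrarily large, and the claimed $\Oh(k^2)$ bound does not follow. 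You hint that ``long enough nested chains force further crossing-inducing interactions through the connecting edges,'' but this is the part that would need an actual argument, and it is precisely what the Kobayashi et al.\ bridge-contraction rule handles: long chains of bridges decorated only with leaves are contracted, and the $\Oh(k^2)$ bound comes from counting the bridges that survive (which are anchored to the at most $\Oh(k)$ edges in nontrivial blocks), not from a twin or interleaving argument. Your degree-$2$ path contraction rule is in the right spirit but would need to be substantially generalized (to chains of bridges carrying pendants) and coupled with a correct counting argument to close the gap.
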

	
	\begin{proof}[Proof of \Cref{prop:kern-two}: Sketch of the running time evaluation] 
		Notice that if $|E(G)|\geq |V(G)|+k$ then $(G,V_1,V_2,k)$ is a trivial no-instance of  \probTwoCr by~\Cref{obs:caterpillar}. 
		Also, if $|V(G)|\leq k^2$, we can return the original instance and stop.  Thus, we can assume that $|E(G)|\leq n+k$ and $n>k$. In particular, this means that any linear in the input size algorithm runs in $\Oh(n)$ time.
		By its first step, the kernelization algorithm of  Kobayashi et al. finds the blocks of $G$, and this can be done in linear time using standard tools~\cite{HopcroftT73} that simultaneously produce the block-cut tree of $G$. Given the set $\mathcal{B}$ of nontrivial blocks, it is shown that
		if $\ell=\frac{1}{3}\sum_{B\in \mathcal{B}}(E(B)-1)>k$ then  $(G,V_1,V_2,k)$ is a no-instance. Therefore, it can be assumed that $\ell\leq k$. Then, the algorithm boils down to contracting certain bridges.
		
		A bridge $e$ is called  \emph{order-inducing} if each of the two connected components of $G-e$ has more than $k-\ell$ edges. Further, 
		an order-inducing bridge $e$ is \emph{contractable} if each endpoint of $e$ (i)~is incident to an order-inducing bridge $e'$ distinct from $e$ 
		and (ii)~is not adjacent to any other non-leaf edge $e''\neq e,e'$ (an edge is a \emph{leaf-edge} if it is incident to a pendent vertex). Then, the algorithm contracts all contractible edges. To see that the algorithm can be implemented to run in linear time, it is sufficient to observe that given the block-cut tree, one can compute all order-inducing bridges and all contractible bridges in time $\Oh(n)$. Finally, we contract these bridges simultaneously.  
	\end{proof}
	

\section{Subexponential Algorithm for \probTwoCr} \label{sec:twolayer}


Recall that an instance of \probTwoCr is given by $(H, U_1, U_2, k^\star)$, where $H = (U_1 \uplus U_2, E)$ is a two-layered (i.e., bipartite) graph. Throughout the algorithm, we will refer to $k^\star$ as the \emph{original} parameter. We will use $n$ and $m$ to denote the number of vertices and edges in the original instance, respectively.\footnote{Note that if we start with the kernelized instance, then $n$ and $m$ are polynomial in the original parameter $k^\star$; however we will only use the kernelization to derive our final theorem.} Before delving into our algorithm, in this subsection we set up some notation and define some notions that will be used to design our subexponential algorithm. We start with the following definition.

\begin{definition} \label{def:2endpts}
	Consider a two-layered graph $(G, V_1, V_2)$.
	For an edge $e$ and $i \in [2]$, we refer its endpoint in $V_i$ by $\ept(e, i)$. The set of both endpoints of $e$ is denoted by $\epts(e)$. We also extend the notations $\ept(\cdot, i)$ and $\epts(\cdot)$ to subsets of edges. 
\end{definition}

Next, we define the notion of extended instance. Here, we will define two kinds of extended instances -- (i) \emph{normal} extended instances, which are the simpler kind of instances encountered while designing a recursive algorithm \probTwoCr; and (ii) \emph{elaborate} extended instances, which have additional constraints that arise when we wish to use \probTwoCr as a subroutine in the recursive algorithm for \probThreeCr. For the most part, normal extended instances are simpler to handle, as compared to the elaborate extended instances, due to the additional constraints on the latter. Next, we define these two instances. Here, the properties that are only applicable for \emph{normal} extended instances are marked with $\diamondsuit$, whereas those that are only applicable for \emph{elaborate} ones are marked with $\clubsuit$. The properties that are common to both are not marked. We refer to Figure~\ref{figure:extendedInstance} for an illustration. 

\begin{definition}[Extended Instance for $h = 2$] \label{def:extendedinstance2}
	An \emph{\ein} of \probTwoCr is given by $(G, V_1, V_2, \bnd, \pi, k)$ where,
	\begin{enumerate}
		\item $G$, with $V(G) = V_1 \uplus V_2$, and $E(G) = \regulare \uplus \weightede \uplus \boundary$. Here, $E(G)$ could be a multiset. 
		\item $\boundary  = \LR{\leftb, \rightb}$ is a set of two  \emph{boundary edges} of the instance. We refer to $\leftb$ and $\rightb$ as the left and right boundary edges of the instance, respectively.
		\item $\diamondsuit$ $G$ is connected. 
		Further, $\bnd = \emptyset$ and it is not relevant.
		\\$\clubsuit$ $\bnd = \lbnd \uplus \rbnd \subseteq V_1$ is a set of at most $3 \sqrt{k^\star}$ vertices, 
		where $\ept(\leftb, 1) \in \lbnd$ and $\ept(\rightb, 1) \in \rbnd$. Further, each vertex $v$ is reachable from some vertex of $\bnd\cup \epts(\boundary)$. \label{item:2extdconn}

		\item $\diamondsuit$ $\pi$ is not relevant.
		\\$\clubsuit$ $\pi = (\pi_1, \pi_2)$, where for each $i \in [2]$, $\pi_i$ is a partial order on $V_i$ such that, $\pi_i$ is the union of $\lambda \ge 0$ different relations $\pi_i^1, \pi_i^2, \ldots, \pi_i^\lambda$, where for each $1 \le j \le \lambda$, $\pi_i^j$ is a total order on $V_i^j \subseteq V_i$, and the sets $\LR{V_i^j: 1 \le j \le \lambda}$ are not necessarily disjoint. Here, $\lambda$ always remains bounded by $4\log(k^\star)$.
		\\Among these partial orders, $\pi_i^1$ is a total order on $\bnd$ where the vertices appear in the following order from left to right: $\ept(\leftb, 1)$, vertices of $\lbnd \setminus \ept(\leftb, 1)$ in some order, vertices of $\rbnd \setminus \ept(\rightb, 1)$ in some order, $\ept(\rightb, 1)$. \label{item:2partialorder}
		
		\item Each $e \in \weightede$ is referred to as a \emph{multi-edge}, and has an associated weight/multiplicity $\mu(e)$ (which is a positive integer). 
		\item Each multi-edge $e \in \weightede$ is incident to exactly one vertex of $\epts(\boundary)$. 
		\\Further, for each vertex $v \in \epts(\boundary)$, the total sum of multiplicities of multi-edges incident to $v$ is at most $2\sqrt{k^\star}$. \label{item:2multiplicity-sum}
	\end{enumerate}
\end{definition}

\begin{figure}
	\centering
	\includegraphics[page=1,scale=1]{algofigures2.pdf}
	\caption{Example of an extended instance of \probTwoCr with a corresponding drawing. $\boundary = \LR{\leftb, \rightb}$ are shown as dashed purple edges. All multi-edges in $\weightede$ are shown in orange, and parallel edges indicate multiplicity. $\regulare$ are shown in black. Crossings are counted with multiplicity in accordance with  \cref{item:drawing} in \Cref{def:2extdrawing}. }
	\label{figure:extendedInstance}
\end{figure}
Next, we define the notion of drawing of (normal/elaborate) extended instances. 
\begin{definition}[Drawing of an extended instance.] \label{def:2extdrawing}
	Let $\cI= (G, V_1, V_2, \bnd, \pi, k)$ be an extended instance of \probTwoCr. We say that $\sigma = (\sigma_1, \sigma_2)$ is a drawing of $\cI$ if the following properties are satisfied for each $i \in [2]$ (see \Cref{figure:extendedInstance}).
	\begin{enumerate}
		\item $\sigma_i$ is a permutation of $V_i$.
		\item For $i \in [2]$, and $u\in V_i$,   $\sigma_i(\ept(\leftb, i)) \le \sigma_i(u) \le \sigma_i(\ept(\rightb, i))$. That is, the endpoints of $\leftb, \rightb$ are placed leftmost and rightmost on the respective layers.
		\item If for some $u, v \in V_i$, if $\pi_i(u) < \pi_i(v)$, then $\sigma_i(u) < \sigma_i(v)$, i.e., $\sigma_i$ is compatible with $\pi_i$.
		\item $\clubsuit$ In the order $\sigma_1$, the vertices of $\lbnd$ must be placed consecutively (i.e., no vertex outside $\lbnd$ can have two vertices of $\lbnd$ placed on its either sides), and analogously for $\rbnd$.\footnote{Note that, this condition, along with the compatibility $\pi_i^1$, implies that a prefix (resp.~suffix) of $\sigma_1$ is given by $\pi_i^1$.}
	\end{enumerate}
	Further, we count the crossings in the drawing $\sigma$ in the following way:
	\begin{enumerate}
		\item Constraints on $\sigma$ imply that no edge of $\boundary$ is crossed by any other edge.
		\item For any pair of edges $e_1, e_2 \in \regulare$, if they cross according to $\sigma$ (in the regular sense), then we count it as a single crossing.
		\item If an edge $e_r \in \regulare$ and an edge $e_m \in \weightede$ with multiplicity $\mu(e_m)$ cross (in the regular sense), then we count it as $\mu(e_m)$ crossings.
		\item Let $e_1, e_2 \in \weightede$ be two multi-edges.
		\begin{itemize}[leftmargin=10pt]
			\item If $e_1, e_2$ are incident to different endpoints of the same edge $e \in \boundary$, then they will necessarily cross in any drawing $\sigma$, that satisfies the constraints above. However we count zero crossings in $\sigma$.\footnote{These crossings will be already accounted for while creating the extended instance.} \label{item:drawing}
			\item Otherwise, if $e_1, e_2$ cross (in the regular sense), then we count as $\mu(e_1) \cdot \mu(e_2)$ crossings.
		\end{itemize}
	\end{enumerate}
\end{definition}


\paragraph{From original instance to extended instances.} 
Consider an original instance $\cI^o= (H, U_1, U_2, k^\star)$ of \probTwoCr. Let us suppose that $H$ is connected -- otherwise we can handle each connected component separately. First, we try each value of $0 \le k \le k^\star$ (these are $k^\star+1$ guesses), such that if $\cI$ is a \yin, then it admits a drawing $\sigma$ with $k$ crossings, but not $k-1$ crossings. Next, we guess the leftmost vertex $u_1^l$ and the rightmost vertex $u_1^r$ in $U_1$, according to some hypothetical unknown drawing $\sigma = (\sigma_1, \sigma_2)$. Note that there are at most $n^2$ choices for these vertices. Then, for a fixed choice of $u_1^l, u_1^r$ we add two new vertices $u_2^l, u_2^r \in U_2$ to the graph, and add two new edges $(u_1^l, u_2^l)$, and $(u_1^r, u_2^r)$. Let $H'$ denote the resulting graph. It is straightforward to check, that assuming that the guess for $u_1^l, u_1^r$ is correct w.r.t. $\sigma$, one can obtain a new drawing $\sigma' = (\sigma'_1, \sigma'_2)$ of the new graph $H'$, where $\sigma'_1 = \sigma_1$, and $\sigma'_2$ places the new vertices $u_2^l, u_2^r$ at leftmost and rightmost positions, respectively. Further, this drawing does not create any additional crossings. We let $\leftb = (u_1^l, u_2^l), \rightb = (u_1^r, u_2^r)$, and $\boundary = \LR{\leftb, \rightb}$. We let $\pi = (\pi_1, \pi_2)$ to be empty partial orders, since they are not relevant in normal instances. All the edges of $E(H') \setminus \boundary$ are $\regulare$. Finally, we define $\bnd = \emptyset$, since again it is irrelevant for normal instances. Note that the graph $H'$ is connected, since $H$ was connected. From the preceding discussion, the following observation is immediate.

\begin{observation} \label{obs:2normal2extended}
	$\cI^o= (H, U_1, U_2, k^\star)$
	is a \yin  of \probTwoCr if and only if at least one of the at most $n^{\Oh(1)}$ extended instances $\cI$ obtained after the  guessing above is an extended \yin. 
\end{observation}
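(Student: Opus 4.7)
The plan is to establish the two directions of the biconditional separately, essentially verifying that the boundary-augmentation construction is a faithful reduction and accounting for the polynomial number of guesses.

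For the forward direction, I will assume $\cI^o$ is a \yin and fix a drawing $\sigma=(\sigma_1,\sigma_2)$ of $H$ with some number $\tilde{k}\le k^\star$ of crossings. I then exhibit a ``correct'' guess: set $k=\tilde{k}$, let $u_1^l$ be the vertex minimizing $\sigma_1$ on $U_1$, and let $u_1^r$ be the vertex maximizing $\sigma_1$ on $U_1$. For the corresponding extended instance $\cI$, I build a drawing $\sigma'=(\sigma'_1,\sigma'_2)$ by setting $\sigma'_1=\sigma_1$ and extending $\sigma_2$ to $\sigma'_2$ by inserting the freshly added $u_2^l$ at the leftmost position and $u_2^r$ at the rightmost position. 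The main task is to check that $\sigma'$ satisfies \Cref{def:2extdrawing}: the boundary-position constraints hold by construction, the constraints involving $\pi$ and $\bnd$ are vacuous since $\pi$ is empty and $\bnd=\emptyset$ in the normal case, and since $\ept(\leftb,1)=u_1^l$ and $\ept(\leftb,2)=u_2^l$ are both leftmost on their respective layers (symmetrically for $\rightb$), neither boundary edge is crossed by any edge of $\regulare$. As $\weightede=\emptyset$ in the normal case, $\sigma'$ has exactly $\tilde{k}=k$ crossings, certifying that $\cI$ is an extended \yin.

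For the reverse direction, I suppose that for some guess $(k,u_1^l,u_1^r)$ with $k\le k^\star$, the associated extended instance $\cI$ admits a drawing $\sigma'=(\sigma'_1,\sigma'_2)$ with at most $k$ crossings. Obtain $\sigma=(\sigma_1,\sigma_2)$ from $\sigma'$ by deleting $u_2^l,u_2^r$ from $\sigma'_2$ (these are the only vertices of $\cI$ not in $H$) and discarding the boundary edges $\leftb,\rightb$, which are the only edges of $\cI$ incident to $\LR{u_2^l,u_2^r}$. Since deleting vertices and edges can only decrease the crossing count, $\sigma$ is a drawing of $H$ respecting $(U_1,U_2)$ with at most $k\le k^\star$ crossings, so $\cI^o$ is a \yin.

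Finally, I will account for the guesses: there are $k^\star+1\le n+1$ choices for $k$ and at most $|U_1|^2\le n^2$ choices for the ordered pair $(u_1^l,u_1^r)$, giving $n^{\Oh(1)}$ extended instances in total. I expect no substantive obstacle; the argument is a bookkeeping verification that the boundary-edge augmentation is faithful, with the only subtle point being to confirm that \Cref{def:2extdrawing} is genuinely satisfied when lifting $\sigma$ to $\sigma'$ in the forward direction, which hinges on the extremal placement of $u_1^l,u_1^r,u_2^l,u_2^r$.
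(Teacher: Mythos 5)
Your proof is correct and takes the same approach as the paper, which treats the observation as immediate from the preceding construction: forward direction by placing $u_2^l,u_2^r$ at the extremes so that $\leftb,\rightb$ are uncrossed (preserving the crossing count), reverse direction by deleting the two boundary vertices. One small slip in the accounting: you assert $k^\star+1\le n+1$, which is not true in general. The correct justification is that one may assume without loss of generality that $k^\star\le\binom{|E(H)|}{2}=n^{\Oh(1)}$ (otherwise every drawing has at most $k^\star$ crossings and $\cI^o$ is trivially a \yin), after which the total number of guesses is indeed $n^{\Oh(1)}$. This is a bookkeeping fix and does not affect the substance of the argument.
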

Henceforth, we may drop the qualifier \emph{extended} from \emph{extended instance}, and simply say \emph{instance}. Further, the graph $G$ in an instance $\cI = (G, V_1, V_2, \bnd, \pi, k)$ is said to be the \emph{underlying graph} of the instance $\cI$, and $k$ is said to be its parameter.  


\begin{definition} \label{def:cross}
	Let $\cI$ be an instance with underlying graph $G$ and let $\sigma$ be a drawing of $\cI$. Then, for any edge $e \in E$, we define $\crs(e, \sigma)$ as the set of edges that cross $e$ in the drawing $\sigma$. If the drawing $\sigma$ is clear from the context, then we may drop $\sigma$ from the notation.
\end{definition}

\begin{definition}[Blue and red edges] \label{def:redblue}
	Let $\cI$ be a \yin with underlying graph $G$ and parameter $k$, and let $\sigma$ be a corresponding drawing with at most $k$ crossings. We say that an edge $e \in E(G)$ is a \emph{blue edge} in $\sigma$ if $|\crs(e, \sigma)| \le \sqrt{k}$, and we say that $e$ is a \emph{red edge}, otherwise. We denote the set of blue edges by $\Blue(\sigma)$ and the set of red edges by $\Red(\sigma)$. If the drawing is clear from the context, then we may simply say \emph{blue} and \emph{red} edge. See \Cref{fig:redblue}.
\end{definition}

We have the following observation.
\begin{observation} \label{obs:atleastoneblue}
	Let $\cI$ be a \yin with underlying graph $G$, and consider any drawing $\sigma$ with $k$ crossings. Then, for any $t \ge 0$, any subset of $E(G)$ of size larger than $2\sqrt{k} + t$ (where multi-edges are counted with multiplicities) contains at least $t+1$ blue edges w.r.t. $\sigma$.
\end{observation}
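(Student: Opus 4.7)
My plan is straightforward double counting. The key intermediate claim I would aim for is the stronger global statement: the total multiplicity-weighted number of red edges in all of $E(G)$ is strictly less than $2\sqrt{k}$. The observation follows immediately, since for any $S \subseteq E(G)$ with multiplicity-weighted size exceeding $2\sqrt{k} + t$, the multiplicity-weighted mass of blue edges in $S$ must exceed $t$, which (since multiplicities are positive integers) yields at least $t+1$ blue edges in $S$.

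To establish the global bound, for each edge $e \in E(G)$ I would define $c(e) := \sum_{f \in \crs(e,\sigma)} \mu(f)$, the multiplicity-weighted count of edges that geometrically cross $e$ in $\sigma$. The central double-counting identity is $\sum_e \mu(e)\, c(e) \le 2k$: each geometric crossing pair $\{e_1, e_2\}$ (other than the boundary-exempted ones) contributes $\mu(e_1)\mu(e_2)$ to the total crossing count $k$, and the same quantity shows up once in $\mu(e_1)c(e_1)$ and once in $\mu(e_2)c(e_2)$, so the total is exactly $2k$ when we ignore exempted pairs.

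Now I would combine this with the definition of red: if $e$ is red then $|\crs(e,\sigma)| > \sqrt{k}$, and because every edge $f$ has $\mu(f) \ge 1$, we get $c(e) \ge |\crs(e,\sigma)| > \sqrt{k}$. Restricting the summation $\sum_e \mu(e)c(e) \le 2k$ to red edges yields
\[
\sqrt{k}\cdot \sum_{e \text{ red}} \mu(e) \;<\; \sum_{e \text{ red}} \mu(e)\, c(e) \;\le\; 2k,
\]
which gives $\sum_{e \text{ red}} \mu(e) < 2\sqrt{k}$, as needed.

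The only mild subtlety I foresee is the boundary-exemption clause in \cref{def:2extdrawing}: pairs of multi-edges incident to opposite endpoints of the same boundary edge geometrically cross but are not counted in $k$. These pairs can inflate $c(e)$ for the boundary-adjacent multi-edges relative to the $2k$ budget. However, by item~\ref{item:2multiplicity-sum} of \cref{def:extendedinstance2}, the total multiplicity of multi-edges attached at any endpoint of a boundary edge is at most $2\sqrt{k^\star}$, so these boundary-adjacent multi-edges are few, and can either be handled by a direct case analysis (observing they are blue by the same multiplicity cap) or absorbed into the constant; this is the main --- and quite benign --- wrinkle in an otherwise routine averaging argument.
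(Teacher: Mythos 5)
Your proof is the same averaging argument the paper uses: each red edge has more than $\sqrt{k}$ crossing edges, every crossing is counted at most twice, so the total red mass cannot exceed $2\sqrt{k}$; the paper phrases this as a local contradiction inside $S$, you phrase it as a global bound on red mass and then restrict, but the computation is identical.

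One sign to double-check: with $c(e)=\sum_{f\in\crs(e,\sigma)}\mu(f)$ built from all geometrically crossing edges, the boundary-exempted pairs contribute to $\sum_e \mu(e)c(e)$ but not to $k$, so the relation you actually have is $\sum_e \mu(e)c(e)\ge 2k$, not $\le 2k$. To close the argument as you structure it you need either (i) $\crs(e,\sigma)$ (and hence ``red'') to be read as counting only non-exempted crossings, in which case the identity becomes an equality and the lower bound $c(e)>\sqrt{k}$ for red $e$ survives, or (ii) a separate short argument that exempted crossings alone cannot push an edge over the $\sqrt{k}$ threshold. Your suggested fixes in the last paragraph gesture at this but are not yet airtight: a multi-edge at one boundary endpoint can have up to $2\sqrt{k^\star}$ distinct exempted partners, which need not be below $\sqrt{k}$ at every recursion depth. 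That said, the paper's own proof dismisses the same point with a one-line ``this argument is oblivious to the convention,'' so you are at the same level of rigor as the original; just flip the stated inequality and be explicit about which crossings $\crs$ tallies.
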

\begin{proof}
	Let $S \subseteq E(G)$ be a subset greater than $2\sqrt{k}+t$, and suppose for contradiction, that $S$ contains at most $t$ blue edges. Therefore, the number of red edges in $S$ is larger than $2\sqrt{k}$. Since each red edge is crossed by more than $\sqrt{k}$ edges in $\sigma$, and every crossing is counted at most twice in this manner, the total number of crossings is strictly larger than $k$, which contradicts $\sigma$ has at most $k$ crossings. Note that this argument is oblivious to the convention adopted in \Cref{def:2extdrawing} regarding the crossing of edges of $\weightede$.
\end{proof}

\begin{figure}[t]
	\centering
	\includegraphics[scale=0.8,page=2]{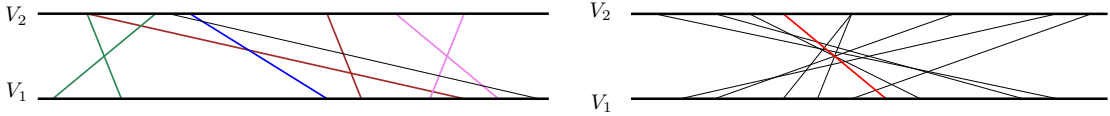}
	\caption{Examples of a blue edge (above) and red edge (below) w.r.t. a drawing from \Cref{def:redblue}. 
		\\This figure also illustrates the notion of left and right from \Cref{def:crossingorder}. The crossing between the two green edges is to the left of the blue edge, and that between two pink edges is to its right. The crossing between the  two brown edges is neither to its left nor to its right, because (at least) one of the two brown edges crosses the blue edge.} \label{fig:redblue}
\end{figure}

\subsection{Algorithm}
We use $n$ and $m$ to denote the number of vertices and edges (with multiplicities), respectively.\footnote{Note that if we start with the kernelized instance, then $n$ and $m$ are polynomial in the original parameter $k^\star$; however we will only use the kernelization to derive our final theorem.} Next, we perform $n^{\Oh(1)}$ guesses in the original instance in order to transform it into an extended instance. This increases the number of vertices and edges by at most $2$, and from \Cref{obs:2normal2extended} it follows that the original instance is a \yin iff for at least one of the guesses, we have an \yin. For each extended instance corresponding to each of the guesses, we will run the following algorithm. 

Let $\cI= (G, V_1, V_2, \bnd, \pi, k)$ be the \ein given as an input. Let $c$ be a sufficiently large constant. 
When $k \le c \sqrt{k^\star}$, our algorithm directly solves the given instance by a careful enumeration. 
Otherwise, when
$k > c \sqrt{k^\star}$, we are in the recursive case, explained later.

\paragraph{Base case.} At a high level, we want to guess the subset of at most $2k$ crossing edges and the ordering of their endpoints, and check whether the rest of the graph (specifically, each connected component in the graph) can be drawn without crossings, such that it respects the given partial orders $\pi$ and other constraints. However, due to the way we count the crossings between the multi-edges incident to the different endpoints of a boundary edge (\Cref{item:drawing} in \Cref{def:2extdrawing}), it may happen that a drawing contains crossings; but we do not want to count such drawings. To handle this subtlety, we proceed as follows.

Suppose $\cI$ is a \yin, then let $\sigma$ be a drawing of $\cI$ with $k$ crossings. Let $X \subseteq E(G)$ be the set of at most $2k$ edges involved in crossings in $\sigma$ -- note that the crossings are counted according to \Cref{def:2extdrawing}. First, we guess $X$, for which there are at most $\binom{m}{k} = n^{\Oh(\sqrt{k^\star})}$ choices, since $k = \Oh(\sqrt{k^\star})$. Let $Y^a \coloneqq \epts(X)$, and note that $|Y^a| = \Oh(\sqrt{k^\star})$. Next, let $Y^b \coloneqq \epts(\weightede)$. Note that due to \Cref{item:2multiplicity-sum} in \Cref{def:extendedinstance2}, it follows that $|Y^b| = \Oh(\sqrt{k^\star})$. Let $Y \coloneqq Y^a \cup Y^b \cup \epts(\boundary)$. Let $Y_i = Y \cap V_i$ for $i \in [2]$. We guess a permutation $\zeta_i$ of $Y_i$ that is compatible with the given ordering $\pi$, and let $\zeta = (\zeta_1, \zeta_2)$. Note that the number of choices for $\zeta$ are bounded by $(\Oh(\sqrt{k^\star}))! = 2^{\Oh(\sqrt{k^\star} \log(k^\star))}$. Henceforth, we focus on the guess where the guessed permutation $\zeta$ is compatible with $\sigma$. 

Then, we use the algorithm of \Cref{prop:partialorderdrawing} 
(\cite{bla2024constrained}) to find, for each connected component $C$ in $G-(X \cup \weightede)$ a drawing $\sigma'(C)$ without any crossings that respects the given partial order $\pi$ restricted to $C$, if such a drawing exists. This can be done in polynomial time due to \Cref{prop:partialorderdrawing}. If such a drawing does not exist for some component $C$, then we terminate this guess. Otherwise, we obtain a combined drawing $\sigma'$ of $G$ by gluing together the orderings of components $C$ and $X$ in an appropriate order, that respects the given partial ordering $\pi$. 

On the other hand, if for all guesses $X$ and the ordering of the  endpoints of edges in $X$, if either the number of crossings is larger than $k$, or if at least one connected component of $G-X$ cannot be drawn without crossings, then we return that $\cI$ is a no-instance. From the above discussion, the following lemma easily follows.

This completes the description of how we handle the first base case when $k < c \sqrt{k^\star}$.

\begin{lemma} \label{lem:2layerbasecase}
	Let $\cI$ be an extended instance of \probTwoCr with underlying graph $G$, and parameter $k  \le c \sqrt{k^\star}$, where $c$ is a sufficiently large constant. Then, the algorithm returns a drawing $\sigma'$ of $\cI$ with $k$ crossings in time $n^{\Oh(\sqrt{k^\star})}$ iff $\cI$ is an extended \yin.
\end{lemma}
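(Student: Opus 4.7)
The plan is to establish both directions of the equivalence together with the running-time bound, exploiting the algorithm's structure (guess the crossing set $X$, guess the ordering $\zeta$ of key endpoints, then defer each connected component of $G - (X \cup \weightede)$ to \Cref{prop:partialorderdrawing}).

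For the forward direction (algorithm succeeds $\Rightarrow$ \yin), I would observe that whenever the algorithm outputs $\sigma'$, it was produced by gluing planar drawings $\sigma'(C)$ of the components of $G - (X \cup \weightede)$ with the guessed $\zeta$. By construction every actual crossing in $\sigma'$ must involve at least one edge of $X \cup \weightede$, because the per-component drawings are crossing-free and the components are vertex-disjoint from each other. Using \Cref{def:2extdrawing}'s counting convention (including the cancellation of forced crossings between multi-edges hanging off different endpoints of the same boundary edge), the total count is at most $k$ — this is exactly what the algorithm verifies before accepting. I would also verify that $\sigma'$ satisfies the remaining constraints of \Cref{def:2extdrawing}: boundary endpoints lie at the extremes, compatibility with $\pi$ holds because $\zeta$ is chosen compatible and the per-component drawings respect $\pi|_C$, and (in the elaborate case) consecutivity of $\lbnd$ and $\rbnd$ follows since those vertices all appear inside $\zeta$.

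For the reverse direction (\yin $\Rightarrow$ algorithm succeeds), let $\sigma$ be a drawing of $\cI$ with at most $k$ crossings, take $X$ to be the set of edges incident to at least one counted crossing in $\sigma$ (so $|X| \le 2k$, since each crossing contributes two endpoints), and let $\zeta$ be the restriction of $\sigma$ to $Y = \epts(X) \cup \epts(\weightede) \cup \epts(\boundary)$. This $(X,\zeta)$ pair is among those enumerated. For this guess, every component $C$ of $G - (X \cup \weightede)$ has no crossings in $\sigma$, so $\sigma|_C$ witnesses a $\pi|_C$-respecting planar drawing; hence \Cref{prop:partialorderdrawing} succeeds on $C$. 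Interleaving these drawings along $\zeta$ (which fixes the relative position of every $Y$-vertex) yields a drawing $\sigma'$ whose counted crossings correspond bijectively to those in $\sigma$, and the algorithm returns it.

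The running-time bound is straightforward: $X$ is selected from $\le m = n^{\Oh(1)}$ edges with $|X| \le 2k = \Oh(\sqrt{k^\star})$, giving $\binom{m}{2k} = n^{\Oh(\sqrt{k^\star})}$ choices. We have $|\epts(\weightede)| = \Oh(\sqrt{k^\star})$ by \Cref{item:2multiplicity-sum} of \Cref{def:extendedinstance2} and $|\epts(X)| = \Oh(\sqrt{k^\star})$, so $|Y| = \Oh(\sqrt{k^\star})$, yielding $2^{\Oh(\sqrt{k^\star}\log k^\star)}$ orderings compatible with $\pi$. For each guess, \Cref{prop:partialorderdrawing} runs in polynomial time, giving a total of $n^{\Oh(\sqrt{k^\star})}$.

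The main obstacle I anticipate is carefully checking the crossing count in the forward direction, in particular that the gluing step neither introduces uncounted crossings nor double-counts crossings when multi-edges are involved. This reduces to the observation that the $\zeta$-positions of the endpoints in $Y$ fully determine which pairs among $X$, $\weightede$, and $\boundary$ cross geometrically, while the (potentially infinitely many) interleavings of non-$Y$ vertices from different components do not affect crossings (since non-$Y$ endpoints belong only to edges internal to their component, and those edges are crossing-free within each $\sigma'(C)$). A minor but notationally heavy detail is to formalize the gluing step so that the output simultaneously respects $\pi$, the partial ordering $\zeta$, and — in the elaborate case — the consecutive placement of $\lbnd$ and $\rbnd$; each of these is already implied by the constraints imposed while selecting $\zeta$.
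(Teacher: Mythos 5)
The paper itself does not give an explicit proof of this lemma---after describing the base-case algorithm, it simply remarks that "the following lemma easily follows"---so you are in effect writing down the argument that the authors leave implicit. Your decomposition (guess $X$ and $\zeta$, use \Cref{prop:partialorderdrawing} per component, glue, check the crossing count) is exactly the algorithm the paper describes, and the running-time accounting matches.

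One step in your writeup is stated too quickly, however, and it is the same step the paper glosses over. In the forward direction you assert that every crossing in $\sigma'$ must involve an edge of $X\cup\weightede$ ``because the per-component drawings are crossing-free and the components are vertex-disjoint.'' Vertex-disjointness alone does not rule out crossings: two vertex-disjoint edges from different components of $G-(X\cup\weightede)$ can certainly cross in a $2$-layer drawing if the gluing interleaves their endpoints. Your last paragraph repeats the same appeal (``non-$Y$ endpoints belong only to edges internal to their component'') but this only rules out \emph{intra}-component crossings, not crossings between two internal edges of two different components. What actually resolves this is that the gluing can place the per-component drawings in a way that, given a $\zeta$ compatible with a genuine solution $\sigma$, the components occupy non-interleaving intervals on each line relative to one another except through the $Y$-endpoints already fixed by $\zeta$; and in the hypothetical $\sigma$ the non-$X$ edges are pairwise non-crossing even across components, which constrains how the $Y$-endpoints of distinct components can be ordered in $\zeta$. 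Spelling this out---or, alternatively, observing that the algorithm explicitly rejects any constructed $\sigma'$ whose crossing count exceeds $k$, which makes the forward direction trivial and pushes all the work into the reverse direction---would close the gap. With that caveat, the proposal follows the paper's route.
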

Otherwise, we proceed to the recursive case, explained next. 

\paragraph{Recursive case.} Since the base case is not applicable, we know that $k > c \sqrt{k^\star}$.
At a high level, our recursive algorithm consists of the following steps.
\begin{enumerate}[leftmargin=1em]
	\item \textbf{Divide.} First, we prove some combinatorial properties of a ``balanced separator'' in this step, which is defined by a set of two edges, called \lsep, \rsep. This is a balanced separator, in that the number of crossings to the left of \lsep\ and that to the right of \rsep, is at most $k/2$; in addition it has certain additional nice properties. This lets us divide the instance into three parts -- left, middle, and right. Of these, the left and the right parts become sub-instances (in fact, extended instances) with parameters bounded by $k/2$, whereas the middle part has certain special properties, which lets us solve it efficiently. 
	
	\item \textbf{Guess the interfaces.} In this step, we ``guess'' the two separator edges, as well as other auxiliary sets of vertices and edges, whose existence is showed in the previous step. For this, we have $n^{\Oh(\sqrt{k})}$ choices. 
	
	\item \textbf{Creating sub-instances.} For each of the guesses made in the previous step, we formally create the two sub-instances, which takes polynomial time.
	\item \textbf{Conquer left and right.} For each such pair of sub-instances corresponding to $n^{\Oh(\sqrt{k}))}$ guesses, we recursively call the algorithm on the left and right sub-instances. Since the parameter is bounded by $k/2$ in each sub-instance, this inductively takes $$n^{\Oh(\sqrt{k})} \cdot T(k/2) \le n^{\Oh(\sqrt{k}))} \cdot n^{\Oh(\sqrt{k/2})}$$ time.
	\item \textbf{Obtaining a combined drawing.} Assuming that for some guess, both the recursive calls corresponding to the left and right instances output the corresponding drawings, then we combine the drawings---along with the guesses made for the middle subproblem---in order to obtain a combined drawing for the current instance. This takes polynomial time.
\end{enumerate}
Let $T_2(n, k)$ denote an upper bound on the running time of our algorithm on an extended instance on a graph with at most $n$ vertices and parameter bounded by $k$. From the above description, it follows that $T_2(n, k)$ satisfies the following recurrence.\footnote{We will give a formal proof of this after giving a formal description of the algorithm.}
\begin{align}
	T_2(n, k) &= \begin{cases}
		n^{\Oh(\sqrt{k^\star})} &\text{ if } k \le c \sqrt{k^\star} 
		\\n^{\Oh(\sqrt{k})} \cdot T_2(n, \frac{k}{2}) + n^{\Oh(1)} &\text{ if } k > c \sqrt{k^\star} 
	\end{cases}
	\label{eqn:2recurrence}
\end{align}

It can be shown that this recurrence solves to $T_2(n, k^{\star}) = n^{\Oh(\sqrt{k^\star})}$.

\subsection{Step 1: Divide}

\begin{definition}[Ordering between edges] \label{def:edgeorder}
	See \Cref{fig:redblue}. 
	Let $e_a, e_b$ be two distinct edges in $E$. Further, let $\pi = (\pi_1, \pi_2)$, where each $\pi_i$ is a (partial) order on $V_i$, for $i \in [2]$. We say that $e_a$ is \emph{to the left of $e_b$ \wrt $\pi$} if for each $i \in [2]$, we have that $\ept(e_a, i) \le \ept(e_b, i)$, with at least one inequality being strict (since $e_a \neq e_b$). In this case, we also say that $e_b$ is \emph{to the right of} $e_a$.  
\end{definition}
Note that ``left of'' and ``right of'' relations are partial orders on the edges. 
\begin{observation} \label{obs:bluecrossings}
	Let $\tau = (\tau_1, \tau_2)$, where each $\tau_i$ is permutation of $V_i$ for $i \in [2]$. Then, for any pair of distinct edges $e_a, e_b \in E(G)$, either $e_a$ and $e_b$ cross, or one of the two edges is to the left of the other \wrt $\tau$. In particular, for any blue edge $b \in E(G)$, except a subset of at most $\sqrt{k}$ edges of $E$  that cross $b$ in $\pi$, each edge in $E_{12}$ is either to the left of $b$ or to the right of $b$ \wrt $\pi$.
\end{observation}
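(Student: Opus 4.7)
The plan is to argue the first sentence by a direct case analysis on the relative $\tau$-positions of the endpoints of the two edges, and then derive the second sentence (the statement about a blue edge $b$) immediately from the first, together with the definition of \emph{blue}.

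For the first part, write $e_a = u_a v_a$ and $e_b = u_b v_b$ with $u_a, u_b \in V_1$ and $v_a, v_b \in V_2$. Since $e_a$ and $e_b$ are distinct, we cannot have both $u_a = u_b$ and $v_a = v_b$ (except in the degenerate multi-edge situation, which I discuss at the end). I would then split into cases according to the signs of $\tau_1(u_a) - \tau_1(u_b)$ and $\tau_2(v_a) - \tau_2(v_b)$. When both differences are strict and of the \emph{same} sign, one of the edges dominates the other coordinate-wise, so by \Cref{def:edgeorder} one is to the left of the other. When both differences are strict and of \emph{opposite} signs, $e_a$ and $e_b$ satisfy exactly the crossing condition recorded in \Cref{def:2extdrawing} for edges between $V_1$ and $V_2$, and hence cross. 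Finally, when exactly one of the two differences is zero (a single shared endpoint, say $u_a = u_b$ but $v_a \neq v_b$), the coordinate-wise comparison still has weak inequalities everywhere and one strict inequality, so again one edge is to the left of the other per \Cref{def:edgeorder}. This exhausts all possibilities, giving the dichotomy.

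For the second part, fix a blue edge $b \in E(G)$ and let $C \coloneqq \crs(b, \tau)$; by \Cref{def:redblue} we have $|C| \le \sqrt{k}$. For every edge $e \neq b$ with $e \notin C$, the first part forces $e$ to be comparable to $b$ under the ``left of'' partial order, i.e.\ $e$ is to the left of $b$ or $b$ is to the left of $e$. Therefore, excluding the at most $\sqrt{k}$ edges in $C$, every other edge is placed on exactly one side of $b$, as claimed.

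The only mildly delicate point is what happens with parallel multi-edges (two distinct edges sharing \emph{both} endpoints), which would fall outside the strict/weak inequality case split above. This case is not an actual obstacle: such a pair cannot be to the ``left of'' the other under \Cref{def:edgeorder} (no strict inequality is available), but the two segments coincide, so they may be regarded as crossing, keeping the dichotomy intact; and in the application to a blue edge $b$, any parallel multi-edge of $b$ is counted inside $\crs(b,\tau)$ and absorbed in the $\sqrt{k}$ bound. Hence the overall statement goes through without further work.
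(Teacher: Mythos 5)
The paper states Observation~\ref{obs:bluecrossings} without proof, so there is no official argument to compare against; your case analysis on the relative $\tau$-positions of the two $V_1$-endpoints and the two $V_2$-endpoints is the obvious proof and is correct in all of the non-degenerate cases, and the second sentence then follows immediately from $|\crs(b,\tau)|\le\sqrt{k}$ exactly as you argue (correctly excluding $b$ itself from the dichotomy).

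One small imprecision in your handling of the parallel-edge case: by the paper's \emph{formal} definition of crossing (Section~\ref{sec:prelim}), two edges cross only if $\sigma_1(u_1)<\sigma_1(u_2)$ and $\sigma_2(v_1)>\sigma_2(v_2)$ (or symmetrically), which forces at least one pair of distinct endpoints; two parallel edges therefore do \emph{not} cross under this definition and hence would not lie in $\crs(b,\tau)$, contrary to what you assert. Since they are also incomparable under the left-of relation of Definition~\ref{def:edgeorder}, the stated dichotomy genuinely fails for a parallel pair rather than being rescued by your convention. This is really a lacuna in the paper's statement rather than in your approach, and in practice it never matters because the paper consistently treats multi-edges as a single weighted edge for ordering purposes (explicitly noted, for instance, inside the proof of Lemma~\ref{lem:2separatorlemma}), so a pair of parallel edges is never considered as two separate edges in these comparisons. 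You could make this clean by just observing at the outset that, under the paper's convention, any two distinct edges of $E(G)$ differ in at least one endpoint, after which your remaining cases exhaust all possibilities.
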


\begin{definition}[Ordering between an edge and a crossing] \label{def:crossingorder}
	Let $\sigma = (\sigma_1, \sigma_2)$ be a drawing of an instance $\cI$ with an underlying graph $G$. Consider an edge $e \in E(G)$. We say that a crossing between edges $e_a, e_b \in E(G)$ (possibly $e = e_a$ or $e = e_b$ but not both) \emph{is to the left (resp.~right) of $e$ \wrt $\sigma$} if (i) $e_a$ and $e_b$ cross in $\sigma$, and (ii) both $e_a$ and $e_b$ are to the left (resp.~right) of $e$ \wrt $\sigma$. See \Cref{fig:redblue}.
\end{definition}

Note that when $\sigma$ is clear from the context, we will omit ``according to $\sigma$'' when using left/right relations. For the rest of this section, we will fix an instance $\cI$ of \probTwoCr that is an \yin for $k$ crossings, but is a {\sc No}-instance for at most $k-1$ crossings. We refer to this type of instance as a \emph{$k$-minimal \yin}. We now introduce the following crucial definition of a \emph{separator}.  

\begin{definition} \label{def:2separator}
	Let $\cI$ be a $k$-minimal \yin, and let $\sigma$ be any drawing of $\cI$ with $k$ crossings. We say that $\sep = \LR{\lsep, \rsep}$ is a \emph{separator} w.r.t. $\sigma$, if the following properties hold (see \Cref{fig:2layerseparator}).
	\begin{enumerate}
		\item $\lsep, \rsep \in E(G)$ are distinct blue edges \wrt $\sigma$ and they do not cross each other. 
		\item The number of crossings to the left of $\lsep$, and that to the right of $\rsep$ are each at most $\frac{k}{2}$. 
		\item $\diamondsuit$ Define $$\md_i \coloneqq \LR{u \in V_i: \sigma_i(\ept(\lsep, i)) \le \sigma_i(u) \le \sigma_i(\ept(\rsep, i))}.$$ That is, $\md_i$ is the set of vertices that are drawn between the respective endpoints of the two edges $\lsep$ and $\rsep$. 
		\\ For $i \in [2]$, it holds that $|\md_i| < 4\sqrt{k} + 2$.
		\item $|\mde(\sep, \sigma)| \le 2\sqrt{k}$, where
		$$\mde(\sep, \sigma) \coloneqq \LR{uv \in E: u \in \md_1, v \in \md_2} \setminus \LR{\lsep, \rsep}.$$ 
		\item $|\crosse(\sep, \sigma)| \le 2\sqrt{k}$, where $\crosse(\sep, \sigma) \coloneqq \crs(\lsep, \sigma) \cup \crs(\rsep, \sigma)$
	\end{enumerate} 
\end{definition}

\begin{figure}
	\centering
	\includegraphics[page=3,scale=1.4]{algofigures2.pdf}
	\caption{Illustration of a separator from \Cref{def:2separator}. $\sep = \LR{\lsep, \rsep}$ shown in \blue{blue}. Edges of \textcolor{Brown}{$M \coloneqq \mde(\sep)$} are shown in \textcolor{Brown}{brown}, and those of \textcolor{Olive}{$\crosse(\sep)$} shown in {\color{Olive}olive} (these are the edges that cross $\lsep$ or $\rsep$). Vertices of \textcolor{DarkGreen}{$\md_1$} are shown in \textcolor{DarkGreen}{dark green} and those of \textcolor{LightCoral}{$\md_2$} in \textcolor{LightCoral}{light pink}. Vertices of \textcolor{BlueViolet}{$\epts(\sep)$} are shown in \textcolor{BlueViolet}{purple}.}
	\label{fig:2layerseparator}
\end{figure}
\begin{lemma} \label{lem:2separatorlemma}
	Let $\cI$ be a $k$-minimal \yin, and let $\sigma$ be any drawing of $\cI$ with $k$ crossings. Then, there exists a separator $\sep$ \wrt $\sigma$.
\end{lemma}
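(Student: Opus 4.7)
The plan is to construct the separator by a left-to-right sweep argument that combines \Cref{obs:atleastoneblue} with a critical-index analysis. I would order the edges $e_1,\ldots,e_m$ by the $x$-coordinate at which they meet the horizontal line $y=1.5$ under $\sigma$; then $e_1=\leftb$ and $e_m=\rightb$. For each $i$, let $\phi(i)$ denote the number of crossings of $\sigma$ whose both edges lie in $\{e_1,\ldots,e_i\}$, so $\phi(0)=0$, $\phi(m)=k$, and $\phi$ is non-decreasing. Let $i_0$ be the largest index with $\phi(i_0)\le k/2$. The key geometric fact I would exploit is that an edge at sweep-position $>i$ either crosses $e_i$ or lies strictly to its right in the partial order of \Cref{def:edgeorder}; hence any crossing that is strictly to the right of $e_i$ in the sense of \Cref{def:crossingorder} is, in particular, a crossing among $\{e_{i+1},\ldots,e_m\}$, of which there are at most $k-\phi(i)$.

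First I would choose $\rsep$. Applying \Cref{obs:atleastoneblue} to the window $\{e_{i_0+1},\ldots,e_{i_0+1+2\sqrt{k}}\}$ of $2\sqrt{k}+1$ edges yields at least one blue edge there; take $\rsep$ to be the leftmost such one, at some position $i_R$ with $i_0+1\le i_R\le i_0+1+2\sqrt{k}$. Then $\phi(i_R)\ge \phi(i_0+1)>k/2$, so the crossings strictly to the right of $\rsep$ number at most $k-\phi(i_R)\le k/2$, and $|\crs(\rsep,\sigma)|\le\sqrt{k}$ since $\rsep$ is blue. Next I would choose $\lsep$ by an analogous leftward search that skips edges which would cross $\rsep$: in the window $\{e_{i_0-3\sqrt{k}},\ldots,e_{i_0}\}$ of $3\sqrt{k}+1$ positions, at most $|\crs(\rsep,\sigma)|\le\sqrt{k}$ lie in $\crs(\rsep,\sigma)$, leaving at least $2\sqrt{k}+1$ candidates, among which \Cref{obs:atleastoneblue} again forces a blue one. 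Take $\lsep$ to be such a blue edge not crossing $\rsep$, at some position $i_L$ with $i_0-3\sqrt{k}\le i_L\le i_0$. By construction $\lsep$ and $\rsep$ are distinct blue edges that do not cross, and the crossings strictly to the left of $\lsep$ number at most $\phi(i_L-1)\le \phi(i_0)\le k/2$.

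For the remaining separator properties I would argue as follows. The crossing-edge bound is immediate: $|\crosse(\sep,\sigma)|\le |\crs(\lsep,\sigma)|+|\crs(\rsep,\sigma)|\le 2\sqrt{k}$. For the middle, the geometric observation gives that any edge strictly between $\lsep$ and $\rsep$ in the partial order of \Cref{def:edgeorder} has sweep-position in $(i_L,i_R)$, and conversely any sweep-intermediate edge either lies strictly between $\lsep$ and $\rsep$ or crosses one of them; hence $|\mde(\sep,\sigma)|\le i_R-i_L-1=\Oh(\sqrt{k})$. Finally, each vertex in $\md_i$ is incident to at least one edge of $G$, and any such edge must lie in $\mde(\sep,\sigma)\cup\{\lsep,\rsep\}\cup\crosse(\sep,\sigma)$ (an edge from $\md_i$ whose other endpoint is outside $\md_{3-i}$ necessarily crosses $\lsep$ or $\rsep$), so $|\md_i|\le |\mde(\sep,\sigma)|+2+|\crosse(\sep,\sigma)|=\Oh(\sqrt{k})$.

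The main obstacle I anticipate, beyond the bookkeeping, is tightening the constants to the precise bounds of \Cref{def:2separator} ($|\mde(\sep,\sigma)|\le 2\sqrt{k}$ and $|\md_i|<4\sqrt{k}+2$): the crude windowing above achieves $\Oh(\sqrt{k})$ with visibly larger constants, and getting down to the exact numbers likely requires a more balanced, simultaneous selection of $\lsep$ and $\rsep$ centered around $i_0$, using the stronger quantitative half of \Cref{obs:atleastoneblue} that guarantees many blue edges in moderately large windows, together with a pigeonhole argument ruling out too many blue-blue crossings inside such a window so that a non-crossing blue pair can be placed nearer $i_0$.
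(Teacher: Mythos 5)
Your approach is essentially the same as the paper's, translated into a sweep order by $x$-coordinate instead of the paper's lexicographic order on endpoints: both pick $\rsep$ as the first blue edge past a critical index where the cumulative crossing count exceeds $k/2$, and then search leftward for a blue edge not crossing $\rsep$. The easy properties (distinct blue edges that do not cross, the crossing-balance bounds via monotonicity of $\phi$, and $|\crosse|\le 2\sqrt{k}$) are correctly handled. The one caveat about the sweep order is that it is a property of a particular geometric realization, not of $\sigma$ alone, and ties need breaking, but any fixed realization works and this is a cosmetic point; the paper's $\prec$ accomplishes the same thing purely combinatorially.

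The genuine gap is in the bounds on $\mde$ and $\md_i$, and you correctly flag it, but you misdiagnose the fix. Your fixed-window argument only gives $|\mde|\le i_R - i_L - 1 \le 5\sqrt{k}$ and hence $|\md_i|\le 7\sqrt{k}+2$, which is $\Oh(\sqrt{k})$ but does not meet \Cref{def:2separator} ($2\sqrt{k}$ and $4\sqrt{k}+2$, respectively), so as written you have not produced a separator in the sense of the definition and thus have not proved the lemma as stated. The remedy is not a ``more balanced, simultaneous selection centered around $i_0$'' with a pigeonhole on blue-blue crossings; the paper's proof is simpler than that. You should take $\lsep$ to be the \emph{rightmost} (in your sweep order) blue edge with position $<i_R$ that does not cross $\rsep$ --- an extremal choice, not merely ``such an edge.'' Your window argument still certifies existence and keeps $i_L\ge i_0-3\sqrt{k}$, and extremality gives a clean bound on $\mde$: every edge in $\mde$ is strictly between $\lsep$ and $\rsep$, hence has sweep position in $(i_L,i_R)$, and does not cross $\rsep$; if $|\mde|>2\sqrt{k}$ then by \Cref{obs:atleastoneblue} it contains a blue edge $b$, and $b$ would be a blue, non-$\rsep$-crossing edge at a strictly larger sweep position than $\lsep$ but still before $\rsep$, contradicting the extremal choice. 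Then $|\mde|\le 2\sqrt{k}$, and your final vertex-counting argument directly gives $|\md_i|\le |\mde|+|\crosse|+2\le 4\sqrt{k}+2$. This is exactly the mechanism in the paper's proof (there phrased as ``$\lsep$ is the lexicographically last such edge''); the extremality is what makes the $\mde$ bound tight, and without it the windowing alone cannot push the constant below $5\sqrt{k}$.
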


\begin{proof}
	
	Let us define a lexicographic order $\prec$ on the edges of $E$ using the order of their endpoints in $V_1$ using $\sigma_1$, with ties broken using the endpoints in $V_2$ using $\sigma_2$. More formally, for two distinct edges $e_a, e_b \in E$, we have $e_a \prec e_b$  iff $\sigma_1(\ept(e_a, 1)) < \sigma_1(\ept(e_b, 1))$, or $\sigma_1(\ept(e_a, 1)) = \sigma_1(\ept(e_b, 1))$ and $\sigma_2(\ept(e_a, 2)) < \sigma_2(\ept(e_b, 2))$ (multi-edges are treated as a single edge for this order). Let $\rsep \in E(G)$ be the first blue edge, according to $\prec$, that has more than $\frac{k}{2}$ crossings to its left. First we claim that $\rsep$ exists, since $\rightb$---which is blue, and has all ($k > \frac{k}{2}$) crossings to its left--- satisfies the stated property. 
	
	By the definition of $\rsep$, the number of crossings to its left is larger than $\frac{k}{2}$, which implies the following: (1) the number of crossings to its right is at most $\frac{k}{2}$ (note that some crossings may not be either to the left or to the right of $\rsep$), and (2) the number of edges to its left is larger than $\frac{k}{2}$. Further, (2) implies that $\leftb$ is one of the edges that is to the left of $\rsep$, since $\leftb$ is to the left of every other edge. In particular, this implies that $\rsep \neq \leftb$. See \Cref{fig:2layerseparator} for an illustration.

	Now, we define $\lsep$ to be the last edge according to $\prec$ such that: (i) $\lsep$ is blue, (ii) $\lsep \prec \rsep$, and (iii) $\lsep$ does not cross $\rsep$ according to $\sigma$. First, $\lsep$ exists because $\leftb$ satisfies conditions (i) - (iii) -- indeed $\leftb$ is not equal to $\rsep$ as shown above. Further, since $\lsep \prec \rsep$, $\lsep$ is to the left of $\rsep$. Now, if the number of crossings to the left of $\lsep$ is larger than $\frac{k}{2}$, then this contradicts the choice of $\rsep$ as the lexicographically first blue edge with more than $\frac{k}{2}$ crossings to its left. Therefore, the number of crossings to the left of $\lsep$ is at most $\frac{k}{2}$. This shows that $\sep \coloneqq \LR{\lsep, \rsep}$ satisfies the first two properties from \Cref{def:2separator}. It remains to show that it also satisfies the third property from \Cref{def:2separator}. Before this, we will show that $|\mde(\sep)| \le 2\sqrt{k}$, as claimed in the statement of the lemma.
	
	Let us use $M \coloneqq \mde(\sep)$ for brevity, where the latter is defined in \Cref{def:2separator}. Suppose for the contradiction $|M| > 2\sqrt{k}$. Note that the definition of $M$, for each edge $e \in M$, $\lsep \prec e \prec \rsep$, and $e$ does not cross $\rsep$. Then, by \Cref{obs:atleastoneblue}, $M$ contains at least one blue edge, say $b$. However, note that $b$ satisfies conditions (i)--(iii) above, which contradicts the choice of $\lsep$, since $\lsep$ is the lexicographically last such edge. This shows that $|\mde(\sep)| \le 2\sqrt{k}$.
	
	$\diamondsuit$ Now we will show the bound on the number of vertices in $\md_1, \md_2$. Note that this part is relevant only for normal instances. Note that $|\epts(\sep, 1)| \le 2$ and $\epts(\sep, 1) \subseteq \md_1$, which accounts for the $+2$ factor in the bound. So now, consider any vertex $x \in \md_1 \setminus \epts(\sep, 1)$. Since $G$ is a connected graph, there is at least one edge incident to $x$, say it is $xy$ with $y \in V_2$. Consider one such edge. If $y \in \md_2$, then $xy \in M$, and $|M| \le 2 \sqrt{k}$. Therefore, the number of vertices in $\md_1$ with at least one endpoint in $\md_2$ is bounded by $2 \sqrt{k}$. Otherwise, $y$ satisfies either $\sigma_2(y) < \sigma_2(\ept(\lsep, 2))$ or $\sigma_2(y) > \sigma_2(\ept(\rsep, 2)))$. Therefore, it follows that the edge $xy$ crosses either $\lsep$ or $\rsep$. However, since $\lsep$ and $\rsep$ are blue edges, the number of edges crossing either of them is bounded by $\sqrt{k}$, which implies that the number of vertices in $\md_1$ without a neighbor in $\md_2$ is bounded by $2\sqrt{k}$. Therefore, $|\md_1| \le 2 + 2\sqrt{k} + 2\sqrt{k} = 2 + 4\sqrt{k}$. An analogous argument shows the same bound on $|\md_2|$. 
\end{proof}

We introduce the following definition that will be later useful for proving the correctness of the algorithm.

\begin{definition} \label{def:2division}
	Let $\cI$ be a \yin with underlying graph $G$, and let $\sigma$ be a drawing of $\cI$ with $k$ crossings. Consider a separator $\sep$ w.r.t. $\sigma$ along with the associated objects as in \Cref{def:2separator} (see \Cref{fig:2layerseparator}). Then, we define the following division of vertices on each layer into left (\lf), middle (\mm), and right (\rt) parts, as follows.
	\begin{align*}
		V_1(\sigma, \sep, \lf) &\coloneqq \LR{u \in V_1: \sigma_1(u) \le \sigma_1(\ept(\lsep, 1))}
		\\V_1(\sigma, \sep, \mm) &\coloneqq \LR{u \in V_1 : \sigma_1(\ept(\lsep, 1) \le \sigma_1(u) \le \sigma_1(\ept(\rsep, 1))}
		\\V_1(\sigma, \sep, \rt) &\coloneqq \LR{u \in V_1: \sigma_1(\ept(\rsep, 1)) \le \sigma_1(u)}
		\\V_2(\sigma,\sep, \lf) &\coloneqq \LR{u \in V_2: \sigma_2(u) \le \sigma_2(\ept(\lsep, 2))} 
		\\V_2(\sigma, \sep, \mm) &\coloneqq \LR{u \in V_2: \sigma_2(\ept(\lsep, 2)) \le \sigma_2(u) \le \sigma_2(\ept(\rsep, 2))}
		\\V_2(\sigma, \sep, \rt) &\coloneqq \LR{u \in V_2: \sigma_2(\ept(\rsep, 2)) \le \sigma_2(u)} 
	\end{align*}
	Finally, for each ${\cal T} \in \LR{\lf, \mm, \rt}$, define $V(\sigma,\sep, {\cal T}) \coloneqq V_1(\sigma, \sep,{\cal T}) \cup V_2(\sigma,\sep, {\cal T})$.
	\\Specifically, in \Cref{fig:2layerseparator}, these sets correspond to the vertices belonging to the yellow, cyan, and light green shaded regions (along with the shared vertices of $\epts(\sep)$), respectively.
\end{definition}

Henceforth, until the beginning of \Cref{subsubsec:2SBcase}, we fix a {\em hypothetical} drawing $\sigma=(\sigma_1,\sigma_2)$ and the corresponding separator $\sep$ corresponding to a $k$-minimal \yin $\cI$, as guaranteed by \Cref{lem:2separatorlemma}, and the associated objects with $\sep$ from \Cref{def:2separator}. These qualifiers/properties of $\sigma$ and $\sep$ will not be repeated in the definitions and lemma statements that follow, until the beginning of \Cref{subsubsec:2SBcase}.

In the following definition, we classify the components attached to a vertex in $\epts(\sep)$ into multiple classes. Components of each class satisfy certain properties, which will be subsequently used to derive more information on the number of such components, or the way in which they must be drawn.

\begin{definition} \label{def:2componenttypes}
	Let $\cI= (G, V_1, V_2, \bnd, \pi, k)$ be an extended instance of \probTwoCr and $\sigma$ be a hypothetical drawing. Furthermore, let $\sep$ be the separator corresponding to $\cI$. 
	Let $v \in \epts(\sep)$ be a vertex, and let $\comps_G(v)$ denote the set of connected components of $C_v-v$, where $C_v$ is the connected component of $G$ that contains $v$. We partition these components into multiple classes (subsets), called $\spl(v), \predef(v), \pendant(v)$, and $\updown(v)$ (we may omit the subscript $G$, if it is clear from the context). A component $C \in \comps_G(v)$ belongs to:
	\begin{itemize}[leftmargin=*]
		\item $\spl(v)$, if it contains at least one edge of $\crosse(\sep) \cup \mde(\sep)$, or a vertex of $\epts(\sep)$.
		\item $\predef(v)$, if $C \not\in \spl(v)$, and $C$ contains a vertex $u \in \bigcup_{i \in [2]} \bigcup_{j \in [\lambda]} V^j_i$ \footnote{Recall that $V^j_i$'s are the $\lambda = \Oh(\log k^\star)$ subsets over which the total orders $\pi^j_i$ are defined, and these $\lambda$ total orders constitute $\pi_i$. Furthermore, $ \bigcup_{i \in [2]} \bigcup_{j \in [\lambda]} V^j_i$ may not be equal to $V(G)$}
		\item $\pendant(v)$, if it does not belong to $\spl(v)$ or $\predef(v)$, and $C$ consists of only a single \emph{pendant vertex} $u$ attached to $v$. Note that the edge $uv$ could belong to $\weightede$.
		\item $\updown(v)$, if it does not belong to $\spl(v)$ or $\predef(v)$, and contains vertices of both $V_1$ and $V_2$.
	\end{itemize}
	See \Cref{fig:compclass} for an illustration.
\end{definition}

\begin{figure}
	\centering
	\includegraphics[scale=1,page=4]{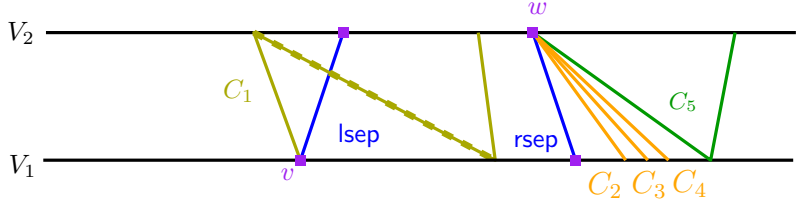}
	\caption{Illustration of classification of components from \Cref{def:2componenttypes}. \textcolor{BlueViolet}{$v, w$} are two vertices from $\epts(\sep)$. \textcolor{Olive}{$C_1$} $\in \comps(v)$ belongs to $\spl(v)$ since it contains an edge crossing $\lsep$. Three components \textcolor{DarkOrange}{$C_2, C_3, C_4$} $\in \comps(w)$ consist of pendant vertices attached to $w$. Finally, \textcolor{DarkGreen}{$C_5$} $\in \comps(w)$ belongs to $\updown(w)$ since it contains a vertex in the same layer as $w$. } 
	\label{fig:compclass}
\end{figure}

First, we have the following observation.

\begin{observation} \label{obs:2layerclass1}
	For any $v \in \epts(\sep)$, it holds that $|\spl(v)| \le |\crosse(\sep)| + |\mde(\sep)| + 2 \le 4 \sqrt{k} +2$.
\end{observation}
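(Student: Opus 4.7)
The plan is to prove the bound by classifying the components in $\spl(v)$ according to which of the three conditions in \Cref{def:2componenttypes} makes them special, and then bounding each class separately.

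First I would fix an arbitrary $C \in \spl(v)$ and partition $\spl(v)$ into three (not necessarily disjoint) subfamilies: $\mathcal{A}$, the components containing at least one edge of $\crosse(\sep)$; $\mathcal{B}$, those containing at least one edge of $\mde(\sep)$; and $\mathcal{C}$, those containing at least one vertex of $\epts(\sep) \setminus \{v\}$. By the definition of $\spl(v)$, every component in $\spl(v)$ belongs to at least one of these subfamilies, so $|\spl(v)| \le |\mathcal{A}| + |\mathcal{B}| + |\mathcal{C}|$.

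For $\mathcal{A}$ and $\mathcal{B}$, I would use the fact that components of $C_v - v$ are pairwise vertex-disjoint, hence also pairwise edge-disjoint. Therefore each edge of $G$ (remaining in $C_v - v$) can appear in at most one component. This immediately gives $|\mathcal{A}| \le |\crosse(\sep)|$ and $|\mathcal{B}| \le |\mde(\sep)|$.

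The main case analysis (and the only part that needs a little care) is bounding $|\mathcal{C}|$ by $2$. Since $\lsep \neq \rsep$, they share at most one endpoint, so $|\epts(\sep)| \in \{3,4\}$. I would split into cases based on whether $v$ is the shared endpoint. If $v$ is shared, then $\epts(\sep) \setminus \{v\}$ consists of the unique other endpoint of $\lsep$ and the unique other endpoint of $\rsep$, contributing at most one component each. If $v$ is an endpoint of $\lsep$ only (symmetric for $\rsep$), then the other endpoint of $\lsep$ contributes at most one component, and both endpoints of $\rsep$ lie in $\epts(\sep) \setminus \{v\}$, but since the edge $\rsep$ survives in $C_v - v$ (because $v$ is not incident to $\rsep$), these two endpoints lie in the same component of $C_v - v$, again contributing at most one component. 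In every case $|\mathcal{C}| \le 2$.

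Combining the three bounds yields $|\spl(v)| \le |\crosse(\sep)| + |\mde(\sep)| + 2$, and plugging in $|\crosse(\sep)| \le 2\sqrt{k}$ and $|\mde(\sep)| \le 2\sqrt{k}$ from \Cref{def:2separator} gives the final bound of $4\sqrt{k} + 2$. I do not anticipate any real obstacle here; the only subtlety is the $\mathcal{C}$-bound, where one must observe that the two endpoints of whichever separator edge not incident to $v$ remain connected in $C_v - v$ via that very edge.
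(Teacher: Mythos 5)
Your proof is correct and fills in the argument that the paper presents as an unproved observation. The decomposition of $\spl(v)$ into three subfamilies according to which of the defining conditions is triggered, the edge-disjointness argument for $\mathcal{A}$ and $\mathcal{B}$, and the case analysis on whether $v$ is a shared endpoint of $\lsep$ and $\rsep$ for bounding $\mathcal{C}$ are all sound. The key observation in the $\mathcal{C}$-bound — that the two endpoints of any separator edge not incident to $v$ remain connected via that edge in $C_v - v$, so they land in a single component — is exactly what makes the $+2$ go through in the {\sf DE} case (where $|\epts(\sep)\setminus\{v\}|=3$). This is the natural argument and there is no gap.
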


Next, we bound the number of components in $\updown(v)$ in the following lemma.

\begin{lemma} \label{lem:2type3components}
	For any $v \in \epts(\sep)$, it holds that $|\updown(v)| \le 4\sqrt{k}$.
\end{lemma}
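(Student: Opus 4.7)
The plan is to fix $v \in \epts(\sep)$, assume by symmetry that $v \in V_1$, and bound $|\updown(v)|$ by associating every pair of components in $\updown(v)$ that lie on the same ``side'' of the separator to a distinct crossing of $\sigma$.

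Step~1 is to locate each $C \in \updown(v)$. I claim that $C$ must lie entirely in the left region (i.e., $\sigma_i(u) \le \sigma_i(\ept(\lsep,i))$ for every $u \in V_i \cap C$ and both layers $i$) or entirely in the right region. Indeed, $C \notin \spl(v)$ gives $E(C) \cap (\crosse(\sep) \cup \mde(\sep)) = \emptyset$ and $V(C) \cap \epts(\sep) = \emptyset$. If some vertex $u \in V_i \cap C$ sat in $\md_i$, then any neighbor $u'$ of $u$ inside $C$ (which exists, since $C$ is connected with $|V(C)|\ge 2$ because it spans both layers) would produce an edge in $\mde(\sep)$ (when $u' \in \md_{3-i}$) or in $\crosse(\sep)$ (otherwise), contradicting $C \notin \spl(v)$. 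An analogous path-contradiction rules out $C$ straddling both sides of $\md$.

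Step~2, the key combinatorial claim, is that any two components $C_1, C_2 \in \updown(v)$ lying in the \emph{same} region must have a crossing in $\sigma$ between an edge of $E(H_1)$ and an edge of $E(H_2)$, where $E(H_i) \coloneqq E(C_i) \cup \{vw : w \in V_2 \cap N_G(v) \cap C_i\}$ augments $C_i$ with its attachment edges to $v$. Suppose for contradiction that the two components lie in the left region and no such crossing exists. Every $V_1$-vertex $p$ of $C_i$ sits strictly left of $v$ on $V_1$ (since $v \notin C_i$ and $v$ is the rightmost point of the left region on $V_1$), so a non-crossing of an edge $pq \in E(C_i)$ with an attachment edge $vw$ where $w \in A_{3-i} \coloneqq V_2 \cap N_G(v) \cap C_{3-i}$ forces $\sigma_2(q) \le \sigma_2(w)$. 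Setting $L_i \coloneqq V_2 \cap C_i$ and $A_i \coloneqq V_2 \cap N_G(v) \cap C_i$, noting that $A_i \subseteq L_i \neq \emptyset$ and that every vertex of $L_i$ is an endpoint of some edge of $C_i$ (because $C_i$ is connected bipartite with vertices on both layers), applying the inequality to all admissible $q, w$ yields $\max \sigma_2(L_1) \le \min \sigma_2(A_2)$ and, symmetrically, $\max \sigma_2(L_2) \le \min \sigma_2(A_1)$. Chaining these two inequalities through the containments $A_i \subseteq L_i$ collapses everything to equality, forcing $|A_1| = |A_2| = 1$ and placing the two singletons at a common $\sigma_2$-position. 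Because $C_1 \cap C_2 = \emptyset$ and $\sigma_2$ is injective, this is a contradiction. The right region and the case $v \in V_2$ are handled identically after reversing inequality directions or swapping layer indices.

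Step~3 and the main obstacle: since the edge sets $E(H_C)$ for distinct $C \in \updown(v)$ are pairwise disjoint, each crossing delivered by step~2 is uniquely attributable to one pair of same-region components; writing $t_L, t_R$ for the counts of left- and right-region components in $\updown(v)$, this gives $\binom{t_L}{2} + \binom{t_R}{2} \le k$, and a routine quadratic estimate (via $t_L^2 + t_R^2 \ge (t_L+t_R)^2/2$) yields $|\updown(v)| = t_L + t_R \le 4\sqrt{k}$ in the relevant regime $k \ge 1$. The main obstacle is step~2: rather than constructing a geometric crossing explicitly, one converts the ``no crossing'' hypothesis into a linear-order chain on the $\sigma_2$-coordinates, exploiting only that each $C_i$ is connected with vertices on both layers, that each is attached to $v$, and that $C_1 \cap C_2 = \emptyset$, to force a positional coincidence forbidden by injectivity of $\sigma_2$. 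The principal care needed is the observation that the correct object to compare is $H_i$ rather than $C_i$ alone (so that the attachment edges $vw$ participate), together with consistently reversing inequalities when handling the mirror cases.
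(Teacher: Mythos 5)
Your proof is correct, and it reaches the bound by a route that differs from the paper's in non-trivial ways, so a comparison is warranted. Both arguments share the high-level shape: partition the components into two sides, argue that same-side components force crossings, and close with a quadratic estimate. But the paper avoids your Step~1 entirely. Instead of showing that each component lies wholly in the left or right region, it fixes for each $C$ a representative pair $\upr(C)\in V_2\cap N(v)\cap C$ and $\low(C)\in V_1\cap N_C(\upr(C))$, partitions the components by which side of $v$ the single vertex $\low(C)$ falls on, and then runs a direct induction over the $\sigma_1$-ordered sequence $\low_1,\ldots,\low_r$: at step $i+1$ it exhibits, for each prior index $j$, at least one concrete new crossing among the ``triangle'' edges $\{(v,\upr_j),(\low_j,\upr_j)\}$ by splitting on whether $\upr_j$ is to the left or right of $\upr_{i+1}$. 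This is a constructive witness for the $\binom{r}{2}$ crossings. Your Step~2 instead proves pairwise crossing-existence by contradiction, via the linear chain $\max\sigma_2(L_1)\le\min\sigma_2(A_2)$ and its mirror image, which is slicker but needs the preliminary Step~1 (and the straddling case you dismiss as an ``analogous path-contradiction'' does require a little care: one must note that every $V_1$-vertex of $C$ avoids $\md_1\cup\epts(\sep)$, every $V_2$-vertex avoids $\md_2\cup\epts(\sep)$, and a left-to-right path would force an edge crossing both separator edges). What your route buys is a cleaner one-crossing-per-pair lemma and a counting step that treats both sides symmetrically; what the paper's route buys is economy (no containment lemma) and an explicit crossing for each pair rather than a non-constructive existence claim. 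Your Step~3 observation that the $E(H_C)$ are pairwise disjoint is the right invariant and is sound. The caveat you flag---that the quadratic estimate needs $k\ge 1$---is real but harmless: the paper's own chain $\frac{r(r-1)}{2}\ge\frac{(r-1)^2}{2}>k$ is likewise only literally justified for moderately large $k$ and relies on integrality of $r$ for small $k$, and in the recursive regime where the lemma is invoked one has $k>c\sqrt{k^\star}\gg 1$.
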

\begin{proof}
	Fix any $v \in \epts(\sep)$. Suppose w.l.o.g. that $v \in V_1$ since $V_2$ case is symmetric. Suppose for the contradiction that $|\updown(v)| > 4\sqrt{k}$. 
	
	Consider some component $C \in \updown(v)$. Note that $C$ must contain a vertex, say $\upr(C) \in V_2$, that is adjacent to $v$, as well as a vertex $\low(C)$, such that $\low(C) \in V_1$ and is adjacent to $\upr(C)$ (since $C$ does not belong to $\pendant(v)$). Define $\upr(C)$ and $\low(C)$ for each $C \in \updown(v)$ in this manner (in case of multiple choices for such vertices, select arbitrarily).
	
	For any $C \in \updown(v)$, since $\low(C) \in V_1$ and $v \in V_1$, either $\low(C)$ is placed to the left of $v$ or to the right of $v$ according to $\sigma_1$. By pigeonhole principle, either the number of $\low(C)$'s to the left or right of $u$ is larger than $2\sqrt{k}$. W.l.o.g., suppose the number of components such that $\low(C)$ is to the right of $v$ is $r > 2\sqrt{k}$. Let us order the vertices $\low(C)$ corresponding to such components, as $\low_1, \low_2, \ldots, \low_r$ according to their left-to-right order in $\sigma_1$, and let $\upr_1, \upr_2, \ldots, \upr_r$ be the corresponding $\upr(C)$ vertices (note that this need not be their left-to-right order according to $\sigma_2$). Note that $\low_j$'s and $\upr_j$'s are distinct, since they belong to different connected components in $G-v$. We will show that the number of crossings between the edges $E_r \coloneqq \LR{(v, \upr_i), (\low_i,\upr_i): 1 \le i \le r}$ is at least $\frac{r(r-1)}{2} \ge \frac{(r-1)^2}{2} > k$, which will yield the desired contradiction.
	
	We prove the claim using induction, where we show that for any $i \ge 1$, the number of crossings among the edges of $E_i \coloneqq \LR{(v, \upr_j), (\low_j, \upr_j): 1 \le j \le i}$ is at least $\frac{i(i-1)}{2}$. The base case for $i = 1$ is trivial, since there are exactly $0$ crossings among the edges of $E_1$ -- since both the edges in $E_1$ are incident to $\upr_1$. Now, suppose the claim is true for some $i \ge 1$. Let $U_i \coloneqq \LR{\upr_j: 1 \le j \le i}$, and let $U_i^l, U^r_i$ denote the set of vertices from $U$ that are to the left and right of $\upr_{i+1}$, respectively. Notice that, for each $\upr_j \in U^l_i$, $\sigma_2(\upr_j) < \sigma_2(\upr_{i+1})$ and $\sigma_1(v) < \sigma_1(\upr_j)$, hence the edge $(v, \upr_{i+1})$ crosses the edge $(\low_j,\upr_j)$ for each $\upr_j \in U^l_i$. On the other hand, for each $\upr_j \in U^r_i$, $\sigma_1(v) < \sigma_1(\low_j) < \sigma_1(\low_{i+1})$ and $\sigma_2(\upr_{i+1}) < \sigma_2(\upr_j)$, hence the edge $(\low_{i+1}, \upr_{i+1})$ crosses both $(v, \upr_j)$ as well $(\low_j,\upr_j)$, for each $\upr_j \in U^r_i$. Therefore, the number of crossings, where one edge is from $E_{i}$ and the other edge is from $\LR{(v,\upr_{i+1}), (\upr_{i+1},\low_{i+1})}$ is at least $l + 2r \ge l + r = i$ (see \Cref{fig:2updown}). Then, by using the inductive hypothesis, the number of crossings among the edges of $E_{i+1}$ is at least $\frac{i(i-1)}{2} + i = \frac{i(i+1)}{2}$. This finishes the proof.
	
	\begin{figure}
		\centering
		\includegraphics[scale=1,page=6]{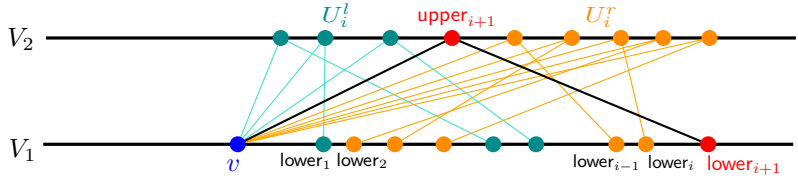}
		\caption{Illustration for the inductive proof in \Cref{lem:2type3components}. Here, the vertices of $U^l_i$---which are to the left of ${\sf upper}^{i+1}$---are shown in \textcolor{Teal}{teal} and that of $U_i^r$---which are to the right of ${\sf upper}^{i+1}$---are shown in \textcolor{DarkOrange}{orange} color.} \label{fig:2updown}
	\end{figure}
	
\end{proof}

At this point, we are left with analyzing the behavior of components in $\pendant(v)$. This depends on the structure of the separator $\sep(\sigma)$ guaranteed by \Cref{lem:2separatorlemma}. In the following definition, we study three different cases.

\begin{figure}
	\centering
	\includegraphics[scale=1,page=13]{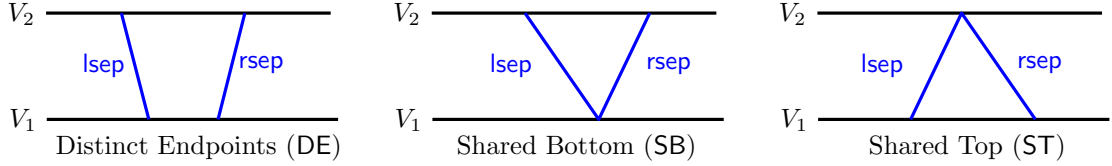}
	\caption{Illustration of three different cases from \Cref{def:2layerendptcases}} \label{fig:2layerendptcases}
\end{figure}

\begin{definition} \label{def:2layerendptcases}
	Let $\sep(\sigma) = \LR{\lsep, \rsep}$ be the separator as guaranteed by \Cref{lem:2separatorlemma}. We have the following
	two different (non-exclusive) cases.
	\begin{description}[leftmargin=0pt]
		\item[(i) {\sf Distinct Top (DT)}.] When $\ept(\lsep, 2) \neq \ept(\rsep, 2)$.
		\item[(ii) {\sf Distinct Bottom (DB)}.] When $\ept(\lsep, 1) \neq \ept(\rsep, 1)$.
	\end{description}
	Based on this, we now have three different mutually exclusive cases (see \Cref{fig:2layerendptcases}).
	\begin{description}
		\item[1. {\sf Distinct Endpoints (DE) case}.] When both {\sf DT} and {\sf DB} hold.
		\item[2. {\sf Shared Bottom (SB) case}.] When {\sf DT} holds but {\sf DB} does not. That is, $\ept(\lsep, 2) \neq \ept(\rsep, 2)$ and $\ept(\lsep, 1) = \ept(\rsep, 1)$. 
		\item[3. {\sf Shared Top (ST) case}.] When {\sf DB} holds but {\sf DT} does not. That is, $\ept(\lsep, 1) \neq \ept(\rsep, 1)$ and $\ept(\lsep, 2) = \ept(\rsep, 2)$. 
	\end{description} 
\end{definition}

\begin{lemma} \label{lem:pendantcases}
	The following properties hold in the drawing $\sigma$. 
	\begin{enumerate}
		\item In {\sf DT case}, 
		\begin{enumerate}
			\item All the vertices of $\pendant(\ept(\lsep, 2))$  are placed to the left of $\ept(\lsep, 1)$.
			\item All the vertices of $\pendant(\ept(\rsep, 2))$ are placed to the right of $\ept(\rsep, 1)$.
		\end{enumerate}
		\item In {\sf DB case},
		\begin{enumerate}[start=3]
			\item All the vertices of $\pendant(\ept(\lsep, 1))$  are placed to the left of $\ept(\lsep, 2)$.
			\item All the vertices of $\pendant(\ept(\rsep, 1))$ are placed to the right of $\ept(\rsep, 2)$.
		\end{enumerate}
	\end{enumerate}
	Thus, in {\sf DE case}, all (a), (b), (c), (d) hold. In {\sf SB case}, (a), and (b) hold. In {\sf ST case}, (c) and (d) hold.
\end{lemma}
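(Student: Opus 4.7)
The plan is to handle the four assertions by a uniform argument: each is a contrapositive statement saying that if a pendant vertex were drawn on the ``wrong'' side of the relevant separator edge, then its component would be forced into $\spl$, contradicting membership in $\pendant$. By symmetry (swapping left/right, and/or the two layers), it suffices to carry out the argument once. I will focus on part (a), and note that (b), (c), (d) follow from mirror-image reasoning.

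For (a), I would fix any pendant vertex $u$ belonging to some component $C \in \pendant(\ept(\lsep, 2))$. First note that $u \in V_1$ since its unique neighbor $\ept(\lsep, 2)$ lies in $V_2$, and that $u \notin \{\ept(\lsep,1), \ept(\rsep,1)\}$: otherwise $C$ would contain a vertex of $\epts(\sep)$ and hence lie in $\spl(\ept(\lsep,2))$ by \Cref{def:2componenttypes}, not in $\pendant(\ept(\lsep,2))$. Next I would argue by contradiction, supposing $\sigma_1(u) \ge \sigma_1(\ept(\lsep,1))$, which by the previous sentence is strict.

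Two subcases then need to be ruled out. If $\sigma_1(u) \le \sigma_1(\ept(\rsep,1))$, then $u \in \md_1$; since $\ept(\lsep,2) \in \md_2$ automatically, the edge $u\,\ept(\lsep,2)$ lies in $\mde(\sep)$, forcing $C \in \spl(\ept(\lsep,2))$, a contradiction. If instead $\sigma_1(u) > \sigma_1(\ept(\rsep,1))$, I would invoke the \textsf{DT} hypothesis: because $\lsep$ lies to the left of $\rsep$ and the two separator edges do not cross each other (\Cref{def:2separator}), we get $\sigma_2(\ept(\lsep,2)) < \sigma_2(\ept(\rsep,2))$. Comparing the four endpoint coordinates then shows that $u\,\ept(\lsep,2)$ crosses $\rsep$, so $u\,\ept(\lsep,2) \in \crosse(\sep)$ and again $C \in \spl(\ept(\lsep,2))$, contradiction. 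This establishes $\sigma_1(u) < \sigma_1(\ept(\lsep,1))$ and hence (a).

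Part (b) follows by the same argument with $\lsep$ and $\rsep$ interchanged (so that the edge is forced to cross $\lsep$ in the second subcase). Parts (c) and (d) are the reflections of (a) and (b) across the two layers, using the \textsf{DB} hypothesis $\ept(\lsep,1)\neq\ept(\rsep,1)$ in place of \textsf{DT}, and the observation that $\sigma_1(\ept(\lsep,1)) < \sigma_1(\ept(\rsep,1))$ plays the analogous role. No step is technically difficult; the only thing one has to be careful about is keeping track of which of the two disqualifying conditions (lying in $\mde(\sep)$ versus crossing a separator edge) is triggered in each subcase, and using the case hypothesis (\textsf{DT} or \textsf{DB}) precisely to guarantee strict inequality of the two separator endpoints on the layer where the pendant's neighbor sits.
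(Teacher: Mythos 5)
Your proof is correct and follows essentially the same argument as the paper: suppose a pendant were drawn on the wrong side, split into the ``middle'' subcase (yielding membership in $\mde(\sep)$) and the ``beyond $\rsep$'' subcase (yielding membership in $\crosse(\sep)$), and derive a contradiction with the component being in $\pendant$ rather than $\spl$. You are somewhat more explicit than the paper about why the pendant cannot coincide with a separator endpoint and about exactly where the {\sf DT}/{\sf DB} hypothesis enters to give the strict inequality needed for the crossing, but this is a matter of exposition, not a different route.
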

\begin{proof}
	
	First let us consider {\sf DT case}. Suppose that there exists a pendant vertex $w \in \pendant(u_2)$ attached to $u_2 \coloneqq \ept(\lsep, 2))$ that lies to the right of $u_1 \coloneqq \ept(\lsep, 1)$. That is, $\sigma_1(u_1) < \sigma_1(w)$. Let $u'_1 \coloneqq \ept(\rsep, 1)$, and $u'_2 \coloneqq \ept(\rsep, 2)$. 
	Now, either (i) $\sigma_1(w) < \sigma_1(u'_1)$, or (ii) $\sigma_1(u'_1) < \sigma_1(w)$. In case (i), $w \in \md_1$ and thus $u_2w \in \mde(\sep)$, contradicting that $w \not\in \spl(u_2)$. In case (ii), the edge $u_2w$ crosses $\rsep = (u'_1,u'_2)$, since $\sigma_1(u_1) < \sigma_1(u'_1)$ and $\sigma_1(u'_1) < \sigma_1(w)$. This implies that $u_2w \in \crosse(\sep)$, again contradicting that $w \not\in \spl(u_2)$. Similarly, one can prove (b) in {\sf DT case}, and (c) and (d) in {\sf DB} case.
	
	Now in {\sf DE case}, both {\sf DT} and {\sf DB} hold, which implies properties (a)--(d). In {\sf SB} case, only {\sf DT} holds, which implies properties (a) and (b). Finally, in {\sf ST case}, only {\sf DB} holds, which implies properties (c) and (d).
\end{proof}

Thus, we are left with analyzing the behavior of some of the pendant components in {\sf SB} and {\sf ST} cases. Specifically, in {\sf SB} case, when $\ept(\lsep, 1) = \ept(\rsep, 1)$, we need to analyze which components in $\pendant(\ept(\lsep, 1))$ (which is the same as $\pendant(\ept(\rsep, 1))$) are placed to the left of $\ept(\lsep, 2)$, and which of them are placed to the right of $\ept(\rsep, 2)$. Note that there cannot be any that are placed in between the two vertices, since otherwise they would belong to $\spl(\ept(\lsep, 1))$. Analogously, in {\sf ST} case, when $\ept(\lsep, 2) = \ept(\rsep, 2)$, we need to analyze which components in $\pendant(\ept(\lsep, 2))$ (which is the same as $\pendant(\ept(\rsep, 2)$) are placed to the left of $\ept(\lsep, 1)$, and which of them are placed to the right of $\ept(\rsep, 1)$. In the next subsection, we will focus on {\sf SB} case, and the {\sf ST} case will be handled analogously. 
These cases are not equivalent due to weight, and, also, we need to take care of the placement of these pendant components to preserve the number of crossings on both sides.  In short, we are going to show that all that matters is the number of these components on both sides and not the components itself. 

\subsubsection{Handling {\sf SB case}} \label{subsubsec:2SBcase}
In this subsection, we will make some simplifying assumptions. First, we are in {\sf SB case}, when $\ept(\lsep, 1) = \ept(\rsep, 1)$, and for simplicity, we denote this shared endpoint by $v$. Similarly, let $u^l \coloneqq \ept(\lsep, 2)$ and $u^r \coloneqq \ept(\rsep, 2)$. If a component $C \in \pendant(v)$, then it contains a pendant vertex $w$ attached to $v$. In this case, we will use the vertex $w$ and the component $C$ interchangeably. Note here, that the edge $vw$ in a pendant component may be a multi-edge with weight/multiplicity $\ell \ge 0$, in which case we also say that $\ell$ is the multiplicity of $C$ or $w$. If $vw$ is not a multi-edge, then we define the multiplicity to be $1$. 

In the following definition, we introduce the notion of nice families of drawings.

\begin{definition} \label{def:2nicefamily} Let the shared endpoint be $v$ and  let $u^l \coloneqq \ept(\lsep, 2)$ and $u^r \coloneqq \ept(\rsep, 2)$.
	\begin{itemize}[leftmargin=3pt]
		\setlength{\itemsep}{-1pt}
		\item 	A \emph{pendant-blueprint} is a tuple $\pbp = (\pendantsl, \pendantsr)$, where $\pendantsl + \pendantsr = \pendantst$, where $\pendantst$ denotes the total sum of multiplicities of all pendants in $\pendant(v)$. 
		
		\item A \emph{pendant-guess} is a partition $\pguess = (\pendantl, \pendantr)$ of $\pendant(v)$. That is, the partition of components of 
		$\pendant(v)$ into left and right side, respectively. See \Cref{fig:2pendantsillustration} for an illustration.
		\item Let $\pguess = (\pendantl, \pendantr)$ be a pendant guess, and let $\pbp = (\pendantl, \pendantr)$ be a pendant blueprint. We say that $\pguess$ is \emph{$\pbp$-compliant} if it satisfies that, the total sum of multiplicities of vertices in $\pendantl$ is equal to $\pendantsl$, and the total sum of multiplicities of vertices in $\pendantr$ is equal to $\pendantsr$.
		
		\item Let $\pbp = (\pendantsl, \pendantsr)$ be a pendant-blueprint, and let $\pguess = (\pendantl, \pendantr)$ be a $\pbp$-compliant pendant-guess. Then, we define a $\nicefamily(\pbp, \pguess, v, \sigma)$ to be a family of drawings $\sigma'=(\sigma_1',\sigma_2')$ satisfying the following properties.
		\begin{enumerate}
			\item $\sigma_1(w) = \sigma'_1(w)$ for all $w \in V_1$, i.e., the order of the vertices on the first layer in the drawing $\sigma'$ is exactly the same as that of $\sigma$,
			\item $\sigma'_2$ and $\sigma_2$, when restricted to the vertices of $V_2 \setminus \pendant(v)$ are the same,
			\item All the pendants in $\pendantl(v)$ are placed to the left of $u^l$ in $\sigma'_2$.
			\item All the pendants in $\pendantr(v)$ are placed to the right of $u^r$ in $\sigma'_2$.
		\end{enumerate}
	\end{itemize}
	%
\end{definition}

\begin{figure}
	\centering
	\includegraphics[scale=1,page=14]{algofigures2.pdf}
	\caption{A separator in {\sf SB case}, where $v = \ept(\lsep, 1) = \ept(\rsep, 1)$. A partition $(\pendantl, \pendantr)$ of $\pendant(v)$, which can be thought of as a \pguess. It is compliant with $\pbp = (4, 3)$, assuming that the multiplicity of each edge is $1$. (Note that the other two endpoints of $\lsep, \rsep$ are not in $\pendant(v)$, since the corresponding components belong to $\spl(v)$ according to \Cref{def:2componenttypes}.)} \label{fig:2pendantsillustration}
\end{figure}

We also have the following definition.
\begin{definition} \label{def:compatibledrawing}
	Let $\cI$ be a $k$-minimal \yin and $\sigma, \sigma'$ be drawings with $k$ crossings. Let $\sep$ be a separator w.r.t. $\sigma$. We say that $\sigma'$ is $(\sep, \sigma)$-compatible, if it satisfies the following properties.
	\begin{itemize}
		\item $\sep$ is also a separator w.r.t. $\sigma'$.
		\item $\mde(\sep, \sigma) = \mde(\sep, \sigma')$. 
		\item $\crosse(\sep, \sigma) = \crosse(\sep, \sigma')$.
	\end{itemize}
\end{definition}

Next, we state the following  lemma. 
\begin{lemma} \label{lem:2permutation}
	Let $\cI$ be a $k$-minimal \yin, let $\sigma$ be a drawing with $k$ crossings, and let $\sep$ be a separator w.r.t. $\sigma$ with {\sf SB case}. Let $v = \ept(\lsep, 1) = \ept(\rsep, 1)$.  
	
	Then, there exists a $\pbp$ such that, for any $\pguess$ that is $\pbp$-compliant, there exists a drawing $\sigma' \in \nicefamily(\pbp, \pguess, v, \sigma)$ such that $\sigma'$ is $(\sep, \sigma)$-compatible.
	
\end{lemma}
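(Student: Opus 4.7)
The plan is to set $\pbp = \pbp^\star \coloneqq (\pendantsl^\star, \pendantsr^\star)$, where $\pendantsl^\star$ (resp.\ $\pendantsr^\star$) is the total multiplicity of pendants of $\pendant(v)$ placed to the left of $u^l$ (resp.\ right of $u^r$) in $\sigma$. Given any $\pbp^\star$-compliant $\pguess = (\pendantl, \pendantr)$, the goal is to construct a drawing $\sigma'$ that lies in $\nicefamily(\pbp^\star, \pguess, v, \sigma)$ and is $(\sep, \sigma)$-compatible.

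First I would record two structural facts. Every component $C \in \pendant(v)$ consists of a single pendant vertex $w \in V_2$ joined to $v$ by a single edge $vw$ (possibly in $\weightede$), so all pendant edges share the endpoint $v$ and hence do not pairwise cross. Furthermore, $vw \notin \mde(\sep)$ since $w \notin \md_2$ by the definition of $\pendant(v)$, and $vw \notin \crosse(\sep)$ since $vw$ shares the endpoint $v$ with both $\lsep$ and $\rsep$. Consequently, in any drawing that keeps the positions of non-pendant vertices fixed, the sets $\mde(\sep)$ and $\crosse(\sep)$ are unchanged.

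The construction of $\sigma'$ will keep $\sigma_1$ and the positions of all non-pendant vertices of $V_2$ identical to $\sigma$, and only reassign positions to the pendants. The crucial step is to show that such a reassignment can produce exactly $k$ crossings. To this end I plan to first canonicalize $\sigma$: using $k$-minimality together with an exchange argument on adjacent transpositions between a pendant and a neighboring non-pendant vertex of $V_2$, I would show that without loss of generality the pendants on the left of $u^l$ in $\sigma$ occupy consecutive positions immediately adjacent to $u^l$, and symmetrically on the right of $u^r$, without changing the total number of crossings. In this canonical form, all pendants on the left form a single cluster adjacent to $u^l$, so any two of them cross exactly the same weighted set of non-pendant edges (there is no non-pendant vertex of $V_2$ sandwiched between them to differentiate). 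Consequently, the total contribution of left pendant edges to crossings equals $\pendantsl^\star \cdot N^l$, where $N^l$ is the weighted count of non-pendant edges crossing the cluster location, and symmetrically $\pendantsr^\star \cdot N^r$ on the right.

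To obtain $\sigma'$ from a given $\pbp^\star$-compliant $\pguess$, I would insert the pendants of $\pendantl$ into the same left cluster positions in any internal order, and similarly $\pendantr$ on the right. The pendant contribution becomes $\pendantsl \cdot N^l + \pendantsr \cdot N^r$, which by $\pbp^\star$-compliance equals the canonicalized value $\pendantsl^\star \cdot N^l + \pendantsr^\star \cdot N^r$. Since non-pendant crossings are preserved by the structural facts above, $\sigma'$ has exactly $k$ crossings and $\sep$ remains a separator of $\sigma'$; the bounds of at most $k/2$ crossings on either side carry over because pendant crossings only accumulate on the side where their pendants are placed. The main obstacle is the canonicalization step: rigorously justifying that pendants can be pushed adjacent to $u^l$ and $u^r$ without altering the crossing count. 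This requires a careful case analysis on the $V_1$-neighborhoods of the non-pendant vertices being swapped, using $k$-minimality to rule out strict local decreases while showing that any local increases can be offset (or avoided) so that the cluster-adjacent configuration achieves exactly $k$ crossings.
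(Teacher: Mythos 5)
Your choice of $\pbp^\star$ matches the paper's, and you correctly identify $k$-minimality as the engine. But the canonicalization step is both unnecessary and, as stated, unsound, and this is where the argument breaks.

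The paper's proof applies $k$-minimality to establish something stronger and more useful than clustering: \emph{every} edge $vw$ with $w\in\pendant(v)$ is crossed exactly the same number $q$ of times in $\sigma$, regardless of whether $w$ sits left of $u^l$ or right of $u^r$. The argument is a one-line exchange: if $vw$ had $q$ crossings and $vw'$ had $q'<q$, placing $w$ immediately next to $w'$ (note $w$ is a pendant, so $vw$ is its only edge) would reduce the total by $\mu(vw)(q-q')>0$, contradicting $k$-minimality. With this uniformity in hand, no canonicalization is needed: pick any left pendant $w^l$ and any right pendant $w^r$ already present in $\sigma$, move all of $\pendantl$ to $w^l$'s position and all of $\pendantr$ to $w^r$'s position, and observe that (i) pendant relocations never affect crossings of other pendant edges (all share endpoint $v$) nor of non-pendant edges, and (ii) each relocated pendant now has exactly the $q$ crossings that $w^l$ (resp.\ $w^r$) had. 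Hence the total stays $k$ and $\sep$ remains a separator. Your $N^l$ and $N^r$ are in fact both equal to $q$, so the accounting you are trying to set up is trivial once the uniform count is established.

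In contrast, the swap you propose in canonicalization can strictly increase crossings, and there is no guaranteed compensating decrease. Concretely, swapping a pendant $w_1$ past a non-pendant $z\in V_2$ toggles the crossing between $vw_1$ and each edge $zz'$ with $z'\in V_1$: before the swap $vw_1$ crosses those $zz'$ with $\sigma_1(z')<\sigma_1(v)$, after the swap it crosses those with $\sigma_1(z')>\sigma_1(v)$. If $z$'s $V_1$-neighbors are not balanced around $v$, the count changes, and since $\sigma$ already attains the minimum $k$, any net increase gives a worse drawing that you cannot rescue from. The plan to ``offset'' such increases has no basis; the claim ``WLOG pendants occupy consecutive positions adjacent to $u^l$'' is not a consequence of $k$-minimality. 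Drop the canonicalization entirely and instead prove the uniform crossing count $q$; the rest then falls out directly.
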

\paragraph{Intuitive explanation of the statement of the lemma.} Before proving the lemma, let us illustrate what the lemma is trying to say with the help of an example of \Cref{fig:2pendantsillustration}. Suppose the drawing $\sigma$ distributes $\pendant(v)$ into left and right as illustrated in the example. As mentioned in the caption, $\pguess = (\LR{v_1, v_2, v_3, v_4}, \LR{v_5, v_6, v_7})$ is $\pbp$-compliant, where $\pbp = (4, 3)$. 
Now consider another $\pguess' = (\LR{v_2, v_3, v_6, v_7},$ $ \LR{v_1, v_4, v_5})$, which is also $\pbp$-compliant. Then, there exists a drawing $\sigma'$ such that, (i) $\sigma'$ places $\LR{v_2, v_3, v_6, v_7}$ on the left of $\ept(\lsep, 2)$, $\LR{v_1, v_4, v_5}$ to the right of $\ept(\rsep, 2)$, and does not change the ordering of the rest of the vertices (this is captured by the condition that $\sigma' \in \nicefamily(\pbp, \pguess', v, \sigma)$); and (ii) $\sep$ and its associated objects remain unchanged in $\sigma$ and $\sigma'$ (this is captured by $\sigma'$ being $(\sep, \sigma)$-compatible). Now let us prove the lemma formally.

\begin{proof}[Proof of \Cref{lem:2permutation}]
	Let $w^\star \in \pendant(v)$, such that the edge $vw^\star$ is crossed $q$ times in $\sigma$. We claim that each edge $vw$ for any $w \in \pendant(v)$ is {\em also} crossed exactly $q$ times in $\sigma$. Suppose for the contradiction that there exists some $w' \in \pendant(v)$ such that the edge $vw'$ is crossed $q' < q$ times in $\sigma$ (if $q' > q$, then swap the roles of $w$ and $w'$ in the following). Then, the drawing $\sigma'$ is obtained by placing $w$ just after $w'$. If $\mu(vw) > 0$ is the multiplicity of the edge $vw$, then the number of crossings in $\sigma'$ is now $k-\mu(vw) \cdot q + \mu(vw) \cdot q' < k$. Thus, $\sigma'$ is a drawing of $\cI$ with fewer than $k$ crossings, contradicting the $k$-minimality of $\cI$. 
	
	Let $\pbp \coloneqq (\totall, \totalr)$. Here,  $\totall$ ($\totalr)$ denotes the total sum of multiplicities of all pendants in $\pendant(v)$ that are placed to the left of (resp.~right of) $u^l = \ept(\lsep, 2)$ (resp.~$u^r = \ept(\rsep, 2)$) in $\sigma_2$. 
	

	Consider any pendant $w^l \in \pendant(v)$ that is placed to the left of $u^l$ according $\sigma_2$, i.e., $\sigma_2(w^l) < \sigma_2(u^l)$. Similarly, let $w^r \in \pendant(v)$ that is placed to the right of $u^r$ according to $\sigma_2$, i.e., $\sigma_2(w^r) > \sigma_2(u^r)$. We will use $w^l$ and $w^r$ in defining $\sigma'$ in the next paragraph. 
	
	Now consider any $\pguess = (\pendantl, \pendantr)$ that is $\pbp$-compliant. We claim that, the drawing $\sigma'$ obtained by modifying $\sigma$ in the following way, satisfies the claimed properties: (1) relocate all the vertices in $\pendantl$ in the same place as $w^l$\footnote{More formally, if $a^l, b^l \in V_2 \setminus \pendant(v)$ are the vertices that are immediately to the left and right of $w^l$ in $\sigma_2$, respectively, then we place all the vertices of $\pendantl$ between $a^l$ and $b^l$ in an arbitrary order.}, (2) place all the vertices in $\pendantr$ in the same place as $w^r$, and (3) the order of all other vertices remains unchanged. First, we observe that $\sigma'$ thus obtained indeed belongs to $\nicefamily(\pbp, \pguess, v,  \sigma)$. Furthermore, since each edge $vw$ for $w \in \pendant(v)$ is crossed exactly the same number of times in $\sigma$, it follows that the relocation of vertices used to obtain $\sigma'$ does not change the number of times each edge is crossed in $\sigma$ or $\sigma'$. Furthermore, note that $\sigma$ already places $\totall$ (resp.~$\totalr$) pendants (counting multiplicities) from $\pendant(v)$ to the left (resp.~right) of $u^l$ (resp.~$u^r$), and the resulting drawing $\sigma'$ does not alter this either. Therefore, $\sep$ also satisfies all properties from \Cref{def:2separator} w.r.t. $\sigma'$ as well. This shows that $\sigma'$ is $(\sep, \sigma)$-compatible.
\end{proof}

\subsection{Step 2. Guessing the Interfaces}

In this step, we will ``guess'' various subsets of edges and vertices that can be used in the creation of two sub-instances in the next step. Each of the guesses is made by assuming that $\cI$ is a $k$-minimal \yin with hypothetical solution  drawing $\sigma$. These guesses will be made in multiple steps, with each step dependent on all possible choices of the previous steps. These guesses can be thought of as \emph{branching} for each of the choices specified in all steps. Furthermore, if any of the guesses in a subsequent step contradicts the constraints imposed on a drawing (as in \Cref{def:2extdrawing}), or a guess made in a prior step, then such a guess is considered invalid and terminated (i.e., that branch is not explored further).  If all the guesses are terminated, or the corresponding recursive calls report that the sub-instances are {\sc No}-instances, then we will conclude that our assumption that $\cI$ was a \yin was wrong, and thus we must have a {\sc No}-instance. 

In each step, we specify at a high level the kind of object that we are trying to guess. In this algorithm, we label vertices using one or more labels from the following set $\LR{\lf, \mm, \rt}$, that are indicative of the sub-instance(s) that the vertex will belong to.\footnote{We do not explicitly define a middle sub-instance in the algorithm, but the label $\mm$ is defined for convenience.} Whenever we are trying to guess a subset of edges, we follow the following convention: all the edges of a multi-edge are treated in the same manner, such as being chosen in a guess. 

\begin{description}[leftmargin=10pt]
	\item[Step A: Separator.] We start by guessing $\sep = \LR{\lsep, \rsep} \subseteq E$ that is a separator w.r.t. $\sigma$. Note that there are at most $m^2 = n^{\Oh(1)}$ choices. Furthermore, based on our guess, we can also conclude whether we are in {\sf DE, SB,} or {\sf ST case}, which will be important later. Next, we guess $\md_i \subseteq V_i \setminus \epts(\sep, i)$ for $i \in \LR{1, 2}$, such that each is of size $4\sqrt{k}+2$. There are at most $n^{\Oh(\sqrt{k})}$ choices for this. Overall, we have at most $n^{\Oh(\sqrt{k})}$ choices.

	\item[Step B: Labeling $\epts(\sep)$.] 
	\begin{description}[leftmargin=0pt]
		\item[{\sf DE case.}] Vertices in $\epts(\lsep)$ receive label $\LR{\lf, \mm}$, and that in $\epts(\rsep)$ receive $\LR{\mm, \rt}$.
		\item[{\sf SB case.}] $\ept(\lsep, 1) = \ept(\rsep, 1)$ receives labels $\LR{\lf, \mm, \rt}$, $\ept(\lsep, 2)$ receives labels $\LR{\lf, \mm}$, and $\ept(\rsep, 2)$ receives labels $\LR{\mm, \rt}$. 
		\item[{\sf ST case.}] $\ept(\lsep, 2) = \ept(\rsep, 2)$ receives labels $\LR{\lf, \mm, \rt}$, $\ept(\lsep, 1)$ receives labels $\LR{\lf, \mm}$, and $\ept(\rsep, 1)$ receives labels $\LR{\mm, \rt}$. 
	\end{description}

	\item[Step C: Labeling $\mde$ and $\crosse$.] First, we label all endpoints of all the edges of $\mde$ with the label $\mm$. 
	
	Next, we guess two sets of edges $\crs(\lsep), \crs(\rsep)$ of size at most $\sqrt{k}$ that cross $\lsep, \rsep$, respectively. Then, we define $\crosse \coloneqq \crs(\lsep) \cup \crs(\rsep)$. Note that $$|\crosse| \le 2 \sqrt{k}$$ and the number of guesses is bounded by $n^{\Oh(\sqrt{k})}$. 
	
	We now describe feasible pairs for labels of the endpoints of $\crosse$; and each guess corresponds to using labeling the endpoint of each edge in $\crosse$ by one of the feasible (ordered) pairs. However, if the guess for the label is \emph{incompatible} with a previously assigned label, then we consider it as an invalid guess and proceed to the next one. Note that, if a vertex has previously been assigned multiple labels, then the incompatible label(s) is a label that has not been assigned to the said vertex, if any.
	
	\begin{itemize}
		\item For $e \in \crs(\lsep)$, 
		the feasible label pairs are $(\lf, \mm), (\mm, \lf), (\lf, \rt), (\rt, \lf)$.
		
		\item For $e \in \crs(\rsep)$, 
		the feasible label pairs are $(\rt, \mm), (\mm, \rt), (\lf, \rt), (\rt, \lf)$.
	\end{itemize}
	Overall, we have at most $\binom{m}{2\sqrt{k}}^4 \cdot 2^{\Oh(\sqrt{k})} = n^{\Oh(\sqrt{k})}$ choices. Let ${\sf CrossLeft, CrossRight} \subseteq \epts(\crosse)$ denote the subsets of vertices that receive labels $\lf, \rt$, respectively.
	
	\item[Step D: Labeling $\comps(v)$ for $v \in \epts(\sep)$]
	Recall that for each $v \in \epts(\sep)$, each component $C \in \comps(v)$ can be classified into $\spl(v), \predef(v), \updown(v),$ and $\pendant(v)$ according to \Cref{def:2componenttypes}. Now we describe how to handle each of the components separately.
	
	\begin{description}[leftmargin=2pt]
		\item[$\blacktriangleright$ $\spl(v)$.] Some of the vertices in such components are already labeled. We will return to the remaining vertices in these components at a later step.
		
		\item[$\blacktriangleright$ $\predef(v)$.] Recall that each $\pi_i$ for $i \in [2]$ is comprised of $\lambda = \Oh(\log k^\star)$ total orders $\pi_i^1, \pi_i^2, \ldots, \pi_i^\lambda$ for $V_i^1, V_i^2, \ldots, V_i^\lambda$, respectively. For each $i \in [2]$, and for each $j \in [\lambda]$, we do the following. 
		\begin{itemize}
			\item We guess two vertices $u(j, l), w(j, r) \in V^j_i$ with $\pi^j_i(u(j, l)) < \pi^j_i(w(j, r))$.
			\item Any $C \in \predef(v)$ that contains a vertex $u \in V^j_i$ with $\pi^j_i(u) \le \pi^j_i(u(j, l))$, we label all the vertices of $C$ using $\lf$. 
			\item Any $C \in \predef(v)$ that contains a vertex $w \in V^j_i$ with $\pi^j_i(w) \ge \pi^j_i(w(j, r))$, we label all the vertices of $C$ using $\rt$.
		\end{itemize}  
		If any of the guesses made in this step is incompatible with a previous step, or with a prior guess made in this step, then we terminate this guess. Note that the number of guesses made in this step is at most $n^{\Oh(\lambda)} = n^{\Oh(\log k^\star)} = n^{\Oh(\sqrt{k})}$, since $k = \Omega(\sqrt{k^\star})$. 
		
		\item[$\blacktriangleright$ $\updown(v)$.] By \Cref{lem:2type3components}, we know that $|\updown(v)| \le 8\sqrt{k}$ for each $v \in \epts(\sep)$. Thus, if the number of such components is more than $8\sqrt{k}$, then we terminate the guess. Otherwise, we guess the label of each of them. The number of guesses for this is bounded by $\binom{n}{8\sqrt{k}} \cdot 2^{8\sqrt{k}} = n^{\Oh(\sqrt{k})}$.
		
		\item[$\blacktriangleright$ $\pendant(v)$.] We distinguish between {\sf DE, SB, ST cases} and use \Cref{lem:pendantcases} to label components as applicable. In {\sf DE} case, we label all the components in $\bigcup_{v \in \epts(\sep)} \pendant(v)$. However, in {\sf SB case} (resp.~{\sf ST case}), we are left with labeling $\pendant(\ept(\lsep, 1))$ (resp. $\pendant(\ept(\lsep, 2))$), which we do next. 
		
		We focus on the {\sf SB} case and show how in accordance with \Cref{def:2nicefamily} and \Cref{lem:2permutation}. The {\sf ST} case is handled analogously. Let $v \coloneqq \ept(\lsep, 1) = \ept(\lsep, 2)$, and let $\total$ denote the total sum of the multiplicities of $\pendant(v)$. First, we guess a $\pbp = (\pendantsl, \pendantsr)$, for which there are at most $n$ possibilities. Then, for each choice of $\pbp$, we use a knapsack-style dynamic programming to find a $\pbp$-compatible $\pguess = (\pendantl, \pendantr)$ in polynomial time, if one exists. If such a partition does not exist, then we move to the next guess. Otherwise, we label the vertices of $\pendantl$ using $\lf$ and that of $\pendantr$ using $\rt$. 
	\end{description}


	\item[Step E: Label propagation.]  Let $A$ denote the subset of vertices that have already received a label, and let $A^\lf, A^\rt \subseteq A$ denote the subsets of $A$ that have received the respective label. Further, let $O \subseteq A$ denote the subset of vertices that have received a unique label, and again let $O^\lf, O^\rt \subseteq O$ denote the partition of $O$ into two disjoint subsets. At this point, some of the vertices of the graph may still be unlabeled, which we label during this phase. To this end, we consider normal ($\diamondsuit$) and elaborate ($\clubsuit$) instances separately. 
	
	$\diamondsuit$ Define the following sets:
	\begin{align*}
		{\sf Origin}^\lf &= \epts(\LR{\leftb, \lsep}) \cup {\sf CrossLeft}
		\\{\sf Origin}^\rt &= \epts(\LR{\rightb, \rsep}) \cup {\sf CrossRight}
	\end{align*}

	$\clubsuit$ Define the following sets:
	\begin{align*}
		{\sf Origin}^\lf &= (\lbnd \cap A^\lf) \cup \epts(\LR{\leftb, \lsep}) \cup {\sf CrossLeft}
		\\{\sf Origin}^\rt &= (\rbnd \cap A^\rt) \cup \epts(\LR{\rightb, \rsep}) \cup {\sf CrossRight}
	\end{align*}
	
	Now, regardless of $\diamondsuit$ or $\clubsuit$, we proceed as follows. 
	\begin{enumerate}
		\item For each $u \in {\sf Origin}^\lf$, and each unlabeled $w \not\in A$ that is reachable from $u$ in the graph $G - (O^\mm \cup O^\rt)$, we label it using $\lf$.
		\item For each $u \in {\sf Origin}^\rt$, and each unlabeled $w \not\in A$ that is reachable from $u$ in the graph $G - (O^\lf \cup O^\mm)$, we label it using $\rt$.
		\item For each $u \in {\sf Origin}^\mm$, and each unlabeled $w \not\in A$ that is reachable from $u$ in the graph $G - (O^\lf \cup O^\rt)$, we label it using $\mm$.
	\end{enumerate}
	During this process, if a previously unlabeled vertex, i.e., $w \notin A$ receives multiple labels, then we terminate the guess. 
	
	Otherwise, for ${\cal T} \in \LR{\lf, \mm, \rt}$, let $V^{\cal T}$ denote the subset of vertices that have received the respective label. 

	\item[Step F: Partial Order.] Recall that we are given a partial orders $\pi_1, \pi_2$. For each $i \in [2]$, we will guess permutation between certain subsets of vertices, which will be used to extend $\pi_i$ to a new partial order $\pi'_i$. 
	\\Let $S \coloneqq \md_1 \cup \md_2 \cup \epts(\crosse)$. The bounds on the respective sets imply that $|S| = \Oh(\sqrt{k})$. For each $i \in [2]$, let $S_i \coloneqq V_i \cap S$. We guess permutations $\pi^{\lambda+1}_i$ of $S_i$ for each $i \in [2]$. Since $|S_i| = \Oh(\sqrt{k})$, it follows that the total number of guesses is $|S_1|! \cdot |S_2|! = (\Oh(\sqrt{k})!)^2 = 2^{\Oh(\sqrt{k}  \log k)}$. Let $\pi^{\lambda+1} = (\pi^{\lambda+1}_1, \pi^{\lambda+1}_2)$. We say that $\pi^{\lambda+1}$ is \emph{valid} if the following condition is met.
	\begin{itemize}
		\item $\clubsuit$ For each $i \in [2]$, for any distinct $u, v \in S_i$, if $\pi^{\lambda+1}_i(u) < \pi^{\lambda+1}_i(v)$, then, either $u$ and $v$ are incomparable in $\pi_i$, or $\pi_i(u) < \pi_i(v)$.
		\item $\pi^{\lambda+1}$ satisfies the following properties (that must be satisfied by the separator):
		\begin{itemize}
			\item $\lsep, \rsep$ do not cross
			\item For each $e \in \sep$, the set of edges crossing $e$ is exactly equal to $\crs(e)$,
			\item $\mde$ is exactly the set of edges that lie between $\lsep, \rsep$  
		\end{itemize}
	\end{itemize}
	Note that the validity of $\pi^{\lambda+1}$ can be checked in polynomial time. If a guess for $\pi^{\lambda+1}_i$ is invalid, then we terminate it and move to the next guess. 
	
	$\clubsuit$ We define $\pi^{\lambda+2}$ as an ordering over $\bnd \cup \epts(\sep, 1)$ by combining parts of $\pi^{\lambda+1}_1$, and one of the constituent total orders of $\pi^2$ that is a permutation of the set $\bnd$. Then, we define $\pi'_1 = \pi_1 \cup \pi^{\lambda+1}_1 \cup \pi^{\lambda+2}_1$, and $\pi'_2 \coloneqq \pi_2 \cup \pi^{\lambda+1}_2$. Note that $\pi'_i$ is a valid extension of $\pi_i$, since $\pi^{\lambda+1}$ is valid. Henceforth, we assume that we are working with such a $\pi' = (\pi'_1, \pi'_2)$ for some guess for $\pi^{\lambda+1}$. 
	
	\item[Step G: Parameter division.]
	
	Let $k_{\sf current}$ denote the number of crossings among the edges of $\crosse \cup \mde \cup \sep$ according to $\zeta$. If $k_{\sf current} > k$, then we terminate the current guess. Otherwise, let $k' \coloneqq k - k_{\sf current}$. We guess two non-negative integers $0 \le k^\lf, k^\rt \le k/2$ such that $k' = k^\lf + k^\rt$. Note that the number of such guesses is bounded by $k^{2}$.

\end{description}

The following lemma follows from the description of Step 2.
\begin{lemma} \label{lem:2guessbound}
	The total number of guesses made in steps A through G, for (i) separator $\sep$ and its associated sets of vertices and edges, (ii) labels of vertices of $V(G)$, and (iii) division of $k$ into smaller parameters, is bounded by $n^{\Oh(\sqrt{k})}$.
\end{lemma}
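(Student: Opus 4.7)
The plan is to bound the number of guesses in each step A--G separately and then multiply them together. Since the claimed bound is $n^{\Oh(\sqrt{k})}$ and we have only a constant number of steps, it suffices to show that each step contributes at most $n^{\Oh(\sqrt{k})}$ guesses.

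For step A, the separator $\sep=\{\lsep,\rsep\}$ is a pair of edges, giving $\binom{m}{2}=n^{\Oh(1)}$ choices, while each of $\md_1,\md_2$ is a subset of $V_i$ of size $4\sqrt{k}+2$, giving $\binom{n}{\Oh(\sqrt{k})}^{2}=n^{\Oh(\sqrt{k})}$ choices. Step B introduces no guesses (labels of $\epts(\sep)$ are determined by the case identified in Step A). In step C, each of $\crs(\lsep),\crs(\rsep)$ is a set of at most $\sqrt{k}$ edges from $E$, contributing $\binom{m}{\sqrt{k}}^{2}=n^{\Oh(\sqrt{k})}$; assigning one of the four feasible ordered label-pairs to each of the at most $2\sqrt{k}$ edges of $\crosse$ contributes an additional factor of $4^{2\sqrt{k}}=2^{\Oh(\sqrt{k})}$.

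For step D, the components in $\spl(v)$ need no guessing. For $\predef(v)$, we guess two vertices inside each $V_i^j$ for $i\in[2]$ and $j\in[\lambda]$; since $\lambda=\Oh(\log k^\star)$, the total number of guesses is $n^{\Oh(\lambda)}=n^{\Oh(\log k^\star)}$, and because the recursive case applies only when $k>c\sqrt{k^\star}$, we have $\log k^\star=\Oh(\sqrt{k})$, giving $n^{\Oh(\sqrt{k})}$. For $\updown(v)$, \Cref{lem:2type3components} bounds $|\updown(v)|\le 4\sqrt{k}$ for each of the (at most four) vertices in $\epts(\sep)$, so assigning labels from a constant-size set yields $2^{\Oh(\sqrt{k})}$ guesses. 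For $\pendant(v)$ in the {\sf SB}/{\sf ST} cases, the pendant-blueprint $\pbp$ is a pair of non-negative integers with sum at most $n$, contributing $n^{\Oh(1)}$ guesses, after which the pendant-guess is produced deterministically by a knapsack dynamic program.

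Step E is fully deterministic (BFS-style propagation). In step F, we guess a permutation $\pi^{\lambda+1}_i$ of $S_i$ for each $i\in[2]$, where $|S_i|=\Oh(\sqrt{k})$ since $S\subseteq \md_1\cup\md_2\cup\epts(\crosse)$ and all three sets have size $\Oh(\sqrt{k})$. This gives $(\Oh(\sqrt{k})!)^{2}=2^{\Oh(\sqrt{k}\log k)}=k^{\Oh(\sqrt{k})}\le n^{\Oh(\sqrt{k})}$. Finally, step G guesses a pair $(k^\lf,k^\rt)$ of non-negative integers with $k^\lf+k^\rt=k'\le k$, contributing at most $k^{2}=n^{\Oh(1)}$ guesses. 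Multiplying the bounds from steps A through G yields a total of $n^{\Oh(\sqrt{k})}$. The main bookkeeping point is to verify that step D's $\predef(v)$ handling stays within the $n^{\Oh(\sqrt{k})}$ budget; this is where we invoke the assumption $k>c\sqrt{k^\star}$ (i.e., that we are in the recursive regime) to absorb the $\log k^\star$ factor.
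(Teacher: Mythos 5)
Your proposal is correct and follows the same approach the paper intends: the paper simply asserts "the following lemma follows from the description of Step 2," with the per-step guess counts ($n^{\Oh(1)}$ or $n^{\Oh(\sqrt{k})}$ or $2^{\Oh(\sqrt{k}\log k)}$) stated inline in Steps A through G, and you have spelled out exactly that bookkeeping. Your accounting matches: Steps B and E are deterministic, Step A contributes $n^{\Oh(\sqrt{k})}$ for $\md_1,\md_2$, Step C contributes $n^{\Oh(\sqrt{k})}$ for $\crs(\lsep),\crs(\rsep)$ and $2^{\Oh(\sqrt{k})}$ for the label pairs, Step D contributes $n^{\Oh(\log k^\star)}$ for $\predef$ (absorbed by $k>c\sqrt{k^\star}$), $2^{\Oh(\sqrt{k})}$ for $\updown$, and $n^{\Oh(1)}$ for the pendant blueprint, Step F contributes $2^{\Oh(\sqrt{k}\log k)}$, and Step G contributes $k^{\Oh(1)}$, all multiplying to $n^{\Oh(\sqrt{k})}$.
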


\subsection{Step 3: Creating sub-instances.} 
For each set of guesses made in the previous step (as stated in \Cref{lem:2guessbound}), we create a pair of sub-instances, a left sub-instance $\cI^\lf$, and a right sub-instance $\cI^\rt$. In this step, we construct each such pair of sub-instances corresponding to a particular guess. Then, in the next step, we will recursively call our algorithm to solve each of the two sub-instances. 

\paragraph{Creating left sub-instance.} 
Let $G^{\lf} = G[V^{\lf}]$, where recall that $V^\lf$ is a set of all the vertices with label $\lf$ (including a vertex with multiple labels). Next, we add certain multi-edges in the graph.

\emph{Creating mutli-edges.} For each edge $uv \in \crs(\lsep)$, we identify a multi-edge $e_{add}(uv)$ to be added in each case. After the case analysis, we will discuss the multiplicity of this edge, and actually add it to the graph.
\begin{itemize}
	\item $u \in V_1$ has label $\lf$ and $v \in V_2$ has label has label $\mm$ or $\rt$.  
	\\\textbf{Operation.} Define $e_{add}(uv) \coloneqq (u, \ept(\lsep, 2))$.
	
	\item $v \in V_2$ has label $\lf$, and $u \in V_1$ has label $\mm$ or  $\rt$.
	\\\textbf{Operation.} Define $e_{add}(uv) \coloneqq (\ept(\lsep, 1), v)$.
\end{itemize}
Now, we proceed as follows:
\begin{itemize}
	\item If an edge is defined as $e_{add}$ of multiple edges, say $e_1, e_2, \ldots$, then its multiplicity is defined as the sum of the respective multiplicities of $e_1, e_2, \ldots$ -- here we adopt the convention that, if an edge $e_i$ belongs to $\regulare$, then its multiplicity is $1$.
	\item Finally, we add $e_{add}$ with the combined multiplicity to $\weightede^\lf$. Here, if $e_{add}$ is already in $\regulare$, we do not remove it.
\end{itemize}

Let $G^{\lf}$ denote the resulting graph. The left sub-instance is an extended instance  $\cI^{\lf} = (G^{\lf}, V_1^{\lf}, V_2^{\lf}, \bnd^\lf, \pi^\lf, k^\lf)$, where:
\begin{itemize}[leftmargin=*]
	\item For each $i \in [2]$, $V_i^\lf \coloneqq V_i \cap V^\lf$, and $\pi_i^\lf = \pi'_i$ projected to $V^\lf$. 
	\item $\boundary^\lf = \LR{\leftb^\lf, \rightb^\lf}$, where $\leftb^\lf = \leftb$, $\rightb^\lf = \lsep$. 
	\item $\diamondsuit$ Let $\bnd^\lf = \lbnd^\lf = \rbnd^\lf = \emptyset$, since these sets are irrelevant for normal extended instances.
	\\$\clubsuit$ For elaborate extended instances, let $\lbnd^\lf = \lbnd \cap V^\lf$ and $\rbnd^\lf = \ept(\lsep, 1)$, and $\bnd^\lf = \lbnd^\lf \cup \rbnd^\lf$.
\end{itemize}
At this point, we perform the following check based on whether $\cI$ was a normal ($\diamondsuit$), or an elaborate ($\clubsuit$) extended instance.
\\$\diamondsuit$ If $G^\lf$ is not connected, then we terminate the guess.
\\$\clubsuit$ If each $u \in V^\lf$ is reachable from some vertex of $v \in \bnd^\lf$ in $G^\lf$, then we terminate the guess.

The right sub-instance $\cI^\rt = (G^{\rt}, V_1^{\rt}, V_2^{\rt}, \bnd^\rt, \pi^\rt, k^\rt)$ is created analogously, and we perform the connectivity check as described above; we omit the description. 

Further, we note the following auxiliary observations, that satisfy the constraints on the extended instances, as mentioned in \Cref{def:extendedinstance2}.
\begin{itemize}
	\item $\clubsuit$ The parameter in each recursive call reduces by at least a factor of $1/2$ (and the base case is triggered when $k \le c \sqrt{k^\star}$), and each time we add two total orders to the relation $\pi$, it follows that the total number of partial orders remains bounded by $4\log k^\star$ at each step as required in \Cref{item:2partialorder}.
	\item $\clubsuit$ The size of $\bnd^\lf, \bnd^\rt$ remains bounded by $\bnd$, which was originally $3 \sqrt{k^\star}$, satisfying \Cref{item:2extdconn}.
	\item Note that, in the original instance $\weightede = \emptyset$, and in each call of the algorithm, the total sum of multiplicities of $\weightede$ incident to a vertex $u = \ept(e, i)$ for some $e \in \sep$, and some $i \in [2]$ added in that call, is at most the number of crossings the corresponding separator edge $e$ is involved in--which is bounded by $\sqrt{k}$. Again, since the parameter decreases by a constant factor in each recursive call, it holds that the total sum of multiplicities of all edges incident to a vertex, remains bounded by $\sum_{\ell \ge 0} \sqrt{(1/2)^\ell \cdot k^\star} = 2\sqrt{k^\star}$. \Cref{item:2multiplicity-sum} 
\end{itemize}

From the previous discussion, the following claim follows.

\begin{claim} \label{cl:2validinstances}\ 
	If $\cI$ was a valid normal (resp.~ elaborate) extended instance, then $\cI^\lf, \cI^\rt$ are valid normal (resp.~elaborate) extended instances.
\end{claim}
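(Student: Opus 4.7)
The plan is to verify each property listed in Definition~\ref{def:extendedinstance2} for $\cI^\lf$; the argument for $\cI^\rt$ is entirely symmetric. First I would handle the structural/syntactic items: $V(G^\lf) = V_1^\lf \uplus V_2^\lf$ by restricting the bipartition to vertices labeled $\lf$, and $E(G^\lf) = \regulare^\lf \uplus \weightede^\lf \uplus \boundary^\lf$ where $\boundary^\lf = \LR{\leftb, \lsep}$ consists of two distinct edges (they are distinct because $\leftb$ is a boundary edge while $\lsep$ is an interior blue edge by our guessing in Step~A). These items are immediate from the construction.

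Next, I would handle the connectivity requirement. In the normal case ($\diamondsuit$), the construction of Step~3 explicitly terminates the guess if $G^\lf$ is not connected, so any surviving guess produces a connected $G^\lf$ with $\bnd^\lf = \emptyset$, as required. In the elaborate case ($\clubsuit$), the construction terminates if some vertex of $V^\lf$ fails to be reachable from $\bnd^\lf \cup \epts(\boundary^\lf)$ in $G^\lf$, so the required reachability holds whenever the guess survives. The containment $\ept(\leftb^\lf,1) \in \lbnd^\lf$ and $\ept(\rightb^\lf,1) \in \rbnd^\lf$ holds by the definition $\rbnd^\lf = \LR{\ept(\lsep,1)}$ and $\lbnd^\lf = \lbnd \cap V^\lf$ (noting $\ept(\leftb,1) \in \lbnd$ in $\cI$). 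The size bound $|\bnd^\lf| \le 3\sqrt{k^\star}$ follows since $|\lbnd^\lf| \le |\lbnd|$ and $|\rbnd^\lf| \le 1$, which is well within the original budget for $\bnd$.

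For the partial-order property~\ref{item:2partialorder}, I would observe that $\pi^\lf_i$ is the restriction of $\pi'_i = \pi_i \cup \pi^{\lambda+1}_i \cup \pi^{\lambda+2}_i$ to $V^\lf_i$ (with the understanding that $\pi^{\lambda+2}_2$ is empty). Since at most two new total orders are added per recursive call, and the recursion depth is $\Oh(\log k^\star)$ (the parameter halves at each level until the base case at $k \le c\sqrt{k^\star}$), the total number of constituent total orders remains bounded by $4\log(k^\star)$ at every level. The specific form of $\pi_1^{\lf,1}$ required by~\ref{item:2partialorder}, with $\ept(\leftb^\lf,1)$ first, then $\lbnd^\lf \setminus \ept(\leftb^\lf,1)$, then $\rbnd^\lf \setminus \ept(\rightb^\lf,1)$, and finally $\ept(\rightb^\lf,1)$, is precisely what the construction of $\pi^{\lambda+2}$ in Step~F ensures: it is obtained by gluing the prefix of the analogous order for $\cI$ (restricted to $\lbnd^\lf$) with the single new endpoint $\ept(\lsep,1)$ appended at the right.

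Finally, for the multi-edge properties (items~5 and~\ref{item:2multiplicity-sum}), every newly created $e_{\text{add}}(uv) \in \weightede^\lf$ in Step~3 is incident to either $\ept(\lsep,1)$ or $\ept(\lsep,2)$, both of which lie in $\epts(\boundary^\lf)$, and the other endpoint of $e_{\text{add}}$ lies in $V^\lf$ but not in $\epts(\boundary^\lf)$; hence each such multi-edge is incident to exactly one vertex of $\epts(\boundary^\lf)$. For the multiplicity sum at a fixed endpoint $x \in \epts(\boundary^\lf)$, the contribution from newly added multi-edges incident to endpoints of $\lsep$ is at most $|\crs(\lsep)| \le \sqrt{k}$, while multi-edges incident to endpoints of $\leftb$ carry over from $\cI$ with their existing bound. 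Summing the contributions across the $\Oh(\log k^\star)$ levels of recursion using $\sum_{\ell \ge 0} \sqrt{k^\star / 2^\ell} \le 2\sqrt{k^\star}$, exactly as in the bullet points preceding the claim, yields the required bound of $2\sqrt{k^\star}$. The main obstacle, if any, is keeping the bookkeeping for the partial orders in the elaborate case straight, since one must ensure that the prefix/suffix structure of the first total order is preserved under restriction to $V^\lf$ and under the replacement of $\ept(\rightb,1)$ by $\ept(\lsep,1)$ as the new right endpoint; this is ultimately a direct consequence of how $\pi^{\lambda+2}$ is defined in Step~F.
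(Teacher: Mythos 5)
Your proof is correct and takes essentially the same approach as the paper: the paper defers the claim to a sequence of bullet-point observations preceding it (bounding the number of constituent total orders, the size of $\bnd^\lf$, and the multiplicity sums), and your argument simply makes that verification explicit item by item against Definition~\ref{def:extendedinstance2}, including the connectivity check (which the construction enforces by terminating bad guesses) and the structure of $\pi^{\lambda+2}$.
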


\subsection{Step 4: Conquer left and right.} For each guess in Step 2, we obtain a pair of sub-instances $\cI^\lf$ and $\cI^\rt$, with each having a parameter at most $k/2$. We focus on one pair of instances that we solve recursively in time $2\cdot n^{\Oh(\sqrt{k})}$. There are two possibilities. The first one is that at least one of the recursive calls returns that the corresponding sub-instance is a \textsc{No}-instance, in which case we proceed to the next guess. Hence, assume that both recursive calls return that $\cI^{\lf}$ and $\cI^{\rt}$ are \textsc{Yes}-instances, and in addition they also output an ordering of the corresponding vertices. 


\subsection{Step 5: Obtaining the Final Drawing.} 
If, for any of the non-terminated guesses, all three recursive calls reporting that the respective sub-instances are  \textsc{Yes}-instances, along with the corresponding orderings, then we can combine these orderings to obtain orderings of $V_1, V_2$ that has at most $k$ crossings, and respect the given partial order $\pi$. 

Specifically, suppose $\sigma^\lf, \sigma^\rt$ be the drawings of $\cI^\lf, \cI^\rt$ returned by the corresponding recursive calls. Then, we obtain the final drawing of $\cI$ by defining $\so'_i \coloneqq \sigma_i^\lf \circ \pi'_i(\md_i) \circ \sigma_i^\rt$ for $i \in [2]$. We return $\sigma' = (\sigma'_1, \sigma'_2)$ as our final drawing. Otherwise, if for each guess, either (i) it is terminated due to incompatibility, or (ii) one of the three sub-instances corresponding to the guess is found to be a {\sc No}-instance, then we output that $\cI$ is a {\sc No}-instance.

\begin{lemma} \label{lem:2drawingcomb}
	Let $\cI^\lf, \cI^\rt$ be a pair of sub-instances (with parameters $k^\lf, k^\rt$) corresponding to some guess made in Step 2. Further, let  $\sigma^\lf, \sigma^\rt$ be valid drawings of the $\cI^\lf, \cI^\rt$. Then, if we obtain a combined drawing $\sigma'$ as in Step 5, then $\sigma'$ is a valid drawing of $\cI$.
\end{lemma}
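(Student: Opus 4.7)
The plan is to verify two things: (1) structurally, $\sigma'$ satisfies every condition required of a drawing of an \ein in the sense of \Cref{def:2extdrawing}, and (2) the number of crossings of $\sigma'$, counted with multiplicity as in \Cref{def:2extdrawing}, is at most $k^\lf+k_{\sf current}+k^\rt=k$.

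For the structural part, first note that $V_i^\lf$, $\md_i$, and $V_i^\rt$ partition $V_i$ (with the shared separator endpoint(s) appearing in all three in the {\sf SB}/{\sf ST} cases but placed consistently at the junctions because of the boundary-edge constraints on $\sigma^\lf$ and $\sigma^\rt$), so the concatenation $\sigma^\lf_i \circ \pi'_i(\md_i) \circ \sigma^\rt_i$ yields a permutation of $V_i$. The endpoints of $\leftb$ and $\rightb$ lie at the extremes in $\sigma^\lf$ and $\sigma^\rt$ respectively, and thus in $\sigma'$, which takes care of the boundary-edge placement. Compatibility with $\pi$ reduces to three facts: $\sigma^\lf_i$ respects $\pi^\lf_i=\pi'_i|_{V_i^\lf}$, $\sigma^\rt_i$ respects $\pi^\rt_i=\pi'_i|_{V_i^\rt}$, and the ordering we impose on $\md_i$ is exactly $\pi'_i|_{\md_i}$; any cross-relation in $\pi_i$ between vertices placed in different chunks is inherited from the left-to-right arrangement of the chunks, which is consistent with the validity check on $\pi^{\lambda+1}$ performed in Step F. For elaborate instances, the consecutiveness requirement for $\lbnd,\rbnd$ follows because all of $\lbnd$ lives in $V_1^\lf$ (hence in the leftmost chunk of $\sigma'_1$), and $\rbnd$ analogously in the rightmost chunk.

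For the crossing count, I partition the pairs of edges crossing in $\sigma'$ into three classes: (L) both endpoints of both edges lie in $V^\lf\cup\epts(\lsep)$; (R) both lie in $V^\rt\cup\epts(\rsep)$; and (M) everything else, namely pairs involving at least one edge of $\sep\cup\mde\cup\crosse$. Class (L) crossings are in bijection with crossings in $\sigma^\lf$: a regular edge of $G^\lf$ behaves identically in $\sigma^\lf$ and in $\sigma'$ (same endpoints, same order), while a multi-edge $e_{\rm add}(uv)\in\weightede^\lf$ has the same $V_1^\lf$-endpoint as the original crossing edge $uv\in\crs(\lsep)$, and its $V_2^\lf$-endpoint is $\ept(\lsep,2)$, which in $\sigma'$ lies at exactly the same relative ``position'' with respect to every vertex of $V_2^\lf$ as does the true endpoint $v\in V_2\setminus V_2^\lf$ (since $v$ lies outside the left chunk of $\sigma'_2$). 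This is the key geometric observation. Consequently, a pair $(e_1,e_2)$ of edges of $G^\lf$ crosses in $\sigma^\lf$ iff the corresponding pair of original edges crosses in $\sigma'$, and the multiplicity convention of \Cref{def:2extdrawing} matches the true crossing count when one or both sides are multi-edges. The symmetric statement handles class (R). Class (M) crossings are exactly determined by the order $\pi^{\lambda+1}$ guessed in Step F together with $\sep$, $\mde$, $\crosse$; by definition $k_{\sf current}$ counts precisely these crossings.

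The main obstacle is the bijection argument in class (L)/(R): one must rule out ``phantom'' crossings created by replacing an edge of $\crs(\lsep)$ by a multi-edge into $\ept(\lsep,2)$. The care here is that the replaced edge may be adjacent to other multi-edges sharing the endpoint $\ept(\lsep,2)$, which would force crossings among themselves in $\sigma^\lf$; these are exactly the crossings that \Cref{def:2extdrawing}\ref{item:drawing} tells us not to count, and the corresponding original edges (all incident to distinct vertices on the far side of the separator) do not in fact cross each other in $\sigma'$, so the convention matches reality. A secondary care point is the ``extra'' multi-edge-vs-multi-edge crossings accrued in Step G's $k_{\sf current}$: one must check that the accounting there includes precisely the crossings in class (M) and none that have already been charged to $\sigma^\lf$ or $\sigma^\rt$. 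Once these bookkeeping checks are in place, summing gives at most $k^\lf+k_{\sf current}+k^\rt=k$ crossings in $\sigma'$, completing the proof.
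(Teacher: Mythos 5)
Your overall plan — verify the structural conditions of \Cref{def:2extdrawing} and then argue a three-way accounting of crossings — matches the spirit of the paper's proof, which also hinges on showing that the $k_{\sf current}$ crossings are not double-charged to the sub-instances and then observes that the remaining bookkeeping is "straightforward." However, your formal partition into classes (L), (R), (M) is inconsistent with the bijection you then describe, and the final summation does not go through as stated. Take a crossing in $\sigma'$ between an edge $uv\in\crs(\lsep)$ (with $u$ labeled $\lf$ and $v$ labeled $\mm$ or $\rt$) and a \emph{regular} edge $e\in\regulare^\lf$. Per your definition, this pair is not in (L) (since $v\notin V^\lf$) and not in (R), so it falls in (M); yet it is \emph{not} counted in $k_{\sf current}$, which counts only crossings among pairs both lying in $\crosse\cup\mde\cup\sep$, and $e$ is not in that set. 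Instead it is (in the good cases) charged to $k^\lf$ via the crossing between $e_{\rm add}(uv)$ and $e$. So the claim "Class (M) crossings are exactly ... $k_{\sf current}$ counts precisely these crossings" is false, and $|L|+|R|+|M|$ does not simplify to $k^\lf+k^\rt+k_{\sf current}$ the way your last sentence asserts. The fix is to \emph{define} class (L) as the image of the $\sigma^\lf$-crossings under the edge correspondence that sends each $uv\in\crosse$ to $e_{\rm add}(uv)$ (and fixes regular edges), define (R) symmetrically, and define (M) as the residual set; then you must prove that (M) coincides with the $k_{\sf current}$ crossings and that (L), (R), (M) are pairwise disjoint.

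A second, more substantive point your proposal does not address: your "phantom crossings" caveat only concerns \emph{two multi-edges} incident to different endpoints of the same separator edge, which the counting convention of \Cref{def:2extdrawing} handles. But there is a parallel concern between a multi-edge $e_{\rm add}(uv)=(u,\ept(\lsep,2))$ and a \emph{regular} edge $e=(x,\ept(\lsep,2))$ of $G^\lf$ that happens to share the vertex $\ept(\lsep,2)$: in $\sigma^\lf$ these share a vertex and hence contribute zero crossings, whereas in $\sigma'$ the true edges $uv$ and $e$ do \emph{not} share a vertex (since $v\neq\ept(\lsep,2)$) and can cross when $\sigma'_1(u)<\sigma'_1(x)$. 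The "same relative position" observation you use breaks down precisely at $\ept(\lsep,2)$ itself. Your bijection argument needs to explicitly handle, or rule out, this case; right now it is silently assumed away.
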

\begin{proof}
	Recall that, in step F we guess the permutations $\pi^{\lambda+1}_i$ of $S_i$, and we use it in step G to determine the number of crossings $k_{\sf current}$ among the edges of $\crosse \cup \mde \cup \sep$. Then, we define $k = k' + k_{\sf current}$, and divide it into $k^\lf + k^\rt$, which are used as parameters for the respective sub-instances. 
	
	Next, we claim that all the $k_{\sf current}$ crossings among the edges are not counted in any of the sub-instances. To this end, first observe that the edges of $\mde$ are not involved in any of the subproblems. Next, consider any pair of edges $e_1, e_2 \in \crosse \cup \sep$, such that the crossing between $e_1$ and $e_2$ is counted in $k_{\sf current}$. Note that both $e_1, e_2$ cannot be in $\sep$, since the guessed permutation must satisfy that these edges do not cross. Then, suppose $e_1 \in \sep$ and $e_2 \in \crosse$. Then in each of the (at most) two sub-instances, the edge $e_2$ is replaced by a multi-edge $e'_2$ incident to one of the endpoints of $e$. Thus, in any drawing of the respective sub-instance, the edge $e_1$ and $e'_2$ cannot cross. \footnote{Note that in different sub-instances, the multi-edge $e'_2$ that replaces $e_2$ is different; however, in each of the sub-instances, it holds that $e_1$ cannot cross with any of the replacing $e'_2$.} 
	Finally, consider the case when $e_1, e_2 \in \crosse$. Since $e_1, e_2$ cross, we have two cases: (a) $e_1 \in \crs(e'_1)$ and $e_2 \in \crs(e'_2)$ for distinct edges $e'_1, e'_2 \in \sep$, or (b) $e_1, e_2 \in \crs(e)$ for some $e \in \sep$. In case (a),  the multi-edges replacing $e_1$ and $e_2$ belong to completely different sub-instances (note that we do not have a middle sub-instance). Thus, we are left with case (b), when they cross the same edge in the separator.
	Then, in each of the corresponding sub-instance, there is a multi-edge $e'_1$ that replaces $e_1$, and a multi-edge $e'_2$ that replaces $e_2$. If $e'_1$ and $e'_2$ are incident to the same endpoint of $e$, then they cannot cross. Alternatively, if $e'_1$ and $e'_2$ are incident to different endpoints of $e$, then due to \Cref{item:drawing} in \Cref{def:2extdrawing}, we do not count such a crossing. In any case, we have that the crossing between $e, e' \in \crosse \cup \mde \cup \sep$ is not counted in any of the sub-instances.
	
	Next, we note that in $\cI^\lf$, the left boundary edge $\leftb$ (and left extended boundary $\lbnd$, in the case of $\clubsuit$ elaborate instance) is derived from the current instance $\cI$, and thus are placed at the appropriate locations. An analogous property holds about $\cI^\rt$. Next, we note that the orderings $\pi^\lf,\pi^\rt$ is obtained by the projecting $\pi'$ (which is an extension of $\pi$) to the respective vertex subsets. Since the respective drawings are compatible with these orders, it follows that the combined drawing is compatible with $\pi'$, and hence with $\pi$. Furthermore, since $\pi'$ also contains the orderings $\pi^{\lambda+1}$, the respective drawings also satisfy the ordering on the vertices of $ \lbnd \cup \epts(\sep, 1)$. 
	
	Finally, since $\sigma^\lf, \sigma^\rt$ are valid drawings for $\cI^\lf, \cI^\rt$ respectively, when we obtain the combined drawing $\sigma'$ by gluing the three drawings, it is straightforward to verify that \Cref{def:2extdrawing} are satisfied. This shows that $\sigma'$ is a valid drawing for $\cI$ with $k$ crossings.
\end{proof}

\subsection{Proof of Correctness} \label{subsec:twolayerproof}

\begin{lemma} \label{lem:2layerlemma}
	$\cI$ is an \yin for parameter $k$ iff the algorithm returns a drawing $\sigma'$ of $\cI$ with at most $k$ crossings.
\end{lemma}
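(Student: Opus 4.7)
The plan is to prove both directions of the iff by induction on $k$, with base case $k \le c\sqrt{k^\star}$ handled directly by \Cref{lem:2layerbasecase}. The soundness direction ($\Leftarrow$) is essentially a corollary of the pieces already in place. Whenever the algorithm returns a drawing $\sigma'$, it has combined drawings $\sigma^\lf, \sigma^\rt$ coming from recursive calls on $\cI^\lf, \cI^\rt$, which are valid extended sub-instances by \Cref{cl:2validinstances} with parameters $k^\lf, k^\rt \le k/2$. By the induction hypothesis these sub-drawings have at most $k^\lf, k^\rt$ crossings respectively, and then \Cref{lem:2drawingcomb} guarantees that $\sigma'$ is a valid drawing of $\cI$. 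The crossing accounting $k^\lf + k^\rt + k_{\sf current} \le k$ follows from Step G combined with the multi-edge bookkeeping analysed inside the proof of \Cref{lem:2drawingcomb}.

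For the completeness direction ($\Rightarrow$), suppose $\cI$ is a \yin. I may assume $\cI$ is $k$-minimal by decrementing $k$ to the smallest value admitting a drawing; the algorithm still succeeds at the original $k$ by monotonicity in Step G. Fix a witness drawing $\sigma$ with exactly $k$ crossings. I need to exhibit one branch of guesses under which both sub-instances are \yin's at the required parameters. By \Cref{lem:2separatorlemma}, $\sigma$ admits a separator $\sep = \{\lsep,\rsep\}$ together with $\md_1, \md_2, \mde(\sep), \crosse(\sep)$, all within the cardinality bounds that Steps A--C enumerate. In Step D, the classification of components from \Cref{def:2componenttypes} is read off from $\sigma$: the $\spl, \predef, \updown$ labels are forced, and in the {\sf DE case} the $\pendant$ labels are forced by \Cref{lem:pendantcases}. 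In the {\sf SB} and {\sf ST cases}, \Cref{lem:2permutation} lets me replace $\sigma$ by a $(\sep,\sigma)$-compatible drawing $\widetilde{\sigma}$ that realises any $\pbp$-compliant pendant-guess, so the knapsack enumeration in Step D both succeeds and matches a concrete drawing. Guessing $\pi^{\lambda+1}$ in Step F consistent with the $\widetilde{\sigma}$-induced order on $S_1 \cup S_2$ makes all validity checks pass, and setting $k^\lf, k^\rt$ to the number of $\widetilde{\sigma}$-crossings strictly left of $\lsep$ and strictly right of $\rsep$ satisfies $k^\lf, k^\rt \le k/2$ by the separator definition. Label propagation in Step E is consistent with $\widetilde{\sigma}$ because any path from an unlabeled vertex to a labeled origin, after deleting opposite-side labeled vertices, must lie entirely on one side of $\sep$ in $\widetilde{\sigma}$.

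Under this branch, the sub-instances $\cI^\lf, \cI^\rt$ produced by Step 3 admit valid drawings---namely the restrictions of $\widetilde{\sigma}$ to $V^\lf, V^\rt$ suitably glued with the multi-edges---with at most $k^\lf, k^\rt$ crossings respectively. Applying the induction hypothesis to the recursive calls yields the required sub-drawings, and Step 5 stitches them into a drawing of $\cI$ with at most $k$ crossings. The principal technical obstacle is verifying that the multi-edge encoding in Step 3 faithfully reproduces crossing counts: when a crossing edge $uv \in \crs(\lsep)$ is replaced in $\cI^\lf$ by the multi-edge $e_{add}(uv)$, one must check that (i) this rerouted edge crosses exactly the same regular edges of $\cI^\lf$ as $uv$ did on the left of $\lsep$ in $\widetilde{\sigma}$, (ii) pairs of multi-edges incident to distinct endpoints of the same separator edge are correctly discounted by \Cref{item:drawing} of \Cref{def:2extdrawing}, and (iii) multiplicities aggregate correctly when several $e_{add}(\cdot)$ collapse to a single underlying pair. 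Once this accounting identity $k = k^\lf + k^\rt + k_{\sf current}$ is established, the remaining checks---connectivity of $G^\lf, G^\rt$ in the normal case, reachability from $\bnd$ in the elaborate case, and compatibility with the extended partial order $\pi'$---follow routinely from how $\widetilde{\sigma}$ restricts to the two sides.
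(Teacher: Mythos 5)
Your proposal follows the same route as the paper: base case via \Cref{lem:2layerbasecase}, forward direction via the separator lemma, the {\sf SB}/{\sf ST} permutation lemma, label propagation, and the combining lemma; reverse direction by observing the returned drawing witnesses that $\cI$ is a \yin. However, there is one step the paper carries out that your proposal omits, and it is needed to close the induction: showing that the sub-instances $\cI^\lf, \cI^\rt$ are themselves $k^\lf$- and $k^\rt$-minimal. The paper argues this by contradiction --- if, say, $\cI^\lf$ admitted a drawing with fewer than $k^\lf$ crossings, then the gluing argument of \Cref{lem:2drawingcomb} would produce a drawing of $\cI$ with fewer than $k$ crossings, contradicting the assumed $k$-minimality of $\cI$.

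This matters because the forward direction of the inductive argument is only established for $k$-minimal instances: \Cref{lem:2separatorlemma} and \Cref{def:2separator} require $k$-minimality, and they are invoked inside the recursion. So the inductive hypothesis you can legitimately apply is really ``for $k'$-minimal extended instances the algorithm returns a drawing,'' not the lemma statement as literally written. Your ``decrement $k$ to the smallest value admitting a drawing, by monotonicity in Step G'' is in the right spirit for reducing a non-minimal top-level input to a minimal one, but it does not by itself repair the recursion: Step G guesses $k^\lf + k^\rt = k'$ \emph{exactly} and recurses with those precise parameters, so after padding the split, the recursive calls genuinely receive sub-instances that are yes-instances but need not be minimal, and your argument has nothing to say about them. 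You need either the paper's sub-instance minimality contradiction, or to make your reduction-to-minimal argument an explicit part of the inductive step, to close this gap. Aside from that, and the cosmetic imprecision that the $\predef$ and $\updown$ labels in Step D involve genuine guessing of threshold vertices and per-component labels rather than being ``forced,'' your outline matches the paper's proof.
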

\begin{proof}
	In the base case when $k \le c \sqrt{k^\star}$, 
	the correctness of the algorithm follows via \Cref{lem:2layerbasecase}. Thus, we consider the recursive case. Suppose inductively that the statement is true for all values of parameter strictly smaller than $k > c \sqrt{k^\star}$, and we want to show the statement holds for $k$.  
	
	\textbf{Forward direction.} Suppose $\cI$ is a $k$-minimal \yin, and let $\sigma$ be a drawing of $\cI$ with $k$ crossings. Consider the guess where the following objects are correctly guessed in the algorithm:
	\begin{enumerate}
		\item Separator $\sep(\sigma)$ as guaranteed by \Cref{lem:2separatorlemma},
		\item Sets $\crosse(\sep)$ and $\mde(\sep)$ and the relative ordering between the corresponding vertices is guessed according to $\sigma$. 
		\item For each $v \in \epts(\sep)$, by \Cref{lem:2type3components}, $|\updown(v)| \le 4\sqrt{k}$. Then, for each component $C \in \updown(v)$, if all the vertices of $C \cap V_i$ belong to the left of $\ept(\lsep, i)$ (resp. to the right of $\ept(\rsep, i)$) for each $i \in [2]$, then consider the guess where it is labeled $\lf$ (resp.~$\rt$).
	\end{enumerate}
	
	Now we distinguish between three cases, namely {\sf DT, SB, ST}. 
	\begin{description}[leftmargin=1pt]
		\item[$\blacktriangleright$ {\sf DE case.}] Here, all components of $\pendant(v), v \in \epts(\sep)$ behave according to \Cref{lem:pendantcases}, which is how the algorithm labels the respective components.
		\item[$\blacktriangleright$ {\sf SB case.}] In this case, all components of $\pendant(\ept(\lsep, 2)) \cup \pendant(\ept(\rsep, 2))$ behave according to \Cref{lem:pendantcases}, and the algorithm also labels such components accordingly.
		\\Now, we are left with labeling the components of $\pendant(\ept(\lsep, 1))$. In this case, let $(\totall, \totalr)$ be the pair of integers guaranteed by \Cref{lem:2permutation}, and consider the guess corresponding to $(\totall, \totalr)$. In this case, \Cref{lem:2permutation} implies that a $(\totall, \totalr)$-partition of $\pendant(v)$ exists, and one such partition, say $(\pendantl, \pendantr)$ will be found by the algorithm. \Cref{lem:2permutation} implies that there exists a drawing $\sigma'$ in the family $\nicefamily(\sigma,\ept(\lsep, 1), \totall, \totalr, \pendantl, \pendantr)$, w.r.t. which $\sep(\sigma)$ is still a separator, and furthermore, $\sigma'$ is compatible with all the guesses made so far. Henceforth, we consider this guess, and use $\sigma'$ instead of $\sigma$ in the subsequent analysis. For ease of notation, let us rename $\sigma \gets \sigma'$. 
		\item[$\blacktriangleright$ {\sf ST case}.] This case is handled in a symmetric manner as {\sf SB} case, and we omit the description.
	\end{description}
	Let $\sigma$ be the drawing obtained at the end after appropriate redefintion (if any), as described above. In the following claim, we show that the labeling process correctly labels all the vertices w.r.t. the (new) drawing $\sigma$.
	
	\begin{claim} \label{cl:2cllabeling}
		Let $\sigma$ be the (new) drawing as described above, and consider the guess as defined above. Then, after the vertex labeling process of step H corresponding to this particular guess, we have that $V^{\cal T} = V(\sigma, \sep, {\cal T})$, for each ${\cal T} \in \LR{\lf, \mm, \rt}$. Note that the definition of the sets $V(\sigma, \sep, {\cal T})$ can be found in \Cref{def:2division}.
	\end{claim}
	\begin{proof}[Proof of \Cref{cl:2cllabeling}]
		We consider $\clubsuit$ elaborate instances first, since the proof for the $\diamondsuit$ normal instances is only simpler. 
		
		Let us consider a vertex $u \in V(\sigma,\sep, \lf)$. By  of \Cref{def:extendedinstance3}, $u$ is reachable from some vertex $v \in \bnd \cup \epts(\boundary)$, say via a path $P(uv)$. 
		We consider different cases based on $v$ and a path $P(u \leadsto v)$. Let $P(uv) = \langle u = w_0, w_1, w_2, \ldots, w_\ell = v \rangle$. Let $w_{i+1}$ be the first vertex (if any) along $P(u \leadsto v)$, such $w_0, w_1, \ldots, w_i \in V(\sigma, \lf)$, and $w_{i+1} \not\in V(\sigma, \lf)$. Otherwise, let $w_i = w_\ell = v$. We consider these two cases separately; see \Cref{fig:2pathfigure}.
		
		\begin{figure}[h]
			\centering
			\includegraphics[scale=0.8,page=15]{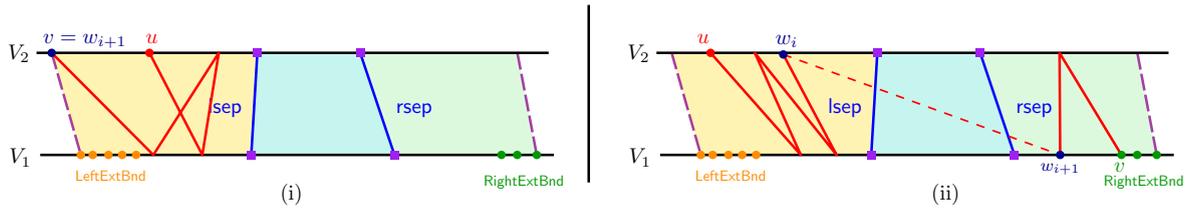}
			\caption{Illustration of the two cases in \Cref{cl:2cllabeling}. In case (i) on the left, the entire path contains vertices of $V(\sigma, \lf)$, and the path ends at vertex $v \in {\sf Origin}^\lf$. In case (ii) on the right, the path goes outside $V(\sigma,\sep, \lf)$; however to do so, it must contain a crossing edge (shown as dashed red), in which case its left endpoint $w_i \in {\sf CrossLeft} \subseteq {\sf Origin}^\lf$.}\label{fig:2pathfigure}
		\end{figure}
		
		(i) If $w_i = v$, then since $w_i \in V(\sigma, \lf)$, it cannot be the case that $w_i \in (\rbnd \cup \epts(\rightb)) \setminus \epts(\sep)$. Thus, it implies that $w_i \in \lbnd \cup \epts(\leftb) \cup {\sf CrossLeft} = {\sf Origin}^\lf$. 
		
		(ii) If $w_i \neq v$, then we know that $w_{i+1}$, the vertex immediately after $w_i$ in $P(uv)$ satisfies that $w_{i+1} \notin V(\sigma, \lf)$. Then, it follows that the edge $(w_i, w_{i+1})$ crosses some edge of $\sep$--implying that $w_i \in {\sf CrossLeft}$.
		
		
		

		This implies that, for any vertex $u \in V(\sigma, \sep,\lf) \setminus {\sf Origin}^\lf$, and any path $P(u \leadsto v)$ to a vertex $v \in \bnd \cup \epts(\boundary)$, the last vertex $w$ with label $\lf$ must belong to the set ${\sf Origin}^\lf$. Further, the subpath $P(u \leadsto w)$ does not contain any vertex with a label $\mm$ or $\rt$. This also implies that, all the vertices of $V(\sigma, \lf)$ are reachable from some vertex $w' \in {\sf Origin}^\lf$ in the graph $G - (O^{\mm} \cup O^{\rt})$. Thus, all the unlabeled vertices in $V(\sigma, \lf)$ get labeled $\lf$ in the label propagation phase. This shows that $V(\sigma, \lf) \subseteq V^\lf$. Similar proofs also show that $V(\sigma, \sep,\mm) \subseteq V^\mm$, and $V(\sigma, \sep,\rt) \subseteq V^\rt$. Now, consider a vertex of $u' \in (V(\sigma, \sep,\mm) \cup V(\sigma, \sep,\rt)) \setminus {\sf Origin}^\lf$. However, such a vertex $u'$ belongs to $O^\mm \cup O^\rt$, and hence is not part of the graph $G - (O^\mm \cup O^\rt)$. Thus, it cannot receive label $\lf$, which shows that $V^\lf \subseteq V(\sigma, \sep,\lf)$. By combining both containments, we obtain that $V^\lf = V(\sigma, \sep,\lf)$. Similar proofs also show that $V^\mm = V(\sigma, \sep,\mm)$, and $V^\rt = V(\sigma, \sep,\rt)$, and are therefore omitted.
		
		The proof for $\diamondsuit$ normal instances is simpler, since the graph $G$ is connected. Therefore, each vertex $u$ is reachable from any other vertex, and we can select an arbitrary vertex to play the part of $v$ in the proof above.
	\end{proof}
	
	After accounting for the $k_{\sf current}$ crossings among the endpoints of $\crosse \cup \mde \cup \sep$ in $\sigma$, the rest of the $k'$ crossings are either to the left of $\lsep$ or to the right of $\rsep$ (this argument is similar to \Cref{lem:2drawingcomb}). Let the number of these crossings be $k^\lf, k^\mm$, respectively, and consider the guess where these numbers are correctly guessed. Further, since we label all the vertices correctly, it follows that all the vertices of $V^\lf$ (resp.~$V^\rt$) are reachable from ${\sf Origin}^\lf$ (resp.~${\sf Origin}^\rt$). 
	
	$\diamondsuit$ For normal extended instances, we note that any $u \leadsto v$ path, where $u, v \in V^\lf$, must either be completely contained inside $G'^\lf = G[V^\lf]$, use an edge of $\crs(\lsep) \cup \lsep$. In any case, due to the way we add extra edges to $\weightede^\lf$, it follows that $u$ and $v$ are also reachable in $G^\lf$. 
	
	$\clubsuit$ Now, consider elaborate extended instances. Then, we observe that vertices of ${\sf CrossLeft}$ (resp.~${\sf CrossRight}$) are directly connected to one of the vertices of $\epts(\LR{\lsep})$ (resp.~$\epts(\LR{\rsep})$) in the graph $G^\lf$ (resp.~$G^\rt$).
	
	In either case, we have shown that these guesses are not terminated by the connectivity check done in Step 3 after creating the instances $\cI^\lf, \cI^\rt$, respectively. Then, \Cref{cl:2validinstances} implies that $\cI^\lf, \cI^\rt$ are valid extended instances. 
	Further, $\sigma$ restricted to the respective vertex sets $V^\lf, V^\rt$ gives a valid drawing of the respective instances. This implies that $\cI^\lf, \cI^\rt$ are {\sc Yes}-instances for the parameters $k^\lf, k^\rt$, respectively. 
	
	Next, we claim that, since $\cI$ is a $k$-minimal instance, both of these instances must also be minimal instances for the respective parameters. suppose for contradiction that $\cI^\lf$ is a {\sc Yes}-instance, but is not $k^\lf$-minimal {\sc Yes}-instance, then it admits a drawing $(\sigma')^\lf$ with $(k')^\lf < k^\lf$ crossings. Then, we can apply an argument from \Cref{lem:2drawingcomb}, in order to obtain a drawing for $\cI$ with fewer than $k$ crossings, contradicting the $k$-minimality of $\cI$. A similar argument also holds for $\cI^\rt$.
	
	Then, by induction (since $k^\lf, k^\rt \le k/2$) the respective recursive calls return valid drawings $\sigma^\lf, \sigma^\rt$ of the respective instances. Then, \Cref{lem:2drawingcomb} implies that that in Step 5, we return a combined drawing $\sigma'$ corresponding to this guess.

	\textbf{Reverse direction.} The reverse direction is trivial, since the valid drawing $\sigma'$ of $\cI$ witnesses that $\cI$ is a \yin.
	
\end{proof}

In the following lemma, we argue that the algorithm runs in subexponential time.

\begin{lemma} \label{lem:2runtime}
	For any input \ein $\cI$ with parameter $k$, the algorithm runs in time $T_2(k) \le n^{\Oh(\sqrt{k})}$.
\end{lemma}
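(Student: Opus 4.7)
The plan is to verify that the recurrence (\ref{eqn:2recurrence}) does describe the running time of the algorithm, and then to solve it by straight unrolling, relying on the convergence of a geometric series. First, I would confirm the recurrence: by \Cref{lem:2guessbound}, Step~2 enumerates at most $n^{O(\sqrt{k})}$ guesses; for each guess, Step~3 constructs the sub-instances $\cI^\lf,\cI^\rt$ in polynomial time, Step~4 recursively solves each (and both have parameter at most $k/2$ by the separator's balancedness in \Cref{def:2separator} together with the guess of $(k^\lf,k^\rt)$ in Step~G), and Step~5 glues the drawings in polynomial time. The base case $k \le c\sqrt{k^\star}$ runs in time $n^{O(\sqrt{k^\star})}$ by \Cref{lem:2layerbasecase}. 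Thus for suitable constants $\alpha,\beta$,
\begin{equation*}
T_2(n,k) \;\le\; \alpha\cdot n^{\beta\sqrt{k}}\cdot T_2(n,k/2) + n^{O(1)},
\end{equation*}
matching (\ref{eqn:2recurrence}), where we treat the input parameter as $k^\star = k$ at the top-level call.

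Next I would unroll. Set $k_i := k/2^i$, so the recursion terminates once $k_i \le c\sqrt{k^\star}$, i.e., after at most $d = O(\log k)$ levels. Along any root-to-leaf path, the accumulated branching/overhead multiplies to
\begin{equation*}
\prod_{i=0}^{d-1} n^{O(\sqrt{k_i})} \;=\; n^{\,O\bigl(\sqrt{k}\,\sum_{i\ge 0}(1/\sqrt{2})^{i}\bigr)} \;=\; n^{O(\sqrt{k})},
\end{equation*}
because $\sum_{i\ge 0}(1/\sqrt{2})^{i}$ is a convergent geometric series bounded by the fixed constant $\tfrac{1}{1-1/\sqrt{2}}$. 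This simultaneously bounds the number of leaves in the recursion tree by $n^{O(\sqrt{k})}$. Each leaf corresponds to a base-case call costing $n^{O(\sqrt{k^\star})} = n^{O(\sqrt{k})}$, so the total leaf cost is $n^{O(\sqrt{k})}\cdot n^{O(\sqrt{k})} = n^{O(\sqrt{k})}$. The internal-node overhead of Steps~3 and~5 is $n^{O(1)}$ per node and is absorbed by the branching.

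Combining these bounds gives $T_2(n,k) = n^{O(\sqrt{k})}$, as claimed. I do not anticipate any substantive obstacle; the only delicate point is to ensure the hidden constants in the exponent do not escalate through the recursion, and that is exactly what the geometric-series sum takes care of, collapsing the contribution of every level into a single additive constant multiple of $\sqrt{k}$ in the exponent.
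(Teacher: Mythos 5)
Your proof is correct and follows the same approach as the paper: verify that the algorithm satisfies the recurrence (\ref{eqn:2recurrence}) and then observe that it resolves to $n^{\Oh(\sqrt{k})}$. The paper simply asserts that "the claim follows via induction," while you make the geometric-series unrolling explicit — your $\sum_{i\ge 0}(1/\sqrt{2})^{i}$ computation is precisely the substance needed for that induction to close (one must choose the target exponent constant $\beta$ large enough that $\alpha + \beta/\sqrt{2} \le \beta$), so this is a more detailed presentation of the same argument rather than a genuinely different one.
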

\begin{proof}
	The base case corresponds to when $k \le c \sqrt{k^\star}$,
	for some constant $c$. In this case, \Cref{lem:2layerbasecase} implies that the algorithm runs in time $n^{\Oh(\sqrt{k})}$.
	
	Otherwise, consider the recursive case, and suppose the claim is true for all $k' \le k-1$, where $k > c \sqrt{k^\star}$ is the parameter of the given instance $\cI$. \Cref{lem:2guessbound} implies that the total number of guesses is bounded by $n^{\Oh(\sqrt{k})}$, and corresponding to each guess that is not terminated, we make two recursive calls to the algorithm with parameters at most $k/2$ each. Thus, $T_2(n,k)$ satisfies the recurrence given in \Cref{eqn:2recurrence}. Then, the claim follows via induction.
\end{proof}

By combining \Cref{obs:2normal2extended}, \Cref{lem:2layerlemma}, \Cref{lem:2runtime}, and the fact that due to linear-time kernelization algorithm of \Cref{prop:kern-two}, we have $n = (k^\star)^{\Oh(1)}$, we obtain the following theorem (where we rename $k^\star$ as $k$).

\begin{theorem} \label{thm:2layertheorem}
	On an $n$ vertex graph, \probTwoCr can be solved in time $2^{\Oh(\sqrt{k} \log k)} + n \cdot k^{\Oh(1)}$, where $k$ denotes the number of crossings.
\end{theorem}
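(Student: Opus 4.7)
The plan is to combine the two principal ingredients already established in the excerpt: the linear-time $\Oh(k^2)$-vertex kernelization of \Cref{prop:kern-two} and the recursive algorithm for extended instances analysed in \Cref{lem:2layerlemma} and \Cref{lem:2runtime}. Given an input $(H,U_1,U_2,k)$, I would first handle connectivity by processing each connected component of $H$ separately: if $H$ has components $H_1,\dots,H_r$, the instance is a \yin iff there exist $k_1,\dots,k_r \ge 0$ with $\sum k_j \le k$ such that each $(H_j,U_1\cap V(H_j),U_2\cap V(H_j),k_j)$ is a \yin. Since components contribute crossings independently, I can treat each component in isolation and sum the minimum numbers of crossings found (so there is no actual enumeration of the $k_j$'s; each component is solved to optimality up to the bound $k$).

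For each connected component, the second step is to invoke \Cref{prop:kern-two} in $\Oh(|V(H_j)|)$ time to obtain an equivalent kernelized instance on $n_j = \Oh(k^2)$ vertices and edges. Summing across all components, the total kernelization cost is $\Oh(n)$ and the kernelized graphs together have total size $\Oh(r \cdot k^2) \le \Oh(n)$ (in fact we only need to run the subexponential algorithm on the kernel sizes, which are each bounded by $\Oh(k^2)$).

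The third step is, for each kernelized connected component, to apply \Cref{obs:2normal2extended}: guess the leftmost/rightmost vertices of $U_1$ under a hypothetical optimal drawing, which yields at most $n^{\Oh(1)} = k^{\Oh(1)}$ normal extended instances; the component is a \yin iff at least one of these extended instances is an extended \yin. On each such extended instance (which has $\Oh(k^2)$ vertices by kernelization), I run the recursive subexponential algorithm. By \Cref{lem:2layerlemma} it correctly decides the extended instance and outputs a drawing when one exists, and by \Cref{lem:2runtime} it runs in time $n^{\Oh(\sqrt{k})}$, which on the kernel evaluates to $(k^2)^{\Oh(\sqrt{k})} = 2^{\Oh(\sqrt{k}\log k)}$. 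Aggregating over the $k^{\Oh(1)}$ guesses from \Cref{obs:2normal2extended} and over all connected components, the total algorithmic cost after kernelization is $2^{\Oh(\sqrt{k}\log k)}$. Adding the $\Oh(n)$ kernelization cost plus the $n \cdot k^{\Oh(1)}$ preprocessing to detect components and set up the per-component calls yields the claimed $2^{\Oh(\sqrt{k}\log k)} + n \cdot k^{\Oh(1)}$ bound.

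No step in this assembly presents a genuine obstacle, since the heavy lifting is already carried out by the recursion analysis and by the kernelization of Kobayashi et al.; the only point worth being careful about is that \Cref{lem:2runtime} is stated for extended instances with an original parameter $k^\star$, and one must ensure we apply it with $k^\star = k$ (the bound on crossings for the original input) so that the base case threshold $c\sqrt{k^\star}$ and the $n^{\Oh(\sqrt{k^\star})}$ running time of the base case are consistent with the kernel size $n = \Oh(k^2)$. With this identification, $n^{\Oh(\sqrt{k^\star})} = 2^{\Oh(\sqrt{k}\log k)}$, completing the proof.
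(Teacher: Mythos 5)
Your proposal follows the same overall route as the paper, which proves \Cref{thm:2layertheorem} by directly combining \Cref{obs:2normal2extended}, \Cref{lem:2layerlemma}, \Cref{lem:2runtime}, and \Cref{prop:kern-two}. The architecture (decompose into components, kernelize, reduce to extended instances, run the recursion, glue) is the right one, and your remark that $k^\star$ must be identified with the original crossing budget $k$ so that the base-case threshold and the $n^{\Oh(\sqrt{k^\star})}$ bound line up after the kernel shrinks $n$ to $k^{\Oh(1)}$ is exactly the point the paper is implicitly relying on.

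There is one genuine gap, however, in your aggregation over connected components. You propose to solve \emph{each} component to optimality with budget $k$ by kernelizing it and then running the recursive algorithm. \Cref{prop:kern-two} only guarantees a kernel with $\Oh(k^2)$ vertices per component; it does not promise a constant-size kernel for components that already admit a crossing-free drawing. So if $H$ has $\Theta(n)$ components, your plan runs the $2^{\Oh(\sqrt{k}\log k)}$-time subroutine $\Theta(n)$ times, for a total of $n\cdot 2^{\Oh(\sqrt{k}\log k)}$. That is strictly larger than the claimed additive bound $2^{\Oh(\sqrt{k}\log k)} + n\cdot k^{\Oh(1)}$ (and also not dominated by $n\cdot k^{\Oh(1)}$, since $2^{\Oh(\sqrt{k}\log k)}$ is superpolynomial in $k$). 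Your sentence ``Aggregating \ldots over all connected components, the total algorithmic cost after kernelization is $2^{\Oh(\sqrt{k}\log k)}$'' therefore does not follow from what you wrote before it.

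The fix is small but has to be stated: before kernelizing a component, check in linear time (via \Cref{obs:caterpillar}, or equivalently \Cref{prop:linear-time} with $h=2$) whether it admits a $2$-layer drawing with zero crossings; if so, output such a drawing and remove the component. After this screening, every surviving component contributes at least one crossing to any drawing, so if more than $k$ components survive the answer is ``no'', and otherwise at most $k$ components remain. Running the kernelization and then the recursion only on these at most $k$ components gives $\Oh(n) + k\cdot 2^{\Oh(\sqrt{k}\log k)} = n\cdot k^{\Oh(1)} + 2^{\Oh(\sqrt{k}\log k)}$, which is the stated bound. (The paper sidesteps this by declaring in the overview that the input graph may be assumed connected and then writing the theorem proof as a one-line citation of the four ingredients; since you chose to spell out the disconnected case, you need to make this screening step explicit.)
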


\section{Subexponential Algorithm for \probThreeCr}\label{sec:subexp}
In this section, we will design our subexponential FPT algorithm for \probThreeCr running in time $2^{\Oh(k^{2/3} \log k)} + n k^{\Oh(1)}$. The overall structure of this algorithm is similar to that for \probTwoCr -- the algorithm is recursive, and is based on dividing the instance into multiple sub-instances with the help of a separator (consisting of at most 4 edges, and two special subsets of vertices) with at most $\alpha k$ crossings its either side. However, the details are different in the following two crucial aspects: (i) the structure of the separator is more complex, which necessitates handling multiple cases (like {\sc DE, SB, ST cases} for \probTwoCr), and (ii) the analysis of connected components in $G-v$, where $v$ belongs to one of the endpoints of the separator edges, is more involved. Recall that, in the case of two layers, we needed to handle the set of pendant vertices attached to $v$, where we showed the existence of another, more structured, separator-compatible drawing in \Cref{lem:2permutation}. For three layers, this requires a much more sophisticated analysis. Another notable aspect we would like to highlight is that, at various places in the algorithm for three layers, we will use the two layer algorithm as a subroutine. 

\subsection{Setting Up}

First, we adopt the same notation set up in \Cref{def:2endpts} for $\ept(\cdot), \epts(\cdot)$ and extend it to three layers. Further, in any (extended) instance, we partition the set of edges into $E_{12} \cup E_{23}$, where $E_{12}$ (resp.~$E_{23}$) is the set of edges with one endpoint in $V_1$ and another in $V_2$ (resp.~one endpoint in $V_2$ and another in $V_3$). 
Next, we describe the notion of an extended instance.

We denote the \emph{original} instance of \probThreeCr as $(H, U_1, U_2, U_3, k^\star)$. In the course of the recursive algorithm, our algorithm will produce \emph{extended instances} of \probThreeCr, defined as follows. 
  
\begin{definition}[Extended Instance for $h = 3$] \label{def:extendedinstance3}
	An \emph{\ein} of \probThreeCr is given by $(G, V_1, V_2, V_3, \bnd, \pi, k_{12}, k_{23}, k)$ where,
	\begin{enumerate}
		\item $V(G) = V_1 \uplus V_2 \uplus V_3$, \item $E(G) = \regulare \uplus \weightede \uplus \boundary$.
		\\The set of edges between $V_1$ and $V_2$ respectively are denoted by $E_{12}$, and those between $V_2$ and $V_3$ are denoted $E_{23}$. 
		\item $\bnd = (\lbnd, \rbnd)$, where \label{item:3extdconn}
		\begin{itemize}
			\item $\boundary  = \LR{\leftbone, \rightbone, \leftbtwo, \rightbtwo}$ is a set of at most four \emph{boundary edges} of the instance, where $\leftbone, \rightbone \in E_{12}$ (both may be equal), and $\leftbtwo, \rightbtwo \in E_{23}$ (both may be equal). 
			\item 
			$\lbnd, \rbnd \subseteq V_2$ is a set of at most $3\sqrt{k^\star}$ vertices, where 
			\\(i) $\ept(\leftbone, 2), \ept(\leftbtwo, 2) \in \lbnd $, and
			\\(ii) $\ept(\rightbone, 2), \ept(\rightbtwo, 2) \in \rbnd$.
			\item Further, each $v \in V(G)$ is reachable from some vertex in $\epts(\boundary) \cup \lbnd \cup \rbnd$.
		\end{itemize} \label{item:3extendedconn}
		\item $\pi = (\pi_1, \pi_2, \pi_3)$, where for each $i \in [3]$, $\pi_i$ is a partial order on $V_i$ such that, $\pi_i$ is the union of $\lambda \ge 0$ different relations $\pi_i^1, \pi_i^2, \ldots, \pi_i^\lambda$, where for each $1 \le j \le \lambda$, $\pi_i^j$ is a total order on $V_i^j \subseteq V_i$, and the sets $\LR{V_i^j: 1 \le j \le \lambda}$ are not necessarily disjoint.\footnote{The fact that each $\pi_i$ is a union of $\lambda$ total orders is crucial only at a few specific steps in the algorithm, which is where we will make this notation explicit.} Here, $\lambda$ always remains bounded by $4\log(k^\star)$. \label{item:3partialorder}
		\\One of the orderings $\pi^j_2$ has ground set $V^j_2 = \bnd$, and it orders these vertices in the following way from left to right: 
		\\$u^l_1 < $ vertices of $\lbnd \setminus \LR{u^l_1, u^l_2}$ in some order $ < u^l_2 < u^r_1 < $ vertices of $\rbnd \setminus \LR{u^r_1, u^r_2}$ in some order $< u^r_2$. Here, $\LR{u^l_1, u^l_2} = \LR{\ept(\leftbone, 2), \ept(\leftbtwo, 2)}$, and $\LR{u^r_1, u^r_2} = \LR{\ept(\rightbone, 2), \ept(\rightbtwo, 2)}$.  
		
		\item Each $e \in \weightede$ is referred to as a \emph{multi-edge}, and has an associated weight/multiplicity $\mu(e)$ (which is a non-negative integer). 
		\item Each multi-edge $e \in \weightede$ is incident to exactly one vertex of $\epts(\boundary)$. Further, for each vertex $v \in \epts(\boundary)$, the total sum of multiplicities of multi-edges incident to $v$ is at most $2\sqrt{k^\star}$. \label{item:3multiplicity-sum}
		\item $k = k_{12} + k_{23}$, where $k, k_{12}, k_{23}$ are non-negative integers.
	\end{enumerate}
\end{definition}
\begin{figure}
	\centering
	\includegraphics[scale=0.85,page=1]{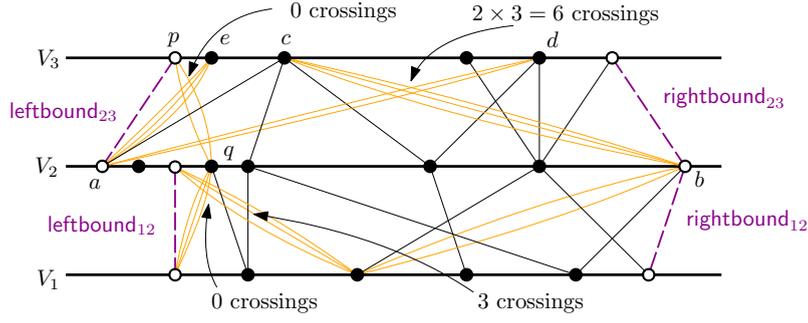}
	\caption{\small Left: Drawing of an extended instance for $h = 3$. $\boundary$ are shown as dashed magenta edges. Note that the top right and top right boundary edges share an endpoint. \weightede (shown in orange) are incident only to $\epts(\boundary)$ (shown as unfilled nodes). As explained in \Cref{def:3extdrawing}, crossings between multi-edges $ad$ and $bc$ are counted with multiplicities ($2 \times 3 = 6$), however, those between $ae$ and $pq$ are counted as $0$, since these multi-edges are incident to different endpoints of the same boundary edge, namely $\leftbtwo$.} \label{fig:extendedinstance}
\end{figure}

Next, we define the notion of a drawing of an extended instance, along the same lines as \Cref{def:2extdrawing}.

\begin{definition}[Drawing of an extended instance.] \label{def:3extdrawing}
	Let $\cI= (G, V_1, V_2, V_3, \pi, k_{12}, k_{23}, k)$ be an extended instance of \probThreeCr. We say that $\sigma = (\sigma_1, \sigma_2, \sigma_3)$ is a drawing of $\cI$ if the following properties are satisfied.
	\begin{enumerate}
		\item $\sigma_i$ is a permutation of $V_i$ for each $i \in [3]$.
		\item Ordering of vertices on respective lines:
		\begin{enumerate}
			\item $\sigma_1(\ept(\leftbone, 1)) \le \sigma_1(u_1) \le \sigma_1(\ept(\rightbone, 1))$ for all $u_1 \in V_1$.
			\item Let $\ell_{12} = \ept(\leftbone, 2), \ell_{23} = \ept(\leftbtwo, 2)$, \\$r_{12} = \ept(\rightbone, 2)$, and $r_{23} = \ept(\rightbtwo, 2)$. Then,
			\begin{itemize}
				\item $\min\LR{\sigma_2(\ell_{12}), \sigma_2(\ell_{23})} \le \sigma_2(u^l) \le \max\LR{\sigma_2(\ell_{12}), \sigma_2(\ell_{23})}$ for all $u^l \in \lbnd$,
				\item $\min\LR{\sigma_2(r_{12}), \sigma_2(r_{23})} \le \sigma_2(u^r) \le \max\LR{\sigma_2(r_{12}), \sigma_2(r_{23})}$ for all $u^r \in \rbnd$,
				\item $\min\LR{\sigma_2(\ell_{12}), \sigma_2(\ell_{23})} \le \sigma_2(u) \le \max\LR{\sigma_2(r_{12}), \sigma_2(r_{23})}$ for all $u \in V_2$.
			\end{itemize}
			\item $\sigma_3(\ept(\leftbtwo, 3)) \le \sigma_3(u_3) \le \sigma_3(\ept(\rightbtwo, 3))$ for all $u_3 \in V_3$.
			\item For each $e \in E_{12}$, for each $i \in [2]$, $\sigma_i(\ept(\leftbone, i)) \le \sigma_i(\ept(e, i)) \le \sigma_i(\ept(\rightbone, i))$.
			\item For each $e \in E_{23}$, for each $j \in \LR{2, 3}$, $\sigma_j(\ept(\leftbtwo, j)) \le \sigma_j(\ept(e, j)) \le \sigma_j(\ept(\rightbtwo, j))$.
		\end{enumerate}
		\item For each $u, v \in V_i$, if $\pi_i(u) < \pi_i(v)$, then $\sigma_i(u) < \sigma_i(v)$, i.e., $\sigma_i$ is compatible with $\pi_i$.
	\end{enumerate}
	Further, we count the crossings in the drawing $\sigma$ in the following way:
	\begin{itemize}[leftmargin=3pt]
		\item Constraints on $\sigma$ imply that no edge of $\boundary$ is crossed by any other edge.
		\item For any pair of edges $e_1, e_2 \in \regulare$, if they cross according to $\sigma$ (in the usual sense), then we count it as a single crossing.
		\item If an edge $e_r \in \regulare$ and an edge $e_m \in \weightede$ with multiplicity $\mu(e_m)$ cross in $\sigma$ (in the usual sense), then we count it as $\mu(e_m)$ crossings.
		\item Let $e_1, e_2 \in \regulare$ be two multi-edges.
		\begin{itemize}[leftmargin=10pt]
			\item If $e_1, e_2$ are incident to different endpoints of the same edge $e \in \boundary$, then they will necessarily cross in any drawing $\sigma$, that satisfies the constraints above. However we count zero crossings in $\sigma$. \footnote{These crossings will be already accounted for while creating the extended instance.}
			\item Otherwise, if $e_1, e_2$ cross in $\sigma$ (in the usual sense), then we count as $\mu(e_1) \cdot \mu(e_2)$ crossings.
		\end{itemize}
	\end{itemize}
\end{definition}


\paragraph{From the original instance to extended instance.}  Given the original instance $\cI^o= (H, U_1, U_2, U_3, k^\star)$ of \probThreeCr. First, we try each value of $0 \le k \le k^\star$ (these are $k+1$ guesses). Next, we also guess the partition $k = k_{12} + k_{23}$, such that if $\cI$ is a \yin, then it admits a drawing $\sigma$ with $k$ crossings, then there are $k_{12}$ crossings among the edges of $E_{12}$ and $k_{23}$ crossings among the edges of $E_{23}$. There are $k \le k^\star$ guesses for this. Next, we guess the leftmost and rightmost vertices in $U_1$ and $U_3$ in $\sigma$, for which there are at most $4$ choices, and hence the number of guesses is at most $n^4$. Fix one such guess, and let us call the leftmost (resp.~rightmost) vertices in $U_1$ and $U_3$ as $u^l_1, u^l_3$ (resp.~$u^r_1, u^r_3$), respectively. Now, we add two new vertices $u^l_2, u^r_2$, and add edges $(u^l_1, u^l_2), (u^l_2, u^l_3), (u^r_1, u^r_2), (u^r_2, u^r_3)$, respectively, and denote them as $\leftbone, \leftbtwo, \rightbone, \rightbtwo$, respectively. By slightly abusing the notation, we continue to use $U_i$ for the new sets of vertices. Note that we have added $2$ additional vertices and $4$ additional edges, which only increases the size of the instance by $\Oh(1)$. 
Now, it can be easily observed that, given a drawing for the original instance $\cI^o$ that satisfies $\pi_i(u^l_i) \le \pi_i(u) \le \pi_i(u^r_i)$ for $i \in \LR{1, 3}$ (if the guess is correct, then $\sigma$ is such a drawing), one can obtain a new drawing of the new instance by placing $u^l_2$ at the extreme left on $V_2$ and $u^r_2$ on the extreme right on $V_2$. Further, this does not create any additional crossings. Thus, the following observation is immediate. 
\begin{observation} \label{obs:3normal2extended}
	$\cI^o$ is a \yin iff at least one of the at most $n^{\Oh(1)}$ extended instances $\cI$ obtained after this guessing is an extended \yin. 
\end{observation}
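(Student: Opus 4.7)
The plan is to verify both directions of the equivalence together with the bound on the number of guesses, using the construction described immediately before the observation. The key conceptual point is that the added vertices $u_2^l, u_2^r$ and the four new edges $\leftbone, \leftbtwo, \rightbone, \rightbtwo$ act purely as ``frames'': they are placed at the extremal positions of $V_2$ in any compliant drawing, and hence cannot be crossed by any other edge nor contribute to crossings themselves.

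For the forward direction, I would start from an assumed drawing $\sigma = (\sigma_1, \sigma_2, \sigma_3)$ of $\cI^o$ with at most $k^\star$ crossings. Let $k$ be its actual number of crossings, and let $k_{12}$ and $k_{23}$ be the numbers of crossings among $E_{12}$ and $E_{23}$ respectively, so that $k = k_{12} + k_{23}$. Let $u_1^l, u_1^r$ (respectively $u_3^l, u_3^r$) be the first and last vertices of $V_1$ (respectively $V_3$) according to $\sigma_1$ (respectively $\sigma_3$). Consider the branch that guesses these values correctly. I would then exhibit the drawing $\sigma' = (\sigma_1, \sigma_2', \sigma_3)$ of the new graph, where $\sigma_2'$ is obtained from $\sigma_2$ by inserting $u_2^l$ at the leftmost position of $V_2$ and $u_2^r$ at the rightmost position of $V_2$. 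The constraints of \Cref{def:3extdrawing} need to be checked one by one: the boundary vertex orderings on $V_1$, $V_2$, $V_3$ hold by the choice of $u_i^l, u_i^r$; the sets $\lbnd = \{u_2^l\}$ and $\rbnd = \{u_2^r\}$ automatically satisfy the positional constraints; and no new crossings arise because each of the four added edges has an endpoint that is strictly leftmost or strictly rightmost on its layer and hence cannot be crossed.

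For the reverse direction, suppose some extended instance $\cI$ obtained from a guess is a Yes-instance, witnessed by a drawing $\sigma'$ with at most $k \le k^\star$ crossings. I would simply restrict $\sigma'$ to $U_1 \cup (U_2 \setminus \{u_2^l, u_2^r\}) \cup U_3$ to obtain a drawing of $\cI^o$; since the removed vertices are incident only to the four new edges and these contribute zero crossings in $\sigma'$ (by the boundary properties above), the number of crossings in the restricted drawing is exactly the number of crossings in $\sigma'$, which is at most $k^\star$.

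Finally, for the guess count: there are at most $k^\star + 1 = \OO(k^\star)$ choices for $k$, at most $k+1 = \OO(k^\star)$ choices for the partition $(k_{12}, k_{23})$, and at most $n^4$ choices for the ordered quadruple $(u_1^l, u_1^r, u_3^l, u_3^r)$. Multiplying these gives $n^{\Oh(1)}$ total extended instances (absorbing the polynomial factor in $k^\star \le n$ into the exponent). I do not expect any real obstacle here; the only minor care needed is in the forward direction to verify that every clause of \Cref{def:extendedinstance3} and \Cref{def:3extdrawing} is met by the constructed $\sigma'$, in particular that $\pi$ (which at this stage consists only of the trivial single total order on $\bnd = \{u_2^l, u_2^r\}$ required by \Cref{item:3partialorder}) is respected by the extremal placement of $u_2^l$ and $u_2^r$.
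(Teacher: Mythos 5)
Your proof is correct and takes essentially the same approach as the paper. The paper labels the observation ``immediate'' from the construction immediately preceding it and gives no further argument; you are simply filling in the routine details (extremal placement of the two new $V_2$ vertices for the forward direction, dropping them back out for the reverse direction, and multiplying out the $\OO(k^\star)\cdot\OO(k^\star)\cdot n^4 = n^{\OO(1)}$ guess bound), which is exactly what the paper intends the reader to do.
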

Henceforth, we may drop the qualifier \emph{extended} from \emph{extended instance}, and simply say \emph{instance}/\emph{\yin}, respectively. Further, the graph $G$ in an instance \\$\cI = (G, V_1, V_2, V_3, \pi, k, k_{12}, k_{23})$ is said to be the \emph{underlying graph} of the instance $\cI$, and $k$ is said to be its parameter.  

Next, we continue to adopt the notation of $\crs(\cdot)$ and that of red and blue edges from \Cref{def:cross} and \Cref{def:redblue}. Further, note that \Cref{obs:atleastoneblue} continues to hold.





\subsection{Algorithm}

We use $n$ and $m$ to denote the number of vertices and edges, respectively. \footnote{Note that if we start with the kernelized instance, then $n$ and $m$ are polynomial in the original parameter $k^\star$; however we will only use the kernelization to derive our final theorem.} Next, we transform it into an extended instance, as described above. This increases the number of vertices and edges by at most $4$ each, and from \Cref{obs:3normal2extended} it follows that the kernelized instance is a \yin iff for at least one of the guesses, we have an \yin. For each extended instance corresponding to each of the guesses, we will run the following algorithm.

Let $\cI= (G, V_1, V_2, V_3, \pi, k, k_{12}, k_{23})$ be the \ein given as an input. Let $c$ be a sufficiently large constant.  In two special cases, our algorithm directly solves the given instance by a careful enumeration. These two cases correspond to, when (i) $k \le c \sqrt{k^\star}$, or (ii) $\min\LR{|E_{12}|, |E_{23}|} \le c \sqrt{k^\star}$, respectively. We first explain these two base cases. Otherwise, when $\min\LR{|E_{12}|, |E_{23}|} > c \sqrt{k^\star}$, and $k > c \sqrt{k^\star}$, we are in the recursive case, explained later.


\paragraph{Base case.} We have the following sub-cases.

\paragraph{Case 1. $k < c \sqrt{(k^\star)}$.} This base case is handled exactly as in \probTwoCr via \Cref{prop:partialorderdrawing} (\cite{bla2024constrained}), which also holds for three layers. 

\paragraph{Case 2. $\min\LR{|E_{12}|, |E_{23}|} < c \sqrt{k^\star}$.} \footnote{This base case very closely resembles the way we solve the ``middle'' subproblem in the recursive case. In fact, we could create an extended instance corresponding to the middle subproblem and make a recursive call, which will immediately trigger this base case. However, we explicitly give the reduction to the two-layer case in the middle subproblem for the ease of presentation.} Note here that we count the edges without multiplicity similar to the analogous base case for \probTwoCr. Now, suppose w.l.o.g.~ that $|E_{23}| < c \sqrt{k^\star}$ (the other case is exactly analogous). Let $V'_2 \subseteq V_2$ be the set of all the vertices defined as follows:
$$V'_2 \coloneqq \LR{u \in V_2: u \text{ has a neighbor in } V_3} \cup \bnd.$$
We know that $|V_3| \le c \sqrt{k^\star}$ (note that each vertex of $V_3$ must have at least one edge incident to it), and $|V'_2| \le c \sqrt{k^\star} + \alpha \sqrt{k^\star}$. We guess permutations $\zeta_2, \zeta_3$ of $V'_2, V_3$ for which there are $(k^\star)^{\Oh(\sqrt{k^\star})}$ choices. We terminate the guess for any choice that violates the conditions of a drawing of $\cI$, given in \Cref{def:3extdrawing}, specifically, this includes the case when the number of crossings among the edges of $E_{23}$ according to $\zeta$ is not equal to $k_{23}$. Otherwise, for each non-terminated guess, we create an instance $\cI' = (G', W_1, W_2, \bnd', \pi', k')$ of \probTwoCr as follows.
\begin{itemize}
	\item $W_1 = V_1$ and $W_2 \coloneqq \LR{u \in V_2: u \text{ has a neighbor in } V_1} \cup \bnd$.
	\item $\bnd' = \bnd = \lbnd' \cup \rbnd'$, where $\lbnd' = \lbnd, \rbnd' = \rbnd$.
	\item $\boundary' = \LR{\leftb' \cup \rightb'}$, where $\leftb' = \leftbtwo, \rightb' = \rightbtwo$.
	\item $k' = k_{12}$
	\item $\pi' = (\pi'_1, \pi'_2)$, where $\pi'_1 = \pi_1$, and $\pi'_2 = \pi_2 \cup \zeta_2$.
\end{itemize}
It is straightforward to check that the instance $\cI'$ is a valid elaborate extended instance of \probTwoCr as per \Cref{def:extendedinstance2}. Then, we solve this instance using \Cref{thm:2layertheorem} (without kernelization) and find a drawing if it is a {\sc Yes}-instance. Each such call takes time $n^{\Oh(\sqrt{k})} = n^{\Oh(\sqrt{k^\star})}$, and there are at most $(k^{\star})^{\Oh(\sqrt{k^\star})}$ recursive calls. Thus, the overall time taken in this base case is at most $n^{\Oh(\sqrt{k^\star})}$.

If for some instance $\cI'$, the algorithm returns a drawing $\so' = (\so'_1, \so'_2)$ of $\cI'$, then we can obtain a combined drawing $\sigma' = (\sigma'_1, \sigma'_2, \sigma'_3)$ of $\cI$ as follows: $\sigma'_1 = \so'_1$, $\sigma'_2 = \zeta_2 \cup \so'_2$, $\sigma'_3 = \zeta_3$. It is straightforward to check that $\sigma'$ is a valid drawing of $\cI$ with $k$ crossings. Otherwise, if all guesses are terminated, or the two-layer algorithm returns {\sc No}, then we conclude that $\cI$ is a {\sc No}-instance. This finishes the description of the second base case.

\begin{lemma} \label{lem:3layerbasecase}
	Let $\cI$ be an extended instance of \probThreeCr with underlying graph $G$ and parameter $k$. If either (i) $k \le c \sqrt{(k^\star)}$, or (ii) $\min\LR{|E_{12}|,| E_{23}|} \le c \sqrt{k^\star}$. Then, the base case of the algorithm as described above returns a drawing $\sigma$ of $\cI$ with $k$ crossings in time $n^{\Oh(\sqrt{k^\star})}$ iff $\cI$ is an extended \yin.
\end{lemma}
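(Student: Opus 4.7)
\textbf{Plan for the proof of \Cref{lem:3layerbasecase}.}

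The plan is to establish correctness and the running-time bound separately for the two sub-cases, and within each sub-case to argue both directions of the ``iff''. For \textbf{Case 1} (when $k \le c\sqrt{k^\star}$), the proof essentially mirrors that of \Cref{lem:2layerbasecase}. The forward direction would fix a hypothetical drawing $\sigma$ of $\cI$ with $k$ crossings, let $X \subseteq E(G)$ be the (at most $2k$) edges involved in crossings, and let $Y = \epts(X) \cup \epts(\weightede) \cup \epts(\boundary)$. Using \Cref{item:3multiplicity-sum} and $k = \Oh(\sqrt{k^\star})$, we have $|Y| = \Oh(\sqrt{k^\star})$. Among the $\binom{m}{2k} \cdot (|Y_1|! \cdot |Y_2|! \cdot |Y_3|!) = n^{\Oh(\sqrt{k^\star})}$ guesses for $X$ and the relative ordering $\zeta = (\zeta_1, \zeta_2, \zeta_3)$ of $Y$, the one agreeing with $\sigma$ is correct: after removing $X$, \Cref{prop:partialorderdrawing} succeeds on each connected component of $G - (X \cup \weightede)$, producing per-component crossing-free drawings that are compatible with $\pi$, which can be glued into a drawing $\sigma'$ of $\cI$ using $\zeta$. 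The reverse direction is immediate, since the returned drawing $\sigma'$ witnesses $\cI$ being a \yin. The total time is $n^{\Oh(\sqrt{k^\star})}$ by the guess count and the polynomial cost of \Cref{prop:partialorderdrawing}.

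For \textbf{Case 2} (when WLOG $|E_{23}| < c\sqrt{k^\star}$), I would first observe that $|V_3| < c\sqrt{k^\star}$ (every vertex of $V_3$ is incident to an edge of $E_{23}$) and $|V'_2| = \Oh(\sqrt{k^\star})$, so the number of permutations $(\zeta_2, \zeta_3)$ of $(V'_2, V_3)$ is $(k^\star)^{\Oh(\sqrt{k^\star})} = n^{\Oh(\sqrt{k^\star})}$. For the forward direction, fix a drawing $\sigma = (\sigma_1, \sigma_2, \sigma_3)$ of $\cI$ with $k$ crossings where exactly $k_{23}$ crossings involve $E_{23}$; the guess $\zeta_2 = \sigma_2|_{V'_2}, \zeta_3 = \sigma_3$ passes all termination checks because every edge of $E_{23}$ has both endpoints in $V'_2 \cup V_3$, hence the count of $E_{23}$-crossings under $\zeta$ equals the $E_{23}$-crossings under $\sigma$, namely $k_{23}$. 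I then need to verify that the constructed $\cI'$ is a valid elaborate extended two-layer instance in the sense of \Cref{def:extendedinstance2}: the partial order $\pi'_2 = \pi_2 \cup \zeta_2$ inherits the required $\bnd$-block ordering from \Cref{item:3partialorder} of \Cref{def:extendedinstance3}; the edge $\leftb' = \leftbtwo$ and $\rightb' = \rightbtwo$ have their endpoints in $\bnd'$; multi-edge multiplicities and $|\bnd'|$ transfer verbatim; and connectivity/reachability of \Cref{item:3extdconn} is preserved because vertices of $V_3$ are ``absorbed'' into $\zeta_3$ and every remaining vertex of $\cI'$ was reachable from $\bnd \cup \epts(\boundary)$ in $\cI$, hence from $\bnd' \cup \epts(\boundary')$ in $\cI'$ (using that every $V_3$-vertex's $V_2$-neighbors are in $V'_2 = W_2$). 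Restricting $\sigma$ to $W_1 \cup W_2$ yields a drawing of $\cI'$ with $k_{12}$ crossings, so by \Cref{thm:2layertheorem} applied to $\cI'$ we recover some valid drawing $\so'$, and assembling $\sigma' = (\so'_1, \zeta_2 \cup \so'_2, \zeta_3)$ produces a drawing of $\cI$; the $E_{12}$-part contributes $k_{12}$ crossings and the $E_{23}$-part (determined entirely by $\zeta_2, \zeta_3$) contributes $k_{23}$, summing to $k$. The reverse direction is again immediate.

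The running time for Case 2 is the number of guesses, $n^{\Oh(\sqrt{k^\star})}$, times the cost of a single invocation of \Cref{thm:2layertheorem} on $\cI'$ with parameter $k' = k_{12} \le k \le k^\star$ (without rekernelizing), which is $2^{\Oh(\sqrt{k^\star}\log k^\star)} \cdot n^{\Oh(1)} = n^{\Oh(\sqrt{k^\star})}$, plus the polynomial cost of constructing $\cI'$ and reassembling $\sigma'$ per guess.

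\textbf{Main obstacle.} The routine part is the forward/reverse correctness in Case 1. The delicate part will be Case 2: one must carefully verify that the reduced instance $\cI'$ genuinely satisfies every clause of \Cref{def:extendedinstance2}---in particular the block-structured prefix/suffix ordering on $\bnd'$ required by \Cref{item:2partialorder}, the invariant that the total multiplicity of multi-edges at each endpoint of $\boundary'$ remains at most $2\sqrt{k^\star}$ (which holds because we only copy the existing multi-edges, and both $\leftbtwo, \rightbtwo$ are boundary edges of the original instance), and the connectivity condition from \Cref{item:2extdconn}, whose preservation is the subtlest point because passing from $\cI$ to $\cI'$ deletes $V_3$. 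Confirming that every vertex of $W_1 \cup W_2$ remains reachable from $\bnd' \cup \epts(\boundary')$ in $\cI'$ will require using that any original path through $V_3$ can be rerouted through the $V'_2 \subseteq W_2$ neighbors of its $V_3$-vertices.
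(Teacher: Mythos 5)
The paper itself gives no explicit proof of this lemma --- it is stated immediately after the verbal description of the two base cases with the implicit claim that it follows from the discussion --- so your plan is by necessity more detailed than the paper's, but it follows the same route: handle Case~1 exactly as in \Cref{lem:2layerbasecase} via \Cref{prop:partialorderdrawing}, and handle Case~2 by freezing $V'_2$ and $V_3$ via $\zeta_2,\zeta_3$ and reducing the residual $E_{12}$-only problem to a call of \Cref{thm:2layertheorem}. That is the right decomposition, and the running-time accounting is right.

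Two details in Case~2 need fixing. First, you write ``$V'_2 \subseteq W_2$,'' but this is false: $V'_2$ collects $V_2$-vertices with a $V_3$-neighbor (plus $\bnd$), while $W_2$ collects those with a $V_1$-neighbor (plus $\bnd$); a vertex with only $V_3$-neighbors lies in $V'_2 \setminus W_2$. Consequently the rerouting argument for reachability as phrased does not go through. What does hold is $V'_2 \cup W_2 = V_2$ (no isolated vertices by \Cref{item:3extdconn}), and the reachability check for $\cI'$ should instead be run on the actual constructed graph $G'$ as part of the algorithm, terminating the guess if it fails --- rather than proved abstractly. Second, the assertion that ``$\leftb' = \leftbtwo$ and $\rightb' = \rightbtwo$ have their endpoints in $\bnd'$'' cannot be literally true: $\leftbtwo$ has one endpoint in $V_3$, which is absent from $\cI'$ entirely. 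This is inherited from the paper's text and is most plausibly a typo (the boundary edges of $\cI'$ should be the $E_{12}$-edges $\leftbone,\rightbone$, with the roles of the two layers of $\cI'$ reflecting the convention used for the middle sub-instance in Step~5); your verification of the elaborate-instance conditions should be carried out against that corrected construction, or it will not type-check. Neither of these alters the overall shape of the argument, but both need to be cleaned up before the proof is sound.
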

Otherwise, we proceed to the recursive case, explained next. 


\paragraph{Recursive case.}


Since the base case is not applicable, we know that $k > c\sqrt{k^\star}$, and $\min\LR{|E_{12}|, |E_{23}|} > c \sqrt{k^\star}$ for some sufficiently large constant $c$. In this case, at a high level, our recursive algorithm consists of the following steps.

\begin{enumerate}[leftmargin=1em]
	\item \textbf{Divide.} First, we prove some combinatorial properties of a ``nice separator'' in this step, which is defined by a set of at most $4$ edges, and $\Oh(\sqrt{k})$ additional relevant vertices and edges. This separator divides the instance into at most three instances having certain properties. We refer to these sub-instances as \emph{left, middle}, and \emph{right} sub-instances. Note that, of these, the left and right sub-instances will also be extended instances.
	\item \textbf{Guess the interfaces.} In this step, we actually guess various components of the separator (whose existence is guaranteed in the previous step, assuming that we are dealing with a \yin), other relevant information about the separator, and the rest of the interface between the sub-instances, i.e., the vertices and edges that span across multiple sub-instances. 
	\\Due to the properties of the separator, this will constitute $n^{\Oh(k^{2/3})}$ guesses. 
	\item \textbf{Creating sub-instances.} Formally defining the sub-instances based on the choices guessed so far.
	\\The construction of each sub-instance takes $k^{\Oh(1)}$ time.
	\item \textbf{Conquer left and right.} Recursively solving the left and right instances. 
	\\Inductively, this takes time $n^{\Oh(( 3k/4)^{2/3})}$.
	\item \textbf{Conquer middle.} If, for a particular guess, both sub-instances are found to be {\sc Yes}-instances, then we will use the solutions for left and right sub-problems, plus some $n^{\Oh(\sqrt{k})}$ guesses to reduce the problem to \probTwoCr, which can be solved in time $n^{\Oh(\sqrt{k})}$ using the algorithm from the previous section (\Cref{thm:2layertheorem}).
	\item \textbf{Obtaining a combined drawing.} Assuming that for some guess, both the recursive calls corresponding to left and right instances output the corresponding drawings, and we find a ``compatible'' drawing for the middle subproblem via \Cref{thm:2layertheorem}, then we combine the drawings along with the guesses made for the middle subproblem, in order to obtain a combined drawing for the current sub-instance. This step takes polynomial time for each guess.
\end{enumerate}
Let $T_3(n, k)$ denote an upper bound on the running time of our algorithm on an extended instance with a graph with at most $n$ vertices and with parameter bounded by $k$. From the above description, it follows that $T_3(n, k)$ satisfies the following recurrence. 
\begin{align}
	T_3(n, k) &\le \begin{cases}
		 n^{\Oh(k^{2/3})} \cdot (T_3(n, \frac{3k}{4}) + T_2(n, k)) + n^{\Oh(1)} & \text{ if } k > c \sqrt{k^\star} \text{ and } |E| > c\sqrt{k^\star} 
		\\n^{\Oh(\sqrt{k^\star})} & \text{ if } k \le c \sqrt{k^\star} \text{ or } |E| \le c \sqrt{k^\star}
	\end{cases}
	\label{eqn:3recurrence}
\end{align}
Here, $T_2(n, k) = n^{\Oh(\sqrt{k})}$ is the running time of the algorithm for \probTwoCr when we directly use the algorithm of \Cref{thm:2layertheorem} without the kernelization step. Then, it can be shown that this recurrence solves to $T_3(n, k^{\star}) = n^{\Oh((k^\star)^{2/3})}$.

\subsection{Step 1: Divide.} \label{subsec:3divide}
This step is subdivided into two steps. In \Cref{subsubsec:3separator}, we will show the existence of a ``balanced separator'' and explore its various properties. Then, in \Cref{subsubsec:3components}, we will show a technical claim that lets us obtain a more structured drawing (for a \yin) that is still compatible with the separator from the previous step. Now, we explore each of these steps in more detail.

\subsubsection{Existence and Properties of a Separator} \label{subsubsec:3separator}
\begin{definition}[Bottom- and Top-heavy instances] \label{def:top-bottom-heavy}
	If, in the given instance we have that $k_{12} \ge k_{23}$, i.e., $k_{12} \ge \frac{k}{2}$, then say that such an instance is \emph{bottom-heavy}; and otherwise if $k_{23} < k_{12}$, then we say that an instance is \emph{top-heavy}.
\end{definition}
For the rest of this section while describing the algorithm, we will assume that we are dealing with a bottom-heavy instance, without repeating this assumption at each step. Specifically, this assumption is crucial while defining the structure of the separator defined in this subsection, which follows. Otherwise, the top-heavy instances can be dealt with interchanging the roles of the first and third lines in the definition of separator (\Cref{def:3separator}) and the rest of the algorithm. 

We adopt and naturally extend the notions of edge orderings from \Cref{def:edgeorder} as well as that of an ordering between an edge and a crossing from \Cref{def:crossingorder} to \probThreeCr. We note that an analogous version of \Cref{obs:bluecrossings} holds in \probThreeCr as well.


\begin{definition} \label{def:3separator}
	Let $\cI$ be a $k$-minimal bottom-heavy \yin, and let $\sigma$ be any drawing of $\cI$ with $k$ crossings. We say that $\sep = \LR{\lsepone, \rsepone, \lseptwo, \rseptwo}$ is a \emph{separator} w.r.t. $\sigma$, if the following properties hold (see \Cref{fig:3layerseparator} for an example of $\sep$ and associated objects.).
	\begin{enumerate}
		\item $\lsepone, \rsepone \in E_{12}$ are two distinct blue edges \wrt $\sigma$ and they do not cross each other
		\\$\lseptwo, \rseptwo \in E_{23}$ are two distinct blue edges \wrt $\sigma$ and they do not cross each other.
		\item $\sigma_1(\ept(\lsepone), 1) \le \sigma_1(\ept(\rsepone), 1)$,
		\\$\sigma_2(\ept(\lseptwo), 2) \le \sigma_2(\ept(\lsepone, 2)) \le \sigma_2(\ept(\rsepone, 2)) \le \sigma_2(\ept(\rsepone, 2))$,
		\\$\sigma_3(\ept(\lseptwo), 3) \le \sigma_3(\ept(\rseptwo), 3)$.
		\item The number of crossings to the left of $\lsepone$, and that to the right of $\rsepone$ is at most $\frac{k_{12}}{2} $.
		\item Let $\mde(\sep, \sigma)$ denote the set of all edges $u_1u_2 \in E_{12}$ other than $\lsepone, \rsepone$, such that $$\sigma_i(\ept(\lsepone), i) \le \sigma_i(u_i) \le \sigma_i(\ept(\rsepone), i) \text{ for } i \in [2].$$
		Then $|\mde(\sep, \sigma)| \le 2\sqrt{k}$.
		\item Let $\vtwo \coloneqq \LR{u \in V_2 : N(u) \cap V_3 \neq \emptyset}$, and define
		\begin{align*}
			\gapleft(\sep, \sigma) &\coloneqq \LR{v \in \vtwo : \sigma_2(\ept(\lseptwo), 2) < \sigma_2(v) < \sigma_2(\ept(\lsepone), 2)}
			\\\gapright(\sep, \sigma) &\coloneqq \LR{v \in \vtwo : \sigma_2(\ept(\rsepone), 2) < \sigma_2(v) < \sigma_2(\ept(\rseptwo), 2)}
		\end{align*}
		Then, $|\gapleft(\sep, \sigma)|, |\gapright(\sep, \sigma)| \le 3 \sqrt{k}$.
		\item $|\crosse(\sep, \sigma)| \le 4\sqrt{k}$, where $\crosse(\sep, \sigma) \coloneqq \bigcup_{e \in \sep}\crs(e, \sigma)$. 
	\end{enumerate} 
	When the separator $\sep$ and/or the drawing $\sigma$ are obvious from the context, we may simply use $\gapleft, \gapright,$ $\mde$, and $\crosse$.
\end{definition}

We state an immediate consequence of the preceding definition.

\begin{observation} \label{obs:3balancedseparator}
	Let $\cI$ be a $k$-minimal bottom heavy extended \yin, let $\sigma$ be a drawing of $\cI$ with $k$ crossings, and let $\sep$ be a separator w.r.t. $\sigma$. Let $k^{l}_{12}, k^{r}_{12}$ denote the number of crossings that are to the left of $\lsepone$ and to the right of $\rsepone$, respectively, and $k^{l}_{23}, k^{r}_{23}$ denote the number of crossings to the left of $\lseptwo$ and to the right of $\rseptwo$, respectively. Then, $k^{l}, k^{r} \le 3k/4,$ where $k^l \coloneqq k^{l}_{12} + k^{l}_{23}$ and $k^r \coloneqq k^r_{12} + k^r_{23}$. 
\end{observation}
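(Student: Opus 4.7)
The plan is to prove the bound $k^l \le 3k/4$ by combining the separator property that bounds the bottom-layer crossings to one side by $k_{12}/2$, the trivial bound $k_{23}$ on the total number of top-layer crossings, and the bottom-heavy assumption $k_{12} \ge k_{23}$. The argument for $k^r$ is identical by symmetry.

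First I would apply property~3 in \Cref{def:3separator} directly, which gives $k^l_{12} \le k_{12}/2$ and $k^r_{12} \le k_{12}/2$. Next I would observe that $k^l_{23}$ counts certain crossings among edges of $E_{23}$, so trivially $k^l_{23} \le k_{23}$ (and the same holds for $k^r_{23}$), since the total number of crossings between $V_2$ and $V_3$ in the drawing $\sigma$ is exactly $k_{23}$ by definition of an extended instance.

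Combining these, we obtain
\[
k^l \;=\; k^l_{12} + k^l_{23} \;\le\; \tfrac{k_{12}}{2} + k_{23} \;=\; \tfrac{k_{12}}{2} + (k - k_{12}) \;=\; k - \tfrac{k_{12}}{2}.
\]
Since $\cI$ is bottom-heavy, \Cref{def:top-bottom-heavy} gives $k_{12} \ge k_{23}$, equivalently $k_{12} \ge k/2$, hence $k_{12}/2 \ge k/4$, so the previous display yields $k^l \le k - k/4 = 3k/4$. The same chain of inequalities, applied with $k^r_{12}$ and $k^r_{23}$ in place of $k^l_{12}$ and $k^l_{23}$, gives $k^r \le 3k/4$.

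There is no real obstacle here: the statement is essentially a bookkeeping consequence of how we chose to split the parameter between the two layers and of the bottom-heavy assumption, which is precisely why property~3 of the separator is stated with respect to $k_{12}/2$ rather than $k/2$. I would simply present the four-line derivation above as the full proof.
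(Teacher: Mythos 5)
Your proof is correct and follows essentially the same approach as the paper: bound $k^l_{12}$ by $k_{12}/2$ via property 3 of the separator, bound $k^l_{23}$ trivially by the total $k_{23}$, and then use the bottom-heavy inequality $k_{12}\ge k/2$ to conclude. In fact, your writeup is cleaner than the paper's printed version, which contains subscript typos (it writes $k^r_{12}\le k_{23}$ and $k^l_{23}+k^r_{23}\le\dots$ where it evidently means $k^l_{23}\le k_{23}$ and $k^l_{12}+k^l_{23}\le\dots$); your version matches the intended argument exactly.
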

\begin{proof}
	Let $k_{12} = \alpha k$ and $k_{23} = (1-\alpha)k$ for some $\alpha \ge 1/2$, since $k_{12} \ge k/2 \ge k_{23}$. Then, since $\sep$ is a separator, $k^l_{12} \le k_{12}/2 = \alpha k/2$, and $k^r_{12} \le k_{23} = (1-\alpha)k$. Therefore, $k^l_{23} + k^r_{23} \le k(\alpha/2 + 1 - \alpha) = (1-\alpha/2)k \le 3k/4$, where the last inequality follows from $\alpha \ge 1/2$. Analogous argument works for $k^r$.
\end{proof}


\begin{figure}
	\centering
	\includegraphics[page=13,scale=1.4]{algofigures3.pdf}
	\caption{Illustration of a separator from \Cref{def:3separator}. $\sep = \LR{\lsepone, \rsepone, \lseptwo, \rseptwo}$ are shown in \blue{blue}, and $\epts(\sep)$ are shown in \textcolor{BlueViolet}{purple}. Edges of \textcolor{Brown}{$M \coloneqq \mde(\sep)$} shown in \textcolor{Brown}{brown} and that of \textcolor{Olive}{$\crosse(\sep)$} are shown in dashed {\color{Olive}olive} -- these are the edges that cross at least one edge of $\sep$. Vertices of \textcolor{Orange}{\gapleft} are shown in \textcolor{Orange}{orange} and that of \textcolor{DarkGreen}{\gapright} in \textcolor{DarkGreen}{green}.}
	\label{fig:3layerseparator}
\end{figure}

Next, we state the following important lemma that shows the existence of a separator, which forms the basis of our divide-and-conquer algorithm.

\begin{lemma}[Separator Lemma for \probThreeCr] \label{lem:3separatorlemma}
	Let $\cI$ be a bottom-heavy $k$-minimal \yin and let $\sigma$ be any drawing of $\cI$ with $k$ crossings. Then, there exists a separator $\sep$ w.r.t. $\sigma$.
\end{lemma}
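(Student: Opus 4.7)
The plan is to build the separator in two phases, relying throughout on \Cref{obs:atleastoneblue} (every sufficiently large edge set contains a blue edge) together with the fact that the boundary edges $\leftbone,\rightbone,\leftbtwo,\rightbtwo$ are automatically blue and therefore serve as fallbacks guaranteeing that each greedy choice is well-defined. First I reproduce, restricted to $E_{12}$, the argument of \Cref{lem:2separatorlemma} to obtain $\lsepone$ and $\rsepone$; then I sweep outward along $V_2$ starting from the $V_2$-endpoints of $\lsepone$ and $\rsepone$ to locate $\lseptwo$ and $\rseptwo$.

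For the first phase, endow $E_{12}\cup\{\leftbone,\rightbone\}$ with the lexicographic order $\prec$ induced by $(\sigma_1,\sigma_2)$. I would take $\rsepone$ to be the $\prec$-least blue edge having strictly more than $k_{12}/2$ crossings among $E_{12}$-edges to its left; such an edge exists because $\rightbone$ trivially witnesses this. Then $\lsepone$ is defined as the $\prec$-largest blue edge with $\lsepone\prec\rsepone$ that does not cross $\rsepone$, with $\leftbone$ always a valid candidate. Minimality of $\rsepone$ yields at most $k_{12}/2$ crossings to the left of $\lsepone$, and $|\mde(\sep,\sigma)|\le 2\sqrt{k}$ because more than $2\sqrt{k}$ such edges would by \Cref{obs:atleastoneblue} contain a blue one, which would be a valid candidate strictly $\prec$-above $\lsepone$, contradicting its maximality.

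For the second phase, I would take $\lseptwo$ as the blue edge in $E_{23}\cup\{\leftbtwo\}$ with $V_2$-endpoint at or to the left of $\ept(\lsepone,2)$ that maximizes $\sigma_2(\ept(\cdot,2))$, and then, given $\lseptwo$, take $\rseptwo$ as the blue edge in $E_{23}\cup\{\rightbtwo\}$ whose $V_2$-endpoint is at or to the right of $\ept(\rsepone,2)$, that does not cross $\lseptwo$, and that minimizes $\sigma_2(\ept(\cdot,2))$. The bound $|\gapleft|\le 3\sqrt{k}$ is then immediate: more than $3\sqrt{k}$ vertices of $\vtwo$ in the gap would contribute more than $3\sqrt{k}$ $E_{23}$-edges, so by \Cref{obs:atleastoneblue} a blue one would exist, whose $V_2$-endpoint would contradict the maximality defining $\lseptwo$.

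The main obstacle is reconciling the non-crossing requirement between $\lseptwo$ and $\rseptwo$ with the bound on $|\gapright|$; in particular, the counting must discount the at most $\sqrt{k}$ gap-edges that cross the blue edge $\lseptwo$. If $|\gapright|>3\sqrt{k}$, the gap vertices contribute more than $3\sqrt{k}$ $E_{23}$-edges, more than $2\sqrt{k}$ of which avoid $\lseptwo$, and \Cref{obs:atleastoneblue} produces a blue edge among these, contradicting the minimality defining $\rseptwo$. The remaining conditions follow easily: $|\crosse(\sep)|\le 4\sqrt{k}$ since each of the four separator edges is blue and thus crossed at most $\sqrt{k}$ times; the $\sigma_2$-ordering holds by construction; and $\sigma_3(\ept(\lseptwo,3))\le\sigma_3(\ept(\rseptwo,3))$ is forced by the $\sigma_2$-ordering together with $\lseptwo$ and $\rseptwo$ not crossing, with a trivial $\sigma_3$-tiebreak (or, if necessary, falling back to $\rightbtwo$) handling the degenerate case where $\ept(\lseptwo,2)=\ept(\rseptwo,2)$.
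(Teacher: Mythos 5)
Your proof is correct and follows essentially the same strategy as the paper: Part~I/II is identical, and in the second phase you use the same greedy lexicographic selection plus \Cref{obs:atleastoneblue}, with the only difference being that you construct $\lseptwo$ first (unconstrained) and then $\rseptwo$ (with the non-crossing condition and the $\sqrt{k}$-discount), whereas the paper constructs $\rseptwo$ first and then $\lseptwo$ — a purely symmetric swap that places the discounting in the $\gapright$ bound rather than the $\gapleft$ bound, with no substantive change.
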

	
\begin{proof}
	Note that due to $k$-minimality and bottom-heaviness of $\cI$, we know that the number of crossings among the edges of $E_{12}$ is larger than $k_{12}/2$; note here that $k_{12}/2 < k_{12}$, since $k_{12} \ge k/2 > c/2$ for some absolute constant $c$. We divide the proof into four parts: (I ) constructing $\lsepone, \rsepone$, (II) bounding the size of $\mde$, (III) constructing $\rseptwo$ and analyzing $\gapright$, and (IV) constructing $\lseptwo$ and analyzing  $\gapleft$. The first two parts are exactly identical to the proof of \Cref{lem:2separatorlemma}; nevertheless we include them here for the sake of completeness.
	
	\paragraph{Part (I): Constructing $\lsepone, \rsepone$.} We define a lexicographic order $\prec$ on the edges of $E$ using the order of their endpoints in $V_1$ using $\sigma_1$, with ties broken using the endpoints in $V_2$ using $\sigma_2$. More formally, for two distinct edges $e_a, e_b \in E$, we have $e_a \prec e_b$  iff $\sigma_1(\ept(e_a, 1)) < \sigma_1(\ept(e_b, 1))$, or $\sigma_1(\ept(e_a, 1)) = \sigma_1(\ept(e_b, 1))$ and $\sigma_2(\ept(e_a, 2)) < \sigma_2(\ept(e_b, 2))$ (multi-edges are treated as a single edge for this order). Let $\rsepone \in E(G)$ be the first blue edge, according to $\prec$, that has more than $\frac{k_{12}}{2}$ crossings to its left. First we claim that $\rsepone$ exists, since $\rightbone$---which is blue, and has all ($k_{12} > \frac{k_{12}}{2}$) crossings to its left--- satisfies the stated property.
	
	By the definition of $\rsepone$, the number of crossings to its left is larger than $\frac{k_{12}}{2}$, which implies the following: (a) the number of crossings to its right is at most $\frac{k_{12}}{2} \le \frac{k}{2}$ (note that some crossings may not be either to the left or to the right of $\rsepone$). (b) the number of edges to its left is at least $\frac{k_{12}}{2} > 0$. See \Cref{fig:3layerseparator} for an illustration. Further, (b) implies that $\leftbone$ is one of the $\frac{k_{12}}{2}$ edges that is to the left of $\rsepone$, which specifically implies that $\leftbone \neq \rsepone$.

	Now, we define $\lsepone$ to be the last edge according to $\prec$ such that: (i) $\lsepone$ is blue, (ii) $\lsepone \prec \rsepone$, and (iii) $\lsepone$ does not cross $\rsepone$ according to $\sigma$. First, $\lsepone$ exists because $\leftbone$ satisfies conditions (i) - (iii) -- indeed $\leftbone \neq \rsepone$ as shown above. Further, since $\lsepone \prec \rsepone$, $\lsepone$ is to the left of $\rsepone$ and in particular the two edges do not cross each other. Now, if the number of crossings to the left of $\lsepone$ is larger than $\frac{k_{12}}{2}$, then this contradicts the choice of $\rsepone$ as the lexicographically first blue edge with more than $\frac{k_{12}}{2}$ crossings to its left. Therefore, the number of crossings to the left of $\lsepone$ is at most $\frac{k_{12}}{2}$. This finishes the first part.
	
	\paragraph{Part (II): Bounding the size of $\mde$.} Suppose for the contradiction $|\mde| > 2\sqrt{k}$. Note that the definition of $\mde$, for each edge $e \in \mde$, $\lsepone \prec e \prec \rsepone$, and $e$ does not cross $\rsepone$. Then, by \Cref{obs:atleastoneblue}, $\mde$ contains at least one blue edge, say $b$. However, note that $b$ satisfies conditions (i)--(iii) above, which contradicts the choice of $\lsepone$, since $\lsepone$ is the lexicographically last such edge. This shows that $|\mde(\sep)| \le 2\sqrt{k}$.
	
	\paragraph{Part (III): Constructing $\rseptwo$ and analyzing $\gapright$.} Now we define a similar lexicographic order $\prec'$ on the edges of $E_{23}$, except first we consider the order of the $V_2$ endpoints, and then the ties are broken using $V_3$ endpoints (multi-edges are treated as a single edge for this order). 
	
	
	Let $\rseptwo$ be a lexicographically first blue edge according to $\prec'$ such that (i) $\rseptwo \neq \leftbtwo$, and (ii) $\sigma_2(\ept(\rsepone), 2) \le \sigma_2(\ept(\rseptwo), 2)$, if it exists. \footnote{Note that, if $\sigma_2(\ept(\rsepone, 2)) \le \sigma_2(\rightbtwo, 2))$, then $\rightbtwo$ satisfies both the conditions (indeed $\rightbtwo \neq \leftbtwo$ since $|E_{23}| > c \sqrt{k^\star}$). In case $\rseptwo$ does not exist, then we define $\rseptwo$ to be a ``phantom (placeholder) edge'', and $\gapleft = \emptyset$. When such an edge will be chosen for dividing the problem into sub-problems, the right sub-problem will be a degenerate instance with zero edges in $E_{23}$, which can be then solved by the second base case. We omit the details.}
	Then, consider the set of vertices $\gapright$ as defined above. First, if $\ept(\rseptwo, 2) = \ept(\rsepone, 2)$, then $\gapright = \emptyset$. Now, consider $\ept(\rseptwo, 2) \neq \ept(\rsepone, 2)$, and suppose for the sake of contradiction that $|\gapright| > 2\sqrt{k}$. Let $E^r = \LR{wz \in E_{23}: w \in \gapright, z \in V_3}$. Note that $|E^r| \ge |\gapright| > 2\sqrt{k}$, since each $w \in \gapright$ contributes at least one edge to $E^r$, and these edges are distinct. Then, by \Cref{obs:atleastoneblue}, $E^r$ contains at least one blue edge $b$. Note that $b \prec' \rseptwo$, since $\sigma_2(\ept(b, 2)) < \sigma_2(\ept(\rseptwo, 2))$ by construction, which contradicts the choice of $\rseptwo$ as the lexicographically first such edge. Thus, $|\gapright| \le 2 \sqrt{k} \le 3 \sqrt{k}$.
	
	\paragraph{Part (IV). Constructing $\lseptwo$ and analyzing $\gapleft$.}
	We define $\lseptwo$ to be the lexicographically last blue edge (if it exists) such that (i) $\sigma_2(\ept(\lseptwo), 2) \ge \sigma_2(\ept(\lsepone), 2)$, and (ii) $\lseptwo$ does not cross $\rseptwo$. \footnote{A degenerate corner case arises when $\leftbtwo$ does not satisfy the conditions, and hence $\lseptwo$ does not exist. This can be handled analogous to the previous case. We omit the details.}
	Now consider $\gapleft$ as defined above. First, if $\ept(\lseptwo, 2) = \ept(\lsepone, 2)$, then $\gapleft = \emptyset$. Now, suppose $\ept(\rseptwo, 2) \neq \ept(\rsepone, 2)$, and suppose for the sake of contradiction that, $|\gapleft| > 2\sqrt{k}$. Then, let $E^l = \LR{wz \in E_{23}: w \in \gapleft, z \in V_3}$. Note that by a similar argument, $|E^l| \ge |V^l| > 3\sqrt{k}$, since each $w \in V^l$ contributes at least one edge to $E^r$. Let $E'^r \subseteq E^r$ be the set of edges that do not cross $\rseptwo$. Note that $|E'^r| \ge |E^r| - \sqrt{k} > 2\sqrt{k}$, since $\rseptwo$ is a blue edge. Then, by \Cref{obs:atleastoneblue}, $E'^r$ contains at least one blue edge, say $b'$. Further, $\lseptwo \prec' b$ since $\sigma_2(\ept(\lseptwo, 2)) < \sigma_2(\ept(b', 2))$. This contradicts the choice of $\lseptwo$ as the lexicographically last edge. This shows that $|\gapleft| \le 3\sqrt{k}$.

\end{proof}

Note that a symmetric version of \Cref{def:3separator} and \Cref{lem:3separatorlemma} holds for top-heavy instances. We omit the definition and the proof. 

In the following definition, we divide (not necessarily in disjoint sets) the vertex sets into three parts based on a drawing of an extended instance. These definitions will be used later on during the analysis of the algorithm.

\begin{definition} \label{def:3division}
	Let $\cI$ be a \yin with underlying graph $G$, and let $\sigma$ be a drawing of $\cI$ with $k$ crossings. Consider a separator $\sep$ w.r.t. $\sigma$ along with the associated objects as in \Cref{def:3separator}. Then, we define the following division of vertices on each layer into left (\lf), middle (\mm), and right (\rt) parts, as follows.
	\begin{align*}
		V_1(\sigma, \sep,  \lf) &\coloneqq \LR{u \in V_1: \sigma_1(u) \le \sigma_1(\ept(\lsepone, 1))}
		\\V_1(\sigma, \sep,  \mm) &\coloneqq \LR{u \in V_1 : \sigma_1(\ept(\lsepone, 1) \le \sigma_1(u) \le \sigma_1(\ept(\rsepone, 1))}
		\\V_1(\sigma, \sep,  \rt) &\coloneqq \LR{u \in V_1: \sigma_1(\ept(\rsepone, 1)) \le \sigma_1(u)}
		\\V_2(\sigma, \sep, \lf) &\coloneqq \LR{u \in V_2: \sigma_2(u) \le \sigma_2(\lseptwo, 2)} \cup \gapleft \cup \ept(\lsepone, 2)
		\\V_2(\sigma, \sep, \mm) &\coloneqq \LR{u \in V^{conn}_2: \sigma_2(\ept(\lseptwo, 2)) \le \sigma_2(u) \le \sigma_2(\ept(\lsepone, 2))}  \\&\qquad \cup \LR{u \in V_2: \sigma_2(\ept(\lsepone, 2)) \le \sigma_2(u) \le \sigma_2(\ept(\rsepone, 2))} 
		\\&\qquad \cup \LR{u \in V^{conn}_2: \sigma_2(\ept(\rsepone, 2)) \le \sigma_2(u) \le \sigma_2(\ept(\rseptwo, 2))}
		\\V_2(\sigma,\sep, \rt) &\coloneqq \LR{u \in V_2: \sigma_2(\ept(\rseptwo, 2)) \le \sigma_2(u)} \cup \gapright \cup \ept(\rsepone, 2)
		\\V_3(\sigma,\sep, \lf) &\coloneqq \LR{u \in V_3: \sigma_3(u) \le \sigma_3(\ept(\lseptwo, 3))}
		\\V_3(\sigma,\sep, \mm) &\coloneqq \LR{u \in V_3 : \sigma_3(\ept(\lseptwo, 1) \le \sigma_3(u) \le \sigma_3(\ept(\rseptwo, 1))}
		\\V_3(\sigma, \sep, \rt) &\coloneqq \LR{u \in V_3: \sigma_3(\ept(\rseptwo, 3)) \le \sigma_3(u)}
	\end{align*}
	Finally, for each ${\cal T} \in \LR{\lf, \mm, \rt}$, define $V(\sigma, {\cal T}) \coloneqq V_1(\sigma, \sep,  {\cal T}) \cup V_2(\sigma, \sep, {\cal T}) \cup V_3(\sigma, \sep, {\cal T})$.
\end{definition}

Now, fix a $k$-minimal \yin, a corresponding drawing $\sigma$ of $\cI$ with $k$ crossings, and a separator $\sep$ (and the associated objects from \Cref{def:3separator}) as guaranteed by \Cref{lem:3separatorlemma}. In the next part, we classify the components in $G-v$ for $v \in \epts(\sep)$ into multiple types, and then analyze their behavior in $\sigma$. This is largely similar to the two layer case; except that the analogue of pendant components for the three layer case is technically much more challenging to analyze, which is the focus of the next subsection.


\begin{definition} \label{def:3componenttypes}
	Let $v \in \epts(\sep)$ be a vertex, and let $\comps(v)$ denote the set of connected components of $C_v-v$, where $C_v$ is the component of $G$ that contains $v$. We partition these components into different classes (subsets), called $\spl(v), \predef(v), \updown(v),$ and $\pecu(v)$ (we may omit the subscript $G$). A component $C \in \comps_G(v)$ belongs to:
	\begin{itemize}[leftmargin=*]
		\item $\spl(v)$, if $E(G[C \cup \LR{v}])$ contains an edge of $\crosse(\sep) \cup \mde(\sep)$, or if $C$ contains a vertex of $\gapleft \cup \gapright \cup \epts(\sep) \setminus \LR{v}$.
		\item $\predef(v)$, if $C \not\in \spl(v)$, and $C$ contains a vertex $u \in \bigcup_{i \in [3]} \bigcup_{j \in [\lambda]} V^j_i$ \footnote{Recall that $V^j_i$'s are the $\lambda$ subsets over which the total orders $\pi^j_i$ are defined, and these $\lambda$ total orders constitute $\pi_i$.}
		\item $\updown(v)$, if it does not belong to $\spl(v) \cup \predef(v)$, and $C$ contains at least one vertex on the same line as $v$.
		\item $\pecu(v)$, if it does not belong to $\spl(v) \cup \predef(v)$, and $C$ does not contain a vertex on the same layer as $v$.
		\\We also define a subset $\pendant(v) \subseteq \pecu(v)$ consisting of components that consist of a single pendant vertex adjacent to $v$ (possibly with a multi-edge). 
		\\Note that if $v \in V_2$, then $\pendant(v) = \pecu(v)$. However, if $v \in V_1 \cup V_3$, then we define $\pecunp(v) \coloneqq \pecu(v) \setminus \pendant(v)$ (peculiar non-pendants), and these are the components that contain vertices of the remaining two layers other than that of $v$.
	\end{itemize}
\end{definition}

First, the following observation follows from the bounds on $\crosse(\sep), \mde(\sep)$, $ \gapleft$, and $\gapright$.

\begin{observation} \label{obs:3specialcomps}
	For any $v \in \epts(\sep)$, it holds that $|\spl(v)| \le 12 \sqrt{k} + 8 = \Oh(\sqrt{k})$.
\end{observation}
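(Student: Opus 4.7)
My plan is to prove this observation by a direct counting argument, partitioning the components in $\spl(v)$ according to which of the two defining conditions in \Cref{def:3componenttypes} they satisfy and bounding each contribution separately using the size bounds from \Cref{def:3separator}.

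First, I will handle components $C \in \spl(v)$ that contain a vertex $u \in \gapleft \cup \gapright \cup (\epts(\sep) \setminus \{v\})$. Since the elements of $\comps(v)$ are pairwise disjoint subsets of $V(G) \setminus \{v\}$, each such vertex $u$ can belong to at most one component. Hence, the number of components falling in this category is at most $|\gapleft| + |\gapright| + |\epts(\sep) \setminus \{v\}|$. By \Cref{def:3separator}, $|\gapleft|, |\gapright| \le 3\sqrt{k}$, and since $\sep$ consists of at most four edges, $|\epts(\sep) \setminus \{v\}| \le 7$. This gives a total of at most $6\sqrt{k} + 7$ components of this kind.

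Second, I will handle components $C \in \spl(v)$ such that $E(G[C \cup \{v\}])$ contains some edge $e \in \crosse(\sep) \cup \mde(\sep)$. I claim that each such edge $e$ witnesses at most one component: if $v \notin \epts(e)$, then both endpoints of $e$ are non-$v$ vertices joined by $e$, so they lie in the same connected component of $G - v$; if $v \in \epts(e)$, then the unique other endpoint of $e$ determines a unique component of $G - v$. In either case, $e$ contributes to at most one component in $\spl(v)$. Therefore the number of components of this type is bounded by $|\crosse(\sep)| + |\mde(\sep)| \le 4\sqrt{k} + 2\sqrt{k} = 6\sqrt{k}$, using the bounds in \Cref{def:3separator}.

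Summing the two estimates yields $|\spl(v)| \le 12\sqrt{k} + 7 \le 12\sqrt{k} + 8$, as claimed. There is no real obstacle here: the only mild subtlety is ensuring that each ``witness'' (vertex or edge) is charged to at most one component, which is immediate from the fact that connected components of $G - v$ partition $V(G) \setminus \{v\}$ and any edge whose endpoints are both in $V(G) \setminus \{v\}$ places both endpoints in the same component.
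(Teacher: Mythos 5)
Your proof is correct and follows exactly the charging argument the paper implicitly intends — the paper states only that the observation ``follows from the bounds on $\crosse(\sep)$, $\mde(\sep)$, $\gapleft$, and $\gapright$'' without spelling it out, and your explicit accounting (each vertex in $\gapleft \cup \gapright \cup \epts(\sep)\setminus\{v\}$ and each edge in $\crosse\cup\mde$ witnesses at most one component of $G-v$) is precisely the reasoning that makes the stated $12\sqrt{k}+8$ bound work.
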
 

Next, we bound the size of $\updown$ in the following lemma, which is done via arguments almost identical to that in \Cref{lem:2type3components}. 

\begin{lemma} \label{lem:3type3components}
	For any $v \in \epts(\sep)$, it holds that $|\updown(v)| \le 8\sqrt{k}$.
\end{lemma}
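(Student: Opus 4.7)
The plan is to mirror the proof of Lemma~\ref{lem:2type3components}, with an additional pigeonhole step that accounts for the fact that a vertex $v \in V_2$ has two adjacent layers rather than one. Suppose for contradiction that $|\updown(v)| > 8\sqrt{k}$. As in the two-layer case, I want to associate to each component $C \in \updown(v)$ a ``zigzag'' of the form $v$--$\upr(C)$--$\low(C)$, where $\upr(C)$ is a neighbor of $v$ lying on a layer adjacent to $v$ and $\low(C) \in C$ lies on the same layer as $v$ and is adjacent to $\upr(C)$. Such a $\low(C)$ exists by the definition of $\updown(v)$: the shortest path inside $C$ from any neighbor of $v$ in $C$ to a vertex of $C$ on $v$'s layer (which exists because $C \in \updown(v)$) produces a direct edge $\upr(C)\low(C)$ on two consecutive layers. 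Since the components of $\updown(v)$ are pairwise vertex-disjoint and disjoint from $\{v\}$, the pairs $(\upr(C), \low(C))$ arising from distinct components are also pairwise disjoint.

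If $v \in V_1$ (symmetrically, $v \in V_3$), the situation is essentially two-layered: for every $C \in \updown(v)$, one has $\upr(C) \in V_2$ and $\low(C) \in V_1$, and all edges entering the counting argument lie in $E_{12}$. The argument of Lemma~\ref{lem:2type3components} then applies verbatim, restricted to the layers $V_1, V_2$: a pigeonhole step on which side of $v$ (in $\sigma_1$) the vertices $\low(C)$ lie isolates $r > 2\sqrt{k}$ zigzags on the same side, and the inductive geometric argument counts at least $\binom{r}{2} > k$ crossings among the associated edges of $E_{12}$, contradicting the fact that $\sigma$ has $k$ crossings. Hence in this case $|\updown(v)| \le 4\sqrt{k}$, which is well within the claimed bound.

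The remaining case $v \in V_2$ is the main obstacle, because now $\upr(C)$ can lie in either $V_1$ or $V_3$. I apply the pigeonhole principle to obtain at least $|\updown(v)|/2 > 4\sqrt{k}$ components all of whose chosen $\upr(C)$ belong to the same adjacent layer; without loss of generality assume this layer is $V_1$, and denote the resulting subfamily by $\mathcal{U}$. For every $C \in \mathcal{U}$, both $\upr(C) \in V_1$ and $\low(C) \in V_2$, so once more all the relevant edges belong to $E_{12}$ and the configuration is identical to the one treated in Lemma~\ref{lem:2type3components}. A further pigeonhole on whether $\low(C)$ lies to the left or right of $v$ in $\sigma_2$ extracts $r > 2\sqrt{k}$ zigzags fanning out on the same side of $v$, and the inductive crossing-counting argument produces at least $\binom{r}{2} > k$ crossings in $\sigma$, the desired contradiction. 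The only subtlety is the factor $2$ lost in the first pigeonhole, which is precisely why the bound degrades from $4\sqrt{k}$ (two-layer) to $8\sqrt{k}$ (three-layer), matching the claim.
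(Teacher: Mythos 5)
Your proof is correct and takes essentially the same approach as the paper's: for $v\in V_1\cup V_3$ the two-layer argument of \Cref{lem:2type3components} applies directly, and for $v\in V_2$ one first pigeonholes over which of the two adjacent layers ($V_1$ or $V_3$) the chosen neighbor $\upr(C)$ lies in, extracting more than $4\sqrt{k}$ components confined to a single pair of layers, after which the two-layer crossing-counting argument gives the contradiction. The paper phrases this first pigeonhole as a classification of components into types (12) and (32) with ties broken arbitrarily, which is the same device in different words.
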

\begin{proof}
	Consider any $v \in \epts(\sep) \cap (V_1 \cap V_3)$. Then, one can use arguments exactly as in \Cref{lem:2type3components} to show that, in fact, $|\updown(v)| \le 4 \sqrt{k} \le 8 \sqrt{k}$. Now, consider $v \in \epts(\sep) \cap V_2 = \epts(\sep, 2)$, and suppose that $|\updown(v)| > 8 \sqrt{k}$. Then, we classify the components of $\updown(v)$ into two types, say (12) and (32), where type 12 components contain a neighbor of $v$ in $V_1$, and its neighbor in $V_2$ (it may possibly contain vertices of $V_3$); whereas type 32 components contain a neighbor of $v$ in $V_3$ and its neighbor in $V_2$. If a component satisfies both conditions, then arbitrarily place it in one of them. At least the number of components of either of the two types of components must be larger than $4\sqrt{k}$. Then, one can use an argument similar to that in \Cref{lem:2type3components} to arrive at a contradiction. This proves that for any $v \in \epts(\sep, 2)$, it holds that $|\updown(v)| \le 8\sqrt{k}$, which finishes the proof.
\end{proof}

To analyze the peculiar components, we again need to consider different cases based on the structure of the separator. 

\begin{definition} \label{def:3separatorcases}
	Let $\sep = \LR{\lsepone, \rsepone, \lseptwo, \rseptwo}$ be the separator as guaranteed by \Cref{lem:3separatorlemma}. We have the following different (non-exclusive) cases.
	\begin{description}[leftmargin=0pt]
		\item[(i) {\sf Distinct Bottom (\DB)}:] if $\ept(\lsepone, 1) \neq \ept(\rsepone, 1)$. 
		\\{\sf Shared Bottom (\SB):} if $\ept(\lsepone, 1) = \ept(\rsepone, 1)$. 
		\item[(ii) {\sf Distinct Top (\DT)}:] if $\ept(\lsepone, 1) \neq \ept(\rsepone, 1)$. 
		\\{\sf Shared Top (\ST):} if $\ept(\lsepone, 1) = \ept(\rsepone, 1)$. 
	\end{description}
	Based on this, we have the following disjoint cases: (\DB, \DT), (\DB, \ST), (\SB, \DT), (\SB, \ST).
\end{definition}

Next, we introduce an additional definition that further classifies the components of .

\begin{definition} \label{def:3peculiartypes}
	We say that a vertex subset $C \in V_1 \cup V_2$ (resp.$C \in V_2 \cup V_3)$ is to the left of $e \in E_{12}$ (resp.~$E_{23}$), if for each $i \in \LR{1,2}$ (resp.~$i \in \LR{2,3}$) it holds that for all $u \in C \cap V_i$, $\sigma_i(u) \le \sigma_i(\ept(e, i))$. Analogously, we define ``right of'' relation if the inequality is flipped.
	\begin{itemize} 
		\item For each $v \in \LR{\ept(\lsepone, 1), \ept(\rsepone, 1)}$, let $\pecu^l(v) \subseteq \pecu(v)$ \\(resp.~$\pecunp^l(v) \subseteq \pecunp(v)$) be the set of components $C$ that are to the left of $\lseptwo$. 
		\item For each $v \in \LR{\ept(\lsepone, 1), \ept(\rsepone, 1)}$, let $\pecu^r(v) \subseteq \pecu(v)$ \\(resp.~$\pecunp^r(v) \subseteq \pecunp(v)$) be the set of components $C$ that are to the right of $\rseptwo$.
		\item For each $v \in \LR{\ept(\lseptwo, 1), \ept(\rseptwo, 1)}$, let $\pecu^l(v) \subseteq \pecu(v)$ \\(resp.~$\pecunp^l(v) \subseteq \pecunp(v)$) be the set of components $C$ that are to the left of $\lseptwo$. 
		\item For each $v \in \LR{\ept(\lseptwo, 1), \ept(\rseptwo, 1)}$, let $\pecu^r(v) \subseteq \pecu(v)$ \\(resp.~$\pecunp^r(v) \subseteq \pecunp(v)$) be the set of components $C$ that are to the left of $\rseptwo$. 
	\end{itemize}
\end{definition}

We next prove the following lemma.

\begin{lemma} \label{lem:3diffcases}
	The following bounds hold: 
	\begin{enumerate}
		\item $\min\LR{|\pecunp^l(\ept(\lsepone, 1)), \pecunp^l(\ept(\lseptwo, 3))|} \le \sqrt{k}$.
		\item $\min\LR{|\pecunp^r(\ept(\rsepone, 1)), \pecunp^r(\ept(\rseptwo, 3))|} \le \sqrt{k}$.
	\end{enumerate} 
\end{lemma}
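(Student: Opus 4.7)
I would prove part~(1) by contradiction; part~(2) is entirely symmetric, obtained by swapping ``left'' for ``right'' and replacing $\lsepone, \lseptwo, \pecunp^l$ by $\rsepone, \rseptwo, \pecunp^r$ throughout, with the corresponding positional inequalities reversed. Write $v_1 = \ept(\lsepone, 1)$, $v_3 = \ept(\lseptwo, 3)$, $\mathcal{C}_1 = \pecunp^l(v_1)$, $\mathcal{C}_3 = \pecunp^l(v_3)$, and assume toward contradiction that both $|\mathcal{C}_1| > \sqrt{k}$ and $|\mathcal{C}_3| > \sqrt{k}$. The goal is to exhibit strictly more than $k$ crossings in $\sigma$, contradicting its crossing count.

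For each $C \in \mathcal{C}_1$, I would pick a vertex $u_C \in V_2 \cap C$ adjacent to $v_1$ in $G$ (which exists because $C$ attaches to $v_1$ via at least one edge, and that edge must enter $V_2$) together with an in-component neighbor $w_C \in V_3 \cap C$ of $u_C$ (which exists because $C \subseteq V_2 \cup V_3$ is connected with vertices on both layers, so every $V_2$-vertex of $C$ has at least one $V_3$-neighbor inside $C$). This produces the two edges $e_C = v_1 u_C \in E_{12}$ and $g_C = u_C w_C \in E_{23}$. Symmetrically, for each $C' \in \mathcal{C}_3$, I would pick $u'_{C'} \in V_2 \cap C'$ adjacent to $v_3$ and a $V_1$-neighbor $x_{C'}$ of $u'_{C'}$ inside $C'$, yielding $e'_{C'} = v_3 u'_{C'} \in E_{23}$ and $g'_{C'} = x_{C'} u'_{C'} \in E_{12}$.

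The key combinatorial observation is that $u_C \ne u'_{C'}$ for every pair $(C, C')$: if they coincided, the edge $u_C v_3$ would persist in $G - v_1$ and pull $v_3$ into the component $C$, contradicting $C \in \pecu(v_1)$ (which forbids $C$ from containing any element of $\epts(\sep) \setminus \{v_1\}$). Combined with the strict positional inequalities $\sigma_3(w_C) < \sigma_3(v_3)$ (from $C$ being to the left of $\lseptwo$ and $w_C \ne v_3$) and $\sigma_1(x_{C'}) < \sigma_1(v_1)$ (from $C'$ being to the left of $\lsepone$ and $x_{C'} \ne v_1$), a short two-case analysis on the sign of $\sigma_2(u_C) - \sigma_2(u'_{C'})$ finishes the per-pair step: in one case the edges $e_C, g'_{C'}$ cross in $E_{12}$ by endpoint-reversal between the $V_1$-$V_2$ layers, and in the opposite case the edges $g_C, e'_{C'}$ cross in $E_{23}$. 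Either way, each pair $(C, C')$ contributes at least one crossing of $\sigma$.

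Distinctness of the produced crossings is immediate: the components of $\mathcal{C}_1$ (respectively $\mathcal{C}_3$) are pairwise vertex-disjoint, so different $C$'s (respectively $C'$'s) select different edges among $\{e_C, g_C\}$ (respectively $\{e'_{C'}, g'_{C'}\}$), and hence no two distinct pairs produce the same crossing. Therefore $\sigma$ has at least $|\mathcal{C}_1| \cdot |\mathcal{C}_3| > k$ crossings, the desired contradiction. The step I view as the main obstacle is recognizing the non-coincidence $u_C \ne u'_{C'}$: without it one only obtains at least $|\mathcal{C}_1| \cdot |\mathcal{C}_3| - \min(|\mathcal{C}_1|, |\mathcal{C}_3|)$ crossings, which fails to exceed $k$ for small $k$ and destroys the tight $\sqrt{k}$ threshold. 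With that observation secured, the remainder reduces to the standard reversal-of-endpoints crossing criterion between edges on consecutive layers.
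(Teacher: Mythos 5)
Your proposal is correct and follows the same overall strategy as the paper's proof: pick a small piece of structure near $v_1=\ept(\lsepone,1)$ inside each $C \in \pecunp^l(v_1)$ and near $v_3=\ept(\lseptwo,3)$ inside each $C' \in \pecunp^l(v_3)$, show that every pair $(C,C')$ forces a crossing distinct from those of other pairs, and conclude that $>\sqrt{k}\cdot\sqrt{k}=k$ crossings exist. However, the local structure you pick and the case split you perform are genuinely different from the paper's, and in my view cleaner: you choose a single length-two path $v_1\, u_C\, w_C$ inside $\{v_1\}\cup C$ and $x_{C'}\, u'_{C'}\, v_3$ inside $C'\cup\{v_3\}$, then split on the sign of $\sigma_2(u_C)-\sigma_2(u'_{C'})$ to obtain a crossing either in the $E_{12}$-strip (between $v_1u_C$ and $x_{C'}u'_{C'}$) or in the $E_{23}$-strip (between $u_Cw_C$ and $v_3u'_{C'}$). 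The paper instead picks ``rightmost'' attachment vertices $u_1,w_1,u_2,w_2$ and splits on whether $w_2$ has a $V_2$-neighbor to the right of $u_1$; its second case rests on the unsupported assertion that $u_1$ lies between $q_2$ and $u_2$, and on an edge $u_1w_1$ that is not guaranteed to exist (so it reads as a proof sketch). Your version sidesteps both issues. You also explicitly isolate and refute the degenerate possibility $u_C=u'_{C'}$ (via the observation that it would force $v_3\in C$), whereas the paper only records that the two families of components are disjoint as sets of components, which is strictly weaker than the vertex-disjointness that the distinctness of the crossings actually requires; your argument supplies exactly the fact needed. One minor remark: your estimate that without $u_C\ne u'_{C'}$ one only gets $|\mathcal{C}_1|\cdot|\mathcal{C}_3|-\min(|\mathcal{C}_1|,|\mathcal{C}_3|)$ crossings is correct, and this is indeed only $>k-\sqrt{k}$, so the coincidence check is genuinely load-bearing as you claim.
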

\begin{proof}
	We will show the first item, and the proof of the second one is analogous. Let us simplify the notation and say 
	$x_1 \coloneqq \ept(\lsepone, 1), x_2 \coloneqq \ept(\lseptwo, 3)$.
	Suppose for the contradiction that $|\pecunp^l(x_1)| > \sqrt{k}$ and $|\pecunp^l(x_1)| > \sqrt{k}$. Note that $\pecunp^l(x_1) \cap \pecunp^l(x_2) = \emptyset$, since a component $C \in \pecunp^l(x_1)$ contains a neighbor of $x_1$ and hence in $G-\ept(x_2)$, $V(C) \cup \LR{x_2}$ is part of one connected component whose vertex set is a strict superset of that of $C$. In the next paragraph, we will show that each pair of components $C_1 \in \pecunp^l(x_1)$ and $C_2 \in \pecunp^l(x_2)$ must account for a distinct crossing. This implies that the number of crossings between the components of $\pecunp^l(x_1)$ and $\pecunp^l(x_2)$ is larger than $k$, a contradiction.
	
	Let $u_1 \in V(C_1) \cap V_2$ be the rightmost neighbor of $x_1$ in $C$, and let $w_1 \in V(C_1) \cap V_3$ be the rightmost vertex in $V(C_1) \cap V_3$. Similarly, define $u_2 \in V(C_2) \cap V_2$ as the rightmost neighbor of $x_2$ in $C_2$ and let $w_2 \in V_1 \cap V(C_2)$ be the rightmost vertex of $V(C_2) \cap V_1$. If $w_2$ has a neighbor $q_2$ that is to the right of $u_1$, then $w_2q_2$ crosses $x_1u_1$ (see \Cref{fig:3type2crossing}, case (a)). Otherwise, if all neighbors of $w_2$ are to the left of $u_1$. Let $q_2$ be one such neighbor of $w_2$. There exists a path between $u_2$ and $q_2$ that is entirely within $C_1$. Since $u_1$ is between $q_2$ and $u_2$, such a path must cross either the edge $x_1u_1$ or the edge $u_1w_1$ (see \Cref{fig:3type2crossing}, case (b)). 
	
	
	\begin{figure}
		\centering
		\includegraphics[scale=1,page=9]{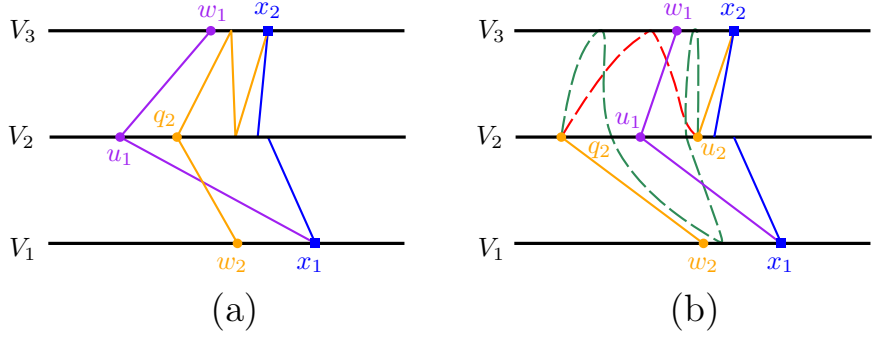}
		\caption{Illustration for the proof of \Cref{lem:3diffcases}. In case (b), we show two possible paths from $q_2$ to $u_2$ -- the green path crosses $x_1u_1$; whereas the red path crosses $u_1w_1$.} \label{fig:3type2crossing}
	\end{figure}
	
\end{proof}

In the next observation, we consolidate some properties regarding the components $\comps(v)$ for $v\in \epts(\sep)$ in different cases.

\begin{lemma} \label{lem:3pecunp-cases}\ 
	\begin{enumerate}[leftmargin=2pt]
		\item In {\sf \DB} case, 
		\begin{enumerate}[leftmargin=2pt]
			\item $\pecunp^l(\ept(\lsepone, 1) = \pecunp(\ept(\lsepone, 1))$. Further, each $C \in \pendant(\lsepone, 1)$ lies to the left of $\ept(\lsepone, 1)$.
			\item $\pecunp^r(\ept(\rsepone, 1) = \pecunp(\ept(\rsepone, 1))$. Further, each $C \in \pendant(\rsepone, 1)$ lies to the left of $\ept(\rsepone, 1)$.
		\end{enumerate}
		\item In {\sf \DT} case,
		\begin{enumerate}[leftmargin=2pt]
			\item $\pecunp^l(\ept(\lseptwo, 3) = \pecunp(\ept(\lseptwo, 3))$.
			\item $\pecunp^r(\ept(\rseptwo, 1) = \pecunp(\ept(\rseptwo, 3))$.
		\end{enumerate}
		\item In {\sf \SB} case, when $\ept(\lsepone, 1) = \ept(\rsepone, 1) = v_b$, say, then $(\pecu^l(v_b), \pecu^r(v_b))$ is a partition of $\pecu(v_b)$.
		\item In {\sf \ST} case, when $\ept(\lseptwo, 3) = \ept(\rseptwo) = v_t$, say, then $(\pecunp^l(v_t), \pecunp^r(v_t))$ is a partition of $\pecunp(v_t)$.
	\end{enumerate}
\end{lemma}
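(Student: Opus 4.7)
My plan is to prove each of the four parts by analyzing the geometric constraints imposed on components of $\pecunp(v)$ (or $\pecu(v)$) by the assumption $C \notin \spl(v)$ -- namely, (i) no edge of $E(G[C \cup \{v\}])$ lies in $\crosse(\sep) \cup \mde(\sep)$, and (ii) $C$ contains no vertex of $\gapleft \cup \gapright \cup (\epts(\sep) \setminus \{v\})$. A preliminary observation I would set up first is that for any $C \in \pecunp(v)$ with $v \in V_1$ and any $w \in V_2 \cap V(C)$, since $V(C) \cap V_1 = \emptyset$ and $C$ is not a pendant, $w$ is forced to have at least one neighbor in $V_3 \cap V(C)$ and hence $w \in \vtwo$. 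The symmetric statement holds when $v \in V_3$.

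For part (1a) of the \DB case, I fix $v = \ept(\lsepone,1)$, take $C \in \pecunp(v)$, and first show that every $V_2$-neighbor $w_0$ of $v$ in $C$ satisfies $\sigma_2(w_0) < \sigma_2(\ept(\lseptwo,2))$. The steps are: since $v$ is strictly left of $\ept(\rsepone,1)$ in \DB, the edge $vw_0$ avoids crossing $\rsepone$ only if $\sigma_2(w_0) \le \sigma_2(\ept(\rsepone,2))$; $w_0$ cannot lie strictly inside $\md_2$ (else $vw_0 \in \mde$) nor coincide with a separator endpoint (else $C \in \spl(v)$), forcing $\sigma_2(w_0) < \sigma_2(\ept(\lsepone,2))$; and since $w_0 \in \vtwo$ with $w_0 \notin \gapleft$, this pushes $\sigma_2(w_0) < \sigma_2(\ept(\lseptwo,2))$. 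I then propagate this by induction along an arbitrary path $v, w_0, z_1, w_1, z_2, \ldots$ inside $G[C \cup \{v\}]$: each edge $w_i z_{i+1}$ (respectively $z_i w_i$) cannot cross $\lseptwo$, and the previous vertex being strictly left of the corresponding endpoint of $\lseptwo$ forces the next vertex to be strictly left as well, using the disallowed-endpoint/gap arguments at each step. Thus $C$ lies strictly left of $\lseptwo$, so $C \in \pecunp^l(v)$. The ``pendant'' clause is an immediate consequence of the first bullet of this argument applied to a single-vertex component. Part (1b) follows by the symmetric argument on the right. Part (2) for the \DT case is entirely analogous, swapping the roles of layers $V_1 \leftrightarrow V_3$ and of the separator edges $\{\lsepone,\rsepone\} \leftrightarrow \{\lseptwo,\rseptwo\}$.

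For parts (3) and (4), the goal is different: I need to show disjointness and covering of $\pecu(v_b)$ (resp.~$\pecunp(v_t)$) by $\pecu^l$ and $\pecu^r$. Disjointness is immediate from the strict left/right relations in \Cref{def:3peculiartypes}. For the covering (non-straddling) claim in the \SB case, the key is that $v_b \in V_1$ is now the shared bottom endpoint of both $\lsepone$ and $\rsepone$, so the \DB ``strictly-left direct neighbor'' shortcut is no longer available. Instead, I argue directly on an edge $wz$ inside $C$ (with $w \in V_2 \cap V(C)$, $z \in V_3 \cap V(C)$): since $w \in \vtwo$ and $C \notin \spl(v_b)$, $w$ must avoid $\gapleft \cup \gapright$ and all separator endpoints; combined with $wz \notin \crs(\lseptwo) \cup \crs(\rseptwo)$, this forces both $w$ and $z$ to lie either entirely strictly left of $\lseptwo$ or entirely strictly right of $\rseptwo$. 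A short propagation along paths in $C$ then shows that the same side is chosen for every edge of $C$, giving $C \in \pecu^l(v_b) \uplus \pecu^r(v_b)$. Part (4) is handled symmetrically for the \ST case, restricted to $\pecunp(v_t)$ (not $\pecu$) because pendant vertices attached to $v_t \in V_3$ lie in $V_2$ and may legitimately fall in the middle region, while non-pendant peculiar components are forced to pick a side by the same edge-based argument.

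The main obstacle is the last case: ruling out intermediate placements for $V_2$-vertices of $C$ in the \SB/\ST cases without the strict shortcut from \DB/\DT. The delicate point is handling a $V_2$-vertex $w$ of $C$ that sits inside $\md_2$; one must use that $w \in \vtwo$ together with any $V_3$-neighbor $z$ of $w$ in $C$ and argue that $wz$ would have to cross $\lseptwo$ or $\rseptwo$ unless $z$ equals an endpoint of one of these edges, each of which contradicts $C \notin \spl(v_b)$. I expect the bookkeeping here -- spelled out separately for $v \in V_1$ and $v \in V_3$ and for each of the four endpoint-coincidence patterns -- to be the most error-prone part of the proof.
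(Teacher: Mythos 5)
Your plan for the \DB/\DT items (parts 1 and 2) is essentially the same argument as the paper's, just organized forward rather than by contradiction: the paper assumes an offending edge exists, traces the path $v=v_0,v_1,\ldots,v_\ell$ back to $v$, and does the case split on the first $V_2$-vertex $v_j$ that escapes the left region, whereas you start at the direct $V_2$-neighbors of $v$, pin them strictly left of $\ept(\lseptwo,2)$ via the $\rsepone$/$\mde$/$\gapleft$ exclusions, and propagate outward along $V_2$--$V_3$-alternating paths. Both use exactly the same exclusions ($\crosse$, $\mde$, $\gapleft/\gapright$, separator endpoints) plus the observation that $V_2$-vertices of a non-pendant peculiar component must lie in $\vtwo$; this is sound.

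For the \SB/\ST items there is a genuine gap in the step you flag as ``delicate.'' You claim that when $w\in V_2\cap V(C)$ sits strictly between $\ept(\lsepone,2)$ and $\ept(\rsepone,2)$, the edge $wz$ to a $V_3$-neighbor $z$ ``would have to cross $\lseptwo$ or $\rseptwo$ unless $z$ equals an endpoint.'' This is false: if $\sigma_3(\ept(\lseptwo,3))<\sigma_3(z)<\sigma_3(\ept(\rseptwo,3))$ then $wz$ crosses neither, and none of your listed exclusions ($\gapleft$, $\gapright$, endpoint avoidance, $\crosse$) rules out this placement. The reason is precisely that $w$ is not assumed adjacent to $v_b$, so the $\mde$ exclusion — the only tool that penalizes the $(\ept(\lsepone,2),\ept(\rsepone,2))$ interval — is not directly available for $w$. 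The fix (and what the paper actually does) is to go back to a $v_b$--$w$ path inside $G[C\cup\{v_b\}]$ and apply the exclusions at the \emph{first} $V_2$-vertex $v_1$ adjacent to $v_b$: since $v_b=\ept(\lsepone,1)=\ept(\rsepone,1)$, any $v_1$ in $[\ept(\lsepone,2),\ept(\rsepone,2)]$ forces $v_bv_1\in\mde(\sep)$ or $v_1\in\epts(\sep)$; combining with $\gapleft/\gapright$ forces $v_1$ strictly left of $\lseptwo$ or strictly right of $\rseptwo$, and then the \DB-style propagation contradicts $w$'s assumed position. Without this detour through $v_1$, your edge-local argument does not close the middle case. (A separate, more minor point: your argument for item (3) covers only non-pendant components, whereas the lemma as stated is about $\pecu(v_b)$, which includes pendants; you should make explicit how that clause is handled.)
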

\begin{proof} \ 
	\begin{enumerate}[leftmargin=2pt]
		\item {\sf \DB case.} We give the proof of item 1 and the proof of 2 is analogous. Let $v = \ept(\lsepone, 1)$, and we will prove that $\pecunp^l(v) = \pecunp(v)$. Consider a component $C \in \pecunp(v)$, and suppose for the contradiction that there exists some edge $e = uw \in E(C)$  that is not to the left of $\lseptwo$. If $e = \lseptwo$, or if $e$ crosses $\lseptwo$, then $C$ would belong to $\spl(v)$, which is a contradiction. 
		Now, suppose that $e = uw$ is to the right of $\lseptwo$, where $u \in V_2$ and $w \in V_3$. Since $C$ is a connected component in $G-v$, there is a path $P = (v = v_0, v_1, \ldots, v_\ell = u)$ such that $vv_1 \in E(G)$, and $(v_i, v_{i+1}) \in E(C)$ for $1 \le i \le \ell-1$. Let us also observe that, except for $v \in V_1$, all other vertices in $P$ alternate between vertices of $V_2$ and $V_3$,  with odd indexed vertices belonging to $V_2$ and even indexed vertices belonging to $V_3$, -- if $P$ contained a vertex of $V_1$, then $C \in \updown(v)$, a contradiction.

		If $\sigma_3(\ept(\rseptwo, 3)) < \sigma_3(w)$, then we say that this is \textbf{case 1}, otherwise if $\sigma_3(\ept(\lseptwo, 3)) < \sigma_(w) < \sigma_3(\ept(\rseptwo, 3))$, then we say that this is \textbf{case 2}.
		
		In \textbf{case 1}, note that there exists an edge of the path $P$ must either cross $\lseptwo$, $\rseptwo$, or pass through a vertex of $\gapright$. In each of the cases, this implies that $C \in \spl(v)$, which is a contradiction.

		Now, let us consider \textbf{case 2}. Let $v_j \in C \cap V_2$, $j > 0$ be the first vertex along the path that lies to the right of $\ept(\lseptwo, 2)$. Note that $v_j$ exists, since the last vertex, i.e., $u$ satisfies the condition. We will show that in this case, $C$ must belong to $\spl(v)$, a contradiction.
		
		First, if $j = 1$, then let us consider two cases. (i) If $\sigma_2(\ept(\lseptwo, 2)) < \sigma_2(v_1) < \sigma_2(\ept(\lsepone, 2))$, then $v_1 \in \gapleft$. (ii) If $\sigma_2(v_1) < \sigma_2(\ept(\lsepone, 2)) < \sigma_2(v_1) < \sigma_2(\ept(\rsepone, 2))$, then $vv_1 \in \mde(\sep)$. (iii) Finally, if $\sigma_2(v_1) > \sigma_2(\ept(\rsepone, 2))$, then $vv_1$ crosses $\rsepone$. Thus, in each of the three sub-cases,  we obtain that $C \in \spl(v)$, which is a contradiction. Note that this same argument also shows the second claim about components from $\pendant(v)$.
		
		Next, suppose $j > 1$. First, note that $j \ge 3$, and $v_{j-1} \in V_3$. (i) If $\sigma_3(v_{j-1}) < \sigma_3(\ept(\lseptwo, 3))$, then the edge $v_{j-1} v_j$ crosses $\lsepone$. (ii) Otherwise, $\sigma_3(\ept(\lseptwo, 3)) < \sigma_3(v_{j-1})$. Note that $v_{j-2} \in V_2$ satisfies that $\sigma_2(v_{j-2}) < \sigma_2(\ept(\lseptwo, 2))$. However, this implies that the edge $v_{j-2} v_{j-1}$ crosses $\lseptwo$. Thus, again in each of the two sub-cases, we obtain that $C \in \spl(v)$, a contradiction. 
	\item {\sf \DT case.} Again, we prove the first item, and the proof of the second item is analogous. Again, for convenience, let $v = \ept(\lseptwo, 3)$, and consider any $C \in \pecunp(v)$. Suppose for contradiction that $C \not\in \pecunp^l(v)$, which implies that $C$ contains an edge $uw$ with $u \in V_2, w \in V_1$ that is not to the left of $\lsepone$. Let us again consider a path $P = (v = v_0, v_1, v_2, \ldots, v_{\ell} = u)$ where $vv_1 \in E(G)$ and $(v_i, v_{i+1}) \in E(C)$. Again, the vertices must alternate between $V_2$ and $V_1$ respectively. Let $v_j \in V_2$ be the first vertex along the path that is to the right of $\ept(\lsepone, 2)$. 
	
	If $j = 1$, then consider the following cases. (i) $\sigma_1(v_2) < \sigma_1(\ept(\lsepone, 1))$, then the edge $v_1v_2$ crosses $\lsepone$. (ii) If $\sigma_1(\ept(\lsepone, 1)) < \sigma_1(v_2) < \sigma_1(\ept(\rsepone, 1))$, then $v_1v_2 \in \mde(\sep)$. (iii) If $\sigma_1(\ept(\rsepone, 1)) < \sigma_2(v_2)$, then $v_1v_2$ crosses $\rsepone$. Thus, in each case $C \in \spl(v)$, a contradiction.
	
	If $j > 1$, then this proof is analogous to a similar case from the previous part, and hence we omit it. 
	
	\item {\sf SB case.} In this case, suppose for contradiction that there exists some $C \in \pecu(v_b)$ such that $C \not\in \pecu^l(v_b) \cup \pecu^r(v_b)$. We show that $C \in \spl(v_b)$, a contradiction. First, if $C$ is a pendant vertex $u$, then using an argument similar to {\sf DB case} above, we can show that either $u \in \gapleft \cup \gapright$, or the edge $v_bu \in \mde(\sep)$, a contradiction. 
	Note that for some $i \in \LR{2, 3}$, there exists a vertex $w \in V(C) \cap V_i$ such that $\sigma_i(\ept(\lseptwo, i)) < \sigma_i(w) < \sigma_i(\ept(\rseptwo, i))$. Suppose w.l.o.g. that $w \in V_2$. Then, let $w' \in V_3$ be a neighbor of $w$ in $C$. Note that $w'$ must also satisfy $\sigma_3(\ept(\lseptwo, 3)) < \sigma_3(w') < \sigma_3(\ept(\rseptwo, 3))$ -- otherwise $ww'$ edge crosses either $\lseptwo$ or $\rseptwo$, a contradiction. However, if $w'$ lies between the top endpoints of $\lseptwo$ and $\rseptwo$, then we can show that $C \in \spl(v_b)$ using an argument similar to the {\sf DB} case above, and arrive at a contradiction. 
	
	\item {\sf ST case.} Here, the argument is exactly similar to the {\sf SB case}, and is therefore omitted. Note that in this case, the argument regarding the pendant vertices does not hold, since the edges of $\mde(\sep)$ are only defined in $E_{23}$.  
	\end{enumerate}
\end{proof}

\paragraph{Handling Pendant Components.} In this step, we will handle the pendant components attached to $\epts(\sep)$. At a high level, the idea is similar to that used for the 2 layer case, except that we may have at most 8 vertices in $\epts(\sep)$. \footnote{By a more careful case analysis, it may be shown that we do not have to handle each of the 8 vertices, which further restricts the number of guesses; however since this does not substantially affect the running time, we handle all the vertices in $\epts(\sep)$ in a unified fashion.}

\begin{definition} \label{def:3pendantblueprint}\ 
	\begin{itemize}[leftmargin=3pt]
		\item A \emph{pendant-blueprint} is a collection of tuples $$\pbp = \LR{(\pendantsl(v), \pendantsm(v),  \pendantsr(v)): v \in \epts(\sep)},$$ such that, (i) for each $v \in \epts(\sep)$, $\pendantsl(v) + \pendantsm(v) + \pendantsr(v) = \pendantst(v)$, which denotes the total sum of multiplicities of $\pendant(v)$, and (ii) for each $v' \in \epts(\sep, 1)$, $\pendantsm(v) = 0$. 
		\item A \emph{pendant-guess} is a collection of tuples $$\pguess = \LR{(\pendantl(v), \pendantm(v), \pendantr(v)): v \in \epts(\sep)}$$ where (i) for each $v \in \epts(\sep)$, $(\pendantl(v), \pendantm(v), \pendantr(v))$ is a partition of $\pendant(v)$, and (ii) for each $v' \in \epts(\sep, 1)$, $\pendantm(v') = \emptyset$.
		\item We say that a $\pguess = \LR{(\pendantl(v), \pendantsm(v), \pendantr(v)): v \in \epts(\sep)}$ is \emph{$\pbp$-compliant} if, for each $v \in \epts(v)$ the total sum of multiplicities of vertices in $\pendantl(v)$ (resp.~$\pendantsm(v), \pendantsr(v)$) is equal to $\pendantl(v)$ (resp.~$\pendantm(v), \pendantr(v)$).
		
		\item Let $\pbp$ be a pendant-blueprint, and let $\pguess$ be a $\pbp$-compliant pendant-guess. Then, we define a $\nicepfamily(\pbp, \pguess, \sep, \sigma)$ to be a family of drawings $\sigma'$ satisfying the following properties.
		\begin{itemize}
			\item For each $i \in [3]$, $\sigma_i$ and $\sigma'_i$, when restricted to
			$\displaystyle V_i \setminus \bigcup_{v \in \epts(\sep)} \pendant(v)$, are the same.
			\item For each $v \in \epts(\sep, 1) \cup \epts(\sep, 3)$, 
			$\sigma'_2$ places (i) all the vertices in $\pendantl(v)$ to the left of $\ept(\lseptwo, 2)$, (ii) all the vertices of $\pendantm(v)$ between $\ept(\lseptwo, 2)$ and $\ept(\rseptwo, 2)$, and (iii) all the vertices in $\pendantr(v)$ to the right of $\ept(\rseptwo, 2)$.
			\item For each $v_2 \in \epts(\sep, 2)$, 
			\begin{itemize}
				\item $\sigma'_1$ places (i) all the vertices in $\pendantl(v)$ to the left of $\ept(\lsepone, 1)$, (ii) all the vertices in $\pendantm(v)$ between $\ept(\lseptwo, 2)$ and $\ept(\rseptwo, 2)$, and (iii) all the vertices in $\pendantr(v)$ to the right of $\ept(\rsepone, 1)$,
				\item $\sigma'_3$ places (i) all the vertices in $\pendantl(v)$ to the left of $\ept(\lseptwo, 3)$, (ii) all the vertices in $\pendantm(v)$ between $\ept(\lseptwo, 3)$ and $\ept(\rseptwo, 3)$, and (iii) all the vertices in $\pendantr(v)$ to the right of $\ept(\rseptwo, 3)$. 
			\end{itemize} 
%
		\end{itemize}
	\end{itemize}
\end{definition}

Next, we introduce the notion of separator-compatible drawings.
\begin{definition} \label{def:3sepcompatible}
	Let $\cI$ be a $k$-minimal \yin and $\sigma, \sigma'$ be drawings with $k$ crossings. Further, let $\sep$ be a separator w.r.t. $\sigma$. 
	We say that $\sigma'$ is a \emph{$(\sep, \sigma)$-compatible drawing} if it satisfies the following properties.
	\begin{itemize}
		\item $\sep$ is also a separator w.r.t. $\sigma'$, such that the number of crossings on each side of each edge in $\sep$ is same in $\sigma$ and $\sigma'$, 
		\item $\gapleft(\sep, \sigma) = \gapleft(\sep, \sigma')$, and $\gapright(\sep, \sigma) = \gapright(\sep, \sigma')$.
		\item $\mde(\sep, \sigma) = \mde(\sep, \sigma')$, and $\crosse(\sep, \sigma) = \crosse(\sep, \sigma')$.
	\end{itemize}
\end{definition}

Next, we state a transitive property of separator-compatibility.
\begin{observation} \label{obs:3compatibletransitive}
	Let $\cI$ be a $k$-minimal extended instance, and let $\sigma_1, \sigma_2, \sigma_3$ be drawings of $\cI$ with $k$ crossings. Let $\sep$ be a separator w.r.t. $\sigma_1$. If $\sigma_2$ is $(\sep, \sigma_1)$-compatible, and $\sigma_3$ is $(\sep, \sigma_2)$-compatible, then $\sigma_3$ is $(\sep, \sigma_1)$-compatible.
\end{observation}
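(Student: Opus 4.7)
The plan is to unfold the definition of $(\sep, \sigma)$-compatibility and verify each of its three bullet-point conditions for the pair $(\sigma_3, \sigma_1)$ by simply chaining each equality through the intermediate drawing $\sigma_2$. This reduces to three applications of transitivity of equality, one for each condition in \Cref{def:3sepcompatible}. Since the hypotheses say that $\sigma_2$ is $(\sep, \sigma_1)$-compatible and $\sigma_3$ is $(\sep, \sigma_2)$-compatible, all the relevant equalities that we need to chain are given to us by these hypotheses directly.

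Concretely, I would first verify that $\sep$ remains a separator with respect to $\sigma_3$ and that the number of crossings on each side of every edge of $\sep$ is the same in $\sigma_1$ and $\sigma_3$. The first assertion is part of the assumption that $\sigma_3$ is $(\sep, \sigma_2)$-compatible. For the second, the compatibility of $\sigma_2$ with $\sigma_1$ gives equality of these crossing counts between $\sigma_1$ and $\sigma_2$, and the compatibility of $\sigma_3$ with $\sigma_2$ gives the equality between $\sigma_2$ and $\sigma_3$; combining these yields equality between $\sigma_1$ and $\sigma_3$.

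Next, I would handle the equality of the distinguished vertex sets. The compatibility of $\sigma_2$ with $\sigma_1$ gives $\gapleft(\sep, \sigma_1) = \gapleft(\sep, \sigma_2)$ and $\gapright(\sep, \sigma_1) = \gapright(\sep, \sigma_2)$, while the compatibility of $\sigma_3$ with $\sigma_2$ gives $\gapleft(\sep, \sigma_2) = \gapleft(\sep, \sigma_3)$ and $\gapright(\sep, \sigma_2) = \gapright(\sep, \sigma_3)$. Chaining these yields the desired equalities between $\sigma_1$ and $\sigma_3$. An identical chaining argument applied to $\mde(\sep, \cdot)$ and $\crosse(\sep, \cdot)$ finishes the verification of the third condition.

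I do not expect any real obstacle here; the statement is essentially that an equivalence-like relation defined as the conjunction of several equalities is transitive, and the proof is a three-line invocation of transitivity of $=$ (once for crossing counts on each side, once for $\gapleft$ and $\gapright$, and once for $\mde$ and $\crosse$). The only thing to be slightly careful about is to note that the condition ``$\sep$ is a separator w.r.t.\ $\sigma'$'' is itself not a piece of data that needs to be transported across equalities: it is already guaranteed for $\sigma_3$ directly by the hypothesis that $\sigma_3$ is $(\sep, \sigma_2)$-compatible.
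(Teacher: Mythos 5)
Your proposal is correct and is exactly the intended argument; the paper states this as an \textbf{Observation} without a written proof precisely because it is a three-line invocation of transitivity of equality applied to each clause of \Cref{def:3sepcompatible}, which is what you do. Your remark that ``$\sep$ is a separator w.r.t.\ $\sigma_3$'' is directly supplied by the hypothesis that $\sigma_3$ is $(\sep,\sigma_2)$-compatible (rather than something to be transported) is also the right way to think about it; it is worth noting additionally that for the phrase ``$\sigma_3$ is $(\sep,\sigma_2)$-compatible'' to even type-check against \Cref{def:3sepcompatible}, one needs $\sep$ to already be a separator w.r.t.\ $\sigma_2$, and that this is guaranteed by the first bullet of the assumption that $\sigma_2$ is $(\sep,\sigma_1)$-compatible.
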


The proof of the following lemma is also via the same kinds of arguments used in the proof of \Cref{lem:2permutation}, and thus is omitted.
\begin{lemma} \label{lem:3permutation}
	Let $\cI$ be a $k$-minimal \yin, let $\sigma$ be a drawing with $k$ crossings, and let $\sep$ be a separator guaranteed by \Cref{lem:2separatorlemma}. 
	
	Then, there exists a $\pbp$ such that, for any $\pguess$ that is $\pbp$-compliant, there exists a drawing $\sigma' \in \nicepfamily(\pbp, \pguess, \sep, \sigma)$ such that $\sigma'$ is $(\sep, \sigma)$-compatible.
\end{lemma}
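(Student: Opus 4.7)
The proof will closely mirror that of \Cref{lem:2permutation}, adapted to the richer structure of three layers. The first and most important step is the following equi-crossing claim, proved by $k$-minimality: for every $v \in \epts(\sep)$ and every ``layer type'' $\tau$ of pendants attached to $v$ (the single type $V_2$ if $v \in V_1 \cup V_3$, and the two types $V_1$ and $V_3$ if $v \in V_2$), every edge $vw$ with $w \in \pendant(v)$ of type $\tau$ is crossed exactly the same number of times (counted with multiplicity) in $\sigma$. If two such pendants $w, w'$ of the same type had crossing counts $q > q'$, then relocating $w$ to the position immediately adjacent to $w'$ on the appropriate layer would yield a valid drawing with $k - \mu(vw)(q-q') < k$ crossings, contradicting $k$-minimality of $\cI$. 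We use here that $vw$ and $vw'$ share the endpoint $v$ and so cannot cross one another, that the relocation leaves every other vertex fixed, and that moving a pendant on $V_1$ affects only $E_{12}$-crossings while moving one on $V_3$ affects only $E_{23}$-crossings, so the two layer types at a $V_2$-endpoint can be analyzed independently.

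Given this property, define $\pbp$ to record, for each $v \in \epts(\sep)$, the triple $(\pendantsl(v), \pendantsm(v), \pendantsr(v))$ induced by how $\sigma$ itself distributes $\pendant(v)$ across the three regions, tracking implicitly the per-layer breakdown for $v \in \epts(\sep, 2)$ through the equi-crossing argument. The equation $\pendantsm(v) = 0$ for $v \in \epts(\sep, 1)$ is automatic: a pendant $u \in V_2$ of such a $v$ placed between $\ept(\lseptwo, 2)$ and $\ept(\rseptwo, 2)$ would force the edge $vu$ into $\mde(\sep)$, make it cross $\lsepone$ or $\rsepone$, or place $u$ in $\gapleft \cup \gapright$, in all cases pushing the component into $\spl(v)$ rather than $\pendant(v)$. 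For the construction of $\sigma'$ from any $\pbp$-compliant $\pguess$, we proceed as in the two-layer case: for each $v$, each layer type $\tau$, and each region $R \in \{\lf, \mm, \rt\}$ to which $\pguess$ assigns at least one pendant of type $\tau$ (which implies, by $\pbp$-compliance combined with the per-type accounting, that $\sigma$ places at least one pendant of type $\tau$ at $v$ in region $R$), fix a representative position $p^R_\tau$ occupied by such a pendant in $\sigma$ and relocate all pendants in the corresponding $\pguess$-bucket as a contiguous cluster adjacent to $p^R_\tau$, while all non-pendant vertices and all non-pendant edge drawings are inherited unchanged from $\sigma$. By the equi-crossing property, each relocated pendant edge inherits the crossing count of the representative, so its contribution in $\sigma'$ equals its contribution in $\sigma$, giving $k$ crossings in total.

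To verify $(\sep, \sigma)$-compatibility, observe that no pendant edge belongs to $\sep \cup \mde(\sep) \cup \crosse(\sep)$ and no pendant vertex belongs to $\gapleft \cup \gapright$ (otherwise its component would sit in $\spl(v)$, not $\pendant(v)$). Since the construction only repositions pendant vertices, all of $\sep$, $\mde(\sep)$, $\crosse(\sep)$, $\gapleft$, $\gapright$ remain unchanged, each separator edge remains blue in $\sigma'$ with exactly the same crossing edges on each side as in $\sigma$, and consequently the crossing counts to the left of $\lsepone, \lseptwo$ and to the right of $\rsepone, \rseptwo$ are preserved. The main technical subtlety, more delicate than in \Cref{lem:2permutation}, is the careful handling of the two distinct layer types at a $V_2$-endpoint of the separator, which requires tracking the equi-crossing property per layer type and choosing representative positions per layer type so that the relocation does not unexpectedly create or destroy crossings across layers; this is exactly where the decomposition of pendant moves on $V_1$ versus $V_3$ into independent $E_{12}$- and $E_{23}$-crossings is used.
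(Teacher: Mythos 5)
Your proposal correctly reproduces the skeleton the paper intends (the paper explicitly says the proof is ``via the same kinds of arguments'' as \Cref{lem:2permutation} and omits it): the per-layer equi-crossing claim via $k$-minimality, reading the blueprint off $\sigma$, the relocation of pendants to a representative position, and the observation that pendant edges and vertices are disjoint from $\sep$, $\mde$, $\crosse$, $\gapleft$, $\gapright$ so the separator data is preserved. The split into ``layer types'' at a $V_2$-endpoint, with $E_{12}$- and $E_{23}$-crossings affected independently, is also the right way to extend the two-layer argument.

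There is, however, a concrete flaw in your justification that $\pendantsm(v) = 0$ is automatic for $v \in \epts(\sep,1)$. You claim that a pendant $u \in V_2$ of such a $v$ placed between $\ept(\lseptwo,2)$ and $\ept(\rseptwo,2)$ would necessarily force $vu \in \mde(\sep) \cup \crosse(\sep)$ or $u \in \gapleft \cup \gapright$, pushing the component into $\spl(v)$. This case analysis misses the ``gap'' region: in the \DB\ case, take $v = \ept(\lsepone,1)$ and $u$ with $\sigma_2(\ept(\lseptwo,2)) < \sigma_2(u) < \sigma_2(\ept(\lsepone,2))$. Such a $u$ has no $V_3$-neighbor (it is a pendant of a $V_1$-vertex), so $u \notin \gapleft$ by definition of $\gapleft \subseteq V_2^{conn}$; the edge $vu$ shares $v$ with $\lsepone$ so does not cross it; both its endpoints are weakly to the left of the corresponding endpoints of $\rsepone$, so it does not cross $\rsepone$; and $vu \notin \mde(\sep)$ because $u$ is strictly left of $\ept(\lsepone,2)$. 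Hence $\{u\} \in \pendant(v)$ while sitting in the ``middle'' region of the \nicepfamily\ definition. The constraint $\pendantsm(v)=0$ is therefore not a consequence of the definitions; it is a choice one must \emph{enforce} when defining $\pbp$. The correct argument is to count these gap pendants toward $\pendantsl(v)$ (resp.~$\pendantsr$ for $\ept(\rsepone,1)$) and then argue that they can be relocated strictly to the left of $\ept(\lseptwo,2)$ without changing the crossing count; this requires either a pendant of $v$ already present there in $\sigma$ to serve as a representative (equi-crossing then does the rest), or a direct argument that the move across the gap does not pick up new $E_{12}$ crossings for the edge $vu$. Your construction, as written, does not provide a representative position in the case where \emph{all} pendants of $v$ are drawn in the gap, so this step needs to be patched.

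A second, smaller slip: the lemma as stated references \Cref{lem:2separatorlemma}, but the relevant separator for three layers is the one from \Cref{lem:3separatorlemma}; your proposal silently (and correctly) uses the three-layer separator and its associated $\gapleft$, $\gapright$ etc., which is the right reading.
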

To summarize, in (\DB, \DT) case, we can conclude the fate of all the components (i.e., which side of the separator edges they belong to) for $\bigcup_{v \in \epts(\sep)} \pecunp(v)$. On the other hand, in (\SB, \ST) case, we know that, either (i) all except a $\Oh(\sqrt{k})$ components in $\pecunp(\ept(\lsepone, 1))$ (resp.~$\pecunp(\ept(\lseptwo, 3))$) lie to the left of $\lseptwo$ (resp.~ to the right of $\rsepone$), or (ii) vice versa. Thus, we can guess which of the two cases we are in, and in each case, two subsets of size $\Oh(\sqrt{k})$, respectively. This leaves us with (\SB, \DT) and (\ST, \DB) case, in which case, we know that the $\pecunp(v)$ for the shared endpoint $v$ are partitioned into $(\pecunp^l(v), \pecunp^r(v))$; however, we do not know how to find such a partition. The goal of the next subsection is to show that such a partition can be found in a structured way.

\subsubsection{Reordering Peculiar Components in (\SB, \DT)/(\ST, \DB) Case.} \label{subsubsec:3components}

Throughout the subsubsection, we will focus on (\SB, \DT) case, wherein $\ept(\lsepone, 1) = \ept(\rsepone, 1)$. The (\ST, \DB) case is handled symmetrically by focusing on the corresponidng shared endpoint $\ept(\lseptwo, 3) = \ept(\rseptwo, 3)$, and by flipping the roles of $V_1$ and $V_3$. We omit the repetition.

We note down the assumptions that will be made throughout this subsubsection. We have a bottom-heavy $k$-minimal instance $\cI$ with a drawing $\sigma$ with $k$ crossings, and a $\sep$ w.r.t. $\sigma$, which is of the case (\SB, \DT) case, with $\ept(\lsepone, 1) = \ept(\rsepone, 1)$, which we denote by $v$ throughout the subsection. Further, the vertices $u \in V_2$ with $\sigma_2(\ept(\lseptwo, 2)) < \sigma_2(u) < \sigma_2(\ept(\rseptwo, 2))$ and $w \in V_3$ with $\sigma_3(\ept(\lseptwo, 3)) \le \sigma_3(w) \le \sigma_3(\ept(\rseptwo, 3))$ will not be relevant for any of the arguments/discussion of this subsubsection, and we will tacitly ignore cases involving such vertices.

Further, let us use $\cC, \cC^l, \cC^r$ instead of $\pecunp, \pecunp^l$,  $\pecunp^r$, respectively. Note that these sets are defined w.r.t. a drawing $\sigma$ and its corresponding separator $\sep$. First, we will analyze the properties of the components of $\cC$, then subsequently we will modify $\sigma$ to obtain another separator-compatible drawing $\sigma'$.

\subsubsection*{Setup.}

\begin{definition}[light and heavy components] \label{def:3lightheavy}
	We say that a component $C \in \cC$ is \emph{$\sigma$-light} if the total number of crossings involving at least one edge of $E(C) \cup \LR{vu: u \in C}$, is at most $\sqrt{k}$; otherwise we say that $C$ is \emph{$\sigma$-heavy}. We denote the subsets of light and heavy components as $\heavy(v, \cC, \sigma)$ and $\light(v, \cC, \sigma)$, respectively.
\end{definition}

The following remark is due regarding the simplification of the notation.
\begin{remark}
	In the notions introduced in \Cref{def:3lightheavy} and later, we initially make the dependence on $v, \cC$, and $\sigma$ explicit in the notation/terminology, and subsequently simplify it by omitting $v, \cC, \sigma$ from the notation, if the respective objects are obvious from the context (indeed, they are fixed for the rest of the subsection, except for $\sigma$ which will be modified at the end). For example, henceforth $\heavy(v, \cC, \sigma)$ is abbreviated to simply $\heavy$. 
	\\
\end{remark}

We state the following simple observation.
\begin{observation} \label{obs:3heavy}
	The number of heavy components is bounded by $2\sqrt{k}$, i.e., $|\heavy| \le 2 \sqrt{k}$.
\end{observation}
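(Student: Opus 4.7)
The plan is to prove this by a simple double counting argument that relates the total number of crossings in $\sigma$ to the sum, over components $C \in \cC$, of the number of crossings that ``touch'' $C$ in the sense of \Cref{def:3lightheavy}.

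First I would set notation: for each $C \in \cC$, let $E(C)^+ \coloneqq E(C) \cup \{vu : u \in C\}$, and let $X(C)$ denote the number of crossings in $\sigma$ that involve at least one edge of $E(C)^+$. By the definition of a heavy component, $X(C) > \sqrt{k}$ for every $C \in \heavy$. Next, I would observe the key disjointness property: since $\cC = \pecunp(v)$ consists of the vertex sets of distinct connected components of $G - v$, the sets $E(C)^+$ for different $C \in \cC$ are pairwise disjoint. Indeed, an edge $uw$ with $u, w \ne v$ lies in $E(C)^+$ only if $u,w \in C$ (by connectivity of $C$ in $G - v$), and an edge $vu$ with $u \ne v$ lies in $E(C)^+$ only for the unique $C$ containing $u$.

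With this disjointness, each crossing in $\sigma$ involves two edges, each of which lies in at most one $E(C)^+$. Hence each crossing is counted by $X(\cdot)$ for at most two components $C \in \cC$, giving
\[
\sum_{C \in \cC} X(C) \;\le\; 2k,
\]
since $\sigma$ has at most $k$ crossings in total. Restricting this sum to heavy components yields
\[
|\heavy| \cdot \sqrt{k} \;<\; \sum_{C \in \heavy} X(C) \;\le\; \sum_{C \in \cC} X(C) \;\le\; 2k,
\]
hence $|\heavy| < 2\sqrt{k}$, which is the claimed bound.

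I do not expect any real obstacle here; the only subtlety worth being careful about is the double-counting convention for crossings involving multi-edges from $\weightede$ (as dictated by \Cref{def:3extdrawing}). However, since a crossing is always counted with the same multiplicity both in the global count of $k$ crossings and in the local counts $X(C)$ (indeed $X(C)$ simply measures how many crossings of $\sigma$ use an edge of $E(C)^+$, weighted consistently), the inequality $\sum_C X(C) \le 2k$ continues to hold verbatim regardless of whether edges are regular or weighted. Hence the argument goes through without modification.
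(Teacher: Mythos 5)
Your proof is correct and takes essentially the same approach as the paper's: each heavy component accounts for more than $\sqrt{k}$ crossings, each crossing is counted at most twice over components, and the total is at most $k$, giving $|\heavy| < 2\sqrt{k}$. The paper states this as a brief proof by contradiction, while you formalize it with the explicit sets $E(C)^+$, the disjointness observation, and the double-counting bound $\sum_C X(C) \le 2k$, but the underlying argument is identical.
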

\begin{proof}
	Suppose for contradiction that $|\heavy| > 2\sqrt{k}$, then total number of crossings in $\sigma$ is more than $2\sqrt{k} \cdot \sqrt{k}/ 2$, since each $C \in \heavy$ accounts for more than $\sqrt{k}$ crossings, and since each crossing is counted at most twice in this manner (since, for example, the pair of crossing edges may belong to a pair of two different heavy components). This contradicts the fact that $\sigma$ is a drawing with $k$ crossings.
\end{proof}


Leftmost and rightmost light components.
\begin{definition}[leftmost and rightmost light components] \label{def:3leftrightmostcomps}
	Let $\prec_l$ denote a total ordering on $\cC$, where for $C_1, C_2 \in \cC$, $C_1 \prec_l C_2$ if the leftmost neighbor of $v$ in $C_1$ is to the left of the leftmost neighbor of $v$ in $C_2$. Similarly, let $\prec_r$ denote a total ordering on $\cC$ using the rightmost neighbors of $v$ in components (note that $C_1 \prec_l C_2$ does not necessarily imply that $C_2 \prec_r C_1$, and vice versa). Let $\llc \in \light$ be the leftmost light component (according to $\prec_l$), and $\rlc \in \light$ be the rightmost light component (according to $\prec_r$). Note that any component $C$ satisfying $C \prec_l \llc$ or $C \prec_r \rlc$ must be a heavy component. 
	\\Let $\extreme \subseteq \cC$ denote all the components $C$ such that, (i) $C \prec_l \llc$, or (ii) $C \prec_r \rlc$, or
	(iii) at least one edge of $C$ crosses an edge of $\llc$ or $\rlc$.
\end{definition}

The following observation follows immediately from the definition.
\begin{observation} \label{obs:3extremebound}
	$|\extreme| \le 4\sqrt{k}$.
\end{observation}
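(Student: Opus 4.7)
The plan is to split $\extreme$ into the union of three subsets corresponding to the three clauses in its definition, call them $E_{i}, E_{ii}, E_{iii}$, and show that $|E_{i}\cup E_{ii}|\le 2\sqrt{k}$ and $|E_{iii}|\le 2\sqrt{k}$. The bound $|\extreme|\le 4\sqrt{k}$ then follows immediately.

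For $E_{i}\cup E_{ii}$, the key observation is that both of these subsets are entirely contained in $\heavy$. Indeed, by definition $\llc$ is the \emph{leftmost} light component in the ordering $\prec_{l}$, so any component $C$ with $C\prec_{l}\llc$ cannot itself be light, i.e., $C\in\heavy$; the analogous argument (using $\prec_{r}$ and the rightmost light component $\rlc$) gives that every $C\in E_{ii}$ lies in $\heavy$. This containment is already asserted in the text immediately after the definition of $\llc,\rlc$. Hence $|E_{i}\cup E_{ii}|\le|\heavy|$, and by \Cref{obs:3heavy} this is at most $2\sqrt{k}$.

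For $E_{iii}$, I would argue as follows. By \Cref{def:3lightheavy}, the total number of crossings involving at least one edge from the edge-set $E(\llc)\cup\{vu:u\in\llc\}$ is at most $\sqrt{k}$, and the symmetric bound holds for $\rlc$. Each crossing that is charged to $\llc$ (respectively $\rlc$) in this way contributes at most one ``foreign'' edge that belongs to some other component of $\cC$, so the number of edges outside $\llc\cup\rlc$ that cross an edge of $\llc$ or $\rlc$ is at most $\sqrt{k}+\sqrt{k}=2\sqrt{k}$. Since each component in $E_{iii}$ must contain at least one such edge (and these edges are shared by components only if they are the same edge), we get $|E_{iii}|\le 2\sqrt{k}$. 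Summing, $|\extreme|\le|E_{i}\cup E_{ii}|+|E_{iii}|\le 4\sqrt{k}$.

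I do not anticipate a real obstacle; the only mild subtlety is the bookkeeping of crossings in $E_{iii}$, where one must be careful that a single edge crossing both $\llc$ and $\rlc$ (which is possible) is not double-counted when bounding components, but this is harmless since each such edge still contributes only one component to $E_{iii}$, so the bound of $2\sqrt{k}$ on the number of contributing edges is an upper bound on $|E_{iii}|$. Thus the proof is essentially just a straightforward union bound over the three defining clauses of $\extreme$, relying on \Cref{obs:3heavy} for the heavy part and on the definition of light (\Cref{def:3lightheavy}) for the crossing part.
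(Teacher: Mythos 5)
Your proposal is correct and matches the argument the paper leaves implicit (the paper asserts the observation ``follows immediately from the definition'' without spelling it out). You correctly decompose $\extreme$ by its three defining clauses, use the remark after \Cref{def:3leftrightmostcomps} together with \Cref{obs:3heavy} to bound the first two clauses by $|\heavy|\le 2\sqrt{k}$, and use the lightness of $\llc$ and $\rlc$ from \Cref{def:3lightheavy} to bound the third clause by $2\sqrt{k}$; the union bound then gives $4\sqrt{k}$.
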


Finally, we define the notion of star vertices, which are crucial to the result shown in this subsubsection.

\begin{definition}[Star vertices]
	we say that a vertex $p \in R_2$ is a \emph{$(v, \cC, \sigma)$-left-star vertex} (resp.~\emph{$(v, \cC, \sigma)$-right-star vertex}) if there is an edge incident to $p$ that crosses an edge of the form $vu^l \in E_{12}$, where $u^l \in \llc \cap V_2$  
	is the leftmost neighbor of $v$ in $\llc$ (resp.~$u^r \in \rlc \cap V_2$ is the rightmost neighbor of $v$ in $\rlc$). Let $\leftstar, \rightstar$ denote the sets of left-star and right-star vertices respectively. 
	When the left vs.~right distinction is not important, we refer to the vertices of $\leftstar \cup \rightstar$ as simply \emph{star vertices}. Let $\Star$ denote the resultant set of star vertices, and they are are enumerated as $\str_1, \str_2, \ldots, \str_q$ from left to right according to $\sigma_2$. See \Cref{fig:3stars} for an illustration.
\end{definition}
\begin{figure}
	\centering
	\includegraphics[scale=1,page=6]{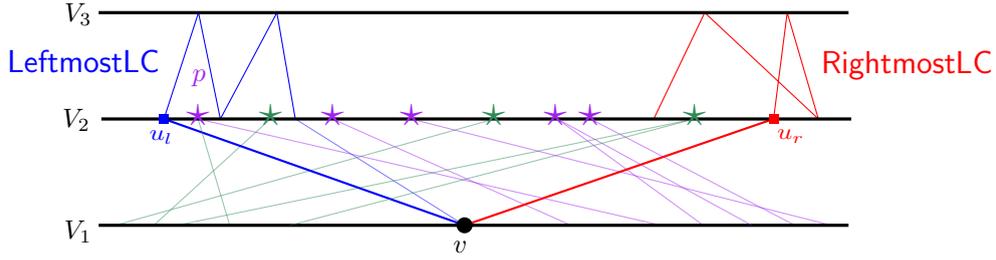}
	\caption{Illustration of components $\llc, \rlc$ and star vertices. Left-star vertices (shown in green) have at least one incident edge crossing $vu_l$, and right-star (purple) have at least one incident edge crossing $vu_r$. Note that a vertex can be both left- and right-star vertex, e.g., $p$.} \label{fig:3stars}
\end{figure}

Next, we state and prove following straightforward observations that follow immediately from the definitions.
\begin{observation} \label{obs:3star-characterization}
	No star vertex belongs to a component in $\cC$.
\end{observation}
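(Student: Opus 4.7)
The plan is to argue by contradiction directly from the definitions, without needing any further combinatorial machinery. Suppose, toward a contradiction, that there is a star vertex $p \in V_2$ that belongs to some component $C \in \cC = \pecunp(v)$, where $v = \ept(\lsepone,1) = \ept(\rsepone,1) \in V_1$ (we are in the (\SB,\DT) case).

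By definition of a star vertex, $p$ is incident to an edge that crosses $vu_l$ or $vu_r$, where $u_l \in \llc \cap V_2$ and $u_r \in \rlc \cap V_2$. Both $vu_l$ and $vu_r$ lie in $E_{12}$. The first key observation is that an edge of $E_{23}$ can never cross an edge of $E_{12}$, because they occupy disjoint strips of the drawing (one between $L_1$ and $L_2$, the other between $L_2$ and $L_3$). Therefore the incident edge of $p$ that crosses $vu_l$ (or $vu_r$) must itself lie in $E_{12}$; call it $pq$ with $q \in V_1$. Moreover, two edges sharing an endpoint never cross, so $q \ne v$, i.e., $q \in V_1 \setminus \{v\}$.

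Now we derive the contradiction. Because $p \in V(C)$ and $pq$ is an edge of $G - v$ joining $p$ to $q$, the vertex $q$ must lie in the same connected component of $G - v$ as $p$, namely $C$. But $C \in \pecunp(v) \subseteq \pecu(v)$, and by \Cref{def:3componenttypes} every component in $\pecu(v)$ contains no vertex on the same layer as $v$; since $v \in V_1$, this forces $V(C) \cap V_1 = \emptyset$. This contradicts $q \in V_1 \cap V(C)$, completing the argument.

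The only thing to double-check is whether the same argument applies to the symmetric (\ST,\DB) case, where $v$ is the shared endpoint in $V_3$ and the roles of $V_1$ and $V_3$ are flipped, together with the roles of $E_{12}$ and $E_{23}$; the reasoning is identical after this swap, so no additional work is needed. I do not foresee any real obstacle here—the statement is essentially a consistency check that the definitions of $\pecunp(v)$ and of star vertices are compatible, and the proof reduces to the simple planarity fact that crossings respect strips.
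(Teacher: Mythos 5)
Your proof is correct and takes essentially the same route as the paper: from the definition of a star vertex you extract an incident crossing edge that must lie in $E_{12}$ (and hence whose other endpoint is some $q \in V_1$, $q \ne v$), and then conclude that the component containing the star vertex must contain a vertex on the same layer as $v$, contradicting membership in $\pecu(v)$. The paper phrases the punchline as ``implying $C \in \updown(v)$'' whereas you invoke directly the clause of \Cref{def:3componenttypes} saying $\pecu(v)$-components have no vertex on $v$'s layer, but these are two ways of stating the same dichotomy, and your explicit justification that the crossing edge must lie in $E_{12}$ (because edges of $E_{23}$ live in a disjoint strip) and that $q \ne v$ (shared-endpoint edges never cross) makes the same implicit steps in the paper's one-liner fully explicit.
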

\begin{proof}
	A star vertex $u$ has an incident edge that crosses an edge incident to $v$, and hence $u$ is adjacent to a vertex $w \in V_1$, where $w \neq v$. Hence, the connected component in $G-v$ that contains $u$, also contains $w \in V_1$, implying that $C \in \updown(v)$. 
\end{proof}

\begin{observation} \label{obs:3prelimbounds}\ 
	$|\leftstar|, |\rightstar| \le \sqrt{k}$.
\end{observation}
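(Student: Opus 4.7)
The plan is to bound $|\leftstar|$ and $|\rightstar|$ by the number of edges crossing the specific edges $vu^l$ and $vu^r$, respectively, and to use the lightness of $\llc$ and $\rlc$ to control that number.

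First I would recall that the edge $vu^l$ lies in $\{vu : u \in \llc\}$, so by the definition of light components (\Cref{def:3lightheavy}) the total number of edges crossing $vu^l$ in $\sigma$ is at most $\sqrt{k}$. By the same argument, $vu^r$ is crossed by at most $\sqrt{k}$ edges in $\sigma$.

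Next I would observe that any edge crossing $vu^l$ must lie in $E_{12}$, since $vu^l \in E_{12}$ and edges between different pairs of consecutive layers cannot intersect in a layered drawing. Now, for each left-star vertex $p \in \leftstar$, the definition guarantees an edge $e_p$ incident to $p$ that crosses $vu^l$; by the preceding observation $e_p$ lies in $E_{12}$, so $p = \ept(e_p, 2)$ is the unique endpoint of $e_p$ in $V_2$. Consequently the map $p \mapsto e_p$ is injective on $\leftstar$, and its image is contained in $\crs(vu^l, \sigma)$. This yields
\[
|\leftstar| \;\le\; |\crs(vu^l, \sigma)| \;\le\; \sqrt{k}.
\]

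The key step/obstacle, which is mild here, is justifying that a left-star vertex can be charged to a \emph{distinct} crossing of $vu^l$; this reduces to the trivial fact that an edge in $E_{12}$ has a unique $V_2$-endpoint. The argument for $\rightstar$ is entirely symmetric, using the edge $vu^r$ (which belongs to $\{vu : u \in \rlc\}$ and hence is crossed at most $\sqrt{k}$ times since $\rlc$ is light) in place of $vu^l$, giving $|\rightstar| \le \sqrt{k}$.
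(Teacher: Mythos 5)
Your proof is correct and follows the same approach as the paper: bound the number of left-star (resp.\ right-star) vertices by the number of edges crossing $vu^l$ (resp.\ $vu^r$), which is at most $\sqrt{k}$ because $\llc$ (resp.\ $\rlc$) is light. You spell out the injectivity (each star vertex is the unique $V_2$-endpoint of the charged crossing edge), which the paper leaves implicit, but the argument is the same.
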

\begin{proof}
	A star vertex has an incident edge that crosses an edge $vu_l$ for some $u_l \in \llc$. However, since $\llc$ is a light component, it is involved in at most $\sqrt{k}$ crossings. Therefore, the number of vertices in $\leftstar$ is bounded by $\sqrt{k}$. A similar argument shows that $|S_r| \le \sqrt{k}$. 
\end{proof}

\begin{figure}
	\centering
	\includegraphics[scale=1,page=10]{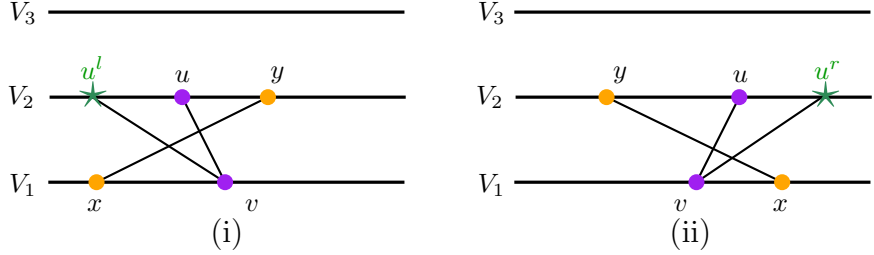}
	\caption{Two cases from the proof of \Cref{obs:3star-crossings}.} \label{fig:3star-crossings}
\end{figure}

\begin{claim} \label{obs:3star-crossings}
	Let $u, y \in V_2$ be two distinct vertices such that (1) $u$ is a neighbor of $v$, (2) there exists some neighbor $x \in V_1$ of $y$, and (3) the edges $vu$ and $xy$ cross each other. Then, $y$ must be a star vertex.
\end{claim}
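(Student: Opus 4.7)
The approach is a direct two-case analysis based on the horizontal ordering of $v$ and $x$ on $V_1$. First, I would observe that $x \neq v$: two edges sharing an endpoint cannot cross, so the crossing of $vu$ and $xy$ forces $x$ to be distinct from $v$. Invoking the standard crossing criterion between $V_1$ and $V_2$ then leaves exactly two mutually exclusive configurations: either $\sigma_1(v) < \sigma_1(x)$ with $\sigma_2(y) < \sigma_2(u)$, or the mirror configuration $\sigma_1(v) > \sigma_1(x)$ with $\sigma_2(y) > \sigma_2(u)$. The two are entirely symmetric, with the roles of $u^l$ and $u^r$ (and of left-star vs.\ right-star) interchanged, so I would spell out only the first.

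In the first configuration, my plan is to show that the edge $xy$ itself crosses $vu^r$, which by definition certifies $y$ as a right-star vertex. Given $\sigma_1(v) < \sigma_1(x)$, the crossing of $xy$ and $vu^r$ reduces, via the same crossing criterion, to verifying $\sigma_2(y) < \sigma_2(u^r)$. I would obtain this from the chain
\[
\sigma_2(y) \;<\; \sigma_2(u) \;\leq\; \sigma_2(u^r),
\]
where the first inequality is exactly the crossing hypothesis on $vu$ and $xy$, and the second inequality uses the contextual scope of the subsubsection: the claim is invoked only for neighbors $u$ of $v$ that lie in non-extreme components of $\cC$, so by the definitions of $\llc$ and $\rlc$ as the leftmost and rightmost light components under $\prec_l$ and $\prec_r$, every $v$-neighbor in such a component is sandwiched in $[\sigma_2(u^l), \sigma_2(u^r)]$ on $V_2$. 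The symmetric case gives the inequality $\sigma_2(y) > \sigma_2(u^l)$ and, combined with $\sigma_1(v) > \sigma_1(x)$, witnesses $xy$ crossing $vu^l$, making $y$ a left-star vertex.

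The main obstacle is the clean invocation of the contextual restriction $\sigma_2(u^l) \le \sigma_2(u) \le \sigma_2(u^r)$. If $u$ were permitted to be a $v$-neighbor in an extreme component, the degenerate regime $\sigma_2(y) \ge \sigma_2(u^r)$ (with $u$ beyond $u^r$) could occur, and $xy$ would no longer directly cross $vu^r$ or $vu^l$; moreover, $y$'s other incident edges are unconstrained by the claim's hypothesis, so no immediate alternative edge can be exhibited. The point is that such $u$'s are already absorbed into the $\extreme$ set via \Cref{def:3leftrightmostcomps} and \Cref{obs:3extremebound}, and are handled by an explicit enumeration in the subsequent step; consequently the present claim is only applied to non-extreme $u$, under which the two-case analysis closes without further difficulty.
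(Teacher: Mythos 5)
Your proof is correct and takes essentially the same two-case route as the paper's: split on $\sigma_1(v)$ versus $\sigma_1(x)$, then use $\sigma_2(y) < \sigma_2(u) \le \sigma_2(u^r)$ (resp.\ $\sigma_2(u^l) \le \sigma_2(u) < \sigma_2(y)$) to show $xy$ crosses $vu^r$ (resp.\ $vu^l$). You additionally flag, correctly, that the step $\sigma_2(u^l) \le \sigma_2(u) \le \sigma_2(u^r)$ is an implicit hypothesis the paper asserts silently; your contextual justification (that the claim is only invoked for $v$-neighbors in non-extreme components, which by the definitions of $\llc$, $\rlc$, and $\extreme$ are sandwiched between $u^l$ and $u^r$) is accurate and makes the argument cleaner than the paper's version.
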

\begin{proof}
	Since $vu$ and $xy$ cross, either (i) $\sigma_1(v) < \sigma_1(x)$ and $\sigma_2(y) < \sigma_2(u)$, or (ii)  $\sigma_1(v) > \sigma_1(x)$ and $\sigma_2(y) > \sigma_2(u)$ (See \Cref{fig:3star-crossings}). In case (i), note that $\sigma_2(y) < \sigma_2(u) \le \sigma_2(u_r)$. We also have that $\sigma_1(v) < \sigma_1(x)$. Therefore, $xy$ crosses $vu^r$ as well, which implies that $y$ is a star vertex. Case (ii) is symmetric -- we have that $\sigma_2(u^l) \le \sigma_2(u) < \sigma_2(y)$, and $\sigma_2(x) < \sigma_2(v)$. Therefore, $xy$ crosses $vu^l$ implying that $y$ is a star vertex.
\end{proof}

\begin{figure}
	\centering
	\includegraphics[scale=1,page=11]{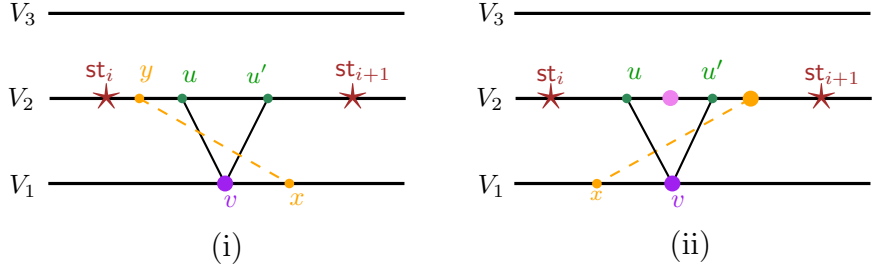}
	\caption{\small Illustration for the proof of \Cref{cl:3slotintersection}. In case (i), an edge $xy$ that crosses $vu$ must also cross $vu'$. In case (ii), there are two possible locations for the vertex $y$, shown in orange and purple, respectively. 
		In the orange case, the edge $xy$ again must cross both $vu$ and $vu'$; whereas in the purple case, the vertex $y$ must be a star vertex, contradicting that $\str^i$ and $\str^{i+1}$ are consecutive star vertices.
	} \label{fig:3samecrossing}
\end{figure}

\begin{claim} \label{cl:3slotintersection}
	Let $u, u'$ be two distinct neighbors of $v$ such that $\sigma_2(\str_i) < \sigma_2(u) < \sigma_2(u') < \sigma_2(\str_{i+1})$, where $\str_i, \str_{i+1} \in \Star$ are some pair of consecutive star vertices according to $\sigma_2$ (note that $1 \le i < |\Star|$). Then, the edges $vu$ and $vu'$ cross the same set of edges in $\sigma$.
\end{claim}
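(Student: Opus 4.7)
The plan is to fix an arbitrary edge $e \in E(G)$ and show that $e$ crosses $vu$ if and only if it crosses $vu'$. Since both $vu$ and $vu'$ lie in $E_{12}$, only edges in $E_{12}$ can cross them, so I will write $e = xy$ with $x \in V_1$, $y \in V_2$. Moreover, since an edge sharing an endpoint with $vu$ (respectively $vu'$) cannot cross it, I may assume $x \neq v$ and $y \notin \{u, u'\}$.

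First, I would show the forward direction: if $xy$ crosses $vu$, then $xy$ crosses $vu'$. Applying the preceding claim (the one labeled \texttt{obs:3star-crossings}) to the crossing pair $vu$ and $xy$, we obtain that $y$ must be a star vertex, i.e., $y \in \Star$. Now I split into two cases according to the geometric configuration of the crossing: either (A) $\sigma_1(v) < \sigma_1(x)$ and $\sigma_2(y) < \sigma_2(u)$, or (B) $\sigma_1(v) > \sigma_1(x)$ and $\sigma_2(y) > \sigma_2(u)$. In case (A), since $\str_i$ and $\str_{i+1}$ are consecutive star vertices with $\sigma_2(\str_i) < \sigma_2(u) < \sigma_2(\str_{i+1})$ and $y \in \Star$ satisfies $\sigma_2(y) < \sigma_2(u) < \sigma_2(\str_{i+1})$, no star vertex is placed strictly between $\str_i$ and $\str_{i+1}$, so $\sigma_2(y) \le \sigma_2(\str_i)$. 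Combined with $\sigma_2(u') < \sigma_2(\str_{i+1})$ (hence $\sigma_2(y) \le \sigma_2(\str_i) < \sigma_2(u')$) and $\sigma_1(v) < \sigma_1(x)$, this exhibits that $xy$ also crosses $vu'$. Case (B) is handled symmetrically: here $y \in \Star$ with $\sigma_2(y) > \sigma_2(u) > \sigma_2(\str_i)$, forcing $\sigma_2(y) \ge \sigma_2(\str_{i+1}) > \sigma_2(u')$, and together with $\sigma_1(x) < \sigma_1(v)$ this gives a crossing with $vu'$.

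The reverse direction, namely that any edge $xy$ crossing $vu'$ also crosses $vu$, is proved by an essentially identical argument: apply \texttt{obs:3star-crossings} to conclude $y \in \Star$, then from $\sigma_2(\str_i) < \sigma_2(u) < \sigma_2(u') < \sigma_2(\str_{i+1})$ together with the absence of star vertices strictly between $\str_i$ and $\str_{i+1}$, deduce $\sigma_2(y) \le \sigma_2(\str_i) < \sigma_2(u)$ in case (A) and $\sigma_2(y) \ge \sigma_2(\str_{i+1}) > \sigma_2(u)$ in case (B), so that the corresponding crossing with $vu$ is forced.

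The main conceptual step is the use of the preceding star-vertex characterization to restrict the $V_2$-endpoint of any crossing edge to $\Star$; after that, the "no star vertex strictly between $\str_i$ and $\str_{i+1}$" property pushes such a vertex fully to one side of the interval $(\str_i, \str_{i+1})$, making the two crossings geometrically equivalent. I do not anticipate a real obstacle here — the argument is purely a case analysis on which side of $v$ the vertex $x$ sits — and the result follows by combining both directions to conclude $\crs(vu, \sigma) = \crs(vu', \sigma)$.
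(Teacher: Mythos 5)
Your proof is correct and relies on the same two ingredients as the paper's argument: Claim~\ref{obs:3star-crossings} to constrain the $V_2$-endpoint of any crossing edge to be a star vertex, and the consecutiveness of $\str_i, \str_{i+1}$ to push it outside the open interval between them. The paper proceeds by contradiction and only invokes Claim~\ref{obs:3star-crossings} in the one subcase where the conclusion is not immediate, whereas you apply it unconditionally at the outset and then argue directly; this is a minor reorganization of the same argument.
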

\begin{proof}
	
	Suppose for the sake of contradiction that there exists an edge $xy$ with $x \in V_1, y \in V_2$, such that $xy$ crosses $vu$ but not $vu'$ (the other case is symmetric). Since $xy$ crosses $vu$, it must be the case that (i) $\sigma_1(v) < \sigma_1(x)$ and $\sigma_2(u) > \sigma_2(y)$, or (ii) $\sigma_1(v) > \sigma_1(x)$ and $\sigma_2(u) < \sigma_2(y)$. See \Cref{fig:3samecrossing}.
	
	First, consider case (i). Since $\sigma_2(y) < \sigma_2(u) < \sigma_2(u')$ and $\sigma_1(v) < \sigma_1(x)$, it follows that $vu'$ and $xy$ cross as well. 
	
	Now consider case (ii). We have that $\sigma_2(u) < \sigma_2(u')$, $\sigma_2(u) < \sigma_2(y)$, and $\sigma_1(x) < \sigma_1(v)$. If $\sigma_2(u') < \sigma_2(y)$, then $xy$ crosses $vu'$. Otherwise, $\sigma_2(x_l) \le \sigma_2(\str_{i}) < \sigma_2(u) < \sigma_2(u') < \sigma_2(y) \le \sigma_2(\str^{i+1}) \le \sigma_2(x_r)$. However, by \Cref{obs:3star-crossings}, we obtain that $y$ must be a star vertex, which contradicts that $\str_{i+1}$ is the next consecutive star vertex after $\str_i$. 
\end{proof}

Next, we introduce the notion of components that ``surround'' a star vertex, and then bound the number of such components.

\begin{definition}
	We say that a component $C$ \emph{surrounds} a star vertex $\str$, if it contains two vertices that are placed to the either side of $\str$ in $\sigma_2$ (see \Cref{fig:3componentcrossing}). Let $\surr \subseteq \cC \setminus (\heavy \cup \extreme)$ denote the subset of components $C$ such that $C$ surrounds some $u \in \Star$. Note that no pendant component can belong to $\surr$.
\end{definition}

\begin{figure}
	\centering
	\includegraphics[scale=1,page=7]{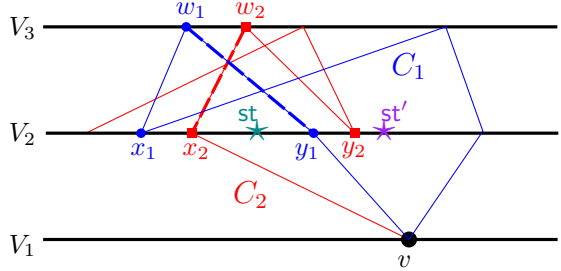}
	\caption{\small Illustration for the proof of \Cref{cl:3starcrossed}. Components $C_1$ (blue) and $C_2$ (red) surround a star vertex $\str$ ($C_1$ happens to surround $\str'$ as well). Due to the positional assumptions in the first case in the proof of \Cref{cl:3starcrossed}, edges $y_1w_1$ and $x_2w_2$ must cross (shown as thick dashed edges).} \label{fig:3componentcrossing}
\end{figure}

We show the following claim regarding the necessity of crossings between components surrounding the same star vertex.
\begin{claim} \label{cl:3starcrossed}
	Let $C_1, C_2 \in \surr$ be distinct components that surround a star vertex $\str \in \Star$. Then, there exist edges $e_1 \in E(C_1)$ and $e_2 \in E(C_2)$ that cross each other.
\end{claim}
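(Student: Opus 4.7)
The plan is to exploit the connectivity of each $C_j$ in $G-v$ together with the structural restriction $V(C_j) \subseteq V_2 \cup V_3$ (which holds because $C_j \in \pecunp(v)$ and $v \in V_1$ in the (\SB,\DT) case), in order to extract from each $C_j$ a single $V_3$-vertex $z_j$ whose two $V_2$-neighbors inside $C_j$ straddle $\str$ in $\sigma_2$. Once we have $z_1, u_1, u'_1$ from $C_1$ and $z_2, u_2, u'_2$ from $C_2$, a direct comparison of the $\sigma_3$-order of $z_1, z_2$ with the $\sigma_2$-order of the relevant $V_2$-endpoints will immediately exhibit a pair of crossing edges, one from $E(C_1)$ and one from $E(C_2)$, which is exactly the conclusion.

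For the extraction step, fix $j \in \{1,2\}$ and use the surround-property to pick $a_j \in V(C_j) \cap V_2$ with $\sigma_2(a_j) < \sigma_2(\str)$ and $b_j \in V(C_j) \cap V_2$ with $\sigma_2(b_j) > \sigma_2(\str)$; this is legitimate since $\str \notin V(C_j)$ by \Cref{obs:3star-characterization}. Take any $a_j$-to-$b_j$ path inside $C_j$; because all edges of $C_j$ lie in $E_{23}$, the path alternates between $V_2$- and $V_3$-vertices. Letting the $V_2$-vertices along the path be $a_j = q_0, q_1, \ldots, q_m = b_j$, each $q_i$ is strictly on one side of $\str$ in $\sigma_2$, so there is a first index $i$ with $\sigma_2(q_i) > \sigma_2(\str)$; then $u_j \coloneqq q_{i-1}$, $u'_j \coloneqq q_i$, together with the intermediate path vertex $z_j \in V_3 \cap V(C_j)$, satisfy $\sigma_2(u_j) < \sigma_2(\str) < \sigma_2(u'_j)$ with $z_j u_j, z_j u'_j \in E(C_j)$.

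Now, since $C_1$ and $C_2$ are distinct components of $G-v$, they are vertex-disjoint, and in particular $z_1 \neq z_2$. Assume without loss of generality $\sigma_3(z_1) < \sigma_3(z_2)$ (the other case is symmetric, with the roles of ``$u$'' and ``$u'$'' exchanged below). Set $e_1 \coloneqq z_1 u'_1 \in E(C_1)$ and $e_2 \coloneqq z_2 u_2 \in E(C_2)$, both in $E_{23}$. On $V_3$ we have $\sigma_3(z_1) < \sigma_3(z_2)$, while on $V_2$ we have $\sigma_2(u'_1) > \sigma_2(\str) > \sigma_2(u_2)$, so by the standard straight-line crossing criterion for two consecutive layers, $e_1$ and $e_2$ cross in $\sigma$, proving the claim. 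There is no real obstacle beyond the alternation observation along the path; in particular no case split on (\SB,\DT) versus (\ST,\DB) is needed, as the argument only uses $C_j \in \pecunp(v)$ and the fact that $v$ lies on an outer layer.
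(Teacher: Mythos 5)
Your proof is correct and takes essentially the same approach as the paper's: extract from each $C_j$ a $V_3$-vertex ($z_j$, called $w_j$ in the paper) whose two $C_j$-neighbors straddle $\str$ on $V_2$, then compare the $V_3$-positions to exhibit the crossing pair. The only difference is that you spell out the alternating-path argument that produces $z_j, u_j, u'_j$, whereas the paper asserts the existence of such a configuration directly from connectivity.
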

\begin{proof}
	Since $C_1$ surrounds $\str$, it contains two vertices that are placed to the either side of $\str$. Further, since $C_1$ is a connected component in $G-v$, it must contain a vertex $w_1 \in V_3 \cap V(C_1)$ such that $w_1$ has two neighbors in $C_1$, say $x_1$ and $y_1$ such that $x_1$ is to the left of $\str$, and $y_1$ is to the right of $\str$. Similarly, define $w_2, x_2, y_2 \in V(C_2)$, respectively.
	
	Suppose first that $\sigma_3(w_1) < \sigma_3(w_2)$, i.e., $w_1$ is to the left of $w_2$. Then, since $\sigma_2(x_2) < \sigma_2(\str) < \sigma_2(y_1)$, the edges $x_2w_2$ and $y_1w_1$ cross each other (see \Cref{fig:3componentcrossing}). In the other case, when $\sigma_3(w_2) < \sigma_3(w_1)$, we can observe symmetrically that the edges $x_1w_1$ and $y_2w_2$ cross each other. This proves the claim.
\end{proof}

In the following lemma, we bound the number of components that surround some star vertex.
\begin{lemma} \label{lem:3exceptional}
	$|\surr| \le 5 \cdot (k)^{2/3}$.
\end{lemma}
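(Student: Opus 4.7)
The plan is to combine the two complementary sources of crossings highlighted in the technical overview. The first source is Claim~\ref{cl:3starcrossed}: if $C_1,C_2\in\surr$ both surround the same star vertex $\str$, then $C_1$ and $C_2$ contribute a distinct crossing between edges in $E_{23}$. Letting $t_\str$ denote the number of components in $\surr$ surrounding $\str$, a careful choice of $V_3$-representatives in the proof of the Claim keeps these crossings disjoint across stars, so that $\sum_\str \binom{t_\str}{2}\le k$. The second source comes from the witness edges $\str x$ incident to each star: for a left-star $\str$, by definition its edge $\str x$ crosses $vu^l$, so $x$ lies on a particular side of $v$ in $\sigma_1$ and $\str$ lies on the matching side of $u^l$ in $\sigma_2$. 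A direct case analysis then shows that $\str x$ \emph{also} crosses every edge $vu$ with $u\in V_2$ on the appropriate side of $\str$ (a symmetric statement holds for right-stars). These accumulation crossings lie in $E_{12}$, and hence are disjoint from the Claim-crossings in $E_{23}$, so their contributions add.

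To prove the bound, I would assume for contradiction that $|\surr|>5k^{2/3}$ and derive more than $k$ total crossings. The first step is to split $\surr$ at a threshold $\Theta(k^{1/3})$ into a clustered part, where some surrounded star satisfies $t_\str\ge k^{1/3}$, and a sparse part. The clustered part is controlled directly via the Claim by the budget $\sum_\str \binom{t_\str}{2}\le k$. The sparse part is controlled via the accumulation source: enumerate left-stars from rightmost to leftmost (and symmetrically right-stars), and for each component $C$ in the sparse part trace a $V_2$-$V_3$-$V_2$ path in $C$ to locate a neighbor of $v$ in $V_2$ on the relevant side of the star $\str$ that $C$ surrounds; then $\str x$ crosses the edge from $v$ to this neighbor. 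Since distinct components have disjoint vertex sets (and, in particular, disjoint neighbors of $v$), these accumulation crossings are distinct, and summing them over sparse components and over left/right stars (whose total count is $\le 2\sqrt{k}$ by \Cref{obs:3prelimbounds}) yields the desired $O(k^{2/3})$ bound on the sparse part.

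\textbf{The main obstacle} is that a naive Cauchy--Schwarz from $\sum_\str \binom{t_\str}{2}\le k$ together with $|\Star|\le 2\sqrt{k}$ only gives $|\surr|=O(k^{3/4})$, which is too weak; pushing the bound to $O(k^{2/3})$ genuinely requires the accumulation source. Exploiting it, however, is delicate: a component $C\in\surr$ surrounding $\str$ need not have a neighbor of $v$ on either specific side of $\str$, because the surrounding property is defined using arbitrary $V_2$-vertices of $C$ rather than neighbors of $v$. Bridging this gap through the $V_2$-$V_3$-$V_2$ connectivity of $C$ (and, in addition, verifying that the Claim-crossings counted at different stars can indeed be kept distinct through a consistent choice of $V_3$ witnesses) is the principal technical work needed to obtain the stated $5k^{2/3}$ bound.
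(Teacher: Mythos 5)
Your proposal correctly identifies both sources of crossings (pairwise crossings between components surrounding a common star via Claim~\ref{cl:3starcrossed}, and the ``accumulation'' crossings between edges $vw$ for neighbors $w$ of $v$ and the witness edges incident to the star vertices), and you are right that the Cauchy--Schwarz budget alone gives only $O(k^{3/4})$. However, the split of $\surr$ you choose is genuinely different from the paper's, and with your split the sparse-part argument does not close.

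You split by multiplicity: a component is ``clustered'' if it surrounds some star $\str$ with $t_{\str}\ge k^{1/3}$. The paper instead splits by \emph{position}: $\Star'$ consists of the $k^{1/3}$ innermost left-stars (numbering left-stars from right to left) and the $k^{1/3}$ innermost right-stars, and a component is ``class 1'' if it surrounds some star in $\Star'$, ``class 2'' otherwise. The position-based split is what makes the accumulation argument go through: a class-2 component $C$ surrounding some $\str^l_i$ with $i>k^{1/3}$ has its leftmost $V_2$-neighbor $w^l$ of $v$ lying to the left of all of $\str^l_1,\dots,\str^l_{k^{1/3}}$, so $vw^l$ crosses the witness edge of each of those $k^{1/3}$ stars, giving $\ge k^{1/3}$ distinct crossings charged to $C$ and hence at most $k^{2/3}$ class-2 components. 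With your $t_{\str}$-based split, a sparse component may surround the rightmost left-star $\str^l_1$, which has nothing to its right to accumulate against, and the argument you write down yields only \emph{one} crossing per sparse component, i.e.\ the trivial bound $\le k$. Your closing remark that summing over $\le 2\sqrt{k}$ stars ``yields the desired $O(k^{2/3})$'' does not follow: with $t_{\str}\approx k^{1/4}$ at each of $\Theta(\sqrt{k})$ stars, the Claim budget $\sum_{\str}\binom{t_{\str}}{2}\le k$ is respected yet one can have $\Theta(k^{3/4})$ sparse components, which exceeds $k^{2/3}$. The clustered half of your argument is fine (and mirrors the paper's class 1 via Cauchy--Schwarz); the fix for the other half is to replace the $t_{\str}$-threshold by the positional $\Star'$ threshold so that every far-out component necessarily accumulates $k^{1/3}$ crossings.
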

\begin{proof}
	Let us number the left-star vertices from right to left (in $\sigma_2$) as $\str^l_1, \str^l_2, \ldots$, and analogously, let us number the right-star vertices from left to right as $\str^r_1, \str^r_2, \ldots $. Let $t = (k)^{1/3}$, and let $\Star' = \LR{\str^l_i \in \leftstar : i \le t} \cup \LR{\str^r_i \in \rightstar: i \le t}$. Note that $|\Star'| \le 2t = 2(k)^{2/3}$. In order to bound the size of $\surr$, we classify each $C \in \surr$ into two types: $C$ is class 1 if it surrounds some star vertex of $\Star'$; otherwise, if it only surrounds some vertex of $\Star \setminus \Star'$, then it is class 2. In the following two claims, we bound the number of components of each class separately.
	
	\begin{claim} \label{cl:3exceptional1}
		The number of class 1 components in $\surr$ is at most $4 \cdot (k)^{2/3}$.
	\end{claim}
	\begin{proof}[Proof of \Cref{cl:3exceptional1}]
		We arbitrarily number the vertices of $\Star'$ as $\str'_1, \str'_2, \ldots, \str'_{z}$, where $z = |\Star'| \le 2(k^{1/3})$. Let $n_i$ denote the number of components that surround $\str'_i$, and w.l.o.g. assume $n_i > 0$ for each $i$ -- otherwise such star vertex plays no role in the following analysis, and can be ignored. Let $m_i = n_i - 1 \ge 0$ for each $1 \le i \le z$. Hence, the total number of class 1 components is at most $\sum_{i = 1}^z n_i = \sum_{i = 1}^z m_i + z$. Since $z \le 2(k)^{2/3}$, it suffices to show that $\sum_{i = 1}^z m_i \le 2 \cdot (k)^{2/3}$. 
		
		By \Cref{cl:3starcrossed} any pair of components that surround the same $\str_i \in \Star$ must have a pair of crossing edges. Therefore, the number of crossings among the components surrounding $\str_i$ is at least $\frac{n_i(n_i-1)}{2} \ge \frac{m_i^2}{2}$. However, since there are at most $k$ crossings, it follows that $\sum_{i = 1}^z \frac{m_i^2}{2} \le k$, which implies that, $\sum_{i = 1}^z m_i^2 \le 2k$. Now, by using Cauchy–Schwarz inequality, we obtain that 
		\begin{align*}
			\lr{\sum_{i = 1}^z m_i}^2 \le \lr{\sum_{i = 1}^z m_i^2} \cdot z \le 2kz.
		\end{align*}
		This implies that, $\sum_{i = 1}^z m_i \le \sqrt{2kz} \le \sqrt{2k \cdot 2(k)^{1/3}} = 2 \cdot \sqrt{(k)^{4/3}} = 2 \cdot (k)^{2/3}$. 
	\end{proof}
	
	\begin{claim} \label{cl:3exceptional2}
		The number of class 2 components in $\surr$ is at most $(k)^{2/3}$.
	\end{claim}
	\begin{proof}[Proof of \Cref{cl:3exceptional2}.]
		We will show that each component of class 2 is involved in at least $(k)^{1/3}$ distinct relevant crossings, which immediately implies the lemma.
		
		Fix one such class 2 component $C$, and note that it surrounds a star vertex $\str \not\in \Star'$. Without loss of generality, suppose $\str$ is a left-star vertex (the right-star case is symmetric), say $\str^l_i \in \leftstar \setminus \Star'$, which implies that $i > t$. Further, since $C$ is not a class 1 component, it does not surround any $\str^l_{i'}$ with $i' \le t$. However, $C$ contains a neighbor $w_l$ of $v$, and by assumption, it also lies to the left of all of $\str^l_{i'}$ with $i' \le t$. 
		
		Since $C$ is a light component, it holds that $\llc \prec_l C$, where recall that $\llc$ is the leftmost component from \Cref{def:3leftrightmostcomps}, with $u^l \in \leftstar$  being the leftmost neighbor of $v$. Similarly, let $w^l$ denote teh leftmost neighbor of $v$ in $C$. This implies that $\sigma(u^l) < \sigma(w^l) < \sigma(\str^l_{i'})$ for all $i' \le t$. Further, since an edge $e$ incident to $\str^l_{i'}$ crosses $vu_l$, the other endpoint of $e$, say $u_b \in V_1$ lies to the left of $v$. However, this implies that $e$ also crosses the edge $vw^l$ (See \Cref{fig:3cumulativecrossing}). Since this holds for each $i \le t$, the edge $vw^l$ crosses at least $t$ distinct edges, thus accounting for at least $t$ distinct crossings.

	\end{proof}
	With \Cref{cl:3exceptional1} and \Cref{cl:3exceptional2}, we conclude the proof of \Cref{lem:3exceptional}.
	
\end{proof}

\begin{figure}
	\centering
	\includegraphics[scale=1,page=8]{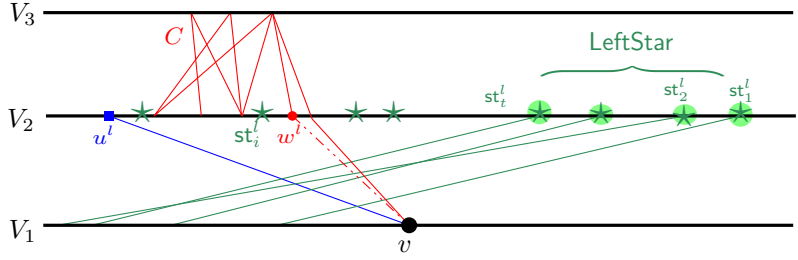}
	\caption{Illustration for \Cref{cl:3exceptional2}. Component $C$ (shown in red) surrounds star vertex $\str^l_i \not\in \Star'$. For each $i' \le k^{1/3}$, the edge $vw_l$ (dashed red) crosses all edges incident to $\str^l_{i'}$ that also cross the edge $vu^l$ (blue). Thus, $C$ accounts for at least $(k)^{1/3}$ crossings.}\label{fig:3cumulativecrossing}
\end{figure}

\subsubsection*{Component Types.}

Next, we define the notion of a type of a component.

\begin{definition}[Original and reduced type] \label{def:type}
	Let $C \in \pecunp(C) = \cC \setminus \pendant(v)$ be a component in $G-v$. Let $\loeo(C) \coloneqq |M^{\sf orig}_12| |\LR{uv: u \in C \cap V_2}|$, i.e., the number of edges between $v$ and the vertices of $C$ (including multiplicities), and $\upeo(C) \coloneqq  |E(C)|$, i.e., the number of edges in $C$. 
	\\Then, the \emph{original type} of $C$ is defined as the pair $(\loeo(C), \upeo(C))$, and the \emph{reduced type} of $C$ is defined as $(\loe(C), \upe(C)) \in \LR{1, 2, \ldots, (k)^{1/3}, \ast, L}^2$, where
	\begin{equation*}
		(\loe(C), \upe(C)) = \begin{cases}
			(\loeo(C), \upeo(C)) & \text{ if $0 < \loeo(C) < (k)^{1/3}$ and $\upeo(C) < (k)^{1/3}$}
			\\(\ast, L) & \text{ if $\upeo(C) \ge (k)^{1/3}$}
			\\(L, \ast) &  \text{ if $\upeo(C) < (k)^{1/3}$ and $\loeo \ge (k)^{1/3}$.}
		\end{cases}
	\end{equation*}
	Here, the meaning of $(\ast, L)$ type is that $|E_{23}|$ is \emph{larger} than $(k)^{1/3}$ (hence, $L$), whereas the number of neighbors of $v$ in $C$ is arbitrary (hence, $\ast$). We interpret $(L, \ast)$ similarly.
	
	For any pendant $C = \LR{u} \in \pendant(v)$. Then, with $\loeo(C)$ denote the multiplicity of the edge $vu$. Then, the original type of $C$ is defined to be $(\loeo(C), 0)$, and the reduced type of $C$ is defined as $(1, 0)$.
	
	Finally, we define $\types \coloneqq \LR{1, \ldots, (k)^{1/3}} \times \LR{1, 2, \ldots, (k)^{1/3}} \cup \LR{(\ast, L), (L, \ast), (1, 0)}$. 
\end{definition}

The following observations are immediate from the definition of original/reduced types.

\begin{observation}\label{obs:3types}\ 
	\begin{enumerate}
		\item $|\types| \le k^{2/3} + 3 = \Oh(k^{2/3})$.
		\item $C \in \pendant(v)$, iff the original type  of $C$ is $(i, 0)$ for some $i \ge 1$. Otherwise, $C \in \pecunp(v)$ iff the original type of $C$ is $(i, j)$ with $i, j > 0$. 
		\item The original/reduced type of a component entirely depends on its structural properties, and is \emph{independent} of a drawing.
	\end{enumerate}
\end{observation}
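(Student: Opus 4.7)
The plan is to verify each of the three items directly from the definitions in \Cref{def:type}; none of these requires any nontrivial combinatorial argument, so the proof should be short and essentially a matter of bookkeeping.

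For item 1, I would simply count: the set $\{1,\ldots,k^{1/3}\}\times\{1,\ldots,k^{1/3}\}$ contains at most $k^{2/3}$ pairs, and the special symbols $(\ast,L)$, $(L,\ast)$, $(1,0)$ contribute three more, so $|\types|\le k^{2/3}+3=\Oh(k^{2/3})$.

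For item 2, I would handle the two directions separately. If $C\in\pendant(v)$, then by definition $C$ consists of a single pendant vertex $u$ joined to $v$ by a (possibly multi-)edge of multiplicity $\loeo(C)\ge 1$, and $C$ carries no internal edges, so $\upeo(C)=0$; thus the original type is $(\loeo(C),0)=(i,0)$ with $i\ge 1$. Conversely, suppose $C\in\cC\setminus\pendant(v)$ (so $C\in\pecunp(v)$). Since $\comps(v)$ consists of the connected components of $C_v-v$ with $C_v$ the component of $G$ containing $v$, there is at least one edge from $v$ into $C$, giving $\loeo(C)\ge 1$. Because $C$ is a connected non-pendant peculiar component, it contains at least two vertices, and its connectedness forces at least one internal edge, so $\upeo(C)\ge 1$. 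Hence the original type is $(i,j)$ with $i,j\ge 1$, as claimed.

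For item 3, I would just note that both $\loeo(C)=|\{uv:u\in C\cap V_2\}|$ and $\upeo(C)=|E(C)|$ are defined in terms of the graph structure alone, with no reference to any drawing $\sigma$; and the reduced type is a deterministic function of the original type, so it inherits the same property. The main (mild) obstacle is only to make sure that the argument for item 2 correctly uses the fact that every $C\in\comps(v)$ is, by construction, adjacent to $v$ (so that $\loeo(C)\ge 1$ holds in both cases) and that the exclusion of $C$ from $\pendant(v)$ forces at least one edge inside $C$; once these are made explicit, the observation is immediate.
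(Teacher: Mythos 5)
Your proof is correct, and it matches the paper's treatment: the paper states this observation as "immediate from the definition of original/reduced types" and gives no proof, and your argument is precisely the bookkeeping that makes that "immediate" explicit (the only point worth spelling out, which you do, is that every $C\in\comps(v)$ is adjacent to $v$ so $\loeo(C)\ge 1$, and that a non-pendant peculiar component spans two layers and hence has an internal edge, so $\upeo(C)\ge 1$).
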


\begin{definition}[Exceptional and nice components]
	Let $\exceptional(v, \cC, \sigma) \coloneqq \heavy \cup \extreme \cup \surr$, and let $\nice(v, \cC, \sigma) \coloneqq \cC \setminus \exceptional(v, \cC, \sigma)$.
	In the drawing $\sigma$, for each $(i, j) \in \types$, the set of all \emph{nice components} of reduced type $(i, j)$  is denoted by $\cT_\sigma(i, j)$, and for each $(i', j') \in \mathbb{N}^2$, the set of all nice components of original type $(i', j')$ is denoted by $\cT'_\sigma(i', j')$. \footnote{Note that the sets $\cT_{\sigma}(\cdot, \cdot), \cT'_{\sigma}(\cdot, \cdot)$ are subsets of $\nice(v, \cC, \sigma)$ and hence depend on the drawing $\sigma$.}
\end{definition}

From \Cref{obs:3heavy}, \Cref{obs:3extremebound}, and \Cref{lem:3exceptional}, we immediately obtain the following corollary.
\begin{corollary} \label{cor:3exceptionalcomponents}
	$|\exceptional| \le 11 \cdot (k)^{2/3} = \Oh(k^{2/3})$.
\end{corollary}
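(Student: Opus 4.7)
The plan is to prove this corollary by a direct application of the union bound, since the set \exceptional\ is defined as the disjoint union $\heavy \cup \extreme \cup \surr$, and we have already established size bounds on each of these three constituent sets individually in the preceding development.

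First, I would invoke \Cref{obs:3heavy}, which gives $|\heavy| \le 2\sqrt{k}$, and \Cref{obs:3extremebound}, which gives $|\extreme| \le 4\sqrt{k}$. Both of these bounds arise from essentially ``one-dimensional'' crossing arguments: heavy components are bounded because each contributes more than $\sqrt{k}$ crossings and every crossing is counted at most twice, while extreme components are bounded because they either lie outside $\llc$ or $\rlc$ in one of two orderings, or cross edges of these two light components, which are themselves crossed few times. Next, I would invoke the main technical bound of the subsection, namely \Cref{lem:3exceptional}, giving $|\surr| \le 5 \cdot k^{2/3}$. Recall that this lemma required the more delicate two-part argument classifying surrounding components by whether they surround one of the first $k^{1/3}$ left/right star vertices, with both parts relying on the crossing lower bound from \Cref{cl:3starcrossed}.

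Combining the three bounds by a simple union bound, we obtain
\[
|\exceptional| \le |\heavy| + |\extreme| + |\surr| \le 2\sqrt{k} + 4\sqrt{k} + 5 k^{2/3} = 6\sqrt{k} + 5 k^{2/3}.
\]
Since we are in the recursive case where $k > c\sqrt{k^\star} \ge 1$ for $c$ a sufficiently large constant, we have $\sqrt{k} \le k^{2/3}$, and therefore $6\sqrt{k} \le 6 k^{2/3}$. Substituting this into the display above yields $|\exceptional| \le 11 \cdot k^{2/3} = \Oh(k^{2/3})$, as claimed.

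There is no real obstacle in this proof, as all the hard combinatorial work has already been done in \Cref{obs:3heavy}, \Cref{obs:3extremebound}, and especially \Cref{lem:3exceptional}. The only minor point worth noting is the implicit assumption $k \ge 1$ to dominate $\sqrt{k}$ by $k^{2/3}$, which is automatically satisfied in the recursive regime under consideration.
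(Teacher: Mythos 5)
Your proof is correct and matches the paper's approach exactly: the paper derives the corollary by citing \Cref{obs:3heavy}, \Cref{obs:3extremebound}, and \Cref{lem:3exceptional} and summing the three bounds, with the same observation that $\sqrt{k}\le k^{2/3}$ absorbing the lower-order terms.
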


\subsubsection*{Rearrangement of Nice Components}

Now we are left with analyzing the behavior of nice components. Note that by definition, each $C \in \nice(\cC)$ is a light component, and it does not surround any star vertex. We need the following definition to analyze the behavior of such nice components.
\begin{definition}[Slot] \label{def:3slots}
	Let us now order the star vertices of $\Star$ from left to right as $\str_1, \str_2, \ldots, \str_q$, and we define set of vertices $\slot_i(\sigma) \coloneqq \LR{u \in V_2 : \sigma_2(\str_i) < \sigma_2(u) < \sigma_2(\str_{i+1})}$ as an \emph{$\Star$-slot}, or simply a \emph{slot}, defined by two consecutive star vertices $\str_i, \str_{i+1}$. We say that a component $C$ \emph{belongs to} a slot $\slot_i(\sigma)$ in $\sigma$, if $V(C) \cap V_2 \subseteq \slot_i(\sigma)$. 
\end{definition}

The following observation follows from the definition of $\nice(\cC)$.

\begin{observation} \label{obs:3niceslots}
	Each $C \in \nice$ belongs to some slot $\slot_i(\sigma)$.
\end{observation}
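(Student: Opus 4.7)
The plan is to argue by contradiction, relying on three ingredients already established: the definition of $\nice = \cC \setminus (\heavy \cup \extreme \cup \surr)$, the definition of ``$C$ surrounds $\str$'' (containing vertices on both sides of $\str$ in $\sigma_2$), and Observation~\ref{obs:3star-characterization}, which tells us that no star vertex can lie in a component of $\cC$.

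First, I would observe that for any $C \in \nice \subseteq \cC$, no vertex of $V(C) \cap V_2$ coincides with a star vertex. Therefore $V(C) \cap V_2 \subseteq V_2 \setminus \Star$, and the enumeration $\str_1, \str_2, \ldots, \str_q$ partitions the latter set into the slots $\slot_1(\sigma), \ldots, \slot_{q-1}(\sigma)$, together with a possible leftmost region $\{u \in V_2 : \sigma_2(u) < \sigma_2(\str_1)\}$ and rightmost region $\{u \in V_2 : \sigma_2(u) > \sigma_2(\str_q)\}$. The core step is to show that $V(C) \cap V_2$ cannot straddle any star vertex: if there were $a, b \in V(C) \cap V_2$ and some index $j$ with $\sigma_2(a) < \sigma_2(\str_j) < \sigma_2(b)$, then by definition $C$ would surround $\str_j$, placing $C \in \surr \subseteq \exceptional$, a contradiction.

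From this, $V(C) \cap V_2$ is confined to a single one of the regions listed above. For the interior slots this already gives the conclusion. For the two extreme regions, I would argue that a nice component cannot in fact lie there: because $C \notin \extreme$, we have $\llc \prec_l C$ and $C \prec_r \rlc$, and combining this with the definition of $\leftstar$ and $\rightstar$ (star vertices being precisely the $V_2$-vertices incident to edges crossing $vu^l$ or $vu^r$) forces the leftmost and rightmost $V_2$-neighbors of $v$ in $C$ to be sandwiched between some pair of consecutive star vertices. Alternatively, if the intended convention augments $\Star$ with sentinels $\str_0, \str_{q+1}$ lying at the relevant boundary (for example at $\ept(\lsepone,2)$ and $\ept(\rsepone,2)$, which bound the range in which nice components can appear), then the extreme regions are themselves slots and nothing further is required.

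The main step—the surrounding argument—is essentially a one-line contradiction, so I do not anticipate any real obstacle there. The only delicate part is reconciling the handling of the two extreme regions with the exact slot indexing convention used later in the algorithm; this will be resolved by either extending $\Star$ with sentinel vertices or by absorbing these regions into $\slot_1$ and $\slot_{q-1}$ under the understanding that the slot definition applies vacuously when those regions are empty.
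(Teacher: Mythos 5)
Your core step---that $V(C)\cap V_2$ cannot contain a star vertex (by Observation~\ref{obs:3star-characterization}) and cannot straddle one (else $C\in\surr\subseteq\exceptional$)---is exactly the intended one-line argument; the paper offers no explicit proof and just asserts the observation ``follows from the definition,'' so on the interior slots you have reconstructed it faithfully.

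Your flagging of the boundary regions is well-taken: the paper's Definition~\ref{def:3slots} only creates slots between \emph{consecutive} star vertices $\str_i,\str_{i+1}$ with $1\le i<q$, so a component whose $V_2$-vertices all lie left of $\str_1$ or right of $\str_q$ does not literally belong to any slot, and nothing in the definitions visibly forbids this. However, your first proposed resolution does not hold up as sketched. Knowing that $\llc\prec_l C$ (so the leftmost $V_2$-neighbor of $v$ in $C$ lies right of $u^l$) and that $C$'s edges do not cross $\llc$'s or $\rlc$'s does not force $V(C)\cap V_2$ to lie strictly between $\str_1$ and $\str_q$: the leftmost star vertex is the $V_2$-endpoint of some edge crossing $vu^l$ or $vu^r$, and its position relative to $u^l$ and to $C$'s vertices is not pinned down by these ordering relations alone. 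Nothing you have cited rules out, say, $\sigma_2(u^l)<\sigma_2(w^l(C))<\sigma_2(\str_1)$. Your second resolution---augmenting $\Star$ with sentinels so the two boundary regions count as degenerate slots---is the right fix, and it is what makes the observation harmless downstream: Claim~\ref{cl:3slotintersection} is proved by showing a crossing edge would produce a star vertex strictly between the two relevant $V_2$-neighbors of $v$, and that contradiction still fires in a one-sided region (a new star vertex left of $\str_1$ or right of $\str_q$ contradicts extremality of $\str_1$ or $\str_q$). So the proposal correctly identifies both the intended argument and the genuine imprecision in the paper's slot definition, but the first of your two proposed patches should be dropped in favor of the sentinel convention.
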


We further define the following notion -- note that this is defined w.r.t. \emph{any} drawing $\so = (\so_1, \so_2, \so_3)$, not \emph{just} for a drawing $\sigma$ fixed at the beginning of the section.

\begin{definition}[Consecutiveness]\label{def:3consecutiveness}
	\begin{itemize}
		\item We say that a set of vertices $Q$ is \emph{consecutive} in $\so$ if for each $i \in [3]$, $Q_i \coloneqq Q \cap V_i$ satisfies the following property: (a) either $|Q_i| \le 1$, or (b) there exist two distinct vertices $u_l, u_r \in Q_i$, such that $Q_i = \LR{ u \in V_i : \so_i(u_l) \le \so_i(u) \le \so_i(u_r) }$. 
		\item We say that a collection of vertex-subsets ${\cal Q} = \LR{Q_1, Q_2, \ldots, Q_\ell}$ is \emph{consecutive} in $\so$, if (a) $\bigcup_{Q_j \in {\cal Q}} Q_j$ is consecutive in $\so$, and moreover (b) each $Q_j \in {\cal Q}$ is consecutive in $\so$. \footnote{Note that these two conditions together imply that the sets $Q_j \in {\cal Q}$ are placed ``one after the other'' in some arbitrary order, and no vertex not in any of $Q_j$'s is placed anywhere ``in between'' the sets.}
	\end{itemize}
\end{definition}

In the following definitions, we introduce the notion of a guess.

\begin{definition} \label{def:3setup}
		For each $C \in \pecunp(v)$, let $\opt(C)$ denote the optimal number of crossings required to draw $C$ on $2$ lines. Further, for a subset ${\cal S} \subseteq \pecunp(v)$, let $\opt({\cal S}) \coloneqq \sum_{C \in {\cal S}} \opt(C)$. 
	
\end{definition}

\begin{definition} \label{def:3blueprint}
	A \emph{component-blueprint} is a tuple $\cbp = (\totall, \totalr, \optl, \optr, \cE)$, where 
	\begin{itemize}
		\item $\cE \subseteq \cT(L, \ast) \cup \cT(\ast, L) \subseteq \nice(\cC)$ is a subset of components of size at most $k^{2/3}$.
		\item $\totall, \totalr: \types \to \nat$, where, for each $(i, j) \in \types$, it holds that, $\totall(i, j) + \totalr(i, j) = |\cT(i, j) \setminus \cE|$. 
		\item $\optl, \optr: \types \to \nat$, where, for each $(i, j) \in \types$, it holds that, $\optl(i, j) + \optr(i, j) = \opt(\cT(i, j) \setminus \cE)$.
	\end{itemize}
\end{definition}

Finally, we define the notion of blueprint-compliant guess.
\begin{definition} \label{def:3blueprintcompatibleguess}
	A \emph{component-guess} is a tuple $\cguess = (\leftpart, \rightpart)$ where, $\leftpart, \rightpart: \types \to \cC$ where for each $(i, j) \in \types$, $\leftpart(i, j), \rightpart(i, j)$ are disjoint subsets of $\cT(i, j)$.
	
	Let $\cguess = (\leftpart, \rightpart)$ be a component-guess, and $\cbp = (\totall, \totalr, \optl, \optr, \cE)$ be a component-blueprint. We say that the guess $\cguess$ is \emph{$\cbp$-compliant} if, for each $(i, j) \in \cT)$, 
	\begin{itemize}
		\item $|\leftpart(i, j)| = \totall(i, j)$, and $|\rightpart(i, j)| = \totalr(i, j)$, 
		\item $\opt(\leftpart(i, j)) = \optl(i, j)$, and $\opt(\rightpart(i, j)) = \optr(i, j)$.
	\end{itemize}
\end{definition}

\begin{definition}
	Let $\cG = (\leftpart, \rightpart)$ be a guess. We define a family \\$\nicecfamily(\cbp, \cguess, \sep, \sigma)$ to be a family of drawings $\sigma'$ satisfying the following properties:
	\begin{itemize}
		\item $\sigma$ and $\sigma'$, when respected to $V(G) \setminus \bigcup_{(i, j) \in \types} (\leftpart(i, j) \cup \rightpart(i, j))$ are equal, and
		\item Let $V_L \coloneqq \bigcup_{(i, j) \in \types} \leftpart(i, j)$ and $V_R \coloneqq \bigcup_{(i, j) \in \types} \rightpart(i, j)$. Then, for each $\ell \in \LR{2, 3}, \sigma_\ell(u) < \sigma_\ell(\ept(\lseptwo, \ell))$ for each $u \in V_L \cap V_\ell$, and $\sigma_\ell(w) > \sigma_\ell(\ept(\rseptwo, \ell))$ for each $w \in V_R \cap V_\ell$. 
	\end{itemize}
	
\end{definition}

Finally, we state the main lemma of our section.

\begin{lemma}\label{lem:3componentordering}
	Let $\cI$ be a $k$-minimal bottom-heavy \yin, let $\sigma$ be a drawing with $k$ crossings. Let $\sep(\sigma)$ be a separator w.r.t. $\sigma$ in (\SB, \DT) case, where $v \coloneqq \ept(\lsepone, 1) = \ept(\rsepone, 1)$. Then, there exists a blueprint $\cbp$, such that, for any $\cbp$-compliant guess $\cguess$, there exists a drawing $\sigma'$ such that 
	(i) $\sigma \in \nicecfamily(\allowbreak \cbp, \cguess, \sep, \sigma)$, and (ii) $\sigma'$ is $(\sep, \sigma)$-compatible.
\end{lemma}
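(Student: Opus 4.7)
The plan is to build the blueprint $\cbp = (\totall,\totalr,\optl,\optr,\cE)$ directly from the given drawing $\sigma$ and then argue that any $\cbp$-compliant guess admits a realizing drawing obtained by an in-place ``replanting'' procedure. Concretely, I set $\cE$ to be the collection of all nice components whose reduced type lies in $\{(L,\ast),(\ast,L)\}$; for every small reduced type $(i,j)$, I define $\totall(i,j)$ (resp.\ $\totalr(i,j)$) as the number of components in $(\cT_\sigma(i,j)\setminus \cE)\cap \cC^l$ (resp.\ $\cC^r$), and $\optl(i,j),\optr(i,j)$ as the corresponding sums of $\opt$-values. The first technical step is to prove $|\cE|\le k^{2/3}$. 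Since each member of $\cE$ is nice, hence light, and has either $\ge k^{1/3}$ edges incident to $v$ or $\ge k^{1/3}$ internal edges, a counting argument built on \Cref{cl:3slotintersection} (all $v$-edges falling in a single $\Star$-slot cross exactly the same external set $X_a$) and on the $k$-minimality of $\sigma$ gives the bound: each component of $\cE$ consumes at least $k^{1/3}$ units of a ``resource'' whose total supply is $\Oh(k)$.

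Given a $\cbp$-compliant guess $\cguess=(\leftpart,\rightpart)$, I build $\sigma'$ as follows. Freeze every vertex of $\sigma$ that does not belong to a nice small-type component, along with every vertex of $\exceptional$, $\cE$, and any other structural set. For each small reduced type $(i,j)$, compliance guarantees that $|\leftpart(i,j)|$ equals the number of type-$(i,j)$ nice components that $\sigma$ placed on the left of the separator and that $\opt(\leftpart(i,j))=\optl(i,j)$, and analogously on the right. Pair up the components in $\leftpart(i,j)$ with the left-side type-$(i,j)$ components of $\sigma$ via an arbitrary bijection, and do the same on the right. For each matched pair, ``replant'' the new component into the slot and positional envelope previously occupied by its partner, using a plane optimal two-layer drawing (obtained via \Cref{prop:partialorderdrawing}) realizing $\opt(C)$ internal crossings. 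Because two components of identical small reduced type $(i,j)$ have exactly $i$ incidences with $v$ and $j$ internal edges, the skeletal footprint of the replanted component is interchangeable with that of its partner.

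It then remains to prove that $\sigma'$ has exactly $k$ crossings and is $(\sep,\sigma)$-compatible. I partition the crossings of $\sigma$ into four types: (a) internal crossings of a single component, (b) crossings between a $v$-edge to a component and some edge external to that component, (c) crossings between $v$-edges to distinct components, and (d) crossings between internal edges of two distinct components. Type~(a) is preserved side-by-side because the $\optl,\optr$ sums are preserved. Type~(b) is preserved by \Cref{cl:3slotintersection}: planting a type-$(i,j)$ component into a slot previously holding another type-$(i,j)$ component replicates the crossings exactly, since both have $i$ $v$-edges and any two $v$-edges in the same slot cross the same external set. Type~(c) depends only on the slot positions and on $i$, so it too is invariant. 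The structural sets $\mde$, $\crosse$, $\gapleft$, $\gapright$ are untouched, since the replanting neither alters the separator edges nor any vertex outside the replanted components, which gives $(\sep,\sigma)$-compatibility once the side-by-side crossing counts are preserved on each separator edge.

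The main obstacle is handling type~(d): two non-isomorphic nice components of the same small reduced type $(i,j)$ may, in a naive optimal drawing, escape the slot boundaries differently and thereby produce different external crossing counts. The plan is to confine each replanted component's internal drawing entirely within the vertical strip delimited by its enclosing star vertices (justified because a nice component has all of its $V_2$ vertices in a single slot by \Cref{obs:3niceslots}, and its $V_3$ vertices can be routed directly above by an $\opt$-realizing sub-drawing), and then to show that internal edges of two distinct nice components sitting in different slots never cross in such a restricted embedding. Once this slot-confinement lemma is in hand, the cross-component bookkeeping reduces to a count that depends only on $(i,j)$ and on the slot, matching $\sigma$'s contribution term-by-term and closing the argument; the remaining verifications (no new crossings on the separator edges, and the balance conditions of \Cref{def:3separator} carrying over unchanged) are routine.
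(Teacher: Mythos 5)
Your choice of $\cE$ is where the argument breaks. You set $\cE$ to be \emph{all} nice components of reduced type $(L,\ast)$ or $(\ast,L)$, and then claim $|\cE|\le k^{2/3}$ by a ``resource'' argument: each such component allegedly consumes $\ge k^{1/3}$ crossings. But that lower bound simply does not hold. A nice component $C$ with, say, $\ge k^{1/3}$ edges to $v$ has all those edges in a single $\Star$-slot, and by \Cref{cl:3slotintersection} they all cross the same external set $X_a$. If $X_a=\emptyset$, the component contributes zero crossings from its $v$-edges; and being light gives only an \emph{upper} bound of $\sqrt{k}$ on its crossings, not a lower bound. Likewise a component with $\ge k^{1/3}$ internal edges need not produce any external crossings at all. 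So after kernelization you can have $\Omega(k^{\Oh(1)}/k^{1/3}) \gg k^{2/3}$ nice components of large type, and your $\cE$ then violates the size constraint $|\cE|\le k^{2/3}$ in \Cref{def:3blueprint}, which the lemma's statement requires (and which the algorithm depends on for its $n^{\Oh(k^{2/3})}$ guess budget).

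The paper's proof resolves exactly this by refining the large types further: after making components consecutive and after a preliminary grouping by \emph{original} type (Steps 1--2 of the paper's proof), it splits large types into ``category 0'' (where the chosen representative subslot has $c_{12}=0$ or $c_{23}=0$, i.e.\ no external crossings for the relevant edge family) and ``category 1'' (at least one external crossing). Only the category-1 components go into $\cE$, and \Cref{cl:3category1bound} shows there are at most $k^{2/3}$ of them — precisely because category-1 components genuinely consume $\ge k^{1/3}$ crossings each, while category-0 components don't and are instead handled by moving them en masse to the cheapest representative subslot. Your outline also leaves the slot-confinement step (the paper's Step~1: making each nice component consecutive via a crossing-nonincreasing move justified by $k$-minimality) as a ``remaining lemma,'' but without that step the per-slot bookkeeping (your types (b)--(d)) doesn't reduce to a function of $(i,j)$ alone, so it cannot safely be deferred.
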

\begin{proof}
	First, let us introduce some notation that will be used throughout the lemma.
	Let $\nice \coloneqq \nice(\cC) \subseteq \cC = \pecunp(v)$. Let $\compedges = \bigcup_{C \in \nice} E(C)$, and $\cedges = \bigcup_{C \in \nice} \cedges(C)$, where $\cedges(C) \coloneqq \LR{vu: u \in C \cap V_2}$. 
	
	For a drawing $\so$, we say that a component $C \in \nice$ is $\so$-left (resp.~$\so$-right), if in the drawing $\so$, for $\ell \in \LR{2, 3}$, all the vertices of $C \cap V_\ell$ are to the left of $\ept(\lseptwo, \ell)$ (resp.~to the right of $\ept(\rseptwo, \ell)$).


	\paragraph{Transforming $\sigma$.} 
	In the beginning, $\sigma'$ is initialized to $\sigma$. Throughout this proof, we will gradually transform $\sigma$ by reordering the vertices involved in $R \coloneqq \bigcup_{C \in \nice} C$, while keeping the relative ordering of the rest of the vertices unchanged. This will be done in multiple steps. Before this, however, we set up things.
	Further, during this process, we will also ensure that, $\sigma'$ remains $(\sep, \sigma)$-compatible (note that in the beginning, $\sigma' = \sigma$ and thus this property holds initially). Specifically, during each step of this process, we ensure the following: 
	\begin{enumerate}
		\item Only $\sigma'_2, \sigma'_3$ are modified by only changing the relative ordering between the vertices of $R \cap V_2$ and $R \cap V_3$, respectively, whereas $\sigma'_1$ remains unchanged.
		\item The sets of left- and right-star vertices w.r.t. $v, \cC, \sigma'$ remain unchanged throughout the procedure.
		\item The number of crossings remains the same, and moreover, the number of crossings on the left of $\lsepone, \lseptwo$, and that to the right of $\rsepone, \rseptwo$, remains the same.
		\item for any nice component $C \in \nice(\cC)$, the number of crossings involving at least one edge of $E(C)$ does not increase.
	\end{enumerate}
	At the end, when we will have achieved property 2, it follows that the final drawing
	$\sigma'$ at that point will be a $(\sep, \sigma)$-well-behaved drawing. 
	
	This gradual transformation can be broadly divided into three steps, which we explain in detail below.
	
	\begin{description}[leftmargin=*]

		\item[Step 1. Making each $C \in \nice$ consecutive.] By \Cref{obs:3niceslots}, each $C \in \nice$ belongs to a slot $S_i$.  Now, let $e = uw \in E(C) \subseteq E_{23}$ (with $u \in V_2, w \in V_3$) be an edge crosses the minimum number of edges outside $E(C)$. Let $E_{out} \subseteq E_{23} \setminus E(C)$ denote the set of edges that cross $e$. Let $p^l(C) \in V_2 \setminus C$ (resp.~$p^r(C) \in V_2 \setminus C$) be the rightmost (resp.~leftmost) vertex in $V_2 \setminus C$ according to $\sigma'_2$ that is to the left (resp.~right) of $u$. Similarly, define $q^l(C), q^r(C) \in V_3 \setminus C$ w.r.t. $w$ and $\sigma'_3$. Then, we modify $\sigma'_2$ by moving all the vertices of $C \cap V_2$ between $p^l(C)$ and $p^r(C)$, while maintaining the same relative order among themselves. Similarly, we move all the vertices of $C \cap V_3$ between $q^l(C)$ and $q^r(C)$ while maintaining the same relative order among themselves. This ensures that $C$ is consecutive in the resulting drawing $\so$.
		
		We now argue that the number of crossings does not increase after the transformation. First, note that by \Cref{obs:3niceslots}, $V(C) \cap V_2 \subseteq \slot_i$ for some slot $\slot_i$, and we only move the vertices of $C \cap V_2$ within the same slot. Therefore, via an argument similar to \Cref{cl:3slotintersection}, for each edge $e_{12} \in \cedges(C)$, the set of edges crossing $e_{12}$ does not change due to the modification. Further, since we maintain the same relative order among the vertices of $V(C) \cap V_2$ and $V(C) \cap V_3$, the set of edges of $E(C)$ that cross any edge $e_{23}$ also does not change. On the other hand, for each $e_{23} \in E(C)$, the set of edges from $E_{23} \setminus E(C)$ is now exactly equal to $E_{out}$. Due to the choice of $e$, it follows that
		the number of edges crossing any $e_{23}$ does not increase after the transformation. However, due to $k$-minimality of $\cI$, it follows that the number of crossings before and after the transformation must remain the same.
		After iteratively transforming $\sigma'$ in this manner for each $C \in \nice$, in the final drawing $\sigma'$, we ensure the following properties for each $C \in \nice$:
		\begin{enumerate}
			\item $C$ is consecutive in $\sigma'$. Further, if $C$ was $\sigma$-left (resp.~$\sigma$-right), then $C$ is $\sigma'$-left (resp.~$\sigma$-right).
			\item All the vertices in $V(C) \cap V_2$ belong to the interval $\LR{p \in V_2: \sigma'_2(p^l(C)) < \sigma'_2(p) < \sigma'2(p^r(C))}$ defined by the vertices $p^l(C), p^r(C)$,
			\item All the vertices in $V(C) \cap V_3$ belong to the interval $\LR{q \in V_2: \sigma'_2(q^l(C)) < \sigma'_2(q) < \sigma'_2(q^r(C))}$ defined by the vertices $q^l(C), q^r(C)$, and
			\item For any distinct $C, C' \in \nice$, the edges of $E(C)$ and $E(C')$ do not cross each other.
		\end{enumerate}
		These properties imply that, the final drawing $\sigma'$ obtained at the end of step 1 is $(\sep, \sigma)$-compatible, where $\sigma$ is the original drawing from the beginning of the proof. We refer to the pair of intervals defined in items 2 and 3 above as the \emph{sub-slot} defined by $p^l(C), p^r(C), q^l(C), q^r(C)$, and denote it by $\subslot(C)$. We note that, for any $C \in \nice$, since $E(C) \cup \LR{uv: u \in C}$ does not contain any edge of $\crosse \cup \mde$, it holds that 
		either (i) for each $u \in \lseptwo(C) \cap V_i$, $\sigma'_i(u) < \sigma'_i(\ept(\lseptwo, i))$, for $i \in \LR{2, 3}$ or (ii) for each $u \in \lseptwo(C) \cap V_i$, $\sigma'_i(u) > \sigma'_i(\ept(\rseptwo, i))$, for $i \in \LR{2, 3}$. Informally, either the entire subslot---and hence---is to the left of $\lsepone$, or to the right of $\rseptwo$.
		
		Further, let $E_{out}(C) \subseteq E_{23} \setminus E(C)$ to be the set of all edges that cross all of the edges of $E_{23}$ in $\sigma'$. 
		
		\item[Step 2. Constructing an Extended Component Blueprint.] 
		
		Recall that each reduced type $(i, j) \in \types \setminus \LR{(L, \ast), (\ast, L)}$ is the same as the corresponding original type; whereas the two special types $(L, \ast), (\ast, L)$ consist of multiple original types.  We first extend the notion of a component-blueprint to original types, instead of reduced types. Note that an important difference here is that, this definition does not the ``exceptional set'' $\cE$.
		
		\begin{definition}
			An \emph{extended component-blueprint} is a tuple $$\ecbp = (\totall, \totalr, \optl,\optr), $$ such that, 
			\begin{itemize}
				\item $\totall, \totalr: \mathbb{N}^2 \to \mathbb{N}$, where for each original type $(i', j') \in \mathbb{N}^2$, it holds that $\totall(i', j') + \totalr(i', j') = |\cT'(i', j')|$.
				\item $\optl, \optr: \mathbb{N}^2 \to \mathbb{N}$, where for each original type $(i', j') \in \mathbb{N}^2$, it holds that $\optl(i', j') + \optr(i', j') = \opt(\cT'(i', j'))$.
			\end{itemize} 
		\end{definition}
		We similarly define the notion of an extended component-guess, and the compliance thereof with an extended component-blueprint.
		
		Next, we will look at the current $\sigma'$ to define an $\ecbp^o = (\totall^o, \totalr^o, \optl^o, \optr^o)$ that will be used as a reference point. To this end, define for each original type $(i', j') \in \mathbb{N}^2$, 
		\begin{itemize}
			\item $(\leftpart^o(i', j'), \rightpart^o(i', j'))$ denote the partition of $\cT'(i', j')$, where $\leftpart^o(i', j')$ is the set of components that are $\sigma'$-left, and $\rightpart^o(i', j')$ is the set of components that are $\sigma'$-right.
			\item $\totall^o(i', j') \coloneqq |\leftpart^o(i', j')|$, and $\totalr^o(i', j') \coloneqq |\rightpart^o(i', j')|$.
			\item $\optl^o(i', j') \coloneqq \opt(\leftpart^o(i', j'))$, and $\optr^o(i', j') \coloneqq \opt(\rightpart^o(i', j'))$.
		\end{itemize}
		
		Now, consider any $\ecguess = (\leftpart, \rightpart)$ that is $\ecbp^o$-compliant. In this step, we will iteratively modify $\sigma'$ such that, for each $(i', j') \in \mathbb{N}^2$, all the components in $\leftpart(i', j')$ are $\sigma'$-left, and are placed consecutively, and similarly, all the components in $\rightpart(i', j')$ are $\sigma'$-right, and are placed consecutively.
		
		We iterate over all original types $(i', j') \in \mathbb{N}^2$ in an arbitrary order. Fix one such $(i', j') \in \mathbb{N}^2$, and let $C^l(i', j') \coloneqq \arg\min_{C \in \leftpart(i', j')} |\sloted(C)| \cdot i' + |E_{out}(C)| \cdot j'$, where $\sloted(C) \subseteq E_{12}$ denotes the set of edges that cross any edge of the form $vu$, where $u$ belongs to the same as $C$. Let $\subslot^l(i', j') \coloneqq \subslot(C(i', j'))$. We transform $\sigma'$ as follows: for each $C' \in \leftpart(i', j')$, we move all the vertices of $C'$ to the $\subslot^l(i', j')$ while maintaining the same relative order of vertices of $C'$. We place all such components one after the other in an arbitrary order in this subslot. We perform a similar transformation for $\rightpart(i', j')$. Note that after completely processing each original type $(i', j')$, the resultant drawing $\sigma''$ at the end of the iteration, is $(\sep, \sigma')$-compatible, where $\sigma'$ is the drawing before the start of the iteration.
		
		Note that for $(i', j') \in \LR{1, 2, \ldots, (k)^{1/3}}^2$, the original and reduced types are the same. Hence, this step makes all such $\leftpart(i', j'), \rightpart(i', j')$ consecutive. For each $(i', j')$, let $c^l_{12}(i', j') \coloneqq |\sloted(C^l(i, j))|, c^l_{23}(i', j') \coloneqq |E_{out}(C^l(i, j)|$, where $C^l(i, j)$ is the component that is used to relocate all other components. Analogously, define $c^r_{12}(i', j')$ and $c^r_{23}(i', j')$.
		
		\item[Step 3. Handling reduced type $(\ast, L)$, $(L, \ast)$.] Let $T_{orig}(L, \ast)$ be the set of all original types whose corresponding reduced type is $(L, \ast)$. Note that these consist of original types $(i', j')$, where $i < k^{1/3}$ and $j \ge (k)^{1/3}$. Also define $T_{orig}(\ast, L)$ as the set of original types $(i', j')$ with $i \ge (k)^{1/3}$. 
		
		We first consider the left side. We classify the original types $T_{orig}(L, \ast)$ into two sub-categories $(i', j') \in T_{orig}(i', j')$ is \emph{left-category 0} if $c^l_{12}(i', j') = 0$; and \emph{left-category 1} if $c^l_{12}(i', j') \ge 1$. Similarly, classify the original types $T_{orig}(\ast, L)$ into two sub-categories as follows: $(i', j') \in T_{orig}(\ast, L)$ is \emph{left-category 0} if $c^l_{23}(i', j') = 0$; and \emph{left-category 1} if $c^l_{23}(i', j') \ge 1$. Similarly, define the notion of right-category 0 and right-category 1, respectively.
		
		
		First, we show how to handle left-category 0 types from $(L, \ast)$. Let these original types be $(i_1, j_1), (i_2, j_2), \ldots, (i_q, j_q)$. By definition, for each $1 \le p \le q$, $c^l_{12}(i_p, j_p) = 0$. Let $\ell = \argmin_{1 \le p \le q} c^l_{23}(i_p, j_p)$. We move all components in $\leftpart(i_p, j_p)$ to the subslot containing all the components of $\leftpart(i_\ell, j_\ell)$, while preserving the relative order within the components of $\leftpart(i_p, j_p)$ for each $1 \le p \le q$, as well as relative order of vertices within each component. Note that, for each component $C \in \leftpart(i_p, j_p)$, the number of crossings from outside $E(C)$, before the movement is $c_{23}(i_p, j_p)$, and that after the movement is $c_{23}(i_\ell, j_\ell) \le c_{23}(i_p, j_p)$. Therefore, we do not increase the number of crossings in this transformation. 
		
		We handle left-category 0 types from $(\ast, L)$, along with reduced type $(1, 0)$ in a very similar manner, which we describe for completeness. Let $(i_1, j_1), (i_2, j_2), \ldots, (i_{q}, j_{q})$, indicate the original types corresponding to $(\ast, L)$ such that $1 \le p \le q$, $c^l_{23}(i_p, j_p) = 0$. Let $\ell = \argmin_{1 \le p \le q} c^l_{12}(i_p, j_p)$. We move all components in $\leftpart(i_p, j_p)$ to the subslot containing all the components of $\leftpart(i_\ell, j_\ell)$, while preserving the relative order within the components of $\leftpart(i_p, j_p)$ for each $1 \le p \le q$, as well as relative order of vertices within each component. Note that, for each component $C \in \leftpart(i_p, j_p)$, the number of crossings of $\cedges$ from outside, before the movement is $c^l_{12}(i_p, j_p)$, and that after the movement is $c^l_{12}(i_\ell, j_\ell) \le c^l_{12}(i_p, j_p)$. Therefore, we do not increase the number of crossings in this transformation. 
		
		We do a similar transformation for right-category 0 types from $T_{orig}(L, \ast)$ and $T_{orig}(\ast, L)$ to move the components $\rightpart(\cdot, \cdot)$ of the respective types. We omit the description.
		
		Finally, we handle left/right-category 1 types from $(\ast, L)$ and $(L, \ast)$ by showing that the total number of components belonging to such types is bounded by $(k)^{2/3}$. Formally, we show the following claim.
		
		\begin{claim} \label{cl:3category1bound}
			$$\sum_{\substack{(i, j): \max(i, j) \ge (k)^{1/3} \\ (i, j) \text{ is left/right-category 1} }} |\cT'(i, j)| \le (k)^{2/3}.$$
		\end{claim}
		\begin{proof}[Proof of \Cref{cl:3category1bound}.]
			We prove that each component $C$ of original type $(i, j)$ with $(i, j)$ being left/right-category 1, accounts for a distinct set of $(k)^{1/3}$ crossings. Since the total number of crossings is $k$, this will show the claim. 
			
			First consider $C \in \leftpart(i', j')$ with $(i', j') \in T_{orig}(L, \ast)$, such that $(i', j')$ is left-category 1 (the proof for right-category 1 is analogous). Since $(i', j')$ is left-category 1, $c^l_{12}(i, j) \ge 1$. Note that $|\cedges| \ge k^{1/3}$. By the previous step, all the components in $\leftpart'(i', j')$ belong to a single slot, and by \Cref{cl:3slotintersection}, all the edges in $\cedges$ cross the same non-empty set of edges -- where the non-emptiness follows from $c^l_{12}(i, j) \ge 1$. Thus, there are $i \ge (k)^{1/3}$ such edges, and each edge is involved in at least $(k)^{1/3}$ crossings. 
			
			Now let us consider $C \in \leftpart(i', j')$, where $(i', j') \in T_{orig}(\ast, L)$ and $(i', j')$ is left-category 1 (the proof for right-category 1 is analogous). Since $(i', j')$ is category 1, $c^l_{23}(i', j') \ge 1$. Further, all the components in $\cT'(i', j')$ belong to a single slot. Further, by step 1, the edges of different components $C, C' \in \leftpart(i', j')$ do not cross each other; and all the edges of $\bigcup_{C \in \leftpart(i', j')} E(C)$ cross the same non-empty set of edges not in $\bigcup_{C \in \leftpart(i', j')} E(C)$ -- here, the non-emptiness follows from $c^l_{23}(i', j') \ge 1$. Thus, for each $C \in \leftpart(i', j')$, each edge in $E(C)$ is crossed at least once, and there are at least $j \ge (k)^{1/3}$ edges in $E(C)$, which amounts to at least $(k)^{1/3}$ crossings for each $C \in \leftpart(i', j')$. 
		\end{proof}
		Let $\cE$ denote the set of all components $C \in \cT'(i', j')$, where $(i', j')$ is left/right-category 1. \Cref{cl:3category1bound} shows that $|\cE| \le k^{2/3}$. Now, we use this $\cE$ to obtain a component-blueprint and component-guess.
		
		Recall the extended component-blueprint $\ecbp^o = (\totall^o, \totalr^o, \optl^o, \optr^o)$ defined earlier. Now, define $\cbp^p = (\totall^p, \totalr^p, \optl^p, \optr^p, \cE)$, where for each $(i, j) \in \types \setminus \LR{(L, \ast), (\ast, L)}$, define $\totall^p(i, j) = \totall^o(i, j)$, $\totalr^p(i, j) = \totalr^o(i, j)$, $\optl^p(i, j) = \optl^o(i, j)$, and $\optr^p(i, j) = \optr^o(i, j)$. On the other hand, for $(i, j) \in \LR{(L, \ast), (\ast, L)}$, define 
		\begin{itemize}
			\item $\totall^p(i, j) = \sum_{(i', j') T_{orig}(i, j) \text{ and left-category 0}} \totall^o(i', j')$,
			\item $\totalr^p(i, j) = \sum_{(i', j') \in T_{orig}(i, j) \text{ and right-category 0}} \totalr^o(i', j')$,
			\item $\optl^p(i, j) = \sum_{(i', j') T_{orig}(i, j) \text{ and left-category 0}} \optl^o(i', j')$, 
			\item $\optr^p(i, j) = \sum_{(i', j') T_{orig}(i, j) \text{ and left-category 0}} \optr^o(i', j')$.
		\end{itemize}	
		Now, notice that, any $\cbp^p$-compliant guess $\cguess^p$ can be extended to $\ecbp^o$-compliant $\ecguess^o$, by considering any partition of the category 1 types with appropriate constraints, which is essentially a partition of components in $\cE$. Then, applying the 3-step proof as described above, we can show that, for the $\cbp^p$ as defined above, for any $\cbp^p$-compliant $\cguess$, the drawing $\sigma'$ obtained at the end belongs to $\nicecfamily(\cbp^p, \cguess, \sep, \sigma)$, and also is $(\sep, \sigma)$-compatible. This completes the proof.
	\end{description}  
\end{proof}

\subsection{Step 2: Guessing the Interfaces} \label{subsec:3guessing}
In this step, we will ``guess'' various subsets of edges and vertices that can be used in the creation of two sub-instances in the next step. Each of the guesses are made by supposing that $\cI$ is a $k$-minimal \yin with an unknown drawing $\sigma$. Further, we describe the algorithm corresponding to bottom-heavy case, i.e., when $k_{12} \ge k_{23}$, which implies that $k_{12} \ge k/2$ -- top-heavy case is symmetric by exchanging the roles of lines 1 and 3. These guesses will be made in multiple steps, with each step being dependent on all possible choices of the previous steps. Further, if any of the guesses in a subsequent step contradicts a guess done at a previous step, then such a guess is deemed invalid and is terminated. These guesses can be thought of \emph{branching} on each of the specified choices in all steps. If all the guesses are terminated, or the corresponding recursive calls report that the sub-instances are {\sc No}-instances, then we will conclude that our assumption that $\cI$ was a \yin was wrong, and thus we must have a {\sc No}-instance.

In each step, we specify at a high level the kind of object that we are trying to guess. Whenever we are trying to guess a subset of edges, we follow the following convention: all the edges of a multi-edge are treated in the same manner, such as being chosen in a guess.

\begin{description}[leftmargin=10pt]
	\item[Step A: Separator.] We start by guessing the separator $\sep = \LR{\lsepone, \rsepone, \lseptwo, \rseptwo}$ w.r.t. a hypothetical (unknown) drawing $\sigma$. Note that there are at most $m^4 = n^{\Oh(1)}$ choices. Furthermore, based on our guess, we can also conclude whether we are in (\DB, \DT), (\DB, \ST), (\SB, \DT), or (\SB, \ST) case. Next, we guess $\mde(\sep) \subseteq E_{12}$ of size at most $2\sqrt{k}$, and two sets $\gapleft, \gapright \subseteq V_2^{conn}$ of size at most $3\sqrt{k}$, as in the definition of \Cref{def:3separator}. There are at most $n^{\Oh(\sqrt{k})}$ choices for this. Overall, we have at most $n^{\Oh(\sqrt{k})}$ choices. 
	

	
	\item[Step B: Labeling $\epts(\boundary) \cup \epts(\sep) \cup \gapleft \cup \gapright$.]\ 
	\\The vertices of $\epts(\leftbone)$ and $\epts(\leftbtwo)$ receive labels $\lf$. The vertices of $\epts(\lsepone) \cup \epts(\lseptwo) \cup \gapleft$ are each given two labels $\lf, \mm$. Similarly, the vertices of $\epts(\rsepone) \cup \epts(\rseptwo) \cup \gapright$ are given two labels $\mm, \rt$. Note that the assignment of labels in this step does not overwrite previous labels. For example,  in \SB/\ST case the shared bottom (resp.~top) endpoint of $\lsepone, \lseptwo$ (resp.~$\lseptwo, \rseptwo$) gets all three labels. Similarly, if $\lsepone, \lseptwo$ share top endpoint, then the shared endpoint will also receive all three labels. Note that there is no guessing involved in this step.
	
	
	\item[Step C: $\crosse$ and labeling their endpoints.] For each edge $e \in \sep$, we guess $\crs(e)$, which is a set of at most $\sqrt{k}$ edges that cross $e$. Let $\crosse \coloneqq \bigcup_{e \in \sep} \crs(e)$. 
	
	We now describe feasible pairs for labels of the endpoints of $\crosse$; and each guess corresponds to using labeling the endpoint of each edge in $\crosse$ by one of the feasible (ordered) pairs. However, if the guess for the label is \emph{incompatible} with a previously assigned label, then we consider it as an invalid guess and proceed to the next one. Note that, if a vertex has previously been assigned multiple labels, then the incompatible label(s) is a label that has not been assigned to the said vertex, if any.
	\begin{itemize}
		\item For $e \in \crs(\lsepone) \cup \crs(\lseptwo)$, the feasible label pairs are $(\lf, \mm), (\mm, \lf), (\lf, \rt), (\rt, \lf)$.
		\item For $e \in \crs(\rsepone) \cup \crs(\rseptwo)$, the feasible label pairs are $(\rt, \mm), (\mm, \rt), (\lf, \rt), (\rt, \lf)$.
	\end{itemize}
	Overall, we have at most $\binom{m}{4\sqrt{k}}^4 \cdot 2^{\Oh(\sqrt{k})} = n^{\Oh(\sqrt{k})}$ choices. Further, let ${\sf CrossLeft, CrossMid,}$ ${\sf CrossRight} \subseteq \epts(\crosse)$ denote the set of vertices that receive labels $\lf, \mm, \rt$, respectively.

	\item[Step D: Edges incident to $\gapleft \cup \gapright$.]\ 
	\begin{itemize}
		\item Consider an edge $uv \in E_{12} \setminus \crosse$, where $v \in \gapleft$ (resp.~$v \in \gapright$). Then, we label $u$ as $\lf$ (resp.~$\rt$). 
		\item Consider an edge $uv \in E_{23} \setminus \crosse$, where $v \in \gapleft \cup \gapright$. Then, we label $u$ as $\mm$.
	\end{itemize}
	
	\item[Step E: Handling $\comps(v)$ for $v \in \epts(\sep)$.] 
	Recall that for each $v \in \epts(\sep)$, each component $C \in \comps(v)$ can be classified into $\spl(v), \predef(v), \updown(v), \pecunp(v)$, and $\pecunp(v)$ according to \Cref{def:3componenttypes}. Now we describe how to handle each of the components separately.
	
	\begin{description}[leftmargin=2pt]
		\item[$\blacktriangleright$ $\spl(v)$.] Some of the vertices in such components are already labeled. We will return to the remaining vertices in these components at a later step.
		
		\item[$\blacktriangleright$ $\predef(v)$.] Recall that each $\pi_i$ for $i \in [3]$ is comprised of $\lambda = \Oh(\log k^\star)$ total orders $\pi_i^1, \pi_i^2, \ldots, \pi_i^\lambda$ for $V_i^1, V_i^2, \ldots, V_i^\lambda$, respectively. For each $i \in [3]$, and for each $j \in [\lambda]$, we do the following. 
		\begin{itemize}
			\item We guess two vertices $u(j, l), w(j, r) \in V^j_i$ with $\pi^j_i(u(j, l)) < \pi^j_i(w(j, r))$.
			\item Any $C \in \predef(v)$ that contains a vertex $u \in V^j_i$ with $\pi^j_i(u) \le \pi^j_i(u(j, l))$, we label all the vertices of $C$ using $\lf$. 
			\item Any $C \in \predef(v)$ that contains a vertex $w \in V^j_i$ with $\pi^j_i(w) \ge \pi^j_i(w(j, r))$, we label all the vertices of $C$ using $\rt$.
		\end{itemize}  
		If any of the guesses made in this step is incompatible with a previous step, or with a prior guess made in this step, then we terminate this guess. Note that the number of guesses made in this step is at most $n^{\Oh(\log k^\star)} = n^{\Oh(k^{2/3})}$ since $k = \Omega((k^{\star})^{2/3})$. 
		
		\item[$\blacktriangleright$ $\updown(v)$.] By \Cref{lem:3type3components}, the number of such components in an \yin is bounded by $8\sqrt{k}$; otherwise we terminate this guess. Otherwise, for each such component, we guess a label from $\LR{\lf, \rt}$ and use that label for all the vertices of the component. The number of guesses is $n^{\Oh(\sqrt{k})}$.
		
		\item[$\blacktriangleright$ $\pendant(v)$ for all $v \in \epts(\sep)$.] \ 
		\begin{itemize}[leftmargin=14pt]
			\item We guess a pendant-blueprint $$\pbp = \LR{(\pendantsl(v), \pendantsm(v),  \pendantsr(v)): v \in \epts(\sep)}$$ in accordance with \Cref{def:3pendantblueprint}. Note that the number of guesses for pendant-blueprints is bounded by $n^{\Oh(1)}$.
			\item Now, fix a $\pbp$. For each $v \in \epts(\sep)$, we use a variant of 3D knapsack to find a partition (if one exists) $(\pendantl(v), \pendantm(v)$, $\pendantr(v))$ of $\pendant(v)$ for each $v \in \epts(\sep)$ in polynomial time, such that the resulting $\pguess$ is compatible with $\pbp$ (as defined in \Cref{def:3blueprintcompatibleguess}). If such a partition does not exist, then we terminate the guess.
			\item Then, for each $v \in \epts(v)$, we label each pendant in $\pendantl(v), \pendantm(v)$, and $\pendantr(v)$ with label $\lf, \mm$ and $\rt$, respectively.
		\end{itemize}
		Note that the number of guesses in this step is bounded by $n^{\Oh(1)}$, and otherwise the step takes polynomial time. 
		\\At this point, for all $v \in \epts(\sep, 2)$, we have labeled all the components in $\comps(v)$ (other $\spl(v)$ as mentioned above). Thus, we are left with handling $\pecunp(v)$ for $v \in \epts(\sep, 1) \cup \epts(\sep, 3)$. 
		
		\item[$\blacktriangleright$ $\pecunp(v)$ for all $v \in \epts(\sep)$.] We consider different cases based.
		
		(i) (\DB, \DT) case: Here, $v \in \epts(\sep, 1) \cup \epts(\sep, 3)$ can be labeled with $\lf/\rt$ as appropriate in accordance with \Cref{lem:3pecunp-cases}. 
		
		(ii) (\SB, \ST) case, \Cref{lem:3diffcases}, we can guess two subsets of  $P_1 \subseteq \pecunp(\ept(\lsepone, 1))$ and $P_2 \subseteq \pecunp(\ept(\lseptwo, 3))$, each of size at most $\sqrt{k}$. Then, we consider the following two possibilities (guesses).
		\begin{enumerate}
			\item Label all components of $P_1$ using $\lf$, and all components of $\pecunp(\ept(\lsepone, 1)) \setminus P_1$ using $\rt$. 
			\\Label all components of $P_2$ using $\rt$, and label all components of $\pecunp(\ept(\lseptwo, 3)) \setminus P_2$ using $\lf$.
			\item Label all components of $P_1$ using $\rt$, and all components of $\pecunp(\ept(\lsepone, 1)) \setminus P_1$ using $\lf$. 
			\\Label all components of $P_2$ using $\lf$, and label all components of $\pecunp(\ept(\lseptwo, 3)) \setminus P_2$ using $\rt$.
		\end{enumerate}

		(iii) (\SB, \DT) case. For the rest of the description of step E, let $v \coloneqq \ept(\lsepone, 1) = \ept(\rsepone, 1)$, and $\cC \coloneqq \pecunp(v)$. 
		\begin{itemize}[leftmargin=14pt]
			\item We guess a subset $\exceptional \subseteq \cC$ of size at most $11 \cdot k^{2/3}$ as in \Cref{lem:3exceptional}. Note that there are at most $n^{\Oh(k^{2/3})}$ guesses. Let $\cC' \coloneqq \cC \setminus \exceptional$.
			\item For each $C \in \cC'$, we use the two layer algorithm of \Cref{thm:2layertheorem} to find an optimal number of crossings, $\opt(C) \le k$ (if $\opt(C) > k$, then clearly we have a {\sc No}-instance). This takes time $|V(C)|^{\Oh(\sqrt{k})} = n^{\Oh(\sqrt{k})}$ (note that we skip the kernelization step), and obtain a drawing of $C$. 
			\item According to \Cref{def:3componenttypes}, we classify $\cC'$ into reduced types, and use $\cT(i, j) \subseteq \cC'$ to denote the set of components of type $(i, j)$. Note that the number of reduced types is at most $k^{2/3} + 2$. 
			\item Now, we guess a component-blueprint $\cbp$ according to \Cref{def:3blueprint}. Note that the number of guesses for a blueprint is bounded by $n^{\Oh(k^{2/3})}$. Fix one such $\cbp = (\totall, \totalr, \optl, \optr, \cE)$.
			\item We try to construct a $\cbp$-compliant $\cguess = (\leftpart, \rightpart)$ as follows: for each reduced type $(i, j) \in \types$, we try to find a partition of $\cT(i, j) \setminus \cE$ into \\$(\leftpart(i, j), \rightpart(i, j))$ using a version of 4D knapsack (this runs in polynomial time) that satisfies the following two conditions:
			\begin{itemize}
				\item $|\leftpart(i, j)| = \totall(i, j)$, and $|\rightpart(i, j)| = \totalr(i, j)$
				\item $\opt(\leftpart(i, j)) = \optl(i, j)$, and $\opt(\rightpart(i, j)) = \optr(i, j)$.
			\end{itemize}
			If such $\cguess$ does not exist, then we terminate the current guess. 
			\item Otherwise, for each component $C \in \bigcup_{(i, j) \in \types} \leftpart(i, j)$, we label it using $\lf$, and for each component $C \in \bigcup_{(i, j) \in \types} \rightpart(i, j)$, we label it using $\rt$.
		\end{itemize}
		
		(iv) (\DB, \ST) case: this case is handled symmetrically as (\SB, \DT) case as explained above in (iii). We omit the description.
	\end{description}

	\item[Step F: Label propagation.] At this point, some of the vertices of the graph may still be unlabeled. 
	
	Let $A$ denote the set of vertices that have already received a label, and let $A^\lf, A^\mm, A^\rt \subseteq A$ denote the subsets of vertices that have received label $\lf, \mm, \rt$, respectively (note that, if a vertex has multiple labels, then it belongs to multiple subsets). Further, let $O \subseteq A$ denote the subset of vertices that have only received one label, and again $O^\lf, O^\mm, O^\rt$ denote the corresponding partition of $O$.
	
	Next, we define three sets of vertices, as follows.
	\begin{align}
		{\sf Origin}^\lf &\coloneqq \lbnd \cup \epts(\LR{\leftbone, \leftbtwo, \lsepone, \lseptwo}) \cup \gapleft \cup {\sf CrossLeft} \nonumber
		\\{\sf Origin}^\rt &\coloneqq \rbnd \cup \epts(\LR{\rsepone, \rseptwo, \rightbone, \rightbtwo}) \cup \gapright \cup {\sf CrossRight} \nonumber
		\\{\sf Origin}^\mm &\coloneqq \epts(\sep) \cup \epts(\mde) \cup \gapleft \cup \gapright \cup {\sf CrossMid} \label{eqn:origindef}
	\end{align}

	Now, we proceed as follows.
	\begin{enumerate}
		\item For each $u \in {\sf Origin}^\lf$, and each unlabeled $w \not\in A$ that is reachable from $u$ in the graph $G - (O^\mm \cup O^\rt)$, we label it using $\lf$.
		\item For each $u \in {\sf Origin}^\rt$, and each unlabeled $w \not\in A$ that is reachable from $u$ in the graph $G - (O^\lf \cup O^\mm)$, we label it using $\rt$.
		\item For each $u \in {\sf Origin}^\mm$, and each unlabeled $w \not\in A$ that is reachable from $u$ in the graph $G - (O^\lf \cup O^\rt)$, we label it using $\mm$.
	\end{enumerate}
	During this process, we terminate the guess, if any of the following happens.
	\begin{itemize}
		\item A vertex $w \not\in A$ receives two different labels.
		\item A vertex $w \in V_1 \setminus A$ that has at least one neighbor in $V_2$ receives label $\mm$.
		\item A vertex $w \in V_2 \setminus A$ that has at least one neighbor in $V_1$ receives label $\mm$.
	\end{itemize}
	Otherwise, let $V^{\cal T}$ denote the set of vertices that have received the label ${\cal T}$, for ${\cal T} \in \LR{\lf, \mm, \rt}$, respectively. Note that these sets are not disjoint. 

	\item[Step G: Partial Order.] 
	For each $i \in [3]$, we will guess permutation $\pi^{\lambda+1}_i$ (total order) of a certain subset of vertices of $V_i$ in the drawing $\sigma$, which will be used to extend $\pi_i$ to a new partial order $\pi'_i$. 
	Let $S \coloneqq \epts(\boundary) \epts(\sep) \cup \epts(\mde) \cup \epts(\crosse) \cup \gapleft \cup \gapright$. From the bounds on the respective sets of vertices and edges, it is straightforward to check that $|S| = \Oh(\sqrt{k})$. For each $i \in [3]$, let $S_i \coloneqq V_i \cap S$. We guess permutations $\pi^{\lambda+1}_i$ of $S_i$ for each $i \in [3]$. Since $|S_i| = \Oh(\sqrt{k})$, it follows that the total number of guesses is $|S_1|! \cdot |S_2|! \cdot |S_3|! = (\Oh(\sqrt{k})!)^3 = 2^{\Oh(\sqrt{k} \log k)}$. Next, we define $\pi^{\lambda+2}_2$ as an ordering over $\bnd \cup \epts(\sep, 2) \cup \gapleft \cup \gapright $ by combining parts of $\pi^{\lambda+1}_2$, and one of the constituent total orders of $\pi^2$ of the set $\bnd$.
	Let $\pi^{\lambda+1} = (\pi^{\lambda+1}_1, \pi^{\lambda+1}_2, \pi^{\lambda+1}_3)$. We say that $\pi^{\lambda+1}$ is \emph{valid} if the following conditions are satisfied:
	\begin{itemize}
		\item For each $i \in [3]$, for any distinct $u, v \in S_i$ with $\pi^{\lambda+1}_i(u)$ $\pi_i$ defines an order between $u$ and $v$, then both $\pi_i$ and $\pi_i^{\lambda+1}$ are order $u$ and $v$ consistently. \footnote{Note that, it is possible that $u$ and $v$ are incomparable in $\pi_i$, but since they both belong to $S_i$, they are comparable in $\pi^{\lambda+1}_i$.}
		\item $\pi^{\lambda+1}$ satisfies the following properties (that must satisfied by the separator):
		\begin{itemize}
			\item $\lsepone, \rsepone$ do not cross, $\rsepone, \rseptwo$ do not cross
			\item For each $e \in \sep$, the set of edges crossing $e$ is exactly equal to $\crs(e)$,
			\item $\mde$ is exactly the set of edges that lie between $\lsepone, \rsepone$
		\end{itemize}
	\end{itemize}
	Note that the validity of $\pi^{\lambda+1}$ can be checked in polynomial time. If a guess for $\pi^{\lambda+1}_i$ is invalid, then we terminate it and move to the next guess. Otherwise, for each $i \in \LR{1, 3}$, let $\pi'_i = \pi_i \cup \pi^{\lambda+1}_i$, and $\pi'_2 = \pi_2 \cup \pi^{\lambda+1}_2 \cup \pi^{\lambda+2}_2$, and note that $\pi'_i$ is a valid extension of $\pi_i$, since $\pi^{\lambda+1}$ is valid. Henceforth, we assume that we are working with such a $\pi' = (\pi'_1, \pi'_2, \pi'_3)$ for some guess for $\pi^{\lambda+1}$. 
	
	\item[Step H: Parameter division.]
	First, let $k_{\sf special}$ denote the number of crossings (with multiplicity) of the following form according to the prior guess:
	\begin{itemize}
		\item $e_1 \in \crs(\lseptwo), e_2 \in \crs(\rseptwo)$ such that $e_1$ and $e_2$ cross according to their ordering in $\pi^{\lambda+1}$.
	\end{itemize}
	Let $k_{\sf current}$ denote the number of crossings among the edges of $\crosse \cup \mde \cup \sep$ according to $\pi^{\lambda+1}_i$, except those counted in $k_{\sf special}$. If $k_{\sf current} > k$, then we terminate the current guess. Otherwise, let $k' \coloneqq k - k_{\sf current}$. Let $k'$ denote the final value. We guess three non-negative integers $0 \le k^\lf, k^{\mm}, k^\rt$ such that $k^{\lf}, k^{\rt} \le 3k/4$, and $k^{\lf} + k^{\mm} + k^{\rt} = k'$. Further, we also guess $k^{\lf}_{12}, k^{\lf}_{23}, k^{\rt}_{12}, k^{\rt}_{23} \ge 0$ such that $k^{\lf} = k^{\lf}_{12} + k^{\lf}_{23}$, and $k^{\rt} = k^{\rt}_{12} + k^{\rt}_{23}$. Note that the number of such guesses is bounded by $k^{\Oh(1)}$.

\end{description}

From the discussion above, the following lemma follows. 
\begin{lemma} \label{lem:3guessbound}
	The total number of guesses made in steps A through H, for (i) separator $\sep$ and its associated sets of vertices and edges, (ii) labels of vertices of $V(G)$, (iii) partial orders $\pi'i$, and (iv) division of $k$ into smaller parameters, is bounded by $n^{\Oh(k^{2/3})}$.
\end{lemma}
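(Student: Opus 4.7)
The plan is to go through steps A through H in sequence and multiply together the upper bounds on the number of guesses contributed by each step, using the a priori structural bounds established earlier in \Cref{sec:subexp} (such as $|\mde| \le 2\sqrt{k}$, $|\gapleft|, |\gapright| \le 3\sqrt{k}$, $|\crosse| \le 4\sqrt{k}$, $|\updown(v)| \le 8\sqrt{k}$ from \Cref{lem:3type3components}, $|\spl(v)| = O(\sqrt{k})$ from \Cref{obs:3specialcomps}, and $|\exceptional| \le 11 k^{2/3}$ from \Cref{cor:3exceptionalcomponents}). The overall goal is just a book-keeping check that each contribution is at most $n^{O(k^{2/3})}$ and that there are $O(1)$ multiplicative factors.

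In steps A and C, guessing $\sep$, $\mde$, $\gapleft, \gapright$, and the sets $\crs(e)$ for $e \in \sep$ (each of size $O(\sqrt{k})$) together with the constantly-many admissible label pairs for endpoints of $\crosse$, contributes $\binom{m}{O(\sqrt{k})}^{O(1)} \cdot 2^{O(\sqrt{k})} = n^{O(\sqrt{k})}$. Steps B and D involve no guessing. In step E, the $\predef(v)$ substep guesses two vertices for each of the $\lambda = O(\log k^\star)$ constituent orders on each of the three layers, for $n^{O(\log k^\star)}$ guesses; since we are in the recursive case $k > c\sqrt{k^\star}$, so $\log k^\star = O(\log k) = o(k^{2/3})$, and this becomes $n^{O(k^{2/3})}$. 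The $\updown(v)$ substep contributes $2^{O(\sqrt{k})}$ via labeling its $\le 8\sqrt{k}$ components, the $\pendant(v)$ substep contributes $n^{O(1)}$ choices for the pendant-blueprint $\pbp$, and the $\pecunp$ substep in the (\SB,\ST) case contributes $n^{O(\sqrt{k})}$ from guessing the two subsets $P_1, P_2$. Step F is fully determined by the previous guesses; step G contributes $(O(\sqrt{k})!)^3 = 2^{O(\sqrt{k}\log k)}$; and step H contributes $k^{O(1)}$.

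The dominant term, and the only place where the $k^{2/3}$ exponent actually arises, is the handling of $\pecunp(v)$ in the (\SB,\DT) and (\DB,\ST) cases within step E. There we guess (i) a subset $\exceptional \subseteq \cC$ of size at most $11 k^{2/3}$, costing $\binom{n}{11k^{2/3}} = n^{O(k^{2/3})}$; and (ii) a component-blueprint $\cbp = (\totall, \totalr, \optl, \optr, \cE)$, where by \Cref{obs:3types} we have $|\types| = O(k^{2/3})$, each of $\totall, \totalr, \optl, \optr$ assigns to each type an integer of magnitude at most $\mathrm{poly}(n)$, and $\cE$ is a subset of $\cC$ of size at most $k^{2/3}$. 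This yields $n^{O(k^{2/3})} \cdot \mathrm{poly}(n)^{O(k^{2/3})} = n^{O(k^{2/3})}$ blueprints. The subsequent $\cbp$-compliant $\cguess$ is then found by a polynomial-time knapsack, contributing no additional guess factor.

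Multiplying the contributions of all steps gives a product of the form $n^{O(\sqrt{k})} \cdot n^{O(k^{2/3})} \cdot 2^{O(\sqrt{k}\log k)} \cdot k^{O(1)} = n^{O(k^{2/3})}$, which establishes the claimed bound. There is no substantive obstacle; the only subtle verification is the inequality $\log k^\star = O(k^{2/3})$ inside the recursive regime, which follows immediately from the invariant $k > c\sqrt{k^\star}$ that governs when the recursive case is entered (yielding $k^\star = O(k^2)$ and hence $\log k^\star = O(\log k) \le k^{2/3}$ for $k$ larger than an absolute constant, a situation we may assume by folding small cases into the base case of the algorithm).
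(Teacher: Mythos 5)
Your proposal is correct and takes essentially the same route as the paper: the paper has no separate proof block for this lemma and simply says ``From the discussion above, the following lemma follows,'' so the proof content is exactly the step-by-step accounting in Step 2 that you reproduce, with the dominant $n^{\Oh(k^{2/3})}$ factors coming from guessing $\exceptional$ and the component-blueprint $\cbp$ in the $(\SB,\DT)$/$(\DB,\ST)$ branches of substep E, and with $2^{\Oh(\sqrt{k}\log k)}$ from step G absorbed via $\sqrt{k}\log k = \Oh(k^{2/3})$. (A small side note: in the $\predef(v)$ substep the paper states $k = \Omega((k^\star)^{2/3})$, whereas the invariant that actually holds in the recursive regime is $k > c\sqrt{k^\star}$ as you use; this makes no difference, since either way $\log k^\star = \Oh(\log k) = \Oh(k^{2/3})$.)
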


In the subsequent steps, we assume that we are working with one such fixed guess for the separator $\sep$ and its associated objects, labels of vertices, partial orders $\pi' = (\pi'_1, \pi'_2, \pi'_3)$, and the smaller parameters. 

\subsection{Step 3: Creating sub-instances.} \label{subsec:3creating}

For each set of guesses made in the previous step (as stated in \Cref{lem:3guessbound}), we create three sub-instances (in fact, extended instances), (i) a left sub-instance $\cI^\lf$, and (ii) a right sub-instance $\cI^\rt$, and (iii) a middle sub-instance $\cI^\mm$. Among these, $\cI^\lf$ and $\cI^\rt$ are instances of \probThreeCr, whereas $\cI^\mm$ is an instance of \probTwoCr. In this step (Step 3), we describe the construction of $\cI^\lf$ and $\cI^\rt$, and in the next step (Step 4) we will recursively call our algorithm to find a drawing for each of the two sub-instances; or to conclude that either or both of them are {\sc No}-instances, in which case we terminate the guess. Then, in Step 5, we will use the drawings of $\cI^\lf, \cI^\rt$ found by the recursive algorithm to construct the middle sub-instance $\cI^\mm$ of \probTwoCr, which will be solved using the algorithm of \Cref{thm:2layertheorem}. 

Let $V^{\lf}, V^{\mm}, V^{\rt}$ denote the set of vertices that have received labels $\lf, \mm, \rt$, respectively -- note that, if a vertex has received multiple labels, then it belongs to all the respective subsets. Now, we proceed to the description of $\cI^\lf, \cI^\rt$. 

\subsubsection*{Creating Left Sub-instance}

\begin{description}[leftmargin=2pt]
	\item[$\blacktriangleright$ Base graph.] Let $G'^{\lf} = G[V^{\lf}]$, with its edge-set partitioned into regular and multi-edges, denoted by $\regulare^\lf$ and $\weightede^\lf$, respectively. 
	\item[$\blacktriangleright$ Boundary edges.] Let $\boundary^{\lf} \coloneqq \LR{\leftbone^\lf, \rightbone^\lf, \rightbone^\lf, \rightbtwo^\lf},$ where $\leftbone^\lf \coloneqq \leftbone, \leftbtwo \coloneqq \leftbtwo, \rightbone^\lf \coloneqq \lsepone, \rightbtwo^\lf \coloneqq \lseptwo$.
	\item[$\blacktriangleright$ Extended Boundary.] Let $\lbnd^\lf \coloneqq \lbnd \cap V^\lf$, $\rbnd^\lf \coloneqq \gapleft \setminus \lbnd^\lf$. Then, $\bnd^\lf = \lbnd^\lf \cup \rbnd^\lf$.
	\item[$\blacktriangleright$ Multi-edges] Next, we add a few more multi-edges to the set $\weightede^\lf$, as follows:
	\\For each (multi-)edge $uv \in \crosse$ with at least one endpoint in $V^{\lf}$, we identify a multi-edge $e_{add}(uv)$ to be added in each case. After the case analysis, we will discuss the multiplicity of this edge, and actually add it to the graph.
	\begin{itemize}[leftmargin=15pt]
		\item $uv \in E_{12}$, where $u \in V_1$ has a label $\lf$ and $v \in V_2$ has label $\mm$ or $\rt$. 
		\\Note that, since $uv \in \crosse$, the vertex $v$ cannot belong to $\gapleft$ and hence cannot have a label $\lf$ -- this is ensured while guessing the labels. 
		\\\textbf{Operation.} $e_{add}(uv) \coloneqq (u, \ept(\lsepone, 2))$.
		\item $vu \in E_{12}$, where $u \in V_2$ has label $\lf$ and $v \in V_1$ has label $\mm$ or $\rt$.
		\\\textbf{Operation.} $e_{add}(uv) \coloneqq (u, \ept(\lsepone, 1))$ to $\weightede^\lf$. 
		\item $vu \in E_{23}$, where $u \in V_3$ has label $\lf$ and $v \in V_2$ has label $\mm$ or $\rt$. 
		\\Note that the vertex $v$ may belong to $\gapleft \subseteq V^{\lf}$, and $u$ cannot be equal to $\ept(\lseptwo, 3)$.
		\\\textbf{Operation.} If $v \in \gapleft \cup \LR{\ept(\lsepone, 2)}$, then we also delete the edge $vu$ from $E(G^{\lf})$. 
		$e_{add} \coloneqq (\ept(\lseptwo, 2), u)$.  
		\item $uv \in E_{23}$, where $u \in V_2$ has label $\lf$ and $v \in V_3$ has label $\mm$ or $\rt$.
		\\Note that, since $uv \in \crosse$, the vertex $u$ cannot belong to $\gapleft$, and hence cannot have a label $\lf$ -- this is ensured while guessing the labels.
		\\\textbf{Operation.} $e_{add}(uv) \coloneqq (u, \ept(\lseptwo, 3))$.
	\end{itemize}
	Now, we proceed as follows:
	\begin{itemize}
		\item If an edge is defined as $e_{add}$ of multiple edges, say $e_1, e_2, \ldots$, then its multiplicity is defined as the sum of the respective multiplicities of $e_1, e_2, \ldots$ -- here we adopt the convention that, if an edge $e_i$ belongs to $\regulare$, then its multiplicity is $1$.
		\item Finally, we add $e_{add}$ with the combined multiplicity to $\weightede^\lf$. Here, if $e_{add}$ is already in $\regulare$, we do not remove it.
	\end{itemize}
	Let $G^\lf$ denote the resulting graph after adding the edges in this manner. 
	\item[$\blacktriangleright$ Vertex set.] For each $i \in [3]$, let $V^\lf_i \coloneqq V^\lf \cap V_i$, and let $\pi^\lf_i$ denote $\pi'_i$ projected to the vertices of $V^\lf$. 
	\item[$\blacktriangleright$ Partial order] Let $\pi^\lf = (\pi^\lf_1, \pi^\lf_2, \pi^\lf_3)$. Observe that $(\pi^\lf)^{\lambda+2}_2$ contains an ordering of $\bnd^\lf$ that satisfies the required properties.
	\item[$\blacktriangleright$ Final instance.] Let $\cI^\lf = (G^\lf, V^\lf_1, V^\lf_2, V^\lf_3, \bnd^\lf, \pi^\lf, k^\lf_1, k^\lf_2, k^\lf)$.
\end{description}
At this point, we check whether in the graph $G^\lf$, each vertex of $V(G^\lf)$ is reachable from some vertex of $\bnd^\lf \cup \epts(\boundary)$ in $G^\lf$. If not, we terminate the guess. Otherwise, we continue to the next step.

At this point, we note the following observations akin to the two-layer algorithm.
\begin{itemize}
	\item As required in \Cref{item:3partialorder}, the total number of partial orders in each $\pi_i$ remains bounded by $\Oh(\log k^\star)$, since parameter reduces by at least a constant factor in each step. 
	\item Note that $|\lbnd^\lf| \le |\lbnd| \le \alpha \sqrt{k^\star}$, and $|\lbnd^\rt| \le |\gapleft| \le 3\sqrt{k}$, and analogous bounds hold for $|\rbnd^\rt|,|\lbnd^\rt|$, respectively. Thus, $\cI^\lf, \cI^\rt$ satisfy \Cref{item:3extdconn}.
	\item Using a similar argument as in two layer case, it follows that the total sum of multiplicities of \weightede incident to each vertex in $\epts(\sep)$ remains bounded by $2\sqrt{k^\star}$, satisfying \Cref{item:3multiplicity-sum}
\end{itemize}

\subsubsection*{Creating Right Sub-instance}
The right sub-instance $\cI^\rt = (G^\rt, V^\rt_1, V^\rt_2, V^\rt_3, \bnd^\rt, \pi^\rt, k^\rt_1, k^\rt_2, k^\rt)$ is created analogously, and we omit the description. Again, we check whether there exists some vertex of $V(G^\rt)$ that is not reachable from some vertex of $\epts(\boundary^\rt) \cup \bnd^\rt$ in $G^\rt$, and terminate the guess. Otherwise, we proceed to the next step.

From the preceding constructions of $\cI^\lf, \cI^\rt$---assuming that the guesses are not terminated---it is straightforward to verify that they satisfy the properties of extended instances from \Cref{def:extendedinstance3}. Thus, we have the following claim, whose proof is omitted.

\begin{claim} \label{cl:3validlr}
	$\cI^\lf$ and $\cI^\rt$ are valid extended instances of \probThreeCr.
\end{claim}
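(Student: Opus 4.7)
The plan is to verify, property by property, that each of the six conditions imposed by \Cref{def:extendedinstance3} holds for $\cI^\lf$; the verification for $\cI^\rt$ is symmetric. Since the construction in Step 3 is driven by the guessed labeling and partial order extension from Step 2, most conditions follow almost immediately from inspection, while two conditions (reachability and the multiplicity invariant) need slightly more care. I would organize the proof as a single sequence of four short checks, then note that the symmetric construction gives the same guarantees for $\cI^\rt$.

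First I would check the vertex/edge partition and boundary structure. By construction $V(G^\lf) = V_1^\lf \uplus V_2^\lf \uplus V_3^\lf$, and $E(G^\lf) = \regulare^\lf \uplus \weightede^\lf \uplus \boundary^\lf$ where $\boundary^\lf = \{\leftbone, \rightbone^\lf = \lsepone, \leftbtwo, \rightbtwo^\lf = \lseptwo\}$ consists of edges between consecutive layers, as guaranteed by the choice of $\lsepone \in E_{12}$ and $\lseptwo \in E_{23}$. For the extended-boundary requirement of \Cref{item:3extdconn}, the fact that $\ept(\leftbone,2),\ept(\leftbtwo,2) \in \lbnd^\lf$ and $\ept(\lsepone,2),\ept(\lseptwo,2) \in \rbnd^\lf$ is immediate from the definitions $\lbnd^\lf = \lbnd \cap V^\lf$ and $\rbnd^\lf = \gapleft \setminus \lbnd^\lf$ together with the labeling in Step B. The size bound $|\bnd^\lf| \le 3\sqrt{k^\star}$ follows from $|\lbnd| \le 3\sqrt{k^\star}$ and $|\gapleft| \le 3\sqrt{k} \le 3\sqrt{k^\star}$, as noted in the running remarks after the construction.

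Next I would verify reachability in \Cref{item:3extdconn}, which is handled directly by the algorithm: Step 3 explicitly terminates any guess for which some vertex of $V(G^\lf)$ is not reachable from $\bnd^\lf \cup \epts(\boundary^\lf)$ in $G^\lf$. Thus, conditional on the guess not being terminated, the reachability property holds by construction. For the partial-order requirement of \Cref{item:3partialorder}, $\pi^\lf_i$ is defined as the projection of $\pi'_i$, and $\pi'$ augments the inherited $\pi$ by at most two additional total orders ($\pi^{\lambda+1}$ and $\pi^{\lambda+2}_2$). Since the recursion halves the parameter at each level and bottoms out at $k \le c\sqrt{k^\star}$, the depth is $O(\log k^\star)$, and starting from $\lambda = 0$ in the original instance this yields $\lambda \le 4\log k^\star$ at every call. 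The special ordering of $\pi^{\lambda+2}_2$ on $\bnd^\lf$ required by \Cref{item:3partialorder} is guaranteed by the construction of $\pi^{\lambda+2}$ in Step G, which orders $\bnd \cup \epts(\sep,2) \cup \gapleft \cup \gapright$ consistently with the positions of the new boundary edges $\leftbone^\lf, \leftbtwo^\lf, \rightbone^\lf, \rightbtwo^\lf$.

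The remaining (and slightly more delicate) checks concern the multi-edge bookkeeping in \Cref{item:3multiplicity-sum}. By the case analysis in Step 3, every multi-edge added to $\weightede^\lf$ has the form $e_{add}(uv)$ for some $uv \in \crosse$, and is incident to one of $\ept(\lsepone,1), \ept(\lsepone,2), \ept(\lseptwo,2), \ept(\lseptwo,3)$, all of which lie in $\epts(\boundary^\lf)$; moreover its other endpoint is never in $\epts(\boundary^\lf)$, so each such multi-edge is incident to exactly one boundary vertex. For the multiplicity-sum bound, the total multiplicity of multi-edges newly added at a single boundary vertex is at most the number of crossings involving the corresponding separator edge, i.e.\ at most $\sqrt{k}$; summing geometrically over the recursion (since $k$ halves) gives total mass at most $\sum_{\ell\ge 0}\sqrt{k^\star/2^\ell} \le 2\sqrt{k^\star}$, which is the invariant needed. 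Finally, $k^\lf = k^\lf_{12} + k^\lf_{23}$ is enforced by the guess in Step H. The main obstacle in fleshing this out rigorously is precisely the multiplicity invariant, as it is the only condition whose verification is not purely local to the current call; the remaining conditions reduce to direct inspection of the construction.
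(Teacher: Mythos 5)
Your proposal takes the same route as the paper: the paper presents the claim with the proof omitted, relegating its substance to a short list of observations immediately preceding the claim (the $\Oh(\log k^\star)$ depth bound for $\lambda$, the size bounds on $\lbnd^\lf$ and $\rbnd^\lf$, and the geometric-series argument for the multiplicity sum). Your checklist is a fleshed-out version of exactly those observations, so in spirit and structure the two agree.

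One factual slip worth flagging: you write that "the recursion halves the parameter at each level," but for the three-layer algorithm the guaranteed drop is only to $3k/4$ (\Cref{obs:3balancedseparator}), not $k/2$. This does not break the $O(\log k^\star)$ recursion-depth conclusion, but it does affect the constants: the geometric series $\sum_{\ell\ge 0}\sqrt{k^\star (3/4)^\ell}$ converges to roughly $7.5\sqrt{k^\star}$ rather than your stated $2\sqrt{k^\star}$, and the depth-based bound on $\lambda$ is correspondingly looser. The paper commits the same imprecision in its own inline observations, so this is a discrepancy between the paper's stated bounds and its actual construction rather than a gap in your reasoning specifically; but if you were to write this out as a rigorous proof, you would want either a sharper per-level accounting (e.g.\ noting that a fixed vertex typically receives mass at only a few recursion levels, not all of them) or a loosening of the numeric constants in \Cref{def:extendedinstance3} from $3\sqrt{k^\star}$, $4\log k^\star$, and $2\sqrt{k^\star}$ to larger constants chosen to absorb the true geometric ratio. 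With that caveat, the proposal is sound and aligned with the paper.
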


\subsection{Step 4: Conquer left and right.} \label{subsec:3leftright}

For each guess made in Step 2, we have defined a pair of sub-instances $\cI^\lf$ and $\cI^\rt$ in Step 3, with each having parameter at most $3k/4$. For each such pair of instances, which we solve recursively in time $T_3(k^\lf) + T_3(k^\rt) \le 2 T_3(3k/4) \le 2 \cdot n^{\Oh((3k/4)^{2/3})}$. There are two possibilities. (i) At least one of the recursive calls returns that the corresponding sub-instance is a \textsc{No}-instance, in which case we proceed to the next guess. (ii) Otherwise, if both of the recursive calls return that $\cI^{\lf}$ and $\cI^{\rt}$ are \textsc{Yes}-instances along with the respective drawings $\sigma^\lf, \sigma^\rt$, then we proceed to the next step. 

\subsection{Step 5: Conquer Middle.} \label{subsec:3middle}

For each $i \in [3]$, let $M_i \coloneqq V_i \cap V^\mm \cap S$, where $S = \epts(\sep) \cup \epts(\mde) \cup \epts(\crosse) \cup \gapleft \cup \gapright$ as defined in step G. Note that $M_i$ is a subset of the set $S_i$ defined in Step H, for which we guess the permutation $\pi^{\lambda+1}_i$. 
%

Note that, the relative order among the vertices of $M_i$ for each $i \in [3]$ is already determined by $\pi^{\lambda+1}_i$. In particular, note that that $M_1 = V^\mm \cap V_1$, and $\pi^{\lambda+1}_1$ completely determines the ordering of $M_1$. Now we are left with determining the order of vertices in $V^\mm \cap V_2$ and $V^\mm \cap V_3$ that respects the partial order $\pi'$. To this end, we define an instance of \probTwoCr, as follows. 

\paragraph{Defining a two-layer instance.} 

Note that the base graph for the instance of \probTwoCr will be defined on the vertex subsets $V^\mm \cap V_2$ and $V^\mm \cap V_3$, respectively. Later, we will rename them to first and second layer, respectively.

\begin{description}[leftmargin=2pt]
	\item[$\blacktriangleright$ Base graph.] Let $G'^{\mm} = G[V^\mm \cap (V_2 \cap V_3)]$, with its edge-set partitioned into regular and multi-edges, denoted by $\regulare^\mm$ and $\weightede^\mm$, respectively. 
	\item[$\blacktriangleright$ Extended boundary.] Let $\lbnd^\mm \coloneqq \gapleft \cup \ept(\lsepone, 2) \cup \ept(\lseptwo, 2)$, and $\rbnd^\mm \coloneqq \gapright \cup \ept(\rsepone, 2) \cup \ept(\rseptwo, 2)$. Then, $\bnd^\mm \coloneqq \lbnd^\mm \cup \rbnd^\mm$.
	\item[$\blacktriangleright$ Boundary edges.] Let $\boundary^{\mm} \coloneqq \LR{\leftb^\mm, \rightb^\mm},$ where $\leftb^\mm \coloneqq \lseptwo, \rightb^\lf \coloneqq \rseptwo$.
	\item[$\blacktriangleright$ Multi-edges.] For each (multi-)edge $uv \in \crosse \cap E_{23}$ with at least one endpoint in $V^{\mm}$, we add a multi-edge to the graph $G^{\mm}$, based on the following cases.
	\begin{itemize}[leftmargin=15pt]
		\item $vu \in E_{23}$, where $u \in V_3$ has label $\lf$ and $v \in V_2$ has label $\mm$ or $\rt$. 
		\\Note that the vertex $v$ may belong to $\gapleft \subseteq V^{\lf}$, and $u$ cannot be equal to $\ept(\lseptwo, 3)$.
		\\\textbf{Operation.} First, if $v \in \gapleft \cup \LR{\ept(\lsepone, 2)}$, then we also delete the edge $vu$ from $E(G^{\lf})$. We add a multi-edge $(\ept(\lseptwo, 2), u)$ to $\weightede^\mm$.  
		\item $uv \in E_{23}$, where $u \in V_2$ has label $\lf$ and $v \in V_3$ has label $\mm$ or $\rt$.
		\\Note that, since $uv \in \crosse$, the vertex $u$ cannot belong to $\gapleft$, and hence cannot have a label $\lf$ -- this is ensured while guessing the labels.
		\\\textbf{Operation.} We add a new multi-edge $(u, \ept(\lseptwo, 3))$ to $\weightede^\mm$.
	\end{itemize}
	For each edge considered in each of the preceding cases, if the original edge $uv \in \weightede$, then the newly added edge gets multiplicity equal to $\mu(uv)$; otherwise, it gets multiplicity equal to $1$.
	\\Let $G^\mm$ denote the resulting graph after adding the edges in this manner.
	\item[$\blacktriangleright$ Vertex set and partial Order.] Let $W_1 \coloneqq V^\mm \cap V_2$ and $W_2 \coloneqq V^\mm \cap V_3$, and let $\pi^\mm_1 = \pi'_2$ projected to the vertices of $W_1$, and $\pi^\mm_2 = \pi'_3$ projected to the vertices of $W_2$. Let $\pi^\mm = (\pi^\mm_1, \pi^\mm_2)$. Note that, $\gapleft$ (resp.~$\gapright$) must contain all the vertices of $V^{conn}_2$ that are between $\ept(\lseptwo, 2)$ and $\ept(\lsepone, 2)$ (resp.~$\ept(\rsepone, 2)$ and $\ept(\rseptwo, 2)$).  Thus, if $(\pi^\mm)^{\lambda+2}_1$, which is $\pi^{\lambda+2}_2$ projected to $V_2$ if violates this condition, then we terminate the guess.
	\item[$\blacktriangleright$ Final instance.] Let $\cI^\mm = (G^\mm, W^\mm_1, V^\mm_2, \pi^\mm, k^\mm)$.
\end{description}
Finally, we check whether each vertex of $V(G^\mm)$ is reachable from some vertex of $\bnd^\mm \cup \epts(\boundary^\mm)$ in the constructed graph $G^\mm$. If not, we terminate the guess. Otherwise, it can be verified that the constructed instance satisfies all the properties of an elaborate extended instance of \probTwoCr from \Cref{def:extendedinstance3}, which we state in the following claim.

\begin{claim} \label{cl:3validmiddle}
	$\cI^\mm$ is a valid elaborate extended instance of \probTwoCr.
\end{claim}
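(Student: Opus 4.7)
The plan is to verify, in turn, each of the properties required of an elaborate extended instance of \probTwoCr as listed in \Cref{def:extendedinstance2}. First I would address the structural properties: by construction, $W_1 = V^\mm \cap V_2$ and $W_2 = V^\mm \cap V_3$, and every edge of $G^\mm$ is either an edge of $G$ restricted to $V^\mm \cap (V_2 \cup V_3)$ (inheriting the bipartite structure between $V_2$ and $V_3$), or a multi-edge added between some vertex in $V^\mm \cap V_2$ and some vertex in $V^\mm \cap V_3$; hence $E(G^\mm)$ is properly bipartite between $W_1$ and $W_2$. The edge set $E(G^\mm)$ is partitioned into $\regulare^\mm$, $\weightede^\mm$, and $\boundary^\mm = \{\leftb^\mm, \rightb^\mm\} = \{\lseptwo, \rseptwo\}$, where $\lseptwo, \rseptwo \in E_{23}$ so their endpoints lie in $W_1$ and $W_2$ respectively.

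Next I would verify the conditions on $\bnd^\mm$. By definition, $\lbnd^\mm \cup \rbnd^\mm \subseteq V_2 = W_1$, and $\ept(\leftb^\mm, 1) = \ept(\lseptwo, 2) \in \lbnd^\mm$ and $\ept(\rightb^\mm, 1) = \ept(\rseptwo, 2) \in \rbnd^\mm$ by construction. The size bound $|\bnd^\mm| \le 3\sqrt{k^\star}$ follows because $|\gapleft|, |\gapright| \le 3\sqrt{k} \le 3\sqrt{k^\star}$ by \Cref{def:3separator}. The reachability condition in \Cref{item:2extdconn} is enforced explicitly at the end of Step~5: if some vertex of $V(G^\mm)$ is not reachable from $\bnd^\mm \cup \epts(\boundary^\mm)$, the guess is terminated, so we may assume this condition holds.

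For the multi-edge properties, I would observe that every multi-edge added to $\weightede^\mm$ is incident to exactly one of $\ept(\lseptwo, 2), \ept(\lseptwo, 3), \ept(\rseptwo, 2), \ept(\rseptwo, 3)$---that is, to exactly one vertex of $\epts(\boundary^\mm)$---by the case analysis in the construction of $\weightede^\mm$. Further, the multi-edges in $\weightede^\mm$ arise from edges of $\crosse \cap E_{23}$, whose total multiplicity incident to any endpoint of $\lseptwo$ or $\rseptwo$ is at most $\sqrt{k}$ (since these are blue edges of $\sep$). Inductively on the recursion depth, and using that the parameter decreases by at least a constant factor from $\cI$ to $\cI^\mm$ (so that the geometric series bounding accumulated multiplicities converges), this sum stays below $2\sqrt{k^\star}$, verifying \Cref{item:2multiplicity-sum}.

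Finally, for the partial order $\pi^\mm = (\pi^\mm_1, \pi^\mm_2)$, I would show that $\pi^\mm_1$ is a union of at most $\lambda + 2$ total orders (the projections of $\pi_2^1, \dots, \pi_2^\lambda, \pi^{\lambda+1}_2, \pi^{\lambda+2}_2$) and $\pi^\mm_2$ of at most $\lambda + 1$ such orders, and argue by induction that this count remains below $4\log(k^\star)$; and that the special order over $\bnd^\mm$ required by \Cref{item:2partialorder} is supplied by $(\pi^\mm)^{\lambda+2}_1$, which arranges $\ept(\lseptwo, 2)$, the rest of $\lbnd^\mm$, then the rest of $\rbnd^\mm$, then $\ept(\rseptwo, 2)$, exactly as prescribed---here, the guess is terminated whenever the projection fails to place the $\gapleft$/$\gapright$ vertices within the appropriate intervals, so what remains is consistent with the required structure. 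The main obstacle I anticipate is not any single property but rather the careful bookkeeping needed to see that the \emph{induction} across recursive calls preserves both the $4\log(k^\star)$ bound on the number of total orders and the $2\sqrt{k^\star}$ bound on accumulated multiplicities; these are invariants of the whole algorithm rather than local properties of the construction, and require the observation that the parameter strictly decreases geometrically on each recursive call to the middle sub-instance.
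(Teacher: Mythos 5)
The paper does not give a proof of this claim; it simply asserts that ``it can be verified,'' so there is no direct comparison available. Your plan of methodically checking each property of \Cref{def:extendedinstance2} is the right one, and most of your verification (bipartite structure, endpoints of the boundary edges lying in $\lbnd^\mm$ and $\rbnd^\mm$, reachability being enforced by the termination check, the source and accumulated multiplicity of the weighted edges, and the form of the partial orders) reads correctly.

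However, your argument for the size bound $|\bnd^\mm|\le 3\sqrt{k^\star}$ does not go through. From the construction, $\bnd^\mm = \gapleft \cup \gapright \cup \{\ept(\lsepone,2),\ept(\lseptwo,2),\ept(\rsepone,2),\ept(\rseptwo,2)\}$, and from \Cref{def:3separator} we only have $|\gapleft|,|\gapright|\le 3\sqrt{k}$, so $|\bnd^\mm| \le 6\sqrt{k}+4$. Your reasoning ``$|\gapleft|,|\gapright|\le 3\sqrt{k}\le 3\sqrt{k^\star}$'' does not imply $|\gapleft|+|\gapright|+4\le 3\sqrt{k^\star}$; indeed at the top of the recursion $k$ can equal $k^\star$, in which case the constructed $\bnd^\mm$ provably violates the stated bound. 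The same looseness already appears for $\bnd^\lf$ and $\bnd^\rt$ in the paper's own commentary following Step~3 for the three-layer case, so this is a defect in the paper's bookkeeping that your proof inherits but does not resolve: you would need to either enlarge the constant in \Cref{item:2extdconn} (e.g.\ to $10\sqrt{k^\star}$, which is harmless for the running time) or give a tighter accounting that is not present in the paper. As written, stating that the bound ``follows'' from $|\gapleft|,|\gapright|\le 3\sqrt{k^\star}$ is a genuine gap. You should also note, as a smaller point, that disjointness of $\lbnd^\mm$ and $\rbnd^\mm$ (required by $\lbnd\uplus\rbnd$ in the definition) is not verified: in the $\SB$ or $\ST$ cases some of the four separator endpoints coincide, and one must check that the construction and the choice of $\md_1,\md_2$ prevent $\lbnd^\mm$ and $\rbnd^\mm$ from overlapping.
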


Given this claim, we call the recursive algorithm of \Cref{thm:2layertheorem} (without kernelizing) to solve it in time $T_2(n, k) \le n^{\Oh(\sqrt{k})}$. Either, the algorithm returns that $\cI^\mm$ is a {\sc No}-instance, or it returns a drawing $\sigma^\mm$. In the former case, we terminate the guess. Otherwise, we proceed to the following final step.


\subsection{Step 6: Obtaining the Final Drawing.} \label{subsubsec:3finaldrawing}

If, for any of the non-terminated guesses, all three recursive calls reporting that the respective sub-instances are  \textsc{Yes}-instances, along with the corresponding orderings, then we can combine these orderings to obtain orderings of $V_1, V_2$ that has at most $k$ crossings, and respect the given partial order $\pi$. 

Specifically, suppose $\sigma^\lf, \sigma^\mm, \sigma^\rt$ be the drawings of $\cI^\lf, \cI^\rt$ returned by the corresponding recursive calls. Then, we obtain the final drawing of $\cI$ by defining $\so_i \coloneqq \sigma_i^\lf \circ \sigma_i^\mm \circ \sigma_i^\rt$ for $i \in [3]$. We return $\so = (\so_1, \so_2, \so_3)$ as our final drawing. In the following lemma, we show that the combined drawing is a valid drawing of $\cI$ with $k$ crossings.

\begin{lemma} \label{lem:3drawingcomb}
	Let $\cI^\lf, \cI^\rt, \cI^\mm$ be the three sub-instances defined in Steps 3 and 5 (with respective parameters $k^\lf, k^\rt, k^\mm$) corresponding to some guess made in Step 2. Further, let  $\sigma^\lf, \sigma^\rt, \sigma^\mm$ be valid drawings of the $\cI^\lf, \cI^\rt, \sigma^\mm$. Then, if we obtain a combined drawing $\sigma'$ as in Step 6, then $\sigma'$ is a valid drawing of $\cI$.
\end{lemma}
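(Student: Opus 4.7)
The plan is to mirror the structure of \Cref{lem:2drawingcomb}, but with an additional layer of bookkeeping for the middle sub-instance. First, I would verify that $\sigma' = (\so_1, \so_2, \so_3)$ satisfies every structural condition required of a drawing of $\cI$ by \Cref{def:3extdrawing}. The three orderings are obtained by concatenating the respective orderings from $\sigma^\lf$, $\sigma^\mm$, $\sigma^\rt$ (for $V_2$, using also $\gapleft$ and $\gapright$ which belong to $\lbnd^\mm,\rbnd^\mm$). Compatibility with the partial order $\pi$ follows because each sub-instance was equipped with the restriction of the common extended partial order $\pi'$ (defined from $\pi$ together with the guessed $\pi^{\lambda+1}$ and $\pi^{\lambda+2}$), and because the three blocks are glued in the order dictated by $\pi^{\lambda+1}$. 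The positional constraints on $\epts(\boundary)$ reduce to verifying that $\leftbone, \leftbtwo$ come from $\cI^\lf$, $\rightbone,\rightbtwo$ come from $\cI^\rt$, and the middle piece lies strictly between them; this is exactly how the boundary edges were chosen in Steps 3 and 5.

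Next, I would bound the number of crossings in $\sigma'$ by partitioning them. Let me classify every pair of crossing edges $(e_a,e_b)$ of $\sigma'$ into one of three groups: (i) both $e_a,e_b$ lie entirely in the same region (left, middle, right) and neither belongs to $\sep \cup \mde \cup \crosse$; (ii) exactly one of them belongs to $\sep \cup \mde \cup \crosse$, or the two edges lie in different regions; (iii) both $e_a, e_b \in \sep \cup \mde \cup \crosse$. Group (i) crossings occur inside a single sub-instance and are counted by $k^\lf$, $k^\mm$, or $k^\rt$ respectively, because $\sigma'$ restricted to each region equals the drawing returned by the recursive call on that sub-instance. Group (iii) crossings are entirely determined by the guessed permutation $\pi^{\lambda+1}$, hence are counted in $k_{\sf current} + k_{\sf special}$ (with the split between the two depending on whether both edges cross $\lseptwo$ and $\rseptwo$ respectively); no such crossing can be counted inside any sub-instance, because in each sub-instance an edge of $\crosse$ is replaced by a multi-edge incident to exactly one endpoint of the adjacent separator edge, and two such replacements never meet inside the same sub-instance.

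The main obstacle is Group (ii), which is where the multi-edge bookkeeping of Steps 3 and 5 becomes essential. For a crossing between $e \in \crs(\lsepone)$ (with labels, say, $(\lf,\mm)$) and a purely internal edge $f$ of, say, the left sub-instance, I would argue that the multi-edge $e_{add}(e) = (u,\ept(\lsepone,2))$ that replaces $e$ in $\cI^\lf$ is crossed by $f$ in $\sigma^\lf$ exactly when $e$ is crossed by $f$ in $\sigma'$. This is because the ordering of the $V_2$-endpoint $\ept(\lsepone,2)$ relative to all other $V_2$-vertices of $V^\lf$ agrees with the ordering of the true $V_2$-endpoint of $e$ (which sits in the middle region just past $\ept(\lsepone,2)$); the multiplicity of $e_{add}(e)$ preserves the count when $e$ is itself a multi-edge; and repeated replacements sum their multiplicities so the total weight of crossings of $f$ with $e_{add}(\cdot)$ equals the number of original edges of $\crosse$ crossing $f$. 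A completely symmetric argument covers the right and middle sub-instances, and the deletion step for edges incident to $\gapleft \cup \gapright$ ensures that such edges are not double-represented.

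Finally, I would verify that no crossing is over-counted by examining the one subtle case created by \Cref{def:3extdrawing}\Cref{item:drawing}: if two multi-edges in a sub-instance sit on different endpoints of the same boundary edge, their forced crossing is counted as zero. The plan is to check that this convention exactly offsets the crossings that the original edges \emph{would} have produced in $\sigma'$ but that are structurally forced by the choice of separator (so they are already accounted for at the moment the boundary edge was installed, not within the sub-instance). In particular, the quantity $k_{\sf special}$ is introduced precisely to absorb the ``cross-boundary'' crossings between edges in $\crs(\lseptwo)$ and $\crs(\rseptwo)$ whose multi-edge counterparts become incident to different sides of the middle sub-instance's boundary and therefore contribute $0$ in $\sigma^\mm$. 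Summing: the total number of crossings in $\sigma'$ equals $k^\lf + k^\mm + k^\rt + k_{\sf current} = k' + k_{\sf current} = k$, which is the desired bound. I expect the bulk of the technical work (and the main place where a careful case analysis is required) to be in this last paragraph, since one must check every combination of labels that produces a replacement multi-edge and verify the crossing-count is preserved weight-for-weight.
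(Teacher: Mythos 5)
Your overall strategy — structural verification, then partitioning the crossings of $\sigma'$ into intra-region, cross-region, and separator-adjacent groups, then verifying weight-preservation for the multi-edges — is the same route the paper takes (the paper spells out only the delta over \Cref{lem:2drawingcomb}, but the underlying accounting is the same). However, your fourth paragraph misidentifies the role of $k_{\sf special}$, and this makes your crossing count not close. You claim that a special crossing between $e_1 \in \crs(\lseptwo)$ and $e_2 \in \crs(\rseptwo)$ becomes a crossing between multi-edges on ``different sides of the middle sub-instance's boundary'' and is therefore zero-counted by the convention in \Cref{def:2extdrawing}. But that convention zero-counts only crossings between multi-edges incident to \emph{different endpoints of the same boundary edge}. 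Here the replacement for $e_1$ is incident to an endpoint of $\leftb^\mm = \lseptwo$ while the replacement for $e_2$ is incident to an endpoint of $\rightb^\mm = \rseptwo$ — two \emph{distinct} boundary edges — so the convention does \emph{not} fire, and that multi-edge crossing is counted normally in $\sigma^\mm$ and hence charged against $k^\mm$.

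The reason Step~H excludes $k_{\sf special}$ from $k_{\sf current}$ is the opposite of what you wrote: those crossings \emph{do} reappear inside the middle sub-instance, and removing them from $k_{\sf current}$ prevents double-counting. Under your reading they would be counted zero times anywhere, which would make the true number of crossings in $\sigma'$ exceed $k^\lf + k^\mm + k^\rt + k_{\sf current}$ by $k_{\sf special}$, contradicting your final equation. Also note that your third group in the partition should then not be described as ``counted in $k_{\sf current}+k_{\sf special}$ and not inside any sub-instance'' — the $k_{\sf special}$ part \emph{is} inside a sub-instance. If you drop the zero-count claim, observe that special crossings are absorbed by $k^\mm$, and keep the remaining separator-adjacent crossings in $k_{\sf current}$, then $k_{\sf current} + (k^\lf + k^\mm + k^\rt) = k_{\sf current} + k' = k$ holds as intended, and the rest of your plan (structural verification, the weight-preservation argument for Group (ii), and the reduction to the two-layer case analysis for crossings where both edges cross the same separator edge) is sound and matches the paper.
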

\begin{proof}\ 
	Recall that, in step H we guess the permutations $\pi^{\lambda+1}_i$ of $S_i$, and we use it in step I to determine the number of crossings $k_{\sf current}$ among the edges of $\crosse \cup \mde \cup \sep$. Then, we define $k' = k - k_{\sf current}$, and divide it into $k^\lf + k^\mm + k^\rt$, which are used as parameters for the respective sub-instances. 
	
	Next, we claim that all the $k_{\sf current}$ crossings among the edges are not counted in any of the sub-instances. To this end, first observe that the edges of $\mde$ are not involved in any of the subproblems, since the instance $\cI^\mm$ is only defined over layers 2 and 3. The proof is analogous to that of \Cref{lem:2drawingcomb}, except that we need to be careful about the following case. Consider $e_1, e_2 \in \crosse$ where $e_1 \in \crs(\lseptwo)$, $e_2 \in \crs(\rseptwo)$ and suppose $e_1, e_2$ cross according to $\pi^{\lambda+1}_i$. Then, note that the crossing between $e_1, e_2$ is a special crossing as defined in Step H, which is not counted in $k_{\sf current}$. Therefore, it implies that any crossing between $e_1, e_2 \in \crosse$, that contributes towards $k_{\sf current}$ must satisfy that $e_1, e_2$ cross the same edge of $\sep$ (corresponding to case (b) in \Cref{lem:2drawingcomb}). Then the rest of the proof follows.

	Next, we note that in $\cI^\lf$, the left boundary edges ($\leftbone, \leftbtwo$) and left extended boundary ($\lbnd$) are derived from the current instance $\cI$, and thus are placed at the appropriate locations. An analogous property holds about $\cI^\rt$. Next, we note that the orderings $\pi^\lf, \pi^\mm, \pi^\rt$ is obtained by the projecting $\pi'$ (which is an extension of $\pi$) to the respective vertex subsets. Since the respective drawings are compatible with these orders, it follows that the combined drawing is compatible with $\pi'$, and hence with $\pi$. Furthermore, since $\pi'$ also contains the orderings $\pi^{\lambda+1}$, the respective drawings also satisfy the ordering on the vertices of $\gapleft \cup \gapright \cup \bnd \cup \epts(\sep, 2)$. 
	
	Finally, since $\sigma^\lf, \sigma^\mm, \sigma^\rt$ are valid drawings for $\cI^\lf, \cI^\mm, \cI^\rt$, when we obtain the combined drawing $\sigma$ by gluing the three drawings, it is straightforward to verify that $\gapleft$ and $\gapright$ are placed consecutively at appropriate locations between the vertices of $\epts(\sep, 2)$, and all of the conditions of \Cref{def:3extdrawing} are satisfied. This shows that $\sigma'$ is a valid drawing for $\cI$.

\end{proof}

Otherwise, for each guess, if either it is terminated, or at least one sub-instance is a \textsc{No}-instance, then we return that the current extended instance is a \textsc{No}-instance.

\subsection{Proof of Correctness} \label{subsec:threelayerproof}

\begin{lemma} \label{lem:3layerlemma}
	 $\cI$ is an extended \yin of \probThreeCr for parameter $k$ iff the algorithm returns a drawing $\sigma'$.
\end{lemma}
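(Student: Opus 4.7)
The plan is to proceed by induction on the parameter $k$, mirroring the structure of the proof of \Cref{lem:2layerlemma} but invoking the substantially more involved machinery developed in \Cref{subsec:3divide}. The base cases, namely when $k \leq c\sqrt{k^\star}$ or $\min\{|E_{12}|, |E_{23}|\} \leq c\sqrt{k^\star}$, are handled directly by \Cref{lem:3layerbasecase}. The reverse direction of the inductive step is immediate: if the algorithm returns a drawing $\sigma'$ at Step 6, then \Cref{lem:3drawingcomb} certifies that $\sigma'$ is a valid drawing of $\cI$ with at most $k$ crossings, so $\cI$ is a \yin.

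For the forward direction, I would fix a $k$-minimal \yin\ $\cI$ with a hypothetical drawing $\sigma$ achieving exactly $k$ crossings, and, up to swapping the roles of $V_1$ and $V_3$, assume $\cI$ is bottom-heavy. I would then apply \Cref{lem:3separatorlemma} to obtain a separator $\sep$ with its associated objects $\mde(\sep), \crosse(\sep), \gapleft, \gapright$, and consider the (non-terminated) branch in which Steps A--C guess exactly these objects together with the correct left/middle/right labeling of the endpoints of $\crosse$ induced by $\sigma$. In Step E, the classification of components from \Cref{def:3componenttypes} is combinatorial, so the guesses for $\predef(v)$ and $\updown(v)$ can be taken to match the behavior of $\sigma$; the bound in \Cref{lem:3type3components} ensures this branch is non-trivially explored. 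For pendant components, I would invoke \Cref{lem:3permutation} to produce a drawing $\sigma^{(1)}$ that is $(\sep,\sigma)$-compatible and lies in $\nicepfamily$ for the pendant-blueprint/guess found by the knapsack-style routine of Step E.

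The main obstacle, and the heart of the forward direction, is the handling of $\pecunp(v)$ in Step E. In case (\DB, \DT) I would use \Cref{lem:3pecunp-cases} to argue that the labels assigned there agree with $\sigma^{(1)}$. In case (\SB, \ST) I would combine \Cref{lem:3pecunp-cases}(3)--(4) with \Cref{lem:3diffcases} to argue that, for one of the two guessed orientations, all but $\Oh(\sqrt{k})$ components of $\pecunp(\ept(\lsepone,1))$ and $\pecunp(\ept(\lseptwo,3))$ are determined, and the remaining few can be guessed as the sets $P_1, P_2$. The truly delicate cases are (\SB, \DT) and (\DB, \ST): here I would apply \Cref{lem:3componentordering} to obtain a component-blueprint $\cbp$ guaranteed by that lemma, and note that, for the $\cbp$-compliant guess $\cguess$ produced by the 4D-knapsack routine, there exists a drawing $\sigma^{(2)} \in \nicecfamily(\cbp, \cguess, \sep, \sigma^{(1)})$ that is $(\sep, \sigma^{(1)})$-compatible; by \Cref{obs:3compatibletransitive}, $\sigma^{(2)}$ is also $(\sep, \sigma)$-compatible. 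From this point on, I would reset the hypothetical drawing to $\sigma^{(2)}$.

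With the hypothetical drawing stabilized, I would show (as in \Cref{cl:2cllabeling} for the two-layer case) that the label propagation of Step F correctly assigns $V^{\cal T} = V(\sigma^{(2)}, \sep, {\cal T})$ for each ${\cal T} \in \{\lf, \mm, \rt\}$: every unlabeled vertex $u$ is reachable from a vertex of $\bnd \cup \epts(\boundary)$ by \Cref{item:3extendedconn} of \Cref{def:extendedinstance3}, and tracing such a path until it exits $V(\sigma^{(2)},\sep,{\cal T})$ lands on either an origin set vertex or an endpoint of a $\crosse$ edge that received the correct label. I would then consider the branches of Steps G--H in which $\pi^{\lambda+1}$ matches the restriction of $\sigma^{(2)}$ to the relevant $\Oh(\sqrt{k})$-sized set $S$, and $k^\lf, k^\mm, k^\rt$ (with their bottom/top splits) equal the number of crossings of $\sigma^{(2)}$ lying strictly on the respective sides of the separator (minus $k_{\sf current}$); \Cref{obs:3balancedseparator} guarantees $k^\lf, k^\rt \leq 3k/4$. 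By the construction in Step 3, \Cref{cl:3validlr} gives that $\cI^\lf, \cI^\rt$ are valid extended instances, and the restrictions of $\sigma^{(2)}$ are $k^\lf$- and $k^\rt$-minimal drawings (otherwise one could lift a cheaper drawing via \Cref{lem:3drawingcomb} to contradict $k$-minimality of $\cI$). Inducting on $k$ yields drawings $\sigma^\lf, \sigma^\rt$. The middle sub-instance $\cI^\mm$ defined in Step 5 is, by \Cref{cl:3validmiddle}, a valid elaborate extended instance of \probTwoCr\ that admits the two-layer restriction of $\sigma^{(2)}$ as a valid drawing with $k^\mm$ crossings, so \Cref{thm:2layertheorem} returns a drawing $\sigma^\mm$. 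Finally, \Cref{lem:3drawingcomb} applied to $\sigma^\lf, \sigma^\mm, \sigma^\rt$ outputs a drawing $\sigma'$ of $\cI$, completing the forward direction.
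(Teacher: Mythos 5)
Your proposal is correct and follows essentially the same route as the paper's own proof: fix a $k$-minimal drawing $\sigma$, trace the branch that guesses the separator and its associated objects correctly, invoke \Cref{lem:3permutation} and \Cref{lem:3componentordering} to replace $\sigma$ by a $(\sep,\sigma)$-compatible drawing compatible with the algorithm's knapsack-produced guesses, establish the analogue of \Cref{cl:2cllabeling} for the label propagation, argue $k$-minimality passes to the sub-instances, and conclude by induction together with \Cref{lem:2layerlemma} for the middle instance and \Cref{lem:3drawingcomb} for gluing. The only cosmetic difference is that you make explicit the appeal to \Cref{obs:3compatibletransitive} and \Cref{obs:3balancedseparator} where the paper relies on them implicitly via renaming and the Step H bounds.
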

\begin{proof}
	In the base case when $k \le c \sqrt{k^\star}$, or $\min\LR{|E_{12}|, |E_{23}|} \le c \sqrt{k^\star}$, the correctness of the algorithm follows via \Cref{lem:3layerbasecase}. Thus, we consider the recursive case. Suppose inductively that the statement is true for all values of parameter strictly smaller than $k > c \sqrt{k^{\star}}$ (with no assumption on $\min\LR{|E_{12}|, |E_{23}|}$), and we want to show that the statement holds for parameter $k$.
	
	\textbf{Forward direction.} Suppose $\cI$ is a $k$-minimal \yin, and let $\sigma$ be a drawing of $\cI$ with $k$ crossings. Consider the guess where the following objects are correctly guessed in the algorithm:
	\begin{enumerate}[leftmargin=*]
		\item Separator $\sep$ as guaranteed by \Cref{lem:3separatorlemma}, the corresponding edge-sets $\crosse(\sep)$, $\mde(\sep)$, vertex-sets $\gapleft, \gapright$, and the relative ordering of the vertices involved in these sets,
		\item For each $v \in \epts(\sep)$, by \Cref{lem:3type3components}, $|\updown(v)| \le 8\sqrt{k}$. Then, for each component $C \in \updown(v)$, if all the vertices of $C \cap V_i$ belong to the left of $\ept(\lsep, i)$ (resp. to the right of $\ept(\rsep, i)$) for each $i \in [3]$, then consider the guess where it is labeled $\lf$ (resp.~$\rt$).
		\item For each $i \in [3]$, and $j \in [\lambda]$, the vertices $v(j, l), v(j, r) \in V^j_i$ that are immediately to the left of $\ept(\lseptwo, 2)$ and to the right of $\ept(\rseptwo, 2)$, respectively. Note that for each connected component $C \in \predef(v)$ containing a vertex $u \in V^j_i$ with $\pi^j_i(u) \le \pi^j_i(v(j, l))$ (resp.~$\pi^j_i(u) \le \pi^j_i(v(j, r))$), all the vertices of $C$ must belong to $V(\sigma, \sep, \lf)$ (resp.~$V(\sigma, \sep, \rt)$). Hence, all the vertices belonging to $\predef(v)$ also get correctly labeled.
		\item Consider the blueprint $\pbp$ guaranteed by \Cref{lem:3permutation}. Since our algorithm tries all possible pendant-blueprints, consider the guess where it considers such $\pbp$. Then, it finds a $\pguess$ that is compatible with $\pbp$. Then, \Cref{lem:3permutation} implies that for any guess $\pguess$ that is $\pbp$-compliant, and in particular the $\pguess$ that is constructed in the algorithm, there exists a drawing $\sigma' \in \nicepfamily(\pbp, \pguess, \sep, \sigma)$ that is $(\sep, \sigma)$-separator-compatible. Henceforth, we rename $\sigma' \gets \sigma$ and proceed to the next step in the analysis. 
		\item Next, if we are in (\DB, \DT) or (\ST, \SB) case, consider where we correctly guess at most $\Oh(\sqrt{k})$ components of $\pecunp(v)$ for $v \in \epts(\sep, 1) \cup \epts(\sep, 3)$ in accordance with \Cref{lem:3pecunp-cases}, along with their labels.
		\\Otherwise, in (\SB, \DT) case, consider the $\cguess$ guaranteed by \Cref{lem:3componentordering}, and again, since our algorithm tries all possible pendant-blueprints, consider the guess where it considers such $\cbp$. Then, it finds a $\cguess$ that is compatible with $\cbp$. Then, \Cref{lem:3componentordering} implies that for any guess $\cguess$ that is $\cbp$-compliant, and in particular the $\cguess$ that is constructed in the algorithm, there exists a drawing $$\sigma' \in \nicecfamily(\cbp, \cguess, \sep, \sigma)$$ that is $(\sep, \sigma)$-separator-compatible. Again, we rename $\sigma \gets \sigma'$. The other case, (\DB, \ST) is handled analogously.
	\end{enumerate}
	Let $\sigma$ be the drawing obtained at the end after appropriate redefintions as described above. 
	In the following claim, we show that the rest of the labeling process correctly labels all the vertices.
	\begin{claim} \label{cl:3cllabeling}
		Let $\sigma$ be the drawing after appropriate redfinitions as described above, and consider the guess as described above. Then, after the vertex labeling process of step H corresponding to this particular guess, we have that $V^{\cal T} = V(\sigma,\sep, {\cal T})$, for each ${\cal T} \in \LR{\lf, \mm, \rt}$. Note that the definition of the sets $V(\sigma, \sep, {\cal T})$ can be found in \Cref{def:3division}.
	\end{claim}
	\begin{proof}[Proof \Cref{cl:3cllabeling}] 
		Let us consider a vertex $u \in V(\sigma, \sep, \lf)$. By \Cref{item:3extendedconn} of \Cref{def:extendedinstance3}, $u$ is reachable from some vertex $v \in \bnd \cup \epts(\boundary)$, say via a path $P(uv)$. We consider different cases based on $v$ and a path $P(u  v)$. Let $P(uv) = \langle u = w_0, w_1, w_2, \ldots, w_\ell = v \rangle$. Let $w_{i+1}$ be the first vertex (if any) along $P(u  v)$, such $w_0, w_1, \ldots, w_i \in V(\sigma, \sep, \lf)$, and $w_{i+1} \not\in V(\sigma, \sep, \lf)$. Otherwise, let $w_i = w_\ell = v$. We consider these two cases separately.

		(i) If $w_i = v$, then since $w_i \in V(\sigma, \sep, \lf)$, it cannot be the case that $w_i \in (\rbnd \cup \epts(\LR{\rightbone, \rightbtwo})) \setminus \epts(\sep)$. Thus, it implies that $w_i \in \lbnd \cup \epts(\LR{\leftbone, \leftbtwo}) \cup {\sf CrossLeft}$. 
		
		(ii) If $w_i \neq v$, then we know that $w_{i+1}$, the vertex immediately after $w_i$ in $P(uv)$ satisfies that $w_{i+1} \notin V(\sigma, \sep, \lf)$. Then, it follows that either the edge $(w_i, w_{i+1})$ crosses some edge of $\sep$--implying that $w_i \in {\sf CrossLeft}$, or that $w_{i+1} \in V_3$, and $w_i \in \gapleft$.
		
		In any case, for any vertex $u \in V(\sigma, \sep, \lf) \setminus {\sf Origin}^\lf$, and any path $P(uv)$ to a vertex $v \in \bnd \cup \epts(\boundary)$, there exists a vertex $w_i \in P(uv)$ with label $\lf$, such that $w_i \in {\sf Origin}^\lf$, and further the entire subpath $P(u  w_{i})$ does not contain any vertex in $V(\sigma, \sep, \lf) \cup V(\sigma, \sep, \rt)$. This also implies that, all the vertices of $V(\sigma, \sep, \lf)$ are reachable from some vertex $w_i \in {\sf Origin}^\lf$ in the graph $G - (O^{\mm} \cup O^{\rt})$. Thus, all the unlabeled vertices in $V(\sigma, \sep, \lf)$ get labeled $\lf$ in the label propagation phase. This shows that $V(\sigma, \sep, \lf) \subseteq V^\lf$. Similar proofs also show that $V(\sigma, \sep, \lf) \subseteq V^\mm$, and $V(\sigma, \sep, \rt) \subseteq V^\rt$. Now, consider a vertex of $u' \in (V(\sigma, \sep, \lf) \cup V(\sigma, \sep, \rt)) \setminus {\sf Origin}^\lf$. However, such a vertex $u'$ belongs to $O^\mm \cup O^\rt$, and hence is not part of the graph $G - (O^\mm \cup O^\rt)$. Thus, it cannot receive label $\lf$, which shows that $V^\lf \subseteq V(\sigma, \sep, \lf)$. By combining both containments, we obtain that $V^\lf = V(\sigma, \sep, \lf)$. Similar proofs also show that $V^\mm = V(\sigma, \sep, \lf)$, and $V^\rt = V(\sigma, \sep, \rt)$, and are therefore omitted.
	\end{proof}

	After accounting for the $k_{\sf current}$ crossings among the endpoints of $\crosse \cup \mde \cup \sep$ in $\sigma$, the rest of the $k'$ crossings are either to the left of $\lsep$ or to the right of $\rsep$, or within the edges of $E_{23}$ whose endpoints are labeled $\mm$. Let the number of these crossings be $k^\lf, k^\rt, k^\mm$, respectively, and consider the guess where these numbers are correctly guessed, and for $k^\lf, k^\rt$, also suppose that their partition between $k^\lf_{12}, k^\lf_{23}, k^\rt_{12}, k^\rt_{23}$ is correctly guessed. 
	
	Further, since we label all the vertices correctly, it follows that all the vertices of $V^\lf, V^\rt$ are reachable from ${\sf Origin}^\lf, {\sf Origin}^\rt$ respectively. Then, we observe that vertices of ${\sf CrossLeft}$ (resp.~${\sf CrossRight}$) are directly connected to one of the vertices of $\epts(\LR{\lsepone, \lseptwo})$ (resp.~$\epts(\LR{\rsepone, \rseptwo})$) in the graph $G^\lf$ (resp.~$G^\rt$), which shows that these guesses are not terminated by the connectivity check done in Step 3 after creating the instances $\cI^\lf, \cI^\rt$, respectively. This shows that these guesses are not terminated by the connectivity check done in Step 3 after creating the instances $\cI^\lf, \cI^\rt$, respectively. Then, \Cref{cl:3validlr} imply that $\cI^\lf, \cI^\rt, \cI^\mm$ are valid extended instances. Next, $\sigma$ restricted to the respective vertex sets $V^\lf, V^\rt$ gives a valid drawing of the respective instances. This implies that $\cI^\lf, \cI^\rt$ are {\sc Yes}-instances for the parameters $k^\lf, k^\rt$, respectively. Further, by using an argument similar to the one from \Cref{lem:2layerlemma}, we can also infer that $\cI^\lf, \cI^\rt$ are \emph{minimal} {\sc Yes}-instances for the respective parameters. Then, by induction (since $k^\lf, k^\rt \le 3k/4$) the respective recursive calls return valid drawings $\sigma^\lf, \sigma^\rt$ of the respective instances, implying that we proceed to Step 4 and create $\cI^\mm$.
	
	Again, note that all the vertices of $V(G^\mm)$ are labeled correctly, which implies that they are reachable from ${\sf Origin}^\mm$. Again, we observe that vertices of ${\sf CrossMid}$ are directly connected to one of the vertices of $\epts(\LR{\lseptwo, \rseptwo})$ in the graph $G^\mm$. Recall that in the elaborate extended instance $\cI^\mm$ constructed, $\lseptwo, \rseptwo$ become the boundary edges $\leftb^\mm, \rightb^\mm$, and $\gapleft, \gapright$ become $\lbnd, \rbnd$, respectively; and the preceding argument shows that each vertex in $V^\mm$ is reachable from some vertex of $\epts(\boundary^\mm) \cup \bnd^\mm$. This implies that the guess is not terminated, and then \Cref{cl:3validmiddle} implies that $\cI^\mm$ is a valid elaborate extended instance of \probTwoCr, and since the order on $\epts(\sep, 2) \cup \gapleft \cup \gapright$ is compatible with $\sigma$, it follows that $\sigma$ restricted to the vertices of $V^\mm$ gives a valid drawing of $\cI^\mm$. Therefore, $\cI^\mm$ is a \yin. Again, we can argue that $\cI^\mm$ is a \emph{minimal} \yin, and via \Cref{lem:2layerlemma}, we obtain a corresponding drawing $\sigma^\mm$. Thus, in Step 6, we combine the three drawing and return a drawing $\sigma'$.

	\textbf{Reverse direction.} The reverse direction is trivial, since the valid drawing $\sigma'$ returned by the algorithm is a witness that $\cI$ is a {\sc Yes}-instance.
\end{proof}

In the following lemma, we argue that the algorithm runs in subexponential time.

\begin{lemma} \label{lem:3runtime}
	For any input \ein $\cI$ of \probThreeCr with parameter $k$, the algorithm runs in time $T_3(k) \le n^{\Oh(k^{2/3})}$.
\end{lemma}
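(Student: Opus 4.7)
The plan is to mirror the structure of the two-layer runtime proof (\Cref{lem:2runtime}): handle the two base cases via the direct bound and handle the recursive case by combining \Cref{lem:3guessbound} with the recurrence in \Cref{eqn:3recurrence}, then solve the recurrence.

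First I would dispatch the base cases. When $k \le c\sqrt{k^\star}$ or $\min\{|E_{12}|,|E_{23}|\} \le c\sqrt{k^\star}$, \Cref{lem:3layerbasecase} gives running time $n^{O(\sqrt{k^\star})}$, which is well within $n^{O(k^{2/3})}$ since in every recursive call we have $k \ge c\sqrt{k^\star}$, so $k^{2/3} \ge (k^\star)^{1/3} \ge \sqrt{k^\star}/\sqrt{k^\star}^{1/3}$ once we are careful about the relation between $k$ and $k^\star$ (alternatively, this contribution is absorbed by the outermost factor). Next, for the inductive step, I would assume the bound holds for all parameter values strictly less than $k$ and use \Cref{lem:3guessbound} to bound the number of guesses made in Steps A--H by $n^{O(k^{2/3})}$. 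For each surviving guess, Step 3 (creating sub-instances) and Step 6 (combining drawings) take $n^{O(1)}$ time; Step 4 makes two recursive calls on instances with parameters $k^\lf, k^\rt \le 3k/4$ (by \Cref{obs:3balancedseparator}); Step 5 invokes \Cref{thm:2layertheorem} on a two-layer instance with parameter at most $k$, taking time $T_2(n,k) = n^{O(\sqrt{k})}$. This yields exactly the recurrence
\[
T_3(n,k) \le n^{O(k^{2/3})} \cdot \bigl( 2\, T_3(n, 3k/4) + T_2(n,k) \bigr) + n^{O(1)}.
\]

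Finally I would solve the recurrence. The recursion depth is $O(\log k)$ since the parameter is multiplied by $3/4$ at each level, and the base case is reached once $k$ drops below $c\sqrt{k^\star}$. Unrolling, the total exponent contributed by the branching and guessing is
\[
\sum_{\ell \ge 0} O\bigl( ((3/4)^\ell k)^{2/3} \bigr) \;=\; O(k^{2/3}) \cdot \sum_{\ell \ge 0} (3/4)^{2\ell/3} \;=\; O(k^{2/3}),
\]
because the series is geometric with ratio $(3/4)^{2/3} < 1$ and hence converges to a constant. The middle subproblem contributes $n^{O(\sqrt{k})}$ at each of the $O(\log k)$ levels, with at most $n^{O(k^{2/3})}$ branches leading to each such call; this total is dominated by the branching cost and is therefore also absorbed in $n^{O(k^{2/3})}$.

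The only mildly subtle point I anticipate is confirming that the base-case bound $n^{O(\sqrt{k^\star})}$ is indeed within $n^{O(k^{2/3})}$ at the leaves of the recursion, but this follows because we only recurse when $k > c\sqrt{k^\star}$, so at the leaves $k = \Theta(\sqrt{k^\star})$ and $\sqrt{k^\star} = O(k)$, giving $\sqrt{k^\star} = O(k^{2/3})$. Combining this with the inductive step gives $T_3(n,k) \le n^{O(k^{2/3})}$, as claimed.
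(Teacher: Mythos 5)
Your overall plan is the same as the paper's: dispatch the two base cases via \Cref{lem:3layerbasecase}, bound the branching at each level by $n^{\Oh(k^{2/3})}$ via \Cref{lem:3guessbound}, charge the middle subproblem to $T_2(n,k) = n^{\Oh(\sqrt{k})}$, and close the recurrence of \Cref{eqn:3recurrence} by a geometric-series unrolling. That geometric-series computation (ratio $(3/4)^{2/3} < 1$) is correct and actually spells out what the paper only asserts with ``the claim follows by induction,'' so this is a welcome addition.

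However, your handling of the base case contains a genuine error. You write that at a leaf of the recursion $k = \Theta(\sqrt{k^\star})$, hence $\sqrt{k^\star} = \Oh(k)$, ``giving $\sqrt{k^\star} = \Oh(k^{2/3})$.'' This last implication is false: $\Oh(k)$ does not imply $\Oh(k^{2/3})$, since $k^{2/3} \ll k$. Concretely, if $k = c\sqrt{k^\star}$ then $\sqrt{k^\star} = \Theta(k)$, which strictly exceeds $k^{2/3}$; so the leaf cost $n^{\Oh(\sqrt{k^\star})}$ is \emph{not} $n^{\Oh(k^{2/3})}$ when $k$ is the leaf parameter. Your earlier parenthetical (``this contribution is absorbed by the outermost factor'') is the right intuition, but it contradicts the claim in your last paragraph and is not itself a proof. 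The clean way to make this rigorous is to strengthen the inductive hypothesis to $T_3(n,k) \le n^{C\left(k^{2/3} + \sqrt{k^\star}\right)}$ for a suitable constant $C$. The base case then holds trivially since $n^{\Oh(\sqrt{k^\star})} \le n^{C\sqrt{k^\star}}$ once $C$ is large enough. In the inductive step, $n^{\Oh(k^{2/3})}\cdot n^{C\left((3k/4)^{2/3}+\sqrt{k^\star}\right)} \le n^{C\left(k^{2/3}+\sqrt{k^\star}\right)}$ holds provided $C$ is chosen so that $C\left(1-(3/4)^{2/3}\right)$ dominates the constant hidden in the guess bound; the $\sqrt{k^\star}$ term is simply carried along unchanged. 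Evaluating at the top level $k = k^\star$ then gives $n^{C\left((k^\star)^{2/3} + (k^\star)^{1/2}\right)} = n^{\Oh\left((k^\star)^{2/3}\right)}$, and similarly the $n^{\Oh(\sqrt{k})}$ cost of the middle subproblem is absorbed since $\sqrt{k} \le k^{2/3}$. With this strengthening your argument is complete; without it, the leaf-level claim as written does not hold.
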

\begin{proof}
	The base case corresponds to when $k \le c \sqrt{k^\star}$, or when $\min\LR{|E_{12}|, |E_{23}|} \le c \sqrt{k^\star}$, for some constant $c$. Then, the running time of the algorithm is $n^{\Oh(\sqrt{k^\star})}$ via \Cref{lem:3layerbasecase}.
	
	Otherwise, consider the recursive case, and suppose the claim is true for all $k' \le k-1$, where $k > c \sqrt{k^{\star}}$ is the parameter of the given instance $\cI$. \Cref{lem:2guessbound} implies that the total number of guesses is bounded by $n^{\Oh(k^{2/3})}$, and corresponding to each guess that is not terminated, we make two recursive calls to the algorithm for \probThreeCr with parameters at most $3k/4$ each. If for one such guess, both recursive calls return a drawing, then we also solve an instance of \probTwoCr using \Cref{thm:2layertheorem} (without the kernelization step), where the parameter is at most $k$. Thus, it takes time $T_2(n, k) = n^{\Oh(\sqrt{k})}$ time.
	Thus, $T_3(n, k)$ satisfies the recurrence given in \Cref{eqn:3recurrence}, and then the claim follows by induction.
\end{proof}

By combining \Cref{obs:3normal2extended}, \Cref{lem:3layerlemma}, \Cref{lem:3runtime}, and the fact that due to linear-time kernelization algorithm of \Cref{thm:kernel}, we have $n = (k^\star)^{\Oh(1)}$, we obtain the following theorem (where we rename $k^\star$ as $k$).

\begin{theorem} \label{thm:3layertheorem}
	On an $n$ vertex graph, \probThreeCr can be solved in time $2^{\Oh(k^{2/3} \log k)} + n \cdot k^{\Oh(1)}$, where $k$ denotes the number of crossings.
\end{theorem}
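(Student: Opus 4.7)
The plan is to combine the main ingredients developed throughout the section: the kernelization, the reduction from the original instance to polynomially many extended instances, the recursive algorithm together with its correctness and its running-time analysis. Concretely, I would proceed as follows.

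First, given an input instance $(H, U_1, U_2, U_3, k)$ of \probThreeCr, I would apply the kernelization algorithm of \Cref{thm:kernel}. This runs in time $n \cdot k^{\Oh(1)}$ and produces an equivalent instance whose number of vertices and edges is bounded by $\Oh(k^8)$. From this point on, whenever I talk about ``$n$'' in the running-time analysis, it holds that $n = k^{\Oh(1)}$. Observe that reducing to a kernel before invoking the recursive algorithm is crucial, since the number of guesses made at each recursive call is polynomial in the current number of vertices, and only a polynomial-in-$k$ bound on $n$ makes these guesses subexponential in $k$.

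Next, I would transform the (kernelized) original instance into $n^{\Oh(1)} = k^{\Oh(1)}$ normal extended instances, as described just before \Cref{obs:3normal2extended} (guessing $k_{12}, k_{23}$ with $k_{12}+k_{23}=k$, guessing the leftmost/rightmost vertices of $U_1, U_3$, and adding the two auxiliary boundary edges $\leftbone,\leftbtwo,\rightbone,\rightbtwo$). By \Cref{obs:3normal2extended}, the original instance is a \yin{} if and only if at least one of the resulting extended instances is an extended \yin.

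On each such extended instance, I would then run the recursive algorithm described in Sections~5.2--5.6. Its correctness---namely, that it returns a valid drawing of the extended instance with at most $k$ crossings exactly when the extended instance is a \yin---is given by \Cref{lem:3layerlemma}. Its running time on a single extended instance with the kernelized graph satisfies the recurrence \eqref{eqn:3recurrence} and, by \Cref{lem:3runtime}, solves to $T_3(n,k) \le n^{\Oh(k^{2/3})}$. Substituting $n = k^{\Oh(1)}$ yields $T_3(n,k) \le 2^{\Oh(k^{2/3}\log k)}$. Multiplying by the $k^{\Oh(1)}$ factor from the number of extended instances, and adding the kernelization cost $n \cdot k^{\Oh(1)}$, gives the claimed total running time $2^{\Oh(k^{2/3}\log k)} + n \cdot k^{\Oh(1)}$. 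Since this is essentially a straightforward assembly of already-established results, the main ``obstacle'' is mostly bookkeeping---ensuring that the base-case hypothesis $k > c\sqrt{k^\star}$ triggered in the recursion is consistent with the polynomial bound on $n$ coming from the kernel, and that the (at most $4\log k^\star$) partial orders accumulated in the recursion, the $\Oh(\sqrt{k^\star})$ bounds on $\bnd$, and the $2\sqrt{k^\star}$ bounds on multi-edge multiplicities are preserved throughout recursive calls, so that each instance produced in the recursion remains a valid extended instance in the sense of \Cref{def:extendedinstance3}. All of these invariants have already been verified in Step~3 of the recursive construction, so only the final accounting remains.
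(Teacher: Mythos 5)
Your assembly is exactly the one the paper uses: kernelize via \Cref{thm:kernel} so that $n=k^{\Oh(1)}$, reduce to $k^{\Oh(1)}$ extended instances via \Cref{obs:3normal2extended}, and invoke \Cref{lem:3layerlemma} for correctness and \Cref{lem:3runtime} for the $n^{\Oh(k^{2/3})}$ bound, which becomes $2^{\Oh(k^{2/3}\log k)}$ after substituting the kernel size. This matches the paper's proof, and your added remark about checking that the extended-instance invariants persist through the recursion is consistent with the bookkeeping done in Step~3.
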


\section{Kernelization for \probThreeCr}\label{sec:kernel}
The whole section is the proof of \Cref{thm:kernel},  which we restate here.

\thmkern*

In~\Cref{sec:constr}, we give the kernelization algorithm and prove its correctness. In~\Cref{sec:run-time}, we show that using some additional steps, we can implement the algorithm to run in time which is linear in $n$.  

\subsection{Construction of the kernel}\label{sec:constr}
Let $(G,V_1,V_2,V_3,k)$ where $(V_1,V_2,V_3)$ is a $3$-partition of $V(G)$ be an instance of \probThreeCr.  
We give a polynomial time algorithm that either correctly concludes that  $(G,V_1,V_2,V_3,k)$ is a no-instance, or constructs an equivalent instance $(G',V'_1,V'_2,V'_3,k)$, with $\cO(k^{8})$ vertices and edges.

By  \Cref{prop:linear-time}, one can check in time $\cO(n)$ whether $G$ admits a $3$-layer drawing respecting $(V_1,V_2,V_3)$ without crossings. If this is the case, we immediately return a trivial yes-instance and stop. We also stop and output a trivial no-instance if $k=0$ and $G$ does not admit such a drawing. From now on, we assume that $G$ has no drawing respecting $(V_1,V_2,V_3)$ without crossings and that $k\geq 1$.

We apply a sequence of reduction rules. Almost each of the rules operates as follows: Suppose $G$ contains a sufficiently large subgraph $H$ that can be (a) separated from $G$ by deleting a small (at most 6) number of vertices and (b) drawn on two or three layers without crossing. Then, we derive an equivalent but smaller instance of the problem by removing some ``irrelevant'' vertices of $H$ from $G$. Occasionally, we may need to introduce new edges, but the number of vertices in the new equivalent instance always decreases.
None of the rules changes the parameter $k$. 
The rules and the proof of their correctness (or, using the established expression in the area, their safety) depend on the number of vertices required to separate $H$ from $G$. The rules are exhaustively applied whenever possible. If, by a rule, we delete a vertex or a set of vertices, we assume that the sets $V_1,V_2,V_3$ are modified accordingly by the deletion of vertices.

\smallskip 
First, we address connected components that can be drawn without crossings. Trivially, such components are irrelevant because they can be drawn independently without creating any crossings.

\begin{reduction}\label{red:comp}
If $G$ has a connected component $H$ that admits a  $3$-layer drawing respecting $(V_1\cap V(H),V_2\cap V(H),V_3\cap V(H))$ without crossings, then set $G:=G-V(H)$.
\end{reduction}

%
%

\smallskip
For a vertex $v\in V(G)$ and $i\in\{1,2,3\}$, let $P_i(v)$ be the set of pendent vertices in $V_i$ adjacent to $v$.  
 The following reduction rule 
reduces the number of pendent neighbors of the vertices from $V_2$.  
\begin{reduction}\label{red:pend}
If there is $u\in V_2$ such that $|P_i(u)|>k+1$ for some $i\in\{1,3\}$ then select (arbitrarily) vertex $v\in P_i(u)$ and 
 set $G:=G-v$.
\end{reduction}

\begin{lemma}\label{cl:pend}
\Cref{red:pend} is safe.
\end{lemma}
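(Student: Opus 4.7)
The plan is to prove safety by showing the equivalence of the two instances $(G,V_1,V_2,V_3,k)$ and $(G-v,V_1',V_2',V_3',k)$ (where $V_j'=V_j\setminus\{v\}$) in both directions. The forward direction is immediate: any $3$-layer drawing $\sigma$ of $G$ with at most $k$ crossings restricts to a $3$-layer drawing of $G-v$ respecting $(V_1',V_2',V_3')$ with at most $k$ crossings, since deleting a vertex (together with its incident edges) cannot create crossings. So the whole work is in the backward direction: given a drawing $\sigma'=(\sigma_1',\sigma_2',\sigma_3')$ of $G-v$ with at most $k$ crossings, I would construct a drawing $\sigma$ of $G$ with at most $k$ crossings by carefully inserting $v$ back into $\sigma_i'$.

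The key observation is a pigeonhole argument on the pendants in $P_i(u)\setminus\{v\}$. By hypothesis $|P_i(u)|>k+1$, so $|P_i(u)\setminus\{v\}|\geq k+1$. Each vertex $p\in P_i(u)\setminus\{v\}$ contributes the edge $up$ to $G-v$, and all these edges share the endpoint $u\in V_2$, hence no two of them cross each other in $\sigma'$. Since $\sigma'$ has at most $k$ crossings in total, the sum $\sum_{p\in P_i(u)\setminus\{v\}}|\mathrm{cr}(up,\sigma')|$ is at most $k<|P_i(u)\setminus\{v\}|$, so there exists some $p^{\star}\in P_i(u)\setminus\{v\}$ such that the edge $up^{\star}$ is crossed by no other edge in $\sigma'$.

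I would then define $\sigma$ by inserting $v$ in $\sigma_i'$ immediately next to $p^{\star}$ (say, directly to the left of $p^{\star}$), keeping $\sigma_j=\sigma_j'$ for $j\neq i$ and leaving the relative order of all other vertices of $V_i$ unchanged. Since $v$ is a pendant with unique incident edge $uv$, the only new edge to account for is $uv$. I would verify that for every edge $e$ of $G-v$ that is not incident to $u$, $v$, or $p^{\star}$, the pair $(u,v)$ lies on exactly the same side of $e$'s endpoints as $(u,p^{\star})$ does (because $v$ and $p^{\star}$ are consecutive on $L_i$ and $e$'s endpoint on $L_i$, if any, is neither of them), so $e$ crosses $uv$ iff $e$ crosses $up^{\star}$. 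The edge $up^{\star}$ itself shares endpoint $u$ with $uv$ and does not cross it. Hence the number of crossings involving $uv$ in $\sigma$ equals $|\mathrm{cr}(up^{\star},\sigma')|=0$, and the number of crossings among the edges of $G-v$ is preserved, giving at most $k$ crossings total.

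The only delicate step is this verification that the insertion does not introduce extra crossings; the rest is bookkeeping. The argument relies crucially on $i\in\{1,3\}$ so that $V_i$ is adjacent to only one other layer ($V_2$), ensuring that edges in the ``far'' layer pair ($V_2V_3$ if $i=1$, $V_1V_2$ if $i=3$) are untouched by the insertion of $v$ into $L_i$, and on $v$ being a pendant so that only the single edge $uv$ needs to be redrawn.
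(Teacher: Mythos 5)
Your proof is correct and takes essentially the same approach as the paper: use pigeonhole over the $\ge k+1$ remaining pendant edges incident to $u$ to find one, $up^\star$, with no crossings, and reinsert $v$ adjacent to $p^\star$. The paper states the pigeonhole step more tersely and simply asserts that the inserted edge $uv$ crosses nothing, whereas you spell out the bookkeeping, but the argument is the same.
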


\begin{proof}
Assume without loss of generality that the rule is applied for $u\in V_2$ and $v\in P_1(u)$,  and let $G'=G-v$ and $V_1'=V_1\setminus\{v\}$. Trivially, if $G$ admits  a $3$-layer drawing respecting $(V_1,V_2,V_3)$ with at most $k$ crossing then $G'$ has  a $3$-layer drawing respecting $(V_1',V_2,V_3)$ with at most $k$ crossing.  Suppose that $G'$ admits   a $3$-layer drawing 
$\sigma'=(\sigma_1',\sigma_2,\sigma_3)$ respecting $(V_1',V_2,V_3)$ with at most $k$ crossing. Because $|P_1(u)|>k+1$, $u$ is adjacent to at least $k+1$ pendent vertices in $V_1'$. Since the number of crossings is upper bounded by $k$, there is $w\in P_1(u)\setminus\{v\}$ such that $uw$ does not cross any other edge. We construct the order $\sigma_1$ of $V_1$ from $\sigma_1'$ by inserting $v$ immediately after $w$. Then $uv$ does not cross any edge of $G$. Thus, $\sigma=(\sigma_1,\sigma_2,\sigma_3)$ is $3$-layer drawing respecting $(V_1,V_2,V_3)$ with at most $k$ crossing. This concludes the proof. 
\end{proof}

\smallskip
We apply a similar rule to reduce the number of vertices in $V_2$. Let $u\in V_1$ and $v\in V_3$. We denote by $Q(u,v)$ the set of vertices of $V_2$ of degree two, that is, $u$ and $v$ are the only neighbors of any $w\in Q(u,v)$.


\begin{reduction}\label{red:degtwo}
If there are $u\in V_1$ and $v\in V_3$ such that $|Q(u,v)|>k+1$   then  select (arbitrarily) vertex
$w\in Q(u,v)$ and 
 set $G:=G-w$.
\end{reduction}

\begin{lemma}\label{cl:degtwo}
\Cref{red:pend} is safe.
\end{lemma}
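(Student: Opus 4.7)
The plan is to mirror the strategy used for \Cref{red:pend}, but adapted to handle the fact that each $w' \in Q(u,v)$ is incident to two edges ($uw'$ in $E_{12}$ and $vw'$ in $E_{23}$) instead of one. The forward direction is immediate: if $(G,V_1,V_2,V_3,k)$ admits a $3$-layer drawing with at most $k$ crossings, then restricting this drawing to $G-w$ (by simply deleting $w$ from $\sigma_2$) yields a valid $3$-layer drawing of $G-w$ with no more crossings, since deleting a vertex can never increase the number of crossings.

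For the reverse direction, suppose $G' = G - w$ admits a $3$-layer drawing $\sigma' = (\sigma_1, \sigma_2', \sigma_3)$ respecting $(V_1, V_2 \setminus \{w\}, V_3)$ with at most $k$ crossings. The first step is to find a "safe neighbor" $w^* \in Q(u,v) \setminus \{w\}$ near which we can reinsert $w$ without introducing new crossings. Since $|Q(u,v)| > k+1$, we have $|Q(u,v) \setminus \{w\}| \geq k+1$. For each $w' \in Q(u,v) \setminus \{w\}$, let $a(w')$ (resp.\ $b(w')$) denote the number of crossings in $\sigma'$ in which the edge $uw'$ (resp.\ $vw'$) participates. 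Once $w^*$ with $a(w^*) = b(w^*) = 0$ is identified, we construct $\sigma_2$ from $\sigma_2'$ by inserting $w$ immediately after $w^*$; then $uw$ and $uw^*$ (sharing endpoint $u$) do not cross, $vw$ and $vw^*$ (sharing endpoint $v$) do not cross, and any edge that does not cross $uw^*$ or $vw^*$ also does not cross $uw$ or $vw$ by continuity of the drawing around $w^*$.

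The crux of the proof---and the step that requires some care---is proving that such a $w^*$ exists. The key observation is that the set of edges $\{uw' : w' \in Q(u,v) \setminus \{w\}\}$ lies entirely in $E_{12}$ and is pairwise non-crossing (they share the endpoint $u$), while $\{vw' : w' \in Q(u,v) \setminus \{w\}\}$ lies entirely in $E_{23}$ and is also pairwise non-crossing. Since edges in $E_{12}$ cannot cross edges in $E_{23}$, each crossing of $\sigma'$ is counted at most once in the sum $S = \sum_{w' \in Q(u,v) \setminus \{w\}} (a(w') + b(w'))$. Hence $S \leq k$, but the sum ranges over at least $k+1$ vertices, so by the pigeonhole principle there exists $w^* \in Q(u,v) \setminus \{w\}$ with $a(w^*) + b(w^*) = 0$, as desired. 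Combining the two directions concludes the safety proof.
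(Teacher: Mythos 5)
Your proof is correct and takes essentially the same approach as the paper: reinsert $w$ next to a vertex $w^*\in Q(u,v)\setminus\{w\}$ both of whose incident edges are crossing-free, which must exist because $|Q(u,v)\setminus\{w\}|\geq k+1$ while there are at most $k$ crossings. The paper simply asserts the existence of such a vertex; your explicit accounting (each crossing contributes at most once to $\sum_{w'}(a(w')+b(w'))$ because $E_{12}$-edges cannot cross $E_{23}$-edges and the edges at $u$, resp.\ $v$, are pairwise non-crossing) is a correct and slightly more careful rendering of the same pigeonhole step.
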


\begin{proof}
Suppose that the rule is applied for $u\in V_1$, $v\in V_3$, and $w\in Q(u,v)$, and let $G'=G-w$ and $V_2'=V_2\setminus \{w\}$. As in the poof of \Cref{cl:pend}, it is sufficient to show that if  $G'$ has  a $3$-layer drawing respecting $(V_1,V_2',V_3)$ with at most $k$ crossing then  $G$ has  a $3$-layer drawing respecting $(V_1,V_2,V_3)$ with at most $k$ crossing. Assume that $\sigma'=(\sigma_1,\sigma_2',\sigma_3)$ is such a drawing of $G'$. Because $|Q(u,v)|>k+1$, there is $x\in Q(u,v)\setminus \{w\}$ such that edges $ux$ and $vx$ do not cross any edge in the drawing. We construct the order $\sigma_2$ of $V_2$ from $\sigma_2'$ by inserting $w$ immediately after $w$. Then, neither $uw$ nor $vw$  cross an edge of $G$. We obtain that $\sigma=(\sigma_1,\sigma_2,\sigma_3)$ is $3$-layer drawing respecting $(V_1,V_2,V_3)$ with at most $k$ crossing. This concludes the proof of the claim. 
\end{proof}

Let $u\in V_1\cup V_3$ be a cut vertex of $G$. 
We say that a connected component $H$ of $G-u$ is \emph{nice} 
 if 
 (i) $V(H)\subseteq V_2\cup V_3$ if $u\in V_1$ and $V(H)\subseteq V_1\cup V_2$ if $u\in V_3$, and (ii) $H$ admits a $2$-layer drawing without crossings respecting either $(V_2\cap V(H),V_3\cap V(H))$ if $u\in V_1$ or $(V_1\cap V(H),V_2\cap V(H))$ if $u\in V_3$. The next rule is used to reduce the number of nice components.

\begin{reduction}\label{red:nice}
If there is $u\in V_1\cup V_3$  such that the number of nice connected components of $G-u$ 
  is at least $2k+2$,  then select a nice component $H$ 
  and set  $G:=G-V(H)$; we select a nice component of size one if such a component $H$ exists and choose arbitrary $H$, otherwise.
\end{reduction}

\begin{lemma}\label{cl:nice}
\Cref{red:nice} is safe.
\end{lemma}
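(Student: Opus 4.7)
The forward direction is immediate: restricting any drawing of $G$ with at most $k$ crossings to $V(G)\setminus V(H)$ yields a drawing of $G-V(H)$ with at most $k$ crossings. For the converse, I plan to start from a drawing $\sigma'=(\sigma_1',\sigma_2',\sigma_3')$ of $G-V(H)$ with at most $k$ crossings and insert the vertices of $H$ into $\sigma'$ to produce a drawing $\sigma$ of $G$ with at most $k$ crossings. I will assume $u\in V_1$; the case $u\in V_3$ is symmetric.

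The first step is to locate a ``crossing-free companion'' nice component inside $G-V(H)$. Since $G-u$ has at least $2k+2$ nice components and $H$ is one of them, $(G-V(H))-u$ contains at least $2k+1$ pairwise vertex-disjoint nice components $C_1,\ldots,C_{2k+1}$. For each $i$, set $E_i := E(C_i)\cup\{uv : v\in V(C_i)\cap V_2\}$; these edge sets are pairwise disjoint. Since each crossing in $\sigma'$ involves two edges, $\sum_i |\{\text{crossings of }\sigma'\text{ that use an edge of }E_i\}| \leq 2k$, so by pigeonhole there is an index $i^*$ for which $E_{i^*}$ is entirely crossing-free in $\sigma'$; set $C^*:=C_{i^*}$.

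Next, I will splice $H$ into $\sigma'$ right next to $C^*$. By \Cref{obs:caterpillar}, the nice component $H$ is a caterpillar with an essentially unique planar $2$-layer drawing $\tau$, and the same rigidity pins down how $C^*$ sits inside $\sigma'$. I will insert $V(H)\cap V_2$ into $\sigma_2'$ immediately to the left of the contiguous block formed by the $V_3$-adjacent vertices of $V(C^*)\cap V_2$, and $V(H)\cap V_3$ into $\sigma_3'$ immediately to the left of the analogous block on $V_3$, using the order prescribed by $\tau$ with orientation chosen consistently with $C^*$. All newly introduced edges lie either in the layer pair $V_1V_2$ (edges $uv$ for $v\in V(H)\cap V_2$) or in the pair $V_2V_3$ (internal edges of $H$), so they cannot cross edges from the opposite pair. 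Internal edges of $H$ do not cross one another by planarity of $\tau$.

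The main obstacle will be verifying that no edge of $G-V(H)$ crosses one of the newly inserted edges. The target is to show that any such hypothetical crossing would force a crossing with an edge of $E_{i^*}$, contradicting the choice of $C^*$. To prove this, I will use the caterpillar rigidity (\Cref{obs:caterpillar}) to argue that any $V_2$-vertex placed between two consecutive $V_3$-adjacent vertices of $V(C^*)\cap V_2$ in $\sigma'$ can have no $V_3$-neighbour, and symmetrically on the $V_3$ side; this makes the slot occupied by the inserted block of $V(H)\cap V_2$ (resp.~$V(H)\cap V_3$) ``clean'' with respect to the pair $V_2V_3$. The crossing-freeness of the edges from $u$ to $V(C^*)\cap V_2$ handles the $V_1V_2$ side by an analogous argument. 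The rule's preference for a size-one nice component plays no role in the safety proof; it is needed only to keep the eventual kernel-size bound tight.
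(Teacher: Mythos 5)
Your high-level strategy matches the paper's: pigeonhole over $\geq 2k+1$ surviving nice components to find a crossing-free companion $C^*$, then splice $H$ next to it. However, there are two genuine gaps.

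First, your final remark that the rule's preference for a size-one $H$ ``plays no role in the safety proof'' is wrong, and the error propagates into the argument. The paper uses that preference precisely to guarantee that whenever the deleted $H$ is non-trivial, \emph{every} nice component is non-trivial, and hence the crossing-free companion $C$ has vertices in both $V_2$ and $V_3$. Your splicing step needs exactly the same guarantee: you place $V(H)\cap V_3$ next to ``the analogous block on $V_3$'' of $C^*$, but if $H$ is non-trivial and $C^*$ is a single pendant $c\in V_2$, that block is empty and there is no anchor for $V(H)\cap V_3$. Without the preference this can actually happen ($H$ the unique non-trivial nice component, all $\geq 2k+1$ others trivial), and neither your outline nor the paper's argument covers it. You would also need to handle the trivial-$H$ case separately (the paper does, reducing to the argument of \Cref{cl:pend}), which your outline omits.

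Second, the cleanness argument as stated does not justify your insertion point. You argue that a $V_2$-vertex lying \emph{between two consecutive} $V_3$-adjacent vertices of $V(C^*)\cap V_2$ can have no $V_3$-neighbour; that is correct, but your slot is \emph{to the left of the entire block}, so the vertex $v$ immediately to the left of the slot is not covered by this argument. The $V_1V_2$ side has the same problem and is worse: let $y_1$ be the leftmost $V_3$-adjacent vertex of $V(C^*)\cap V_2$. If $y_1$ is not itself adjacent to $u$, and $v$ (immediately to the left of $y_1$) has a $V_1$-neighbour $w$ with $\sigma_1(w)>\sigma_1(u)$, then for any newly inserted $h\in V(H)\cap V_2$ with $\sigma_2(v)<\sigma_2(h)<\sigma_2(y_1)$ the edge $uh$ crosses $vw$. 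Nothing in the crossing-freeness of $E_{i^*}$ rules this configuration out, since $vw$ need not cross any edge of $E_{i^*}$. The paper avoids exactly this by first \emph{reordering}: it moves the offending sandwiched $V_2$-vertices past the last $V_2$-vertex of $C$, establishing that $(u,y,z)$ separate left from right, and only then inserts $H$ into the cleaned-up region on the right. Your outline has no counterpart to this reordering step, and without it (or a proof that it is unnecessary for your particular insertion point) the splicing may create new crossings.
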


\begin{proof}
By symmetry, let us assume that the rule is applied for $u\in V_1$ and $V(H)\subseteq V_2\cup V_3$ for the deleted nice component $H$. Let $G'=G-V(H)$, $V_2'=V_2\setminus V(H)$, and $V_3'=V_3\setminus V(H)$. 
Again, it is sufficient to show that if  $G'$ has  a $3$-layer drawing respecting $(V_1,V_2',V_3')$ with at most $k$ crossing then  $G$ has  a $3$-layer drawing respecting $(V_1,V_2,V_3)$ with at most $k$ crossing. Assume that $\sigma'=(\sigma_1,\sigma_2',\sigma_3')$ is a $3$-layer drawing of $G'$ with at most $k$ crossings. Because the number of nice connected components of $G-u$ 
is at least $2k+2$, there is a nice component $C$ distinct from $H$ such that neither edges of $C$ nor edges $uv$ for $v\in V(C)$ cross other edges. Since we cannot apply \Cref{red:comp}, $C$ cannot be a connected component of $G$. Thus, there is a vertex of $C$ adjacent to $u$.
Denote by $x$ the last, with respect to $\sigma_2'$,  vertex of $V_2\cap V(C)$ adjacent to $u$. 

Suppose first that $H$ is trivial, that is, it has a single vertex. Then we construct a $3$-layer drawing $\sigma=(\sigma_1,\sigma_2,\sigma'_3)$ of $G$ respecting $(V_1,V_2,V_3)$ by modifying $\sigma_2'$ the same way as in the proof of \Cref{cl:pend}. Namely, we insert the unique vertex of $H$ immediately after $x$ in $\sigma_2'$. It does not create a new crossing because $ux$ does not cross any edge in $\sigma'$. 

From now on, we assume that $H$ has at least two vertices. By the choice of $H$, there is no nice component of size one and, therefore,  
$C$ also has at least two vertices and has vertices from both $V_2$ and $V_3$.
Let $y$ be the last vertex of $V_2\cap V(C)$ and $z$ be the last vertex of $V_3\cap V(C)$ with respect to $\sigma_2'$ and $\sigma_3'$, respectively.  It may happen that $x=y$.  Suppose that there is $v\in V_2'\setminus V(C)$ such that $\sigma_2'(x)<\sigma_2'(v)<\sigma_2'(y)$. Because $C$ is connected, we have that $v$ has no neighbors in $V_3$ and is adjacent only to some vertices 
$w\in V_1$ with $\sigma_1(u)\leq \sigma_2(w)$. This observation allows us to modify $\sigma_2'$ as follows. Let $v_1,\ldots,v_\ell$ be all the vertices of $V_2\setminus V(C)$ such that $\sigma_2'(x)<\sigma_2'(v_i)<\sigma_2'(y)$ for $i\in\{1,\ldots,\ell\}$ (the set of these vertices may be empty) and assume that 
$\sigma_2'(v_1)<\dots<\sigma_2'(v_\ell)$. We construct $\sigma_2''$ from $\sigma_2'$ by placing $v_1,\ldots,v_\ell$ following their order in $\sigma_2'$ after $y$ in the order of $V_2'$. Notice that this modification does not create any new crossing. Furthermore, for $\sigma''=(\sigma_1,\sigma_2'',\sigma_3')$, we have that either $u$, $y$, and $z$ are the last vertices in
$\sigma_1$, $\sigma_2''$, and $\sigma_3'$, respectively, or 
  $u$ separates the vertices that are after $u$, $y$, and $z$ with respect to $\sigma''$ from the vertices that are before them. More formally, 
it holds that (a) there is no edge $ab$ with $a\in V_2'$ and $b\in V_3'$ such that either $\sigma_2''(a)\leq \sigma_2''(y)$ and $\sigma_3'(b)>\sigma_3'(z)$ or $\sigma_2''(a)> \sigma_2''(y)$ and $\sigma_3'(b)\leq \sigma_3'(z)$, and (b) there is no edge $ab$ with $a\in V_1$ and $b\in V_2'$ such that either 
$\sigma_1(a)< \sigma_a(u)$ and $\sigma_2''(b)>\sigma_2''(y)$ or $\sigma_1(a)> \sigma_1(u)$ and $\sigma_2''(b)\leq \sigma_2''(y)$.  This allows us to construct a $3$-layer drawing $\sigma=(\sigma_1,\sigma_2,\sigma_3)$ of $G$ respecting $(V_1,V_2,V_3)$ without increasing the number of crossings.

Because   $H$ admits a $2$-layer drawing without crossings respecting $(V_2\cap V(H),V_3\cap V(H))$, there are the corresponding orders $\hat{\sigma}_2$ and $\hat{\sigma}_3$ of $V_2\cap V(H)$ and $V_3\cap V(H)$, respectively. We construct $\sigma=(\sigma_1,\sigma_2,\sigma_3)$ by modifying $\sigma_2''$ and $\sigma_3'$. To construct $\sigma_2$, we insert the vertices of $V_2\cap V(H)$ in $\sigma_2''$ immediately after $y$ following $\hat{\sigma}_2$. Similarly, $\sigma_3$ is constructed from $\sigma_3'$ by inserting the vertices of $V_3\cap V(H)$ immediately after $z$ following $\hat{\sigma}_3$. This concludes the proof.
\end{proof}

By the following five rules, we detect pieces of the graph with only particular drawings and reduce their size. We start with pieces that can be separated by four vertices from the remaining graph.

 \begin{figure}[ht]
\centering
\scalebox{0.7}{
\input{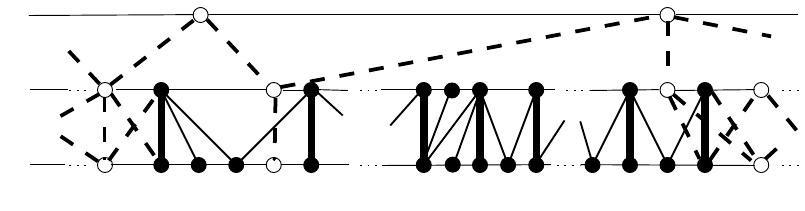_t}}
\caption{Applying \Cref{red:matching-A}; the vertices of $A$ are black bullets and the vertices of $B\setminus A$ are white; the edges of $H$ are  solid lines and the edges outside $H$ are drawn by dashed lines; the edges of $M$ are highlighted by thick lines.}
\label{fig:matching-A}
\end{figure} 

Informally, \Cref{red:matching-A} does the following. Assume that $H$ is a subgraph of $G$ that has a  $2$-layer drawing in layers 
 $V_1$ and  $V_2$ without crossings. Moreover, assume that in such a drawing, the endpoints of the leftmost edge $u_1v_1$ and the rightmost edge $u_2v_2$ of $H$ separate it from the remaining part of $G$. Then, if $H$ contains a matching $M$ of size $4k+3$, at least one of the endpoints of the central edge in  $M$  could be safely removed. (We also must add one or two edges to maintain connectivity.) 

\begin{reduction}[4-Separation A]\label{red:matching-A}
Suppose that there is a separation $(A,B)$ of $G$ with $A\cap B=\{u_1,u_2,v_1,v_2\}$ such that (see \Cref{fig:matching-A})
\begin{itemize}
\item[(i)] $H=G[A]$ is a connected graph with  $A\subseteq V_1\cup V_2$, $u_1,u_2\in V_1$, and $v_1,v_2\in V_2$,
\item[(ii)]  $u_1v_1,u_2v_2\in E(G)$, 
\item[(iii)] $H$ has a $2$-layer drawing $\sigma_H=(\sigma_1^H,\sigma_2^H)$ without crossings respecting $(V_1\cap V(H),V_2\cap V(H))$, where $\sigma_i^H$ is an order of $V_i\cap V(H)$ for $i\in\{1,2\}$, such that 
(a) for every $w\in V_1$, $\sigma_1^H(u_1)\leq \sigma_1^H(w)\leq \sigma_1^H(u_2)$ and (b) for every $w\in V_2$, $\sigma_2^H(v_1)\leq \sigma_2^H(w)\leq \sigma_2^H(v_2)$, 
\item[(iv)] $H$ has a matching $M$ of size $4k+3$ with $u_1v_1,u_2v_2\in M$.
\end{itemize}
Then let $M=\{x_1y_1,\ldots,x_{4k+3}y_{4k+3}\}$, where $x_1=u_1$, $y_1=v_1$, $x_{4k+3}=u_2$, $y_{4k+3}=v_2$, $\sigma_1^H(x_1)<\dots<\sigma_1^H(x_{4k+3})$, and 
$\sigma_2^H(y_1)<\dots<\sigma_2^H(y_{4k+3})$, and do the following:
\begin{itemize}
\item[(a)] find the unique $z_1\in \{x_{2k+1},y_{2k+1}\}$ that has a neighbor either after $x_{2k+1}$ in $\sigma_1^H$ or after $y_{2k+1}$ in $\sigma_2^H$, respectively, and find 
  the unique $z_2\in \{x_{2k+3},y_{2k+3}\}$ that has a neighbor either before $x_{2k+1}$ in $\sigma_1^H$ or before $y_{2k+1}$ in $\sigma_2^H$, respectively, 
\item[(b)] delete every vertex $w\in V_1\cap V(H)$ with $\sigma_1^H(x_{2k+1})<\sigma_1^H(w)<\sigma_1^H(x_{2k+3})$,
\item[(c)] delete every vertex $w\in V_2\cap V(H)$ with $\sigma_2^H(y_{2k+1})<\sigma_2^H(w)<\sigma_2^H(y_{2k+3})$,
\item[(d)] if either $z_1\in V_1$ and $z_2\in V_2$ or  $z_1\in V_2$ and $z_2\in V_1$ then make $z_1$ and $z_2$ adjacent to maintain connectivity,
\item[(e)] if $z_1\in V_1$ and $z_2\in V_1$ then restore $y_{2i+2}$ and make it adjacent to $z_1$ and $z_2$,
\item[(f)] if $z_1\in V_2$ and $z_2\in V_2$ then restore $x_{2i+2}$ and make it adjacent to $z_1$ and $z_2$.
\end{itemize}
\end{reduction}

We note that we use a complicated procedure of adding edges to ensure connectivity for simplifying the running time analysis in the next subsection. In particular, we guarantee that the rule does not create new pendent vertices and ensure that the graph $H'$ obtained from $H$ does not contain a matching of the same size as $M$ if $M$ is a maximum size matching satisfying the conditions of the rule.

\begin{lemma}\label{cl:matching}
\Cref{red:matching-A} is safe.
\end{lemma}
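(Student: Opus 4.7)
The plan is to prove both implications of the equivalence separately, relying crucially on two structural facts: (a) since $H$ is connected and admits a $2$-layer drawing without crossings respecting $(V_1\cap V(H),V_2\cap V(H))$, by \Cref{obs:caterpillar} the graph $H$ is a caterpillar whose $2$-layer drawing $\sigma_H$ is unique up to reversal and permutation of pendent vertices, so the ``backbone'' ordering of the $x_i$'s and $y_i$'s is forced; and (b) the matching $M$ has size $4k+3$, so among its $4k+3$ edges at most $2k$ can be involved in any crossing of a drawing with at most $k$ crossings, leaving at least $2k+3$ ``clean'' matching edges. Let $G'$ denote the reduced graph. Throughout, I will use that $(A,B)$ is a separation with $A\cap B=\{u_1,u_2,v_1,v_2\}$, so every edge of $G$ is either inside $H=G[A]$ or inside $G[B]$, and $u_1v_1,u_2v_2$ act as the leftmost and rightmost ``walls'' of $H$ in $\sigma_H$.

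For the forward direction, suppose $\sigma=(\sigma_1,\sigma_2,\sigma_3)$ is a drawing of $G$ with at most $k$ crossings. I will first show that $\sigma$ can be replaced, without increasing the number of crossings, by a drawing in which the entire subgraph $H$ occupies a contiguous block of $\sigma_1,\sigma_2$ that is bounded by $u_1,u_2$ on layer $V_1$ and by $v_1,v_2$ on layer $V_2$, and in which the internal vertices of $H$ sit in the order prescribed by $\sigma_H$. The rigidity argument is: since $|M|\geq 4k+3$ and there are at most $k$ crossings, some matching edge $x_jy_j$ with $2k+1\leq j\leq 2k+3$ is uncrossed (a standard averaging using that $M$ is a matching); combined with the fact that $H$ is a caterpillar and the separator $\{u_1,u_2,v_1,v_2\}$ forces all of $B$ to lie outside the block delimited by $u_1v_1,u_2v_2$, any ``foreign'' edge crossing into the interior of $H$ must cross one of the matching edges near it, so all such crossings can be rerouted to be confined to a single designated edge of $H$ adjacent to $x_jy_j$, analogously to the trick in~\Cref{cl:nice}. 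Once the drawing is in this normal form, deleting the vertices prescribed by (b) and (c) is safe, and the (at most two) new edges inserted by (d)--(f) can be drawn in the narrow vertical gap that the deletion opens up, routed along the position of $x_jy_j$; this introduces no new crossings because the gap is free of foreign edges. Hence $G'$ admits a drawing with at most $k$ crossings.

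For the reverse direction, suppose $\sigma'=(\sigma_1',\sigma_2',\sigma_3')$ is a drawing of $G'$ with at most $k$ crossings. The newly added edges of step (d)--(f) form, together with the surviving portion of $M$ restricted to $\{x_1,\ldots,x_{2k+1},x_{2k+3},\ldots,x_{4k+3}\}\cup\{y_1,\ldots,y_{2k+1},y_{2k+3},\ldots,y_{4k+3}\}$, a matching-like structure of size $\geq 4k+2$ inside the reduced ``$H$''-part, and a rigidity argument identical in spirit to the one above forces this part to be drawn in the prescribed order in $\sigma_1',\sigma_2'$ (up to global reversal). In particular $z_1$ and $z_2$ are placed consecutively on their respective layer(s), and the small ``bridge'' in (d)--(f) uses only a thin strip that contains no foreign vertex of $B\setminus A$. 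I then reinsert the deleted vertices of $V_1\cap V(H)$ (resp.\ $V_2\cap V(H)$) into this strip in the order inherited from $\sigma_H$, together with all the deleted edges of $H$; because the inserted portion is itself a caterpillar drawable without crossings with its boundary matched to $z_1,z_2$, and because the strip is foreign-edge-free, no new crossings are created and the added bridge edges can simply be deleted. The resulting drawing of $G$ has at most $k$ crossings.

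The main technical obstacle will be carrying out the ``normal form'' step in the forward direction cleanly: one must argue that the internal vertices of $H$ can indeed be pulled into a contiguous block, and that foreign edges crossing into that block can be rerouted to cross a single $H$-edge without increasing the total count. This requires a careful case analysis of which of $z_1,z_2$ lies in $V_1$ versus $V_2$ (matching steps (d), (e), (f) of the rule), and a detailed use of the caterpillar structure of $H$ to swap pendent branches so that no crossing is lost or gained. A secondary subtlety, relevant to the running-time analysis in \Cref{sec:run-time}, is ensuring that the addition of edges in (e) and (f) does not create new pendent vertices, so that \Cref{red:pend} does not need to be re-triggered indefinitely; this is exactly why the rule reintroduces $y_{2k+2}$ (resp.\ $x_{2k+2}$) with two incident edges rather than adding a single bridge.
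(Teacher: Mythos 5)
Your plan is in the right spirit but has two genuine gaps that the paper's proof must (and does) handle differently.

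First, the choice of ``clean'' matching edges is not sound as stated. You argue that among the $4k+3$ edges of $M$ at least $2k+3$ are uncrossed, and then conclude ``some matching edge $x_jy_j$ with $2k+1\leq j\leq 2k+3$ is uncrossed.'' Averaging does not give this: the window $\{2k+1,2k+2,2k+3\}$ has only $3$ indices, while the uncrossed edges could all lie among indices $\{1,\ldots,2k\}\cup\{2k+4,\ldots,4k+3\}$ (which has $4k\geq 2k+3$ slots for $k\geq 2$). The paper instead applies the averaging separately to the two half-windows $\{x_1y_1,\ldots,x_{2k+1}y_{2k+1}\}$ and $\{x_{2k+3}y_{2k+3},\ldots,x_{4k+3}y_{4k+3}\}$ (each of size $2k+1$, so each must contain an uncrossed edge), obtaining two uncrossed delimiters $x_iy_i$ and $x_jy_j$ with $i\leq 2k+1<2k+3\leq j$. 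The whole construction then works on the region between $x_i$ and $x_j$ rather than around a single central edge.

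Second, the ``contiguous block'' normal form is both unestablished and actually stronger than what holds. You assert that the separator ``forces all of $B$ to lie outside the block delimited by $u_1v_1,u_2v_2$'' and that $H$ can then be pulled into a contiguous interval bounded by $u_1,u_2,v_1,v_2$. Neither is true in a drawing with crossings: a separator controls adjacencies, not positions, and vertices of $B\setminus A$ can freely sit between $u_1$ and $u_2$ (or $v_1$ and $v_2$). This is precisely the subtlety the overview flags (``we may have crossings between the edges of the middle part of $H$ and some other edges of the part of $G$ outside of $H$''). The paper does \emph{not} make $H$ contiguous; it explicitly names the foreign vertices $W_1,W_2$ inside the $[x_i,x_j]$ region and \emph{moves them all into the narrow band between $x_{2k+1}$ and $x_{2k+3}$}, while reordering the surviving $H$-vertices $U_1\cap V(G'), U_2\cap V(G')$ according to $\sigma_H$. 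The crossing count is then controlled by a precise exchange argument: any foreign edge with endpoints in $W_1$ and $W_2$ must have crossed at least one $H$-edge in $\sigma$ (this is your ``budget''), and in $\sigma'$ it crosses exactly one of the newly inserted connector edges. Your proposal gestures at ``rerouting crossings to a single designated $H$-edge,'' but you never say what happens to edges going from $W$-vertices out of the region, nor why the specific choice of inserted edges in (d)--(f) pays for all the $W_1$--$W_2$ crossings. Without the explicit $U/W$ partition and the placement of $W$ in the deleted band, the crossing accounting cannot be completed.

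The reverse direction has the analogous issues (you again need two uncrossed edges in $G'$ to delimit the region and need to argue $U_i'\subseteq U_i$), but it is the forward-direction gaps that are decisive.
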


\begin{proof}
First, we observe that because $H$ has a $2$-layer drawing $\sigma_H=(\sigma_1^H,\sigma_2^H)$ without crossings respecting $(V_1\cap V(H),V_2\cap V(H))$ such that (iii)(a) and (iii)(b) are fulfilled, we can order the endpoints of the edges of the matching $M$ to satisfy the conditions that  $\sigma_1^H(x_1)<\dots<\sigma_1^H(x_{4k+3})$, and 
$\sigma_2^H(y_1)<\dots<\sigma_2^H(y_{4k+3})$. Thus, the rule can be executed if there is a separation $(A,B)$ satisfying (i)--(iv). Note also that because at least one of the vertices  $x_{2k+2}$ and $y_{2k+2}$ is deleted by the rule, it reduces the size of the graph.

Suppose that the rule is applied for a separation $(A,B)$ and denote by $G'$ the graph obtained by applying the rule. Let also $D\subseteq V_1\cup V_2$ be the set of deleted vertices, $V_1'=V_1\setminus D$, $V_2'=V_2\setminus D$.

Assume that $\sigma=(\sigma_1,\sigma_2,\sigma_3)$ is a $3$-layer drawing of $G$ respecting $(V_1,V_2,V_3)$ with at most $k$ crossings. Notice that because the number of crossings is at most $k$, there are $i\in\{1,\ldots,2k+1\}$ and $j\in\{2k+3,\ldots,4k+3\}$ such that the edges $x_iy_i$ and $x_jy_j$ do not cross any edge. Without loss of generality, we assume that $\sigma_1(x_i)<\sigma_1(x_j)$ and $\sigma_2(y_i)<\sigma_2(y_j)$. Otherwise, we just reverse the orders in $\sigma$.
We set  
\[U_1=\{w\in V_1\cap V(H)\colon \sigma_1(x_i)<\sigma_1(w)<\sigma_1(x_j)\}\] and 
\[U_2=\{w\in V_2\cap V(H)\colon \sigma_2(y_i)<\sigma_2(w)<\sigma_2(y_j)\},\] 
and define
\[W_1=\{w\in V_1\setminus V(H)\colon \sigma_1(x_i)<\sigma_1(w)<\sigma_1(x_j)\}\] and \[W_2=\{w\in V_2\setminus V(H)\colon \sigma_2(y_i)<\sigma_2(w)<\sigma_2(y_j)\}.\]
Observe that because $H$ is connected and $x_iy_i$ and $x_jy_j$ do not cross any edge, we have that 
\[U_1=\{w\in V_1\cap V(H)\colon \sigma_1^H(x_i)<\sigma_1^H(w)<\sigma_1^H(x_j)\}\] and 
\[U_2=\{w\in V_2\cap V(H)\colon \sigma_2^H(y_i)<\sigma_2^H(w)<\sigma_2^H(y_j)\}.\]
Furthermore, if $xy\in E(G)\setminus E(H)$ with $x\in V_1$ and $y\in V_2$ such that $x\in W_1$ or $y\in W_2$, then $x\in W_1$, $y\in W_2$, and $xy$ crosses at least one edge of $H$ in the drawing of $G$.

We construct a $3$-layer drawing $\sigma'=(\sigma'_1,\sigma'_2,\sigma_3)$ of $G'$  respecting $(V_1',V_2',V_3)$ with no more crossing than $\sigma$ as follows. 
We obtain $\sigma_1'$ from $\sigma_1$ by reordering the vertices of $U_1\cap V(G')$ and $W_1$.
First, we place the vertices of $U_1\cap V(G')$  between $x_i$ and $x_j$ following their order in $\sigma^H_1$ and then place the vertices of $W_1$ between $x_{2k+1}$ and $x_{2k+3}$ following their order in $\sigma_1$. 
 Similarly,  
 $\sigma_2'$ is obtained from $\sigma_2$ by reordering the vertices of $U_2\cap V(G')$ and $W_2$: 
 we place the vertices of $U_2\cap V(G')$  between $y_i$ and $y_j$ following their order in $\sigma^H_2$ and then place the vertices of $W_2$ between 
  $y_{2k+1}$ and $y_{2k+3}$ following their order in $\sigma_2$.  Observe that an  edge $xy\in E(H)\cap E(G')$ with $\sigma'_1(x_i)\leq \sigma'_1(x)\leq \sigma'_1(x_j)$ and 
$\sigma'_2(y_i)\leq \sigma'_2(y)\leq \sigma'_1(y_j)$ does not cross any edge of $G'$. Further,  
 the edges constructed by the rules, that is, either $x_{2k+1}y_{2k+3}$, or $y_{2k+1}x_{2k+3}$, or $x_{2k+1}y_{2k+2}$ with $y_{2k+2}x_{2k+3}$, or $y_{2k+1}x_{2k+2}$ with $x_{2k+2}y_{2k+3}$, 
cross only the edges $xy\in E(G')$ with $x\in W_1$ and $y\in W_2$ and $xy$ crosses exactly one edge constructed by the rule. This implies that in 
 $\sigma'$ the number of crossings does not exceed the number of crossings in $\sigma$.

For the opposite direction, assume that there is a $3$-layer drawing $\sigma'=(\sigma'_1,\sigma'_2,\sigma_3)$ of $G'$  respecting $(V_1',V_2',V_3)$ with at most $k$ crossings. We are using almost the same arguments as for the forward implication to show that there is a $3$-layer drawing $\sigma=(\sigma_1,\sigma_2,\sigma_3)$ of $G$  respecting $(V_1,V_2,V_3)$ such that the number of crossings in $\sigma$ is at most the number of crossings in $\sigma'$.

Because the number of crossings is at most $k$, there are $i\in\{1,\ldots,2k+1\}$ and $j\in\{2k+3,\ldots,4k+3\}$ such that the edges $x_iy_i$ and $x_jy_j$ do not cross any edge with respect to $\sigma'$. We can assume that  $\sigma'_1(x_i)<\sigma'_1(x_j)$ and $\sigma'_2(y_i)<\sigma'_2(y_j)$. 
We define vertex sets
\[U_1=\{w\in V_1\cap V(H)\colon \sigma_1^H(x_i)<\sigma_1^H(w)<\sigma_1^H(x_j)\},\] \[U_2= \{w\in V_2\cap V(H)\colon \sigma_2^H(y_i)<\sigma_2^H(w)<\sigma_2^H(y_j)\}.\] 
We also  introduce 
\[U_1'=\{w\in V_1'\cap V(H)\colon \sigma'_1(x_i)<\sigma_1(w)<\sigma'_1(x_j)\}\] and \[U_2'=\{w\in V_2'\cap V(H)\colon \sigma'_2(y_i)<\sigma'_2(w)<\sigma'_2(y_j)\}.\] We further define
\[W_1'=\{w\in V_1\setminus V(H)\colon \sigma'_1(x_i)<\sigma'_1(w)<\sigma'_1(x_j)\},\] and 
\[W_2'=\{w\in V_2\setminus V(H)\colon \sigma'_2(y_i)<\sigma'_2(w)<\sigma'_2(y_j)\}.\]
We remark that graph $H'$ obtained from $H$ by deleting the vertices of $D$ and 
adding one or two edges
is connected. 
Then, since  $x_iy_i$ and $x_jy_j$ do not cross any edge, we have that 
$U_1'\subseteq U_1$ and $U_2'\subseteq U_2$.
Moreover,  if $xy\in E(G')\setminus E(H')$ with $x\in V_1$ and $y\in V_2$ such that $x\in W_1$ or $y\in W_2$, then $x\in W_1$ and $y\in W_2$ and $xy$ crosses at least one edge of $H'$ in the drawing of $G'$.

We construct $\sigma_1$ from $\sigma_1'$.
First, we replace the vertices of $U_1'$ with the vertices of $U_1$ and put them between $x_i$ and $x_j$ following their order in $\sigma^H_1$. Then we place the vertices of $W_1$ immediately after $x_{2k+2}$  following their order in $\sigma_1'$. 
 The order 
 $\sigma_2$ is constructed from $\sigma'_2$ in a similar way. We replace the vertices of $U_2'$ with the vertices of $U_2$ and put them between $y_i$ and $y_j$ following their order in $\sigma^H_2$. Then we place the vertices of $W_2$ immediately before $y_{2k+2}$ following their order in $\sigma'_2$. 
 We have that any  edge $xy\in E(H)$ with $\sigma_1(x_i)\leq \sigma_1(x)\leq \sigma_1(x_j)$ and 
$\sigma_2(y_i)\leq \sigma_2(y)\leq \sigma_1(y_j)$ that is distinct from $x_{2k+2}y_{2k+2}$ 
does not cross any edge of $G$ 
and $x_{2k+1}y_{2k+3}$ 
crosses only the edges $xy\in E(G)$ with $x\in W_1$ and $y\in W_2$. Therefore,  the number of crossings in the constructed $3$-layer drawing $\sigma$ does not exceed the number of crossings in $\sigma'$. This concludes the proof of the claim.
\end{proof}

\Cref{red:matching-A} is applied for the drawing of $H$ in the layers $V_1$ and $ V_2$. 
Of course, by symmetry, the same arguments are applicable when the vertices of $H$ are in layers  $V_2$ and $ V_3$.  
Since the statement of \Cref{red:matching-A} is already quite long and challenging to read. Thus, instead of putting it in the general form, capturing two symmetric cases (like we did in \Cref{red:pend}), it is slightly preferable to have it as two reduction rules.  
The next  \Cref{red:matching-B}, up to the symmetry, is exactly the same as \Cref{red:matching-A}. 
We state the rule (without proof of its safety) for completeness.

\begin{reduction}[4-Separation B]\label{red:matching-B}
Suppose that there is a separation $(A,B)$ of $G$ with a separator $A\cap B=\{u_1,u_2,v_1,v_2\}$ such that
\begin{itemize}
\item[(i)] $H=G[A]$ is a connected graph with  $A\subseteq V_2\cup V_3$, $u_1,u_2\in V_2$, and $v_1,v_2\in V_3$,
\item[(ii)]  $u_1v_1,u_2v_2\in E(G)$, 
\item[(iii)] $H$ has a $2$-layer drawing $\sigma_H=(\sigma_2^H,\sigma_3^H)$ without crossings respecting $(V_2\cap V(H),V_3\cap V(H))$, where $\sigma_i^H$ is an order of $V_i\cap V(H)$ for $i\in\{2,3\}$, such that 
(a) for every $w\in V_2$, $\sigma_2^H(u_1)\leq \sigma_2^H(w)\leq \sigma_2^H(u_2)$ and (b) for every $w\in V_3$, $\sigma_3^H(v_1)\leq \sigma_3^H(w)\leq \sigma_3^H(v_2)$, 
\item[(iv)] $H$ has a matching $M$ of size $4k+3$ with $u_1v_1,u_2v_2\in M$.
\end{itemize}
Then let $M=\{x_1y_1,\ldots,x_{4k+3}y_{4k+3}\}$, where $x_1=u_1$, $y_1=v_1$, $x_{4k+3}=u_2$, $y_{4k+3}=v_2$, $\sigma_2^H(x_1)<\dots<\sigma_2^H(x_{4k+3})$, and 
$\sigma_3^H(y_1)<\dots<\sigma_3^H(y_{4k+3})$, and do the following:
\begin{itemize}
\item[(a)] find the unique $z_1\in \{x_{2k+1},y_{2k+1}\}$ that has a neighbor either after $x_{2k+1}$ in $\sigma_2^H$ or after $y_{2k+1}$ in $\sigma_3^H$, respectively, and find 
  the unique $z_2\in \{x_{2k+3},y_{2k+3}\}$ that has a neighbor either before $x_{2k+1}$ in $\sigma_2^H$ or before $y_{2k+1}$ in $\sigma_3^H$, respectively, 
\item[(b)] delete every vertex $w\in V_1\cap V(H)$ with $\sigma_2^H(x_{2k+1})<\sigma_2^H(w)<\sigma_2^H(x_{2k+3})$,
\item[(c)] delete every vertex $w\in V_2\cap V(H)$ with $\sigma_3^H(y_{2k+1})<\sigma_3^H(w)<\sigma_3^H(y_{2k+3})$,
\item[(d)] if either $z_1\in V_2$ and $z_2\in V_3$ or  $z_1\in V_3$ and $z_2\in V_2$ then make $z_1$ and $z_2$ adjacent to maintain connectivity,
\item[(e)] if $z_1\in V_2$ and $z_2\in V_2$ then restore $y_{2i+2}$ and make it adjacent to $z_1$ and $z_2$,
\item[(f)] if $z_1\in V_3$ and $z_2\in V_3$ then restore $x_{2i+2}$ and make it adjacent to $z_1$ and $z_2$.
\end{itemize}
\end{reduction}

In 
4-Separation Reduction Rules, 
we considered subgraphs with the vertices in two sets of the $3$-partition. Now, we turn to the subgraphs with vertices in three sets. First is the case of a subgraph with one vertex in $V_1$ or $V_3$.

 \begin{figure}[ht]
\centering
\scalebox{0.7}{
\input{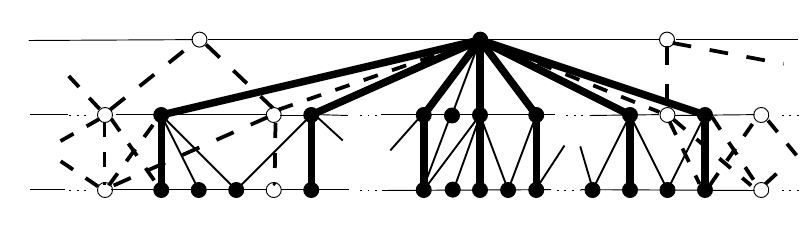_t}}
\caption{Applying \Cref{red:paths-A}; the vertices of $A$ are shown by black bullets and the vertices of $B\setminus A$ are white; the edges of $H$ are shown by solid lines and the edges outside $H$ are drawn by dashed lines; the edges of $M$ and the edges $wy_i$ for $i\in\{1,\ldots,4k+3\}$ are highlighted by thick lines.}
\label{fig:matching-A}
\end{figure}

\begin{reduction}[5-Separation A]\label{red:paths-A}
Suppose that there is a separation $(A,B)$ of $G$ with a separator $A\cap B=\{u_1,u_2,v_1,v_2,w\}$ such that (see~\Cref{fig:matching-A})
\begin{itemize}
\item[(i)] $G[A]$ is a graph with  $A\cap V_3=\{w\}$, $u_1,u_2\in V_1$, and $v_1,v_2\in V_2$,
\item[(ii)] $H=G[A]-w$ is connected, 
\item[(iii)]  $u_1v_1,u_2v_2,v_1w,v_2w\in E(G)$, 
\item[(iv)] $H$ has a $2$-layer drawing $\sigma_H=(\sigma_1^H,\sigma_2^H)$ without crossings respecting $(V_1\cap V(H),V_2\cap V(H))$, where $\sigma_i^H$ is an order of $V_i\cap V(H)$ for $i\in\{1,2\}$, such that 
(a) for every $x\in V_1\cap V(H)$, $\sigma_1^H(u_1)\leq \sigma_1^H(x)\leq \sigma_1^H(u_2)$ and (b) for every $x\in V_2\cap V(H)$, $\sigma_2^H(v_1)\leq \sigma_2^H(x)\leq \sigma_2^H(v_2)$, 
\item[(v)] $H$ has a matching $M=\{x_1y_1,\ldots,x_{4k+3}y_{4k+3}\}$ of size $4k+3$ with $u_1v_1,u_2v_2\in M$ and $y_1,\ldots,y_{4k+3}\in N_G(w)$.
\end{itemize}
Then assume that $x_1=u_1$, $y_1=v_1$, $x_{4k+3}=u_2$, $y_{4k+3}=v_2$, $\sigma_1^H(x_1)<\dots<\sigma_1^H(x_{4k+3})$, and 
$\sigma_2^H(y_1)<\dots<\sigma_2^H(y_{4k+3})$, and do the following:
\begin{itemize}
\item[(a)] delete every vertex $z\in V_1\cap V(H)$ with $\sigma_1^H(x_{2k+1})<\sigma_1^H(z)<\sigma_1^H(x_{2k+3})$,
\item[(b)] delete every vertex $z\in V_2\cap V(H)$ with $\sigma_2^H(y_{2k+1})<\sigma_2^H(z)<\sigma_2^H(y_{2k+3})$ except $y_{2k+2}$ and delete the edges incident to $y_{2k+2}$,
\item[(c)]  make $x_{2k+1}$ and $x_{2k+3}$ adjacent to $y_{2k+2}$ to maintain connectivity.  
\end{itemize}
\end{reduction}

We note that the deleted vertices and edges ensure that we do not create new paths of length two with their end-vertices in $V_1$ and $V_3$.

\begin{lemma}\label{cl:paths-A}
\Cref{red:paths-A} is safe.
\end{lemma}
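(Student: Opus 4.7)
The plan is to mirror the safety proof of \Cref{cl:matching} for \Cref{red:matching-A}, but to additionally account for the extra vertex $w\in V_3$ adjacent to every $y_\ell$. The crucial new ingredient is that the matching $M$, together with the edges $y_\ell w$, gives rise to $4k+3$ internally vertex-disjoint paths $P_\ell = x_\ell\,y_\ell\,w$ from $V_1$ to $V_3$. Since every crossing can be charged to at most one such path and $\sigma$ has at most $k$ crossings, at least $(4k+3)-2k = 2k+3$ of these paths are entirely uncrossed. Thus in any drawing with at most $k$ crossings we can choose indices $i\le 2k+1$ and $j\ge 2k+3$ so that none of the four frame edges $x_iy_i$, $y_iw$, $x_jy_j$, $y_jw$ is crossed.

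For the forward direction, given a $3$-layer drawing $\sigma=(\sigma_1,\sigma_2,\sigma_3)$ of $G$ with at most $k$ crossings and such indices $i,j$, I would define
\[U_\ell=\{z\in V_\ell\cap V(H): \sigma_\ell(x_i)<\sigma_\ell(z)<\sigma_\ell(x_j)\}\ \text{for }\ell=1,\ \text{with } (y_i,y_j)\text{ for }\ell=2,\]
and analogous sets $W_1,W_2$ for vertices outside $V(H)$ lying in the same intervals. Uncrossedness of the frame together with the connectedness of $H$ forces $U_1,U_2$ to coincide with the vertices of $H$ occupying the same intervals in $\sigma^H$, and forces any edge of $G\setminus E(H)$ incident to $W_1\cup W_2$ to have both endpoints in $W_1\cup W_2$; moreover, any vertex of $U_2$ with a neighbor in $V_3$ must have that neighbor equal to $w$ (otherwise the connecting edge would cross $y_iw$ or $y_jw$). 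I would then perform the same swap as in \Cref{cl:matching}: place the surviving vertices of $U_\ell\cap V(G')$ between $x_i,x_j$ (respectively $y_i,y_j$) following $\sigma^H_\ell$, and relocate $W_\ell$ next to $y_{2k+2}$. The resulting drawing $\sigma'$ is valid for $G'$: edges of the reduced $H$ incur no crossings inside the frame; the new edges $x_{2k+1}y_{2k+2}$ and $x_{2k+3}y_{2k+2}$ only cross edges with both endpoints in $W_1\cup W_2$; and $y_{2k+2}$ has no $V_3$-edges in $G'$, so nothing new happens through $w$.

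For the reverse direction, given a drawing $\sigma'$ of $G'$ with at most $k$ crossings, the same pigeonhole argument applied to the $4k+3$ paths that remain in $G'$ yields an uncrossed frame with indices $i\le 2k+1$ and $j\ge 2k+3$; I would then invert the swap, reinserting the deleted vertices of $H$ in their $\sigma^H$-order between $x_i,x_j$ and $y_i,y_j$, and moving $W_1\cup W_2$ into the slot currently containing $y_{2k+2}$. Because the reinstated $V_2$-vertices are all adjacent to $w$ and are placed in their original $\sigma^H_2$-order, the edges to $w$ form a planar nested bundle and contribute no crossings among themselves; the only crossings they can create are with $W$-edges, and these are in bijection with the crossings charged to the two new edges $x_{2k+1}y_{2k+2}, x_{2k+3}y_{2k+2}$ in $\sigma'$, so the total crossing count does not increase.

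The main obstacle is the careful bookkeeping of edges incident to $w$: one must argue that after the swap the relative $\sigma_2$-order of the $w$-neighbors between $y_i$ and $y_j$ is still an initial segment of $\sigma^H_2$, so the pencil of $w$-edges remains planar, and that no edge outside $E(H)$ can cross a single $y_\ell w$ for a middle index $\ell$ without being already counted among the $k$ crossings of the original drawing. Both points follow from the uncrossedness of the four frame edges and the connectedness of $H$, exactly as in \Cref{cl:matching}, which is why the structure of that earlier proof transfers with only the path-based pigeonhole upgrade.
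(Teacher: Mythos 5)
Your proposal starts from the right pigeonhole idea: among the $4k+3$ vertex\mbox{-}disjoint paths $x_\ell y_\ell w$, at most $2k$ can participate in a crossing, so one finds an entirely uncrossed ``frame'' with $i\le 2k+1$ and $j\ge 2k+3$, and then rearranges the middle of $H$ to match $\sigma_H$. (Small quibble: you write ``every crossing can be charged to at most one such path,'' but a crossing involves two edges and can ruin two paths; your arithmetic $4k+3-2k$ is the right one, so this is just a misstatement.)

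Where you diverge from the paper is in carrying over, essentially verbatim, the $W_1,W_2$ bookkeeping from the proof of \Cref{cl:matching}. The entire point of the extra vertex $w$ in \Cref{red:paths-A} is that the two paths $x_iy_iw$ and $x_jy_jw$, together with the segment of $L_1$ between $x_i$ and $x_j$, form a closed curve that isolates the middle of $H$. The paper exploits this: it asserts there is \emph{no} vertex of $(V_1\cup V_2)\setminus (U_1\cup U_2)$ drawn strictly inside the frame, and hence the forward and reverse directions are a pure $U$-rearrangement with no $W$ to relocate and no extra crossings to trade against the newly created $y_{2k+2}$-edges. This is precisely the simplification that the 5-vertex separator buys relative to the 4-vertex separator, and your proposal misses it.

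Moreover, the $W$-analysis you import is not quite correct in this setting. You assert that ``any edge of $G\setminus E(H)$ incident to $W_1\cup W_2$ [has] both endpoints in $W_1\cup W_2$,'' but a $W_2$-vertex can be adjacent to $w\notin W_1\cup W_2$ (the separation only forbids edges to $A\setminus B$, and $w\in A\cap B$). Consequently your reverse-direction ``bijection'' between the crossings of the relocated $W$-vertices and the crossings of the new edges $x_{2k+1}y_{2k+2}$, $x_{2k+3}y_{2k+2}$ does not type-check: the new edges live between $V_1$ and $V_2$, while the edges from $W_2$ to $w$ live between $V_2$ and $V_3$ and cannot cross them. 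Those $w$-edges in fact never cross anything (they share the endpoint $w$ with the other $y_\ell w$ edges), so the conclusion survives for a different reason than the one you give—but as written the claim is wrong, and it is a claim that the paper's proof never needs to make.

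In short: same pigeonhole and rearrangement skeleton, but you kept 4-Separation machinery that the $w$-based enclosure argument renders unnecessary, and the imported claims about $W$ do not transfer correctly to the 5-Separation setting.
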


\begin{proof}
The proof follows the same lines as the proof of \Cref{cl:matching} and is, in fact, simpler. 
Because $H$ has a $2$-layer drawing $\sigma_H=(\sigma_1^H,\sigma_2^H)$ without crossings respecting $(V_1\cap V(H),V_2\cap V(H))$ such that (iv)(a) and (iv)(b) are fulfilled, the endpoints  of the edges of the matching $M$ can be ordered to satisfy the conditions that  $\sigma_1^H(x_1)<\dots<\sigma_1^H(x_{4k+3})$, and 
$\sigma_2^H(y_1)<\dots<\sigma_2^H(y_{4k+3})$. Therefore, the rule is feasible if there is a separation $(A,B)$ satisfying (i)--(v). Also, because $x_{2k+2}$ 
is deleted by the rule, it reduces the size of the graph.

Suppose that the rule is applied for a separation $(A,B)$ and denote by $G'$ the graph obtained by the application of the rule. Let also $D\subseteq V_1\cup V_2$ be the set of deleted vertices, $V_1'=V_1\setminus D$, $V_2'=V_2\setminus D$. 

Let  $\sigma=(\sigma_1,\sigma_2,\sigma_3)$ be a $3$-layer drawing of $G$ respecting $(V_1,V_2,V_3)$ with at most $k$ crossings. Because the number of crossings is at most $k$, there are $i\in\{1,\ldots,2k+1\}$ and $j\in\{2k+3,\ldots,4k+3\}$ such that the edges $x_iy_i$, $y_iw$, $x_jy_j$, and $y_jw$ do not cross any edge. Without loss of generality, we assume that $\sigma_1(x_i)<\sigma_1(x_j)$ and $\sigma_2(y_i)<\sigma_2(y_j)$. 
We define \[U_1=\{z\in V_1\cap V(H)\colon \sigma_1(x_i)<\sigma_1(z)<\sigma_1(x_j)\}\] and \[U_2=\{z\in V_2\cap V(H)\colon \sigma_2(y_i)<\sigma_2(z)<\sigma_2(y_j)\}.\]
 Because $H$ is connected and $x_iy_i$, $y_iw$, $x_jy_j$, and $y_jw$ do not cross any edge, we have that 
\[U_1=\{z\in V_1\cap V(H)\colon \sigma_1^H(x_i)<\sigma_1^H(z)<\sigma_1^H(x_j)\}\] and \[U_2=\{z\in V_2\cap V(H)\colon \sigma_2^H(y_i)<\sigma_2^H(z)<\sigma_2^H(y_j)\}.\]
Moreover, there is no vertex $z\in (V_1\cup V_2)\setminus (U_1\cup U_2)$  such that 
$\sigma_1(x_i)< \sigma_1(z)< \sigma_1(x_j)$ if $z\in V_1$ and $\sigma_2(y_i)< \sigma_2(z)< \sigma_2(y_j)$ if $z\in V_2$.

We  construct the $3$-layer drawing $\sigma'=(\sigma'_1,\sigma'_2,\sigma_3)$ of $G'$  respecting $(V_1',V_2',V_3)$ that has no more crossing than $\sigma$ by modifying $\sigma_1$ and $\sigma_2$. 
We place the vertices of $U_1\cap V(G')$  between $x_i$ and $x_j$ following their order in $\sigma^H_1$ and 
we place the vertices of $U_2\cap V(G')$  between $y_i$ and $y_j$ following their order in $\sigma^H_2$.  
Since there are no crossing edges with their endpoints in $U_1$ or $U_2$, the number of crossings in  $\sigma'$  does not exceed the number of crossings in $\sigma$.

For the opposite direction, the proof is almost the same. 
Assume that there is a $3$-layer drawing $\sigma'=(\sigma'_1,\sigma'_2,\sigma_3)$ of $G'$  respecting $(V_1',V_2',V_3)$ with at most $k$ crossings. We construct  a $3$-layer drawing $\sigma=(\sigma_1,\sigma_2,\sigma_3)$ of $G$  respecting $(V_1,V_2,V_3)$ such that the number of crossings in $\sigma$ is at most the number of crossings in $\sigma'$.
For this, we note that because the number of crossings is at most $k$, there are $i\in\{1,\ldots,2k+1\}$ and $j\in\{2k+3,\ldots,4k+3\}$ such that the edges 
$x_iy_i$, $y_iw$, $x_jy_j$, and $y_jw$  do not cross any edge with respect to $\sigma'$. We can assume that  $\sigma'_1(x_i)<\sigma'_1(x_j)$ and $\sigma'_2(y_i)<\sigma'_2(y_j)$. 
We set  
\[U_1=\{z\in V_1\cap V(H)\colon \sigma_1^H(x_i)<\sigma_1^H(z)<\sigma_1^H(x_j)\}\] and \[U_2= \{z\in V_2\cap V(H)\colon \sigma_2^H(y_i)<\sigma_2^H(z)<\sigma_2^H(y_j)\},\] and set 
\[U_1'=\{z\in V_1'\cap V(H)\colon \sigma'_1(x_i)<\sigma_1(z)<\sigma'_1(x_j)\}\] and \[U_2'=\{z\in V_2'\cap V(H)\colon \sigma'_2(y_i)<\sigma'_2(z)<\sigma'_2(y_j)\}.\] 
Because  the graph $H'$ obtained from $H$ by deleting the vertices of $D$ and making $x_{2k+1}$ and $x_{2k+3}$ adjacent to $y_{2k+2}$ is connected and 
$x_iy_i$, $y_iw$, $x_jy_j$, and $y_jw$  do not cross any edge,  
$U_1'\subseteq U_1$ and $U_2'\subseteq U_2$.
Also, there is no vertex $z\in (V_1\cup V_2)\setminus (U_1'\cup U_2')$  such that 
$\sigma'_1(x_i)\leq \sigma'_1(z)\leq \sigma'_1(x_j)$ if $z\in V_1$ and $\sigma'_2(y_i)\leq \sigma'_2(z)\leq \sigma'_2(y_j)$ if $z\in V_2$.

We construct $\sigma_1$ from $\sigma_1'$ by replacing the vertices of $U_1'$ with the vertices of $U_1$ and putting them between $x_i$ and $x_j$ following their order in $\sigma^H_1$. Similarly,  $\sigma_2$ is constructed from $\sigma'_2$ by replacing the vertices of $U_2'$ with the vertices of $U_2$ and putting them between $y_i$ and $y_j$ following their order in $\sigma^H_2$.  This concludes the proof.
\end{proof}

Similarly to 
4-Separation Rules, 
 there is a symmetric version of \Cref{red:paths-A},  which we state for completeness.

\begin{reduction}[5-Separation B]\label{red:paths-B}
Suppose that there is a separation $(A,B)$ of $G$ with a separator $A\cap B=\{u_1,u_2,v_1,v_2,w\}$ such that
\begin{itemize}
\item[(i)] $G[A]$ is a graph with  $A\cap V_1=\{w\}$, $u_1,u_2\in V_2$, and $v_1,v_2\in V_3$,
\item[(ii)] $H=G[A]-w$ is connected, 
\item[(iii)]  $u_1v_1,u_2v_2,v_1w,v_2w\in E(G)$, 
\item[(iv)] $H$ has a $2$-layer drawing $\sigma_H=(\sigma_2^H,\sigma_3^H)$ without crossings respecting $(V_2\cap V(H),V_3\cap V(H))$, where $\sigma_i^H$ is an order of $V_i\cap V(H)$ for $i\in\{2,3\}$, such that 
(a) for every $x\in V_2\cap V(H)$, $\sigma_2^H(u_1)\leq \sigma_2^H(x)\leq \sigma_2^H(u_2)$ and (b) for every $x\in V_3\cap V(H)$, $\sigma_3^H(v_1)\leq \sigma_3^H(x)\leq \sigma_3^H(v_2)$, 
\item[(v)] $H$ has a matching $M=\{x_1y_1,\ldots,x_{4k+3}y_{4k+3}\}$ of size $4k+3$ with $u_1v_1,u_2v_2\in M$ and $x_1,\ldots,x_{4k+3}\in N_G(w)$.
\end{itemize}
Then assume that $x_1=u_1$, $y_1=v_1$, $x_{4k+1}=u_2$, $y_{4k+1}=v_2$, $\sigma_2^H(x_1)<\dots<\sigma_2^H(x_{4k+3})$, and 
$\sigma_3^H(y_1)<\dots<\sigma_3^H(y_{4k+3})$, and do the following:
\begin{itemize}
\item[(a)] delete every vertex $z\in V_2\cap V(H)$ with $\sigma_2^H(x_{2k+1})<\sigma_2^H(z)<\sigma_2^H(x_{2k+3})$ except $x_{2k+2}$ and delete the edges incident to $x_{2k+2}$,
\item[(b)] delete every vertex $z\in V_3\cap V(H)$ with $\sigma_3^H(y_{2k+1})<\sigma_3^H(z)<\sigma_3^H(y_{2k+3})$,
\item[(c)]  make $x_{2k+2}$ adjacent to $y_{2k+1}$ and $y_{2k+3}$ to maintain connectivity.  
\end{itemize}
\end{reduction}

In the following rule, we consider subgraphs separated by six vertices from two paths of length two. The general idea for this rule is very similar to the 4-Separation Rule. The main difference is that instead of a large matching in graph $H$, we have many  disjoint paths passing from level $V_1$ to $V_3$.

 \begin{figure}[ht]
\centering
\scalebox{0.7}{
\input{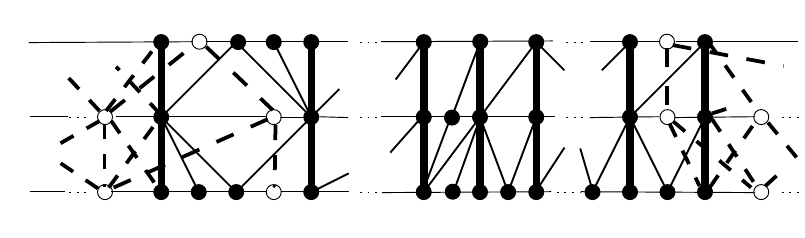_t}}
\caption{Applying \Cref{red:paths-C}; the vertices of $A$ are   black bullets and the vertices of $B\setminus A$ are white; the edges of $H$ are   solid lines and the edges outside $H$ are drawn by dashed lines; the edges of the paths in $P$ are highlighted by thick lines.}
\label{fig:path-C}
\end{figure}

\begin{reduction}[6-Separation]\label{red:paths-C}
Suppose that there is a separation $(A,B)$ of $G$ with a separator $A\cap B=\{u_1,v_1,w_1,u_2,v_2,w_2\}$ such that (see~\Cref{fig:path-C})
\begin{itemize}
\item[(i)] $H=G[A]$ is a connected graph, $u_1,u_2\in V_1$,  $v_1,v_2\in V_2$,  and $w_1,w_3\in V_3$
\item[(ii)]  $u_1v_1,v_1w_1,u_2v_2,v_2w_2\in E(G)$, that is, $u_1v_1w_1$ and $u_2v_2w_2$ are paths of length two, 
\item[(iii)] $H$ has a $3$-layer drawing $\sigma_H=(\sigma_1^H,\sigma_2^H,\sigma_2^H)$ without crossings respecting $(V_1\cap V(H),V_2\cap V(H),V_3\cap V(H))$, where $\sigma_i^H$ is an order of $V_i\cap V(H)$ for $i\in\{1,2,3\}$, such that 
(a) for every $x\in V_1\cap V(H)$, $\sigma_1^H(u_1)\leq \sigma_1^H(x)\leq \sigma_1^H(u_2)$, 
(b) for every $x\in V_2\cap V(H)$, $\sigma_2^H(v_1)\leq \sigma_2^H(x)\leq \sigma_2^H(v_2)$, and
(c) for every $x\in V_3\cap V(H)$, $\sigma_3^H(w_1)\leq \sigma_3^H(x)\leq \sigma_3^H(w_2)$,
\item[(iv)] $H$ has a family of vertex disjoint path $P=\{x_1y_1z_1,\ldots,x_{4k+3}y_{4k+3}z_{4k+3}\}$ of size $4k+3$ with $u_1v_1w_1,u_2v_2w_2\in P$.
\end{itemize}
Then assume that $x_1=u_1$, $y_1=v_1$, $z_1=w_1$, $x_{4k+3}=u_2$, $y_{4k+3}=v_2$, $z_{4k+3}=w_2$,
$\sigma_1^H(x_1)<\dots<\sigma_1^H(x_{4k+3})$, 
$\sigma_2^H(y_1)<\dots<\sigma_2^H(y_{4k+3})$, and 
$\sigma_3^H(z_1)<\dots<\sigma_3^H(z_{4k+3})$
 and do the following:
\begin{itemize}
\item[(a)] delete every vertex $s \in V_1\cap V(H)$ with $\sigma_1^H(x_{2k+1})<\sigma_1^H(s)<\sigma_1^H(x_{2k+3})$,
\item[(b)] delete every vertex $s\in V_2\cap V(H)$ with $\sigma_2^H(y_{2k+1})<\sigma_2^H(s)<\sigma_2^H(y_{2k+3})$ except $y_{2k+2}$ and delete the edges incident to $y_{2k+2}$,
\item[(c)]  delete every vertex $s\in V_3\cap V(H)$ with $\sigma_3^H(z_{2k+1})<\sigma_3^H(s)<\sigma_3^H(z_{2k+3})$,
\item[(d)] make $x_{2k+1}$ and $x_{2k+3}$ adjacent to $y_{2k+2}$ to maintain connectivity. 
\end{itemize}
\end{reduction}

We again note that deleting of vertices and edges is done in such a way that we cannot create new path of length two with the end-vertices in $V_1$ and $V_3$.

\begin{lemma}\label{cl:paths-C}
\Cref{red:paths-C} is safe.
\end{lemma}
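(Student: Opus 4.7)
The proof will follow the same template used for \Cref{cl:matching} and \Cref{cl:paths-A}, extended to three layers. First I will verify that the rule is well-defined: since $\sigma_H$ respects $(V_1\cap V(H), V_2\cap V(H), V_3\cap V(H))$ and has no crossings, one can linearly order the $4k+3$ pairwise vertex-disjoint paths of $P$ consistently in each of the three layers, so the indexing $x_1<\dots<x_{4k+3}$, $y_1<\dots<y_{4k+3}$, $z_1<\dots<z_{4k+3}$ is well-defined. Also the rule strictly reduces the number of vertices of $G$ because $z_{2k+2}$ (among others) is deleted and the only vertex reinserted is the already existing $y_{2k+2}$.

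For the forward direction, assume $\sigma=(\sigma_1,\sigma_2,\sigma_3)$ is a drawing of $G$ respecting $(V_1,V_2,V_3)$ with at most $k$ crossings. Each path $x_\ell y_\ell z_\ell$ contributes two edges, and the paths are vertex-disjoint; therefore, by a pigeonhole argument on the at most $k$ crossings, there exists $i\in\{1,\ldots,2k+1\}$ such that both edges $x_iy_i$ and $y_iz_i$ are crossing-free, and there exists $j\in\{2k+3,\ldots,4k+3\}$ with both $x_jy_j$ and $y_jz_j$ crossing-free. After possibly reversing the drawing, we may assume $\sigma_1(x_i)<\sigma_1(x_j)$, $\sigma_2(y_i)<\sigma_2(y_j)$, and $\sigma_3(z_i)<\sigma_3(z_j)$. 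Define for $\ell\in\{1,2,3\}$ the interior set $U_\ell$ of $V_\ell\cap V(H)$ between the $i$-th and $j$-th anchor. Because $H$ is connected and the four paths $x_iy_i, y_iz_i, x_jy_j, y_jz_j$ are drawn without crossings, the sets $U_\ell$ coincide with the corresponding intervals defined from $\sigma_H$, and, crucially, no vertex outside $V(H)$ lies in the corresponding interval on any layer (otherwise it would force a crossing with one of the four crossing-free bracketing edges). I will then construct $\sigma'=(\sigma_1',\sigma_2',\sigma_3')$ from $\sigma$ by reordering, for each $\ell$, the vertices of $U_\ell\cap V(G')$ strictly between the anchors according to $\sigma_\ell^H$, leaving all other vertices fixed. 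The only edges whose crossing status can change are the edges incident to vertices in $U_\ell$ and the newly introduced edges $x_{2k+1}y_{2k+2}$ and $x_{2k+3}y_{2k+2}$; all such edges lie entirely inside the ``window'' bracketed by the two crossing-free paths, so they meet only edges of $H\cap G'$. Since that window is ordered according to $\sigma_H$, and since $\sigma_H$ has no crossings, the only potential crossings involving the new edges are between them and themselves, which is impossible as they share endpoint $y_{2k+2}$. Therefore $\sigma'$ has at most as many crossings as $\sigma$.

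For the backward direction, assume $\sigma'=(\sigma_1',\sigma_2',\sigma_3')$ is a drawing of $G'$ with at most $k$ crossings. By the same pigeonhole argument applied inside $G'$, find indices $i\in\{1,\ldots,2k+1\}$ and $j\in\{2k+3,\ldots,4k+3\}$ (note these survive in $G'$ since only strictly interior indices were deleted) such that the paths $x_iy_iz_i$ and $x_jy_jz_j$ are drawn crossing-free in $\sigma'$. Observing that the graph $H'$ obtained from $H$ by the rule is still connected (this is the purpose of re-inserting $y_{2k+2}$ with the two glue edges), and reusing the same enclosure argument, I will conclude that the ``window'' between the two crossing-free anchor-paths contains on each layer only vertices of $V(H')$; in particular the sets $U_\ell':=U_\ell\cap V(G')$ sit exactly in that window. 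We then obtain $\sigma$ by replacing, layer by layer, the vertices of $U_\ell'$ inside the window with the full set $U_\ell$ ordered according to $\sigma_\ell^H$; the remaining vertices keep their relative order. As $H$ admits the crossing-free drawing $\sigma_H$ consistent with this insertion, no new crossing is created inside the window, and crossings involving the glue edges $x_{2k+1}y_{2k+2}$ and $x_{2k+3}y_{2k+2}$ in $\sigma'$ are absorbed by crossings of the corresponding original edges $x_{2k+1}y_i$-like paths inside $H$ (or disappear entirely), so the number of crossings does not increase.

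The main obstacle will be the careful bookkeeping around the two extra glue edges $x_{2k+1}y_{2k+2}$ and $x_{2k+3}y_{2k+2}$: they alter the combinatorics of the edges spanning $V_1$ and $V_2$ in the reduced instance, and one must verify that neither direction of the equivalence creates spurious crossings through them. The key insight that makes this work, exactly as in \Cref{cl:paths-A}, is that the rule deletes all original edges incident to $y_{2k+2}$, so $y_{2k+2}$ becomes a vertex whose only neighbors lie in the ``clean'' window, and hence its placement and the two glue edges are fully confined to the window enclosed by the crossing-free anchor paths.
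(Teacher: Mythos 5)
Your proposal is correct and follows essentially the same approach as the paper's proof: pigeonhole to find two crossing-free anchor paths $x_iy_iz_i$ and $x_jy_jz_j$, define the layer-wise window sets $U_\ell$, use connectivity of $H$ (resp.\ $H'$) to show the window is enclosed and contains only $H$-vertices, and then reorder the window according to $\sigma_H$. One small caveat: your justification for why the glue edges $x_{2k+1}y_{2k+2}$, $x_{2k+3}y_{2k+2}$ are crossing-free after reordering (``since $\sigma_H$ has no crossings'') is loose, since these edges are not part of $\sigma_H$; the clean reason is that after the deletions $x_{2k+1},x_{2k+3}$ are consecutive in the $V_1$-order of the window and $y_{2k+1},y_{2k+2},y_{2k+3}$ are consecutive in the $V_2$-order, so the glue edges are pinched between the crossing-free edges $x_{2k+1}y_{2k+1}$ and $x_{2k+3}y_{2k+3}$. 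The paper also does not spell this out, so this is a stylistic rather than a substantive gap.
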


\begin{proof}
Again, we exploit the same ideas as in the proof of \Cref{cl:matching}; now, we use the family of paths $P$ instead of the matching $M$.  The proof is very similar to the proof of \Cref{cl:paths-A}.
Because $H$ has a $3$-layer drawing $\sigma_H=(\sigma_1^H,\sigma_2^H,\sigma_3^H)$ without crossings respecting $(V_1\cap V(H),V_2\cap V(H),V_3\cap V(H))$ such that (iii)(a)--(c)  are fulfilled, the vertices of the paths of $P$ can be ordered 
 to satisfy the conditions that  $\sigma_1^H(x_1)<\dots<\sigma_1^H(x_{4k+3})$, $\sigma_2^H(y_1)<\dots<\sigma_2^H(y_{4k+3})$, and
  $\sigma_3^H(y_1)<\dots<\sigma_3^H(y_{4k+3})$. Thus, the rule is feasible. Also, the rule deletes at least two vertices.   
 Suppose that the rule is applied for a separation $(A,B)$ and denote by $G'$ the graph obtained by applying the rule. Let also $D\subseteq V(H)$ be the set of deleted vertices, $V_1'=V_1\setminus D$, $V_2'=V_2\setminus D$, and $V_3'=V_3\setminus D$.

Assume that   $\sigma=(\sigma_1,\sigma_2,\sigma_3)$ is a $3$-layer drawing of $G$ respecting $(V_1,V_2,V_3)$ with at most $k$ crossings. Because the number of crossings is at most $k$, there are $i\in\{1,\ldots,2k+1\}$ and $j\in\{2k+3,\ldots,4k+3\}$ such that the edges $x_iy_i$, $y_iz_i$, $x_jy_j$, and $y_jz_j$ do not cross any edge. Without loss of generality, we assume that $\sigma_1(x_i)<\sigma_1(x_j)$, $\sigma_2(y_i)<\sigma_2(y_j)$, and $\sigma_3(z_i)<\sigma_3(z_j)$. 
We define \[U_1=\{s\in V_1\cap V(H)\colon \sigma_1(x_i)<\sigma_1(s)<\sigma_1(x_j)\}\] \[U_2=\{s\in V_2\cap V(H)\colon \sigma_2(y_i)<\sigma_2(s)<\sigma_2(y_j)\},\]  and 
\[U_3=\{s\in V_3\cap V(H)\colon \sigma_3(z_i)<\sigma_2(s)<\sigma_3(z_j)\}.\]
 Because $H$ is connected and $x_iy_i$, $y_iz_i$, $x_jy_j$, and $y_jz_j$ do not cross any edge, we have that 
\[U_1=\{s\in V_1\cap V(H)\colon \sigma_1^H(x_i)<\sigma_1^H(s)<\sigma_1^H(x_j)\},\] \[U_2=\{s\in V_2\cap V(H)\colon \sigma_2^H(y_i)<\sigma_2^H(s)<\sigma_2^H(y_j)\},\] and
\[U_3=\{s\in V_3\cap V(H)\colon \sigma_3^H(z_i)<\sigma_3^H(s)<\sigma_3^H(z_j)\}.\]
Also, there is no vertex $s\in V(G)\setminus (U_1\cup U_2\cup U_3)$
such that $\sigma_1(x_i)< \sigma_1(s)< \sigma_1(x_j)$ if $s\in V_1$, $\sigma_2(y_i)<\sigma_2(s)< \sigma_2(y_j)$ if $s\in V_2$, and
$\sigma_3(z_i)< \sigma_3(s)< \sigma_3(z_j)$ if $s\in V_3$.

We construct a  $3$-layer drawing $\sigma'=(\sigma'_1,\sigma'_2,\sigma'_3)$ of $G'$  respecting $(V_1',V_2',V'_3)$ by modifying $\sigma$ as follows.
We place the vertices of $U_1\cap V(G')$  between $x_i$ and $x_j$ following their order in $\sigma^H_1$,
we place the vertices of $U_2\cap V(G')$  between $y_i$ and $y_j$ following their order in $\sigma^H_2$, and
we place the vertices of $U_3\cap V(G')$  between $z_i$ and $z_j$ following their order in $\sigma^H_2$.  
Because there are no crossing edges with their endpoints in $U_1$, $U_2$, or $U_3$, the number of crossings in  $\sigma'$  is at most the number of crossings in $\sigma$.

For the opposite direction, assume that there is a $3$-layer drawing $\sigma'=(\sigma'_1,\sigma'_2,\sigma'_3)$ of $G'$  respecting $(V_1',V_2',V'_3)$ with at most $k$ crossings. We construct  a $3$-layer drawing $\sigma=(\sigma_1,\sigma_2,\sigma_3)$ of $G$  respecting $(V_1,V_2,V_3)$ such that the number of crossings in $\sigma$ is at most the number of crossings in $\sigma'$. Because the number of crossings is at most $k$, there are $i\in\{1,\ldots,2k+1\}$ and $j\in\{2k+3,\ldots,4k+3\}$ such that the edges 
$x_iy_i$, $y_iz_i$, $x_jy_j$, and $y_jz_i$  do not cross any edge with respect to $\sigma'$. We can assume that  $\sigma'_1(x_i)<\sigma'_1(x_j)$, $\sigma'_2(y_i)<\sigma'_2(y_j)$,  and
$\sigma'_3(z_i)<\sigma'_3(z_j)$.
Let   
\[U_1=\{s\in V_1\cap V(H)\colon \sigma_1^H(x_i)<\sigma_1^H(s)<\sigma_1^H(x_j)\},\]   \[U_2= \{s\in V_2\cap V(H)\colon \sigma_2^H(y_i)<\sigma_2^H(s)<\sigma_2^H(y_j)\},\] 
and 
\[U_3= \{s\in V_3\cap V(H)\colon \sigma_3^H(z_i)<\sigma_3^H(s)<\sigma_3^H(z_j)\}.\] We set  
\[U_1'=\{s\in V_1'\cap V(H)\colon \sigma'_1(x_i)<\sigma_1(s)<\sigma'_1(x_j)\},\] \[U_2'=\{s\in V_2'\cap V(H)\colon \sigma'_2(y_i)<\sigma'_2(s)<\sigma'_2(y_j)\},\] and
\[U_3'=\{s\in V_3'\cap V(H)\colon \sigma'_3(z_i)<\sigma'_3(s)<\sigma'_3(z_j)\}.\]
Because  the graph $H'$ obtained from $H$ by deleting the vertices of $D$ and making $x_{2k+1}$ and $x_{2k+3}$ adjacent to $y_{2k+2}$ is connected and 
$x_iy_i$, $y_iz_i$, $x_jy_j$, and $y_jz_j$  do not cross any edges,  
$U_1'\subseteq U_1$, $U_2'\subseteq U_2$,  and $U_3'\subseteq U_3$.
Furthermore, there is no vertex $s\in V(G)\setminus (U'_1\cup U'_2\cup U'_3)$  such that 
$\sigma'_1(x_i)< \sigma'_1(s)< \sigma'_1(x_j)$ if $s\in V_1$, $\sigma'_2(y_i)< \sigma'_2(s)< \sigma'_2(y_j)$ if $s\in V_2$, and
 $\sigma'_3(z_i)< \sigma'_3(s)< \sigma'_3(y_j)$ if $s\in V_3$.

We construct $\sigma_1$ from $\sigma_1'$ by replacing the vertices of $U_1'$ with the vertices of $U_1$ and putting them between $x_i$ and $x_j$ following their order in $\sigma^H_1$. Similarly,  $\sigma_2$ is constructed from $\sigma'_2$ by replacing the vertices of $U_2'$ with the vertices of $U_2$ and putting them between $y_i$ and $y_j$ following their order in $\sigma^H_2$, and $\sigma_3$ is constructed from $\sigma'_3$ by replacing the vertices of $U_3'$  by the vertices of $U_3$ and putting them between $z_i$ and $z_j$ following their order in $\sigma^H_3$.  This concludes the proof of the claim.
\end{proof}
 
 Recall that the reduction rules are applied exhaustively. We show that if no rule could be applied then for yes-instances, the size of the graph is bounded. 
 
Suppose that  $(G,V_1,V_2,V_3,k)$ is a yes-instance of \probThreeCr. Then there is a $3$-layer drawing $\sigma=(\sigma_1,\sigma_2,\sigma_3)$  of $G$ respecting $(V_1,V_2,V_3)$ with at most $k$ crossings. We denote by $D$ the endpoints of crossing edges.  Then $|D|\leq 4k$, $|D\cap V_2|\leq 2k$, and $|D\cap (V_1\cup V_3)|\leq 2k$.
First, we make the following observation.

\begin{observation}\label{obs:deg-V-two} 
Suppose that  \Cref{red:pend} cannot be applied for $(G,V_1,V_2,V_3,k)$. Then
for every $u\in V_2$ and $i\in\{1,3\}$, $|N_G(u)\cap (V_i\setminus D)|\leq k+3$. 
\end{observation}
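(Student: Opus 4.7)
The plan is to split the neighbors of $u$ in $V_i \setminus D$ into two groups and bound each separately. By symmetry I will assume $i = 1$. The first group consists of pendant neighbors of $u$, which lie in $P_1(u)$; since \Cref{red:pend} cannot be applied, $|P_1(u)| \le k+1$, so this group contributes at most $k+1$ vertices. The second group consists of non-pendant neighbors of $u$ that lie in $V_1 \setminus D$; I will argue that this set has size at most $2$, which will give the claimed bound $k + 3$.

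To bound the non-pendant neighbors, suppose for contradiction that there exist three such vertices $v_1, v_2, v_3$. Since $\sigma_1$ totally orders them, I may assume $\sigma_1(v_1) < \sigma_1(v_2) < \sigma_1(v_3)$. The key observation is that $v_2$ is a non-pendant, and since $V(G) = V_1 \cup V_2 \cup V_3$ with edges only between consecutive layers, every neighbor of $v_2 \in V_1$ lies in $V_2$; hence $v_2$ has some neighbor $w \in V_2 \setminus \{u\}$. Because $v_2 \notin D$, the edge $v_2w$ does not cross any other edge in the drawing $\sigma$.

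The heart of the argument is then the standard crossing criterion between $V_1$–$V_2$ edges: two disjoint edges cross exactly when their endpoints are ``twisted'' in the two layers. Applying this to the pair $(uv_1, v_2w)$ (whose $V_1$-endpoints satisfy $\sigma_1(v_1) < \sigma_1(v_2)$) shows that non-crossing forces $\sigma_2(w) > \sigma_2(u)$. Applying it to the pair $(uv_3, v_2w)$ (whose $V_1$-endpoints satisfy $\sigma_1(v_2) < \sigma_1(v_3)$) forces $\sigma_2(w) < \sigma_2(u)$. These two conclusions contradict each other, so no such triple $v_1, v_2, v_3$ exists.

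I do not foresee a real obstacle: the pendant bound is a direct consequence of exhaustiveness of \Cref{red:pend}, and the non-pendant bound reduces to a short planarity-style argument using only that vertices outside $D$ have no incident crossings. The only thing to be slightly careful about is to pick $w$ as a neighbor of $v_2$ in $V_2 \setminus \{u\}$, which is legitimate precisely because $v_2$ is non-pendant and all its neighbors lie in $V_2$.
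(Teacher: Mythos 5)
Your proof is correct and takes essentially the same approach as the paper: the paper also observes that at most two vertices of $N_G(u)\cap(V_i\setminus D)$ (the first and last with respect to $\sigma_i$) can have a neighbour other than $u$, since an intermediate one with such a neighbour $w$ would force the non-crossing edge $v_2w$ to cross $uv_{\min}$ or $uv_{\max}$, and it then bounds the remaining (pendant) neighbours by $k+1$ via the inapplicability of \Cref{red:pend}. Your three-vertex contradiction is precisely this argument spelled out.
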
 
 
 \begin{proof}
 Let $u\in V_2$. We observe that the vertices of $N_G(u)\cap (V_i\setminus D)$ for $i\in\{1,3\}$ should be consecutive in $\sigma_i$ because the edges $uv$ with $v\in N_G(u)\cap (V_i\setminus D)$ do not cross any edge. Furthermore, at most two of them, namely the first and the last vertex of $N_G(u)\cap (V_i\setminus D)$ with respect to $\sigma_i$, may have neighbors distinct from $u$.  This implies that $|N_G(u)\cap (V_i\setminus D)|\leq k+3$ because, otherwise,  \Cref{red:pend} would be applied. 
 This concludes the proof. 
 \end{proof}
 
 For two (not necessarily distinct) vertices $u,v\in V_2$ with $\sigma_2(u)\leq \sigma_2(v)$, we say 
that $[u,v]=\{w\in V_2\colon \sigma_2(u)\leq \sigma_2(w)\leq \sigma_2(v)\}$ is an \emph{interval} of $V_2$. For interval $[u,v]$, we use  
$|[u,v]|$ to denote its size, that is, the number of vertices in $[u,v]$.
 We say that an interval $[u,v]$ is  \emph{safe}  if $[u,v]$ contains no vertex incident with a crossing edge; $[u,v]$ is a \emph{maximal} safe interval if 
 $[u,v]$ is an inclusion maximal safe interval. 
  Note that since the number of crossings is at most $k$, at most $2k$ vertices of $V_2$ are incident with crossing edges. Therefore, the number of maximal safe intervals is at most $2k+1$. 
 
\begin{lemma}\label{lem:safe-one}
Suppose that \Cref{red:comp}--\Cref{red:matching-B} cannot be applied for  $(G,V_1,V_2,V_3,k)$ and 
let  $[x,y]\subseteq V_2$ be a safe interval such that any $u\in[x,y]$ has neighbors either only in $V_1$ or only in $V_3$. 
Then $|[x,y]|\leq 4(2k+3)(4k+2)$.
\end{lemma}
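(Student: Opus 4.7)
My plan is to split $[x,y]$ into $A=\{u\in[x,y]:N_G(u)\subseteq V_1\}$ and $B=\{u\in[x,y]:N_G(u)\subseteq V_3\}$, so that $|[x,y]|=|A|+|B|$. The two sides are completely symmetric (swapping $V_1 \leftrightarrow V_3$ and using \Cref{red:matching-B} in place of \Cref{red:matching-A}), so it suffices to prove $|A|\leq 2(2k+3)(4k+2)$.

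The tool I would use is the bipartite subgraph $H=G[A\cup N_G(A)]$ with parts in $V_1$ and $V_2$. Since $[x,y]$ is a safe interval, no edge of $H$ participates in any crossing of $\sigma$, so the restriction of $\sigma$ to $H$ is a $2$-layer drawing without crossings respecting $(N_G(A),A)$. By \Cref{obs:caterpillar}, every connected component of $H$ is a caterpillar, and moreover the $V_2$-parts of distinct components form disjoint intervals of $\sigma_2|_A$ while their $V_1$-parts form disjoint intervals of $\sigma_1|_{N_G(A)}$ (if not, two components would share a vertex and hence be the same). Let $C_1,\ldots,C_m$ enumerate the components of $H$ that contain at least one vertex of $A$, in $\sigma_2$-order.

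The first step is to bound $m\leq 2k+3$. Since \Cref{red:comp} cannot be applied, no $C_i$ is itself a connected component of $G$. Every vertex of $V(C_i)\cap A$ has, by definition of $A$, all its $G$-neighbors inside $V(C_i)$; so the only way $C_i$ can be attached to the rest of $G$ is through some $V_1$-vertex of $C_i$ adjacent to a vertex $w_i\in V_2\setminus[x,y]$. Each such ``outlet'' $w_i$ is either a crossing endpoint (there are at most $2k$ of these in $V_2$) or lies in another safe interval; since $w_i$ is forced to lie immediately to one side of $[x,y]$ in $\sigma_2$ due to non-crossing of the $V_1$-edges of $C_i$, the at most two immediately neighboring safe intervals contribute at most a constant, giving $m\leq 2k+3$.

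The second step is to show $|V(C_i)\cap A|\leq 2(4k+2)$ for every~$i$. Assume the contrary and consider the caterpillar $C_i$. Because \Cref{red:pend} bounds by $k+1$ the number of $V_1$-pendants adjacent to any single $V_2$-vertex, and \Cref{red:nice} bounds by $2k+1$ the number of pendant $V_2$-neighbors of any $V_1$-vertex (each such pendant is a nice component of $G-u$), the spine of $C_i$ must contain at least $8k+6$ vertices once $|V(C_i)\cap A|$ exceeds $2(4k+2)$; consequently $C_i$ contains a matching $M$ of size at least $4k+3$ along its spine. Let $u_1v_1$ and $u_2v_2$ be the leftmost and rightmost edges of $M$ in $\sigma$, with $u_1,u_2\in V_1$ and $v_1,v_2\in V_2$. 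I would take $A^\star$ to be the vertex set consisting of all vertices of $V(C_i)$ lying between these two edges in both $\sigma_1$ and $\sigma_2$, together with $\{u_1,u_2,v_1,v_2\}$. Using Step 1, the at most $m-1\leq 2k+2$ ``outlet'' $V_1$-vertices of $C_i$ can be avoided by sliding the chosen sub-matching inside $M$ to a block of $4k+3$ consecutive edges free of outlets. With this choice, every $V_1$-vertex of $A^\star\setminus\{u_1,u_2\}$ has all its $V_2$-neighbors inside $V(C_i)\cap A\subseteq A^\star$, and the four conditions (i)--(iv) of \Cref{red:matching-A} are satisfied, contradicting its inapplicability.

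The main obstacle will be the clean separator verification in Step 2: I must rule out that any $V_1$-vertex strictly between $u_1$ and $u_2$ has an edge to $V_2\setminus[x,y]$. The sliding trick inside $M$ handles this by exploiting the $O(k)$ bound on the number of outlet vertices from Step 1, and this is precisely the reason the per-component bound is $2(4k+2)$ rather than $4k+2$. Summing the bound over the at most $2k+3$ components yields $|A|\leq 2(2k+3)(4k+2)$; the same argument applied to $B$ with $V_3$, $H_3 = G[B\cup N_G(B)]$, and \Cref{red:matching-B} yields $|B|\leq 2(2k+3)(4k+2)$, and adding the two gives $|[x,y]|\leq 4(2k+3)(4k+2)$.
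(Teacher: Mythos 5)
Your decomposition into components of $H = G[A \cup N_G(A)]$ is a genuinely different route from the paper's, but Step~2 has a quantitative gap that appears fatal. The paper never looks at components of $H$: it passes to a single connected component of $G$ (using \Cref{red:comp} to argue $X$ meets at most two $G$-components), shows $|N_G(w)\cap Y|\le 2k+3$ for every $w\in V_1$ via \Cref{red:nice}, and then deduces directly from $|Y|>(2k+3)(4k+2)$ that at least $4k+3$ distinct $V_1$-vertices touch $Y$, yielding a matching of size $4k+3$ across the whole interval, not inside one caterpillar. It is one global counting argument, and both the matching and the set $A$ fed into \Cref{red:matching-A} span all of $Y$.

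The problem with your Step~2 is the claim that $|V(C_i)\cap A| > 2(4k+2)$ forces the spine of $C_i$ to have at least $8k+6$ vertices. The implication runs the wrong way. Each $V_1$-vertex on the spine may carry up to $2k+1$ pendant $V_2$-leaves, all of which lie in $A$; so if the spine has $t$ vertices in $V_1$ (and hence at most $t+1$ in $V_2$), then $|V(C_i)\cap A|$ can be as large as $(t+1)+(2k+1)t=(2k+2)t+1$ without the spine being any longer. To force $t$ large enough for a spine-matching of size $4k+3$ you would need $|V(C_i)\cap A|$ on the order of $(2k+2)(4k+3)=\Theta(k^2)$, not the $8k+4=\Theta(k)$ threshold you posit. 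Equivalently, a caterpillar with $\Theta(k^2)$ vertices of $A$ and a constant-length spine is untouched by \Cref{red:matching-A}, so the per-component bound you can actually prove this way is $\Theta(k^2)$. Combined with $m=\Theta(k)$ components this gives $|A|=\Theta(k^3)$, a factor of $k$ worse than the $\Theta(k^2)$ bound the lemma asserts. This is precisely the loss the paper's global degree-bound argument avoids.

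Two secondary issues. In Step~1 you count ``crossing-endpoint'' $V_2$-vertices $w_i$, but two components could share the same $w_i$ via two distinct crossing edges; the right object to count is outlet \emph{edges} (each component supplies one, they are pairwise distinct since components are vertex-disjoint, and there are at most $2k$ crossing edges), and you should also make explicit the argument that at most one component per side of $[x,y]$ can have a non-crossing outlet (two non-crossing outlets on the same side would force one of their $V_1$-attachments to cross a safe-interval edge of the other component). And the ``sliding'' trick is unsound even granting a long spine: with up to $2k+2$ outlet $V_1$-vertices scattered across roughly $4k+3$ spine $V_1$-vertices, pigeonhole gives you a run of length about two, not a run of $4k+3$ consecutive outlet-free matching edges.
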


\begin{proof}
For the sake of contradiction, assume that  there is $[x,y]$ with $|[x,y]|> 4(2k+3)(4k+2)$. 
Then there is  $X\subseteq [x,y]$ with $|X|>2(2k+3)(4k+2)$ such that either every vertices of $X$ has neighbors only in $V_1$ or every vertex of $X$ has neighbors only in $V_3$. By symmetry, we assume without loss of generality that $N_G(X)\subseteq V_1$.   

We observe that the vertices of $X$ are included in at most two connected components of $G$. Otherwise, if there are three vertices $a,b,c\in X$ in distinct connected components of $G$ such that $\sigma_2(a)<\sigma_2(b)<\sigma_2(c)$, the connected component $H$ containing $b$ has no crossing edges. This contradicts the assumption that \Cref{red:comp} cannot be applied. Thus, there is $Y\subseteq X$ such that  $|Y|>(2k+3)(4k+2)$ and the vertices of $Y$ are in the same connected component. 
 
Consider a vertex $w\in V_1$ that has neighbors in $Y$. Notice that at most two neighbors of $w$ in $Y$, namely, the first and the last neighbor with respect to $\sigma_2$, may be adjacent to vertices distinct from $w$ because the edges $wy$ with $y\in Y$ do not cross any edge and every $y\in Y$ has neighbors only in $V_1$. 
Then because \Cref{red:nice} cannot be applied, $|N_G(w)\cap Y|\leq 2k+3$. Furthermore, if $|N_G(w)\cap Y|= 2k+3$ then at least two neighbors of $w$ have other (and distinct) neighbors in $V_1$.
As $|Y|>(2k+3)(4k+2)$, we have that at least $4k+3$ vertices of $V_2$ have distinct neighbors in $Y$, that is, there a matching $M$ of size $4k+3$ with endpoints of the edges in $Y$ and $V_1$. The edges of $M$ do not cross any edge. Let $u_1$ and $u_2$ be the first and the last endpoint, respectively, in $V_1$ with respect to $\sigma_1$, and let $v_1$ and $v_2$ be the first and the last endpoint, respectively, with respect to $\sigma_2$.  Let 
$A_1=\{w\in V_1\colon \sigma_1(u_1)\leq \sigma_1(w)\leq \sigma_1(u_2)\}$ and 
$A_2=\{w\in [x,y]\colon \sigma_2(v_1)\leq \sigma_2(w)\leq \sigma_2(v_2)\text{ and }N_G(w)\cap V_3=\emptyset\}$. 
We obtain that \Cref{red:matching-A} can be applied for $A=A_1\cup A_2$ and $u_1,u_2,v_1,v_2$. However, this contradicts the assumption that the reduction rules are not applicable.
We conclude that $|[x,y]|\leq 4(2k+3)(4k+2)$.
\end{proof} 
 
 \begin{lemma}\label{lem:safe-two}
 Suppose that \Cref{red:comp}--\Cref{red:paths-B} cannot be applied for  $(G,V_1,V_2,V_3,k)$ and 
let  $[u,v]\subseteq V_2$ be a safe interval. Let also $Z\subseteq [u,v]$ be the set of vertices having neighbors in both $V_1$ and $V_3$.
Then for any $w\in V_1\cup V_3$, $w$ has at most $(k+3)(4k+2)(2k+3)$ neighbors in $Z$.
 \end{lemma}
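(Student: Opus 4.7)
The plan is to argue by contradiction, mirroring the strategy of the proof of \Cref{lem:safe-one} but replacing the $4$-separation rule by its $5$-separation counterpart \Cref{red:paths-B} (the case $w\in V_3$ is analogous, using \Cref{red:paths-A}). So assume $w\in V_1$, write $N:=N_G(w)\cap Z$, and suppose for contradiction that $|N|>(k+3)(4k+2)(2k+3)$. Because $[u,v]$ is safe, every edge incident to a vertex of $N$ is uncrossed in $\sigma$.

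The first step is to establish the pointwise bound $|N_G(x)\cap N|\le k+3$ for every $x\in V_3$. Writing $T:=N_G(x)\cap N$ and listing its elements in $\sigma_2$-order as $z_1,\ldots,z_r$, the four uncrossed edges $wz_1,wz_r,xz_1,xz_r$ combined with the alternation argument familiar from \Cref{lem:safe-one} force every interior vertex $z_i$ with $1<i<r$ to have no $V_1$-neighbor other than $w$ and no $V_3$-neighbor other than $x$; hence $\{z_2,\ldots,z_{r-1}\}\subseteq Q(w,x)$, and \Cref{red:degtwo} gives $r-2\le k+1$. A K\"onig-style vertex-cover argument on the bipartite graph between $N$ and $V_3$, whose right-side degrees are bounded by $k+3$, then produces a matching $M=\{z_1x_1,\ldots,z_mx_m\}$ of size $m\ge |N|/(k+3)>(4k+2)(2k+3)$; ordering the edges so that $\sigma_2(z_1)<\cdots<\sigma_2(z_m)$ automatically gives $\sigma_3(x_1)<\cdots<\sigma_3(x_m)$ because the edges $z_ix_i$ are pairwise non-crossing.

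Next I would push the matching into a single nice component of $G-w$ and apply \Cref{red:paths-B}. For every $z_j$ whose only $V_1$-neighbor is $w$, the component of $G-w$ containing $z_j$ is contained in $V_2\cup V_3$ (otherwise some $V_1$-vertex $w'\ne w$ would belong to it), its $V_2$-part lies entirely in $N\subseteq[u,v]$, and the drawing inherited from $\sigma$ is crossing-free, so this component is a nice component of $G-w$. By \Cref{red:nice} the number of such nice components is at most $2k+1$, so after discarding the few matching edges whose $V_2$-endpoint has additional $V_1$-neighbors, a pigeonhole on the remaining matching edges yields a single nice component $C$ that hosts at least $4k+3$ matching edges. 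Applying \Cref{red:paths-B} to $A:=\{w\}\cup V(C)$: conditions (i) and (ii) are immediate; the separator vertices $u_1,u_2,v_1,v_2$ are chosen as the first and last $V_2$- and $V_3$-vertices of $C$, where $u_1,u_2\in V_2\cap V(C)\subseteq N\subseteq N_G(w)$ and the extremal adjacencies $u_1v_1,u_2v_2\in E(G)$ follow from the classical fact that in any crossing-free $2$-layer drawing of a connected bipartite graph the leftmost (resp.\ rightmost) vertices on the two layers are adjacent, so (iii) and (iv) are satisfied; finally (v) is obtained by rebuilding the matching inside $C$ to include the boundary edges $u_1v_1$ and $u_2v_2$, yielding $4k+3$ matching edges with all $V_2$-endpoints in $N_G(w)$. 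This contradicts the hypothesis that \Cref{red:paths-B} has already been applied exhaustively.

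The main obstacle is precisely the filtering step: a matching edge whose $V_2$-endpoint has additional $V_1$-neighbors need not live in a nice component of $G-w$, so the nice-component pigeonhole only bites on a sub-matching. Controlling the number of such troublesome matching edges, either by arguing via \Cref{obs:deg-V-two} that each additional $V_1$-vertex $w'\ne w$ is adjacent to at most $k+3$ vertices of $N$ (via the same alternation argument applied to the pair $(w,w')$), or by invoking \Cref{red:matching-A} against the bipartite substructure generated by $w'$, is the most delicate part of the proof and is exactly what drives the cubic-in-$k$ factor $(k+3)(4k+2)(2k+3)$ in the statement.
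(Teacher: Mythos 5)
Your opening step is sound and matches the paper: the bound $|N_G(x)\cap N|\leq k+3$ for $x\in V_3$ follows from \Cref{red:degtwo} exactly as you describe, and extracting a system of distinct $V_3$-representatives of size $>(4k+2)(2k+3)$ is the right move. From that point onward, however, the approach diverges and runs into two genuine problems. First, the claim that ``for every $z_j$ whose only $V_1$-neighbor is $w$, the component of $G-w$ containing $z_j$ is contained in $V_2\cup V_3$'' does not follow from the stated hypothesis: the constraint is on $z_j$'s neighborhood only, while the component is reachable from $z_j$ via paths through $V_3$ and other $V_2$-vertices, any of which could have a $V_1$-neighbor distinct from $w$. (Likewise the claim that the $V_2$-part of the component lies in $N$ fails, since $V_2$-vertices of the component need not be adjacent to $w$ at all.) Second, as you yourself flag, there is no a priori bound on the number of matching edges whose $V_2$-endpoint has an extra $V_1$-neighbor; in the worst case this could be all of them, and nothing in the setup rules that out. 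So the pigeonhole into a single nice component hosting $4k+3$ matching edges is not forced.

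The paper avoids both issues by replacing the filter-then-pigeonhole plan with a clean dichotomy on the components of $G-w$ meeting $W$. Either some component contains at least $4k+3$ vertices of $W$, in which case one directly applies \Cref{red:paths-B} to that component (the matching edges $x_iy_i$ are pairwise non-crossing because $[u,v]$ is safe, so the $\sigma_2$- and $\sigma_3$-orders of their endpoints agree, and no niceness of the component is needed); or every such component has at most $4k+2$ vertices of $W$, forcing at least $2k+4$ distinct components. In the latter case the decisive observation is \emph{geometric} rather than adjacency-based: any component that contains neither the first nor the last vertex of $W$ (with respect to $\sigma_2$) is trapped between the two uncrossed edges from $w$ to those extremal $W$-vertices, so all its $V_2$-vertices lie inside $[u,v]$, it contains no $V_1$-vertex, its induced drawing has no crossings, and it is therefore nice --- regardless of whether individual vertices have extra $V_1$-neighbors. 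This yields at least $2k+2$ nice components and a contradiction via \Cref{red:nice}. That dichotomy, together with the trapping argument replacing your local neighborhood condition, is the missing idea; once you have it, no filtering is needed and the stated cubic bound drops out exactly.
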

 
 \begin{proof}
The proof is by contradiction. Suppose that there is $w\in V_1\cup V_3$ such that for $U=N_G(w)\cap Z$,
it holds that $|U|> (k+3)(4k+2)(2k+3)$. By symmetry, we assume without loss of generality that $w\in V_1$.

For each vertex $x\in V_3$, 
 at most two neighbors of $x$ in $U$, namely, the first and the last neighbor with respect to $\sigma_2$, are adjacent to vertices distinct from $x$ and $w$. This is because the edges $xy$ and $wy$ with $y\in U$ do not cross any edges. 
Since \Cref{red:degtwo} is not applicable, we have that $|N_G(x)\cap X|\leq k+3$.  

Therefore, there is $W\subseteq U$ of size at least $(4k+2)(2k+3)+1$ such the vertices of $W$ have distinct neighbors in $V_3$. Suppose that there is a connected component of $G-w$ that contains at least $4k+3$ vertices $x_1,\ldots,x_{4k+3}\in W$. Let us assume that $\sigma_2(x_1)<\dots <\sigma_2(x_{4k+3})$. Denote by $y_1,\ldots,y_{4k+3}$ their respective distinct neighbors in $V_3$.  Note that $\sigma_3(y_1)<\dots <\sigma_2(y_{4k+3})$. Let $u_1=x_1$, $u_2=x_{4k+3}$, $v_1=y_1$, and $v_2=y_{4k+3}$. Also, set
\[A_1=\{x\in V_2\colon \sigma_2(x_1)\leq \sigma_2(x)\leq \sigma_2(x_{4k+3})\}\] and
\[A_2=\{y\in V_3\colon \sigma_3(y_1)\leq \sigma_3(y)\leq \sigma_3(y_{4k+3})\}.\] 
Then \Cref{red:paths-B} can be applied for $A=A_1\cup A_2$ and $w,u_1,u_2,v_1,v_2$; which is a contradiction. Therefore,  each connected component of $G-w$ containing vertices of $W$ has at most $4k+2$ such vertices. Thus, there are at least $2k+4$ connected components of $G-w$ with vertices of $W$. At most two such components, namely the components containing the first and the last vertices of $W$ with respect to $\sigma_2$, are not nice.
 Therefore, we have at least $2k+2$ connected components of $G-w$ that are nice. 
 In this case, we can apply  \Cref{red:nice}, thus contradicting the assumption that the reduction rules are not applicable. It yields that the 
  assumption   $|W|>(4k+2)(2k+3)$ was wrong. This concludes the proof. 
 \end{proof}
 
\begin{lemma}\label{lem:safe-three} 
 Suppose that \Cref{red:comp}--\Cref{red:paths-B} cannot be applied for  $(G,V_1,V_2,V_3,k)$ and 
let  $[u,v]\subseteq V_2$ be a safe interval. Let also $R=\{y_1,\ldots,y_\ell\}\subseteq [u,v]$ be an inclusion maximal set of vertices such that each vertex of $R$ has neighbors in both $V_1$ and $V_3$ and all these neighbors are distinct, and assume that $\sigma_2(y_1)<\dots<\sigma_2(y_\ell)$.
Then 
\begin{equation*}
|[u,v]|\leq  4(\ell+1)(k+3)(4k+2)(2k+3)[4(2k+3)(4k+2)+1],
\end{equation*} 
and if $\ell\geq 2$ then
\begin{equation*}
|[y_1,y_\ell]|\leq 4(\ell-1)(k+3)(4k+2)(2k+3)[4(2k+3)(4k+2)+1].
\end{equation*}
\end{lemma}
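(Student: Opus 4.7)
I would partition $[u,v]$ according to which sides carry a given vertex's neighbours, bound the ``two-sided'' part by a counting argument driven by the maximality of $R$ and Lemma~\ref{lem:safe-two}, then apply Lemma~\ref{lem:safe-one} on the sub-intervals obtained by breaking $[u,v]$ at the two-sided vertices.

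More concretely, set
\[
Z=\{x\in[u,v]:N_G(x)\cap V_1\neq\emptyset\text{ and }N_G(x)\cap V_3\neq\emptyset\},\qquad A=[u,v]\setminus Z.
\]
Since Reduction Rule~\ref{red:comp} forbids isolated vertices (they would form trivial crossing-free components), every $x\in A$ has all its neighbours on a single side: $N_G(x)\subseteq V_1$ or $N_G(x)\subseteq V_3$. In particular $R\subseteq Z$.

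\textbf{Step 1 (bound on $|Z|$).} I would first show that each $y_j\in R$ has degree at most $2(k+3)$. Since $[u,v]$ is safe, no edge incident to $y_j$ crosses, so the neighbours of $y_j$ in $V_i$ ($i\in\{1,3\}$) are consecutive in $\sigma_i$. A short case analysis in the drawing shows that any neighbour of $y_j$ in $V_i$ that is not the leftmost or rightmost among $N_G(y_j)\cap V_i$ must be a pendant of $y_j$ (otherwise, a second $V_2$-neighbour would force a crossing with an edge $y_jw'$). Then Reduction Rule~\ref{red:pend} yields $|N_G(y_j)\cap V_i|\le (k+1)+2=k+3$, and so $|N_G(y_j)|\le 2(k+3)$. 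Because the inclusion-maximality of $R$ forces every $z\in Z\setminus R$ to share a neighbour with some $y_j$, every such $z$ lies in $N_G(w)$ for some $w\in W:=\bigcup_{j=1}^{\ell}N_G(y_j)$. The distinctness condition on $R$ makes this union disjoint, so $|W|\le 2\ell(k+3)$; combined with Lemma~\ref{lem:safe-two}, which gives $|N_G(w)\cap Z|\le (k+3)(4k+2)(2k+3)$, I obtain
\[
|Z|\le \ell+|W|\cdot(k+3)(4k+2)(2k+3)\le 4(\ell+1)(k+3)(4k+2)(2k+3)-1,
\]
after absorbing the lower-order terms into the stated constants.

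\textbf{Step 2 (bound on $|A|$ via Lemma~\ref{lem:safe-one}).} Breaking the linear order of $[u,v]$ at the vertices of $Z$ produces at most $|Z|+1$ maximal sub-intervals of $[u,v]$ disjoint from $Z$. Each such sub-interval is itself safe, and each of its vertices has neighbours only in $V_1$ or only in $V_3$; Lemma~\ref{lem:safe-one} bounds its size by $4(2k+3)(4k+2)$. Hence
\[
|A|\le (|Z|+1)\cdot 4(2k+3)(4k+2).
\]

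\textbf{Step 3 (combine).} Adding the two estimates,
\[
|[u,v]|=|A|+|Z|\le (|Z|+1)\bigl[4(2k+3)(4k+2)+1\bigr],
\]
and substituting the bound on $|Z|+1$ from Step~1 yields the first claim.

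\textbf{Step 4 (bound on $|[y_1,y_\ell]|$).} I would re-run the same argument inside the sub-interval $[y_1,y_\ell]$. Now the breaks occur at the vertices of $Z\cap[y_1,y_\ell]$, but since $y_1$ and $y_\ell$ both lie at the endpoints of this sub-interval, we obtain only $|Z\cap[y_1,y_\ell]|-1$ open gaps between consecutive $Z$-vertices; each contributes at most $4(2k+3)(4k+2)$ vertices of $A$. This replaces the factor $|Z|+1$ by $|Z|-1$ in the above computation, and produces the factor $\ell-1$ instead of $\ell+1$ in the final bound.

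\textbf{Main obstacle.} The delicate step is justifying the degree bound $|N_G(y_j)\cap V_i|\le k+3$: I must argue carefully that every non-extremal $V_i$-neighbour of $y_j$ (inside the safe interval) is pendant, by inspecting which pairs of edges would necessarily cross in the layered drawing. Once this degree bound is in place, the rest of the argument is a straightforward combination of the inclusion-maximality of $R$, Lemma~\ref{lem:safe-two}, and Lemma~\ref{lem:safe-one}.
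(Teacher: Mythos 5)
Your proof takes a genuinely different route than the paper -- a single global bound on the set $Z$ of two-sided vertices in $[u,v]$, then a gap-count using Lemma~\ref{lem:safe-one} -- but the route has two gaps that prevent it from reaching the stated constants. The paper instead partitions $[u,v]$ into the $\ell+1$ sub-intervals $I_i=[y_i,y_{i+1}]$ (with $y_0=u$, $y_{\ell+1}=v$) and bounds \emph{each} $|I_i|$ by the same fixed quantity $4(k+3)(4k+2)(2k+3)\bigl[4(2k+3)(4k+2)+1\bigr]$. The per-interval bound rests on a structural observation you do not use: if $y\in I_i\setminus\{y_i,y_{i+1}\}$ is adjacent to $x\in V_1$ and $z\in V_3$, then by the maximality of $R$ (one of $x,z$ must coincide with a chosen representative) combined with the planarity inside the safe interval (the chosen $x_1,\dots,x_\ell$ occur in $\sigma_1$ in the same order as the $y_j$'s in $\sigma_2$, so a neighbour of $y$ that is some $x_j$ must be $x_i$ or $x_{i+1}$), we get $x\in\{x_i,x_{i+1}\}$ or $z\in\{z_i,z_{i+1}\}$. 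Applying Lemma~\ref{lem:safe-two} to just these \emph{four} vertices caps the number of two-sided vertices in $I_i$ at $4(k+3)(4k+2)(2k+3)$.

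Concretely, first, you charge every $z\in Z\setminus R$ to $W=\bigcup_j N_G(y_j)$ with $|W|\le 2\ell(k+3)$, giving $|Z|\lesssim 2\ell(k+3)^2(4k+2)(2k+3)$ -- a stray factor of $(k+3)$ over the target $4(\ell+1)(k+3)(4k+2)(2k+3)$. This is not a lower-order term and cannot be absorbed. The maximality actually forces every such $z$ to be adjacent to one of the $2\ell$ \emph{chosen representatives} $\{x_1,\dots,x_\ell,z_1,\dots,z_\ell\}$, which is what the paper uses; the degree bound on the $y_j$'s is a red herring and does not appear in the paper's proof at all. (Also, your claim that the distinctness condition makes $\bigcup_j N_G(y_j)$ a disjoint union is false -- only the chosen representatives are distinct.) Second, even with the corrected charging set, the global bound $|Z|\le\ell+2\ell(k+3)(4k+2)(2k+3)$ is too weak for the second inequality when $\ell=2$: you then have $|Z\cap[y_1,y_\ell]|-1\ge 4(k+3)(4k+2)(2k+3)+1$, which already exceeds the allowed $4(\ell-1)(k+3)(4k+2)(2k+3)=4(k+3)(4k+2)(2k+3)$. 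The per-interval decomposition is what makes the $\ell-1$ and $\ell+1$ factors come out: a two-sided vertex of $I_i$ is charged to the four local vertices $x_i,x_{i+1},z_i,z_{i+1}$ only, and $[y_1,y_\ell]=\bigcup_{i=1}^{\ell-1}I_i$ then gives the $\ell-1$ factor directly.
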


\begin{proof}
For $i\in \{1,\ldots,\ell\}$, denote by $x_i$ and $z_i$ neighbors of $y_i$ in $V_1$ and $V_3$, respectively, chosen in such a way that $x_1,\ldots,x_\ell$ and all $z_1,\ldots,z_\ell$ are distinct. 
We set $y_0=u$ and $y_{\ell+1}=v$ (it may happen that $y_0=$ or $y_{\ell+1}=y_\ell$). Let also $x_0$ and $x_{\ell+1}$ be the first and the last vertex, respectively, of $V_1$ with respect to $\sigma_1$ that 
is adjacent to a vertex of $[u,v]$. Similarly, let $z_0$ and $z_{\ell+1}$ be the first and the last vertex of $V_3$ with respect to $\sigma_3$ that is adjacent to a vertex of $[u,v]$. It may happen that 
$x_0=x_1$ or $x_{\ell+1}=x_\ell$ or $z_0=z_1$ or $z_{\ell+1}=z_\ell$.
For $i\in\{0,\ldots,\ell\}$, denote $I_i=[y_i,y_{i+1}]$.
 We prove the following claim.
  
\begin{claim}\label{cl:size-Ii}
 For every $i\in\{0,\ldots,\ell\}$, $|I_i|\leq 4(k+3)(4k+2)(2k+3)[4(2k+3)(4k+2)+1]$. 
\end{claim} 

\begin{proof}[Proof of \Cref{cl:size-Ii}]
Consider arbitrary  $i\in\{0,\ldots,\ell\}$. By \Cref{lem:safe-one}, for a subinterval $[y',y'']\subseteq I_i$ such that any $y\in[y',y'']$ has neighbors either only in $V_1$ or only in $V_3$, it holds that $|[y',y'']|\leq 4(2k+3)(4k+2)$. Then $I_i$ contains at least  $\frac{|I_i|}{4(2k+3)(4k+2)+1}$ vertices having neighbors in both $V_1$ and $V_3$. 
Observe that if $y\in I_i\setminus\{y_i,y_{i+1}\}$ is adjacent to $x\in V_1$ and $z\in V_3$ then either $x\in\{x_i,x_{i+1}\}$ or $z\in\{z_i,z_{i+1}\}$ by the maximality of $R$. 
By \Cref{lem:safe-two}, each of the vertices $x_i,x_{i+1},y_i,y_{i+1}$ is adjacent to at most $(k+3)(4k+2)(2k+3)$ vertices of $[u,v]$ that have neighbors in both $V_1$ and $V_2$. 
Thus, the total number of such neighbors of $x_i,x_{i+1},y_i,y_{i+1}$ is at most
$4(k+3)(4k+2)(2k+3)\geq \frac{|I_i|}{4(2k+3)(4k+2)+1}$. Therefore, $|I_i|\leq 4(k+3)(4k+2)(2k+3)[4(2k+3)(4k+2)+1]$. This concludes the proof.
\end{proof}

By \Cref{cl:size-Ii}, we have that 
\begin{equation*}
|[u,v]|\leq\sum_{i=0}^{\ell}|I_i|\leq  4(\ell+1)(k+3)(4k+2)(2k+3)[4(2k+3)(4k+2)+1],
\end{equation*}
and if $\ell\geq 2$ then 
\begin{equation*}
|[y_1,y_\ell]|\leq\sum_{i=1}^{\ell-1}|I_i|\leq  4(\ell-1)(k+3)(4k+2)(2k+3)[4(2k+3)(4k+2)+1],
\end{equation*}
This concludes the proof.
\end{proof}

Now we are ready to prove the crucial lemma.

 \begin{lemma}\label{cl:size}
 Suppose that \Cref{red:comp}--\Cref{red:paths-C} are not applicable for a yes-instance $(G,V_1,V_2,V_3,k)$ of \probThreeCr. Then 
 $|V(G)|\leq 2^{16}(k+4)^{11}$.
  \end{lemma}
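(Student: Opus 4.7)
The proof bounds $|V_1|$, $|V_2|$, and $|V_3|$ separately in terms of $k$. Fix a $3$-layer drawing $\sigma=(\sigma_1,\sigma_2,\sigma_3)$ of $G$ with at most $k$ crossings, and let $D$ be the set of endpoints of the crossing edges, so that $|D\cap V_2|\le 2k$ and $|D\cap(V_1\cup V_3)|\le 2k$. Removing $D\cap V_2$ from $\sigma_2$ partitions $V_2\setminus D$ into at most $2k+1$ maximal safe intervals. For each such interval $[u,v]$, \Cref{lem:safe-three} yields
\[|[u,v]|\le 4(\ell+1)(k+3)(4k+2)(2k+3)\bigl[4(2k+3)(4k+2)+1\bigr],\]
where $\ell=|R|$ is the size of an inclusion-maximal set of vertices of $[u,v]$ with pairwise distinct neighbors in $V_1$ and in $V_3$.

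The key step, and the main obstacle, is the uniform bound $\ell\le 24k+12$. Suppose $\ell\ge 24k+13$. Then $R$ induces $\ell$ vertex-disjoint paths $P_i=x_iy_iz_i$ with $y_i\in[u,v]$, $x_i\in V_1$, $z_i\in V_3$, and all these endpoints pairwise distinct. By safety of $[u,v]$, none of the edges of the $P_i$'s is crossed in $\sigma$, forcing $\sigma_1(x_1)<\cdots<\sigma_1(x_\ell)$ and $\sigma_3(z_1)<\cdots<\sigma_3(z_\ell)$. Let $A'$ consist of $\{y_1,\dots,y_\ell\}$, all $V_2$-vertices strictly between them in $\sigma_2$, and all their neighbors in $V_1\cup V_3$. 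Safety implies that no edge of $G$ leaves $A'\setminus\{x_1,y_1,z_1,x_\ell,y_\ell,z_\ell\}$ to the complement of $A'$; otherwise such an edge would be forced to cross one of the uncrossed boundary edges of $P_1$ or $P_\ell$. Hence every connected component of $G[A']$ that misses the six outer separator vertices is itself a connected component of $G$ drawn without crossings, and \Cref{red:comp} would apply, a contradiction. Consequently $G[A']$ has at most six components. By pigeonhole, some component $C$ contains at least $\lceil\ell/6\rceil\ge 4k+3$ of the $P_i$'s; denote their indices by $i_1<\dots<i_m$ with $m\ge 4k+3$. Because $x_1\in C$ would force $y_1,z_1\in C$ via the path $P_1$ (so that $i_1=1$), and symmetrically on the right, the outer-separator vertices lying in $C$ are contained in $\{x_{i_1},y_{i_1},z_{i_1},x_{i_m},y_{i_m},z_{i_m}\}$. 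Setting $A=V(C)$ therefore yields a separation of $G$ with this six-vertex separator, $H=G[A]$ is connected, the induced drawing is crossing-free, and the $P_{i_j}$'s constitute $m\ge 4k+3$ vertex-disjoint paths inside $H$. Hence \Cref{red:paths-C} applies, a contradiction, so $\ell\le 24k+12$.

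Given $\ell\le 24k+12$ for every safe interval, \Cref{lem:safe-three} yields $|[u,v]|=O(k^6)$, and summing over the at most $2k+1$ safe intervals and adding $|D\cap V_2|\le 2k$ gives $|V_2|=O(k^7)$. Finally, \Cref{red:comp} forbids crossing-free components, so every vertex of $V_1\cup V_3$ has at least one neighbor in $V_2$; combined with \Cref{obs:deg-V-two}, which bounds $|N_G(u)\cap(V_i\setminus D)|\le k+3$ for every $u\in V_2$ and $i\in\{1,3\}$, a double-counting argument yields $|V_i\setminus D|\le(k+3)|V_2|$ and hence $|V_i|\le(k+3)|V_2|+2k$. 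Summing the three contributions gives $|V(G)|=O(k^8)$, which after unpacking the explicit constants from \Cref{lem:safe-three} fits comfortably inside $2^{16}(k+4)^{11}$.
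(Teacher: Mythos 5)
Your overall architecture matches the paper's: bound $|V_2|$ via maximal safe intervals and \Cref{lem:safe-three}, then bound $|V_1|$ and $|V_3|$ via \Cref{obs:deg-V-two}. The crucial step in both is bounding the quantity $\ell=|R|$ by applying \Cref{red:paths-C}, and this is where your argument diverges from the paper's --- and where there is a real gap.

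The paper observes that the vertices of a safe interval $[u,v]$ lie in at most two connected components of $G$ itself (any ``middle'' component would be crossing-free and trigger \Cref{red:comp}), so $4k+3$ consecutive $y_i$'s lie in one component; it then defines $A$ \emph{directly from the drawing} as $A_1\cup A_2\cup A_3$ with $A_i$ being the vertices drawn between the two boundary paths on layer $i$. You instead build an auxiliary set $A'$ (the $y_i$'s, everything between them in $\sigma_2$, and all their $V_1\cup V_3$ neighbors), argue $G[A']$ has few components, and then take $A=V(C)$ for one component $C$. This is where things break.

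First, the claim ``safety implies that no edge of $G$ leaves $A'\setminus\{x_1,y_1,z_1,x_\ell,y_\ell,z_\ell\}$'' is false as stated. The vertex $y_1$ (or $y_\ell$) may have several neighbors in $V_1$; only one of them is the designated $x_1$. Another neighbor $w\in V_1\cap A'$ can satisfy $\sigma_1(w)<\sigma_1(x_1)$, and $w$ can be adjacent to a vertex $w'\in V_2$ with $\sigma_2(w')<\sigma_2(y_1)$ (so $w'\notin A'$) without the edge $ww'$ crossing $x_1y_1$ or $y_1z_1$: both endpoints lie strictly to the left of the boundary path. Hence $\{x_1,y_1,z_1,x_\ell,y_\ell,z_\ell\}$ need not separate $A'$ from the rest of $G$, and the six-vertex separator you feed to \Cref{red:paths-C} is not actually a separator.

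Second, even where the separator argument does work, you never verify condition (iii) of \Cref{red:paths-C}: it requires a crossing-free drawing of $H$ in which $x_{i_1},y_{i_1},z_{i_1}$ are extremal on the left and $x_{i_m},y_{i_m},z_{i_m}$ on the right. The induced drawing of $C$ may place vertices of $C$ to the left of $P_{i_1}$ (e.g., dangling branches attached to $y_{i_1}$, or vertices between $P_{i_1-1}$ and $P_{i_1}$), so this property does not follow automatically, and you give no redrawing argument. The paper sidesteps both issues by construction, since its $A_1,A_2,A_3$ are defined as drawing intervals so the boundary paths are extremal by fiat. Fixing your argument essentially collapses it to the paper's: you would pick $4k+3$ of the $y_i$'s that lie in one component of $G$ and then take $A$ to be the drawing interval they span, rather than the connected component of $G[A']$.

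(Your looser count of ``at most six'' components of $G[A']$, giving $\ell\le 24k+12$ instead of the paper's $\ell\le 8k+5$, is harmless --- since $P_1$ and $P_\ell$ each lie inside one component there are really at most two, but either way the polynomial bound survives. That part is fine; the issues above are not.)
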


\begin{proof}
Suppose that
 $(G,V_1,V_2,V_3,k)$ is a yes-instance of \probThreeCr such that \Cref{red:comp}--\Cref{red:paths-C} are not applicable. Recall that for a $3$-layer drawing $\sigma=(\sigma_1,\sigma_2,\sigma_3)$  of $G$ respecting $(V_1,V_2,V_3)$ with at most $k$ crossings, we use $D$ to denote the endpoints of crossing edges.  Recall also that 
 $|D|\leq 4k$, $|D\cap V_2|\leq 2k$, and $|D\cap (V_1\cup V_3)|\leq 2k$.  
 Notice that because \Cref{red:comp} is not applicable, $G$ has no isolated vertices. That is, every vertex has at least one neighbor. 
 
We claim that 
 \begin{equation}\label{eq:V-two}
|V_2|\leq 4(8k+6)(2k+1)(k+3)(4k+2)(2k+3)[4(2k+3)(4k+2)+1]+2k. 
 \end{equation} 
 
To prove it, we upper bound the size of maximal safe intervals.  Let $[u,v]$ be a maximal safe interval. 
Let also $R=\{y_1,\ldots,y_\ell\}\subseteq [u,v]$ be a set of maximum size  such that each vertex of $R$ has neighbors in both $V_1$ and $V_3$ and all these neighbors are distinct, and assume that $\sigma_2(y_1)<\dots<\sigma_2(y_\ell)$. We claim that $\ell\leq 8k+5$.

For the sake of contradiction, assume that $\ell\geq 8k+6$. Notice that the vertices of $[u,v]$ are included in at most two connected components of $G$.
Otherwise, if there are three vertices $a,b,c\in [u,v]$ in distinct connected components of $G$ such that $\sigma_2(a)<\sigma_2(b)<\sigma_2(c)$, the connected component $C$ containing $b$ has no crossing edges contradicting the assumption that \Cref{red:comp} cannot be applied. 
This implies that the first $4k+3$ or the last $4k+3$ vertices of $R$ are in the same connected component. By symmetry, we can assume without loss of generality that 
$y_1,\ldots,y_{4k+3}$ are in the same connected component. 
By the definition of $R$, for each $i\in \{1,\ldots,4k+3\}$ there are neighbors  $x_i$ and $z_i$  of $y_i$ in $V_1$ and $V_3$ suchthat $x_1,\ldots,x_{4k+3}$ and all $z_1,\ldots,z_{4k+3}$ are distinct. 
Note that $\sigma_1(x_1)<\dots<\sigma_1(x_{4k+3})$ and $\sigma_3(z_1)<\dots<\sigma_3(z_{4k+3})$.
However, then \Cref{red:paths-C} is applicable for $H=G[A_1\cup A_2\cup A_3]$ where 
\[A_1=\{x\in V_1\colon \sigma_1(x_1)\leq \sigma_2(x)\leq \sigma_1(x_{4k+3})\}\] 
\[A_2=\{y\in V_2\colon \sigma_2(y_1)\leq \sigma_2(y)\leq \sigma_2(y_{4k+3})\},\] and
\[A_3=\{z\in V_3\colon \sigma_3(z_1)\leq \sigma_3(z)\leq \sigma_3(s_{4k+3})\}\] 
with $u_1=x_1$, $v_1=y_1$, $w_1=z_1$, $u_2=x_{4k+3}$, $v_2=y_{4k+3}$, and $w_2=z_{4k+3}$.
This contradicts the assumption that the rule is not applicable. Thus, $\ell\leq 8k+5$.

By \Cref{lem:safe-three}, we obtain  that 
\begin{equation*}
|[u,v]|\leq 4(8k+6)(k+3)(4k+2)(2k+3)[4(2k+3)(4k+2)+1]. 
 \end{equation*}

Because the number of maximal safe intervals is at most $2k+1$, we obtain that at most  $4(8k+6)(2k+1)(k+3)(4k+2)(2k+3)[4(2k+3)(4k+2)+1]$ vertices of $V_2$ are in the maximal safe intervals. 
Since the only vertices that are not in a safe intervals are the vertices of $D\cap V_2$ and $|D\cap V_2|\leq 2k$, we conclude that (\ref{eq:V-two}) holds.

To conclude the proof of the lemma, recall that for every $u\in V_2$, $|N_G(u)\cap (V_i\setminus D)|\leq k+3$ for $i\in\{1,3\}$ by \Cref{obs:deg-V-two}. Therefore, 
 $|(V_1\cup V_2)\setminus D|\leq 2(k+3)|V_2|$. Recall that also that $|D\cap (V_1\cup V_3)|\leq 2k$. Then
 \begin{align*}
 |V(G)|\leq &(2k+7)|V_2|+2k\\ \leq &(2k+7)[4(8k+6)(2k+1)(k+3)(4k+2)(2k+3)[4(2k+3)(4k+2)+1]+2k]+2k\\ \leq &2^{15}(k+2)^8. 
 \end{align*}
\end{proof}

Now we state the final rule.

\begin{reduction}\label{red:final}
If $|V(G)|>2^{15}(k+2)^{8}$ or $|E(G)|>2^{16}(k+2)^{8}+k$ then return a trivial no-instance and stop.
\end{reduction}

\begin{lemma}\label{cl:final}
\Cref{red:paths-C} is safe.
\end{lemma}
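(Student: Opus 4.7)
The plan is to prove safety of \Cref{red:paths-C} by establishing both directions of the equivalence between the instance $(G,V_1,V_2,V_3,k)$ and the reduced instance $(G',V_1',V_2',V_3',k)$ obtained by the rule. The strategy closely parallels the arguments used in \Cref{cl:matching} and \Cref{cl:paths-A}: I will exploit the fact that among the $4k+3$ vertex-disjoint length-two paths in $P$, a pigeonhole argument forces two of them — one with index at most $2k+1$ and one with index at least $2k+3$ — to be crossing-free in any drawing with at most $k$ crossings, and these two paths serve as a ``geometric fence'' that isolates a block of vertices which can then be freely rearranged. Before starting, I would note that the existence of $\sigma_H$ satisfying conditions (iii)(a)--(c) allows the indexing of the paths in $P$ so that $\sigma_1^H(x_1)<\cdots<\sigma_1^H(x_{4k+3})$, $\sigma_2^H(y_1)<\cdots<\sigma_2^H(y_{4k+3})$, and $\sigma_3^H(z_1)<\cdots<\sigma_3^H(z_{4k+3})$, so the rule is indeed executable.

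For the forward direction, given a drawing $\sigma=(\sigma_1,\sigma_2,\sigma_3)$ of $G$ with at most $k$ crossings, I would pick indices $i\in\{1,\ldots,2k+1\}$ and $j\in\{2k+3,\ldots,4k+3\}$ whose paths $x_iy_iz_i$ and $x_jy_jz_j$ are not crossed by any edge — such indices exist by pigeonhole on the $k$ crossings among $2k+1$ candidate paths on each side. Assuming w.l.o.g. that $\sigma_1(x_i)<\sigma_1(x_j)$, $\sigma_2(y_i)<\sigma_2(y_j)$, $\sigma_3(z_i)<\sigma_3(z_j)$, I define $U_\ell$ as the vertices of $V_\ell\cap V(H)$ lying between $\sigma_\ell$-values of the two fence paths, for $\ell\in\{1,2,3\}$. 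Because the two fence paths are uncrossed and $H$ is connected, $U_\ell$ coincides with the analogous set defined via $\sigma^H_\ell$, and crucially no vertex of $V(G)\setminus V(H)$ falls in the corresponding open intervals. I then construct $\sigma'$ by reordering the vertices of $U_\ell\cap V(G')$ within the intervals according to $\sigma^H_\ell$. Since every edge with an endpoint in any $U_\ell$ belongs to $H$ and thus to the planar drawing $\sigma^H$, no new crossings are created, and the crossings outside the fenced region are preserved.

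For the reverse direction, given a drawing $\sigma'$ of $G'$ with at most $k$ crossings, the same pigeonhole argument applied to the remaining paths in $G'$ yields analogous indices $i,j$ and uncrossed fence paths (the rule preserves enough paths). Let $U_\ell'\subseteq V_\ell'\cap V(H)$ denote the vertices between the fence paths in $\sigma'$; since the graph $H'$ obtained from $H$ by deletions plus the edges $x_{2k+1}y_{2k+2}$ and $y_{2k+2}x_{2k+3}$ is connected and the fence paths are uncrossed, $U_\ell'\subseteq U_\ell$, and again no non-$H$ vertex lies in the corresponding intervals. I would then construct $\sigma$ by replacing the positions of $U_\ell'$ with $U_\ell$ ordered according to $\sigma^H_\ell$, thereby reinstating the deleted vertices. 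Any newly reintroduced edge of $H$ lies inside the fenced region and does not cross edges outside $H$; moreover, the rule is designed so that no new length-two path from $V_1$ to $V_3$ is created, which avoids spurious incompatibilities with the outside part of $G'$.

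The main obstacle will be verifying that the vertex rearrangements in both directions are truly ``crossing-neutral.'' Specifically, I need a careful case analysis that (a) no edge $ab$ with $a\notin V(H)$ can have its endpoint drawn inside the fenced interval on layer $\ell$, which uses both connectedness of $H$ and the absence of crossings on the fence paths; and (b) the added edges $x_{2k+1}y_{2k+2}$, $y_{2k+2}x_{2k+3}$ introduced by step (d) of the rule can always be routed inside the fenced region in $\sigma$ without creating crossings — this is because they can be drawn near the intended position of $y_{2k+2}$ from $\sigma_H$. A second subtlety is ensuring the chosen fence indices in the reverse direction correspond to a path triple $x_iy_iz_i$ that still exists after the rule's deletions; this holds because the rule only deletes paths with indices strictly between $2k+1$ and $2k+3$, so $4k+1$ of the original paths survive in $G'$, still giving the required pigeonhole slack. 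Once these verifications are in place, the equivalence follows by symmetry of the two constructions.
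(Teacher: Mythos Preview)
There is a typo in the paper: the lemma labeled \texttt{cl:final} is \emph{stated} as ``\Cref{red:paths-C} is safe,'' but the proof the paper gives immediately after it actually establishes the safety of \Cref{red:final} (the rule that returns a trivial no-instance when $|V(G)|>2^{15}(k+2)^8$ or $|E(G)|>2^{16}(k+2)^8+k$). Your proposal addresses the statement as literally written --- the safety of the 6-Separation rule \Cref{red:paths-C} --- and for that you give essentially the same argument the paper already gives earlier as \Cref{cl:paths-C}: the same pigeonhole to find two uncrossed fence paths $x_iy_iz_i$ and $x_jy_jz_j$, the same $U_\ell$ sets, the same reordering inside the fenced region according to $\sigma^H$, and the same connectedness observations for both directions. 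So for the literal statement, your proof is correct and matches the paper's approach.

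If, however, the intended statement is the safety of \Cref{red:final} (which context strongly suggests), then your proposal is entirely off-target. The paper's argument there is brief and of a completely different nature: \Cref{cl:size} gives the vertex bound for yes-instances once all prior rules are exhausted, and then, since a yes-instance becomes planar after deleting at most $k$ edges and the resulting graph is bipartite with bipartition $(V_2,V_1\cup V_3)$, one gets $|E(G')|\le 2|V(G')|-4$, which yields the edge bound. No fence paths, no rearrangement of drawings --- just invoking the structural size lemma and a standard planarity edge count.
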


\begin{proof}
If $|V(G)|>2^{15}(k+2)^{8}$ then $(G,V_1,V_2,V_3,k)$ is a no-instance of \probThreeCr by \Cref{cl:size}. Suppose that $|V(G)|\leq 2^{15}(k+2)^{8}$.
If  $(G,V_1,V_2,V_3,k)$ is a yes-instance then there is a set $S$ of at most $k$ edges such that the graph $G'$ obtained from $G$ by the deletion of the edges of $S$ is planar. Because $G'$ is bipartite with the bipartition $(V_2,V_1\cup V_3)$, we have that 
$|E(G')|\leq 2|V(G')|-4\leq 2^{16}(k+2)^{8}-4$. Thus, if $|E(G)|>2^{16}(k+2)^{8}+k$ then $(G_1,G_2,G_3,k)$ is a no-instance.
\end{proof}

This completes the description of the kernelization algorithm. The correctness follows from \Cref{cl:pend}--\Cref{cl:final}.

\subsection{Running time evaluation}\label{sec:run-time}
It is easy to show that each of the reduction rules can be executed in polynomial time implying that the total running time of the kernelization algorithm is polynomial. However,
a naive implementation may give a huge dependence on $n$. For example, for the direct application of  \Cref{red:paths-C}, we can consider all 6-tuples  $(u_1,v_1,w_1,u_2,v_2,w_2)$ of vertices and then consider all separations $(A,B)$ with $A\cap B=\{u_1,v_1,w_1,u_2,v_2,w_2\}$ where $G[A]$ is connected and conditions (i)--(iv) are fulfilled. It is straightforward to verify (i) and (ii), then condition (iii) is verified by making use of \Cref{prop:linear-time} in linear time, and to verify (iv) and find paths, we can use the standard maximum flow algorithm of Ford and Fulkerson~\cite{FordF56}.  However, this leads to $\Oh(n^8)$ time. We show that with some additional work, we can achieve running time which is linear in $n$.
To obtain this, we show that we can replace the exhaustive application of each rule whenever possible by a restricted number of calls in such a way that the total running time used by the rule is linear in $n$. Also, we implement some additional steps in the algorithm allowing us to rule out no-instances. We give them here instead of \Cref{sec:constr} because they do not influence the kernel size and their only purpose is to speed up computations.

Recall that at the beginning of the algorithm, we use \Cref{prop:linear-time}, to check in time $\cO(n)$ whether $G$ admits a $3$-layer drawing respecting $(V_1,V_2,V_3)$ without crossings. If this is the case, we immediately return a trivial yes-instance and stop. We also stop and output a trivial no-instance if $k=0$ and $G$ does not admit such a drawing. 
Further, we note that we can bound the number of the edges in the input instance   $(G,V_1,V_2,V_3,k)$. In the same way as in the proof of \Cref{cl:final}, we observe that 
if  $(G,V_1,V_2,V_3,k)$ is a yes-instance then there is a set $S$ of at most $k$ edges such that the graph $G'$ obtained from $G$ by the deletion of the edges of $S$ is planar. Then because  $G'$ is bipartite,  $|E(G')|\leq 2|V(G')|-4$. Thus, if $|E(G)|\geq |V(G)|+k-3$, we conclude that  $(G,V_1,V_2,V_3,k)$ is a no-instance and return a trivial no-instance. 
If $n\leq 2^{15}(k+2)^{8}$ then we can simply return the input instance. 
This means that it can be assumed that $G$ has no drawing respecting $(V_1,V_2,V_3)$ without crossings, $k\geq 1$, $|E(G)|\leq n+k-3$, and $n> 2^{15}(k+2)^{8}$. 
In particular, we have that $|E(G)|\leq 2n$, that is, the size of the input graph is linear in $n$.

To apply~\Cref{red:comp}, we use standard tools~\cite{HopcroftT73} to compute the connected components of $G$ in $\Oh(n)$ time. Then for each connected component $H$, we use \Cref{prop:linear-time} to decide whether $H$ a  $3$-layer drawing respecting $(V_1\cap V(H),V_2\cap V(H),V_3\cap V(H))$ without crossings and delete the vertices of $H$ if this holds. By \Cref{prop:linear-time}, we have that this can be done in $\Oh(n)$ time. Notice that the subsequent reduction rules do not create new connected components and do not turn any component into a graph that could be drawn without crossings. Thus, \Cref{red:comp} will not be called again afterward.

For~\Cref{red:pend}, we observe that the inclusion maximal sets of pendent vertices $P\subseteq V_1$ or $P\subseteq V_3$ that have the same neighbors in $V_2$ can be found in linear time. Then \Cref{red:pend} can be applied in the total $\Oh(n)$ time for all such subsets. Similarly, to apply \Cref{red:degtwo}, we find inclusion maximal subsets of vertices $Q\subseteq V_2$ of degree two that have the same unique neighbors in $V_1$ and $V_2$. This can be done in $\Oh(n)$ time either directly or by observing that these sets compose modules of $G$ and the modules can be found in linear time~\cite{corneil2024recursive}. Note that \Cref{red:degtwo}--\Cref{red:paths-B} do not create new pendent vertices. However, new pendent vertices may appear after applying  \Cref{red:paths-C}.
This possibility will be considered in the analysis of  \Cref{red:paths-C}.
Similarly, to apply \Cref{red:degtwo}, we find inclusion maximal subsets of vertices $Q\subseteq V_2$ of degree two that have the same unique neighbors in $V_1$ and $V_2$. This can be done in $\Oh(n)$ time either directly or by observing that these sets compose modules of $G$ and the modules can be found in linear time~\cite{corneil2024recursive}. Notice that \Cref{red:paths-A}--\Cref{red:paths-C} can create new vertices in $V_2$ of degree two with the neighbors in $V_1$ and $V_3$. 
To remedy this for \Cref{red:paths-A} and \Cref{red:paths-B},  we simply repeat the described procedure applying \Cref{red:degtwo} after the exhaustive application of \Cref{red:matching-A} and
 \Cref{red:paths-A} and \Cref{red:paths-B}. Since this is done two times, the total running time for \Cref{red:degtwo} is still linear.
 For  \Cref{red:paths-C}, we incorporate the application of \Cref{red:degtwo} in the analysis of the rule.

The analysis of the subsequent rules is more complicated and we do it in separate statements.

\begin{lemma}\label{lem:nice-rt}
\Cref{red:nice} can be exhaustively applied in $\Oh(n)$ time. 
\end{lemma}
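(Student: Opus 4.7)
The plan is to exploit that, by \Cref{obs:caterpillar}, every nice component of $G-u$ is a caterpillar; in particular it is a tree in $G$ attached to $u$, and therefore corresponds in the block-cut tree $\mathcal{T}$ of $G$ to a subtree hanging from $u$ whose blocks are all bridges and whose vertices lie in $V_2\cup V_3$ when $u\in V_1$, or in $V_1\cup V_2$ when $u\in V_3$. First I would compute $\mathcal{T}$ together with the bridges of $G$ in $\Oh(n)$ time using the standard algorithm of Hopcroft and Tarjan~\cite{HopcroftT73}, and root $\mathcal{T}$ arbitrarily.

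Next, I would perform a single bottom-up pass on $\mathcal{T}$ that computes, for each node, a constant-size summary recording: whether the subgraph induced by the subtree below consists entirely of bridge-blocks (so it is a tree in $G$), whether that tree is a caterpillar, which layers its vertices occupy, the total size, and --- crucially --- information about the ``docking'' vertex where the subtree attaches to its parent (e.g.\ how far the caterpillar spine extends to the docking vertex, and whether it can be extended through it). These summaries combine in $\Oh(1)$ at each internal node of $\mathcal{T}$. Using them I would enumerate, for each cut vertex $u\in V_1\cup V_3$, the child branches in $\mathcal{T}$ that give rise to nice components of $G-u$, and I would also handle the (single) parent-side component by a symmetric top-down summary. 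I then bucket the nice components by their attached cut vertex $u$.

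Third, for each bucket with at least $2k+2$ nice components, I would delete components (preferring singletons, as the rule requires) until $2k+1$ remain. Each deletion is charged linearly to the removed vertices. The key observation legitimising a single pass is that removing a nice component of $G-u$ only deletes vertices whose sole connection to the rest of $G$ was via $u$: the block-cut tree and the bucket of every \emph{other} cut vertex are left untouched, so no second round of recomputation is needed and the rule is exhaustively applied in total time $\Oh(n)$.

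The hard part will be defining the summary at each node of $\mathcal{T}$ so that the combination of children correctly tracks caterpillar structure: when several caterpillar-subtrees meet at a common cut vertex of $\mathcal{T}$, the union is a caterpillar only if at most two of them continue the spine through that vertex while all others attach as leaves (in the sense of \Cref{obs:caterpillar}). Encoding this precisely --- together with the layer-ownership constraints coming from the bipartition in \Cref{prop:linear-time} --- is the main technical hurdle, but once done it is a constant-time check per node, which combined with the observations above yields the claimed linear overall running time.
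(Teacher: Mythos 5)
The paper's proof is much simpler than yours, and it sidesteps the block-cut tree entirely. The key observation you missed is that for $u\in V_1$, a nice connected component $H$ of $G-u$ has $V(H)\subseteq V_2\cup V_3$ and $u$ is its \emph{only} neighbour outside $H$; consequently $H$ is precisely a connected component of the induced subgraph $G'=G[V_2\cup V_3]$ that (a) is a caterpillar and (b) has a unique neighbour in $V_1$. So one simply builds $G'$, lists its connected components, tests each for the caterpillar property and for having a unique $V_1$-neighbour (both are $\Oh(|H|)$ tests), buckets the surviving components by that unique neighbour $u$, and deletes down to $2k+1$ per bucket, deleting singletons first. That is a single linear scan with no tree decomposition and no bottom-up dynamic programming. (The symmetric $u\in V_3$ case uses $G[V_1\cup V_2]$.)

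By contrast, your route builds the block-cut tree and tries to propagate ``caterpillar-ness'' summaries bottom-up. This is not obviously wrong, but it is not a proof: you yourself flag the combination rule at internal nodes as ``the main technical hurdle'' and leave it unspecified, and the combination rule is genuinely delicate because you are conflating the block-cut-tree topology with the caterpillar spine of the component. You also rely on the nice component sitting as a ``subtree hanging from $u$'' in $\mathcal{T}$, which implicitly assumes $u$ is a cut vertex; this needs a remark (it holds once \Cref{red:comp} has been exhausted, so that no nice component is an entire connected component of $G$, but that is exactly the kind of detail your sketch silently depends on). Your observation that deletions never interact across buckets, so one pass suffices, is correct and is implicitly used by the paper too. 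Overall: the intended statement is true, but your argument is both overengineered and incomplete where it matters, whereas the paper's direct argument via $G[V_2\cup V_3]$ is short and gap-free.
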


\begin{proof}
We show how to apply \Cref{red:nice} for nice connected components $H$ of $G-u$ for vertices $u\in V_1$. The case $u\in V_3$ is symmetric. 
Recall that a connected component $H$ of $G-u$ is nice if 
 (i) $V(H)\subseteq V_2\cup V_3$  and (ii) $H$ admits a $2$-layer drawing without crossings respecting either $(V_2\cap V(H),V_3\cap V(H))$.
 Because we are interested in $H$ with $V(H)\subseteq V_2\cup V_3$, we consider the graph $G'=G[V_2\cup V_3]$ and list all connected components by standard algorithms~\cite{HopcroftT73}. Simultaneously, we compute the number of vertices in each component. This can be done in $\Oh(n)$ time. 
 Then among all these components, we find the components $H$ such that (i) $H$ has a $2$-layer drawing, that is, $H$ is a caterpillar by \Cref{obs:caterpillar}, and (ii) $H$ has 
 a unique neighbor in $V_1$; when we find such a neighbor $u$ of $H$, we assign $H$ to $u$. Finally, for all $u\in V_1$, if we have  $s\geq 2k+2$ components $H$ assigned to the same vertex $u$ then we delete $s-(2k+1)$ components starting with trivial ones.  This also can be done in linear time and we conclude that the total running time for  
 \Cref{red:nice} is $\Oh(n)$. 
 This concludes the proof.
 \end{proof}

We note that \Cref{red:matching-A}--\Cref{red:paths-B}  cannot create new nice components for some $u\in V_1\cup V_3$ but this may happen after applying \Cref{red:paths-C}. We deal with this issue through the analysis of \Cref{red:paths-C}.
 
The following easy observation is useful for the next rules.

\begin{observation}\label{obs:cycles}
If either $G[V_1\cup V_2]$ or $G[V_2\cup V_3]$ contains a cycle with at least $2k+4$ vertices then $(G,V_1,V_2,V_3,k)$ is a no-instance of \probThreeCr.
\end{observation}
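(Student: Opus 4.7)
The plan is to reduce the observation to a quantitative lower bound on the two-layer crossing number of even cycles. Specifically, I will establish the key lemma: for every $\ell \geq 2$, any 2-layer drawing of the bipartite cycle $C_{2\ell}$ respecting its bipartition contains at least $\ell - 1$ crossings.

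Given this lemma, the observation follows immediately. By symmetry, assume $G[V_1 \cup V_2]$ contains a cycle $C$ with $|V(C)| = 2\ell \geq 2k + 4$, so $\ell \geq k + 2$. In any $3$-layer drawing of $G$ respecting $(V_1, V_2, V_3)$, all edges of $C$ are drawn between the parallel lines $L_1$ and $L_2$, and restricting the drawing to the edges of $C$ yields a $2$-layer drawing of $C_{2\ell}$ respecting the bipartition $(V_1 \cap V(C), V_2 \cap V(C))$. Every crossing in this restricted drawing is also a crossing of the full $3$-layer drawing. By the lemma, the restricted drawing has at least $\ell - 1 \geq k+1$ crossings, strictly more than $k$. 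Hence $(G, V_1, V_2, V_3, k)$ is a no-instance, and the case of a cycle in $G[V_2 \cup V_3]$ is symmetric.

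To prove the lemma, I would induct on $\ell$. The base case $\ell = 2$ is immediate: $C_4$ is not a caterpillar, so by \Cref{obs:caterpillar} any $2$-layer drawing has at least one crossing. For the inductive step, given a $2$-layer drawing $D$ of $C_{2\ell}$ with $c$ crossings, the strategy is to select a cycle vertex $v$ together with its two incident cycle-edges $e_1, e_2$ and contract the short path $e_1 \cup e_2$ into a single edge $e'$ joining the two cycle-neighbors of $v$, routing $e'$ along the broken line $e_1 \cup e_2$ in $D$. The resulting drawing is a valid $2$-layer drawing of $C_{2(\ell-1)}$, and the crossings of $e'$ with any other edge $f$ are either a crossing of $e_1$ with $f$, or of $e_2$ with $f$, but not both. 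Choosing $v$ so that at least one crossing of $D$ is properly "absorbed" then gives a drawing of $C_{2(\ell-1)}$ with at most $c - 1$ crossings; applying the inductive hypothesis yields $c - 1 \geq \ell - 2$, so $c \geq \ell - 1$.

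The main obstacle is making the choice of $v$ airtight so that the contraction genuinely reduces the crossing count. A convenient choice is a vertex $v$ both of whose cycle-edges are involved in at least one crossing (any common crossing with the same other edge is counted twice in $e_1, e_2$ but only once in $e'$); if no such $v$ exists, a short pigeonhole argument on the $2\ell$ vertices and the crossing edges provides a different reduction. An alternative, perhaps cleaner, approach avoids induction altogether by directly tracking inversions in the cyclic sequence of vertex positions around the cycle, exploiting that every vertex has degree exactly $2$ to force crossings to accumulate linearly with $\ell$. Either route delivers the $\ell - 1$ bound, which is exactly the threshold matching the $2k+4$ in the statement.
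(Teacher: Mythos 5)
Your overall route is reasonable and genuinely different from the paper's: you propose to isolate a general lemma (the $2$-layer crossing number of $C_{2\ell}$ is at least $\ell-1$) and then instantiate it with $\ell\geq k+2$. The paper instead argues directly: it picks two extreme cycle vertices $u,v$ chosen according to the parity of $k$ so that one of the two $(u,v)$-arcs of the cycle, say $P$, has at least $k+3$ edges (parity forces one arc strictly past $k+2$), and then observes that since $u$ is leftmost on its layer and $v$ is rightmost on its layer, every internal edge of $P$ separates $u$ from $v$ in the strip, so the other arc $Q$ must cross it; that already gives $\geq k+1$ crossings. So the paper's proof is a one-shot separating-edge argument, while yours is a structural lemma plus induction; either would suffice if carried out.

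However, your inductive step as written has a genuine bug. You propose to contract the two cycle-edges $e_1,e_2$ incident to a degree-$2$ vertex $v$ into a single edge $e'$ joining the two cycle-neighbors of $v$. But those two neighbors lie in the \emph{same} part of the bipartition as each other, so $e'$ would be an edge within a single layer and the result is the odd cycle $C_{2\ell-1}$, which is not bipartite and cannot be drawn on two layers with edges between layers. To stay in the bipartite world you must remove two consecutive vertices (one from each part), i.e.\ contract a $3$-edge path into one edge, which does produce $C_{2(\ell-1)}$; but then the accompanying claim that you lose exactly one crossing per step has to be re-argued from scratch, and the "route $e'$ along the broken line $e_1\cup e_2$" step is in any case not compatible with the straight-line model of $2$-layer drawings — once you straighten $e'$ it may gain or lose crossings that the broken line did not have. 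You acknowledge both obstacles but do not resolve either; the pigeonhole fallback and the "track inversions" alternative are named but not developed. As submitted, the lemma on which the whole reduction rests is asserted rather than proved, so the argument is incomplete. If you instead adopt a direct separating-edge count (as the paper does) or fully work out the $3$-edge contraction with a careful accounting of how crossings can shift when $e'$ is straightened, the proof closes.
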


\begin{proof}
Assume that $C$ is a cycle in $G[V_1\cup V_2]$ with at least  $2k+4$ vertices 
and consider arbitrary $3$-layer drawing $\sigma=(\sigma_1,\sigma_2,\sigma_3)$ of $G$.

Suppose that $k$ is even.  Let $u$ be the first vertex of $V(C)\cap V_1$ in $\sigma_1$ and let $v$ be the last vertex of $V(C)\cap V_2$ in $\sigma_2$. Because $C$ is a cycle, there are two internally disjoint $(u,v)$-paths $P$ and $Q$ such that one of them, say, $P$ has at least $k+3$ edges. Then $Q$ crosses every edge of $P$ except, possibly, the first and the last edge. Thus, the total number of crossings is at least $k+1$. 

If $k$ is odd, we let $u$ and $v$ be the first and the last vertices, respectively, of $V(C)\cap V_1$ in $\sigma_1$. We again have internally disjoint $(u,v)$-path $P$ and $Q$ is $C$ such that $P$ has at least  $k+3$ edges. Then $Q$ crosses at least $k+1$ edges of $P$.

In both cases, the number of crossings is at least $k+1$ implying that $(G,V_1,V_2,V_3,k)$ is a no-instance. This concludes the proof. 
\end{proof}

\begin{lemma}\label{lem:four-rt}
There is an algorithm running in $\Oh(n)$ time that either exhaustively applies 
\Cref{red:matching-A} and \Cref{red:matching-B} or correctly concludes that $(G,V_1,V_2,V_3,k)$ is a no-instance.
\end{lemma}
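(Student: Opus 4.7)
The plan is as follows. First, as a preprocessing step, I would use \Cref{obs:cycles} to check in $\Oh(n)$ time whether either $G[V_1\cup V_2]$ or $G[V_2\cup V_3]$ contains a cycle of length at least $2k+4$; if so, I report a trivial no-instance. Since $|E(G)|\leq n+k-3$, the cyclomatic number of each of these two subgraphs is $\Oh(k)$, and their block-cut trees can be computed in linear time via \cite{HopcroftT73}. A long cycle must live inside a non-trivial block, and given the bounded cyclomatic number of each block, verifying the presence or absence of a cycle of length $\geq 2k+4$ in a block can be done in time proportional to the block's size. From this point on, I may assume that every cycle in $G[V_1\cup V_2]$ (resp.~$G[V_2\cup V_3]$) has length at most $2k+3$.

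Next, I focus on \Cref{red:matching-A} (the case of \Cref{red:matching-B} is entirely symmetric). The crucial structural point is that the subgraph $H=G[A]$ in the statement of the rule admits a $2$-layer drawing respecting $(V_1\cap V(H), V_2\cap V(H))$ without crossings, so by \Cref{obs:caterpillar} $H$ is a caterpillar; in particular, $H$ is a tree on at least $8k+6$ vertices. Moreover, $H$ is attached to the rest of $G[V_1\cup V_2]$ through at most $4$ vertices, namely the endpoints of the two extreme matching edges $u_1v_1$ and $u_2v_2$.

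To find such subgraphs efficiently, I would compute the block-cut tree $T$ of $G[V_1\cup V_2]$, and search $T$ for maximal tree-shaped pieces of $G[V_1\cup V_2]$ (consisting of bridges between cut vertices, and attached non-trivial blocks of bounded size) that can be separated from the rest of $G$ by at most four vertices including two pairs forming edges. For each such candidate $H$, I verify the caterpillar property by inspection, the existence and orientation of the required $2$-layer drawing $\sigma_H$ via \Cref{prop:linear-time}, the matching size of $4k+3$ by greedily constructing a matching along the backbone and leaves, and the separator conditions by looking at all edges leaving $V(H)$ (which can be read off from $T$ together with the $V_3$-incidences of each vertex). Each such verification runs in time $\Oh(|V(H)|+|E(H)|)$.

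Whenever a candidate is found that satisfies all conditions, the rule is applied, deleting at least one vertex (the one of $\{x_{2k+2},y_{2k+2}\}$ absent from the separator) and introducing at most two edges between the new boundary vertices $z_1,z_2$. The main obstacle is the amortized running-time analysis: the work performed by each invocation must be charged to the vertices it removes, and one must argue that the at most two added edges do not create fresh applicable instances elsewhere (since they lie strictly at the new boundary and preserve both the caterpillar structure and the block-cut tree locally, outside a bounded neighborhood). Under this accounting, together with an incremental maintenance of the block-cut tree after each deletion, the total running time across all invocations of both \Cref{red:matching-A} and \Cref{red:matching-B} sums to $\Oh(n)$, as claimed.
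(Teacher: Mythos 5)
Your high-level intuition is right -- $H$ must be a caterpillar by \Cref{obs:caterpillar}, it sits behind a boundary of at most four vertices, and a greedy scan along the backbone yields the matching -- but the proposal has genuine gaps in the algorithmic core.

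First, the preprocessing step is problematic. You propose detecting, in $\Oh(n)$ time, whether $G[V_1\cup V_2]$ or $G[V_2\cup V_3]$ contains a cycle of length at least $2k+4$. Knowing that the cyclomatic number is $\Oh(k)$ does not by itself give a linear-time procedure for deciding whether a long cycle exists: the cycle space has dimension $\Oh(k)$, so there may be $2^{\Theta(k)}$ distinct cycles, and your claim that this can be checked ``in time proportional to the block's size'' is unsubstantiated. The paper sidesteps this entirely: it never tests for long cycles in $G[V_1\cup V_2]$ directly. Instead, \Cref{obs:cycles} is invoked only when a certain auxiliary graph (see below) has a connected component that is itself a cycle with at least $4k+1$ vertices, which is trivially detectable in linear time.

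Second, and more fundamentally, the search for candidate subgraphs $H$ is underspecified. You say you would ``search $T$ for maximal tree-shaped pieces of $G[V_1\cup V_2]$'' separable by at most four vertices, but for a caterpillar all blocks are edges, so the block-cut tree of $H$ is essentially $H$ itself and gives no natural handle on where a backbone starts and ends or which four vertices can serve as $\{u_1,v_1,u_2,v_2\}$. The paper's key device is missing from your proposal: it forms the sets $U_i \subseteq V_i$ of vertices with exactly two non-pendent neighbors (with $U_2$ further restricted to vertices having no $V_3$-neighbors), and observes that $G[U_1\cup U_2]$ has maximum degree two, so its connected components are paths and cycles. The long paths are precisely the candidates for inclusion-maximal proper backbones, and each can be extended to the unique backbone-maximal caterpillar $H_P$ in linear total time. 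This degree-two reduction is what makes the search linear; a generic block-cut-tree scan does not produce it.

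Third, the amortization argument and the claim that ``the added edges do not create fresh applicable instances'' need a concrete justification. The paper addresses this by (a) choosing the greedy matching $M$ so that the rule fires at most once per $H_P$, and (b) arguing explicitly from the structure of the inserted edges (items (d)--(f) in \Cref{red:matching-A}) that the reduced graph $H_P'$ cannot contain a matching satisfying the rule's hypotheses again. Your appeal to ``incremental maintenance of the block-cut tree'' after each deletion is both unnecessary (the paper avoids it) and itself a nontrivial data-structure claim that would need its own proof of linear total cost.

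In short: the caterpillar observation is correct, but you are missing the $G[U_1\cup U_2]$ degree-two construction that is the real engine of the linear-time algorithm, and your preprocessing and amortization claims are not currently justified.
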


\begin{proof}
We show the lemma for \Cref{red:matching-A} as the proof for  \Cref{red:matching-B} is symmetric. The idea is to find maximal subgraphs $H$ satisfying conditions (i)--(iv) of the rule and apply the rule to them simultaneously. 

Suppose that  \Cref{red:matching-A} can be applied for $H=G[A]$ with the corresponding vertices $u_1,u_2,v_1,v_2$. Because $H$ admits a $2$-layer drawing $\sigma_H=(\sigma_1^H,\sigma_2^H)$ without crossings respecting $(V_1\cap V(H),V_2\cap V(H))$ such that (a) for every $w\in V_1$, $\sigma_1^H(u_1)\leq \sigma_1^H(w)\leq \sigma_1^H(u_2)$ and (b) for every $w\in V_2$, $\sigma_2^H(v_1)\leq \sigma_2^H(w)\leq \sigma_2^H(v_2)$, we have that $H$ is a caterpillar by \Cref{obs:caterpillar} with  a backbone $P$ that has one end-vertex in $\{u_1,v_1\}$, the other in $\{u_2,v_2\}$, and does not contain $u_1v_1,u_2v_2$; we call such a backbone \emph{proper}. 
Since $H$ has a matching of size $4k+3$, $|V(P)|\geq 4k+3$. Notice  that each inner vertex $w$ of $P$ has no neighbors in $V_3$ and is adjacent to at most two vertices with non-pendent neighbors.   

We use these properties to identify the backbones of caterpillars for which the rule will be applied. More precisely, 
we aim to find inclusion maximal proper backbones and  we say that $H$ is \emph{backbone maximal} if the proper backbone of $H$ is inclusion maximal. Notice 
that $G$ can contain distinct $H$ with the same proper backbone $P$. In particular, while one endpoint of $u_1v_1$ and $u_2v_2$ is in $P$, we can choose the second endpoint in a different way. However, for any possible choice of these vertices, the edges $u_1v_1$ and $u_2v_2$ are incident to the same vertices of $P$. This implies the following property for any 
two graphs $H$ and $H'$ with the same proper backbone $P$ with boundary vertices $u_1,u_2,v_1,v_2$ and $u_1',u_2',v_1',v_2'$, respectively: 
$M$ is a matching in $H$ with 
$u_1v_1,u_2v_2\in M$ if and only if $M'=(M\setminus\{u_1v_1,u_2v_2\})\cup\{u_1'v_1',u_2'v_2'\}$ is a matching in $H'$ with $u_1'v_1',u_2'v_2'\in M'$.
Thus, it is sufficient to find the inclusion maximal proper backbones and restrict the application of \Cref{red:matching-A} to the graphs $H$ constructed for them.

We find the sets $U_i\subseteq V_i$ of the vertices that have two non-pendent neighbors for $i\in\{1,2\}$ where each vertex of $U_2$ has no neighbors in $V_3$. Then we construct $G[U_1\cup U_2]$ and find all connected components of size at least $4k+1$ in linear time using the depth-first search~\cite{HopcroftT73}. Each connected component is either a cycle or a path. If there is a component $C$ that is a cycle with at least $4k+1$ vertices, $C$ has at least $4k+2\geq 2k+4$ vertices by bipartiteness. By \Cref{obs:cycles}, we conclude that 
$(G,V_1,V_2,V_3,k)$ is a no-instance and stop. From now on, we assume that this is not the case and the  family $\mathcal{P}$ 
of  connected components of $G[U_1\cup U_2]$ consists of paths  with at least $4k+1$ vertices.

For both end-vertices of each path $P\in\mathcal{P}$, we do the following. By the construction, each end-vertex of $P$ has a neighbor of degree two that is not in $P$. 
If these neighbors are the same we obtain that $G[V_1\cup V_2]$ has a cycle with at least $4k+2\geq 2k+4$ vertices. Then by \Cref{obs:cycles}, $(G,V_1,V_2,V_3,k)$ is a no-instance and we stop. Otherwise, we construct $P'$ from $P$ by appending these neighbors. 
Thus, we obtain the family $\mathcal{P}'$ of all potential edge-disjoint 
backbones of caterpillars $H$ for which 
\Cref{red:matching-A} can be applied and, furthermore, the construction of $\mathcal{P}'$ can be done in linear time. Notice that the paths in $\mathcal{P}'$ are not necessarily proper backbones because the first and the last edges may be not in a proper backbone. However, each inclusion maximal proper backbone is included in some path of the family.

For each $P\in\mathcal{P}'$, we construct $H_P$ together with  vertices $u_1,u_2\in V_1$ and $v_1,v_2\in V_2$ such that $P$ is a backbone of $H_P$. Initially, we set $H_P:=P$. Let $x$ and $y$ be the end-vertices of $P$.
\begin{itemize}
\item[(i)] For each $z\in V(P)\setminus\{x,y\}$, find the pendent vertices $z'\in V_1\cup V_2$ adjacent to $z$ and add $z'$ and $zz'$ to $H_P$.
\item[(ii)] If $x$ is adjacent in $G$ to a vertex $z\in (V_1\cup V_2)\setminus V(P)$ then add $z$ and $xz$ to $H_P$. Otherwise, select $z$ to be the neighbor of $x$ in $P$ and set $P:=P-x$. Then set $\{u_1,v_1\}:=\{x,z\}$.
\item[(iii)] If $y$ is adjacent in $G$ to a vertex $z\in (V_1\cup V_2)\setminus V(P)$ then add $z$ and $yz$ to $H_P$. Otherwise, select $z$ to be the neighbor of $y$ in $P$ and set $P:=P-y$. Then set $\{u_2,v_2\}:=\{y,z\}$.
\end{itemize}
Notice that it may happen that $\{u_1,v_1\}\cap \{u_2,v_2\}\neq \emptyset$ if either $z$ selected in (ii) equals $y$,  or  $z$ selected in (iii) equals $x$, or the vertices $z$ selected in (ii) and (iii) are the same. However, in all these cases, we obtain that $G[V_1\cup V_2]$ has a cycle with at least $4k+2\geq 2k+4$ vertices. Then we conclude that 
$(G,V_1,V_2,V_3,k)$ is a no-instance and we stop. Otherwise, $H_P$ with $u_1,u_2,v_1,v_2$ is a backbone maximal subgraph of $G$ for an inclusion maximal backbone $P$ with the properties that 
\begin{itemize}
\item[(i)] $H_P$ is a connected graph such that for $A=V(H_P)\subseteq V_1\cup V_2$ and $B=(V(G)\setminus V(H_P))\cup\{u_1,u_2,v_1,v_2\}$, $(A,B)$ is a separation of $G$ with $A\cap B=\{u_1,u_2,v_1,v_2\}$, and  $u_1,u_2\in V_1$, and  $v_1,v_2\in V_2$,
\item[(ii)]  $u_1v_1,u_2v_2\in E(G)$, 
\item[(iii)] $H_P$ has a $2$-layer drawing $\sigma_H=(\sigma_1^H,\sigma_2^H)$ without crossings respecting $(V_1\cap V(H_P),V_2\cap V(H_P))$, where $\sigma_i^H$ is an order of $V_i\cap V(H_P)$ for $i\in\{1,2\}$, such that 
(a) for every $w\in V_1$, $\sigma_1^H(u_1)\leq \sigma_1^H(w)\leq \sigma_1^H(u_2)$ and (b) for every $w\in V_2$, $\sigma_2^H(v_1)\leq \sigma_2^H(w)\leq \sigma_2^H(v_2)$, 
\item[(iv)] $|V(H_P)|\geq 4k+1$. 
\end{itemize}
Furthermore, the construction of $H_P$ for all $P\in\mathcal{P}'$ can be done in linear time.


Let $P\in\mathcal{P}'$. We check whether $H_P$ has a matching $M$ with at least $4k+3$ edges containing $u_1v_1$ and $u_2v_2$ and if such a matching exists we pick $M$ in such a way that would allow us to delete the maximum number of vertices. We use the fact that $H_P$ is a caterpillar and the $2$-layer drawing of $H_P$ is unique by \Cref{obs:caterpillar} up to the reversals of the orders and permutation of pendent vertices in the orders. We find the $2$-layer drawing $\sigma_H=(\sigma_1^H,\sigma_2^H)$ without crossings with $\sigma_1^H(u_1)<\sigma_1^H(u_2)$ and $\sigma_2^H(v_1)<\sigma_2^H(v_2)$. Notice that the vertices of $V(P)\cap V_1$ and $V(P)\cap V_2$ are ordered in the path order and for every matching $M$, at least one of the endpoints of of each edge in $P$. We find $M$ by the following greedy procedure. We set $M:=\{x_1y_1\}$ where $x_1=u_1$ and $y_1=v_1$.
Then we try to add $2k+1$ edges $x_iy_i$ with $x_i\in V_1\cap V(H_P)$ and $y_i\in V_2\cap V(H_P)$ for $i\in\{2,\ldots,2k+2\}$. We greedily select first with respect to $\sigma_1^H$ vertex $x_i$ that is not incident to any edge of the already constructed matching that has a neighbor that is not saturated in $M$. Then among the unsaturated neighbors of $x_i$, we select $y_i$ which is first with respect to $\sigma_2^H$. If we succeed with the first $2k+2$ edges we proceed with selecting $2k+1$ from the other side. We add $x_{4k+3}y_{4k+3}$ with $x_{4k+3}=u_2$ and $y_{4k+3}=v_2$ if possible. Then we add $2k$ edges $x_{4k+3-i}y_{4k+3-i}$ for $i\in\{1,\ldots2k\}$ following the reversals of $\sigma_1^H$ and $\sigma_2^H$. If this greedy choice gives a matching $M$ of size $4k+3$ we apply \Cref{red:matching-A}. By the choice of $M$, we have that the rule would be applied at most once for each $H_P$. Because $\sigma_H$ can be constructed in linear time by \Cref{obs:caterpillar} and the greedy procedure can be done in linear time, we conclude that the overall running time is $\Oh(n)$. 

Finally, we note that because of the choice of the edges added by the rule, the rule cannot be applied a second time for the graph obtained from each $H_P$.
This completes the proof.
\end{proof}

We use similar arguments to deal with \Cref{red:paths-A} and \Cref{red:paths-B}.

\begin{lemma}\label{lem:five-rt}
There is an algorithm running in $\Oh(n)$ time that either exhaustively applies 
\Cref{red:paths-A} and \Cref{red:paths-B} or correctly concludes that $(G,V_1,V_2,V_3,k)$ is a no-instance.
\end{lemma}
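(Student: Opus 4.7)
The plan is to adapt, with extra care, the approach of \Cref{lem:four-rt}; by symmetry I will describe only how to handle \Cref{red:paths-A}. First, observe that if \Cref{red:paths-A} applies to a subgraph $H=G[A]-w$ with the corresponding $u_1,u_2,v_1,v_2$ and $w\in V_3$, then $H$ must be a caterpillar (since it admits a 2-layer drawing without crossings respecting $(V_1\cap V(H),V_2\cap V(H))$ by \Cref{obs:caterpillar}), its backbone must be a proper backbone between $\{u_1,v_1\}$ and $\{u_2,v_2\}$ of length at least $4k+1$, and the $4k+3$ matching endpoints in $V_2\cap V(H)$ must all be adjacent to the single vertex $w\in V_3$. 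Furthermore, the separation condition forces every $y\in V_2\cap V(H)$ that has a neighbor in $V_3$ to have $w$ as its only neighbor in $V_3$.

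Next, I would iterate over candidate central vertices $w\in V_3$. For each fixed $w$, let $N_2(w)\coloneqq N_G(w)\cap V_2$; the above observations imply that the relevant subgraph lives inside the subgraph induced by $N_2(w)\cup N_G(N_2(w))$ restricted to $V_1\cup V_2$. Exactly as in \Cref{lem:four-rt} I would identify the set of ``interior'' vertices having two non-pendant neighbors, except with the extra restriction that the interior vertices in $V_2$ belong to $N_2(w)$ and have no neighbor in $V_3$ other than $w$. This yields a family of paths $\mathcal{P}_w$ in the corresponding bipartite subgraph; if any such path is actually a cycle with at least $4k+2$ vertices, \Cref{obs:cycles} lets me halt with a no-instance answer. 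Otherwise, extending each $P\in\mathcal{P}_w$ by two boundary vertices (and verifying the $\{u_1,v_1\}\cap\{u_2,v_2\}=\emptyset$ condition, else again invoking \Cref{obs:cycles}) produces inclusion-maximal candidate proper backbones. For each such backbone I build $H_P$ by reattaching pendants, compute the unique 2-layer drawing via \Cref{obs:caterpillar}, and run the same greedy matching procedure as in \Cref{lem:four-rt} to pick the matching $\{x_1y_1,\ldots,x_{4k+3}y_{4k+3}\}$ that maximizes the number of deletions; this time the greedy procedure additionally restricts $y_i$ to lie in $N_2(w)$. Finally, apply the rule.

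The main obstacle is ensuring the total running time across all $w\in V_3$ is $\Oh(n)$, not $\Oh(n)$ per $w$. I would exploit two facts: first, \Cref{obs:cycles} together with $|E(G)|\le 2n$ bounds $\sum_{w\in V_3}|N_2(w)|$ by $\Oh(n)$; second, the inclusion-maximal backbones produced for distinct choices of $w$ are edge-disjoint, since each such backbone uses at least $4k+1$ vertices of $V_2$ that are constrained to have $w$ as their unique $V_3$-neighbor, a property that identifies $w$ uniquely. This guarantees that once the initial degree lists and $N_G(N_2(w))$ lookups are prepared in $\Oh(n)$ global preprocessing (e.g., adjacency lists sorted once, plus a single pass marking vertices with ``unique $V_3$-neighbor'' pointers), the work attributable to each $w$ is proportional to the total size of its contributing backbones and pendants, which sum to $\Oh(n)$ overall. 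After applying the rule the added edges prevent re-application to the same $H_P$ (as in \Cref{lem:four-rt}), so no iteration is needed.

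Lastly, I would address rule interactions: \Cref{red:paths-A} may create new pendants on $w$ (handled by running \Cref{red:pend} once more at the end, in $\Oh(n)$ extra time) and new degree-two vertices in $V_2$, handled by a single post-pass of \Cref{red:degtwo}, as already done between rule groups; neither of these post-passes affects linearity. The proof for \Cref{red:paths-B} is identical after swapping $V_1$ and $V_3$, giving the claimed $\Oh(n)$ bound for the simultaneous exhaustive application of both rules.
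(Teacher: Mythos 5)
Your approach is genuinely different from the paper's, and it has a correctness gap in the central step that is not easily repaired within the framework you propose.

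The paper never iterates over candidate central vertices $w\in V_3$. Instead it works entirely inside $G'=G[V_1\cup V_2]$ first: it computes, independently of any $w$, the set of vertices with two non-pendent neighbors in $G'$, extracts the induced paths $\mathcal{P}$, and only then — for each path $P\in\mathcal{P}$, by a single left-to-right sweep along $P$ — finds the maximal subpaths $Q_P$ for which a suitable $w$ exists (via conditions (i)--(iii)). This $w$-last order is exactly what makes the linear-time bound clean: each edge of each $P$ is touched once.

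Your $w$-first order forces the restriction "the interior vertices in $V_2$ belong to $N_2(w)$ and have no neighbor in $V_3$ other than $w$." This is too strong. Condition (iii) in the paper's $Q_P$ construction, and equivalently condition (i) of \Cref{red:paths-A} (which only requires $A\cap V_3=\{w\}$, i.e.\ a vertex of $V_2\cap A$ either has $w$ as its only $V_3$-neighbor or has \emph{no} $V_3$-neighbors at all), allows backbone vertices in $V_2$ that are adjacent to no vertex of $V_3$. A degree-two vertex $z\in V_2$ with two non-pendent $V_1$-neighbors and $N_G(z)\cap V_3=\emptyset$ is perfectly legal on the backbone of $H$. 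Restricting the interior to $N_2(w)$ removes such $z$, disconnecting a long backbone of $G[U_1\cup U_2]$ into shorter fragments, so you either fail to find valid $H$'s with $4k+3$-sized matchings or you construct fragments that do not satisfy condition (ii) of the rule. There is no way to fix this while keeping your interior set restricted to $N_2(w)$; you would have to re-admit all $V_2$-vertices with no $V_3$-neighbors, at which point you are back to the paper's $w$-independent interior set.

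Once you admit those $V_2$-vertices, your edge-disjointness argument also no longer holds: a stretch of a backbone path whose $V_2$-vertices have no $V_3$-neighbors can be shared (as a subpath) between the candidate regions for two distinct $w$'s, so you can no longer charge the work for processing each $w$ to a private portion of the graph. Your claim "the work attributable to each $w$ is proportional to the total size of its contributing backbones and pendants, which sum to $\Oh(n)$" would then need a different charging scheme. The paper sidesteps this by iterating over the paths $\mathcal{P}$ rather than over $w$, so that each path is swept exactly once regardless of how many $w$'s anchor a $Q_P$ inside it. If you want a $w$-first algorithm, you would need an additional structural lemma bounding the total length of candidate regions summed over $w$, and the statement you have does not provide it.

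Two smaller points. First, your claim that $H$ lives inside the subgraph induced by $N_2(w)\cup N_G(N_2(w))$ restricted to $V_1\cup V_2$ is not quite right for the same reason as above: pendants and degree-two backbone vertices of $V_2$ that have no $V_3$-neighbors may lie outside $N_2(w)$. Second, the post-passes you propose for \Cref{red:pend} and \Cref{red:degtwo} after rule applications are handled differently in the paper (a re-pass of \Cref{red:degtwo} is done after \Cref{red:paths-A} and \Cref{red:paths-B}; new pendants only arise from \Cref{red:paths-C}, treated in \Cref{lem:six-rt}), but this part of your sketch is not problematic — it is the construction of the candidate backbones and the per-$w$ time accounting that need repair.
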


\begin{proof}
By symmetry, it is sufficient to prove the lemma for \Cref{red:paths-A}. As in the previous lemma, we find inclusion maximal subgraphs $H$ satisfying conditions (i)--(v) of the rule and apply the rule for them simultaneously.

Suppose that  \Cref{red:paths-A} can be applied for $H=G[A]$ with the corresponding vertices $u_1,u_2,v_1,v_2,w$. Since $H$ admits a $2$-layer drawing $\sigma_H=(\sigma_1^H,\sigma_2^H)$ without crossings respecting $(V_1\cap V(H),V_2\cap V(H))$ such that (a) for every $x\in V_1$, $\sigma_1^H(u_1)\leq \sigma_1^H(x)\leq \sigma_1^H(u_2)$ and (b) for every $x\in V_2$, $\sigma_2^H(v_1)\leq \sigma_2^H(x)\leq \sigma_2^H(v_2)$,  $H$ is a caterpillar by \Cref{obs:caterpillar} with  a backbone $P$ that has one end-vertex in $\{u_1,v_1\}$, the other in $\{u_2,v_2\}$, and does not contain $u_1v_1,u_2v_2$. As before, we call such a backbone \emph{proper}.
We find the potential inclusion maximal backbones and construct $H$ with them. Again, we say that
 $H$ is \emph{backbone maximal} if the proper backbone of $H$ is inclusion maximal.
Note that since $H$ has a matching of size $4k+3$, $|V(P)|\geq 4k+3$. 

Consider $G'=G[V_1\cup V_2]$. For $i\in\{1,2\}$, we find the set $U_i\subseteq V_i$ of vertices that have two non-pendent neighbors in $G'$. 
We construct $G[U_1\cup U_2]$ and find all connected components of size at least $4k+1$ in linear time using the depth-first search~\cite{HopcroftT73}. Each connected component is either a cycle or a path. If there is a component $C$ that is a cycle with at least $4k+1$ vertices, then because $G'$ is bipartite, $C$ has at least  $4k+2\geq 2k+4$ vertices and 
$(G,V_1,V_2,V_3,k)$ is a no-instance by~\Cref{obs:cycles}. In this case, we stop. Otherwise, each connected component is a path. For such a path $P$, each end-vertex of $P$ has a neighbor of degree two that is not in $P$ by definition. If these neighbors are the same, we conclude that  $(G,V_1,V_2,V_3,k)$ is a no-instance by  \Cref{obs:cycles} and stop. 
Assume that this is not the case. We modify each $P$ by appending the corresponding neighbors of the end-vertices. This way we obtain in $\Oh(n)$ time the family of edge-disjoint path $\mathcal{P}$ with the property that every inclusion maximal proper backbone is a subpath of one of the paths of the family. 

Up to now, our analysis did not include the vertices of $V_3$. Recall that we are looking for $H$ together with $u_1,u_2,v_1,v_2,w$ such that $H$ is separated by 
$\{u_1,u_2,v_1,v_2,w\}$. To find inclusion maximal backbones, we consider subpaths of the paths of $\mathcal{P}$.

Consider $P\in\mathcal{P}$. For a vertex $v\in V(P)$, let $R(v)$ be the sets of pendent neighbors of $v\notin V(P)$ in $G'$. Note that these sets can be constructed in linear time. 
Let $\{s_0,\ldots,s_\ell\}=V(P)\cap V_2$ and assume that the vertices are numbered in the path order. For $i\in\{1,\ldots,\ell\}$, denote by $t_i$ the common neighbor of $s_{i-1}$ and $s_i$ in $P$. We find inclusion maximal $(s_i,s_j)$-subpaths $Q_P$ of $P$ for $0\leq i\leq j\leq\ell$ such that there is $w\in V_3$ that
\begin{itemize}
\item[(i)] there is $z\in \{s_{i}\}\cup R(t_{i+1})$ such that $w$ is the unique neighbor of $z$ in $V_3$,
\item[(ii)] there is  $z\in \{s_{j}\}\cup R(t_{j})$ such that $w$ is the unique neighbor of $z$ in $V_3$,
\item[(iii)] for any $z\in\{s_i,\ldots,s_j\}\cup\bigcup_{p=i+1}^jR(t_p)$, $z$ either has no neighbors in $V_3$ or is adjacent to $w$. 
\end{itemize}
All these paths $Q_P$ for all $P\in\mathcal{P}$ can be constructed in $\Oh(n)$ time by tracing each path $P$ in the path order. 
For each $P\in\mathcal{P}$, we find the paths $Q_P$ with at least $4k+1$ vertices each and denote by $\mathcal{Q}_P$ the family of all these paths. 
We use the path from $\mathcal{Q}_P$ for finding the inclusion maximal proper backbones.
However, in some cases, we have to these paths.

Consider a path $Q\in\mathcal{Q}_P$ for some $P\in \mathcal{P}$ and assume that $Q$ is the $(s_i,s_j)$-path for $0\leq i<j\leq\ell$. We construct $Q'$ together with special vertices $v_1$ and $v_2$ as follows. 

Suppose that $i>0$ and there is $i'\in \{0,\ldots,i-1\}$ such that there exists $v_1\in \{s_{i'}\}\cup R(t_{i'+1})$ adjacent to $w$ (and some other vertex of $V_3$) but for any $i''\in\{i'+1,i-1\}$, the vertices of $\{s_{i''}\}\cup R(t_{i''+1})$ are not adjacent to any vertex of $V_3$. Then we extend $Q$ by adding $v_1$, $v_1t_{i'+1}$ and the $(t_{i'+1},s_i)$ subpath of $P$. 
Assume now that this is not the case and for each $i'\in\{0,i-1\}$, the vertices of $\{s_{i'}\}\cup R(t_{i'+1})$ are not adjacent to any vertex of $V_3$. 
Then if the first vertex $z$ of $P$ is in $V_1$ and there is $v_1\in V_2\setminus V(P)$ adjacent to both $w$ and $z$, we add $v_1$, $v_1z$, and the $(z,s_i)$-subpath.
In all other cases,   we set $v_1$ to be the vertex of  $\{s_{i}\}\cup R(t_{i+1})$ adjacent to $w$, delete $s_i$, and add $v_1$ and $v_1t_{i+1}$.

Symmetrically, if $j<\ell$ and there is $j'\in \{j+1,\ldots,\ell\}$ such that there exists $v_2\in \{s_{j'}\}\cup R(t_{j'})$ adjacent to $w$ (and some other vertex of $V_3$) but for any $j''\in\{j+1,j'-1\}$, the vertices of $\{s_{j'}\}\cup R(t_{j''})$ are not adjacent to any vertex of $V_3$ then we extend $Q$ by adding $v_2$, $v_2t_{j''}$ and the $(t_{i''},s_j)$ subpath of $P$. 
Furthermore, if this is not the case but for each $j'\in\{j+1,\ell\}$, the vertices of $\{s_{j'}\}\cup R(t_{j'})$ are not adjacent to any vertex of $V_3$ and it holds that the last vertex $z$ of $P$ is in $V_1$ and there is $v_2\in V_2\setminus V(P)$ adjacent to both $w$ and $z$, we add $v_2$, $v_2z$, and the $(s_j,z)$-subpath.
In all other cases, we set $v_2$ to be the vertex of  $\{s_{j}\}\cup R(t_{j})$ adjacent to $w$, delete $s_j$, and add $v_2$ and $v_2t_{j}$.

Observe that by the above construction, it may happen that $v_1=v_2$. However, in this case, we obtain a cycle in $G'$ with at least $4k+2$ vertices and by \Cref{obs:cycles}, conclude that $(G,V_1,V_2,V_3,k)$ is a no-instance. From now on, we assume that this is not the case and we obtained a path. We denote it by  $Q'$. Observe that all such paths can be constructed from the paths $Q\in \mathcal{Q}_P$ in linear time by tracing $P$.

For given $Q$ and $Q'$ together with two special vertices $v_1$ and $v_2$, we construct $H_Q$ together with $u_1,u_2,v_1,v_2$ as follows. Initially, $H_Q:=Q'$.
\begin{itemize}
\item For each internal vertex $z$ of $Q'$ which is an internal vertex of $Q$, we add $z$ and the vertices of $R(z)$ together with incident edges.
\item If there is a vertex of $V_1\setminus V(Q')$ adjacent to $v_1$ then select $u_1$ be such a vertex. Otherwise, set $u_1$ be the neighbor of $v_1$ in $Q'$.
\item If there is a vertex of $V_1\setminus V(Q')$ adjacent to $v_2$ then select $u_2$ be such a vertex. Otherwise, set $u_2$ be the neighbor of $v_2$ in $Q'$.
\end{itemize}

 If we obtain that $u_1=u_2$ then by by \Cref{obs:cycles}, $(G,V_1,V_2,V_3,k)$ is a no-instance and we stop. Otherwise, $H_Q$ is a caterpillar with the properties: 
\begin{itemize}
\item[(i)] $H_Q$ is a connected graph such that for $A=V(H_Q)\cup\{w\}$ and $B=(V(G)\setminus A)\cup\{u_1,u_2,v_1,v_2,w\}$, $(A,B)$ is a separation of $G$ with $A\cap B=\{u_1,u_2,v_1,v_2,w\}$, $A\cap V_3=\{w_3\}$, $u_1,u_2\in V_1$, and $v_1,v_2\in V_2$.
\item[(ii)]  $u_1v_1,u_2v_2,v_1w,v_2w\in E(G)$, 
\item[(iii)] $H_Q$ has a $2$-layer drawing $\sigma_H=(\sigma_1^H,\sigma_2^H)$ without crossings respecting $(V_1\cap V(H_Q),V_2\cap V(H_Q))$, where $\sigma_i^H$ is an order of $V_i\cap V(H_Q)$ for $i\in\{1,2\}$, such that 
(a) for every $x\in V_1\cap V(H_Q)$, $\sigma_1^H(u_1)\leq \sigma_1^H(x)\leq \sigma_1^H(u_2)$ and (b) for every $x\in V_2\cap V(H)$, $\sigma_2^H(v_1)\leq \sigma_2^H(x)\leq \sigma_2^H(v_2)$, 
\item[(iv)] $|V(H_Q)|\geq 4k+1$.
\end{itemize}
Furthermore, the construction of all $H_Q$ for $Q\in \mathcal{Q}_P$ and $P\in\mathcal{P}$ can be done in $\Oh(n)$.

We apply \Cref{red:paths-A} for the constructed $H_Q$ given together with the vertices $u_1,u_2,v_1,v_2,w$. Similarly to the proof of \Cref{lem:four-rt}, 
we greedily find a matching $M$ with at least $4k+3$ to ensure that the rule would delete the maximum number of vertices if the required matching exists. 
We use  \Cref{obs:caterpillar} to find the unique (up to the permutations of pendent vertices)  $2$-layer drawing $\sigma_H=(\sigma_1^H,\sigma_2^H)$ without crossings with $\sigma_1^H(u_1)<\sigma_1^H(u_2)$ and $\sigma_2^H(v_1)<\sigma_2^H(v_2)$. 
Recall that $w$ is adjacent to $v_1$ and $v_2$. We set $M:=\{x_1y_1\}$ where $x_1=u_1$ and $y_1=v_1$. 
Then we try to add $2k+1$ edges $x_iy_i$ with $x_i\in V_1\cap V(H_Q)$ and $y_i\in V_2\cap V(H_Q)\cap N_G(w)$ for $i\in\{2,\ldots,2k+2\}$. We greedily select first with respect to $\sigma_2^H$ vertex $y_i\in N_G(w)$ that is not incident to any edge of the already constructed matching that has a neighbor that is not saturated in $M$. Then among the unsaturated neighbors of $y_i$, we select $x_i$ which is first with respect to $\sigma_1^H$. If we succeed with the first $2k+2$ edges we proceed with selecting $2k+1$ from the other side. We add $x_{4k+3}y_{4k+3}$ with $x_{4k+3}=u_2$ and $y_{4k+3}=v_2$ if possible. Then we add $2k$ edges $x_{4k+3-i}y_{4k+3-i}$ for $i\in\{1,\ldots2k\}$ following the reversals of $\sigma_1^H$ and $\sigma_2^H$. If this greedy choice gives a matching $M$ of size $4k+3$ we apply \Cref{red:paths-A}. By the choice of $M$, we have that the rule would be applied at most once for each $H_Q$. Because $\sigma_H$ can be constructed in linear time by \Cref{obs:caterpillar} and the greedy procedure can be done in linear time, we conclude that the overall running time is $\Oh(n)$. 

Notice that the rule does not create new paths of length two with their end-vertices in $V_1$ and $V_3$. Therefore, we cannot get a possibility to apply \Cref{red:paths-A} again. Also, after applying \Cref{red:paths-B} we do not get new possibilities to apply \Cref{red:paths-A}. 
This completes the proof. 
\end{proof}


We proved that \Cref{red:comp}--\Cref{red:paths-B} can be exhaustively applied in $\Oh(n)$ time. However, for \Cref{red:matching-A}--\Cref{red:paths-B}, our analysis heavily relied on \Cref{obs:caterpillar}. This cannot be done for \Cref{red:paths-C} and with this rule, we show that it can be implemented in $k^{\Oh(1)}\cdot n$ time. We use the following auxiliary results.

Suppose that \Cref{red:paths-C} can be applied for $H$ with the corresponding vertices $u_1,v_1,w_1,u_2,v_2,w_2$ satisfying properties (i)--(iv) of the rule. In particular,
$H$ has a family of vertex disjoint paths $P=\{x_1y_1z_1,\ldots,x_{4k+3}y_{4k+3}z_{4k+3}\}$ of size $4k+3$ with $u_1v_1w_1,u_2v_2w_2\in P$ and 
it holds  that $x_1=u_1$, $y_1=v_1$, $z_1=w_1$, $x_{4k+3}=u_2$, $y_{4k+3}=v_2$, $z_{4k+3}=w_2$,
$\sigma_1^H(x_1)<\dots<\sigma_1^H(x_{4k+3})$, 
$\sigma_2^H(y_1)<\dots<\sigma_2^H(y_{4k+3})$, and 
$\sigma_3^H(z_1)<\dots<\sigma_3^H(z_{4k+3})$. 
We say that the vertex $y_{2k+2}$ is a \emph{center} of $H$.

\begin{lemma}\label{lem:struct-paths-C}
Suppose that \Cref{red:comp}--\Cref{red:paths-B} are not applicable for $(G,V_1,V_2,V_3,k)$ and suppose that \Cref{red:paths-C} can be applied for $H$ centered in $v \in V_2$ where $H$ is an inclusion minimal subgraph with this property.  Then $|V(H)|\leq 2^{13}(k+2)^7$ and all vertices of $H$ are at distance at most $2^{11}(k+2)^6$ from $v$.
\end{lemma}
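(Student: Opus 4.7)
The plan is to pin down the structure of an inclusion-minimal $H$ using minimality together with the non-applicability of the earlier rules, then reduce the two quantitative bounds to the interval-size estimates proved in \Cref{lem:safe-one}--\Cref{lem:safe-three} together with the degree bound \Cref{obs:deg-V-two}. First, I would establish a structural claim: by minimality, every vertex of $V(H)\setminus\{u_1,u_2,v_1,v_2,w_1,w_2\}$ either lies on one of the $4k+3$ vertex-disjoint length-2 paths in the family $P$ of~\Cref{red:paths-C}, or is a pendant vertex that is needed to realize the crossing-free $3$-layer drawing of $H$ respecting the boundary constraints (iii); removing any other vertex would preserve (i)--(iv), contradicting minimality. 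A consequence is that $V(H)\cap V_2$ sits inside the interval $[v_1,v_2]$ with respect to $\sigma_2^H$, and every vertex of $V(H)\cap(V_1\cup V_3)$ is a neighbour (in $G$) of some vertex of $V(H)\cap V_2$.

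Next, I would bound $|V(H)\cap V_2|$ by transferring the argument of \Cref{cl:size} to the interval $V(H)\cap V_2 \subseteq [v_1,v_2]$. The key subclaim is that the number of \emph{triple-connected} vertices in $V(H)\cap V_2$ (those having neighbours in both $V_1$ and $V_3$ in $G$) is at most $8k+5$: otherwise we could extract $4k+3$ vertex-disjoint length-2 paths that sit inside a proper subgraph of $H$, together with suitable boundary vertices, contradicting the \emph{inclusion-minimality} of $H$. Applying \Cref{lem:safe-three} with $\ell\le 8k+5$ to the interval containing $V(H)\cap V_2$ (a safe interval of $G$, as its vertices cannot be incident to a hypothetical crossing edge without destroying the crossing-free drawing of $H$) then gives
\[
|V(H)\cap V_2|\;\le\;4(8k+6)(k+3)(4k+2)(2k+3)\bigl[4(2k+3)(4k+2)+1\bigr]\;\le\;2^{12}(k+2)^{6}.
\]
Combining this with \Cref{obs:deg-V-two} (each vertex of $V_2$ has at most $k+3$ non-$D$ neighbours in each of $V_1,V_3$) and the structural claim that each vertex of $V(H)\cap(V_1\cup V_3)$ is adjacent to some vertex of $V(H)\cap V_2$, I obtain $|V(H)|\le (2k+7)\cdot|V(H)\cap V_2|\le 2^{13}(k+2)^{7}$.

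For the distance bound from the center $v=y_{2k+2}$, I would use that $V(H)\cap V_2$ forms a consecutive block of size at most $2^{12}(k+2)^6$ around $v$ in $\sigma_2^H$, and that in a crossing-free $3$-layer drawing every path in $H$ between $v$ and another vertex $u$ of $H$ must stay inside the ``shadow'' of the subinterval of $V_2$ spanned between them. Thus a BFS from $v$ in $H$ can reach a vertex $u$ only after traversing an interval of $V_2$ whose length is at least the distance, giving distance at most $2\cdot|V(H)\cap V_2|$ and, with a more careful accounting using the two-layer caterpillar structure of the $(V_i,V_2)$-restriction of $H$, the tighter bound $2^{11}(k+2)^{6}$. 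The main obstacle is making the structural consequence of inclusion-minimality precise enough to force both the $8k+5$ bound on triple-connected vertices inside $H$ and the absence of ``extra'' pendants that are not needed for either the path family, the crossing-free drawing, or connectivity; once this is established, the size and distance bounds follow mechanically by reusing the interval estimates of \Cref{lem:safe-one}--\Cref{lem:safe-three}.
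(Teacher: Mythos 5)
Your proposal uses the same toolbox as the paper's proof — \Cref{lem:safe-three} for the $V_2$-interval and \Cref{obs:deg-V-two} for the blow-up to $V_1\cup V_3$ — but it diverges at the two technical steps that actually carry the argument, and both divergences contain gaps.

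For the size bound, the paper simply feeds the path family $P=\{x_1y_1z_1,\ldots,x_{4k+3}y_{4k+3}z_{4k+3}\}$ of \Cref{red:paths-C} into the second estimate of \Cref{lem:safe-three} with $\ell=4k+3$, obtaining $|[y_1,y_{4k+3}]|\leq 4(4k+2)(k+3)(4k+2)(2k+3)[4(2k+3)(4k+2)+1]$ and then $|V(H)|\leq(2k+7)|V(H)\cap V_2|\leq 2^{13}(k+2)^7$. You instead insert a new subclaim — that $V(H)\cap V_2$ contains at most $8k+5$ triple-connected vertices — and justify it from inclusion-minimality. That argument does not go through as sketched: the distinct $V_1$- and $V_3$-neighbours of a triple-connected vertex of $V(H)\cap V_2$ need not lie in $V(H)$ at all, and the subgraph $G[A_1\cup A_2\cup A_3]$ to which \Cref{red:paths-C} would be applied (constructed from intervals of the drawing $\sigma$ of $G$, as in \Cref{cl:size}) need not be a proper subgraph of $H$. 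So no contradiction with minimality of $H$ is forced, and the $8k+5$ bound is unsupported. It also costs you a factor of roughly two ($8k+6$ vs.\ $4k+2$), and the chain $|V(H)\cap V_2|\leq 2^{12}(k+2)^6$ followed by a $(2k+7)$-factor does not give $2^{13}(k+2)^7$ since $2k+7>2(k+2)$, so the stated constant is not reached.

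For the distance bound, the paper's whole argument is the observation that $v=y_{2k+2}$ is the \emph{middle} vertex of $P$, so each of $|[y_1,y_{2k+2}]|$ and $|[y_{2k+2},y_{4k+3}]|$ is bounded using only $2k+1$ sub-intervals rather than $4k+2$; summing the two halves gives the $2^{11}(k+2)^6$ distance bound directly. Your ``shadow''/BFS heuristic never invokes this midpoint fact, and you acknowledge yourself that it yields only $2|V(H)\cap V_2|$ unless ``more careful accounting'' is done; that accounting is precisely the midpoint split, and it is absent from your sketch. Lastly, the opening structural claim (every non-boundary vertex of $H$ is on a $P$-path or a forced pendant) is neither established nor needed: the paper never classifies the individual vertices of $H$, only bounds the $V_2$-interval around $v$ and lifts it by degree.
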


\begin{proof}
Suppose that \Cref{red:paths-C} can be applied for $H$  centered in $v$ where $H$ is chosen to be inclusion minimal. 
Consider the corresponding vertices $u_1,v_1,w_1$ and $u_2,v_2,w_2$, the $3$-layer drawing  $\sigma_H=(\sigma_1^H,\sigma_2^H,\sigma_2^H)$ without crossings respecting $(V_1\cap V(H),V_2\cap V(H),V_3\cap V(H))$, and the 
family $P=\{x_1y_1z_1,\ldots,x_{4k+3}y_{4k+3}z_{4k+3}\}$ of vertex disjoint paths of size $4k+3$ with $u_1v_1w_1,u_2v_2w_2\in P$ and 
such   that $x_1=u_1$, $y_1=v_1$, $z_1=w_1$, $x_{4k+3}=u_2$, $y_{4k+3}=v_2$, $z_{4k+3}=w_2$,
$\sigma_1^H(x_1)<\dots<\sigma_1^H(x_{4k+3})$, 
$\sigma_2^H(y_1)<\dots<\sigma_2^H(y_{4k+3})$, and 
$\sigma_3^H(z_1)<\dots<\sigma_3^H(z_{4k+3})$. Recall that $v=y_{2k+2}$. 

Applying \Cref{lem:safe-three} for $H$ and $\sigma$, we obtain that  
\begin{equation*}
|[y_1,y_\ell]|\leq 4(4k+2)(k+3)(4k+2)(2k+3)[4(2k+3)(4k+2)+1].
\end{equation*}
Using \Cref{obs:deg-V-two}, we obtain that 
\begin{align*}
 |V(H)|\leq& |V(H)\cap V_2|+ 2(k+3)|V(H)\cap V_2|=(2k+7)|V(H)\cap V_2|\\ \leq& 
 4(2k+7)(4k+2)(k+3)(4k+2)(2k+3)[4(2k+3)(4k+2)+1]\\
 \leq& 2^{13}(k+2)^7.
  \end{align*}
  Similarly, we obtain that
  \begin{equation*} 
  [y_1,y_{2k+2}],[y_{2k+2},y_{4k+3}]\leq 4(2k+1)(k+3)(4k+2)(2k+3)[4(2k+3)(4k+2)+1]  
   \end{equation*}
  and, therefore, each vertex of $H$ is at the distance at most 
  \begin{equation*}
  8(2k+1)(k+3)(4k+2)(2k+3)[4(2k+3)(4k+2)+1]\leq 2^{11}(k+2)^6 
  \end{equation*}
   from $v$. This concludes the proof.
 \end{proof}
 
 \Cref{lem:struct-paths-C} gives the following idea for the implementation of \Cref{red:paths-C}. We consider vertices $v$ of $V_2$ that have neighbors in both $V_1$ and $V_2$, that is, they potentially could be  centers for $H$. Then we perform the breadth-first search from $v$ on depth $\Oh(k^7)$ and apply the rule in the ball. However, for this, we have to guarantee that the ball has bounded size. We obtain such a bound in the next lemma. 
 
\begin{lemma}\label{lem:dist} 
 Suppose that \Cref{red:comp}--\Cref{red:paths-B} are not applicable for a yes-instance $(G,V_1,V_2,V_3,k)$. Then for any $v\in V_2$ and any $\ell\geq 1$, the number of vertices of $G$ at distance at most $\ell$ from $v$ is at most $2^{12}(\ell+2k+2)(k+2)^6$.  
 \end{lemma}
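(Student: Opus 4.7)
The plan is to prove the bound in two stages. First, observe that by \Cref{obs:deg-V-two}, for every $u\in V_2$ we have $|N_G(u)\cap(V_i\setminus D)|\le k+3$ for $i\in\{1,3\}$, and since $|D\cap(V_1\cup V_3)|\le 2k$, this gives $\deg_G(u)\le 4k+6$. Because $B_\ell(v)\cap(V_1\cup V_3)\subseteq N_G(B_{\ell-1}(v)\cap V_2)$, we obtain $|B_\ell(v)\cap(V_1\cup V_3)|\le (4k+6)\,|B_{\ell-1}(v)\cap V_2|$. It therefore suffices to prove a bound of the form $|B_\ell(v)\cap V_2|\le C(\ell+2k+2)(k+2)^5$ for a suitable absolute constant $C$, after which the lemma's bound follows by absorbing the $(4k+7)$ factor into the extra $(k+2)$.

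For the $V_2$-bound, I would combine the safe-interval size lemmas \Cref{lem:safe-one,lem:safe-two,lem:safe-three} with a layered BFS argument. Set $U=B_\ell(v)\cap V_2$, and recall that $V_2$ splits into at most $2k+1$ maximal safe intervals together with at most $2k$ vertices of $D\cap V_2$. For each safe interval $I$ met by $U$, \Cref{lem:safe-three} gives $|I|\le 4(|R_I|+1)(k+3)(4k+2)(2k+3)[4(2k+3)(4k+2)+1]$, where $R_I\subseteq I$ is a maximum set of R-vertices (vertices with neighbors in both $V_1$ and $V_3$) whose $V_1$- and $V_3$-neighbors are pairwise distinct. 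The problem thus reduces to controlling, for each safe interval $I$ visited by $U$, the effective contribution in terms of R-witnesses that actually lie in $U$, and to showing that the summed contribution across all visited safe intervals grows only linearly in $\ell$.

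The main obstacle, and the technically hardest part, is establishing the linear-in-$\ell$ bound on ``active'' R-witnesses inside the ball. I would do this by induction on $\ell$ together with a charging scheme that assigns each new R-vertex that appears at BFS depth $2j\le \ell$ either to a unique depth-$j$ ``breakthrough'' tree edge of a BFS tree rooted at $v$, or to one of the $\le 2k$ vertices of $D\cap V_2$ along a shortest $(v,u)$-path. The non-applicability of \Cref{red:matching-A}--\Cref{red:paths-B} is essential here: it rules out large matchings and path systems of the form exploited in the proof of \Cref{cl:size}, which would otherwise permit many independent R-witnesses to appear simultaneously at the same BFS level. Once the charging is in place, the total count of active R-witnesses is bounded by $\ell+2k+2$; multiplying by the per-interval polynomial factor $O(k^5)$ from \Cref{lem:safe-three}, summing over the $\le 2k+1$ safe intervals, and adding the trivial $2k$ contribution from $D\cap V_2$ yields $|U|\le O((\ell+2k+2)(k+2)^5)$, and together with the first stage this gives the claimed bound $|B_\ell(v)|\le 2^{12}(\ell+2k+2)(k+2)^6$.
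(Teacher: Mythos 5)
Your overall structure is on the right track: bound the $V_2$ part of the ball using the safe-interval lemmas, then propagate to $V_1\cup V_3$ via the degree bound from \Cref{obs:deg-V-two}. But the key step — bounding the total number of $R$-witnesses by something linear in $\ell+k$ — is where your proposal is both vague and subtly misdirected, and it is exactly where the paper's argument is short and concrete.

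The paper does not work with the ball $B_\ell(v)\cap V_2$ directly (which, as you should note, is generally \emph{not} an interval in $\sigma_2$). Instead, it fixes $r$ as the $\sigma_2$-rightmost $V_2$-vertex at distance at most $\ell$ from $v$, takes the interval $[v,r]$, and lets $R=\{y_1,\ldots,y_s\}$ be a maximum set of $R$-witnesses inside the safe subintervals of $[v,r]$ with pairwise-distinct $V_1$- and $V_3$-neighbors. The bound $s\le\ell+1$ is then immediate from a ``barrier'' argument: for each $y_i\in R$ the two edges $y_ix_i$ and $y_iz_i$ (with $x_i\in V_1$, $z_i\in V_3$) are not crossed by any edge of $G$, because $y_i$ lives in a safe interval, and the $s$ paths $x_iy_iz_i$ are pairwise vertex-disjoint by the distinctness assumption. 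Since these paths topologically separate $v$ from $r$ in the drawing, any $(v,r)$-path must include at least one vertex from each of them, and a shortest such path has at most $\ell+1$ vertices — hence $s\le\ell+1$. Repeating this on $[r',v]$ for the leftmost $r'$ at distance at most $\ell$, summing the at most $2k+1$ safe subintervals, and invoking \Cref{lem:safe-three} yields the $(\ell+2k+2)$ factor, after which \Cref{obs:deg-V-two} gives the extra $(k+2)$ exactly as you sketched.

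Your ``induction on $\ell$ plus charging to a unique depth-$j$ breakthrough tree edge'' is never defined, and it is not clear it can be made to work, for two reasons. First, your charge targets $R$-vertices ``that appear at BFS depth $2j\le\ell$'', i.e.\ $R$-witnesses inside the ball; but \Cref{lem:safe-three} is applied to whole safe intervals, and the $R$-witnesses that make those intervals large may lie in $[r',r]$ yet outside the ball (e.g.\ when only $x_i$ or $z_i$, not $y_i$ itself, is close to $v$). The paper's $(v,r)$-path argument counts precisely those witnesses; your BFS-based scheme does not see them. Second, it is unclear that a single ``breakthrough'' tree edge per depth level suffices: several $R$-witnesses could appear at the same BFS level, and the reason they cannot is exactly the vertex-disjointness of the uncrossable barrier paths — which is the ingredient your proposal does not make explicit. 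Replacing the charging scheme with the direct barrier/separator argument closes this gap.
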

  
 \begin{proof}
 Let $\sigma=(\sigma_1,\sigma_2,\sigma_3)$ be a $3$-layer drawing of $G$ with at most $k$ crossings. Let $D$ be the set of endpoints of crossing edges. Recall that $|D\cap V_2|\leq 2k$ and 
 $|D\cap(V_1\cup V_3)|\leq 2k$. 
 Consider $v\in V_2$ and let $r\in V_2$ be the last vertex with respect to $\sigma_2$ at distance at most $\ell$ from $v$. We prove the following claim.
 
\begin{claim}\label{cl:rightmost}
$|[v,r]|\leq 4(\ell+2k+2)(k+3)(4k+2)(2k+3)[4(2k+3)(4k+2)+1]+2k$
\end{claim} 

\begin{proof}
Let $R=\{y_1,\ldots,y_s\}\subseteq [v,r]$ be an inclusion maximal set of vertices is safe subintervals of $[v,r]$  such that each vertex of $R$ has neighbors in both $V_1$ and $V_3$ and all these neighbors are distinct, and assume that $\sigma_2(y_1)<\dots<\sigma_2(y_s)$. Notice that $s\leq \ell+1$. Let $I_1,\ldots,I_p$ be the maximal safe subintervals of $[v,r]$ containing the vertices of $R$ and let $s_i$ be the number of vertices of $R$ in $I_i$ for $i\in\{1,\ldots,p\}$. Note that $p\leq 2k+1$. 
By \Cref{lem:safe-three},
\begin{equation*}
|I_i|\leq  4(s_i+1)(k+3)(4k+2)(2k+3)[4(2k+3)(4k+2)+1],
\end{equation*} 
for every $i\in\{1,\ldots,p\}$
and 
\begin{align*}
\sum_{i=1}^p|I_i|\leq & 4(s+p)(k+3)(4k+2)(2k+3)[4(2k+3)(4k+2)+1]\\ \leq &4(\ell+2k+2)(k+3)(4k+2)(2k+3)[4(2k+3)(4k+2)+1].
\end{align*} 
Then 
\begin{equation*}
|[v,r]|\leq 4(\ell+2k+2)(k+3)(4k+2)(2k+3)[4(2k+3)(4k+2)+1]+2k 
\end{equation*}
as $|D\cap V_2|\leq 2k$. 
This proves the claim.
\end{proof}

Now let $r'\in V_2$ be the first vertex with respect to $\sigma_2$ at distance at most $\ell$ from $v$. By symmetry, we have that 
$|[r',v]|]\leq 4(\ell+2k+2)(k+3)(4k+2)(2k+3)[4(2k+3)(4k+2)+1]+2k$ and conclude that 
\begin{equation*}
|[r',r]|]\leq 8(\ell+2k+2)(k+3)(4k+2)(2k+3)[4(2k+3)(4k+2)+1]+2k
\end{equation*}
To prove the lemma, we observe that every vertex at distance at most $\ell$ from $v$ is in $N_G[[r',r]]$.
By \Cref{obs:deg-V-two} and because $|D\cap (V_1\cup V_2)|\leq k$,
\begin{align*}
|N_G[[r',r]]|\leq & |[r',r]|+2(k+3)|[r',r]|+2k=(2k+7)|[r',r]|+2k\\
\leq & (2k+7)[8(\ell+2k+2)(k+3)(4k+2)(2k+3)[4(2k+3)(4k+2)+1]+2k]+2k\\
\leq & 2^{12}(\ell+2k+2)(k+2)^6.
\end{align*}   
This proves the lemma.   
 \end{proof} 
 
 \begin{lemma}\label{lem:six-rt}
There is an algorithm running in $k^{\Oh(1)}\cdot n$ time that either exhaustively applies 
\Cref{red:paths-C} or correctly concludes that $(G,V_1,V_2,V_3,k)$ is a no-instance. Furthermore,
the algorithms simultaneously apply \Cref{red:pend}--\Cref{red:nice} and \Cref{red:paths-A} and \Cref{red:paths-B} if they could be used after applying \Cref{red:paths-C}.
\end{lemma}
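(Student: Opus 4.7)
The plan is to exploit the two structural lemmas just proved: \Cref{lem:struct-paths-C} localizes any inclusion-minimal witness $H$ of the rule to a ball of radius $\ell = 2^{11}(k+2)^6$ around its center $v \in V_2$, and \Cref{lem:dist} certifies that, in yes-instances, such a ball has size at most $2^{12}(\ell+2k+2)(k+2)^6 = k^{\Oh(1)}$. This suggests an algorithm that processes each candidate center $v \in V_2$ (i.e., each vertex of $V_2$ with neighbors in both $V_1$ and $V_3$) independently by running a truncated breadth-first search from $v$ up to depth $\ell$. If the search explores more than $2^{12}(\ell+2k+2)(k+2)^6$ vertices, we halt and return a trivial no-instance by the contrapositive of \Cref{lem:dist}; otherwise we obtain a ball $B_v$ of size $k^{\Oh(1)}$ containing every potential minimal witness $H$ centered at $v$, and the BFS at $v$ costs $k^{\Oh(1)}$ time.

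Inside $B_v$, we look for an application of \Cref{red:paths-C} with $v$ as the center $y_{2k+2}$. This amounts to enumerating the six boundary vertices $(u_1,v_1,w_1,u_2,v_2,w_2)$, verifying conditions (i)--(ii) combinatorially, checking (iii) via \Cref{prop:linear-time} in time linear in $|B_v|$, and testing (iv) by computing a maximum family of vertex-disjoint length-two paths between $V_1\cap V(H)$ and $V_3\cap V(H)$ in the candidate $H$ using a standard flow routine; since $|B_v|=k^{\Oh(1)}$ and at most $k^{\Oh(1)}$ tuples of boundary vertices are considered, the total work at $v$ is $k^{\Oh(1)}$. If a valid witness is found, we apply the rule, which touches only vertices in $B_v$. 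After the application, the only new applications of \Cref{red:pend}--\Cref{red:nice}, \Cref{red:paths-A} and \Cref{red:paths-B} that could be created are those whose (now enlarged or re-arranged) witnesses meet $B_v$, because these rules are themselves locally characterized by their separators; we therefore re-run the local parts of the procedures of \Cref{lem:nice-rt}, \Cref{lem:four-rt}, and \Cref{lem:five-rt} restricted to $B_v$ and a bounded-radius neighborhood, which again costs $k^{\Oh(1)}$ time. This realizes the ``simultaneously apply'' clause of the lemma without redoing global work.

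Summing over the $|V_2| \le n$ candidate centers gives total time $k^{\Oh(1)} \cdot n$. The main obstacle will be the correctness of the worklist management, namely showing that once a center $v$ has been processed and no application found, later applications of \Cref{red:paths-C} at other centers cannot create a new minimal witness centered at $v$ that is not caught. This will follow from \Cref{lem:struct-paths-C}: any such new witness must again lie in a ball of radius $\ell$ around some new center, so we only need to flag as ``dirty'' those centers lying within distance $\ell$ of the modified region $B_{v'}$ of the latest application, and each flagged center is re-examined at most $\Oh(k)$ times before either triggering a rule or being certified inert (because each application of \Cref{red:paths-C} strictly decreases $|V(G)| + |E(G)|$, and every re-examination either makes progress or is localized). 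Combining this bookkeeping with the per-center analysis above yields the claimed $k^{\Oh(1)}\cdot n$ running time and the requested simultaneous handling of \Cref{red:pend}--\Cref{red:paths-B}.
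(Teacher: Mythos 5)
Your first two paragraphs match the paper's approach closely: truncated BFS from each candidate center, early no-instance detection via the ball-size bound of \Cref{lem:dist}, local enumeration plus flow inside the $k^{\Oh(1)}$-sized ball, and local clean-up of the lower-numbered rules. That part is fine.

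The problem is in the third paragraph, and it is both a complication the paper avoids and a gap. You introduce a worklist with dirty-flagging and claim each flagged center is re-examined at most $\Oh(k)$ times ``before either triggering a rule or being certified inert''. You offer no argument for this bound: the fact that each application of \Cref{red:paths-C} strictly shrinks the graph bounds the total number of applications by $\Oh(n)$, but it does not bound how many times a single center can be flagged, nor does it bound the number of re-examinations that end up doing nothing. Without a charging argument tying re-examinations to progress, the total time is not obviously $k^{\Oh(1)}\cdot n$.

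The paper sidesteps all of this with a simple monotonicity observation that your proposal is missing. The rule \Cref{red:paths-C} was deliberately designed so that it never creates a new length-two path with one endpoint in $V_1$ and the other in $V_3$: the deletions in steps (a)--(c) remove the edges of $y_{2k+2}$ toward $V_3$, and the repair edges added in step (d) run only between $V_1$ and $V_2$. Consequently the set of vertices eligible to serve as a center $y_{2k+2}$ only shrinks over the course of the algorithm. So once a vertex $v$ is scanned and found not to be the center of any inclusion-minimal witness, no later application of \Cref{red:paths-C} (at any other center) can make $v$ into one, and $v$ never needs to be touched again. This lets the algorithm iterate over each candidate center of $V_2$ exactly once, giving the $k^{\Oh(1)}\cdot n$ bound directly and eliminating the entire worklist machinery. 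You should replace your dirty-flagging argument with this observation.

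One smaller point: the clean-up clause of the lemma is supposed to be established precisely, and the paper does so by identifying exactly which vertices can newly become pendent ($z_{2k+1}$, $z_{2k+3}$), which can become degree-two, and which can give rise to new nice components, and showing that all of them lie inside the modified copy $H'$, so the corresponding rules run in $k^{\Oh(1)}$ time there. It also notes, for the same reason as above, that no new possibilities for \Cref{red:paths-A} or \Cref{red:paths-B} are created because \Cref{red:paths-C} does not create new $V_1$--$V_3$ length-two paths. Your ``witnesses meet $B_v$'' statement gestures at this but should be tightened to that level of specificity.
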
 

\begin{proof}
Let $(G,V_1,V_2,V_3,k)$ be an instance of \probThreeCr such that \Cref{red:comp}--\Cref{red:paths-B} cannot be applied. We consider all vertices $v\in V_2$ having neighbors in both $V_1$ and $V_3$ and try to apply \Cref{red:paths-C} for $H$ centered in $v$. For this, we do the following.

We perform BFS starting from $v$ for the depth upper bounded by  $2^{11}(k+2)^6$. If the implementation of BFS, we count the number of vertices visited by the algorithm. If BFS visits more than 
$2^{12}(2^{11}(k+2)^6+2k+2)(k+2)^6$ vertices in some step then we conclude that $(G,V_1,V_2,V_3,k)$ is a no-instance by \Cref{lem:dist}. We stop BFS and return the answer. Suppose that this is not the case and BFS produces the set of all vertices $U$ at distance at most  $2^{11}(k+2)^6$ from $v$ whose size is at most $2^{12}(2^{11}(k+2)^6+2k+2)(k+2)^6$.

By \Cref{lem:struct-paths-C}, if \Cref{red:paths-C} can be applied for some $H$ centered in $v$ then such a graph can be found in $G'=G[W]$. Because $|W|\leq 2^{12}(2^{11}(k+2)^6+2k+2)(k+2)^6$, we can in $k^{\Oh(1)}$ time find $H$ together with the vertices $u_1,v_2,w_1,u_2,v_2,w_2$, the $3$-layer drawing $\sigma_H=(\sigma_1^H,\sigma_2^H,\sigma_3^H)$ of $H$, and the family of vertex disjoint path $P=\{x_1y_1z_1,\ldots,x_{4k+3}y_{4k+3}z_{4k+3}\}$ of size $4k+3$ satisfying conditions (i)--(iv) of the rule with $v=y_{2k+2}$ if such a graph $H$ exists. 
For example, we can do it in a naive way as follows.\footnote{This can be done slightly faster using \Cref{obs:deg-V-two} to guess  $u_1,w_1,u_2,w_2$ for given $v_1$ and $v_2$ and then using the fact that the pathwidth of $H$ is at most $3$ if $H$ admits a $3$-layer drawing.} We can consider all 6-tuples  $(u_1,v_1,w_1,u_2,v_2,w_2)$ of vertices and then consider all separations $(A,B)$ $G'$ with $A\cap B=\{u_1,v_1,w_1,u_2,v_2,w_2\}$ where $G[A]$ is connected and conditions (i)--(iv) are fulfilled. It is straightforward to verify (i) and (ii), condition(iii) is verified by making use of \Cref{prop:linear-time} in linear time, and to verify (iv), we can additionally guess $x_{2i+2}$ and $z_{2k+2}$ and use the standard maximum flow algorithm of Ford and Fulkerson~\cite{FordF56} to find other paths in $P$. 
If the algorithm finds $H$, we apply \Cref{red:paths-C}. 

Suppose that the rule was applied and let $H'$ be the graph obtained from $H$. Then it can happen that some previous rules would be applicable for the obtained instance. However, applying  these rules is 
restricted by $H'$ and can be performed locally.

First, note that $z_{2k+1}$ or $z_{2k+3}$ may become pendent and, potentially, \Cref{red:pend} could be applied for them and their neighbors $y_{2k+1}$ and $y_{2k+3}$. 
Since the neighbors of $y_{2k+1}$ and $y_{2k+3}$ are in $H'$ and  \Cref{red:pend} could not be applied for them before, we can apply the rule in $\Oh(k)$ time.
Similarly, for $i\in\{2k+1,2k+3\}$, the vertex $y_i$ may become a vertex of degree two adjacent only to $x_{i}$ and $z_{i}$. 
Then \Cref{red:degtwo}
may be applicable and we can apply it in $\Oh(k)$ time. 
Further, we can obtain a new nice component for $u=x_{2k+1}$ or $u=x_{2k+3}$. 
Observe that all nice components of $G-u$ have their vertices in $H$. 
More precisely, for $u=x_{2k+1}$, it holds that 
for each $y\in V_2$ in a nice component, $\sigma_2(y_{2k})<\sigma_2(x)<\sigma_2(x_{2k+2})$ and 
for each $z\in V_3$ in a nice component, $\sigma_2(z_{2k})<\sigma_3(z)<\sigma_3(3_{2k+2})$. 
For $u=x_{2k+3}$, the structure is symmetric.
Thus, \Cref{red:nice} can be applied in polynomial in $k$ time in $H'$ (in fact, in $\Oh(k^2)$ time).
Notice that because \Cref{red:paths-C} does not create paths of length two with their end-vertices in $V_1$ and $V_2$, the rule does not create new possibilities to apply \Cref{red:paths-A} and \Cref{red:paths-B}. Therefore, the total clean-up time is polynomial in $k$.
  
Notice that for each $v\in V_2$, we execute the above procedure only once. If we modify the graph and, in particular, make $v$ nonadjacent to any vertex of $V_3$ 
then we make it impossible to apply the rule for $v$. If we decide that $v$  cannot serve as a center then   because \Cref{red:paths-C} does not create new paths of length two with their end-vertices in $V_1$ and $V_2$, $v$ cannot be a center of some $H$ obtained after applying the rule for some other centers.
This means that the total running time is $k^{\Oh(1)}\cdot n$ and concludes the proof.
\end{proof}

The algorithms from \Cref{lem:four-rt}, \Cref{lem:five-rt}, and \Cref{lem:six-rt} either exhaustively apply the corresponding rules or correctly report that the input instance is a no-instance of \probThreeCr. In the latter case, we immediately stop and return a trivial no-instance. Otherwise, summarizing the time of initial preprocessing, our observations about \Cref{red:comp}--\Cref{red:degtwo}, 
\Cref{lem:nice-rt}--\Cref{lem:five-rt}, and \Cref{lem:six-rt}, we conclude that the total running time of our kernelization algorithm is $k^{\Oh(1)}\cdot n$. This completes the proof of \Cref{thm:kernel}.
	

\section{Lower bounds}\label{sec:lb}
In this section, we show computational lower bounds stated in  \Cref{thm:nokern} and \Cref{thm:ETHlb}. Both bounds are proved via constructing a polynomial parameter transformation of \probDF introduced by Bodlaender, Thomass{\'{e}}, and Yeo~\cite{BodlaenderTY11}.

Given a string $s$ over an alphabet $\Sigma$, a \emph{factor} is a nontrivial substring of $s$ whose first and last symbols are the same. Then \probDF asks, given a string $s$ over an alphabet $\Sigma$ of size $k$, whether there are $k$ disjoint factors with distinct first (and last) symbols. It was proved in~\cite{BodlaenderTY11} that \probDF has no polynomial kernel when parameterized by $k$ unless  $\classNP\subseteq\classCoNP/{\rm poly}$. Besides this,  the authors proved that \probDF is \classNP-complete. The proof was done by a reduction from the 3-\textsc{SAT} problem. We stress that, given an instance of 3-\textsc{SAT} with $n$ variables and $m$ clauses, the reduction constructs an instance of \probDF with the alphabet of size $\Oh(n+m)$. Thus, the \classNP-completeness proof implies a computational  lower bound up to ETH which we need. We summarize these results in the following proposition. 

\begin{proposition}[\cite{BodlaenderTY11}]\label{prop:DF}
 \probDF does not admit a polynomial kernel when parameterized by $k$ unless  $\classNP\subseteq\classCoNP/{\rm poly}$. Furthermore, the problem cannot be solved in $2^{o(k)}\cdot n^{\Oh(1)}$ time on strings of length $n$ unless ETH fails.
\end{proposition}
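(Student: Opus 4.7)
\medskip

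\noindent\textbf{Plan for proving \Cref{prop:DF}.} Since this proposition is a restatement of results from Bodlaender, Thomass\'e, and Yeo, my plan is to sketch how one would establish both parts in a unified way, through a careful polynomial-time many-one reduction from $3$-\textsc{SAT} followed by an \textsc{or}-cross-composition. The core design principle is: construct a string over a small alphabet such that ``selecting disjoint factors with distinct first symbols'' is forced to encode a global combinatorial choice (here, a truth assignment or the choice of a ``yes'' subinstance).

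\medskip

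\noindent\textbf{Part 1: NP-hardness with a linear alphabet size.} Given a $3$-\textsc{SAT} formula $\varphi$ with $n$ variables $x_1,\ldots,x_n$ and $m$ clauses $C_1,\ldots,C_m$, I would build an alphabet $\Sigma$ of size $k = \Oh(n+m)$ containing, roughly, one ``clause symbol'' $c_j$ per clause and one ``variable symbol'' $v_i$ per variable (possibly with auxiliary symbols per literal). The string $s$ would then be a concatenation of gadget blocks, one per variable, where the $i$-th block contains two candidate substrings both starting and ending with $v_i$: one ``positive branch'' that embeds the clause symbols $c_j$ corresponding to clauses in which $x_i$ appears positively, and one ``negative branch'' symmetrically. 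Intermediate occurrences of each $c_j$ would be placed to ensure that a disjoint factor starting and ending with $c_j$ can be extracted precisely when at least one clause literal is selected to be ``true''. Requiring $k = n+m$ disjoint factors with distinct first symbols then forces (i) one $v_i$-factor per variable, thus committing to a truth value, and (ii) one $c_j$-factor per clause, witnessing satisfiability. Since $|\Sigma|=\Oh(n+m)$, an algorithm running in $2^{o(k)}|s|^{\Oh(1)}$ time would solve $3$-\textsc{SAT} in $2^{o(n+m)}$ time, contradicting ETH via the Sparsification Lemma.

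\medskip

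\noindent\textbf{Part 2: Ruling out polynomial kernels via \textsc{or}-cross-composition.} For the kernelization lower bound I would construct an \textsc{or}-cross-composition (equivalently, an \textsc{or}-distillation) from an NP-hard ``padded'' version of \probDF into itself: given $t$ equivalent-size instances $(s_1,k),\ldots,(s_t,k)$, I would build a single instance $(s^\star,k^\star)$ with $k^\star = k \cdot \Oh(\log t) + \Oh(1)$ (polynomial in $k + \log t$) which is a \yin iff at least one $s_i$ is. The idea is to prepend to each $s_i$ a unique ``address gadget'' over a small block of $\Oh(\log t)$ auxiliary symbols that encodes $i$ in binary, then concatenate all the tagged strings. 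The selection of disjoint factors is then forced to (i) select one address block---thereby committing to some index $i^\star$---via the new address symbols, and (ii) inside the $i^\star$-th tagged copy, use the remaining $k$ symbols of $\Sigma$ to witness that $s_{i^\star}$ itself has $k$ disjoint factors with distinct first symbols. The gadgets would be designed so that no factor can cross the boundary between different tagged copies. Standard results then yield that if such a composition exists and $\probDF$ has a polynomial kernel in $k$, then $\classNP\subseteq\classCoNP/{\rm poly}$.

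\medskip

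\noindent\textbf{Main obstacle.} The delicate part is the second step: designing the tag/address gadget so that (i) factors cannot stitch two different tagged copies together (otherwise the composition does not preserve the \textsc{or} semantics), yet (ii) the total number of alphabet symbols used is $\Oh(k+\log t)$, not $\Oh(k \cdot t)$. The natural temptation is to give each copy a distinct ``copy-id'' symbol, but that blows up the alphabet to size $\Omega(t)$, which destroys the composition. The trick is to encode the index $i^\star$ bit-by-bit using a logarithmic-size auxiliary subalphabet whose symbols are re-used across copies, while carefully arranging intermediate occurrences so that extracting a valid set of disjoint address-factors uniquely determines one copy; ensuring no unintended ``cross-copy'' factor exists is the main combinatorial subtlety and would require the most careful case analysis.
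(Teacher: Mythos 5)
Your proposal matches the approach the paper (implicitly) relies on: the proposition is a citation of Bodlaender, Thomass\'e, and Yeo, and the paper's only added observation is that their $3$-\textsc{SAT} reduction produces an alphabet of size $\Oh(n+m)$ so the ETH bound follows via the Sparsification Lemma, which is exactly your Part~1; your Part~2 (an \textsc{or}-composition using $\Oh(\log t)$ new ``selector'' symbols to pick one of the $t$ instances) is the same mechanism used in that reference. One small imprecision worth noting: the standard composition yields $k^\star = k + \Oh(\log t)$, not $k\cdot\Oh(\log t)$, by recursively bracketing the two halves of the instance list with a single fresh symbol per level so that the factor chosen for that symbol blocks exactly one half; your multiplicative bound still satisfies the composition framework, but the additive bound is what the cited argument achieves.
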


Our lower bounds for \probHCr are proved by almost the same reductions. However, note that  
to prove \Cref{thm:nokern}, it is sufficient to obtain any polynomial parameter transformation. For the ETH lower bound, we have to be more careful and avoid blowing up the parameter, and to do it, we use an extra layer in the drawing.  That is the reason why the kernalization lower bound is established for \probFourCr and the ETH bound is obtained for \probFiveCr.

We need the following observation.

\begin{observation}\label{obs:encoding}
Given an alphabet $\Sigma$ of size $k$, the symbols of $\Sigma$ can be encoded by distinct binary strings of length $2\lceil \log k\rceil +2$ such that each binary string  $s$ encoding a symbol contains an equal number of zeros and ones and the reversal of $s$ is distinct from any other string. Furthermore, such an encoding can be constructed in polynomial time. 
\end{observation}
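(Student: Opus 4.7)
The plan is a direct counting argument. Set $L = 2\lceil \log k\rceil + 2$ (note $L$ is even) and let $\mathcal{B}$ denote the set of all binary strings of length $L$ that contain exactly $L/2$ zeros and $L/2$ ones. Clearly $|\mathcal{B}| = \binom{L}{L/2}$. Define an equivalence relation $\sim$ on $\mathcal{B}$ by $s \sim s^R$; each class has size $1$ if $s$ is a palindrome and size $2$ otherwise. I want to pick $k$ strings that form a system of representatives of pairwise distinct classes.

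First I would verify that the ``system of representatives'' condition is equivalent to the one demanded by the observation. If $S \subseteq \mathcal{B}$ contains at most one element from each $\sim$-class, then for any distinct $s, s' \in S$, we cannot have $s^R = s'$, since that would put $s, s'$ in the same class. Conversely, if $s^R \neq s'$ for every distinct $s, s' \in S$, then any two elements lying in the same class would have to coincide. Note palindromes cause no trouble: $s^R = s$ itself is not ``any other string'' in $S$.

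Next I would count. The number of equivalence classes is at least $|\mathcal{B}|/2 = \tfrac{1}{2}\binom{L}{L/2}$. Using the elementary bound $\binom{2m}{m} \geq 2^{2m}/(2m+1)$ with $m = L/2 = \lceil \log k\rceil + 1$, and the fact that $2^{\lceil \log k\rceil} \geq k$, I get
\[
\tfrac{1}{2}\binom{L}{L/2} \;\geq\; \frac{2^{L-1}}{L+1} \;\geq\; \frac{2k^2}{2\lceil \log k\rceil + 3}.
\]
A short calculation shows this quantity is at least $k$ whenever $k \geq \lceil \log k \rceil + 3/2$, which holds for all $k \geq 3$. The two remaining tiny cases $k \in \{1,2\}$ are handled by explicit enumeration (e.g., for $k=1$ pick $s = 01$ of length $2$; for $k=2$ the six balanced strings of length $4$ give four classes, of which two can be chosen). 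Hence at least $k$ distinct classes exist, and we may select one representative from each to build the required encoding.

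Finally, for the algorithmic statement, enumerate $\mathcal{B}$ (which has $\binom{L}{L/2} = O(k^2)$ strings, and each string has length $O(\log k)$), compute the reversal of each, group strings into $\sim$-classes, and output one representative from the first $k$ classes encountered. This runs in time polynomial in $k$, hence polynomial in the input size. There is no real obstacle here; the only care needed is verifying the lower bound on the number of classes and handling the couple of small base cases explicitly.
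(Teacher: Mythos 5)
Your proof is correct, but it takes a genuinely different route from the paper. The paper's proof is a one-line explicit construction: it first notes $\binom{2\lceil\log k\rceil}{\lceil\log k\rceil}\geq 2^{\lceil\log k\rceil}\geq k$, so one can pick $k$ distinct balanced strings of length $2\lceil\log k\rceil$, and then prepends a $0$ and appends a $1$ to each. Since every string in the resulting set then begins with $0$ and ends with $1$, every reversal begins with $1$; hence no reversal can coincide with any string in the set, and the balance and length are exactly as required. This avoids any case analysis or reasoning about palindromes. Your approach instead counts equivalence classes under reversal among all balanced strings of the full length $L = 2\lceil\log k\rceil+2$ and argues there are at least $k$ of them. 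Both work, and yours is arguably more flexible, but it buys nothing here over the paper's cleaner explicit padding trick.

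One small but real slip in your arithmetic: you claim $\tfrac{2k^2}{2\lceil\log k\rceil+3}\geq k$, i.e.\ $2k\geq 2\lceil\log k\rceil+3$, holds for all $k\geq 3$. It fails for $k=3$: the left side is $6$ and the right side is $7$ (since $\lceil\log_2 3\rceil=2$). The conclusion is still true for $k=3$ because the lower bound $\binom{2m}{m}\geq 2^{2m}/(2m+1)$ you used is loose (the true number of classes at $L=6$ is $10\geq 3$), but as written your derived sufficient condition does not cover $k=3$, so you should either include $k=3$ among the explicitly verified small cases or tighten the estimate.
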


\begin{proof}
To see this, note that $\binom{2\lceil \log k\rceil}{\lceil \log k\rceil}\geq 2^{\lceil \log k\rceil}$, that is, $k$ symbols can be encoded by distinct strings of length $2\lceil \log k\rceil$ with exactly $\lceil \log k\rceil$ zeros and the same number of ones. Then to prevent the  reversal of a string to be equal to another string, we can add zero in the beginning and append one in the end.
\end{proof}

Suppose that $(s,k)$ is an instance of \probDF where $s$ is a string over $\Sigma=\{s_1,\ldots,s_k\}$ of size $k\geq 2$.
Using \Cref{obs:encoding}, we assume that the symbols of $\Sigma$ are encoded by  binary strings of length $2\ell=2\lceil \log k\rceil +2$ such that the strings and their reversals are distinct and each symbol is encoded by 
a string containing equal number $\ell$ of zeros and ones. 
In the following subsection, we will describe the basic gadgets used in our reductions. 

\subsection{Basic gadgets}\label{sec:basic}
First, we construct gadgets that are used to encode zeros and ones in the encoding of the symbols. The vertices of these gadgets are placed in the sets $V_1,V_2,V_3$ of the $4$ or $5$-partite graphs constructed by the reductions.

\begin{figure}[ht]
\centering
\scalebox{0.6}{
\input{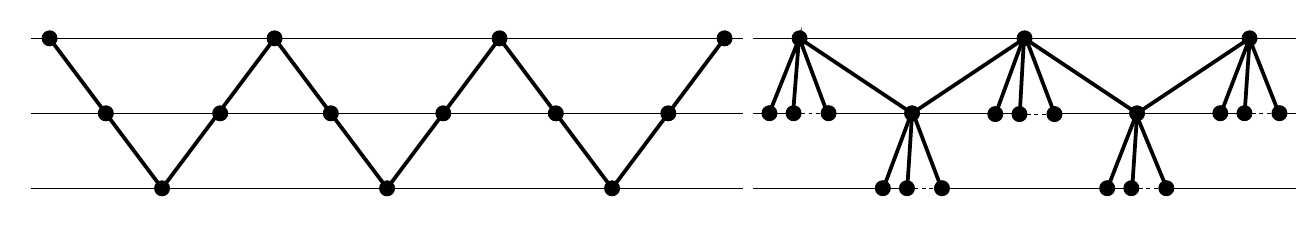_t}}
\caption{Construction of $Z$ and $\widehat{Z}$.}
\label{fig:Z}
\end{figure}

For zero, we construct the graph $Z$ and the ``complementing'' graph $\widehat{Z}$ as follows (see~\Cref{fig:Z}).
\begin{itemize}
\item Construction of $Z$:
\begin{itemize}
\item construct two \emph{root} vertices $u$ and $v$ and join them by a $(u,v)$-path $w_0\cdots w_{12}$,
\item put  $w_2,w_6,w_{10}$ in $V_1$, $w_1,w_3,w_5,w_7,w_9,w_{11}$ in $V_2$, and $w_0,w_4,w_8,w_{12}$ in $V_3$.
\end{itemize}
\item Construction of $\widehat{Z}$:
\begin{itemize}
\item construct two \emph{root} vertices $x$ and $y$ and join them by a $(x,y)$-path $w_0w_1w_2w_3w_{4}$,
\item put  $w_1,w_{3}$ in $V_2$ and  $w_0,w_2,w_4$ in $V_3$, 
\item for $z\in \{w_1,w_3\}$, construct a set of $p=40k^3$ vertices, make them adjacent to $z$, and put then in $V_1$,
\item for $z\in \{w_0,w_2,w_4\}$, construct a set of $p$ vertices, make them adjacent to $z$, and put then in $V_2$.
\end{itemize}
\end{itemize}

\begin{figure}[ht]
\centering
\scalebox{0.6}{
\input{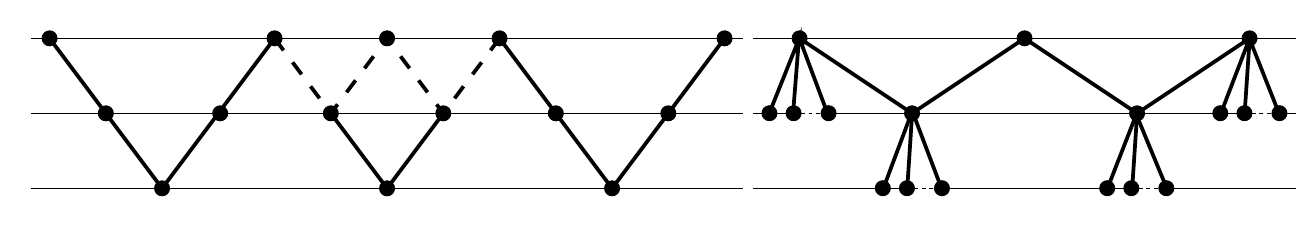_t}}
\caption{Construction of $U$ and $\widehat{U}$; the $V_3$-no-gap path is shown by dashed lines.}
\label{fig:U}
\end{figure}

For one, the graph $U$ and the ``complementing'' graph $\widehat{U}$ are constructed as follows (see~\Cref{fig:U}).
\begin{itemize}
\item Construction of $U$:
\begin{itemize}
\item construct two \emph{root} vertices $u$ and $v$ and join them by a $(u,v)$-path $w_0\cdots w_{12}$,
\item put  $w_2,w_6,w_{10}$ in $V_1$, $w_1,w_3,w_5,w_7,w_9,w_{11}$ in $V_2$, and $w_0,w_4,w_8,w_{12}$ in $V_3$,
\item construct a vertex $w$, make it adjacent to $w_5$ and $w_7$, and put it in $V_3$.
\end{itemize}
\item Construction of $\widehat{U}$:
\begin{itemize}
\item construct two \emph{root} vertices $x$ and $y$ and join them by a $(x,y)$-path $w_0w_1w_2w_3w_{4}$,
\item put  $w_1,w_{3}$ in $V_2$ and  $w_0,w_2,w_4$ in $V_3$, 
\item for $z\in \{w_1,w_3\}$, construct a set of $p$ vertices, make them adjacent to $z$, and put then in $V_1$,
\item for $z\in \{w_0,w_4\}$, construct a set of $p$ vertices, make them adjacent to $z$, and put then in $V_2$.
\end{itemize}
\end{itemize}

\begin{figure}[ht]
\centering
\scalebox{0.6}{
\input{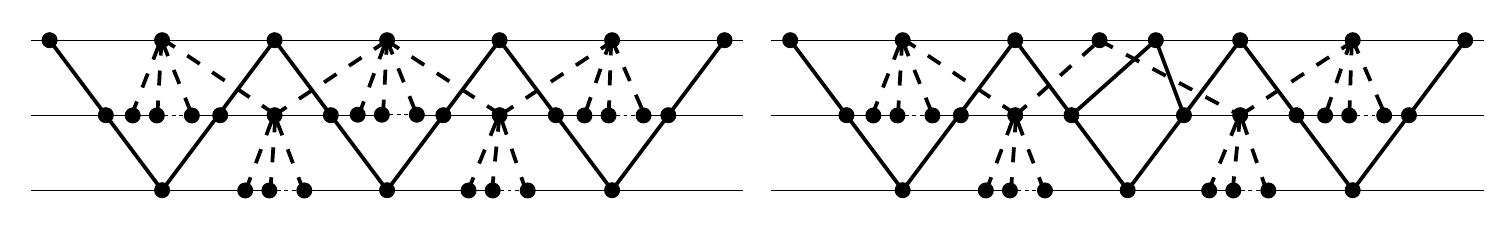_t}}
\caption{$3$-layer drawing of (a) $Z$ and $\widehat{Z}$ and (b) $U$ and $\widehat{U}$; the edges of $\widehat{Z}$ and $\widehat{U}$ are shown by dashed lines.}
\label{fig:UZ}
\end{figure}

The choice of $p$ will be explained later; now we just observe that $p$ is bigger than the value of the parameter in the constructed instances of \probFourCr and  \probFiveCr. 
The unique (up to the reversals of the orders  and permutations of pendent vertices) $3$-layer drawings of $Z$, $\widehat{Z}$, $U$, and $\widehat{U}$ without crossings are shown in \Cref{fig:Z} and \Cref{fig:U}; we call these drawings \emph{canonical}.
To give an intuition of how these gadgets are going to be used in the reductions, we show in \Cref{fig:UZ} types of $3$-layer drawings that will occur.

Our next aim is the construction of gadgets for encoding the symbols $s_1,\ldots,s_k$ of $\Sigma$. They are constructed by making use of the above gadgets for zeros and ones and two auxiliary gadgets $F$ and $C$ that are constructed as follows.

\begin{figure}[ht]
\centering
\scalebox{0.7}{
\input{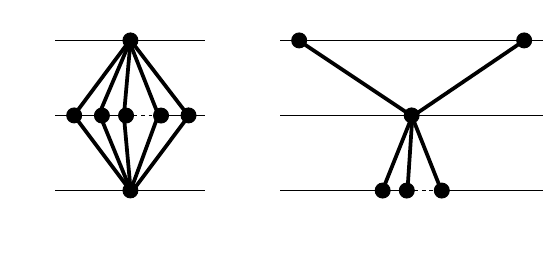_t}}
\caption{Construction of $F$ and $C$.}
\label{fig:FC}
\end{figure}

\begin{itemize}
\item Construction of the \emph{filling} gadget $F$:
\begin{itemize}
\item construct two vertices $a$ and $b$, and put $a$ and $b$ in $V_1$ and $V_3$, respectively.  
\item construct a set $X$ of $p$ vertices, select two vertices $c,d\in X$, make the vertices of $X$ adjacent to $a$ and $b$, and include $X$ in $V_2$,
\end{itemize}
\item Construction of the \emph{connection} gadget  $C$:
\begin{itemize}
\item construct two \emph{root} vertices $x$ and $y$ and put them in $V_3$,
\item construct a vertex $z$, make it adjacent to $x$ and $y$, and put $z$ in $V_2$,
\item construct a set of $p$ vertices, make them adjacent to $z$, and put them in $V_1$.
\end{itemize}
\end{itemize}

To encode $s_i$ for $i\in\{1,\ldots,k\}$, we assume that $s_i=\delta_1\dots\delta_{2\ell}$ where $\delta_j\in\{0,1\}$ for each $j\in \{1,\ldots,2\ell\}$, that is, we identify the symbols and their binary encodings. Then we construct two graphs $S_i$ and $\widehat{S}_i$ by making using the gadgets constructed above. Informally, $S_i$ is constructed by concatenating copies of $Z$ and $U$ constructed for each $j\in\{1,\ldots,2\ell\}$ and adding two ``boundary'' copies of $F$. Similarly, $\widehat{S}_i$ is constructed by connecting copies of $\widehat{Z}$ and $\widehat{U}$ but these copies are connected via copies of $C$. The construction of $S_i$ and $\widehat{S}_i$ for $s_i=0101$ is shown in \Cref{fig:Si}.

\begin{figure}[ht]
\centering
\scalebox{0.6}{
\input{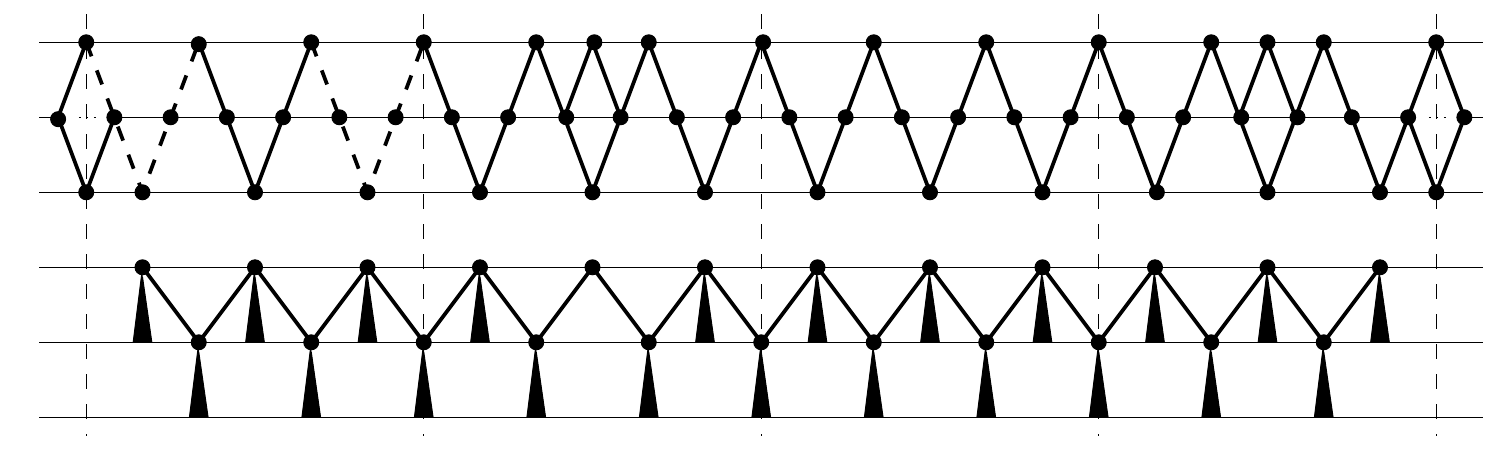_t}}
\caption{Construction of $S_i$ and $\widehat{S}_i$ for $s_i=0101$; black triangles are used to show sets of pendent vertices of size $p$ in $\widehat{S}_i$, and examples of  $V_3$-gap paths in $S_i$ are shown by thick dashed lines.}
\label{fig:Si}
\end{figure}

\begin{itemize}
\item Construction of $S_i$:
\begin{itemize}
\item construct a copy of $F$,
\item if $\delta_1=0$ then construct a copy of $Z$ and construct a copy of $U$ if $\delta_1=1$, then identify the root $u$ of the gadget with the vertex $b$ of $F$ and the vertex adjacent to $u$ with $d$,  
\item for each $j\in\{2,\ldots,2\ell\}$, construct a copy of $Z$ if $\delta=0$ and construct a copy of $U$ if $\delta_1=1$, then identify the root $u$ of the gadget with the root $v$ of the gadget constructed for $\delta_{j-1}$,
\item construct a copy of $F$,  then identify the vertex $b$ with the root $v$ of the gadget constructed for $\delta_{2\ell}$ and identify the vertex $c$ with the vertex adjacent to $v$.
\end{itemize}
\item Construction of $\widehat{S}_i$:
\begin{itemize}
\item if $\delta_1=0$ then construct a copy of $\widehat{Z}$ and construct a copy of $\widehat{U}$ if $\delta_1=1$,
\item for each $j\in\{2,\ldots,2\ell\}$, construct a copy of $C$ and then construct a copy of $Z$ if $\delta=0$ and construct a copy of $U$ if $\delta_1=1$,  identify the root $x$ of the copy of $C$ with the root $y$ of the gadget constructed for $\delta_{j-1}$ and identify the root $y$ of $C$ with the root of the new copy of either  $\widehat{Z}$ or $\widehat{U}$.
\end{itemize}
\end{itemize} 

The unique (up to the reversals of the orders and permutations of twin vertices) $3$-layer drawings of each $S_i$ and  $\widehat{S}_i$  without crossings are shown in \Cref{fig:Si}; these drawings are obtained from the canonical drawing of the gadgets $Z$, $\widehat{Z}$, $U$, and $\widehat{U}$, and we call such drawings of $S_i$ and $\widehat{S}_i$ \emph{canonical}. We use $a'$, $b'$, $c'$, and $d'$ to denote the vertices $a$, $b$, $c$, and $d$, respectively, of the second copy of $F$ in $S_i$. 

For $i,j\in \{1,\ldots,2\ell\}$, we say that a $3$-layer drawing $\sigma=(\sigma_1,\sigma_2,\sigma_3)$ of the disjoint union of $S_i$ and $\widehat{S}_j$ is  \emph{restrictive} if for every vertex $w\in V_3\cap V(\widehat{S}_j)$, $\sigma_3(b)<\sigma_3(w)<\sigma_3(b')$ where $b$ and $b'$, respectively, are vertices of the copies of $F$ in $V_3\cap V(S_i)$.   The properties of the restrictive drawings are summarized in the following lemma. 

\begin{lemma}\label{lem:restr}
Let $i,j\in\{1,\ldots,2\ell\}$. Then the following holds for the disjoint union $H$ of  $S_i$ and $\widehat{S}_j$:
\begin{itemize}
\item[(i)] Every restrictive $3$-layer drawing of $H$ has at least $14\ell-2$ crossings. 
\item[(ii)] A restrictive $3$-layer drawing of $H$ with exactly $14\ell-2$ crossings exists if and only if $i=j$.
\item[(iii)] If $\sigma=(\sigma_1,\sigma_2,\sigma_3)$ is a restrictive $3$-layer drawing of $H$ with $14\ell-2$ crossings
 then the induced drawings of $S_i$ and $\widehat{S}_j$ are canonical, and, furthermore, if $\sigma_{3}(b)<\sigma_3(b')$ then
 at least one vertex of $V_3\cap V(\widehat{S}_j)$ is drawn immediately after $b$ and at least one vertex is drawn immediately before $b'$, and if 
 $\sigma_{3}(b')<\sigma_3(b)$ then
 at least one vertex of $V_3\cap V(\widehat{S}_j)$ is drawn immediately after $b'$ and at least one vertex is drawn immediately before $b$.
\end{itemize}
\end{lemma}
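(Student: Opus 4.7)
The plan is to prove (i), (ii), and (iii) together via a rigidity-plus-counting argument. First I would establish rigidity: any restrictive $3$-layer drawing $\sigma$ of $H = S_i \sqcup \widehat{S}_j$ with fewer than $p$ crossings must induce, on each of $S_i$ and $\widehat{S}_j$, a canonical drawing (up to reversal of orders and permutations of twin vertices). This uses the fact that each sub-gadget $F$, $C$, $\widehat{Z}$, $\widehat{U}$ carries twin pendant sets of size $p = 40k^3$ attached to specific anchor vertices in $V_1$ or $V_2$, so any deviation from the canonical layout at some anchor forces one of these $p$-sized sets to create more than $p$ crossings with opposing edges. Applied iteratively along the chain of gadgets---using \Cref{obs:caterpillar} as a base case for each $Z,U,\widehat{Z},\widehat{U},F,C$---this pins down the internal orders of $S_i$ and $\widehat{S}_j$. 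Since $14\ell - 2 < p$, any restrictive drawing attaining the candidate minimum number of crossings must be canonical.

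Second, assuming both induced drawings are canonical and (reversing orders if needed) that $\sigma_3(b) < \sigma_3(b')$, the restrictive condition confines all $V_3$-vertices of $\widehat{S}_j$ to the open interval between $b$ and $b'$. Within this interval, the canonical $V_3$-order of $S_i$ decomposes into $2\ell$ consecutive blocks $B_1,\ldots,B_{2\ell}$, one per $Z/U$-gadget of $S_i$, while the canonical $V_3$-order of $\widehat{S}_j$ decomposes into $2\ell$ blocks $\widehat{B}_1,\ldots,\widehat{B}_{2\ell}$, one per $\widehat{Z}/\widehat{U}$-gadget of $\widehat{S}_j$. Restrictiveness plus canonicality forces a monotone assignment of each $\widehat{B}_{j'}$ to a slot among the $B_{i'}$'s, and a case analysis of the four possible gadget-pair types in each slot (the essential picture is captured by \Cref{fig:UZ}) yields a per-slot crossing count that is minimised exactly when the two types match---i.e.\ $(Z,\widehat{Z})$ or $(U,\widehat{U})$---and is strictly larger in the mismatched cases $(Z,\widehat{U})$ or $(U,\widehat{Z})$. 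Adding the fixed contributions from the $C$-gadgets bridging consecutive $\widehat{B}_{j'}$'s and from the boundary $F$-gadgets of $S_i$, the total crossing count is exactly $14\ell - 2$ when every pairing matches (i.e.\ $\widehat{B}_{j'}$ is assigned to $B_{j'}$ for each $j'$ and the corresponding bits of $s_i$ and $s_j$ agree), and strictly larger otherwise. This gives (i) and the forward direction of (ii); the reverse direction is a direct construction when $s_i = s_j$, and by \Cref{obs:encoding}, $s_i = s_j$ holds iff $i = j$, since both the encodings and their reversals are pairwise distinct, which also rules out spurious equality via a reversed orientation of $\widehat{S}_j$.

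Part (iii) then follows from tracing when equality holds in the counting: matching of blocks and minimality within each matched slot forces each $\widehat{B}_{j'}$ to lie entirely inside $B_{j'}$, so in particular the leftmost and rightmost $V_3$-vertices of $\widehat{S}_j$ must sit immediately next to $b$ and $b'$, respectively, in $\sigma_3$ (or next to $b'$ and $b$ in the reversed case). The main obstacle is the rigidity step: although the intuition that $p$-sized twin sets enforce uniqueness is clear, executing it rigorously requires careful bookkeeping of which twin set is anchored at which vertex, and how the propagation of uniqueness through the gadget chain interacts with the freedom to permute twins and to reverse orders. A secondary technical piece is the explicit per-slot crossing count, which, while reducing to a finite enumeration of the four type-pairs, must be carried out edge-by-edge to confirm both the exact matched-case value and the strict inequality in the mismatched cases.
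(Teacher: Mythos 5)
The proposal's first step---the rigidity claim that \emph{any} restrictive drawing of $H$ with fewer than $p$ crossings must induce canonical drawings of \emph{both} $S_i$ and $\widehat{S}_j$---is false for $S_i$, and this gap propagates through the rest of the argument. The mechanism you invoke (a $p$-sized twin pendant set anchored at a vertex forces its neighbourhood into a rigid position, since any deviation creates more than $p$ crossings) is sound, and you correctly list the gadgets that carry such sets: $F$, $C$, $\widehat{Z}$, $\widehat{U}$. But $S_i$ is built only from $Z$, $U$, and two boundary copies of $F$; the gadgets $Z$ and $U$ have no $p$-sized pendant sets, and the $F$-gadgets sit at the two ends. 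So there is no chain of anchors through which to ``propagate uniqueness'' into the interior of $S_i$. Concretely, one can swap two adjacent non-root $V_2$-vertices inside a single $Z$-gadget of $S_i$; this introduces only a handful of crossings (nowhere near $p$), yet the induced drawing of $S_i$ is no longer canonical. Your block decomposition and per-slot count all assume the canonical $V_3$-order of $S_i$, so the lower bound in (i) is not actually established for such drawings, and the reverse direction of (ii) (strictly more than $14\ell-2$ crossings when $i\neq j$) does not follow either.

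The paper's proof sidesteps this by never assuming canonicality of $S_i$. It works with \emph{structurally defined} $V_3$-gap paths in $S_i$ (induced paths $w_1w_2w_3w_4w_5$ with $w_1,w_5\in V_3$, $w_3\in V_1$, and a degree-$2$ interior vertex), which exist in $S_i$ regardless of how $S_i$ is drawn. It then shows that each $V_3$-big vertex of $\widehat{S}_j$ must sit in a ``properly drawn'' gap of one of these paths (using \Cref{cl:crosspend}), that no two big vertices can share a gap (\Cref{cl:gap}), and that the counts $5\ell=5\ell$ force a bijection, after which the per-edge crossing counts (\Cref{cl:two-cross-big}, \Cref{cl:two-cross-small-one}, \Cref{cl:two-cross-small-two}) give the lower bound for an arbitrary---not necessarily canonical---drawing of $S_i$. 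Canonicality of $S_i$ is only \emph{concluded} at the very end, when the total is exactly $14\ell-2$ and hence $S_i$ can afford no internal crossings at all. If you want to keep your slot-matching framing, you would at minimum need to prove that the crossing count for a non-canonical $S_i$-drawing is never smaller, which is essentially what the paper's gap-path argument accomplishes; as written, your proof is circular on this point. A secondary (and more minor) issue: the $2\ell$ blocks of $V_3\cap V(S_i)$ you describe are not disjoint, since consecutive $Z/U$-gadgets share root vertices, so the slot assignment needs to be formalized in terms of the gap paths rather than per-gadget blocks.
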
 

\begin{proof}
Let $i,j\in\{1,\ldots,2\ell\}$ and consider $H=S_i\uplus \widehat{S}_j$. 

\begin{figure}[ht]
\centering
\scalebox{0.6}{
\input{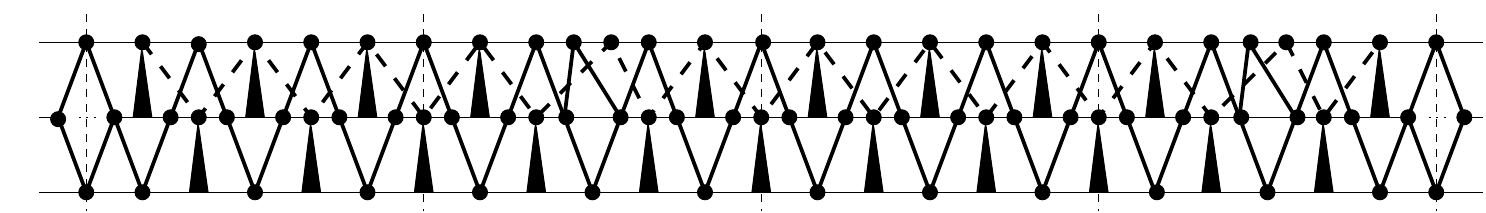_t}}
\caption{The restrictive $3$-layer drawing of $H=S_i\uplus \widehat{S}_i$ constructed for $s_i=0101$; black triangles are used to show sets of pendent vertices of size $p$ in $\widehat{S}_i$, the edges of $\widehat{S}_i$ are shown by thick dashed lines.}
\label{fig:Sir}
\end{figure}

First, we observe that if $i=j$ then a restrictive $3$-layer drawing with exactly $14\ell-2$ crossings exists. We consider the canonical $3$-layer drawing $\sigma=(\sigma_1,\sigma_2,\sigma_3)$ of $S_i$ and then insert the vertices of $\widehat{S}_i$ between the vertices of $S_i$ as it is shown in \Cref{fig:Sir} following the idea demonstrated in \Cref{fig:UZ}. By the direct computations, we have that the number of crossings for this drawing is exactly $14\ell-2$ because $s_i$ has length $2\ell$ and contains exactly $\ell$ ones. 

Now we prove that for any restricted $3$-layer drawing of $H$, the number of crossings is at least $14\ell-2$ and the number of crossings is at least $14\ell-1$ if $i\neq j$. 

Consider  a restrictive $3$-layer drawing $\sigma=(\sigma_1,\sigma_2,\sigma_3)$ of $H$ whose number of crossing is minimum. If the number of crossings is at least 
$p=40k^3>14\ell-2$ then the claim holds. Assume that the number of crossings is at most $p-1$. 

By the definition of restrictive drawings, for every vertex $w\in V_3\cap V(\widehat{S}_j)$, $\sigma_3(b)<\sigma_3(w)<\sigma_3(b')$. Because $F$ has $p$ $(a,b)$-paths of length two, we additionally obtain that   for every vertex $w\in V_1\cap V(\widehat{S}_j)$, $\sigma_1(a)<\sigma_1(w)<\sigma_1(a')$. Otherwise, because $\widehat{S}_j$ is connected, each of the paths in one of the copies of $F$ in $S_i$ would have a crossing edge. 

For $t\in\{2,3\}$, we say that the vertices of $\widehat{S}_j$ in $V_t$ incident to sets of $p$ pendent vertices are \emph{$V_t$-big} and we say that a vertex of degree two in $V_3\cap V(\widehat{S}_j)$ is \emph{$V_3$-small}. We need the following observation.

\begin{claim}\label{cl:crosspend}
Let $u$ be a pendent vertex adjacent to a $V_t$-big vertex $v$ for some $t\in\{2,3\}$. Then $uv$ does not cross any edge in $\sigma$.
\end{claim}
 
\begin{proof}[Proof of~\Cref{cl:crosspend}]
Because $v$ is a $V_t$-big vertex, $v$ is a adjacent to $p$ pendent vertices $U$. Since the number of crossings is at most $p-1$, there is $w\in U$ such that $wv$ does not cross any edge. If $uv$ is crossing some other edge, we can modify the drawing by putting $u$ immediately after $w$. This would decrease the number of crossings contradicting the choice of $\sigma$. This proves the claim.
\end{proof}

Given an induced path $P=w_1w_2w_3w_4w_5$ on five vertices in $S_i$, we say that $P$ is a \emph{$V_3$-gap path}  if (i)~$w_1,w_5\in V_3$, (ii)~$w_3\in V_{1}$, and 
(iii)~$d_{S_i}(w_2)=2$ or $d_{S_i}(w_4)=2$ (see~\Cref{fig:Si}). 
Let $\sigma'=(\sigma_1,\sigma_2',\sigma_3')$ be the $3$-layer drawing of $S_i$ induced by $\sigma$.
For a $V_3$-gap path $P=w_1w_2w_3w_4w_5$,  it is said that $P$ is drawn \emph{properly} if (i)~either $\sigma_3'(w_1)<\sigma_3'(w_5)$ and $\sigma_2'(w_2)<\sigma_2'(w_4)$ or, symmetrically,  $\sigma_3'(w_5)<\sigma_3'(w_1)$ and $\sigma_2'(w_4)<\sigma_2'(w_2)$, and (ii) one may insert two adjacent additional vertices $u\in V_2$ and $v\in V_3$  in such a way that either  $u$ is between $w_2$ and $w_4$ in $\sigma_2'$ and $v$ is between $w_1$ and $w_5'$ in $\sigma_3'$ and $uv$ does not cross any edge in the drawing of $S_i$ with respect to $\sigma'$. 

\Cref{cl:crosspend} and the construction of $S_i$ imply that for every $V_3$-big vertex $v$ there is a properly drawn $V_3$-gap path $P=w_1w_2w_3w_4w_5$ such that either 
$\sigma_3(w_1)<\sigma_3(v)<\sigma_3(v_5)$ or, symmetrically, $\sigma_3(w_5)<\sigma_3(v)<\sigma_3(v_1)$. We say that $v$ is \emph{drawn in the $P$-gap}. 
It is also easy to make the following observation.

\begin{claim}\label{cl:two-cross-big}
If $v$ is a $V_3$-big vertex of $\widehat{S}_j$ then each edge $vx$ where $x$ is not pendent crosses at least one edge of $S_i$.
\end{claim}

\begin{proof}[Proof of~\Cref{cl:two-cross-big}]
We have that $v$ is drawn in the  $P$-gap for some properly drawn $V_3$-gap path $P=w_1w_2w_3w_4w_5$. We assume without loss of generality that 
$\sigma_3(w_1)<\sigma_3(v)<\sigma_3(w_5)$ and $\sigma_2(w_2)<\sigma_2(w_4)$.
Notice that $x$ is a $V_2$-big vertex. Then by \Cref{cl:crosspend}, either $\sigma_2(x)<\sigma_2(w_2)$ or $\sigma_2(w_4)<\sigma_2(x)$. In both cases, $vx$ crosses at least one edge of $S_i$. This proves the claim.
\end{proof} 

Now we show that $V_3$-big vertices are arranged in $\sigma$ in a specific way.

\begin{claim}\label{cl:gap}
There are no two distinct $V_3$-big vertices $u$ and $v$ that are drawn in the same $P$-gap. 
\end{claim}

\begin{proof}[Proof of~\Cref{cl:gap}]
Assume that $V_3$-big vertices $u$ and $v$ are drawn in the same $P$-gap for a $V_3$-gap path $P=w_1w_2w_3w_4w_5$ with $\sigma_3(w_1)<\sigma_3(u)<\sigma_3(v)<\sigma_3(w_5)$. Notice that $\widehat{S}_j$ has a $(u,v)$-path $Q$ that contains a $V_2$-big vertex $w$. By \Cref{cl:crosspend}, we have that either $\sigma_2(w)<\sigma_2(w_2)$ or $\sigma_2(w_4)<\sigma_2(w)$. In the first case, the edges of $Q$ would cross each edge $ux$ for a pendent vertex $x$ adjacent to $u$, and $Q$ would cross every  edge $vx$ for a pendent vertex $x$ adjacent to $v$. In both cases, we obtain a contradiction with \Cref{cl:crosspend}. This proves the claim.
\end{proof}

Notice that $\widehat{S}_j$ has exactly $5\ell$ $V_3$-big vertices and $S_i$ has exactly the same number of $V_3$-gap paths. Then \Cref{cl:gap} implies every $V_3$-gap path is drawn properly and for each  $V_3$-gap path $P$, there is a $V_3$-big vertex drawn in the $P$-gap. 

We say that three distinct vertices $u_1,u_2,u_3\in V_3\cap V(\widehat{S}_j)$ are \emph{consecutive} if the unique $(u_1,u_3)$-path in $\widehat{S}_j$ contains $u_2$.

\begin{claim}\label{cl:cons}
Suppose that $u_1,u_2,u_3\in V_3\cap V(\widehat{S}_j)$ are consecutive vertices and $u_1$ and $u_3$ are $V_3$-big. Then either $\sigma_3(u_1)<\sigma_3(u_2)<\sigma_3(u_3)$ or, symmetrically, $\sigma_3(u_3)<\sigma_3(u_2)<\sigma_3(u_1)$.
\end{claim}

\begin{proof}[Proof of~\Cref{cl:cons}]
Assume without loss of generality that $\sigma_3(u_1)<\sigma_3(u_3)$. We show that  $\sigma_3(u_1)<\sigma_3(u_2)<\sigma_3(u_3)$. For the sake of contradiction, assume that $\sigma_3(u_2)<\sigma_3(u_1)$. Then consider the unique $(u_2,u_3)$-path $Q$ in $\widehat{S}_j$. Because $u_1,u_2,u_3$ are consecutive, $Q$ avoids $u_1$. Then we have that edges $Q$ cross every $u_1x$ for a pendent vertex $x$ incident to $u_1$ contradicting  \Cref{cl:crosspend}. The case $\sigma_3(u_3)<\sigma_3(u_2)$ is symmetric. This concludes the proof. 
\end{proof}

By \Cref{cl:cons}, we immediately obtain that the $3$-layer drawing of $\widehat{S}_j$ induced by $\sigma$ is canonical.

Now we consider $V_3$-small vertices, that is, vertices of $\widehat{S}_j$ of degree two. Following the previous notation, we say that $v$ is \emph{drawn in the $P$-gap} for 
a $V_3$-gap 
$P=w_1w_2w_3w_4w_5$ if either $\sigma_3(w_1)<\sigma_3(v)<\sigma_3(w_5)$ or  $\sigma_3(w_5)<\sigma_3(v)<\sigma_3(w_1)$.
In exactly the same way as in \Cref{cl:two-cross-big}, we have the following claim.

\begin{claim}\label{cl:two-cross-small-one}
If $v$ is a $V_3$-small vertex of $\widehat{S}_j$ drawn in the $P$-gap for a $V_3$-gap path $P$ then each edge $vx$ where $x$ is not pendent crosses at least one edge of $S_i$.
\end{claim}

We say that $P=w_1w_2w_3w_4w_5$ is a \emph{$V_3$-no-gap path} if $P$ is a induced path in one of the gadgets $U$ such that 
(i)~$w_1,w_3, w_5\in V_3$ and (ii) $w_2,w_4\in V_2$ (see~\Cref{fig:U}). A $V_3$-small vertex $v$ is in the $P$-gap for a $V_3$-no-gap path $P$ if there are $u_1,u_2\in \{w_1,w_3,w_5\}$ such that $\sigma_3(u_1)<\sigma_3(v)<\sigma_3(u_2)$. 
We show the following bound on the number of crossings. Notice that by the construction, for a $V_3$-small vertex $v$, there are two distinct $V_3$-big vertices $x,y\in V_3\cap V(\widehat{S}_j)$ at distance two from $v$.

\begin{claim}\label{cl:two-cross-small-two}
Let $v$ be a $V_3$-small vertex of $\widehat{S}_j$ drawn in the $P$-gap for a $V_3$-no gap-path $P$ and let $x,y\in V_3\cap V(\widehat{S}_j)$ be two distinct $V_3$-big vertices  at distance two from $v$. Then the edges of the $(x,y)$-path in $\widehat{S}_j$ cross at least $6$ edges of $S_i$. 
\end{claim}

\begin{proof}[Proof of \Cref{cl:two-cross-small-two}]
By \Cref{cl:cons}, we can assume without loss of generality that $\sigma_3(x)<\sigma_3(v)<\sigma_3(y)$. Let $Q=w_1w_2w_3w_4w_5$ and $Q'=w_1'w_2'w_3'w_4'w_4'$ be properly drawn $V_3$-gap paths such that $x$ and $y$ are drawn in the $Q$ and $Q'$-gaps, respectively. We assume without loss of generality that 
$\sigma_3(w_1)<\sigma_3(x)<\sigma_3(w_5)$ and $\sigma_3(w_1')<\sigma_3(y)<\sigma_3(w_5')$. By \Cref{cl:gap}, $Q$ and $Q'$ are distinct $V_3$-gap paths. 
Because of \Cref{cl:cons} and the observation that every $V_3$-gap path is drawn properly and for each  $V_3$-gap path $P$, there is a $V_3$-big vertex drawn in the $P$-gap, we obtain that $P=u_1u_2u_3u_4u_5$ for $u_1=w_5$ and $u_5=w_1'$. Let $R=w_4u_1u_2u_3u_4u_5w_2'$. This path has $6$ edges and each edge of $R$ crosses at least one edge of the $(x,y)$-path in $\widehat{S}_j$. This proves the claim.
\end{proof}

Now we obtain the lower bound for the number of crossings if every $V_3$-small vertex is drawn in the $P$-gap for a $V_3$-no-gap path.

\begin{claim}\label{cl:lb-canonical}
If every $V_3$-small vertex is drawn in the $P$-gap for a $V_3$-no-gap path $P$ then the total number of crossings is at least $14\ell-2$. Furthermore, if the number of crossings is exactly $14\ell-2$ then the induced drawings of $S_i$ and $\widehat{S}_j$ are canonical. 
\end{claim}

\begin{proof}[Proof of~\Cref{cl:lb-canonical}]
Recall that $\widehat{S}_j$ has exactly $5\ell$ $V_3$-big vertices and $S_i$ has exactly the same number of $V_3$-gap paths. Also, $\widehat{S}_j$ has exactly $\ell$ $V_3$-small vertices and $S_i$ has $\ell$ $V_3$-no-gap paths. Notice that $v_3$-small vertices are at a distance at least $6$ from each other. 
Consider the edges of $(x_v,y_v)$-paths $R_v$ in $\widehat{S}_j$ for $V_3$-small vertices $v$ such that $x_v,y_v\in V_3$ are distinct vertices at distance two from $v$. 
 By \Cref{cl:two-cross-small-two}, these edges cross at least $6\ell$ edges of $S_i$. Furthermore, $\widehat{S}_j$ has $8\ell-2$ edges $xy$ not included in any path $R_v$ such that 
 $x\in V_3$ and $y\in V_2$ is not pendent. By \Cref{cl:two-cross-big}, these edges cross at least $8\ell-2$ edges of $S_i$. Summarizing, we obtain that the edges of $\widehat{S}_j$ cross at least $14\ell-2$ edges of $S_i$. We already noticed that  the $3$-layer drawing of $\widehat{S}_j$ induced by $\sigma$ is canonical by \Cref{cl:cons}. 
 For $S_i$, we observe that If the number of crossings is exactly $14\ell-2$ then there are no crossings between the edges of $S_i$. Thus, the induced drawing of $S_i$ is canonical.
\end{proof}

Next, we get the lower bound if there is a $V_3$-small vertex that is not drawn in the $P$-gap for a $V_3$-no-gap path.

\begin{claim}\label{cl:lb-no-gap}
If there is $V_3$-small vertex $v$ that is not drawn in the $P$-gap for a $V_3$-no-gap path $P$ then the total number of crossings is at least $14\ell$. 
\end{claim}

\begin{proof}[Proof of~\Cref{cl:lb-no-gap}]
Denote by $t\geq 1$ the number of $V_3$-small vertices that are not drawn in the $P$-gap for a $V_3$-no-gap paths $P$. Consider a $V_3$-no-gap path $P=u_1u_2u_3u_4u_5$ such that no $V_3$-small vertex is drawn in the $P$-gap. Let $Q=w_1w_2w_3w_4w_5$ and $Q'=w_1'w_2'w_3'w_4'w_5'$ be $V_3$-gap paths such that $u_1=w_5$ and $u_5=w_1'$.
Recall that \Cref{cl:gap}, there are two $V_3$-big vertices $u$ and $u'$ that are drawn in the $Q$ and $Q'$-gaps, respectively. Also, we can assume that 
$\sigma_3(w_1)<\sigma_3(w_5),\sigma_3(u_3),\sigma_3(w_1')$ and $\sigma_3(w_5),\sigma_3(u_3),\sigma_3(w_1')<\sigma_3(w_5')$. We consider two cases. 

First, assume that $u$ and $u'$ are at distance two in $\widehat{S}_j$. Then there is a $V_2$-big vertex $v$ that is adjacent to $u$ and $u'$. By~\Cref{cl:crosspend}, we have that 
either $\sigma_2(w_4)<\sigma_2(v)<\sigma_2(u_2),\sigma_2(u_4),\sigma_2(w_2')$ or, symmetrically, $\sigma_2(w_4),\sigma_2(u_2),\sigma_2(u_4)<\sigma_2(v)<\sigma_2(w_4')$. 
These cases are symmetric and we assume that $\sigma_2(w_4)<\sigma_2(v)<\sigma_2(u_2),\sigma_2(u_4),\sigma_2(w_2')$. Then the edge $vu'$ crosses every edge of the path $u_1u_2u_3u_4u_5w_2'$. Thus, $vu'$ crosses at least $5$ edges. Notice also that $u'$ and $v$ are at a distance of at least two from any small $V_2$-vertex of $\widehat{S}_j$.

Next, we suppose that $u$ and $u'$ are at a distance of at least four. Then by~\Cref{cl:gap} and \Cref{cl:cons}, $u$ and $u'$ are at distance four and there is a $V_3$-small vertex $x$ at distance two from both $u$ and $u'$. We have that $x$ is not drawn in the $P$-gap. Using \Cref{cl:cons}, we conclude that $x$ is either drawn in the $Q$-gap or $Q'$-gap. By symmetry, we assume that  $x$ is drawn in the $Q$-gap. Then $\sigma_3(u)<\sigma_3(x)<\sigma_3(u_1),\sigma_3(u_3),\sigma_3(u_5)$. 
By the construction of $\widehat{S}_j$,  there is a $V_2$-big vertex $v$ that is adjacent to $x$ and $u'$. By~\Cref{cl:crosspend}, we have that 
either $\sigma_2(w_4)<\sigma_2(v)<\sigma_2(u_2),\sigma_2(u_4),\sigma_2(w_2')$ or $\sigma_2(w_4),\sigma_2(u_2),\sigma_2(u_4)<\sigma_2(v)<\sigma_2(w_4')$. 
In the first case, we have that $vu'$ crosses every edge of the path $u_1u_2u_3u_4u_5w_2'$. Notice that  $u'$ and $v$ are at a distance at least two from any small $V_2$-vertex of $\widehat{S}_j$ distinct from $x$. If $\sigma_2(w_4),\sigma_2(u_2),\sigma_2(u_4)<\sigma_2(v)<\sigma_2(w_4')$ then, by same arguments, we have that $xv$ crosses every edge of $w_4u_1u_2u_3u_4u_5$ and $x$ and $v$  are at distance at least two from any small $V_2$-vertex of $\widehat{S}_j$ distinct from $x$.

We conclude that in both cases, there is an edge $xy\in E(\widehat{S}_j)$ with $x\in V_2$ and $y\in V_3$ such that $xy$ crosses at least $5$ edges of $S_i$ and 
both $x$ and $y$ are at distance at least two from any small $V_3$-vertex that is drawn in the $R$-gap for a $V_3$-no-gap path $R$. 

For a $V_3$-small vertex $v$ that is drawn in the $P$-gap for some $V_3$-gap path $P$, denote by $R_v$ the  $(x_v,y_v)$-paths in $\widehat{S}_j$ such that $x_v,y_v\in V_3$ are distinct vertices at distance two from $v$. There are $\ell-t$ such paths for $v$ that are not drawn in the $Q$-gap for a $V_3$-no-gap path $Q$. 
By \Cref{cl:two-cross-small-two}, these edges cross at least $6(\ell-t)$ edges of $S_i$. Further, there a set $L$ of $8\ell+4t-2$ edges $xy$
of $\widehat{S}_j$ that not included in any path $R_v$ such that  $x\in V_3$ and $y\in V_2$ is not pendent. By \Cref{cl:two-cross-big} and 
\Cref{cl:two-cross-small-one} each of the edges of $A$ crosses at least one edge of $S_i$. Also, we proved that there are  $t$ edges $xy\in A$ such that $xy$ crosses at least $5$ edges of $S_i$. Then the total number of crossings is at least $6(\ell-t)+8\ell+4t-2+4t=14\ell+2t-2\geq 14\ell$. This concludes the proof.
\end{proof}

Combining \Cref{cl:lb-canonical} and \Cref{cl:lb-no-gap}, we obtain that the number of crossing for the optimum restrictive $3$-layer drawing $\sigma$ of $H=S_i\uplus \widehat{S}_j$ is at least $14\ell-2$. This proves claim (i) of the lemma.

To show (ii), we have to argue that  the number of crossings is $14\ell-2$ if and only if $i=j$.
If $i=j$ then recall that we already demonstrated a restrictive $3$-layer drawing with $14\ell-2$ crossings. 
Thus, it remains to prove that if the number of crossings is $14\ell-2$ then  $i=j$.

Assume that the number of crossings for $\sigma$ is $14\ell-2$. Because of \Cref{cl:lb-no-gap}, every $V_3$-small vertex is drawn in the $P$-gap for some $V_3$-no-gap path $P$. By \Cref{cl:lb-canonical}, we have that  the induced drawings of $S_i$ and $\widehat{S}_j$ are canonical.  Recall that $S_i$ was constructed for the binary string encoding of the symbol $s_i$ and $\widehat{S}_j$ was constructed for the encoding of $s_j$.
By \Cref{cl:gap} and \Cref{cl:cons} and the observation that $\widehat{S}_j$ has exactly $5\ell$ $V_3$-big vertices and $S_i$ has exactly the same number of $V_3$-gap paths, we conclude that either the encodings of $s_i$ and $s_j$ are the same or the reversal of the encoding of $s_i$ coincides with the encoding of $s_j$. However, the considered binary encoding of $\Sigma$ has the properties that the strings encoding distinct symbols are distinct, and the reversal of each string is distinct from any other strings in the encoding. Therefore, $s_i=s_j$ and  $i=j$. 

To obtain (iii), we combine \Cref{cl:gap}, \Cref{cl:lb-canonical}, \Cref{cl:lb-no-gap}, and the observation that  for each  $V_3$-gap path $P$, there is a $V_3$-big vertex drawn in the $P$-gap.  This concludes the proof of the lemma.
\end{proof}

\subsection{Main step}\label{sec:ms}
Given the gadgets $S_i$ and $\widehat{S}_i$ constructed for  the symbols $s_1,\ldots,s_k$ of $\Sigma$ and their properties summarized in 
 \Cref{lem:restr}, we proceed with reducing the instance $(s,k)$ of  \probDF. In this subsection, we explain the main step. Here, we deal with $4$-layer embeddings. 
We remind that $\ell=\lceil \log k\rceil +1$ and $p=40k^3$.  
 
 Let $s=r_1\dots r_n$ where $r_i\in \Sigma$. We construct the graphs $R$ and $\widehat{R}$ encoding the instance $(s,k)$ of  \probDF. 

 \begin{figure}[ht]
\centering
\scalebox{0.7}{
\input{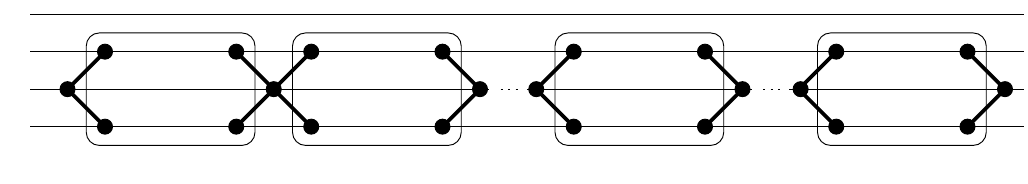_t}}
\caption{Construction of $R$.}
\label{fig:R}
\end{figure}

\begin{itemize}
\item Construction of $R$ (see~\Cref{fig:R}):
\begin{itemize}
\item for every $i\in\{1,\ldots,n\}$, construct a copy $R_i$ of $S_j$ for $s_j=r_i$ and denote by  $a_i,b_i,c_i,d_i,a_i',b_i',c_i',d_i'$ the respective vertices $a,b,c,d,a',b',c',d'$ of $S_j$.
\item concatenate these copies by merging the vertex $d_{i-1}'$ of $R_{i-1}$ with the vertex $c_i$ of $R_i$ for $i\in\{2,\ldots,n\}$.
\end{itemize}
\item Construction of $\widehat{R}$ (see):
\begin{itemize}
\item for every $i\in\{1,\ldots,k\}$, construct two copies $\widehat{S}_i^1$ and $\widehat{S}_i^2$ of $\widehat{S}_i$,
\item for every $i\in\{1,\ldots,k\}$, construct a vertex $v_i$ and make it adjacent to every vertex of $V_3\cap (V(\widehat{S}_i^1)\cup V(\widehat{S}_i^2))$,
\item put the vertices $v_1,\ldots,v_k$ in $V_4$.   
\end{itemize}
\end{itemize} 
Notice that $V_4\cap V(R)=\emptyset$. 

 \begin{figure}[ht]
\centering
\scalebox{0.7}{
\input{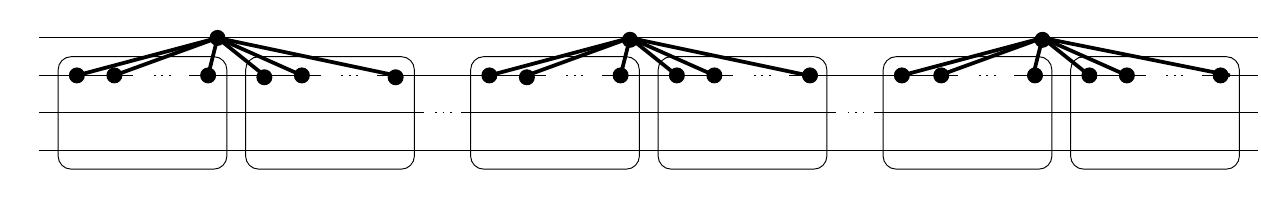_t}}
\caption{Construction of $\widehat{R}$.}
\label{fig:hR}
\end{figure} 
 
We say that a $4$-layer drawing $\sigma=(\sigma_1,\sigma_2,\sigma_3,\sigma_4)$ of the disjoint union of $R$ and $\widehat{R}$ is  \emph{restrictive} if for every vertex $w\in V_3\cap V(\widehat{R})$, $\sigma_3(b_1)<\sigma_3(w)<\sigma_3(b_n')$. We prove the following lemma that is crucial for the reduction.

\begin{lemma}\label{lem:crucial}
The instance $(s,k)$ of \probDF is a yes-instance of the problem if and only if there is a restrictive $4$-layer drawing of  the disjoint union of $R$ and $\widehat{R}$ with at most $2k(14\ell-2)$ crossings.
\end{lemma}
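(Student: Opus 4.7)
The plan is to prove both directions of the ``if and only if'' by explicit constructions, with the reverse direction requiring careful crossing-counting.

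For the forward direction, I would assume $(s,k)$ admits $k$ disjoint factors $f_1, \ldots, f_k$ ordered sequentially with endpoints $p_j \le q_j$ and common symbol $s_{i_j} = r_{p_j} = r_{q_j}$ (the $i_j$'s being distinct). Starting from the canonical crossing-free 3-layer drawing of $R$ on $V_1, V_2, V_3$, I would insert each $\widehat{S}_{i_j}^1$ into position $R_{p_j}$ and $\widehat{S}_{i_j}^2$ into position $R_{q_j}$ using the restrictive drawing guaranteed by Lemma~\ref{lem:restr}(ii) (noting $R_{p_j}$ and $R_{q_j}$ are both copies of $S_{i_j}$). This contributes exactly $14\ell-2$ crossings per inserted copy, for a total of $2k(14\ell-2)$. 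I would then place the $v_i$'s on $V_4$ ordered by factor starting position; disjointness of the factor intervals ensures the $V_4$-$V_3$ edges do not cross each other, and these edges (between layers 3 and 4) cannot cross edges living strictly below. The total therefore remains $2k(14\ell-2)$.

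For the reverse direction, assume a restrictive drawing $\sigma$ with at most $2k(14\ell-2)$ crossings. The key claim is that each $\widehat{S}_i^t$ must be ``placed entirely within'' a single $R_{j(i,t)}$ satisfying $r_{j(i,t)} = s_i$, contributing exactly $14\ell-2$ crossings. I would establish this by a local version of Lemma~\ref{lem:restr}: any $\widehat{S}_i^t$ whose $V_3$-vertices span multiple $R_j$-windows---or lie in a single $R_j$ with $r_j \neq s_i$---already incurs strictly more than $14\ell-2$ crossings with the edges of $R$ alone. Since the total budget $2k(14\ell-2)$ divides exactly across the $2k$ copies, each is tight; further, no two copies can share the same $R_j$, since two gadgets in a single window would produce crossings beyond $2\cdot(14\ell-2)$. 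Finally, the tightness of the budget forces the $V_4$-$V_3$ edges to be pairwise non-crossing, which in turn forces the intervals $[j(i,1), j(i,2)]$ to be pairwise disjoint---two nested or properly overlapping intervals among the $v_i$-incident edges would cost at least one extra crossing. This yields $k$ disjoint factors with distinct starting/ending symbols.

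The main obstacle will be making the ``locality'' argument in the reverse direction rigorous: showing that each $\widehat{S}_i^t$ must land within a single $R_j$, and that distinct copies occupy distinct $R_j$'s. This is essentially a multi-gadget generalization of Lemma~\ref{lem:restr}(i), and I expect to adapt the classification of $V_3$-big and $V_3$-small vertices used in its proof, tracking how their necessary crossings accumulate across any $R_j$'s intersected by a single $\widehat{S}_i^t$. The geometric $V_4$ non-crossing argument is technically easier but must be connected cleanly to the disjointness (as opposed to mere non-crossing nesting) of factor intervals, using that each factor interval is witnessed by a \emph{pair} of gadgets linked through the single common vertex $v_i$.
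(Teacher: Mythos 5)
Your overall plan matches the paper's proof: the forward direction builds the drawing exactly as you describe (canonical drawing of $R$, Lemma~\ref{lem:restr}(ii) to insert each $\widehat{S}_t^1,\widehat{S}_t^2$ at the two matching $R_j$'s for $14\ell-2$ crossings each, and $v_t$'s on $V_4$ in factor order), and the reverse direction follows the same chain: locality, exact per-gadget cost of $14\ell-2$ via \ref{lem:restr}(i)/(ii), distinct windows, and disjointness from the non-crossing $V_4$--$V_3$ edges. However, the locality step is where you take a genuinely different---and considerably harder---route than the paper. You propose to prove directly that a gadget $\widehat{S}_i^t$ straddling multiple windows (or sitting in a ``wrong'' window) accrues more than $14\ell-2$ crossings by re-running the $V_3$-big/$V_3$-small accounting from Lemma~\ref{lem:restr}; but that lemma's counting is tied to a \emph{single} $S_i\uplus\widehat{S}_j$ whose $V_3$-gap paths are all present, and it is not clear how to lower-bound the cost of a configuration where the gadget's big vertices are split across windows (nor is it even obvious that the bound $>14\ell-2$ holds without new work).

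The paper avoids this entirely by exploiting the filling gadgets $F_i,F_i'$ built into each $R_j$: each provides $p=40k^3\gg 2k(14\ell-2)$ internally disjoint length-two $(a_i,b_i)$-paths, so within budget some path $P_i$ in every $F_i$ (and $P_i'$ in every $F_i'$) is crossing-free. Connectivity of $R$ then pins down the order of the $b_i,b_i'$ on $V_3$, and the crossing-free paths act as hard walls: since each $\widehat{S}_t^q$ is connected and drawn in layers $V_1,V_2,V_3$, its $V_3$-vertices must lie in a single interval $[b_i,b_i']$ or in a gap $[b_{i-1}',b_i]$, and the latter is ruled out because the concatenating paths $a_{i-1}'d_{i-1}'a_i$ / $b_{i-1}'d_{i-1}'b_i$ would cross all $>p$ edges of the gadget. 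Only after this confinement does the paper invoke \ref{lem:restr}(i)/(ii) per window, and it rules out two copies sharing a window via \ref{lem:restr}(iii) (both copies would need to be ``pinned'' to the same slot immediately after $b_i$ and before $b_i'$, forcing a crossing). I recommend replacing your proposed ``multi-gadget generalization'' with this filling-gadget wall argument: it is the intended use of $F$ and $p$, and it turns the locality step from a delicate counting lemma into a clean structural observation.
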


\begin{proof}
Suppose that $(s,k)$ is a yes-instance of \probDF. Then the string $s=r_1\ldots r_n$ has $k$ disjoints nontrivial substrings $r_{i_1}\ldots r_{j_1},\dots, r_{i_k}\ldots r_{j_k}$ such that 
the symbols $r_{i_t}$ and $r_{j_t}$ are the same for $t\in\{1,\ldots,k\}$. We assume without loss of generality that $r_{i_t}=r_{j_t}=s_t$ for each $t\in\{1,\ldots,k\}$ and it holds 
that $j_{t-1}<i_t$ for all $t\in\{2,\ldots,k\}$.
We construct the $4$-layer drawing $\sigma=(\sigma_1,\sigma_2,\sigma_3,\sigma_4)$ of $H=R\uplus \widehat{R}$ as follows.
\begin{itemize}
\item Construct canonical $3$-layer drawings of each $R_i$ for $i\in \{1,\ldots,n\}$ such that $c_i$ is drawn before $d_i'$.  
\item Concatenate these drawings by merging the vertex $d_{i-1}'$ of $R_{i-1}$ with the vertex $c_i$ of $R_i$ for $i\in\{2,\ldots,n\}$ as it is shown in~\Cref{fig:R}.
\item For each $t\in\{1,\ldots,k\}$, draw $\widehat{S}_t^1$ in such a way that the vertices of $V_1\cap V(\widehat{S}_t^1)$ are drawn between $a_{i_t}$ and $a_{i_t}'$,  the vertices of $V_2\cap V(\widehat{S}_t^1)$ are drawn between $c_{i_t}$ and $d'_{i_t}$,  the vertices of $V_3\cap V(\widehat{S}_t^1)$ are drawn between $b_{i_t}$ and $b'_{i_t}$, and the $3$-layer drawing of $R_{i_t}\uplus \widehat{S}_t^1$ has $14\ell-2$ crossings using \Cref{lem:restr}.
\item For each $t\in\{1,\ldots,k\}$, draw $\widehat{S}_t^2$ in such a way that the vertices of $V_1\cap V(\widehat{S}_t^2)$ are drawn between $a_{j_t}$ and $a_{j_t}'$,  the vertices of $V_2\cap V(\widehat{S}_t^2)$ are drawn between $c_{j_t}$ and $d'_{j_t}$,  the vertices of $V_3\cap V(\widehat{S}_t^2)$ are drawn between $b_{j_t}$ and $b'_{j_t}$, and the $3$-layer drawing of $R_{j_t}\uplus \widehat{S}_t^2$ has $14\ell-2$ crossings using \Cref{lem:restr}.
\item Define $\sigma_4$ by setting $\sigma_4(v_1)<\dots<\sigma_4(v_k)$.
\end{itemize}
By the assumptions that  $r_{i_t}=r_{j_t}=s_t$ for each $t\in\{1,\ldots,k\}$ and $j_{t-1}<i_t$ for all $t\in\{2,\ldots,k\}$, we have that the edges $xy$ with $x\in V_3$ and $y\in V_4$ have no crossings. Then the total number of crossings is 
$2k(14\ell-2)$.

For the opposite direction, assume that $H=R\uplus \widehat{R}$ admits a restrictive $4$-layer drawing $\sigma=(\sigma_1,\sigma_2,\sigma_3,\sigma_4)$ with at most $2k(14\ell-2)$ crossings.

Consider arbitrary $i\in\{1,\ldots,n\}$. Recall that $R_i$, that is a copy of $S_j$ for $s_j=r_i$, contains two disjoint copies of filling gadgets $F$.   We denote the copy containing $a_i,b_i$ by $F_i$ and the copy containing $a_i',b_i'$ by $F_i'$. We have that $a_i\in V_1$, $b_i\in V_3$, and  $F_i$ contains $p$ $(a_i,b_i)$-paths of length two. Because $p=40k^3>2k(14(k+1)-2)\geq \ell$, 
there is a $(a_i,b_i)$-path $P_i$ in $F_i$ such that the edges of $P_i$ do not cross any edge. By the same arguments, there is a $(a_i',b_i')$ path $P_i'$ of length two such that   
the edges of $P_i'$ do not cross any edge. 

Because $R$ is connected, we conclude that either 
$\sigma_1(a_1)<\sigma_1(a_1')<\sigma_1(a_2)<\dots \sigma_1(a_n)<\sigma_1(a_n')$ and $\sigma_3(b_1)<\sigma_3(b_1')<\sigma_3(b_2)<\dots \sigma_3(b_n)<\sigma_3(b_n')$ or
$\sigma_1(a_n')<\sigma_1(a_n)<\sigma_1(a_{n-1}')<\dots \sigma_1(a_1')<\sigma_1(a_1)$ and $\sigma_3(b_n')<\sigma_3(b_n)<\sigma_3(b_{n-1}')<\dots \sigma_3(b_1')<\sigma_3(b_1)$. Because $\sigma$ is restrictive, 
$\sigma_1(a_1)<\sigma_1(a_1')<\sigma_1(a_2)<\dots \sigma_1(a_n)<\sigma_1(a_n')$ and $\sigma_3(b_1)<\sigma_3(b_1')<\sigma_3(b_2)<\dots \sigma_3(b_n)<\sigma_3(b_n')$.

Furthermore,  because all the paths $P_i$ and $P_i'$ have no crossing edges and the gadgets $\widehat{S}_t$ are connected, we have that for every $t\in \{1,\ldots,k\}$ and every $q\in\{1,2\}$, it holds that 
\begin{itemize}
\item either there is $i\in\{1,\ldots,n\}$ such that $\sigma_3(b_i)<\sigma_3(w)<\sigma_3(b_i')$ for each $w\in V_3\cap V(\widehat{S}_t^q)$,
\item or there is $i\in\{2,\ldots,n\}$ such that  $\sigma_3(b_i')<\sigma_3(w)<\sigma_3(b_{i-1})$ for each $w\in V_3\cap V(\widehat{S}_t^q)$.
\end{itemize} 
However, in the second case, the paths $a_{i-1}'d_{i-1}a_i$ and $b_{i-1}'d_{i-1}b_i$ would cross every edge of  $\widehat{S}_t^q$ contradicting that the number of crossings is at most 
$2k(14\ell-2)$ because $\widehat{S}_t^q$ contains more than $p$ edges. Thus, for every $t\in \{1,\ldots,k\}$ and every $q\in\{1,2\}$, we have that  
there is $i\in\{1,\ldots,n\}$ such that $\sigma_3(b_i)<\sigma_3(w)<\sigma_3(b_i')$ for each $w\in V_3\cap V(\widehat{S}_t^q)$. 
Then by  \Cref{lem:restr}(i), we have that the induced drawing of $R_i\uplus \widehat{S}_t^q$ has at least $14\ell-2$ crossings. Because the total number of crossings is at most 
$2k(14\ell-2)$, we conclude that the induced drawing of $R_i\uplus \widehat{S}_t^q$ has exactly $14\ell-2$ crossings. By \Cref{lem:restr}(ii), this means that $R_i$ is a copy of $S_t$. Note also that the edges $xy\in E(\widehat{R})$ with $x\in V_3$ and $y\in V_4$ do not cross each other.

Suppose that there is $i\in\{1,\ldots,n\}$ such that $R_i$ is a copy of $S_j$ and it holds that $\sigma_3(b_i)<\sigma_3(w)<\sigma_3(b_i')$ for all 
$w\in V_3\cap(V(\widehat{S}_j^1)\cup V(\widehat{S}_j^2)$. Then by \Cref{lem:restr}(iii), we have that there are crossings of edges of $\widehat{S}_j^1$ and $\widehat{S}_j^2$.  This means that the total number of crossings is at least $2k(14\ell-2)+1$. Therefore, for each $t\in\{1,\ldots,k\}$, there are distinct $i_t,j_t\in\{1,\ldots,n\}$ such that $R_{i_t}$ and $R_{j_t}$ are copies of $S_t$ and  it holds that (i) $\sigma_3(b_{i_t})<\sigma_3(w)<\sigma_3(b_{i_t}')$ for all 
$w\in V_3\cap V(\widehat{S}_t^1)$ and (ii) $\sigma_3(b_{j_t})<\sigma_3(w)<\sigma_3(b_{j_t}')$ for all 
$w\in V_3\cap V(\widehat{S}_t^2)$. By symmetry, we assume that $i_t<j_t$ for each $t\in\{1,\ldots,k\}$.

Consider the substrings $r_{i_t}\dots r_{j_t}$ of $s$ for $t\in \{1,\ldots,k\}$. Because  $R_{i_t}$ and $R_{j_t}$ are copies of $S_t$, we have that $r_{i_t}=r_{j_t}=s_t$, that is, each 
$r_{i_t}\dots r_{j_t}$ is a factor starting and ending with the symbol $s_t$. 

Consider  distinct $t,t'\in \{1,\ldots,k\}$  and  assume, by symmetry, that $\sigma_4(v_t)<\sigma_4(v_{t'})$. For each $x\in  V_3\cap V(\widehat{S}_t^2)$ and each 
$y\in  V_3\cap V(\widehat{S}_t^1)$, the edges $v_tx$ and $v_{t'}y$ do not cross. Thus, $j_t<i_{t'}$. This means that 
$r_{i_t}\dots r_{j_t}$ and $r_{i_{t'}}\dots r_{j_{t'}}$. Then the substrings $r_{i_t}\dots r_{j_t}$ of $s$ for $t\in \{1,\ldots,k\}$ are disjoint factors of $s$ starting with distinct symbols. 
We conclude that $(s,k)$ is a yes-instance of \probDF. This finishes the proof.
\end{proof}

\subsection{Proof of Theorems~\ref{thm:nokern} and \ref{thm:ETHlb}}\label{sec:lb-fin}
 Now we are ready to prove \Cref{thm:nokern} and \Cref{thm:ETHlb}. Recall that in our base reduction in \Cref{sec:ms}, we use restrictive $4$-layer drawings of the disjoint union of $R$ and $\widehat{R}$ (see~\Cref{lem:crucial}). To prove our main results, we have to get rid of this constraint. 
 We again remind that $\ell=\lceil \log k\rceil +1$ and $p=40k^3$.  
 We start with  \Cref{thm:nokern} which we restate.
 
\thmnokern*  
 
 \begin{proof}
 Clearly, it is sufficient to prove the theorem for $h=4$. 
 The theorem is proved by constructing a polynomial parameter transformation of \probDF. Let $(s,k)$ be an instance of \probDF. We construct the instance $(G,V_1,V_2,V_3,V_4,k')$ of 
 \probFourCr. The graph $G$ and $V_1,V_2,V_3,V_4$ are constructed as follows (see~\Cref{fig:Gone}).

 \begin{figure}[ht]
\centering
\scalebox{0.7}{
\input{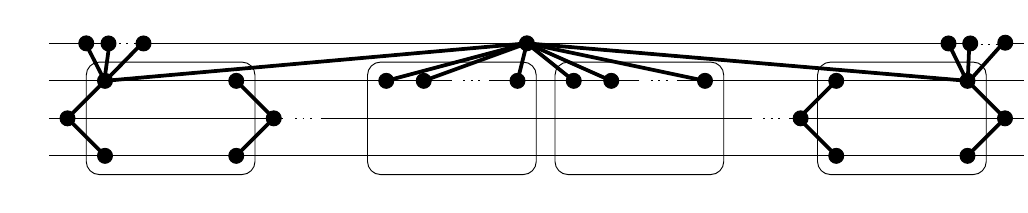_t}}
\caption{Construction of $G$.}
\label{fig:Gone}
\end{figure} 

 \begin{itemize}
 \item Construct  $R$ and $\widehat{R}$ as explained in~\Cref{sec:ms}.
 \item Construct a set $L$ of $p$ vertices, put them in $V_4$, and make them adjacent to the vertex $b_1$ of $R$.
 \item Construct a set $L'$ of $p$ vertices, put them in $V_4$, and make them adjacent to the vertex $b_n'$ of $R$.
 \item For each $i\in \{1,\ldots,k\}$, make the vertex $v_i$ adjacent to $b_1$ and $b_n'$.
 \end{itemize}
We set $k'=2k(14\ell-2)+\frac{1}{2}k(k-1)+k(k-1)(12\ell+1)$.
 
We claim that $(s,k)$ is a yes-instance of \probDF if and only if $(G,V_1,V_2,V_3,V_4,k')$ is a yes-instance of  \probFourCr.

Suppose that $(s,k)$ is a yes-instance of \probDF.  Then by~\Cref{lem:crucial}, there is a restrictive $4$-layer drawing $\sigma=(\sigma_1,\sigma_2,\sigma_3,\sigma_4)$
of  the disjoint union of $R$ and $\widehat{R}$ with at most $2k(14\ell-2)$ crossings. We construct the $4$-layer drawing $\sigma'=(\sigma_1,\sigma_2,\sigma_3,\sigma_4')$ of $G$ by the following modification of $\sigma_4$: we insert the vertices of $L$ before the vertices $v_1,\ldots,v_k$ and we put the vertices of $L'$ after the vertices $v_1,\ldots,v_k$.
Notice that the paths $b_1v_ib_n'$ and $b_1v_jb_n'$ for distinct $i,j\in\{1,\ldots,k\}$ have exactly one pair of crossing edges. Therefore, the total number of crossings of such paths is $\binom{k}{2}$. Also, for all distinct $i,j\in\{1,\ldots,k\}$, the path $b_1v_ib_n'$ crosses every edge $v_jx$ for $x\in V_3\cap(V(\widehat{S}_j^1)\cup V(\widehat{S}_j^2))$.
The edges $b_1y$ for $y\in L$ and $b_n'z$ for $z\in L'$ do not cross any edge. Therefore, the total number of crossings in $\sigma'$ is at most $k'=2k(14\ell-2)+\frac{1}{2}k(k-1)+k(k-1)(12\ell+1)$.

For the opposite direction, assume that $(G,V_1,V_2,V_3,V_4,k')$ is a yes-instance of  \probFourCr, that is, there is a $4$-layer drawing $\sigma=(\sigma_1,\sigma_2,\sigma_3,\sigma_4)$ with at most $k'$ crossings. Consider $(a_1,x)$-paths of length three in $G$ for $x\in L$. By the construction, $G$ has $p$ edge-disjoint paths of this type. We have that $p=40k^3>2k(14(k+1)-2)+\frac{1}{2}k(k-1)+k(k-1)(12(k+1)+1)\geq k'$ if $k\geq 2$.\footnote{The value of $p=40k^3$ was chosen to be big enough to satisfy this inequality.} Therefore, there is $x\in L$ such that the edges of the  $(a_1,x)$-path $P$ of length three in $G$ do not cross any edge in $\sigma$. By the same arguments, there is $x'\in L'$ such that  the edges of the  $(a_n',x')$-path $P'$ of length three in $G$ have no crossings. By symmetry, we can assume without loss of generality that 
$\sigma_1(a_1)<\sigma_1(a_n')$ and $\sigma_3(b_1)<\sigma_3(b_n')$. Then because $G$ is connected, we have that for every vertex $w\in V_3\cap V(\widehat{R})$, 
 $\sigma_3(b_1)<\sigma_3(w)<\sigma_3(b_n)'$. Thus, $\sigma$ induces a  restrictive $4$-layer drawing of the disjoint union of $R$ and $\widehat{R}$.
 Observe that for any choice of $\sigma_4$, the paths $b_1v_ib_n'$ and $b_1v_jb_n'$ for distinct $i,j\in\{1,\ldots,k\}$ have exactly one pair of crossing edges and the total number of crossings with the edges of these paths is  $\binom{k}{2}$.  Because $\sigma$ is restrictive, we also have that  for all distinct $i,j\in\{1,\ldots,k\}$, the path $b_1v_ib_n'$ crosses every edge $v_jx$ for $x\in V_3\cap(V(\widehat{S}_j^1)\cup V(\widehat{S}_j^2))$. Therefore, the total number of crossings of this type is at least $k(k-1)(12\ell+1)$. It follows that
 the disjoint union of $R$ and $\widehat{R}$ has at most $k'-\big(\frac{1}{2}k(k-1)+k(k-1)(12\ell+1)\big)\leq 2k(14\ell-2)$ crossings. Since the drawing induced by $\sigma$ is restrictive, \Cref{lem:crucial} implies 
 that $(s,k)$ is a yes-instance of \probDF.

It is straightforward to verify that given $(s,k)$, the construction of $G$ can be done in polynomial time. We also have that $k'=\Oh(k^2\log k)$. Therefore, the reduction is  a polynomial parameter transformation.  Then using \Cref{prop:DF}, we obtain that \probFourCr  does not admit a polynomial kernel unless $\classNP\subseteq\classCoNP/{\rm poly}$. This concludes the proof.
\end{proof}

In the proof of \Cref{thm:nokern}, we constructed a  polynomial parameter transformation of \probDF where the parameter in the obtained instance of \probFourCr is $k'=\Oh(k^2\log k)$.
This is sufficient for a kernelization lower bound but $k'$ is too big to give a reasonable lower bound for the running time up to ETH. Hence, we use a slightly different reduction in the proof of \Cref{thm:ETHlb} which we restate. 

\thmeth*

\begin{proof}
Is sufficient to prove the theorem for $h=5$. We again reduce \probDF and for an instance $(s,k)$ of \probDF, we construct the instance $(G,V_1,V_2,V_3,V_4,V_5,k')$ of 
\probFiveCr. The graph $G$ and $V_1,V_2,V_3,V_4,V_5$ are constructed as follows (see~\Cref{fig:Gtwo}).

 \begin{figure}[ht]
\centering
\scalebox{0.7}{
\input{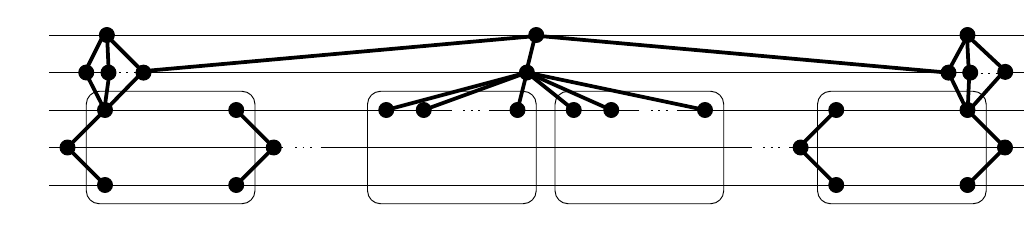_t}}
\caption{Construction of $G$.}
\label{fig:Gtwo}
\end{figure} 

\begin{itemize}
\item Construct  $R$ and $\widehat{R}$ as explained in~\Cref{sec:ms}.
\item Construct two vertices $x$ and $x'$ and put them in $V_5$.
\item Construct a set $L$ of $p$ vertices, put them in $V_4$, and make them adjacent to $x$ and the vertex $b_1$ of $R$.
\item Construct a set $L'$ of $p$ vertices, put them in $V_4$, and make them adjacent to $x'$ and the vertex $b_n'$ of $R$.
\item Select vertices $y\in L$ and $y'\in L'$.
\item Construct a vertex $z$, put it in $V_5$, and make it adjacent to $y$, $y'$, and  $v_i\in V(\widehat{R})$ for all $i\in \{1,\ldots,k\}$.
\end{itemize}
We set $k'=2k(14\ell-2)$.
 
 We prove that $(s,k)$ is a yes-instance of \probDF if and only if $(G,V_1,V_2,V_3,V_4,V_5,k')$ is a yes-instance of  \probFiveCr.

Let  $(s,k)$ be a yes-instance of \probDF.  Then by~\Cref{lem:crucial}, there is a restrictive $4$-layer drawing $\sigma=(\sigma_1,\sigma_2,\sigma_3,\sigma_4)$
of  the disjoint union of $R$ and $\widehat{R}$ with at most $2k(14\ell-2)$ crossings.
We construct the $5$-layer drawing $\sigma'=(\sigma_1,\sigma_2,\sigma_3,\sigma_4',\sigma_5)$ of $G$:
\begin{itemize}
\item set $\sigma_5(x)<\sigma_5(z)<\sigma_5(x')$,
\item construct $\sigma_4'$ by modifying $\sigma_4$:
\begin{itemize}
\item insert the vertices of $L$ before  $v_1,\ldots,v_k$ in such a way that $y$ is the last vertex of $L$ in the order,
\item insert the vertices of $L$ after $v_1,\ldots,v_k$ in such a way that $y'$ is the first vertex of $L$ in the order.
\end{itemize}
\end{itemize}
Notice that in the obtained $5$-layer drawing all crossings are in the disjoint union of $R$ and $\widehat{R}$. Thus, the total number of crossings is at most  $2k(14\ell-2)$ and 
$(G,V_1,V_2,V_3,V_4,V_5,k')$ is a yes-instance of  \probFiveCr.

For the opposite direction, assume that $(G,V_1,V_2,V_3,V_4,k')$ is a yes-instance of  \probFiveCr, that is, there is a $5$-layer drawing 
$\sigma=(\sigma_1,\sigma_2,\sigma_3,\sigma_4,\sigma_5)$ with at most $k'$ crossings. Consider $(a_1,x)$-paths of length four in $G$ and note that  $G$ has $p$ edge 
disjoint paths of this type. Since $p=40k^4>2k(14(k+1)-2)\geq 2k(14\ell-2)$, there is  an $(a_1,x)$-path $P$ of length four in $G$ such that its edges do not cross any edge. Symmetrically, there is an $(a_n',x')$-path of length four in $G$ with edges without crossings. Then we can assume without loss of generality that 
$\sigma_1(a_1)<\sigma_1(a_n')$ and $\sigma_3(b_1)<\sigma_3(b_n')$. Since $G$ is connected,  for every vertex $w\in V_3\cap V(\widehat{R})$, 
 $\sigma_3(b_1)<\sigma_3(w)<\sigma_3(b_n')$. Therefore,  $\sigma$ induces a restrictive $4$-layer drawing of the disjoint union of $R$ and $\widehat{R}$.
 Because the total number of crossings is at most $2k(14\ell-2)$, \Cref{lem:crucial} implies 
that $(s,k)$ is a yes-instance of \probDF.

Suppose that there is an algorithm $\mathcal{A}$ solving \probFiveCr in $2^{o(k/\log k)}\cdot n^{\Oh(1)}$ time on $n$-vertex graphs. Then we can solve \probDF using our reduction. Given an instance $(s,k)$ of \probDF, we use the reduction to construct an equivalent instance of  \probFiveCr. This can be done in polynomial time and the value of the parameter is $\Oh(k\log k)$. Then we apply $\mathcal{A}$ to the obtained instance. The total running time is $2^{o(k\log k/\log(k\log k))}\cdot |s|^{\Oh(1)}$. Since 
$\log k\leq \log(k\log k)$, the algorithm runs in $2^{o(k)}\cdot |s|^{\Oh(1)}$ time. However, by \Cref{prop:DF}, this refutes ETH. This completes the proof.
\end{proof}


\section{Conclusion}\label{sec:conclusion}
We conclude by stating some open problems. We proved in \Cref{thm:subexp} and \Cref{thm:subexpTHREE} that \probHCr can be solved in $2^{\Oh(k^{\alpha} \log k)}\cdot n^{\Oh(1)}$ time if $h=2,3$ with $\alpha = 1/2, 2/3$ respectively, and we demonstrated in 
\Cref{thm:ETHlb} that it is unlikely that for any fixed the problem could be solved in $2^{o(k/\log k)}\cdot n^{\Oh(1)}$ time for any $h\geq 5$. The most intriguing question is whether \probFourCr can be solved in $2^{\Oh(k^{1-\varepsilon} \text{poly}\log k)}\cdot n^{\Oh(1)}$ time for some positive $\varepsilon<1$.  Note also that even for $h\geq 5$, our computation lower bound does not exclude the existence of an algorithm with a subexponential in $k$ running time. Can \probHCr be solved in $2^{\Oh(k/\log k)}\cdot n^{\Oh(1)}$ time for $h\geq 4$?
In fact, we are not aware even of whether there is a $2^{\Oh(k)}\cdot n^{\Oh(1)}$ algorithm for $h\geq 4$. In particular, the algorithm of Dujmovi{\'c} et al.~\cite{dujmovic2008parameterized} is not single-exponential. 
Finally, is it possible to improve our results for $h=2,3$? Our subexponential algorithms for $h = 2, 3$ run in   $2^{\Oh(k^{1/2} \log k)}\cdot n^{\Oh(1)}$ and $2^{\Oh(k^{2/3} \log k)}\cdot n^{\Oh(1)}$ time, respectively. Is it possible to shave off the logarithmic factor from the exponent of the FPT term? More strongly, for $h = 3$, can we obtain an $2^{\Oh(k^{1/2} \log k)} \cdot n^{\Oh(1)}$ time algorithm? The size of the kernel \Cref{thm:kernel} for $h=3$ is $\Oh(k^{8})$ and the constant hidden in the big $\Oh$-notation is huge. We believe that by a more accurate analysis, we can decrease the constant. However, it seems that to substantially decrease the dependence on $k$, more sophisticated algorithmic tools are needed. We note that the quadratic kernel for \probTwoCr by Kobayashi et al.~\cite{KobayashiMNT14} heavily relies on the structure of graphs admitting $2$-layer drawings without crossings (see~\Cref{obs:caterpillar}), and for $h=3$, the structure becomes a great deal more complicated.  Is it possible to obtain a polynomial kernel of ``moderate'' size for $h=3$, say, quadratic or cubic?

	\bibliographystyle{siam}
	\bibliography{Crossing,book_pc}
	
\end{document}

\end{document}